\def\showauthornotes{0}
\def\showkeys{0}
\def\showdraftbox{0}
\def\usemicrotype{1}
\def\showfixme{0}
\title{Chernoff Bounds and Reverse Hypercontractivity on HDX}
\author{Yotam Dikstein\thanks{Institute for Advanced Study, USA. email: yotam.dikstein@gmail.com. This material is based upon work supported by the National Science Foundation under Grant No. DMS-1926686.}\; and Max Hopkins\thanks{Department of Computer Science and Engineering, UCSD, CA 92092. Email: \texttt{nmhopkin@ucsd.edu}. Supported by NSF Award DGE-1650112.} }
\date{\today}
\providecommand{\RR}{\mathbb{R}}
\providecommand{\NN}{\mathbb{N}}
\newtheorem{theorem}{Theorem}[section]
\newtheorem*{theorem*}{Theorem}
\newtheorem{proposition}[theorem]{Proposition}
\newtheorem*{proposition*}{Proposition}
\newtheorem{lemma}[theorem]{Lemma}
\newtheorem*{lemma*}{Lemma}
\newtheorem{corollary}[theorem]{Corollary}
\newtheorem*{corollary*}{Corollary}
\newtheorem*{conjecture*}{Conjecture}
\newtheorem{fact}[theorem]{Fact}
\newtheorem*{fact*}{Fact}
\newtheorem*{hypothesis*}{Hypothesis}
\newtheorem{conjecture}[theorem]{Conjecture}
\theoremstyle{definition}
\newtheorem{definition}[theorem]{Definition}
\newtheorem*{definition*}{Definition}
\newtheorem{example}[theorem]{Example}
\newtheorem{question}[theorem]{Question}
\theoremstyle{remark}
\newtheorem{claim}[theorem]{Claim}
\newtheorem*{claim*}{Claim}
\newtheorem{remark}[theorem]{Remark}
\newtheorem*{remark*}{Remark}
\newtheorem{observation}[theorem]{Observation}
\newtheorem*{observation*}{Observation}
\newcommand{\savehyperref}[2]{\texorpdfstring{\hyperref[#1]{#2}}{#2}}
\newcommand{\Sref}[1]{\hyperref[#1]{\S\ref*{#1}}}
\newcommand{\Authornote}[2]{{\sffamily\small\color{red}{[#1: #2]}}}
\newcommand{\Authornotecolored}[3]{{\sffamily\small\color{#1}{[#2: #3]}}}
\newcommand{\Authorcomment}[2]{{\sffamily\small\color{gray}{[#1: #2]}}}
\newcommand{\Authorstartcomment}[1]{\sffamily\small\color{gray}[#1: }
\newcommand{\Authorfnote}[2]{\footnote{\color{red}{#1: #2}}}
\newcommand{\Authorfixme}[1]{\Authornote{#1}{\textbf{??}}}
\newcommand{\Authormarginmark}[1]{\marginpar{\textcolor{red}{\fbox{\Large #1:!}}}}
\newcommand{\Authornote}[2]{}
\newcommand{\Authornotecolored}[3]{}
\newcommand{\Authorcomment}[2]{}
\newcommand{\Authorstartcomment}[1]{}
\newcommand{\Authorfnote}[2]{}
\newcommand{\Authorfixme}[1]{}
\newcommand{\Authormarginmark}[1]{}
\newcommand{\brac}[1]{[#1]}
\newcommand{\Brac}[1]{\left[#1\right]}
\newcommand{\abs}[1]{\left\lvert#1\right\rvert}
\newcommand{\Abs}[1]{\left\lvert#1\right\rvert}
\newcommand\sett[2]{\left\{ #1 \left| \; \vphantom{#1 #2} \right. #2  \right\}}
\newcommand{\set}[1]{\{#1\}}
\newcommand{\Set}[1]{\left\{#1\right\}}
\newcommand{\norm}[1]{\lVert#1\rVert}
\newcommand{\snorm}[1]{\norm{#1}^2}
\newcommand{\iprod}[1]{\langle#1\rangle}
\newcommand{\Esymb}{\mathbb{E}}
\newcommand{\Psymb}{\mathbb{P}}
\DeclareMathOperator*{\E}{\Esymb}
\DeclareMathOperator*{\ProbOp}{\Psymb}
\renewcommand{\Pr}{\ProbOp}
\def\one{{\mathbf{1}}}
\newcommand{\prob}[1]{\Pr \left[ {#1} \right] }
\newcommand{\Prob}[2][]{\Pr_{{#1}}\left[#2\right]} % use by \Prob[x]{event}
\newcommand{\cProb}[3]{\Pr_{{#1}}\left[ #2 \left| \; \vphantom{#2 #3} \right. #3  \right]} % use by \Prob[x]{event}
\newcommand{\ex}[1]{\E\brac{#1}}
\newcommand{\Ex}[2][]{\E_{{#1}}\Brac{#2}}
\newcommand{\cEx}[3]{\E_{{#1}}\left[ #2 \left| \; \vphantom{#2 #3} \right. #3  \right]}
\newcommand{\ve}{\;\hbox{and}\;}
\newcommand{\textparen}[1]{\text{(#1)}}
\newcommand{\because}[1]{\textparen{because #1}}
\renewcommand{\because}[1]{\textparen{because #1}}
\newcommand\bdot\bullet
\DeclareMathOperator{\poly}{poly}
\DeclareMathOperator{\dist}{dist}
\newcommand{\Z}{\mathbb Z}
\newcommand{\R}{\mathbb R}
\renewcommand{\leq}{\leqslant}
\renewcommand{\geq}{\geqslant}
\let\epsilon=\varepsilon
\numberwithin{equation}{section}
\newcommand{\MYstore}[2]{%
  \global\expandafter \def \csname MYMEMORY #1 \endcsname{#2}%
}
\newcommand{\MYload}[1]{%
  \csname MYMEMORY #1 \endcsname%
}
\newcommand{\MYnewlabel}[1]{%
  \newcommand\MYcurrentlabel{#1}%
  \MYoldlabel{#1}%
}
\newcommand{\MYdummylabel}[1]{}
\newcommand{\torestate}[1]{%
  % overwrite label command
  \let\MYoldlabel\label%
  \let\label\MYnewlabel%
  #1%
  \MYstore{\MYcurrentlabel}{#1}%
  % restore old label command
  \let\label\MYoldlabel%
}
\newcommand{\restatetheorem}[1]{%
  % overwrite label command with dummy
  \let\MYoldlabel\label
  \let\label\MYdummylabel
  \begin{theorem*}[Restatement of \prettyref{#1}]
    \MYload{#1}
  \end{theorem*}
  \let\label\MYoldlabel
}
\newcommand{\restatelemma}[1]{%
  % overwrite label command with dummy
  \let\MYoldlabel\label
  \let\label\MYdummylabel
  \begin{lemma*}[Restatement of \prettyref{#1}]
    \MYload{#1}
  \end{lemma*}
  \let\label\MYoldlabel
}
\newcommand{\restateprop}[1]{%
  % overwrite label command with dummy
  \let\MYoldlabel\label
  \let\label\MYdummylabel
  \begin{proposition*}[Restatement of \prettyref{#1}]
    \MYload{#1}
  \end{proposition*}
  \let\label\MYoldlabel
}
\newcommand{\restateclaim}[1]{%
  % overwrite label command with dummy
  \let\MYoldlabel\label
  \let\label\MYdummylabel
  \begin{claim*}[Restatement of \prettyref{#1}]
    \MYload{#1}
  \end{claim*}
  \let\label\MYoldlabel
}
\newcommand{\restatecorollary}[1]{%
  % overwrite label command with dummy
  \let\MYoldlabel\label
  \let\label\MYdummylabel
  \begin{corollary*}[Restatement of \prettyref{#1}]
    \MYload{#1}
  \end{corollary*}
  \let\label\MYoldlabel
}
\newcommand{\restatefact}[1]{%
  % overwrite label command with dummy
  \let\MYoldlabel\label
  \let\label\MYdummylabel
  \begin{fact*}[Restatement of \prettyref{#1}]
    \MYload{#1}
  \end{fact*}
  \let\label\MYoldlabel
}
\newcommand{\restatedefinition}[1]{% % overwrite label command with dummy
\let\MYoldlabel\label 
\let\label\MYdummylabel 
\begin{definition*}[Restatement of \prettyref{#1}] 
    \MYload{#1} 
\end{definition*} 
\let\label\MYoldlabel 
} 
\newcommand{\restate}[1]{%
  % overwrite label command with dummy
  \let\MYoldlabel\label
  \let\label\MYdummylabel
  \MYload{#1}
  \let\label\MYoldlabel
}
\let\origparagraph\paragraph
\renewcommand{\paragraph}[1]{\origparagraph{#1.}}
\newcommand{\dunion}{\mathbin{\mathaccent\cdot\cup}}
\let\pref=\prettyref
\newcommand{\bproof}[1]{\begin{proof}[Proof of \pref{#1}]}
\newcommand{\eproof}{\end{proof}}
\NewDocumentCommand{\SC}{ O{X} O{0} O{} }{#1_{#3}{\left ( #2 \right ) }}
\NewDocumentCommand{\X}{ O{} O{} }{\SC[X][#1][#2]}
\NewDocumentCommand{\Y}{ O{} O{} }{\SC[Y][#1][#2]}
\newcommand{\maximalpunc}{uniform{}}
\newcommand{\maximal}{uniform\ }
\newcommand{\maxsize}{uniform\ }
\newcommand{\maxsizity}{uniformity\ }
\begin{document}

\maketitle

\begin{abstract}
We prove optimal concentration of measure for lifted functions on high dimensional expanders (HDX). Let $X$ be a $k$-dimensional HDX. We show for any $i \leq k$ and function $f: \X[i] \to [0,1]$:
\[
\Pr_{s \in \X[k]}\left[\left|\underset{{t \subseteq s}}{\mathbb{E}}[f(t)] - \mu \right| \geq \varepsilon \right] \leq \exp\left(-\varepsilon^2 \frac{k}{i}\right).
\]
Using this fact, we prove that high dimensional expanders are \textit{reverse hypercontractive}, a powerful functional inequality from discrete analysis implying that for any sets $A,B \subset \X[k]$, the probability a $\rho$-correlated pair passes between them is at least
\[
\Pr_{s,s' \sim T_\rho}[s \in A, s' \in B] \geq \Pr[A]^{O(1)} \Pr[B]^{O(1)}.
\]
Our results hold under weak spectral assumptions on $X$. Namely we prove exponential concentration of measure for any complex below the `Trickling-Down Threshold' (beyond which concentration may be arbitrarily poor), and optimal concentration for $\sqrt{k}$-skeletons of such complexes. We also show optimal bounds for the top dimension of stronger HDX among other settings.

We leverage our inequalities to prove several new agreement testing theorems on high dimensional expanders, including a new $99\%$-regime test for \textit{subsets}, and a variant of the `Z-test' achieving \textit{inverse exponential soundness} under the stronger assumption of $\ell_\infty$-expansion. The latter gives rise to the first optimal testers beyond the complete complex and products, a stepping stone toward the use of HDX in strong soundness PCPs. 

We also give applications within expansion, analysis, combinatorics, and coding theory, including a proof that two-sided HDX have optimal geometric overlap (giving the first explicit bounded-degree construction), near-optimal double samplers, new super-exponential degree lower bounds for certain HDX, distance-amplified list-decodable and locally testable codes, a Frankl–R\"{o}dl Theorem, and more.
\end{abstract}
\newpage
\tableofcontents

\newpage
\pagenumbering{arabic}

\section{Introduction}\label{sec:intro}

Recent years have seen the emergence of \textit{high dimensional expanders} (HDX) as a core tool in theoretical computer science, with breakthrough applications in approximate sampling \cite{AnariLOV2019,anari2021spectral}, coding theory \cite{DinurELLM2022,PanteleevK22}, and quantum complexity \cite{anshu2023nlts}, and strong promise toward longstanding problems in hardness of approximation and probabilistically checkable proofs \cite{DinurK2017,DiksteinD2019,kaufman2020local,GotlibK2022,HopkinsL2022,BafnaM2024,DiksteinDL2024}. One central force behind the success of HDX in application is their \textit{concentration of measure}. Consider (as a warmup) the following fundamental question: given a $k$-uniform hypergraph $X$ and a function $f:\X[1] \to [0,1]$ on its vertices,\footnote{We denote by \(\X[i]\) the subsets of size \(i\) in (the downward closure of) \(X\).} how concentrated is $f$ to its mean across a random hyperedge?
\begin{equation}\label{intro:concentration-1}
\Pr_{\{v_1,\ldots,v_k\} \in X(k)}\left[\left|\frac{1}{k} \sum_{i=1}^k f(v_i) - \mu \right| > \varepsilon \right] \overset{?}{\leq} \beta(\varepsilon,k)
\end{equation}
When $X$ is the complete hypergraph, this was classically resolved by Chernoff and Hoeffding \cite{chernoff1952measure, hoeffding1994probability} who showed $\beta(\varepsilon,k) \leq \exp(-\varepsilon^2 k)$. Over the years, near-matching bounds have been shown for more general hypergraph families (see e.g.\ \cite{paulin2014convex,kyng2018matrix,kaufman2021scalar}), and even for certain bounded degree systems like walks on expanders \cite{ajtai1987deterministic,Gillman1998}. These objects, called \textit{sampler graphs} or \textit{extractors}, are by now a core tool in the field with applications in de-randomization, complexity, and cryptography \cite{Goldreich1997,wigderson2009randomness}. 

High dimensional expanders are known to satisfy a `Chebyshev-type' bound $\beta(\varepsilon,k) \leq \frac{1}{k\varepsilon^2}$ for \eqref{intro:concentration-1} \cite{DinurK2017}, but this is of course not what makes them powerful. Rather, many modern applications in complexity require concentration of $X$ against \textit{broader classes of functions}. Of particular interest is the extension of \eqref{intro:concentration-1} to functions $f: X(i) \to [0,1]$ sitting on \textit{$i$-sets} of $X$. Complexes satisfying such a bound are called \textit{inclusion samplers}; they arise naturally in the study of agreement and PCPs (forming consistency checks), and play a critical role in low soundness constructions toward the Sliding Scale Conjecture \cite{ImpagliazzoKW2012,dinur2011derandomized,moshkovitz2017low,DinurL2017}. Unfortunately, inclusion samplers are notoriously difficult to construct, and no bounded degree families were known for many years.

This changed with the advent of spectral HDX \cite{Garland1973,LubotzkySV2005a,EvraK2016,KaufmanM2017,DinurK2017,oppenheim2018local}. A key motivation behind the modern incarnation of these objects \cite{KaufmanM2017,DinurK2017,oppenheim2018local} was the fact that their `inclusion graphs' (the bipartite adjacency matrix between $k$-sets and $i$-sets) are spectral expanders, and therefore satisfy `Chebyshev-type' concentration $\beta(\varepsilon,i,k) \leq \frac{i}{\varepsilon^2k}$. This observation played a central role in recent progress in agreement testing \cite{DinurK2017,DinurHKLT2018,DiksteinD2019,kaufman2020local,GotlibK2022,DiksteinD2023agr, BafnaM2023}, but left an exponential gap from the Chernoff-type concentration needed for other applications.

In this work we take a major step toward closing this gap, resolving the problem completely in many regimes of interest. In particular, drawing on ideas from the concentration of measure \cite{Boucheron}, PCP \cite{ImpagliazzoKW2012}, and HDX \cite{DiksteinD2019,AlevJT2019} literature, we prove HDX are optimal samplers between every two levels.
\begin{theorem}[HDX are Optimal Inclusion Samplers (Informal)]\label{thm:intro-sampling}
    Let $X$ be a nice enough $d$-dimensional HDX.\footnote{This includes, e.g., $d$-skeletons of $\frac{1}{d^2}$-one-sided $d^2$-dimensional HDX or the top level of any $2^{-d}$-two-sided HDX. See \pref{sec:intro-background} and \pref{def:nice-complexes} for a formal description of `nice enough' complexes.} Then for any $i \leq k \leq d$ and any $f: X(i) \to [0,1]$ of expectation $\mu$:
    \[
    \Pr_{s \in X(k)}\left[\Abs{\underset{t \subset s}{\mathbb{E}}[f(t)]- \mu} \geq \varepsilon \right] \leq \exp\left(-\Omega \left (\varepsilon^2 \frac{k}{i}\right ) \right).
    \]
    Moreover, this is essentially tight---no inclusion sampler achieves better than $\exp\left(-O\left(\varepsilon^2\frac{k}{i}\right)\right)$ error.
\end{theorem}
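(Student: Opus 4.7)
The plan is to establish the concentration bound by proving a strong moment estimate and invoking Markov's inequality, in line with the combined HDX/concentration-of-measure/expander-Chernoff strategy hinted at by the introduction. Writing $\bar f(s) := \mathbb{E}_{t \subset s}[f(t)]$, I would aim to show that for every integer $q$ up to $\Theta(k/i)$,
\[
\mathbb{E}_{s \in X(k)}\!\left[\left(\bar f(s) - \mu\right)^{2q}\right] \leq \left(\frac{C q i}{k}\right)^{q}.
\]
Applying Markov to $(\bar f - \mu)^{2q}$ and optimizing at $q \asymp \varepsilon^{2} k / i$ then yields the claimed $\exp(-\Omega(\varepsilon^{2} k / i))$ tail. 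The fact that this moment bound is only useful up to $q \lesssim k/i$ is precisely what forces the $\varepsilon^2 k/i$ rate, and the same calculation on the complete complex—which is already known to be a sharp Chernoff sampler on hyperedges—saturates the estimate with $f$ taken to be, e.g., the indicator of a random density-$\mu$ subset of $X(i)$. That gives the matching tightness claim.

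The moment bound I would attack by induction on $q$, in the spirit of the expander Chernoff argument of Impagliazzo–Kabanets–Wigderson, using the tensorization/entropy viewpoint popularized by Boucheron et al. to organize the expansion. Expanding $(\bar f(s) - \mu)^{2q}$ produces a weighted sum over $2q$-tuples $(t_1, \ldots, t_{2q}) \in X(i)^{2q}$ with weights $\Pr_s\!\left[\bigcup_j t_j \subset s\right]$. I would then split the sum by the size of the union $U := \bigcup_j t_j$. For tuples with $|U|$ close to $2qi$, i.e.\ the $t_j$'s mostly disjoint, the HDX inclusion-graph expansion (already yielding Chebyshev-type bounds in the introduction) forces inclusion probabilities close to the ``product-like'' $(i/k)^{|U|}$, producing the Gaussian-looking $(C q i/k)^{q}$ term and matching the classical complete-complex moment. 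Tuples with smaller $|U|$—many $t_j$'s colliding on some common core $w$—are regrouped by $w$ and handled by restricting to the link $X_w$, where $f$ descends to a $[0,1]$-valued function on $X_w(i - |w|)$ and the inductive hypothesis applies at smaller $q$ and codimension $k - |w|$.

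The central difficulty is controlling these collision terms tightly enough to preserve the $i/k$ scaling rather than a weaker $i^{2}/k$ or $\sqrt{i/k}$ one. This requires sharp bounds on $\Pr_s[U \subset s]$ for every $U$ that can arise and, critically, the ability to recurse inside arbitrary links of moderate codimension. Both are furnished by the ``nice enough'' hypothesis: below the trickling-down threshold, every small link remains an HDX of comparable quality, so the recursion closes with only constant-factor loss per level; at the top level of a genuinely two-sided HDX, additional spectral slack from the swap-walk decompositions used by Dikstein–Dinur and Alev–Jeronimo–Tulsiani makes optimal concentration accessible without restricting to a $\sqrt{k}$-skeleton. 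The most delicate step I expect is the combinatorial bookkeeping across intersection patterns—bounding the total contribution of collision tuples by $(Cqi/k)^{q}$ times an $O(1)^q$ factor rather than an $\omega(1)^q$ blow-up—since a crude count over partitions loses exactly the polynomial-in-$i$ factors one needs to save. Matching the inductive loss from passing to links against the entropy of intersection patterns is where I would expect the proof to require the most care, and where the distinction between the one-sided skeleton case and the two-sided top-level case will manifest concretely.
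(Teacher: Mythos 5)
Your plan is a genuinely different route from the paper's, so let me compare. You propose a direct polynomial moment method on $\bar f(s)=\mathbb{E}_{t\subset s}[f(t)]$: bound $\mathbb{E}[(\bar f-\mu)^{2q}]\lesssim(Cqi/k)^q$ for $q\lesssim k/i$ by expanding over $2q$-tuples of $i$-faces, separating by the size of their union, and recursing into links to handle collisions. The paper instead attacks the problem in two pieces. For $i=1$ it bounds the \emph{moment generating function} recursively via swap-walk splitting: a $k$-face is drawn as $t_1\dunion t_2$ with $(t_1,t_2)\sim S_{k/2,k/2}$, the partial MGF $z_k$ factors as $z_{k/2}(t_1)z_{k/2}(t_2)$, spectral expansion of the swap walk de-correlates the two copies, and the dangerous error term $\|z_{k/2}\|_2^2$ is tamed by a reduction to the complete complex (Claim 5.4). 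For $i>1$ the paper does \emph{not} do tuple combinatorics on $X(i)$ at all: it passes to the faces complex $F_X^{(i)}$ (vertices are $i$-sets, top faces are disjoint $(k/i)$-tuples), observes this complex inherits two-sided local-spectral expansion from the swap walks, applies the $i=1$ Chernoff there, and then uses Lemma 5.8 to convert concentration of the correlated partition back into concentration of $\bar f$ with only a $1/\varepsilon$-factor loss.

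The gap in your proposal is exactly the step you flag as "the most delicate." You defer the entire burden of the proof to "combinatorial bookkeeping across intersection patterns" and assert it should close with only $O(1)^q$ loss, but you give no mechanism for why. The difficulty is real: the set of $2q$-tuples $(t_1,\dots,t_{2q})$ with a given union $U$ organizes into a complicated family of partitions/intersection hypergraphs, and the crude count over such patterns is what typically produces the $\omega(1)^q$ factor you need to avoid—this is the well-known pain of trace/moment methods when the underlying variables are not a clean sum. Regrouping by a "common core $w$" is too coarse, because collisions need not share a single common core; pairwise and higher-order overlaps among the $t_j$ form an arbitrary set system. Your proposed recursion into links also needs, at each step, quantitative control of how close $\Pr_s[U\subset s]$ is to the product-like value, which for a general $\lambda$-HDX is a non-trivial statement (and your stated target $(i/k)^{|U|}$ is not even the right scaling). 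What the paper's faces-complex device buys you is precisely that it replaces this hard tuple combinatorics with a single application of Chernoff on a complex you already understand (because it splits), plus a clean marginal-distribution argument (Lemma 5.8) that needs nothing about intersection patterns. So the proposal is not obviously wrong, but as written it is a sketch of a known-to-be-hard strategy with the crux left blank, whereas the paper's route is constructed specifically to avoid that crux.
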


\pref{thm:intro-sampling} is powerful when $i=o(k)$, but our corresponding lower bounds show non-trivial inclusion sampling is essentially impossible in the critical regime $i=\Theta(k)$. In fact this exact issue stood as a barrier for many years on the complete complex, where Impagliazzo, Kabanets, and Wigderson \cite{ImpagliazzoKW2012} proved \pref{thm:intro-sampling} and used it to construct sub-optimal testers and PCPs by taking $i=\Theta(\sqrt{k})$. In 2017, Dinur and Livni-Navon \cite{DinurL2017} resolved this problem by appealing to a somewhat surprising player: an underutilized tool from boolean function analysis called \textit{reverse hypercontractivity} (RHC). 

Reverse hypercontractivity is a classic functional inequality due to Borell \cite{Borell1982} that lower bounds the correlation between sets on the noisy hypercube. Roughly speaking, it implies that for any two subsets $A,B \subseteq \{0,1\}^d$, the probability a $\rho$-correlated edge crosses $A$ and $B$ is at least some power of their measure:
\[
\Pr_{(s,s') \sim T_\rho}[s \in A, s' \in B] \geq \Pr[A]^{O_\rho(1)}\Pr[B]^{O_\rho(1)}.
\]
Reverse hypercontractivity is an even rarer phenomenon than inclusion sampling---while there are some moderately de-randomized inclusion graphs (namely the Grassmann \cite{ImpagliazzoKW2012}) satisfying Chebyshev-type variants of \pref{thm:intro-sampling}, none are known to admit reverse hypercontractivity. At outset this may even seem necessary. The inequality is closely tied to tensorization and (modified) Log-Sobolev Inequalities \cite{MosselOS2013}, which inherently fail on sparse hypergraphs.

In this work, we circumvent this issue by a new combinatorial argument showing any hypergraph satisfying \pref{thm:intro-sampling} `locally' is also reverse hypercontractive. This leads to the following corollary for HDX:
\begin{theorem}[Reverse Hypercontractivity for HDX (Informal)]\label{thm:intro-rhc}
let $X$ be a nice enough $d$-dimensional HDX. Then for any $\rho \in (0,1)$, $k \leq d$, and $A,B \subset X(k)$:
\[
\Pr_{(s,s') \sim T_\rho(X)}[s \in A, s' \in B] \geq \Pr[A]^{O_\rho(1)}\Pr[B]^{O_\rho(1)}.
\]
\end{theorem}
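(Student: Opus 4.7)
The plan is to deduce the reverse hypercontractivity bound from the inclusion sampling theorem via a combinatorial argument that separately handles high- and low-density sets. The $\rho$-correlated distribution on $X(k) \times X(k)$ is naturally realized by drawing a common face $t \in X(i)$ with $i = \lfloor \rho k \rfloor$ and then sampling $s, s' \supset t$ independently. Writing $h_A(t) = \Pr_{s \supset t}[s \in A]$ and analogously for $h_B$, the target quantity becomes
\[
\Pr[s \in A, s' \in B] \;=\; \E_{t \in X(i)}[h_A(t)\, h_B(t)] \;=\; \E_{s \in X(k)}\!\bigl[\mathbf{1}_A(s)\cdot \E_{t\subset s}[h_B(t)]\bigr].
\]

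In the high-density regime the argument is direct. Applying \pref{thm:intro-sampling} to $h_B$ (viewed as a $[0,1]$-valued function on $X(i)$ with mean $\mu(B)$) at error threshold $\mu(B)/2$ shows that the ``good set'' $G_B = \{s \in X(k) : \E_{t \subset s}[h_B(t)] \geq \mu(B)/2\}$ has complement of measure at most $\exp(-\Omega(\mu(B)^2\, k/i))$. Whenever $\mu(A)$ exceeds this error, $\mu(A \cap G_B) \geq \mu(A)/2$ and we obtain $\Pr[s \in A, s' \in B] \geq \mu(A)\mu(B)/4$, already stronger than the claimed polynomial bound. The same-set case $A = B$ is moreover immediate from Jensen: $\E_t[h_A(t)^2] \geq \mu(A)^2$.

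In the low-density regime the plan is to recurse into a link. Since the inclusion sampling hypothesis is inherited by every link of $X$ (as every link of a ``nice'' HDX is itself ``nice''), for $j < i$ we can invoke the high-density step inside $X_u$ for $u \in X(j)$. The walk factorizes as $\Pr[s \in A, s' \in B] = \E_{u \in X(j)}[q_u(A,B)]$, where $q_u$ is the $\rho'$-correlated probability inside the link $X_u$ at level $k-j$ for the shadows $A|_u, B|_u$, with $\rho' = (i-j)/(k-j) \approx \rho$ when $i-j$ is a constant fraction of $k-i$. A reverse-Markov bound shows the set of $u$ with shadow density $\tilde A(u) \geq \mu(A)/2$ has measure $\gtrsim \mu(A)$; on such links, choosing $j$ so that $\mu(A)/2$ exceeds the local Chernoff threshold $\sqrt{(i-j)/(k-j)}$ puts the restricted problem into the high-density regime, yielding the bound by the first step.

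The main obstacle is controlling the exponent across this reduction, i.e.\ preventing $c(\rho)$ from blowing up with $\log(1/\mu(A))$. The key technical balance is choosing $j$ (equivalently $i-j$) as a function of $\mu(A), \mu(B)$ so that a single descent simultaneously pushes both shadows past the local Chernoff threshold, while the intermediate marginal losses (probability of landing in a ``good'' link and concentration of $\tilde A(u)$ around $\mu(A)$) multiply to only $\mu(A)^{O_\rho(1)}\mu(B)^{O_\rho(1)}$. Here the sub-Gaussian (rather than merely Chebyshev) decay of \pref{thm:intro-sampling} is essential: the polynomial-in-$\mu(A)$ bound on the bad set is precisely what keeps the total exponent bounded independently of $\mu(A), \mu(B)$.
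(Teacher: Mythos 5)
Your high-density step is correct and matches (in spirit) the base case of the paper's inductive descent: factoring $\Pr[s \in A, s' \in B] = \E_t[h_A(t)h_B(t)]$ and applying \pref{thm:intro-sampling} to $h_B$ with error $\mu(B)/2$ gives $\mu(A)\mu(B)/4$ whenever $\mu(A) \geq 2\exp(-\Omega(\mu(B)^2 k/i))$. However, the single-descent plan you propose for the low-density regime has a genuine gap, on two counts. First, the constraints on the choice of $j$ are mutually incompatible when both densities are small. To land the restricted problem in your high-density regime inside $X_u$, you need roughly $\log(1/\mu(A)) \lesssim \mu(B)^2 \cdot (k-j)/(i-j)$; since $(k-j)/(i-j) \leq k$ always, this already fails (for any choice of $j$) when, e.g., $\mu(A) = \mu(B) = e^{-\sqrt{k}}$, which is well within the stated regime $\mu(A),\mu(B)\geq e^{-\Omega(k)}$. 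Second, the reverse-Markov step only controls $\tilde{A}(u)$, giving a set of measure $\gtrsim \mu(A)$; it gives no simultaneous lower bound on $\tilde{B}(u)$ on that same set, and a separate reverse-Markov set for $B$ of measure $\gtrsim \mu(B)$ may be disjoint from it. You need a \emph{constant-probability} good event for each of $A$ and $B$ so that they intersect by a union bound — this is precisely what the sampling property (with constant failure parameter $\beta < 1/2$) provides, but only once the densities exceed the sampler's threshold $\alpha \approx \exp(-\Omega(k/\ell))$, and reverse Markov is no substitute.

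The paper resolves exactly this tension (in \pref{lem:bootstrap-rhc}) by replacing your single descent with a multi-step descent of calibrated step size $\ell \approx ck/\log(1/\mu(A))$: after each step of size $\ell$, the LUS property (applied in the current link) guarantees that with probability $\geq 0.8$ \emph{both} $\tilde{A}$ and $\tilde{B}$ drop by at most a constant factor, because the threshold $\exp(-\Omega(k/\ell)) \approx \mu(A)$ is met at every step. Taking WLOG $\mu(B) \geq \mu(A)$ so the smaller density governs both thresholds, and composing $O(\log(1/\mu(A)))$ steps via a multiplicative bookkeeping of $G$-sets (\pref{obs:links-goodness}), the total probability loss is $0.8^{O(\log(1/\mu(A)))} = \mu(A)^{O(1)}$. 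It is this careful calibration of $\ell$ against $\mu(A)$ — not merely the sub-Gaussian tail — that keeps the final exponent bounded, and it has no analogue in your single-descent plan.
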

Here $T_\rho(X)$ is the natural generalization of the noise operator to hypergraphs which, roughly speaking, generates $s'$ by re-sampling each vertex in $s$ with probability $1-\rho$. We also prove a more standard analytical version of reverse hypercontractivity for all functions by reduction to the above, as well as analog statements for the heavily studied `down-up' walks which play a critical role in almost all HDX applications.

\paragraph{Applications}
The remainder of our work is devoted to proving a variety of applications of \pref{thm:intro-sampling} and \pref{thm:intro-rhc} including new families of agreement tests, bounded degree complexes with optimal geometric overlap, double samplers with near-optimal overhead, new degree lower bounds for HDX, large distance list-decodable and locally-testable codes, and several extensions of classic combinatorial and analytic results to HDX. Here we choose to highlight just two of these applications, agreement testing and geometric overlap. We refer the reader to \pref{sec:agree-intro} and \pref{sec:applications-intro} for a more in-depth discussion of our applications.

\medskip

\textit{Agreement testing} is a general property testing paradigm that generalizes classical subroutines in PCP theory like the plane vs.\ plane test (e.g.\ \cite{AroraS1997, GoldreichS1997,dinur2007pcp}). Roughly speaking, an agreement test consists of a universe $U$ and a family of overlapping subsets $S \subset \mathbb{P}(U)$ (think of $U$ as the vertices of $X$, and $S$ as its hyperedges). Given a family of \emph{local} assignments $\{f_s: s \to \{0,1\}\}_{s \in S}$, we'd like to test whether $f_s$ actually agrees with a \emph{global} function $g: U \to \{0,1\}$ in the sense that $f_s=g|_s$ for many $s$. These systems occur in PCP reductions to ensure the provers (who answer on subsets) can't `cheat' by using answers not corresponding to a real solution of the initial problem (corresponding to global functions on $U$).

Agreement testing has two major regimes, the `$99\%$-regime' (where we'd like to infer global structure only if the test passes with high probability) and the `$1\%$-regime' (where we'd like to infer global structure even when the test only passes with \textit{non-trivial} probability). Leveraging reverse hypercontractivity, we give new tests in both regimes. In the $99\%$-regime, we give a new $2$-query test where the universe $U$ corresponds to \textit{$i$-sets} of a high dimensional expander, extending a similar result of \cite{DinurFH2019} on the complete complex. Our test, along with \pref{thm:intro-sampling}, has since appeared as an important sub-routine in the construction of the first bounded-degree $1\%$ agreement testers in \cite{dikstein2023agreement,DiksteinDL2024}. 

In the $1\%$-regime, we give a new $3$-query tester for a variant of $\ell_\infty$-expanders (a strengthened notion of HDX \cite{kaufman2021scalar}) with \textit{optimal soundness}, meaning that if the test passes with probability asymptotically better than a random function ($2^{-\Theta(k)}$ for a $k$-dimensional $\ell_\infty$-expander), we are able to infer global structure. In the context of PCPs, such a soundness guarantee is critical to ensure the alphabet size stays polynomial in the soundness. Our result gives rise to the first families of optimal testers beyond the complete complex and products \cite{DinurL2017}, including (dense) random complexes, skeletons of many well-studied spin systems, the full linear matroid, and more. We remark that while all such examples are dense, we prove an optimal `local' agreement theorem only under the assumption of local spectral expansion. We lift this guarantee to the general result using $\ell_\infty$-expansion, but it is possible this requirement could be relaxed using recent techniques based on coboundary expansion \cite{GotlibK2022,bafna2023characterizing,dikstein2023agreement,BafnaM2024,DiksteinDL2024} to give sparse optimal testers.

\medskip

The \textit{Geometric Overlap Property} is one of the earliest notions of high dimensional expansion \cite{Gromov2010,lubotzky2018high}. A $d$-dimensional complex $X$ has $c$-geometric overlap if for any embedding of $X$ into $\R^d$, there is a point $p \in \R^d$ that is covered by a $c$-fraction of $X$'s hyperedges. Geometric overlap was first proven for the complete complex in 2-dimensions by Boros and F{\"u}redi \cite{BorosF1984}, and later extended to all dimensions by B\'{a}r\'{a}ny \cite{Barany1982}. Gromov \cite{Gromov2010} asked whether there are bounded degree complexes satisfying geometric overlap. This was resolved in \cite{FoxGLNP2012}, who gave both an optimal random construction and several explicit constructions with bounded but sub-optimal overlap (including one from high dimensional expanders). Several later works \cite{ParzanchevskiRT2016,Evra2017,oppenheim2018local} continued to build the connection between high dimensional expanders and geometric overlap, but it remained open whether a construction achieving the best of both worlds (explicit and optimal) could be achieved. Leveraging a variant of \pref{thm:intro-sampling}, we resolve this problem, showing under mild assumptions that \textit{any} sufficiently strong high dimensional expander has near-optimal geometric overlap. 

\paragraph{The Trickling-Down Threshold} High dimensional expanders exhibit a \textit{phase transition} at a certain expansion parameter called the `Trickling-Down (TD)-Threshold' \cite{oppenheim2018local}. Any complex breaking this barrier immediately has `local-to-global' structure, meaning properties such as global expansion and fast-mixing can be inferred just from \textit{local} structure of the complex \cite{Garland1973,KaufmanM2017,oppenheim2018local,alev2020improved}. This barrier shows up as a major point in constructions of HDX as well. It is relatively easy to build complexes \textit{at} the threshold by tensoring an expander with the complete complex \cite{LiuMY2020,Golowich2021}, but \textit{breaking} the barrier requires complicated algebraic machinery and is considered the `gold standard' for HDX---indeed only three such constructions are known \cite{LubotzkySV2005a,KaufmanO181,Dikstein2022}.

\pref{thm:intro-sampling} and \pref{thm:intro-rhc} hold for $\sqrt{d}$-dimensional skeletons of any complex breaking the TD-Threshold. In fact, we prove that the top level of such complexes satisfy exponential concentration for all Lipschitz functions, corresponding to $\beta(\varepsilon,i,k) \leq \exp(-\varepsilon\frac{\sqrt{k}}{i})$ in the inclusion sampling setting. On the other hand, we give examples of complexes \textit{at} the TD-threshold that exhibit arbitrarily poor concentration. We view our results in this sense as a significant strengthening of HDX's `local-to-global' phase transition, and a partial explanation to the great difficulty of constructing hypergraphs beyond the TD-Threshold. Whereas graph expansion and mixing are `common' phenomenon, \pref{thm:intro-sampling} and \pref{thm:intro-rhc} imply as soon as one breaks the TD-Threshold $X$ not only satisfies exponential concentration of measure, its skeletons satisfy optimal concentration and reverse hypercontractivity---rare properties not known for any other sparse systems.

\section{Main Results and Proof Overview}
\subsection{Background}\label{sec:intro-background}
Before stating our results in somewhat more formality, we cover some basic background on high dimensional expanders, random walks, and sampling. We refer the reader to \pref{sec:preliminaries} for formal details and discussion.

\paragraph{Simplicial Complexes}
A $d$-\textbf{\maximal}\textbf{simplicial complex} $X$ consists of a $d$-uniform hypergraph $\X[d]$ together with its downward closure
\[
X=\X[1] \cup \ldots \cup \X[d],
\]
where $\X[i] \subseteq \binom{[n]}{i}$, called the `$i$-faces', are all $i$-size subsets that sit in some hyperedge in $\X[d]$. Given a face $t \in \X[i]$, the \textbf{link} of $t$ is the sub-complex induced by localizing to faces that include $t$, that is $X_t \coloneqq \{ s : s \cup t \in X\}$. We say $X$ is \textbf{connected} if the base graph $(\X[1][t],\X[2][t])$ of every link is connected. A $d$-\maximal complex is \textbf{partite} if its vertices can be partitioned into $d$ `parts' such that each top-level face has one part from each component. Finally, the \textbf{inclusion graph} $(\X[k],\X[i])$ is the bipartite graph between $k$-sets and $i$-sets of $X$ where edges are given by inclusion.

We emphasize that we have changed convention here from dimension, where $X(d)$ refers to sets of size $d+1$, to \maxsizity where $\X[d]$ consists of sets of size $d$. We will use the latter notation throughout the rest of the paper to simplify both the statements and proofs of our results.

\paragraph{High dimensional expanders} A simplicial complex $X$ is a \textbf{\(\lambda\)-one-sided HDX} (resp.\ two-sided) if for every \(s \in X\) of co-dimension\footnote{Here by co-dimension $j$ we mean sets of size $d-j$.} at least $2$ $(\X[1][t],\X[2][t])$ is a \(\lambda\)-one-sided spectral expander (resp.\ two sided). 

The weakest form of high dimensional expanders are $\lambda$-\textbf{Trickling-Down (TD) Complexes}. We call $X$ $\lambda$-TD if it is connected and every co-dimension $2$ link is a $\frac{\lambda}{d-1}$-one-sided expander.

Our results hold for a broad class of HDX we call \textbf{Nice Complexes}. We defer the formal definition to \pref{sec:nice}, and here give two basic examples of nice complexes for which our results hold:
\begin{itemize}
    \item $X$ is the $\sqrt{d}$-skeleton of a $\lambda$-TD complex
    \item $X$ is a $2^{-d}$-two-sided HDX
\end{itemize}
We note our results also hold on $2^{-d}$-one-sided \textit{partite} HDX, albeit with polynomially worse $\varepsilon$-dependence. We discuss this further in the main body (see \pref{thm:hdx-is-sampler} and \pref{sec:concentration-all-nice-hdx}).

\paragraph{High Order Random Walks} Simplicial complexes admit a variety of `high order random walks' generalizing the standard random walk on a graph. There are three natural walks critically important to our results. The first are the classical \textbf{down-up walks} of \cite{KaufmanM2017}, denoted $U_{k,d}D_{d,k}$, which walk between $s,s' \in \X[d]$ through a shared $k$-face $t \subset s,s'$. The second are the \textbf{swap walks} of \cite{DiksteinD2019,AlevJT2019}, denoted $S_{i,j}$, which walk between \textit{disjoint} faces $s \in \X[i]$ and $s' \in \X[j]$ via $s \cup s' \in \X[i+j]$. The final walk is the \textbf{noise operator} $T_\rho$, which walks between $s,s' \in \X[d]$ via a random $i$-subface of $s$ for $i \sim Bin(1-\rho,d)$.\footnote{We remark that while definition this may seem odd at outset, it is the natural extension of the noise operator to hypergraphs and recovers the standard notion on products. See \cite{BafnaHKL2022} or \pref{sec:preliminaries} for further details.}

\paragraph{Sampler Graphs} A bipartite graph $G=(L,R,E)$ is an $(\varepsilon,\beta)$-function additive sampler if \(\forall f:R \to [0,1]\)
\[
    \Prob[v \in L]{\Abs{\Ex[u \in R, u \sim v]{f(u)} - \Ex[u \in R]{f(u)}} \geq \varepsilon} < \beta.
\]
$G$ is called an $(\alpha,\beta,\delta)$-multiplicative sampler if for all $f:R \to [0,1]$ of density at least $\Ex[u \in R]{f(u)} \geq \alpha$:
\[
\Prob[v \in L]{ \Abs{\Ex[u \in R, u \sim v]{f(u)} - \Ex[u \in R]{f(u)}} \geq \delta\Ex[u \in R]{f(u)}} < \beta.
\]
It is a well-known fact (c.f. \cite{ImpagliazzoKW2012}) that one can move between additive and multiplicative samplers without much parameter loss, and moreover that one can `flip' $L$ and $R$ in the sense that if $G=(L,R,E)$ is an $(\alpha,\beta,\delta)$-sampler, then $G^{op}=(R,L,E)$ is roughly a $(\beta,\alpha,\delta)$-sampler up to slight decay in parameters. We refer the reader to \pref{sec:preliminaries} for details. 

\subsection{Sampling and Concentration}
We can now state and overview the proof of optimal sampling on HDX in somewhat more formality.
\begin{theorem}[Sampling on HDX (Informal \pref{thm:hdx-is-sampler})]\label{thm:intro-hdx-is-sampler}
    Let $X$ be a nice, $k$-\maximal HDX and $i \leq k$. The containment graph \((\X[k],\X[i])\) is a \((\varepsilon, \beta)\)-function sampler for 
    \[
    \beta \leq \exp\left(-\Omega\left(\varepsilon^2 \frac{k}{i}\right)\right).
    \]
    In other words, for any $f: \X[i] \to [0,1]$ of expectation $\mu$:
    \[
        \Prob[s \in {\X[k]}]{\Abs{\Ex[t \subseteq s]{f(t)} - \mu} \geq  \varepsilon} < \exp\left(-\Omega\left(\varepsilon^2 \frac{k}{i}\right)\right).
    \]
\end{theorem}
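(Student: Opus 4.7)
My strategy is the method of moments combined with spectral bounds on multi-step swap walks in the style of \cite{DiksteinD2019,AlevJT2019}. Let \(\tilde f = f - \mu\) and \(h(s) := \Ex[t \subseteq s, \abs{t}=i]{\tilde f(t)}\) for \(s \in \X[k]\). By Markov's inequality, it suffices to prove the moment estimate
\[
\Ex[s \in \X[k]]{h(s)^{2p}} \leq \Paren{\frac{Cpi}{k}}^p
\]
for every positive integer \(p\) and an absolute constant \(C\); optimizing over \(p \approx \varepsilon^2 k /(eCi)\) then yields \(\Prob[s]{\abs{h(s)} \geq \varepsilon} \leq \exp(-\Omega(\varepsilon^2 k / i))\). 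So the whole task reduces to this moment bound.

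\textbf{Key reduction via a random partition.} The difficulty in bounding these moments head-on is the severe correlation among the \(\binom{k}{i}\) overlapping \(i\)-subsets of \(s\). I would sidestep this by passing to \emph{disjoint} \(i\)-subsets via Jensen. Let \(m = \lfloor k/i \rfloor\) and, conditioned on \(s\), draw a uniformly random ordered partition \((t_1,\ldots,t_m)\) of \(s\) into \(m\) pairwise disjoint \(i\)-blocks (plus a negligible remainder). By symmetry each \(t_j\) is marginally uniform among the \(i\)-subsets of \(s\), so
\[
h(s) = \Ex{\tfrac{1}{m}\sum_{j=1}^m \tilde f(t_j) \Mid s}\mper
\]
Applying Jensen to \(x \mapsto x^{2p}\) and averaging out \(s\) gives
\[
\Ex[s]{h(s)^{2p}} \leq \Ex[(t_1,\ldots,t_m)]{\Paren{\tfrac{1}{m}\sum_{j} \tilde f(t_j)}^{2p}}\mcom
\]
where the marginal of \((t_1,\ldots,t_m)\) on the right is the stationary distribution of an \(m\)-step partition swap walk on disjoint \(i\)-faces of \(X\).

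\textbf{Moment bound via swap walks.} It remains to show the right-hand side above is at most \((Cp/m)^p\) — exactly what one obtains from Chernoff for an i.i.d.\ average of \(m\) mean-zero \([-1,1]\)-valued samples. Expanding the \(2p\)-th power yields \(m^{-2p} \sum_{j \in [m]^{2p}} \E\Brac{\tprod_{a=1}^{2p} \tilde f(t_{j_a})}\), and I would group this sum by the partition type of the multi-index \(j\). For a nice HDX the multi-step swap-walk bounds imply that the marginal of any sub-tuple of \((t_j)\) is exponentially (in \(k\)) close in total variation to the product distribution on \(\X[i]\); combined with \(\Ex{\tilde f} = 0\), this makes every pattern containing a singleton coordinate contribute a negligible error. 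The surviving contribution comes from perfect pairings and is at most \((2p-1)!! \cdot m^{-p} \leq (Cp/m)^p\) using only \(\abs{\tilde f} \leq 1\). Combining with the Jensen step gives the desired moment bound.

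\textbf{Expected main obstacle.} The combinatorial bookkeeping in the last paragraph is standard; the real difficulty is in establishing the multi-step swap-walk near-independence \emph{uniformly} as one conditions on already-revealed \(i\)-blocks and descends into links. For \(2^{-d}\)-two-sided HDX every link remains an extremely strong expander, so the induction is clean. For \(\sqrt d\)-skeletons of TD complexes the one-sided swap-walk bounds degrade with depth, and one must carefully track the accumulation of errors through the ``nice complex'' framework of \pref{sec:nice}; this is also the reason the theorem's optimal bound is stated only for skeletons on the far side of the Trickling-Down threshold, and why near \(i = \Theta(k)\) the partition in the key reduction extracts only \(O(1)\) blocks and the exponent necessarily collapses, matching the lower bound.
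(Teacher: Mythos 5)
Your key reduction---writing \(h(s)\) as a conditional expectation over a random ordered partition of \(s\) into disjoint \(i\)-blocks and applying Jensen---is essentially the paper's \textit{faces complex} bootstrapping (\pref{prop:faces-to-inclusion} and \pref{lem:weird-distribution-to-uniform}): the distribution of \((t_1,\ldots,t_m)\) is exactly the top level of \(F_X^{(i)}\). Your Jensen step is in fact slightly cleaner than the paper's \pref{lem:weird-distribution-to-uniform} (which transfers tail probabilities and loses a factor of \(1/\varepsilon\)), but both capture the same idea: reduce to concentration of the sum over disjoint blocks. Where you genuinely diverge is in how you control that sum. The paper runs a Chernoff-style MGF recursion (\pref{prop:concentration-splitting}), splitting \(z_\ell\) into two correlated copies of \(z_{\ell/2}\) via swap walks and controlling the error multiplicatively through an internal reduction to the complete complex (\pref{claim:Chernoff-inner-prod-bound}). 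You instead expand the \(2p\)-th polynomial moment and count by pattern type. The latter is workable for strong two-sided HDX, but your justification for killing singleton patterns overclaims: spectral gap of swap walks does not give total variation closeness to the product distribution, it gives an expectation bound of the form of the high-dimensional expander mixing lemma (\pref{thm:hdeml}), \(\abs{\E[\prod g_j] - \prod \E[g_j]} \leq 3^D\lambda\prod\norm{g_j}\). That bound is what you actually need and it does suffice when \(\lambda = 2^{-\Omega(k)}\), so this is a repairable imprecision rather than a fatal flaw for the two-sided case.

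The genuine gap is in your last paragraph. For \(\sqrt{d}\)-skeletons of TD complexes and more generally for the third clause of niceness (\pref{def:nice-complexes}), the swap walks inside the \(k\)-skeleton have expansion on the order of \(i/k^2\), not exponentially small, and the resulting HD-EML error cannot be absorbed by the moment method: after Markov at level \(2p\), the error contributes a term of order \(\varepsilon^{-2p}\cdot\poly(k)\cdot k^{-\Theta(1)}\), which for the optimizing \(p \approx \varepsilon^2 k/i\) is \(\exp\bigl(\Theta(\varepsilon^2(k/i)\log(1/\varepsilon))\bigr)\cdot k^{-\Theta(1)}\) and does not stay below the target \(\exp(-\Omega(\varepsilon^2 k/i))\). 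Characterizing this as ``careful bookkeeping through the nice-complex framework'' is therefore misleading---no amount of bookkeeping rescues the moment expansion here. The paper sidesteps the issue with a fundamentally different argument: lift the function to the ambient \(d\)-\maximal complex \(Y\), prove \emph{exponential} (not subgaussian) concentration for Lipschitz functions on \(\Y[d]\) via a Herbst/Poincar\'e-type recursion on the MGF (\pref{thm:poincare-to-concentration}), and then lower back to level \(k\) via \pref{lem:lowering}. Similarly, the one-sided partite case uses yet another route (a localization argument and concentration for the swap complex, \pref{sec:conc-partite}), since partite complexes are not splittable in the way your pattern-counting needs. So your proposal covers the \(2^{-\Omega(k)}\)-two-sided regime by a valid alternative method, but misses that the other two branches of ``niceness'' require separate proofs with different mechanisms.
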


Since $\sqrt{d}$-skeletons of $\lambda$-TD complexes are nice, we get the following immediate corollary.
\begin{corollary}[Sampling for TD-Complexes (Informal \pref{claim:td-si-are-nice} + \pref{lem:raising})]
    Fix $\lambda<1$ and let $X$ be a $d$-\maximal $\lambda$-TD complex. Then for any $i < k \leq \sqrt{d}$, the containment graph of \((\X[k],\X[i])\) is a \((\varepsilon, \beta)\)-sampler for
    \[
    \beta \leq \exp\left(-\Omega\left((1-\lambda)^{\frac{1}{2}}\varepsilon^2 \frac{k}{i}\right)\right).
    \]
\end{corollary}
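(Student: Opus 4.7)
The proof is a direct assembly of two previously stated facts. The plan is to pass from $X$ to its $\sqrt{d}$-skeleton $Y$, which by \pref{claim:td-si-are-nice} qualifies as a nice HDX in the sense required by the main sampling theorem; since $Y(k)=X(k)$ and $Y(i)=X(i)$ for every $i \leq k \leq \sqrt{d}$, the containment graph $(\X[k],\X[i])$ is unchanged by this passage. Applying \pref{thm:hdx-is-sampler} to $Y$ then yields the sampling bound $\beta \leq \exp(-\Omega(\varepsilon^2 k / i))$, where the suppressed constant depends only on the niceness parameters of $Y$.

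The only remaining content is to track the dependence of this constant on the TD parameter $\lambda$. The $\lambda$-TD assumption gives co-dimension-$2$ links that are $\lambda/(d{-}1)$-one-sided expanders, and \pref{lem:raising} propagates this local expansion upward through the skeleton, producing the quantitative niceness bounds consumed by \pref{claim:td-si-are-nice}. I would chase the factor of $(1-\lambda)$ through this chain and then substitute it into the rate of \pref{thm:hdx-is-sampler}; square-root-losses in the raising step account for the $(1-\lambda)^{1/2}$ in the exponent, matching the stated bound.

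The main obstacle is therefore purely bookkeeping: quantitatively carrying the $(1-\lambda)$ factor through both the raising step and the niceness parameters, then through the proof of \pref{thm:hdx-is-sampler}, to confirm that the resulting exponent scales as $(1-\lambda)^{1/2}\varepsilon^2 k/i$ rather than with a worse power of $(1-\lambda)$. Beyond this constant-tracking, the corollary is a direct instantiation of the two cited results, and no new conceptual ingredient is needed.
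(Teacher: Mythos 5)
Your outline has the right three ingredients, but you've assigned \pref{lem:raising} a role it doesn't play, and as a result the mechanism you describe for the $(1-\lambda)^{1/2}$ factor is not right.

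First, \pref{claim:td-si-are-nice} does not say the $\sqrt{d}$-skeleton of a $\lambda$-TD complex is nice; it says the $\sqrt{(1-\lambda)d}$-skeleton is $e^{-1}$-locally nice. For $\lambda$ close to $1$, the $\sqrt{d}$-skeleton is strictly larger and is not covered by that claim, so passing to it directly would leave \pref{thm:hdx-is-sampler} inapplicable. Second, \pref{lem:raising} is a statement about lifting \emph{tail bounds for functions} from $\X[k]$ to $\X[d]$ (concentration of $U_{k,d}f$ given concentration of $f$); it does not "propagate local expansion upward" or produce niceness bounds. The propagation of link-expansion in a TD complex is Oppenheim's Trickling-Down theorem (\pref{thm:td}), and that (together with \pref{thm:gap-of-down-operator-single-step}) is what the proof of \pref{claim:td-si-are-nice} actually uses.

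The actual assembly is: let $m=\lfloor\sqrt{(1-\lambda)d}\rfloor$. By \pref{claim:td-si-are-nice} the $m$-skeleton is $e^{-1}$-locally nice, so \pref{thm:hdx-is-sampler} gives sampling of $(\X[k],\X[i])$ with $\beta=\exp(-\Omega(\varepsilon^2 k/i))$ for every $i<k\leq m$, with a universal constant. For $m<k\leq\sqrt{d}$ one invokes the "moreover" clause of \pref{thm:hdx-is-sampler}, whose proof is \pref{lem:raising}: the sampling bound available at level $m$ lifts to level $k$ up to a constant factor, so $\beta=O(\exp(-\Omega(\varepsilon^2 m/i)))$. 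Since $m=(1-\lambda)^{1/2}\sqrt{d}\geq(1-\lambda)^{1/2}k$ for all $k\leq\sqrt{d}$, this is $\exp(-\Omega((1-\lambda)^{1/2}\varepsilon^2 k/i))$, which uniformly dominates the bound in the first regime as well. So the $(1-\lambda)^{1/2}$ comes entirely from the ratio $m/k$, i.e.\ from the dimension of the nice skeleton relative to $\sqrt{d}$, not from any loss intrinsic to the raising step. Your "bookkeeping" framing undersells the issue: without identifying the correct skeleton and the correct use of \pref{lem:raising}, the argument you've sketched does not close.
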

We remark that this (and all following) results also hold in some variation under the popular notion of \textit{spectral independence}, which for our purposes act similar to TD complexes. We discuss in the main body.

En route to \pref{thm:intro-hdx-is-sampler}, we also show the \textit{top level} of any $\lambda$-TD complex satisfies exponential concentration for Lipschitz functions, which may be of independent interest.
\begin{theorem}[Exponential Concentration of $\lambda$-TD Complexes (Informal \pref{cor:TD-exponential})]\label{cor:intro-TD-exp}
    For any $\lambda < 1$, $\nu >0$, and $k \in \mathbb{N}$, let $X$ be a $k$-\maximal $\lambda$-TD complex. For any $\nu$-Lipschitz function $f: \X[k] \to \R$:
    \[
    \Pr_{s \in \X[k]}[|f(s)-\mathbb{E}[f]| \geq t] \leq \exp\left(-\Omega\left(\frac{t}{\sqrt{c_\lambda\nu}}\right)\right)
    \]
    where $c_\lambda \leq O(e^{\frac{\lambda}{1-\lambda}})$.
\end{theorem}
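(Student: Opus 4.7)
My plan is to derive the exponential tail for a $\nu$-Lipschitz function $f\colon \X[k] \to \R$ via the classical two-step ``spectral gap $+$ Herbst'' strategy: first establish a Poincar\'e inequality for the top-level measure on $\X[k]$ with a Poincar\'e constant $c_\lambda$ independent of $k$, then bootstrap this into exponential tails using the Aida--Stroock / Bobkov--Ledoux version of the Herbst argument, which trades a Poincar\'e inequality and bounded differences for concentration of the form $\exp(-\Omega(t/\sqrt{c_\lambda \nu}))$.

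For the Poincar\'e step, I would invoke Oppenheim's trickling-down theorem together with the Kaufman--Mass local-to-global machinery. On a $\lambda$-TD complex these results show that the spectral gap of the down-up walk $U_{k-1,k} D_{k,k-1}$ on $\X[k]$ is bounded away from zero by a quantity depending only on $\lambda$. The key bookkeeping is to compound the multiplicative losses across the trickling-down recursion level by level and verify that the resulting product telescopes into an exponential in $\lambda/(1-\lambda)$ rather than $k$, yielding the claimed $c_\lambda = O(e^{\lambda/(1-\lambda)})$.

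For the Herbst step, observe that any two top faces $s,s'$ joined by a single down-up step differ in at most one vertex, so for a $\nu$-Lipschitz $f$ the associated Dirichlet form $\mathcal{E}(f,f) = \tfrac{1}{2}\Ex[s \sim s']{(f(s)-f(s'))^2}$ is controlled by $\nu$. Combining the Poincar\'e bound $\Var(f) \leq c_\lambda \cdot \mathcal{E}(f,f)$ with the standard Herbst iteration --- applying Poincar\'e to $e^{tf/2}$ to obtain a differential inequality on the log-moment-generating function of $f - \mathbb{E} f$, integrating, and then invoking Markov's inequality --- produces the advertised $\exp(-\Omega(t/\sqrt{c_\lambda \nu}))$ tail.

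I expect the main obstacle to be the sharp bookkeeping in the Poincar\'e step: a careless iteration of trickling down can give a constant that degrades with $k$, and obtaining the $k$-independent $e^{\lambda/(1-\lambda)}$ dependence requires a delicate level-by-level accounting (or, equivalently, an Alev--Lau style unwinding that identifies the telescoping structure). A secondary subtlety is the discrete version of the Herbst argument, which is cleanest to carry out by passing to the continuous-time Markov semigroup generated by the down-up chain, where gradient norms are naturally replaced by per-step Lipschitz constants and the usual calculus manipulations go through verbatim.
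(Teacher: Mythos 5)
Your proposal matches the paper's strategy: prove a Poincar\'e-type inequality for the down-up walk via Trickling Down and Alev--Lau, then run the Aida--Stroock/Bobkov--Ledoux/Boucheron--Lugosi--Massart variance-to-exponential-tail argument (the paper explicitly cites these sources and calls the corresponding step ``approximate Efron--Stein''). One small imprecision worth correcting: the spectral gap of the single-step down-up walk $U_{k-1}D_k$ is \emph{not} a $\lambda$-dependent constant bounded away from zero; it scales as $\Theta\bigl(\tfrac{1}{k}e^{-\lambda/(1-\lambda)}\bigr)$, and this $1/k$ is unavoidable (it is already present on the complete complex). What saves you is that the Dirichlet form of the one-step walk is itself $\tfrac{1}{k+1}\sum_i \mathbb{E}\bigl[(Z-Z'_{(i)})_+^2\bigr] \leq \nu/(k+1)$ for a $\nu$-Lipschitz $f$, so the two factors of $k$ cancel in the Poincar\'e bound and one is left with $\Var(f) \lesssim c_\lambda\nu$ as needed; the paper's ``Approximate Efron--Stein'' lemma packages exactly this cancellation. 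Your suggestion of passing to the continuous-time semigroup is a stylistic alternative to the paper's discrete recurrence $\mathbb{E}[e^{\lambda Z}] \leq (1-\lambda^2 C\nu/4)^{-1}\mathbb{E}[e^{\lambda Z/2}]^2$; both are standard implementations of the Herbst step and yield the same tail.
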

We refer the reader to \pref{sec:prelims-concentration} for the definition of $\nu$-Lipschitz, and here just note that specialized to inclusion sampling this implies $(\X[k],\X[i])$ is roughly a $(\varepsilon,\exp(-\varepsilon\frac{\sqrt{k}}{i}))$-sampler. A similar statement holds under spectral independence, albeit for a more restricted class of functions (see \pref{cor:SI-exponential}).

\paragraph{Proof Overview} The proof of \pref{thm:intro-hdx-is-sampler} is broken into two main components: the `Chernoff' ($i=1$) setting, and a bootstrapping argument lifting Chernoff to general $i$. Here we focus just on the simplest version of the proof, the case of \textit{two-sided HDX}, and refer the reader to \pref{sec:concentration-all-nice-hdx} for the general case.

\paragraph{Part I: Chernoff-Hoeffding} Toward \pref{thm:intro-hdx-is-sampler}, we first prove a fairly general ``$k$ vs $1$'' concentration theorem for any HDX $X$ whose strength depends on $X$'s underlying quantitative expansion. Under strong assumptions ($2^{-k}$-HDX), we recover true Chernoff. Perhaps more surprisingly, even under very weak assumptions ($\frac{1}{k}$-HDX) we still recover strong concentration of the form $\exp(-\varepsilon^2 \sqrt{k})$. Taking a $\sqrt{k}$-skeleton of the latter recovers Chernoff (see \pref{sec:prelims-concentration}). We give a simplified statement here:
\begin{theorem}[Chernoff-Hoeffding for HDX (Informal \pref{thm:split-concentrate})]\label{thm:intro-split-concentrate}
    Let $X$ be a $k$-\maximal $\frac{1}{k}$-two-sided HDX. Then for any $f: \X[1] \to [0,1]$:
    \[
    \underset{\{v_1,\ldots,v_k\} \in \X[k]}{\Pr}\left[\left|\frac{1}{k}\sum\limits_{i=1}^k f(v_i) - \mu\right| > \varepsilon\right] \leq \exp(-\Omega(\varepsilon^2\sqrt{k}))
    \]
    Moreover, if $X$ is a $2^{-k}$-two-sided HDX stronger concentration holds:
    \[
    \underset{\{v_1,\ldots,v_k\} \in \X[k]}{\Pr}\left[\left|\frac{1}{k}\sum\limits_{i=1}^k f(v_i) - \mu\right| > \varepsilon\right] \leq \exp(-\Omega(\varepsilon^2k))
    \]
\end{theorem}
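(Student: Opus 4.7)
Let $g = f - \mu$, so $\mathbb{E}_{v \in X(1)}[g(v)] = 0$ and $\|g\|_\infty \leq 1$, and set $Z(s) = \sum_{v \in s} g(v)$. The plan is to bound even moments $\mathbb{E}_{s \in X(k)}[Z(s)^{2p}]$ and then apply Markov's inequality, viewing strong two-sided expansion as forcing the $m$-wise correlations $\mathbb{E}_{T \in X(m)}[\prod_{v \in T} g(v)]$ to behave like those of mean-zero independent samples.

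Expanding $Z(s)^{2p}$ and grouping ordered tuples by the partition $\pi \vdash [2p]$ they induce, one obtains $\mathbb{E}_s[Z^{2p}] = \sum_{\pi} c_{\pi,k}\,\mathbb{E}_{T \in X(|\pi|)}[\prod_{B \in \pi} g(v_B)^{|B|}]$, with $c_{\pi,k}$ a hypergeometric-type coefficient bounded by $k^{|\pi|}$ up to factorial factors. Since $g^{2} \leq 1$, blocks of size $\geq 2$ can be absorbed cheaply, which reduces everything to controlling the fully-distinct correlation $\alpha_m := |\mathbb{E}_{T \in X(m)}[\prod_{v \in T} g(v)]|$ for every $m \leq 2p$, and then combining with the Wick-type pair contributions.

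The analytic heart is a correlation-decay bound of the form $\alpha_m \lesssim (C m \lambda)^{m/2}$, which I would prove inductively via the swap walks. Symmetrizing over which vertex to peel off gives $\alpha_m = \langle g, S_{1,m-1}(g^{\otimes (m-1)})\rangle_{X(1)}$; the non-trivial spectrum of $S_{1,m-1}$ on a two-sided $\lambda$-HDX is $O(m\lambda)$ by the local-to-global swap-walk bounds of DiksteinD2019 and AlevJT2019, and since $\mathbb{E}_v[g] = 0$ this strips a factor of $O(m\lambda)$ and leaves a level-$(m-1)$ product expectation on which we recurse, with residual $L^2$ norms controlled by $\|g\|_\infty \leq 1$. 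Plugging $\alpha_m$ back into the moment sum, in the strong regime $\lambda = 2^{-k}$ every $\alpha_m$ is negligible and the Wick pairings dominate, giving $\mathbb{E}_s[Z^{2p}] \leq (Ckp)^p$ and hence $\exp(-\Omega(\varepsilon^2 k))$ after optimizing $p \asymp \varepsilon^2 k$ in Markov. In the weak regime $\lambda = 1/k$ the $\alpha_m$ remain $\Theta(1)$ for small $m$, and a careful accounting yields only $\mathbb{E}_s[Z^{2p}] \leq (Ck^{3/2}p)^p$, producing $\exp(-\Omega(\varepsilon^2\sqrt{k}))$.

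The main obstacle is precisely this correlation-decay step: each recursion costs an $O(m)$ factor from the swap-walk eigenvalue, and a naive iteration accumulates an $m!$-style blow-up that would destroy even the $\lambda = 2^{-k}$ case. Avoiding this requires marrying the spectral estimate with a Cauchy--Schwarz or $(2,q)$-hypercontractive style inequality at each level, in effect a moment-based hypercontractivity on HDX that is closely tied to (and feeds into) the reverse hypercontractive statement \pref{thm:intro-rhc}. The same mechanism is what limits the weak regime to square-root Chernoff, with true Chernoff only recovered on a $\sqrt{k}$-skeleton.
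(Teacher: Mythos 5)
Your approach is genuinely different from the paper's: you bound even moments $\mathbb{E}[Z^{2p}]$ and optimize $p$, while the paper bounds the moment-generating function $\mathbb{E}[\exp(r\sum f(v_i))]$ directly via Chernoff's method. Both are standard routes to subgaussian tails, and both must face the same structural obstacle: each recursive application of a swap-walk spectral estimate peels off an $O(m\lambda)$ factor, and a naive iteration accumulates a combinatorial blow-up (your ``$m!$-style'' loss) that would swamp even $\lambda = 2^{-k}$.

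You correctly identify this as the crux, but you leave it unresolved. Your gesture toward a ``moment-based hypercontractivity on HDX\ldots closely tied to (and feeds into) the reverse hypercontractive statement'' does not point to any available tool: the global hypercontractivity of \cite{BafnaHKL2022,GurLL2022} is not known to imply the ordinary $(2,q)$-hypercontractivity you would need, and worse, in this paper the dependency runs in the \emph{opposite} direction — the Chernoff bound is used to establish sampling (\pref{thm:hdx-is-sampler}), which is used locally to establish reverse hypercontractivity (\pref{thm:intro-rhc-real}). Invoking an RHC-adjacent inequality to prove the Chernoff bound would be circular with the rest of the paper.

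The ingredient you are missing is exactly what the paper calls the ``internal reduction'' (\pref{claim:Chernoff-inner-prod-bound}). After one application of the swap-walk bound the paper is left with the error term $\gamma\,\mathbb{E}[z_{\ell/2}^2]$, where $z_{\ell/2}$ is the partial MGF. Rather than recursing on this quantity (which is where the blow-up lives), the paper observes that an $\ell/2$-face can be drawn by first drawing an $\ell$-face $s$ and then a random subface $t\subset s$; inside the complete complex on $s$ the coordinates are negatively correlated, so $\mathbb{E}_{t\subset s}\!\left[\prod_{v\in t} e^{2rf(v)}\right] \leq \mathbb{E}_{v\in s}[e^{2rf(v)}]^{\ell/2}$, and an elementary calculation then gives $\mathbb{E}[z_{\ell/2}^2] \leq e^{5r^2\ell/4}\,\mathbb{E}[z_\ell]$. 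This turns what would be an additive error blow-up into a controlled \emph{multiplicative} factor on the original MGF, and the recursion closes. Nothing in your moment-method decomposition plays this role; your $\alpha_m$ bound $\lesssim (Cm\lambda)^{m/2}$ has no mechanism for absorbing the residual $L^2$ norms of the multilinear lifts $\tilde{g}$ beyond the trivial $\|\tilde{g}\|_\infty\leq 1$ bound, and once you absorb blocks of size $\geq 2$ (whose products are no longer mean-zero) the orthogonality you lean on disappears. Two smaller quantitative slips: a single application of the swap-walk bound already gives $\alpha_m \leq m\lambda$ (no recursion), which in the $\lambda=1/k$ regime is $O(m/k)$, not $\Theta(1)$; and your claimed bound $(Cm\lambda)^{m/2}$ is actually \emph{weaker} than $m\lambda$ in the regime $m\lambda < 1$ that matters, so the recursion as you describe it is not buying you anything.
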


For simplicity let $f=1_A$ be the indicator of some $A \subset X(1)$. The proof of \pref{thm:intro-split-concentrate} follows the standard Chernoff-style method of bounding the moment generating function (MGF). Namely for any $r>0$:
\[
\underset{\{v_1,\ldots,v_k\} \in \X[k]}{\Pr}\left[\sum_{i=1}^k 1_A(v_i) \geq k(\mu+\varepsilon)\right] = \Pr\left[\exp\left(\sum_{i=1}^k r1_A(v_i)\right) \geq \exp(rk\left(\mu+\varepsilon\right))\right] \leq \frac{\mathbb{E}\left[\prod\limits_{i=1}^k e^{r1_A(v_i)}\right]}{e^{rk(\mu+\varepsilon)}}.
\]
In the classical proof of Chernoff, the variables $v_i$ are independent, so one can bound the MGF by
\begin{equation}\label{eq:intro-Chern}
\underset{\{v_1,\ldots,v_k\} \in \X[k]}{\mathbb{E}}\left[\prod\limits_{i=1}^ke^{r1_A(v_i)}\right] = \underset{v \in \X[1]}{\mathbb{E}}\left[e^{r1_A(v)}\right]^k \lesssim e^{rk(\mu + r)}
\end{equation}
and set $r=\Theta(\varepsilon)$ to get the desired bound. 

When $X$ is an HDX, the variables $\{v_1,\ldots,v_k\}$ may be extremely correlated, so the above approach breaks down naively. Instead, we take inspiration from Healy's proof \cite{Healy2008} of the Expander-Chernoff Theorem and recursively bound the MGF by `splitting' it into components on lower levels of $X$ using the swap walks. Toward this end, for any $\ell \leq k$, define the partial MGF $z_{\ell}: \X[\ell] \to \R$ on a face $t=\{v_1,v_2,\ldots,v_\ell\}$ as
\[
z_\ell(v_1,v_2,\dots,v_\ell) = \prod_{i=1}^\ell e^{r1_A(v_i)} = e^{r|t \cap A|}.
\] 
Our goal is to bound $\mathbb{E}[z_k] \approx \mathbb{E}[z_1]^k$. To set up a recursion, we appeal to the elementary observation that a $k$-set $s \in \X[k]$ can be sampled by first drawing a $\frac{k}{2}$-set $t_1 \in \X[k/2]$, then $t_2 \in \X[k/2][t_1]$ conditionally from its link. We set \(s=t_1 \dunion t_2\). This allows us to `split' $z_k$ into two correlated copies of $z_{k/2}$:
\begin{align*}
\underset{s \in \X[k]}{\mathbb{E}}[z_{k}(s)] &=\underset{s \in \X[k]}{\mathbb{E}}[e^{r|s \cap A|}]\\
&=\underset{t_1 \dunion t_2}{\mathbb{E}}[e^{r|t_1 \cap A|}e^{r|t_2 \cap A|}]\\
&= \underset{t_1 \dunion t_2}{\mathbb{E}}[z_{k/2}(t_1)z_{k/2}(t_2)].
\end{align*}
The trick is now to observe that $(t_1,t_2)$ is distributed exactly as an edge in the swap walk $S_{\frac{k}{2},\frac{k}{2}}$. Since swap walks on HDX have excellent spectral expansion \cite{AlevJT2019,DiksteinD2019,GurLL2022,alev2023sequential}, we can `de-correlate' the above and write the relation
\begin{equation} \label{eq:recurse-1}
    \ex{z_k} \leq (1-\gamma)\ex{z_{k/2}}^2 + \gamma \mathbb{E}[z_{k/2}^2]
\end{equation}
where $\gamma$ is the expansion of the swap walk. 

We could now recurse if not for the error term $\gamma \mathbb{E}[z_{k/2}^2]$, which could very well be the dominating factor. The key is show $\ex{z_{k/2}^2}$ actually can't be too much bigger than $\ex{z_{k}}$ itself:
\begin{equation}\label{eq:intro-internal-reduce}
\ex{z_{k/2}^2} \lesssim e^{r^2k}\ex{z_{k}}.
\end{equation}
Setting $r$ such that $\gamma e^{r^2k} \ll 1$,  \eqref{eq:recurse-1} becomes
\[ 
\ex{z_k} \lesssim \frac{1-\gamma}{1-e^{r^2 k} \gamma} \ex{z_{k/2}}^2 \approx \ex{z_{k/2}}^2,
\]
allowing us to recurse to show an upper bound similar to \eqref{eq:intro-Chern}. For `weak' HDX, this strategy requires setting $r=\Theta(\varepsilon k^{-1/2})$, while for strong HDX we may set $r=\Theta(\varepsilon)$. This results in the gap in the stated bounds.

To prove \eqref{eq:intro-internal-reduce} we in some sense `reverse' the sampling process above, and draw a $\frac{k}{2}$-set $t$ by first sampling a $k$-set $s \in \X[k]$, then take $t \subset s$ uniformly at random. This means we can write the error term as:
\[
\underset{t \in \X[k/2]}{\mathbb{E}}[z_{k/2}(t)^2] = \Ex[s \in {\X[k]}]{\Ex[t \subseteq s]{z_{k/2}(t)^2}} =  \Ex[s \in {\X[k]}]{\Ex[t \subseteq s]{e^{2r|t \cap A|}}}.
\]
Crucially, the inner expectation is now over the $\frac{k}{2}$-\maximal \textit{complete complex} (whose vertices are the $k$-set $s$), so standard concentration implies the exponent, $2r|t \cap A|$, is close to its expectation $r|s \cap A|$ up to a factor of roughly $r^2k$ with high probability. Thus:
\[
\Ex[s \in {\X[k]}]{\Ex[t \subseteq s]{e^{2r|t \cap A|}}} \lesssim \Ex[s \in {\X[k]}]{e^{r|s \cap A|+r^2k}} = e^{r^2k}\underset{s \in \X[k]}{\mathbb{E}}[z_k].
\]

\paragraph{Part II: Bootstrapping} We now argue one can lift Chernoff for HDX to optimal sampling for their inclusion graphs $(\X[k],\X[i])$. The idea, based on the method in \cite{ImpagliazzoKW2012} for the complete complex, is to try to reduce to a $\frac{k}{i}$-\maximal complex $Y$ whose vertices are $i$-sets in $\X[i]$, and whose $\frac{k}{i}$-faces correspond in some way to the $k$-sets in $X$. We could then hope to apply Chernoff to this system to bound the $k$ vs.\ $i$ sampling behavior in the original complex.

Somewhat more formally, we'll consider the $\frac{k}{i}$-\maximal complex generated by the following process: draw a random k-face $\{v_1,\ldots,v_{k}\} \in \X[k]$, and partition the face randomly into $\frac{k}{i}$ subsets of size $i$ denoted $\{I_1,\ldots,I_{\frac{k}{i}}\}$. The resulting complex, called the \textit{faces complex} of $X$, inherits local-spectral expansion from $X$'s swap walks so we can apply Chernoff for HDX to get concentration of the form $\exp(-\varepsilon^2\frac{k}{i})$. 

Unfortunately, this does not prove concentration for quite the right object. In particular, given a function $f: \X[i] \to [0,1]$, the above really states that a random partitioning of a $k$-face $s$ into $\frac{k}{i}$ sub $i$-faces satisfies
\[
\Pr_{s \in \X[k],~\dunion_j I_j = s}\left[\left|\frac{i}{k}\sum\limits_{j=1}^{k/i}f(I_j) - \mu\right| > \varepsilon\right] \leq \exp\left(-\varepsilon^2\frac{k}{i}\right),
\]
whereas we'd like to bound $\underset{s \in \X[k]}{\Pr}\left[\left|\underset{t \subset s}{\mathbb{E}}[f(t)] - \mu\right| > \varepsilon\right]$. Here we are saved by the fact that each $I_j$ is \textit{marginally} distributed as a random $i$-set $t \subset s$. We prove in such cases it is possible to inherit sampling from the correlated bound up to a small loss in parameters, completing the proof.

\subsection{Reverse Hypercontractivity} \label{sec:rhc-intro-results}
Leveraging \pref{thm:intro-hdx-is-sampler} \textit{locally} in $X$, we prove reverse hypercontractivity for high dimensional expanders. 
\begin{theorem}[Reverse Hypercontractivity (Informal \pref{thm:intro-rhc-real})]\label{thm:real-intro-rhc-real}
    Fix $\rho \in (0,1)$ and let \(X\) be a \(k\)-\maximal nice complex for $k$ sufficiently large. Then there exist constants $C,q$ (depending only on $\rho$) such that
    \[
    \langle f, T_\rho g \rangle \geq C\norm{f}_q\norm{g}_q.
    \]
    In other words $T_\rho$ is ``$(C,q)$-reverse hypercontractive.''
\end{theorem}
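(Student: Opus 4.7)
The plan is to first reduce the functional inequality to a set-theoretic inequality via layer-cake decomposition, and then to establish the set inequality by exploiting the natural decomposition of $T_\rho$ through shared intermediate faces and applying the inclusion sampling theorem (Theorem~\ref{thm:intro-hdx-is-sampler}) locally to lower bound the resulting two-function correlation.

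First I would apply a standard layer-cake argument: writing $f=\int_0^\infty \mathbf{1}\{f>\tau_1\}\,d\tau_1$ and $g=\int_0^\infty \mathbf{1}\{g>\tau_2\}\,d\tau_2$, the functional inequality $\langle f,T_\rho g\rangle\ge C\|f\|_q\|g\|_q$ reduces to proving, for all $A,B\subseteq \X[k]$,
\[
\Pr_{(s,s')\sim T_\rho}[s\in A,\,s'\in B]\;\ge\;C'\,\Pr[A]^{1/q}\,\Pr[B]^{1/q},
\]
with $q\in(0,1)$ and $C,C',1/q$ depending only on $\rho$. For the set inequality I would use the concrete description of $T_\rho$: a $\rho$-correlated pair is generated by drawing an intersection size $i\sim\mathrm{Bin}(k,\rho)$, a shared face $t\in\X[i]$ from the corresponding distribution, and then $s,s'\supset t$ independently inside the link. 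Setting $f_A(t)=\Pr[s\in A\mid t\subset s]$ and $g_B(t)=\Pr[s\in B\mid t\subset s]$, this yields
\[
\Pr[s\in A,\,s'\in B]\;=\;\mathbb{E}_i\,\mathbb{E}_{t\in \X[i]}[f_A(t)\,g_B(t)].
\]

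I would then focus on one intermediate level $i^*\approx \rho k$ where $\Pr[|s\cap s'|=i^*]=\Omega(1)$, and invoke the sampling theorem applied to $1_A$ and $1_B$ to say that $f_A,g_B$ are exponentially concentrated about $\alpha=\Pr[A]$ and $\beta=\Pr[B]$ on most of $\X[i^*]$. Combined with a Paley–Zygmund/reverse-Markov inequality this shows that a positive-measure set of $t$'s is simultaneously ``heavy'' for both $A$ and $B$, and inside any such $t$ the independent completions $s,s'\supset t$ fall into $A\times B$ with good probability; this gives the desired bound whenever $\alpha,\beta$ are not too small.

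The main obstacle, which I expect to be the technical heart of the argument, is the regime $\alpha,\beta=e^{-\Omega(k)}$, where the sampling theorem at level $i^*\approx \rho k$ is simply too weak to certify enough heaviness. To handle this I would use the semigroup identity $T_\rho=T_{\rho^{1/L}}^L$ to split the noise into $L$ near-identity steps, apply the local argument above to each step (where $\rho^{1/L}$ is close to $1$, so only a small number of vertices are resampled and sampling can be invoked inside a single link), and compose the per-step inequalities. Controlling the degradation of $q$ and $C$ under $L$-fold composition so that both depend only on $\rho$ (and not on $k$ or on $\alpha,\beta$) is the delicate part, and plays the role that tensorization plays in Borell's classical proof of RHC on the cube.
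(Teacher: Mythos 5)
Your large-set outline matches the paper's: reduce to a boolean two-function inequality, exploit the conditional independence of $s,s'$ given the shared intermediate face $t$, write the $\rho$-correlated mass as $\mathbb{E}_t[\Pr[A\mid t]\Pr[B\mid t]]$, and use inclusion sampling to produce a positive-measure collection of $t$'s that are simultaneously ``heavy'' for $A$ and $B$. The paper fixes a deterministic intersection level $\gamma k$ (the down-up walk $U_{\gamma k,k}D_{k,\gamma k}$) and bounds $T_\rho$ below by the Binomial mass on intersection sizes $\le \gamma k$; fixing one level $i^*\approx\rho k$ as you do is a cosmetic variant of the same move.

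The genuine gap is the small-set regime. You propose to write $T_\rho = T_{\rho^{1/L}}^{L}$ and compose $L$ per-step inequalities. On the hypercube this semigroup identity is exact, but on a general simplicial complex it fails: the paper's noise operator is the Binomial mixture $T_{k,\rho}=\sum_\ell \binom{k}{\ell}\rho^\ell(1-\rho)^{k-\ell}\,U_{\ell,k}D_{k,\ell}$ of down-up walks, and these walks do not commute and do not form a one-parameter semigroup, so there is no identity $T_{\rho_1}T_{\rho_2}=T_{\rho_1\rho_2}$ to chain (and no useful quantitative approximation to it is known at the required strength). The paper's resolution is to decompose the \emph{shared face} rather than the operator: write $t = t_1\cup t_2\cup\dots\cup t_m$, with each $t_j$ of a carefully chosen size $\ell$ drawn from the link of $t_1\cup\dots\cup t_{j-1}$, and calibrate $\ell$ so that $\exp(-ck/\ell)\approx\Pr[A]$. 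The multiplicative sampler $(\X[\ell][s],\X[k-|s|][s])$ then controls the per-step density loss of $A$ and $B$, and the total of $m=O(\log(1/\Pr[A]))$ steps yields the required $\Pr[A]^{O(1)}\Pr[B]^{O(1)}$ bound (Lemma 6.8, Observation 6.7). This ``decompose the face, not the time'' device is what replaces tensorization; your semigroup factorization cannot substitute for it without first proving a nontrivial approximate-semigroup fact about $T_\rho$ on HDX.

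A secondary, smaller gap: you treat the reduction from general $f,g$ to indicators as a standard layer-cake step, but for reverse hypercontractivity $q<1$ (and the conjugate exponent is negative), so $\int_0^\infty \Pr[f>\tau]^{1/q}d\tau$ is not the $L^q$ quasinorm and the naive layer-cake computation does not close. The paper handles this in Theorem 6.14 (Appendix D) by first replacing $f,g$ with ``nice'' approximants — via an iterative truncation controlled by the $(1\to 1+\varepsilon)$-norm followed by dyadic discretization — and only then summing the boolean inequality over the finitely many level sets. Your write-up should flag that this step requires real work.
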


The core of \pref{thm:real-intro-rhc-real} is really a version of the result for the down-up walk which simply states any two subsets $A,B \subset \X[k]$ of non-trivial size must have correspondingly many edges between them.
\begin{lemma}[Reverse Hypercontractivity of the Down-Up Walk (Informal \pref{thm:indicator-reverse-hc})]\label{lem:intro-indicator-rhc}
    Fix $\gamma \in (0,1)$ and let $X$ be a $k$-\maximal nice complex for $k$ sufficiently large. There exist constants $c,q$ (depending only on $\gamma$) such that for any $A,B \subset \X[k]$ of measure at least $\exp(-c k)$:
    \[
    \Pr_{s,s' \sim U_{\gamma k, k}D_{k,\gamma k}}[s \in A, s' \in B] \geq \Pr[A]^q\Pr[B]^q.
    \]
\end{lemma}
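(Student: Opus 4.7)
The plan is to decompose the walk over its shared $\gamma k$-face and reduce to a one-sided ``level-set'' estimate on the conditional densities, which will be extracted from \pref{thm:intro-hdx-is-sampler} applied locally. Writing $a(t) := \Pr_{s \supset t}[s \in A]$ and $b(t)$ analogously for $t \in \X[\gamma k]$, the definition of $U_{\gamma k, k} D_{k, \gamma k}$ gives
\[
\Pr_{s, s' \sim U_{\gamma k, k} D_{k, \gamma k}}[s \in A, s' \in B] = \Ex[t \in {\X[\gamma k]}]{a(t) b(t)}.
\]
Setting $\mathcal{G} := \{a(t) \geq a^{q_0}\} \cap \{b(t) \geq b^{q_0}\}$ for a constant $q_0 = q_0(\gamma)$ to be chosen, it suffices to show $\Pr[\mathcal{G}] \geq \tfrac{1}{2}$: on $\mathcal{G}$ the integrand is at least $a^{q_0} b^{q_0}$, and the conclusion follows with $q$ slightly larger than $q_0$.

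The technical heart is the one-sided tail
\[
\Prob[t \in {\X[\gamma k]}]{a(t) \leq a^{q_0}} \leq a^{\eta}
\]
for constants $q_0, \eta > 0$ depending only on $\gamma$ (symmetrically for $b$). My plan is to build $t$ one vertex at a time via a chain $\emptyset = t_0 \subset t_1 \subset \ldots \subset t_{\gamma k} = t$ and track the Doob martingale $\alpha_j(t_j) := \Pr_{s \supset t_j,\, s \in \X[k]}[s \in A]$, so that $\alpha_0 = a$ and $\alpha_{\gamma k} = a(t)$. Each link $X_{t_j}$ is itself a nice HDX of \maxsizity $k-j$, so \pref{thm:intro-hdx-is-sampler} makes $(\X[k-j][t_j], \X[1][t_j])$ a good sampler; the standard flipping lemma then converts this into a multiplicative bound of the form
\[
\Prob[v \in {\X[1][t_j]}]{\alpha_{j+1}(t_j \cup \{v\}) \leq \tfrac{1}{2}\alpha_j(t_j)} \leq \exp\Big(-\Omega\big(\alpha_j^2(k-j)\big)\Big)\big/\alpha_j.
\]
Reaching $a(t) \leq a^{q_0}$ requires $O_\gamma(\log(1/a))$ multiplicative halvings along the chain; bounding the probability of so many halvings via a union bound on which steps they occur at should yield the desired polynomial-in-$a$ tail.

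Combining the two pieces, $\Pr[\mathcal{G}] \geq 1 - a^{\eta} - b^{\eta} \geq \tfrac{1}{2}$ whenever $a,b \geq \exp(-ck)$ for $c$ chosen small enough, and hence $\Ex[t \in {\X[\gamma k]}]{a(t)b(t)} \geq \tfrac{1}{2} a^{q_0} b^{q_0} \geq a^{q} b^{q}$ for any $q > q_0$. The main obstacle is the regime where the martingale $\alpha_j$ has already drifted small: once $\alpha_j \ll 1/\sqrt{k-j}$, the flipped multiplicative sampling bound becomes trivial and one cannot simply union-bound per step. I expect the cleanest resolution to be a two-regime analysis in which ``large-$\alpha$'' steps are controlled by local sampling as above, while ``small-$\alpha$'' steps are handled by a separate amortization exploiting the nonnegative martingale structure (the total number of such steps is capped by the doubling budget). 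This is exactly where the hypothesis $a \geq \exp(-ck)$ enters: it caps the number of multiplicative halvings the chain must absorb at $O_\gamma(k)$, matching the $O(k)$ chain length and allowing per-step constant-probability failures to aggregate to the required polynomial-in-$a$ bound.
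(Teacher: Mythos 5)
Your high-level decomposition — writing the correlation as $\Ex[t]{a(t)b(t)}$ and trying to lower-bound the measure of $t$ on which both conditional densities stay polynomially large, using iterated local sampling inside links — is exactly the paper's starting point. But there are two genuine gaps, and they are tied together.

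First, the target $\Pr[\mathcal{G}] \geq \tfrac12$ is not what the paper proves and is stronger than what the sampling machinery can support. The paper (Lemma 5.6 and the definition of ``lovely'') aims only for $\Pr[G^X(\gamma_0 k, \Pr[A]^{O(1)})] \geq \Pr[A]^{O(1)}$ — polynomially small, not constant. That weaker bound still suffices: the conclusion needs $\Ex[t]{a(t)b(t)} \geq a^q b^q$, and $a^{O(1)}\cdot a^{O(1)}\cdot b^{O(1)}$ is of this form. Aiming for $1/2$ forces a per-step failure probability $\lesssim 1/k$ over your $\gamma k$-step chain, and the flipped multiplicative sampler simply does not deliver that for single-vertex steps. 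Concretely, by \pref{claim:opposite-sampler}, flipping an $(\alpha_0,\beta_0,\delta)$-multiplicative sampler gives failure probability $\approx 2\alpha_0$ and density threshold $\approx \beta_0/\alpha_0$; forcing $2\alpha_0 \lesssim 1/k$ while keeping the threshold below $\alpha_j$ imposes $\alpha_j \gtrsim \exp(-\Omega(\alpha_0^2(k-j)))/\alpha_0 \approx k\exp(-\Omega((k-j)/k^2))$, which is near $1$ for almost all $j$. So your claimed per-step bound $\exp(-\Omega(\alpha_j^2(k-j)))/\alpha_j$ is not what the flipped sampler yields in the single-vertex regime, and as you yourself note it becomes vacuous once $\alpha_j \ll 1/\sqrt{k-j}$, which happens almost immediately when $a = \exp(-\Theta(k))$.

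Second, the proposed fix does not close the gap. Even granting a constant per-step failure probability once $\alpha_j$ is small, you have $\Omega(k)$ remaining single-vertex steps, each halving with constant probability, so you expect $\Theta(k)$ halvings — which is the entire ``doubling budget.'' The budget argument is therefore a tie at best, with no slack. The paper avoids this altogether by a different step size: it does not add one vertex at a time, but rather adds blocks of $\ell = \Theta_\gamma\bigl(k/\log(1/a)\bigr)$ vertices, so that the total number of steps is $m = O(\log(1/a))$ rather than $\gamma k$. Two things happen simultaneously: (i) at each step the multiplicative sampler $(\X[\ell][t_j],\X[k-|t_j|][t_j])$ has density threshold $\exp(-\Omega(k/\ell)) = a^{\Omega(1)}$, which is chosen to stay below the shrinking conditional densities, so the sampler genuinely applies at every step; and (ii) because there are only $O(\log(1/a))$ steps, it is enough to bound per-step success by a constant $0.8$ and the per-step shrinkage by a constant factor $0.8$, giving $\Pr[G] \geq 0.8^m = a^{O(1)}$ and conditional density $\geq 0.8^m\Pr[A] = a^{O(1)}$ on $G$. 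The adaptive choice of block size $\ell$ as a function of $\Pr[A]$, together with dropping the ambition of a constant lower bound on $\Pr[\mathcal G]$, is the key idea missing from your proposal.
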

Note that the assumption on the set size in \pref{lem:intro-indicator-rhc} is essentially tight. Even if one walks from \(s\) to \(s'\) while going down to a single vertex it is possible to find sets $A$ and $B$ of size roughly $\exp(-k)$ which are totally disconnected, even on the complete complex.

\paragraph{Proof Overview} The proof of \pref{lem:intro-indicator-rhc} relies on the following simple observation: since the down-up walk samples an edge by first sampling \(t \in \X[\gamma k]\) and then \textit{independently} sampling \(s,s' \in \X[k]\) containing \(t\), we can express the $\gamma$-correlated mass of $A$ and $B$ as:
    \[
    \Prob[s,s'\sim U_{\gamma k,k} D_{k, \gamma k}]{s \in A, s' \in B} = \Ex[t \in {\X[\gamma k]}]{\cProb{s}{A}{s \supseteq t} \cProb{s}{B}{s \supseteq t}}.
    \]
Now the connection with inclusion sampling becomes clear---if even a non-negligible fraction of $t\in \X[\gamma k]$ see a reasonable portion of $A$ and $B$ simultaneously, the righthand side should be large!

Naturally our first thought might be to appeal to \pref{thm:intro-hdx-is-sampler} which bounds this type of behavior, but when $i=\gamma k$ the resulting bound is far too weak to be useful. In particular, after swapping from additive to multiplicative sampling and `flipping' the graph, \pref{thm:intro-hdx-is-sampler} roughly states that $(\X[i],\X[k])$ is an $(\alpha,\beta,\delta)$-sampler for $\alpha \approx \exp(-\delta^2 \beta \frac{k}{i})$. When $i=\Theta(k)$, this only gives a guarantee for sets of constant size.

The key is to realize that because $X$ expands \textit{locally}, we don't have to sample $t \in \X[\gamma k]$ all at once. Instead, we'll sample $t$ in `steps' as $t=t_1 \cup \ldots \cup t_m$ where each $t_j$ is sampled conditionally (i.e.\ from the link of) its predecessors, and apply \pref{thm:intro-hdx-is-sampler} to each $t_j$ individually. Toward this end, assume $\Pr[A] \leq \Pr[B]$ and fix the `step size' $\ell=|t_j|$ roughly such that $\exp(-c'\frac{k}{\ell}) = \Pr[A]$ for some sufficiently small $c' >0$.

Drawing our first component $t_1$, observe for small enough $c'$ we have set our parameters such that \pref{thm:intro-hdx-is-sampler} promises $(\X[\ell],\X[k])$ is an $(\alpha,\frac{1}{4},\frac{1}{2})$-sampler for $\alpha \leq \min\{\Pr[A],\Pr[B]\}$, so with probability at least $\frac{1}{2}$ the conditional measures of $A$ and $B$ in $t_1$ are at least $\frac{1}{2}\Pr[A]$ and $\frac{1}{2}\Pr[B]$ respectively. Moreover, moving into the link of $t_1$ we can apply exactly the same process to $t_2$ (and so forth), so by the $i$-th step one can inductively show that with probability at least $2^{-i}$, the conditional mass of $A$ and $B$ are at least  $2^{-i}\Pr[A]$ and $2^{-i}\Pr[B]$ respectively. On the other hand, we only take $m=\frac{\gamma k}{\ell} = O(\log(\Pr[A]^{-1}))$ total steps by our choice of $\ell$, so at the end of the procedure we have that with probability at least $2^{-m}=\Pr[A]^{O(1)}$, the density of $A$ and $B$ in $t$ is at least $\Pr[A]^{O(1)}$ as well. Putting everything together, we have
    \begin{align*}
    \Prob[s,s'\sim U_{\gamma k,k} D_{k, \gamma k}]{s \in A, s' \in B} &= \Ex[t \in {\X[\gamma k]}]{\cProb{s}{A}{s \supseteq t} \cProb{s}{B}{s \supseteq t}}\\
    &\geq 2^{-m}\cdot (2^{-m}\Pr[A]) \cdot (2^{-m}\Pr[B])\\
    &\geq \Pr[A]^{O(1)}\Pr[B]
    \end{align*}
as desired.

Once one has \pref{lem:intro-indicator-rhc}, \pref{thm:real-intro-rhc-real} follows by passing to discrete approximations of $f$ and $g$ that carefully balance the quality of approximation (namely regarding the mass of $f$ and $g$ and pointwise closeness) and coarseness (number of discrete values in the approximation). Done correctly, one can then divide the functions into boolean level-sets and apply \pref{lem:intro-indicator-rhc} to achieve full reverse hypercontractivity at the cost of a slight decrease in the corresponding norm and an additional constant factor. 

\subsection{Optimality and the Trickling-Down Threshold}\label{sec:intro-optimality}

Oppenheim's Trickling Down Theorem is among the most fundamental results in the theory of spectral high dimensional expansion \cite{oppenheim2018local}. It states that any connected complex whose top links have expansion \textit{strictly} better than $\frac{1}{d-1}$ exhibits `local-to-global' behavior: one can immediately give quantitative bounds on the spectral expansion of \textit{all} links of $X$ (including its `global' $1$-skeleton), as well as on the down-up walks \cite{alev2020improved}. Our results are in some sense a strengthening of the local-to global behavior of Oppenheim: essentially as soon as one passes this threshold, not only can one infer `Chernoff'-type concentration on skeletons of $X$, $X$ itself actually satisfies exponential concentration for any Lipschitz function. This is in stark contrast to complexes \textit{at} the TD-Barrier, which we observe may have no concentration properties whatsoever:
\begin{theorem}[Lower Bounds at the TD-Barrier (\pref{prop:TD-barrier})]\label{thm:intro-TD-barrier}
    For every $\beta<1$ and $k \in \mathbb{N}$, there exists a family of $k$-\maximal $1$-TD complexes $\{X_n\}$ such that $(\X[k][n],\X[1][n])$ is not a $(\frac{1}{2},\beta)$-additive sampler.
\end{theorem}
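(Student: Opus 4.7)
The plan is to exhibit, for each fixed $k$ and $\beta<1$, an explicit ``dumbbell'' family $\{X_n\}$ of $k$-\maximal $1$-TD complexes on which a bipartition indicator witnesses failure of $(\tfrac12,\beta)$-sampling. First I would take the vertex set $V = A \sqcup B$ with $|A|=|B|=n$, the idea being that essentially all of $\X[k][n]$ should lie in one of two disjoint complete $k$-complexes on $A$ and $B$, glued by a thin bridge that is \emph{just} large enough to satisfy the mild $1$-TD conditions (connectedness of every link, and $\tfrac{1}{k-1}$-one-sided expansion of every codim-$2$ link).

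The concrete construction I would try is: partition $A = A_1\sqcup\cdots\sqcup A_{k/2}$ and $B = B_1\sqcup\cdots\sqcup B_{k/2}$ into $k/2$ parts of size $m=2n/k$, and set
\[
\X[k][n] \;=\; \binom{A}{k} \;\sqcup\; \binom{B}{k} \;\sqcup\; \mathcal B,\qquad
\mathcal B \;=\; \bigl\{\{a_1,\ldots,a_{k/2},b_1,\ldots,b_{k/2}\}\,:\, a_i\in A_i,\; b_i\in B_i\bigr\},
\]
with $X_n$ the downward closure. To verify $1$-TD I would case-analyze a codim-$2$ face $\tau$: if $\tau\subseteq A$ (or $\tau\subseteq B$), the bridge-partition constraint blocks any bridge face from extending $\tau$ once $k>4$ (since $|\tau|=k-2>k/2$), so the link $1$-skeleton is the complete graph on $A\setminus\tau$; if $\tau$ is ``mixed'', the partition forces the link $1$-skeleton to be a complete bipartite graph between the unused $A_i$ and $B_j$ parts. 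Either way $\lambda_2\le 0\le\tfrac{1}{k-1}$, and connectedness of every link $1$-skeleton follows from the same bookkeeping, since bridge faces supply edges between $A$- and $B$-sides inside each link.

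For the sampling failure I would take $f=\mathbb{1}_A$. By the $A\leftrightarrow B$ symmetry the induced vertex measure satisfies $\pi(A)=\pi(B)=\tfrac12$, hence $\mu=\tfrac12$. Then $|\mathbb{E}_{v\in s}f(v)-\mu|=\tfrac12$ on every pure $A$- or $B$-face, while on every bridge face the balanced structure gives $\mathbb{E}_{v\in s}f(v)=\tfrac12$ exactly, contributing nothing. A direct count of pure versus bridge $k$-faces yields a ratio of $2(k/2)^k/k!=\Theta\bigl((e/2)^k/\sqrt k\bigr)$, which diverges in $k$. So for $k$ large enough a single dumbbell already gives $\Pr_{s\in\X[k][n]}[|\mathbb{E}_{v\in s}f(v)-\mu|\ge\tfrac12]>\beta$; for the remaining small values of $k$ I would amplify by enlarging the partition (e.g.\ using $2^r$ colour classes in place of two and an analogous partitioned bridge) until the pure-face fraction surpasses $\beta$.

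The hard part will be the codim-$2$ link analysis, because the bridge is in tension with the other two requirements: it must be dense and structured enough to keep every codim-$2$ link $\tfrac{1}{k-1}$-expanding and to render $X_n$ connected, yet sparse enough that bridge faces vanish as a fraction of $\X[k][n]$. The partitioned-bipartite bridge above is chosen precisely so that every codim-$2$ link is either complete or complete bipartite (both $\lambda_2\le 0$), but tracking every ``mixed'' $\tau$-shape and every link-connectedness check is delicate; this verification, together with the small-$k$ amplification, is where I expect the bulk of the work.
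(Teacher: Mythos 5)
Your construction is not $1$-TD: the claim that ``bridge faces supply edges between $A$- and $B$-sides inside each link'' fails for exactly the links where you need it. Fix any even $k\ge 4$ and take $\tau=\{a_1,\dots,a_{k/2}\}$ with $a_i\in A_i$, one vertex from each of the $k/2$ parts of $A$. In the link $X_\tau$, the vertex set is $(A\setminus\tau)\cup B$: any $A$-vertex completes $\tau$ inside $\binom{A}{k}$, and any $B$-vertex completes it inside a bridge. But the edge set splits into two components. An $A$--$A$ edge $\{u,v\}$ always works via $\binom{A}{k}$, and a $B$--$B$ edge from distinct $B$-parts works via the bridge; however an $A$--$B$ edge $\{u,v\}$ with $u\in A\setminus\tau$, $v\in B$ would require $\tau\cup\{u,v\}$ to be a face, which is impossible: $\tau\cup\{u\}$ has two vertices in the same $A_i$ (so it cannot sit inside any bridge face), and the presence of $v\in B$ rules out $\binom{A}{k}$. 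So the link 1-skeleton is $K_{A\setminus\tau}\sqcup K_{B_1,\dots,B_{k/2}}$, disconnected. For $k=4$ this $\tau$ has codimension $2$, so the codim-$2$ expansion requirement fails outright ($\lambda_2=1>\tfrac{1}{k-1}$); for $k\ge 6$ the codim-$2$ links are fine but the codim-$(k/2)$ link above is disconnected, violating the connectedness requirement in the definition of $\lambda$-TD. Either way $X_n$ is not $1$-TD.

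The small-$k$ situation is a separate, genuine problem: for $k=2$ your bridge is all of $A\times B$, which is a constant fraction of $\X[2]$, so the failure probability is bounded away from $1$ independently of $n$ and you cannot take $\beta$ close to $1$. The ``$2^r$ colour classes'' amplification is not worked out and will run into the same link-disconnectedness obstruction whenever the deleted face saturates all parts on one side.

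The paper sidesteps both issues by using Golowich's \emph{product complexes} $X_G$ with base graph $G$ a dumbbell (two disjoint $K_{n/2}$'s joined by a single edge $e$). Golowich's theorem is invoked as a black box to certify that $X_G$ is connected and that every codim-$2$ link is exactly a $\tfrac{1}{d}$-one-sided expander, i.e.\ a $1$-TD complex, so the delicate link bookkeeping is entirely outsourced. The bridge faces are those lying over $e$, and since the edges of $G$ are equally weighted, that bridge mass is $O(n^{-2})$ --- it shrinks with $n$, not with $k$. Taking $A$ to be the indicator of one clique side (which is balanced by symmetry), essentially every $k$-face is $\pm\tfrac12$ off the mean, and for any $\beta<1$ one just chooses $n$ large enough that the $O(n^{-2})$ exceptional mass is below $1-\beta$. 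This handles all $k\in\mathbb N$ at once. Your construction, in contrast, has the bridge mass depend on $k$ and not on $n$, which is both insufficient for small $k$ and, as shown above, incompatible with the link-connectedness requirement for $k\ge 4$.
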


\pref{thm:intro-hdx-is-sampler} also gives essentially the best possible parameters for inclusion sampling on any complex.
% in the inclusion graph \((\X[k],\X[i])\).
\begin{theorem}[Inclusion Sampling Lower Bounds (Informal \pref{thm:optimal-sampling})]\label{thm:intro-lower-bound}
    Let $i < k \in \mathbb{N}$, $\varepsilon \in (0,0.1)$, and let \(X\) be a $k$-\maximal complex, then one of the following holds:
    \begin{enumerate}
        \item $(\X[k],\X[i])$ is not an $(\varepsilon,\beta)$-sampler for $\beta \leq \exp \left(-O\left (\varepsilon^2 \frac{k}{i} \right ) \right)$.
        \item $(\X[k],\X[i])$ is not an  \((\frac{1}{10i},\frac{\varepsilon}{6i})\)-sampler.
    \end{enumerate}
\end{theorem}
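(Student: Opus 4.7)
The overall plan is to prove the contrapositive: we assume $X$ is both an $(\varepsilon,\beta)$-sampler with $\beta$ substantially smaller than $\exp(-C\varepsilon^2 k/i)$ \emph{and} a $(1/(10i), \varepsilon/(6i))$-sampler, and derive a contradiction via a second-moment inequality. The two sampling assumptions together yield an upper bound on $\Var_{s\in \X[k]}[\Ex[t \subset s]{f(t)}]$ for \emph{every} function $f \colon \X[i] \to [0,1]$; our goal is to exhibit a single explicit $f$ whose variance exceeds this bound.

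Concretely, for any $f$ the layer-cake identity gives
\[
\Var_{s\in \X[k]}\bigl[\Ex[t \subset s]{f(t)}\bigr] \;=\; \int_0^1 2y \, \Pr_s\bigl[|\Ex[t \subset s]{f(t)} - \mu| \geq y\bigr]\,dy.
\]
Splitting the integral at $y = 1/(10i)$ and $y = \varepsilon$ and plugging in the two sampling hypotheses upper bounds the variance by roughly $\frac{1}{100 i^2} + \frac{\varepsilon^3}{6i} + \beta$. For the matching lower bound I would use a balanced vertex coloring: by a greedy (or probabilistic) argument pick $V \subset \X[1]$ with $|V| \approx |\X[1]|/2$ and set $f(t) = |t \cap V|/i$. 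For regular $X$ we have $\Ex[t \subset s]{f(t)} = |s \cap V|/k$, which reduces the variance computation to the variance of a \maximal vertex indicator on $\X[k]$; standard Kruskal--Katona/entropy arguments (or a direct second-moment computation using condition (2) applied to $1_V$ itself) force this to be bounded below by a quantity proportional to $1/k$, or else a mild perturbation of $V$ produces a vertex indicator directly witnessing the failure of (2) for the inclusion graph $(\X[k],\X[1])$.

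Combining the two bounds yields the desired contradiction whenever the constant hidden in the $O(\cdot)$ of condition (1) is chosen small enough relative to the constants $1/10$ and $1/6$ in condition (2). The main obstacle is the middle regime where neither the linear function $f(t) = |t \cap V|/i$ nor the indicator of a single $k$-set alone yields a variance lower bound strong enough to beat the $\varepsilon^3/i$ term arising from the medium-deviation band. To handle this I would randomize over a family of balanced colorings and invert the order of expectation, showing that the \emph{average} variance over this family is $\Omega(1/k)$ up to losses controlled by the second-level spectral behavior of $X$ (which condition (2) itself bounds via Cheeger-type considerations). A further subtlety is degeneracy when $|\X[1]|$ is small compared to $k$; here the single-vertex indicator $f(t) = 1_{v^\star \in t}$ already forces $\Pr_s[v^\star \in s] \gtrsim k/|\X[1]|$, which for small $|\X[1]|$ directly violates condition (2). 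Handling these regimes in a unified way is the main technical bookkeeping of the proof.
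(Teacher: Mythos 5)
Your proposal has a genuine structural gap. The theorem asserts a lower bound on a \emph{tail} probability at a specific threshold $\varepsilon$, and a second-moment/variance inequality is the wrong tool to extract this: knowing the variance of $U_{i,k}f$ is large does not tell you where the deviation mass lives. In your layer-cake computation the upper bound on variance is roughly $\frac{1}{100 i^2} + \frac{\varepsilon^3}{6 i} + \beta$, and the first two terms are additive constants that do not depend on $\beta$. Any variance lower bound you could hope to establish (you correctly target $\Omega(1/k)$ for $f(t) = |t \cap V|/i$ with balanced $V$) is swamped by $\frac{1}{100 i^2}$ as soon as $k \gtrsim i^2$, and by the $\frac{\varepsilon^3}{6i}$ term as soon as $k \gtrsim i/\varepsilon^3$ --- which is exactly the regime where the theorem has content. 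You flag the medium-deviation band as the obstacle and propose averaging over random balanced colorings, but that averaging is already what produces the $\Omega(1/k)$ figure; it does not improve the scaling, and the appeal to ``Cheeger-type considerations'' extracted from condition (2) is not something the sampling hypothesis supplies.

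There is a second problem independent of the variance machinery: the specific test function $f(t) = |t \cap V|/i$ with $|V| \approx |\X[1]|/2$ cannot witness the claimed bound even in principle. Since $U_{i,k}f(s) = |s \cap V|/k$ is the lift of a \emph{vertex} function, the sampling behavior of $f$ in $(\X[k],\X[i])$ reduces to the sampling of $1_V$ in $(\X[k],\X[1])$, where the reverse Chernoff scaling is $\exp(-\Theta(\varepsilon^2 k))$ (measure $1/2$ at level $1$), not $\exp(-\Theta(\varepsilon^2 k/i))$. The paper's construction instead starts from a \emph{small} set $A \subset \X[1]$ of measure $\mu/i$ and lifts to $E = N_i(A)$, the $i$-sets hitting $A$. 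This is genuinely nonlinear at level $i$: the measure of $E$ is $\Theta(\mu)$ and a deviation of $\frac{10\varepsilon}{i}$ in $A$-sampling is \emph{amplified} to a deviation of $\varepsilon$ in $E$-sampling (here is where condition~(2), recast as the hitting property, is used). Reverse Chernoff at threshold $\frac{10\varepsilon}{i}$ against base measure $\frac{\mu}{i}$ gives $\exp(-\Theta(\frac{(10\varepsilon/i)^2}{\mu/i} k)) = \exp(-\Theta(\frac{\varepsilon^2}{\mu}\frac{k}{i}))$, the correct $k/i$ scaling. Finally, rather than any moment computation, the paper uses the Canetti--Even--Goldreich symmetrization: averaging over random sets of a fixed measure shows that any $k$-right-regular bipartite graph samples at least as poorly as the complete complex, and the tail is then read off directly. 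If you want to salvage your outline, the minimum repair is to abandon variance entirely, replace balanced $V$ by a set of measure $\Theta(1/i)$, and work with the nonlinear lift $N_i(\cdot)$; at that point you have essentially reinvented the paper's argument.
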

Classical lower bounds on samplers (e.g.\ \cite{CanettiEG1995}) are based on \textit{degree} and are far from tight in our setting where inclusion structure is the main barrier. Nevertheless, we are able to argue our bounds are optimal by reducing to $i=1$ where the degree bound is tight. The idea is to show any set $A$ which is a counter-example to sampling of $(\X[k],\X[1])$ can be `lifted' to a counter example to $(\X[k],\X[i])$ by taking $A'$ to be the family of $i$-sets that hit $A$. This works when the measure of $A'$ is roughly $i\mu(A)$, and therefore when 1) $A$ is small, and 2) $(\X[k],\X[i])$ is a sampler. We show the original technique of \cite{CanettiEG1995} can be used to analyze sets of any density, thereby giving the desired counter-example.

\subsection{Agreement Testing} \label{sec:agree-intro}
Leveraging \pref{thm:real-intro-rhc-real}, we prove several new agreement theorems in both the $99\%$ and $1\%$-regimes. In the former we consider a general setup of \cite{DinurFH2019}. Given a complex $X$ and an ensemble of functions $\mathcal{F} = \{f_s: \binom{s}{i} \to \mathbb{F}_2\}_{s \in \X[k]}$, we'd like to test whether $\mathcal{F}$ `comes from' a global function $g: \X[i] \to \mathbb{F}_2$. To be concrete, consider $i=1$. Here we are given \(\mathcal{F} = \set{f_s:s \to \mathbb{F}_2}\) and can interpret each \(f_s\) as the indicator of some subset \(r_s \subseteq s\). We want to test whether there is a `global' subset \(R \subseteq \X[1]\) that approximately determines most local subsets in the sense that \(r_s \approx R \cap s\). For \(i=2\), a similar interpretation holds but for \emph{subgraphs} of \(X\). Namely, think of \(f_s:\binom{s}{2} \to \mathbb{F}_2\) now as a subgraph of \(s\), i.e.\ corresponding to the indicator of an edge-set \(e_s \subseteq \binom{s}{2}\). Now `coming from' a global function means there is a global subgraph \(E \subseteq \X[2]\) such that for most local graphs, \(e_s \approx E \cap s\). For \(i>2\) one interprets this similarly as a question about local vs.\ global sub-complexes of \(X\). Such `subcomplex' tests have proved useful both in boolean analysis \cite{DinurFH2019} and as a tool for analyzing the more challenging $1\%$-regime \cite{dikstein2023agreement,DiksteinDL2024}.

We study the classical $V$-test in this context, which draws a pair of $k$-faces $(s,s') \sim U_{k/2,k}D_{k,k/2}$ intersecting on $k/2$ vertices and checks if $s$ and $s'$ agree on most $i$-faces in the intersection. Let $Agree^V_\eta(\mathcal{F})$ denote the probability $(s,s')$ agree on at least a $1-\eta$ fraction of their shared $i$-faces. Similarly, for any global function $g:\X[i] \to \mathbb{F}_2$, write $f_s \overset{\eta}{\approx} g|_s$ if $f_s$ agrees with $g$ on a $1-\eta$ fraction of its $i$-faces. Then:

\begin{theorem}[The Subcomplex V-Test, $99\%$-Regime (Informal \pref{thm:agreement-99-percent})]\label{thm:intro-agreement-99-real}
     Let $X$ be a $k$-\maximal nice complex. For any $\eta,\varepsilon >0$ if $Agree^V_\eta(\mathcal{F}) \geq 1-\varepsilon$:
     \[
     \exists g:\X[1] \to \mathbb{F}_2, \; \; \Pr_{s\in \X[k]}[f_s \overset{\eta'}{\approx} g|_s] \geq 1-\varepsilon'.
     \]
     where $\eta' \leq O(\eta+\varepsilon)$ and $\varepsilon' \leq \varepsilon^{O_{\eta}(1)}+e^{-\Omega_{\eta}(k)}$.
\end{theorem}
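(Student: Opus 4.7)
The approach follows the classical plurality-plus-concentration template, using reverse hypercontractivity in place of Markov to turn an average per-$s$ error bound into the claimed polynomial-in-$\varepsilon$ tail bound. For each $v \in \X[1]$ I would define $g(v) := \arg\max_b \Pr_{s \ni v}[f_s(v) = b]$, write $p_v := \Pr_{s \ni v}[f_s(v) \neq g(v)] \leq 1/2$, and set $e(s) := \Pr_{v \in s}[f_s(v) \neq g(v)]$, so that $\mathbb{E}_s[e(s)] = \mathbb{E}_v[p_v]$. Using $p_v \leq 2p_v(1-p_v) = \Pr_{s, s' \ni v \text{ indep.}}[f_s(v) \neq f_{s'}(v)]$ together with the fact that under local-spectral expansion the conditional V-walk law on $(s, s')$ given $v$ is close to the product of its marginals on the link of $v$, we obtain $\mathbb{E}_v[p_v] \lesssim \Pr_{V\text{-test}}[f_s(v) \neq f_{s'}(v)] \leq \eta + \varepsilon$. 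Setting $\eta' := C(\eta + \varepsilon)$ for a large constant $C$ and $B := \{s : e(s) > \eta'\}$, Markov gives $\Pr[B] \leq O(1/C)$.

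\textbf{Signed decomposition and disagreement on mixed pairs.} Split $e = e^+ + e^-$ by the two error types ($f_s(v) = 1, g(v) = 0$ versus $f_s(v) = 0, g(v) = 1$), and put $B^\pm := \{e^\pm > \eta'/2\}$, so $B \subseteq B^+ \cup B^-$. Let $G' := \{e^+ \leq \eta'/4\}$, which has measure at least $1/2$ by Markov for $C$ large. The key claim is that the V-test almost always fails on V-walk pairs $(s, s')$ with $s \in B^+$ and $s' \in G'$. Since $u = s \cap s'$ is a uniform $k/2$-subset of $s$ (and of $s'$) under the V-walk, Hoeffding for sampling without replacement (equivalently, \pref{thm:intro-hdx-is-sampler} applied to the complete complex on $s$) with deviation $\delta = \eta'/16$ yields $e^+(s, u) \geq \eta'/2 - \delta$ and $e^+(s', u) \leq \eta'/4 + \delta$ except with probability $\exp(-\Omega_\eta(k))$. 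On any vertex $v \in u$ witnessing the excess, $f_s(v) = 1$ while $f_{s'}(v) = 0$, so the disagreement on $u$ exceeds $\eta'/8 > \eta$ and the test fails.

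\textbf{Closing via reverse hypercontractivity.} Whenever $\Pr[B^+] \geq \exp(-\Omega_\eta(k))$, applying \pref{lem:intro-indicator-rhc} to $B^+$ and $G'$ on the V-walk ($\gamma = 1/2$) gives
\[
\Pr_{V\text{-walk}}[s \in B^+,\, s' \in G'] \geq \Pr[B^+]^q \Pr[G']^q \gtrsim_\eta \Pr[B^+]^q
\]
for some constant $q = q(\eta)$. By the previous step essentially every such pair makes the test fail, but the total failure probability is at most $\varepsilon$, forcing $\Pr[B^+] \leq \varepsilon^{\Omega_\eta(1)} + \exp(-\Omega_\eta(k))$. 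The symmetric argument, swapping $B^+ \leftrightarrow B^-$ and using $G'' := \{e^- \leq \eta'/4\}$, yields the same bound for $\Pr[B^-]$, and $\Pr[B] \leq \Pr[B^+] + \Pr[B^-]$ finishes the proof. The main obstacle I anticipate is the first step: justifying the near-independence of $(s, s')$ given $v$ in the V-test marginal requires a quantitative handle on the spectral gap of the link of $v$, which is exactly what the ``nice'' HDX hypothesis (two-sided expansion or trickling-down) provides, but the bookkeeping must ensure the resulting error terms do not exceed the $O(\eta + \varepsilon)$ budget that sets $\eta'$.
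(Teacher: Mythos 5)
Your proposal is correct and follows essentially the same strategy as the paper's proof of \pref{thm:agreement-99-percent}: define $g$ by plurality, bound $\mathbb{E}_v[p_v]$ via the spectral gap of the conditioned walk in the link of $v$ (your ``$2p_v(1-p_v)$ plus near-independence'' framing is exactly the paper's spectral-gap calculation on the indicator partition $S^{(v)}_\sigma$, with the same constants), then use reverse hypercontractivity on the small bad set versus the $\geq 1/2$-measure good set together with complete-complex concentration on the intersection to show that mixed pairs reject. The only departure is your split $e = e^+ + e^-$, which is harmless but unnecessary: the paper directly compares $|t \cap T_{s_1}|$ and $|t \cap T_{s_2}|$ and observes that if $s_2$ has many $G$-disagreements sampled into $t$ while $s_1$ has few, the excess gives vertices where $f_{s_1}(v) = G(v) \neq f_{s_2}(v)$, which already forces rejection without tracking the two error types separately.
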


In the `$1\%$'-regime, our goal is to infer global structure of $\mathcal{F}$ even when the test passes with small (but non-negligible) probability. We focus on the $i=1$ setting and aim to construct a test $\mathcal{T}$ that infers global structure whenever $Agree^{\mathcal{T}}(\mathcal{F}) \geq \exp(-\Omega(k))$, the best possible bound since a random function passes any such test with this probability. To this end, we consider a variant of \cite{ImpagliazzoKW2012}'s $Z$-test on high dimensional expanders which samples a triple $(\sigma,\sigma',\sigma'') \in \X[k]$ roughly such that $(\sigma,\sigma')$ is distributed as the V-test, and $\sigma''$ is drawn from the link of $\sigma' \setminus \sigma$. The $Z$-test passes if $(\sigma,\sigma')$ and $(\sigma',\sigma'')$ are consistent on their intersections. We prove that the Z-test is sound under the stronger assumption that the complex is ``$\lambda$-global'', meaning the swap walk $S_{\frac{k}{2},\frac{k}{2}}$ is $\lambda$-close to its stationary distribution in $\ell_\infty$-norm (see \pref{def:global}).
\begin{theorem}[The Z-Test, $1\%$-Regime (Informal \pref{thm:Z-test})]\label{thm:Z-test-intro}
    $\forall \lambda,\eta>0$ and large enough $k$, let $X$ be a \(\lambda\)-global $k$-\maximal nice complex. Then for any $\delta \in (8\lambda + e^{-\Omega(\eta k)},\frac{1}{8})$ if $ Agree_0^Z(\mathcal{F}) \geq \delta$:
    \[
        \exists g:\X[1] \to \mathbb{F}_2, \; \; \Pr_{s\in \X[k]}[f_s \overset{\eta}{\approx} g|_s] \geq \delta/8.
    \]
\end{theorem}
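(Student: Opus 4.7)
The plan is to adapt the Impagliazzo–Kabanets–Wigderson analysis of the Z-test \cite{ImpagliazzoKW2012} to nice high dimensional expanders by replacing tensorization arguments with reverse hypercontractivity (\pref{thm:real-intro-rhc-real}) and the $\lambda$-global property. The first step is to exploit the conditional independence in the Z-test distribution: conditioned on the middle face $\sigma'$ and the intersection $t = \sigma \cap \sigma'$, the endpoints $\sigma$ and $\sigma''$ are drawn independently, each as a V-test neighbor of $\sigma'$ through $t$ and through $u = \sigma' \setminus t$ respectively. Writing $P_{\sigma',t} := \Pr_\sigma[f_\sigma|_t = f_{\sigma'}|_t \mid \sigma \supseteq t]$, this yields the identity
\[
Agree_0^Z(\mathcal{F}) \;=\; \mathbb{E}_{\sigma', t}\bigl[P_{\sigma',t} \cdot P_{\sigma',u}\bigr].
\]

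I would then define the candidate global function $g: \X[1] \to \mathbb{F}_2$ by plurality, $g(v) := \argmax_{b \in \mathbb{F}_2} \Pr_{s \ni v}[f_s(v) = b]$, and the good set $A_g := \{s : f_s \overset{\eta}{\approx} g|_s\}$, aiming to show $\Pr[A_g] \geq \delta/8$ by contradiction. Next, I would use the $\lambda$-global assumption to replace the correlated pair $(\sigma,\sigma'')$ (which arises via a swap walk inside the link of $t$ or $u$) with an independent product, incurring at most an additive $O(\lambda)$ error. Combined with a Cauchy–Schwarz step, this reduces the problem to bounding the symmetric second moment $\mathbb{E}_{\sigma',t}[P_{\sigma',t}^2]$, which can be rewritten as the three-wise agreement probability that three i.i.d.\ $k$-faces $\sigma_1,\sigma_2,\sigma_3$ containing a random $k/2$-face $t$ all restrict to the same pattern on $t$.

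The core step is to bound this $3$-agreement probability using reverse hypercontractivity. For each restriction pattern $\pi: t \to \mathbb{F}_2$, let $B_\pi := \{s \supseteq t : f_s|_t = \pi\}$. Light buckets with mass below $e^{-\Omega(\eta k)}$ contribute at most $e^{-\Omega(\eta k)}$ in total (by a cube-of-masses bound), leaving only a controlled family of significant patterns. In the link of $t$, itself a nice complex of dimension $k/2$, \pref{lem:intro-indicator-rhc} forces any significant $B_\pi$ to be reverse-hypercontractively correlated with itself, meaning random extensions agree on large overlaps. Stitching these per-link patterns across overlapping $t$'s — using a second application of RHC on the agreement graph between links, in the spirit of the $99\%$-regime argument of \pref{thm:intro-agreement-99-real} — identifies each significant bucket with the restriction of a common global function $h$. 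By the contradiction hypothesis, each such $h$ has $\Pr[A_h] < \delta/8$; summing across the bounded family of patterns then bounds the full $3$-agreement by $\delta/8 + 8\lambda + e^{-\Omega(\eta k)} < \delta$, a contradiction.

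The hardest part will be the globalization: rigorously turning the per-link structural information (``inside the link of $t$, a large fraction of the $f_\sigma$'s agree with a fixed pattern $\pi_t$'') into a \emph{single} global function $h: \X[1] \to \mathbb{F}_2$. I expect this step to combine the $99\%$-regime subcomplex test (\pref{thm:intro-agreement-99-real}) applied to the ensemble of link-patterns $\{\pi_t\}$ with the $\ell_\infty$-global assumption to propagate local consistency, and the discretization of the bucket family to be delicate — potentially requiring a two-phase argument that first extracts a unique ``dominant'' pattern per link and then stitches across links using RHC on the inclusion graph.
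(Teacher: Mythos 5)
Your identity $Agree_0^Z(\mathcal{F}) = \mathbb{E}_{\sigma',t}[P_{\sigma',t}\cdot P_{\sigma',u}]$ is correct, but the Cauchy--Schwarz step that follows is where the argument breaks. Writing $\mathbb{E}[P_{\sigma',t}P_{\sigma',u}] \leq \sqrt{\mathbb{E}[P_{\sigma',t}^2]\,\mathbb{E}[P_{\sigma',u}^2]} = \mathbb{E}[P_{\sigma',t}^2]$ (by exchangeability of the two halves) reduces the problem to showing that $\mathbb{E}[P_{\sigma',t}^2] \geq \delta$ implies a global function. But $\mathbb{E}[P_{\sigma',t}^2]$ is exactly the acceptance probability of a three-query test in which $t \in \X[k/2]$ is sampled and then three $k$-faces $\sigma_1,\sigma_2,\sigma_3 \supseteq t$ are drawn independently and checked for agreement on $t$ --- i.e.\ all three queries intersect on the \emph{same} $t$. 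This test has the same structural defect as the plain V-test: it only ever checks consistency of restrictions \emph{within the link of a single $t$}, and is therefore unsound in the low-acceptance regime even on the complete complex. The known counterexamples (due to Dinur--Goldenberg and the motivation for IKW's Z-test) build ensembles that are locally consistent in each link $X_t$ (passing the same-$t$ test with noticeable probability) but globally inconsistent. No stitching argument can succeed because the $3$-test simply does not observe the inconsistency between patterns $\pi_t$ and $\pi_{t'}$ for different $t,t'$: there is no information in the test outcome to feed into the ``$99\%$-regime on the agreement graph between links'' step.

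What the Cauchy--Schwarz step discards is precisely the asymmetry that makes the Z-test work: $\sigma''$ intersects $\sigma'$ on $u = \sigma'\setminus t$, not on $t$. The paper's proof keeps this. It conditions on a good restriction $\tau = (t,\sigma|_t)$, applies the Local Agreement Theorem (\pref{thm:local-agreement}, itself built from the V-test plus reverse hypercontractivity in the spirit of your ``light bucket'' argument) to extract a candidate $g_\tau$ defined on the link $X_t$, extends $g_\tau$ arbitrarily to a global function, and then uses the \emph{third} query to check $f_{\sigma''}$ against $g_\tau$ on $s' = u$ --- i.e.\ on vertices outside $t$. That cross-check is what forces $g_\tau$ to be globally correct rather than merely locally consistent, and it is exactly the part of the test that your reduction removes. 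So while your ``light/heavy bucket'' decomposition within a link, the use of $\lambda$-globality to decorrelate $(\sigma,\sigma'')$, and the overall proof-by-contradiction skeleton all match the paper's spirit, the plan fails at the reduction to $\mathbb{E}[P^2]$: you need to retain a term that compares $f_{\sigma''}$ to information coming from $\sigma$ or $\tau$ on a face disjoint from $t$, which $\mathbb{E}[P_{\sigma',t}^2]$ does not contain.
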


We will not give a true proof overview of our testing results (though it should be mentioned that the high level strategy closely follows \cite{ImpagliazzoKW2012,DinurL2017}), but wish to highlight that the core of \pref{thm:Z-test} is really a `weak' \textit{local} agreement theorem for the V-test on nice HDX of independent interest. The statement is quite technical, so we give a very informal version here that captures it in spirit.
\begin{theorem}[The Local Agreement Theorem (Informal \pref{thm:local-agreement})]\label{thm:intro-local-agreement}
    Let $X$ be a nice complex. Then for any $\delta \geq \exp(-\Omega(k))$ if $Agree_0^V \geq \delta$, for an $\Omega(\delta)$-fraction of $t \in \X[k/2]$ there is a `smoothing' of $\mathcal{F}$ which maintains its structure and has $(1-\delta^2)$-agreement on the link $X_t$.
\end{theorem}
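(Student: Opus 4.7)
The plan is to follow a Markov-then-plurality reduction from the global V-test guarantee to a local consensus on each link of $X$, and then amplify that consensus to near-total agreement via reverse hypercontractivity on the link, in the spirit of \cite{ImpagliazzoKW2012,DinurL2017}.

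\textbf{Step 1 (Markov reduction).} I would first rewrite the acceptance probability in terms of its link-local contributions,
\[
Agree_0^V(\mathcal{F}) = \mathbb{E}_{t \in \X[k/2]}\left[ Agree_t \right], \qquad Agree_t := \Pr_{s,s' \supseteq t}\left[ f_s|_t = f_{s'}|_t \right],
\]
so that the hypothesis $Agree_0^V(\mathcal{F}) \geq \delta$ combined with Markov immediately yields an $\Omega(\delta)$-fraction of $t \in \X[k/2]$ with $Agree_t \geq \delta/2$. Call these the \emph{good} $t$'s; the smoothing will be defined link-by-link on each of them.

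\textbf{Step 2 (Plurality extraction).} For each good $t$, I would consider the distribution $p_t(g) := \Pr_{s \supseteq t}[f_s|_t = g]$ over assignments $g : t \to \mathbb{F}_2$. The identity $Agree_t = \sum_g p_t(g)^2$ together with $p_t(g) \leq 1$ yields a plurality function $g^*_t := \argmax_g p_t(g)$ satisfying $p_t(g^*_t) \geq Agree_t \geq \delta/2$, so the cluster
\[
B_t := \{\, s \supseteq t : f_s|_t = g^*_t \,\} \subseteq \X[k]_t
\]
has mass at least $\delta/2$ in the link $X_t$. This is our candidate local consensus, though at this point $B_t$ is far from dominant.

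\textbf{Step 3 (Amplification via reverse hypercontractivity).} This is the heart of the proof. Since each link $X_t$ of a nice complex is itself nice and $\mu(B_t) \geq \delta/2 \geq \exp(-\Omega(k))$, one can apply \pref{lem:intro-indicator-rhc} inside $X_t$. For every ``competing'' cluster $B_{t,g} := \{s \supseteq t : f_s|_t = g\}$ with $g \neq g^*_t$, a down-up walk in $X_t$ couples $B_t$ and $B_{t,g}$ with probability at least $\mu(B_t)^{O(1)}\mu(B_{t,g})^{O(1)}$, yet every coupled pair automatically \emph{fails} the V-test restricted to $t$. Aggregating this contradiction over all significant $g$—and carefully tracking the RHC constants together with the regime $\delta \geq \exp(-\Omega(k))$—bounds the total non-consensus mass $\mu(\X[k]_t \setminus B_t)$ by $O(\delta^2)$.

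\textbf{Step 4 (Smoothing and main obstacle).} Finally, I would define the smoothing $\mathcal{F}^{(t)}$ by replacing $f_s|_t$ with $g^*_t$ for every $s \supseteq t$ and leaving $f_s|_{s \setminus t}$ untouched. Step 3 guarantees this alters only an $O(\delta^2)$ fraction of $\mathcal{F}$ on $\X[k]_t$, so the smoothing ``maintains the structure'' of the ensemble, and any pair probed by the V-test within $X_t$ now trivially agrees on the $t$-portion of its intersection while inheriting only the $O(\delta^2)$ residual disagreement on the link vertices. The principal technical obstacle is Step 3: matching the exponent in \pref{lem:intro-indicator-rhc} to the target $\delta^2$ rather than a worse $\delta^{O(1)}$ requires pushing the quantitative RHC bound to the limit and crucially exploiting that the plurality mass is genuinely $\Omega(\delta)$—well above the RHC threshold $\exp(-\Omega(k))$—so that RHC is applied with constants that actually close the gap.
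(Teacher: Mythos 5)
Your Step 3 contains a fundamental gap, and as a result the proposal proves something quite different from what the theorem actually asserts. Reverse hypercontractivity gives a \emph{lower} bound on the edge mass between two sets, but what you need in Step 3 is an \emph{upper} bound on $\mu(X_t(k/2) \setminus B_t)$, and no such bound is available from $Agree_t \geq \delta/2$ alone. Concretely, take an ensemble where for a fixed $t$ the link splits evenly into two clusters $B_t$ and $B_{t,g}$ each of mass $1/2$: then $Agree_t = 1/2 \geq \delta/2$, yet the non-consensus mass is $1/2$, nowhere near $O(\delta^2)$. RHC tells you the two clusters must be well-connected by the down-up walk, which only confirms that the V-test rejects a lot at this $t$ — but $1 - Agree_t$ is allowed to be as large as $1-\delta/2$, so no contradiction arises. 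In short, there is no ``aggregation'' step that forces $\mu(X_t \setminus B_t) = O(\delta^2)$.

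There is also a mismatch with the theorem's content. The smoothing in the paper does not replace $f_s|_t$ by the plurality $g^*_t$; rather, for each $s$ in the link it replaces $f_s(v)$ at \emph{link vertices} $v \in s$ by a plurality vote over the noise-walk neighbors $s'$ of $s$ that lie in the consistent cluster $\mathcal{V}_\tau$ (Definition 7.17). The ``$(1-\delta^2)$-agreement on the link'' in the statement refers to the smoothed ensemble passing a V-test \emph{inside $X_t$}, where the tested intersections are sets of link vertices, not $t$ itself; so a smoothing that only rewrites $f_s$ on $t$ cannot help. The actual proof proceeds through the notion of an ``excellent'' restriction (good restrictions are excellent w.h.p.\ by a Chernoff argument over the randomness of the walk $N_t$), then uses RHC in a different place — to bound the mass of ``lonely'' $s$ whose noise-walk neighborhood rarely meets $\mathcal{V}_\tau$ (Claim 7.20) — before establishing via Lemmas 7.18/7.19 that the plurality-decoded ensemble both preserves $\mathcal{F}$ on $\mathcal{V}_\tau$ up to $O(\delta^2)$ and has exponentially good agreement on $X_t$, which is then fed into the 99\%-agreement theorem (\pref{thm:agreement-99-percent}). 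None of this machinery appears in your proposal, and the proposed shortcut does not close the gap.
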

In other words, any non-trivial agreement of the V-test \textit{must} come from the fact that $\mathcal{F}$ is locally consistent with a global function. It is possible \pref{thm:intro-local-agreement} could be propogated to a true Z-test under much weaker conditions than $\lambda$-globality, e.g.\ under the recent topological notions of \cite{GotlibK2022,dikstein2023agreement,bafna2023characterizing}. We leave this as an open question for the $1\%$-regime.

\subsection{Further Applications} \label{sec:applications-intro}
Finally, we discuss our applications of \pref{thm:intro-hdx-is-sampler} and \pref{thm:real-intro-rhc-real} beyond agreement testing.

\paragraph{Geometric Overlap} A complex $X$ has \((d,c)\)-geometric overlap if for every embedding $\X[1]$ into \(\mathbb{R}^{d-1}\), there is a point \(q \in \mathbb{R}^{d-1}\) that lies in the convex hulls of at least a \(c\)-fraction of $X$'s embedded \(d\)-faces. \cite{FoxGLNP2012} proved every $d$-\maximal bounded degree complex has geometric overlap at best $c_d$, where $c_d$ is the overlap of the complete complex. We prove sufficiently strong high dimensional expanders match this bound.
\begin{theorem}[Geometric Overlap (Informal \pref{thm:hdx-geometric-overlap})]
    For every $d \in \mathbb{N}$ and $\varepsilon>0$, there exists $\lambda>0$ such that any $\lambda$-two-sided HDX with uniform vertex weights has $(d,c_d-\varepsilon)$-geometric overlap.
\end{theorem}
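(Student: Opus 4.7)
The strategy is to locate a point $q \in \mathbb{R}^{d-1}$ via the classical First Selection Lemma applied to the embedded vertex set, and then use the expansion of $X$ to argue that the fraction of $d$-faces of $X$ whose convex hull contains $q$ is close to the fraction for the complete complex on the same vertex set. Since vertex weights are uniform, both the vertex-level measures coincide, so the constant that appears is the classical Boros--B\'ar\'any constant $c_d$. After this reduction the geometry disappears and the problem becomes purely spectral: show that $X(d)$ and the complete complex $K(d)$ assign comparable measure to the (highly structured) event ``contains $q$.''

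Concretely, given the embedding $\phi: \X[1] \to \mathbb{R}^{d-1}$, I would first invoke the First Selection Lemma to produce $q \in \mathbb{R}^{d-1}$ with $\Pr_{s \in K(d)}[q \in \mathrm{conv}(\phi(s))] \geq c_d - \varepsilon/2$. After a tiny perturbation placing $q$ in general position, pick a finite family $H_1, \ldots, H_m$ of closed half-spaces through $q$ that cover $S^{d-2}$ finely enough so that a $d$-simplex $\{\phi(v_1),\ldots,\phi(v_d)\}$ contains $q$ if and only if for every $j \in [m]$ some $\phi(v_i) - q$ lies in $H_j$, up to an event of arbitrarily small measure. Let $f_j : \X[1] \to \{0,1\}$ be the indicator of the corresponding half-space condition on a single vertex.

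Writing the event $\{q \in \mathrm{conv}(\phi(s))\}$ as the intersection over $j \in [m]$ of ``$\exists v \in s$ with $f_j(v)=1$'' and applying inclusion-exclusion expresses $\Pr[q \in \mathrm{conv}(\phi(s))]$ as a signed sum of $2^m$ terms of the form $\mathbb{E}_s \prod_{v \in s} h_T(v)$ with $h_T : \X[1] \to \{0,1\}$ a product of $1-f_j$'s. For both $\X[d]$ and $K(d)$, each such expectation is within $O_d(\lambda)$ of $\mathbb{E}[h_T]^d$: for $K$ this is elementary, and for $X$ it follows from a higher-order expander mixing lemma obtained by iterating the two-sided spectral bound $\lambda$ on the swap walks $S_{i,j}$, or equivalently by inducting on $d$ using the spectral gap of the $1$-skeleton inside each link. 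Summing the $2^m$ inclusion-exclusion terms gives $|\Pr_{\X[d]} - \Pr_{K(d)}| \leq 2^m \cdot O_d(\lambda)$, which one makes smaller than $\varepsilon/2$ by choosing $\lambda$ small enough in terms of $d$, $\varepsilon$, and the cover size $m=m(d,\varepsilon)$.

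The main obstacle I expect is the super-exponential dependence of the required $\lambda$ on $d$ coming from the $2^m$ blow-up in inclusion-exclusion, and controlling how the higher-order mixing error accumulates in the product estimate. A cleaner alternative that should yield the same conclusion is to bypass the half-space discretization via a colorful Carath\'eodory reduction: randomly $d$-color $\X[1]$, use \pref{thm:intro-hdx-is-sampler} with $i=1$ to argue that with high probability each color class still has $q$ in the convex hull of its image, apply colorful Carath\'eodory, and then count colorful $d$-faces of $\X[d]$ containing $q$ using a $d$-partite mixing estimate on the swap walks. Either route converts $(d, c_d - \varepsilon)$-geometric overlap into a statement about how well a two-sided HDX simulates the top level of the complete complex, which holds once $\lambda$ is taken sufficiently small depending on $d$ and $\varepsilon$.
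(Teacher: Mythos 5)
Your proposal takes a genuinely different route from the paper and, while the first route is plausible in spirit, the second ``cleaner alternative'' has a gap.

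The paper does not discretize via half-spaces at all. Instead it invokes the equipartition theorem of Fox--Gromov--Lafforgue--Naor--Pach (\pref{thm:discrete-structure}): for any finite point set $P\subseteq\RR^{d-1}$ and any $q$, there is an equipartition $P=P_1\dunion\cdots\dunion P_{K'}$ such that all but an $\epsilon$-fraction of $d$-tuples of parts are \emph{homogeneous} with respect to $q$ (every transversal simplex contains $q$, or none does). This is the key move: homogeneity makes the event ``contains $q$'' a \emph{deterministic} function of which parts a face touches, so one never needs inclusion--exclusion over half-spaces. The paper then reduces to showing that $\Prob[s\in\X[d]]{s \text{ touches each } P_{i_j} \text{ exactly once}}$ is close to $\prod_j\prob{P_{i_j}}$, and this follows from \pref{prop:hitting} (the $k$-wise positive-correlation lower bound for $\lambda$-products), together with an error control that costs only $K'^d$ rather than $2^{m}$ terms. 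Your half-space cover plus inclusion--exclusion is going after the same quantitative comparison between $\X[d]$ and $K(d)$, but does the geometric discretization by hand and pays a $2^m$ factor with signed cancellations that must then be absorbed into $\lambda$. That can be pushed through, but it is substantially messier, and it also requires a higher-order mixing estimate that holds uniformly over all the $2^m$ products $h_T$, whereas the paper only ever needs the single positive statement of \pref{prop:hitting} applied to indicators of the parts $P_{i_j}$. So the two proofs buy the same theorem; the paper's version is shorter because it outsources all the geometry to one black-box partitioning theorem.

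The alternative you sketch via colorful Carath\'eodory does not work as stated. Colorful Carath\'eodory gives the \emph{existence of a single} rainbow simplex containing $q$ when $q$ lies in the convex hull of every color class; it says nothing about a \emph{constant fraction} of rainbow $d$-tuples containing $q$, which is what $(d,c_d-\varepsilon)$-overlap requires. To count many simplices one needs a fractional or ``second'' selection-type statement, not the qualitative colorful Carath\'eodory theorem, and the sampling bound from \pref{thm:intro-hdx-is-sampler} does not manufacture such a count on its own. The inclusion--exclusion route is salvageable; the colorful Carath\'eodory route is missing a quantitative ingredient.
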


We remark that while geometric overlap is a classical problem in mathematics (dating back to \cite{BorosF1984}), we are not aware of any applications in computer science. Nevertheless other instances of overlap theorems have proven powerful (see e.g.\ \cite{chase2023local}), so it seems plausible such results may be of future use.

\paragraph{Double Samplers} Double samplers are a powerful variant of sampler graphs used in \cite{DinurHKLT2018} to construct good list-decodable codes and in \cite{DoronW2022} for the heavy hitters problem. Roughly speaking, double samplers are `three-wise' inclusion structures $(\X[k],\X[j],\X[1])$ such that 
\begin{enumerate}
    \item $(\X[k],\X[j])$ is a $(1/2,\beta)$-additive sampler
    \item For all $s \in \X[k]$, the restriction of $(\X[j],\X[1])$ to vertices in $s$ is a $(1/2,\beta)$-additive sampler
\end{enumerate}
The best known prior construction of double samplers \cite{DinurHKLT2018} had overhead $\frac{|\X[k]|}{|\X[1]|}=\exp(\poly(\beta^{-1}))$, leading to poor rate of their resulting list-decodable codes. One of the main open questions asked in \cite{DinurHKLT2018} was to determine the optimal overhead of a double sampler.

We nearly resolve this problem for `typical' complexes $X$ which are non-contracting (i.e.\ $|\X[j]|>|\X[i]|$ whenever $j>i$), and satisfy a weak `hitting-set' type guarantee. In particular, \pref{thm:intro-hdx-is-sampler} gives double samplers with only \textit{quasi-polynomial overhead}, while for such typical complexes \pref{thm:intro-lower-bound} implies a corresponding near-matching lower bound.
\begin{theorem}[Near-Optimal Double Samplers (Informal \pref{thm:double-samplers})]
    For every $\beta>0$, there exists an infinite family of double samplers with $\exp(O(\log^6(\beta^{-1})))$ overhead. Moreover, any `typical' double sampler has overhead at least $\exp(\Omega(\log^2(\beta^{-1})))$.
\end{theorem}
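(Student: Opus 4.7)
The proof splits into an upper bound construction via HDX and a near-matching lower bound derived from \pref{thm:intro-lower-bound}. The unifying theme is that a double sampler decomposes into two sampling tasks -- one `global' between $\X[k]$ and $\X[j]$, and one `local' within each $s \in \X[k]$ between $\binom{s}{j}$ and $s$ -- and both \pref{thm:intro-hdx-is-sampler} and \pref{thm:intro-lower-bound} pin down these tasks sharply.

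\paragraph{Upper bound} The plan is to instantiate $X$ as a bounded-degree nice HDX (for instance a $2^{-k}$-two-sided HDX from known algebraic constructions) of \maxsizity $k$, and take the three-wise inclusion structure $(\X[k], \X[j], \X[1])$ for a carefully chosen $j$. Two sampling conditions must then be verified. First, that $(\X[k],\X[j])$ is a $(1/2,\beta)$-additive sampler: by \pref{thm:intro-hdx-is-sampler} this holds provided $k/j \gtrsim \log(1/\beta)$. Second, that for every $s \in \X[k]$, the restriction of $(\X[j],\X[1])$ to the vertices in $s$ is a $(1/2,\beta)$-additive sampler: since this restriction is the complete complex on $s$, classical Chernoff-Hoeffding gives this provided $j \gtrsim \log(1/\beta)$. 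Balancing the two constraints optimally, I would set $j = \Theta(\log(1/\beta))$ and $k = \Theta(\log^2(1/\beta))$. The overhead $|\X[k]|/|\X[1]|$ is bounded by the top-face degree raised to the power $k$, and for known constructions of sufficiently strong HDX the degree is at most $\exp(\poly(k))$, yielding the claimed $\exp(O(\log^6(1/\beta)))$ bound after a routine double-counting computation.

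\paragraph{Lower bound} For any double sampler satisfying the `typicality' assumption, I would combine two ingredients. Applying classical sampler degree lower bounds (e.g.\ of \cite{CanettiEG1995}) to the complete-complex second condition immediately forces $j \geq \Omega(\log(1/\beta))$. Applying \pref{thm:intro-lower-bound} to the first condition -- where the non-contracting and hitting-set assumptions are used precisely to rule out the second alternative of that theorem (which would allow failure of a weak multiplicative sampler) -- forces $k/j \geq \Omega(\log(1/\beta))$. Combining gives $k \geq \Omega(\log^2(1/\beta))$. Finally, typicality ensures that each vertex must lie in sufficiently many $k$-faces to witness the global sampling guarantee, so the overhead is at least $\exp(\Omega(k)) = \exp(\Omega(\log^2(1/\beta)))$.

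\paragraph{Main obstacle} The hardest part is the upper bound's dependence on the degree of explicit HDX families: the $\log^6$ exponent is governed directly by the $\poly(k)$ degree scaling of the best available bounded-degree HDX, and any gap between this exponent and the $\log^2$ lower bound reflects the current lack of degree-optimal explicit constructions. A secondary subtlety is pinning down the `typicality' hypothesis tightly enough to invoke \pref{thm:intro-lower-bound} unambiguously (ruling out pathological contracting complexes) while remaining general enough to cover natural instances such as skeletons of HDX and product structures.
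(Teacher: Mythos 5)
Your proposal follows essentially the same route as the paper: the upper bound comes from instantiating the ABNNR/inclusion structure on LSV complexes with $\lambda = 2^{-k}$ at dimension $k = \Theta(\log^2(1/\beta))$ and middle level $j = \Theta(\log(1/\beta))$, using \pref{thm:intro-hdx-is-sampler} for the global $(\X[k],\X[j])$ condition and Chernoff on the complete complex for the local restriction; the lower bound combines \pref{thm:intro-lower-bound} (forcing $k/i \gtrsim \log(1/\beta)$) with the \cite{CanettiEG1995} degree bound applied to each $G|_T$ (forcing $i \gtrsim \log(1/\beta)$), then converts $k \gtrsim \log^2(1/\beta)$ into overhead $\exp(\Omega(k))$ via non-contraction. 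Two minor wobbles: (i) the overhead accounting via ``top-face degree raised to the power $k$'' is imprecise---the actual bound used is simply that LSV complexes of dimension $d$ and expansion $\lambda$ have $\deg(X) \leq \lambda^{-O(d^2)}$, which with $\lambda = 2^{-d}$ and $d = \log^2(1/\beta)$ gives $\exp(O(\log^6(1/\beta)))$ directly, and it is this hyperedge degree (not a per-vertex degree raised to a power) that controls $|\X[k]|/|\X[1]|$; (ii) the attribution of the `typicality' hypotheses is slightly off---the hitting-set and vertex-uniformity conditions feed into the sampling lower bound (through \pref{cor:sampling-for-hitters}), while non-contraction is used separately and only to establish $|\X[k]|/|\X[1]| \geq 2^{\Omega(k)}$. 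Neither affects the correctness of the plan.
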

We refer the reader to \pref{sec:double-samplers} for the exact restrictions on $X$ for which the lower bound holds.

\paragraph{Locally Testable and List-Decodable Codes} While it is likely applying our double sampler machinery to the arguments of \cite{DinurHKLT2018} would give list-decodable codes with substantially improved rate, this was already achieved subsequent to \cite{DinurHKLT2018}'s work using walks on expanders \cite{AlevJQST2020}. Instead, using a variant of our machinery we give a closely related application to distance amplification for locally testable codes with near-optimal distance-alphabet trade-off. The resulting codes are list-decodable and preserve testability up to a log factor in the alphabet.

\begin{theorem}[Large Distance List-Decodable LTCs (Informal \pref{thm:codes})]\label{thm:intro-codes}
    For all large enough $k \in \mathbb{N}$ and $\varepsilon>0$, there exists an explicit family of $\mathbb{F}_2$-linear codes with
    \begin{enumerate}
    \item Distance: $1-2^{-k}-\varepsilon$
    \item Alphabet Size: $2^k$
    \item Rate: $\varepsilon^{O(k)}$
\end{enumerate}
    Moreover, the codes are efficiently list decodable up to distance $1-2^{-\Omega(k)}$ and locally testable in $O(1)$ queries with soundness $\Omega(\frac{1}{k\log(1/\varepsilon)})$.
\end{theorem}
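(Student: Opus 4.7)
The plan is to obtain the code by composing an explicit locally-testable base code with HDX-based distance amplification. Let $C_0 \subseteq \mathbb{F}_2^n$ be an explicit $\mathbb{F}_2$-linear LTC of constant rate, constant relative distance $\delta_0$, and constant-query local testability (e.g.\ a dual-BCH-based LTC), and let $X$ be an explicit $k$-\maximal nice HDX on vertex set $[n]$ whose expansion parameter is polynomially small in $\varepsilon$. Define
\[
C := \Set{(c|_s)_{s \in \X[k]} : c \in C_0} \subseteq (\set{0,1}^k)^{|\X[k]|}.
\]
Each codeword has alphabet $\set{0,1}^k$ of size $2^k$, and for an HDX whose degree scales polynomially in $1/\varepsilon$ the rate of $C$ is at least $\varepsilon^{O(k)}$.

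For the distance, fix $c \neq c' \in C_0$ and let $A = \set{i : c_i = c'_i} \subseteq [n]$ of density $\mu \leq 1-\delta_0$. The agreement of the amplified codewords equals $\Prob[s \in {\X[k]}]{s \subseteq A}$. Applying the inclusion-sampling theorem \pref{thm:intro-hdx-is-sampler} with $f = \mathbf{1}_A$ and deviation $1-\mu$ yields $\Prob[s]{s \subseteq A} \leq \exp(-\Omega((1-\mu)^2 k)) + \varepsilon$. To sharpen this to $2^{-k}+\varepsilon$ I would pre-amplify $C_0$ so that $\mu$ lies within $O(1/k)$ of $1/2$ and then prove a tight \emph{multiplicative} sampling inequality $\Prob[s]{s \subseteq A} \leq \mu^k + \varepsilon$ on $X$. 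This follows by the swap-walk decomposition $s = t_1 \dunion t_2$ used in the proof of \pref{thm:intro-split-concentrate}: since the swap walk $S_{k/2,k/2}$ has exponentially small spectral gap $\lambda$ on nice HDX, writing $g(s) = \mathbf{1}[s \subseteq A]$ and using $g^2 = g$ gives the recursion
\[
\Prob[s \in {\X[k]}]{s \subseteq A} \;\leq\; \Prob[t \in {\X[k/2]}]{t \subseteq A}^2 + \lambda \cdot \Prob[t \in {\X[k/2]}]{t \subseteq A},
\]
which iterates $\log k$ times to the desired tensorized bound whenever $\lambda \leq \sqrt{\varepsilon}/k$.

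List-decodability up to radius $1 - 2^{-\Omega(k)}$ follows by plugging the near-optimal double sampler $(\X[k], \X[j], \X[1])$ with $j = \Theta(\sqrt{k})$ (from the double-sampler theorem discussed in \pref{sec:applications-intro}) into the \cite{DinurHKLT2018} list-decoding framework: first decode noisy $j$-level views by majority over incident $k$-faces (soundness controlled by the inclusion sampling of $(\X[k],\X[j])$), then list-decode the corrupted $C_0$ word via the base code's combinatorial decoder. The final radius $1 - 2^{-\Omega(k)}$ is inherited directly from the sharp sampling bounds given by \pref{thm:intro-hdx-is-sampler}.

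Finally, local testability with $O(1)$ queries is obtained by composing the $O(1)$-query test for $C_0$ (which touches a single $k$-face and so uses $O(1)$ queries of $C$) with an HDX-consistency test on a random pair of overlapping top faces, whose soundness follows from the $99\%$-regime agreement theorem \pref{thm:intro-agreement-99-real}. The $1/k$ loss in soundness reflects that each base-code constraint touches only a $1/k$ fraction of $k$-faces, and the additional $1/\log(1/\varepsilon)$ factor encodes the HDX expansion level required by the distance analysis. The main obstacle is establishing the sharp multiplicative-sampling bound yielding agreement exactly $2^{-k}+\varepsilon$ rather than the weaker $2^{-\Omega(k)}+\varepsilon$ given by a direct application of \pref{thm:intro-hdx-is-sampler}: proving this near-tensorized statement on a nice HDX via the inductive swap-walk argument sketched above is the delicate step on which the whole theorem hinges.
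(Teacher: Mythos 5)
Your proposal diverges from the paper's construction in several important ways, and some of these divergences are fatal.

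\textbf{The rate claim fails with HDX.} You instantiate the amplification on a nice HDX on $[n]$ and claim the rate is $\varepsilon^{O(k)}$, asserting "for an HDX whose degree scales polynomially in $1/\varepsilon$". But no such HDX exists: the paper itself (\pref{thm:hdx-degree-TD}) proves that nice complexes have degree at least $2^{\Omega(k^2)}$, regardless of $\varepsilon$. The paper's actual construction uses \emph{expander walks} (the complex $W_G$), not HDX, precisely because walks on a degree-$\poly(1/\varepsilon)$ expander give $|\X[k]|/|\X[1]| = \poly(1/\varepsilon)^{k-1} = \varepsilon^{-O(k)}$, which produces the claimed rate. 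The choice of the sparsest available splitting tree is not cosmetic; it is what makes the rate bound go through, and it is why the whole machinery of \pref{sec:codes} is developed for general splitting trees rather than just HDX.

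\textbf{The base code is the crux, and your proposal skips it.} You say you would "pre-amplify $C_0$ so that $\mu$ lies within $O(1/k)$ of $1/2$" and move on. But no off-the-shelf LTC has distance near $1/2$, and the standard amplification technique \cite{kopparty2017high} blows the query complexity up to $\poly(1/\varepsilon)$, destroying the $O(1)$-query claim, and gives no control on bias. The bias control matters because the list-decoding framework ultimately used (the direct-sum decoder of \cite{jeronimo2021near,blanc2022new} on completely splittable complexes) requires an $\varepsilon$-\emph{biased} base code, not merely one of distance near $1/2$. The paper's solution (\pref{prop:base-code}) is a two-stage amplification: first apply ABNNR over $O(\log(1/\varepsilon))$-step expander walks, then concatenate with the Hadamard code. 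The Hadamard concatenation simultaneously converts high distance into small bias (since Hadamard is perfectly balanced on nonzero inputs) and restores $O(1)$-query testability via Hadamard's local decoder and self-tester. This step is the main technical novelty of the section and your proposal has no substitute for it.

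\textbf{List-decoding and local testability take different routes.} You route list-decoding through double samplers and the \cite{DinurHKLT2018} framework; the paper explicitly declines this path (it yields worse rate) and instead reduces ABNNR decoding to direct-sum decoding over the projections $X^F$ (\pref{thm:list-decode-ABNNR}), which is where the $\varepsilon$-bias requirement enters. For local testability you invoke the $99\%$-regime subcomplex agreement theorem \pref{thm:intro-agreement-99-real}; the paper uses the much lighter 2-query vertex-level consistency test of \cite{goldenberg2019toward} composed with the base tester (\pref{thm:LTC-ABNNR}), which suffices and avoids the heavier reverse-hypercontractivity machinery. These two changes are not outright errors, but they would need substantial additional work and produce weaker parameters.

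\textbf{What you got right.} Your distance sketch via the swap-walk recursion $\Pr[s\subseteq A]\leq \Pr[t\subseteq A]^2+\lambda\Pr[t\subseteq A]$ is essentially the hitting-set lemma for splitting trees (\pref{prop:hitting-set}), and it is indeed a valid route to the $1-(1-\delta_0)^k-O(\lambda)$ distance bound. You also correctly identified that the delicate point is pushing the sampling estimate to $2^{-k}+\varepsilon$ rather than $2^{-\Omega(k)}$, which is why the paper ultimately falls back on the classical expander hitting-set bound $\rho(\frac{1+\rho}{2})^{k-1}$ rather than HDX sampling for this step. But the two gaps above, particularly the degree of HDX and the missing base-code construction, mean the proposal as written does not yield the stated theorem.
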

Prior distance amplification techniques for LTCs \cite{kopparty2017high} have exponentially worse distance-alphabet trade-off, but improved rate. To our knowledge the above are the first large distance list-decodable LTCs. We also give a candidate technique for amplifying distance while maintaining \textit{constant} soundness (removing the $k$ factor in the denominator above) using HDX, and prove the method works on the complete complex. We refer the reader to \pref{sec:codes} for further details on these notions. 

\paragraph{Degree Lower Bounds for HDX} One of the formational results in the study of expansion is the Alon-Boppana theorem \cite{nilli1991second}, roughly stating any family of bounded-degree $\lambda$-expanders have degree at least $\frac{2}{\lambda^2}$. Despite degree being a critical parameter in application (degree controls the `blow-up' induced by using an HDX as a gadget), in high dimensions we understand very little about optimal degree. The best known constructions of $\lambda$-TD complexes have degree $\lambda^{\tilde{O}(d^2)}$. Is this optimal? Could $\lambda^{o(d^2)}$ be achieved?

We take the first step toward answering this question. Combining our sampling results with degree lower bounds of \cite{CanettiEG1995}, we show super-exponential lower bounds for certain special classes of HDX, including polynomial skeletons of any hyper-regular $\lambda$-TD complex. 
\begin{theorem}[Degree Lower Bounds (Informal \pref{thm:hdx-degree-TD})]\label{thm:intro-deg}
Fix $\lambda < 1$, $d \in \mathbb{N}$, and $k \leq \sqrt{d}$. Let $X$ be the $k$-skeleton of a $d$-\maximal hyper-regular $\lambda$-TD complex. Then 
\[
\text{deg}(X) \geq 2^{\Omega_{\lambda}(k^2)}
\]
\end{theorem}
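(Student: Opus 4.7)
The plan is to combine the optimal inclusion sampling bound for skeletons of $\lambda$-TD complexes (the corollary immediately following the main sampling theorem) with the classical Canetti--Even--Goldreich (CEG) lower bound for samplers, applied locally in every link of $X$ and aggregated via hyper-regularity. Let $Y$ denote the $k$-skeleton of the given hyper-regular $\lambda$-TD complex $X$, and for each $i$ let $d_i$ be the common degree of an $i$-face of $Y$ (well-defined by hyper-regularity). By the standard link identity,
\[
\deg(X) \;=\; |Y_v(k-1)| \;=\; \frac{d_1 d_2 \cdots d_{k-1}}{(k-1)!}.
\]
It therefore suffices to establish $d_i \geq 2^{\Omega_\lambda(k-i)}$ for each $i \in \{1, \dots, k-1\}$: summing $(k-i)$ over $i$ yields $\Theta(k^2)$ in the exponent, comfortably absorbing the $(k-1)!$ denominator.

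To prove the per-level bound, I would fix an $i$-face $\sigma$ and pass to the link $Y_\sigma$. By link-preservation of the $\lambda$-TD property (Oppenheim-style trickling down applied locally) together with the hypothesis $k \leq \sqrt{d}$, the link $Y_\sigma$ is itself the $(k-i)$-skeleton of a hyper-regular $\lambda$-TD complex of dimension $d-i$. The TD-Sampling Corollary then applies inside $Y_\sigma$, yielding that $(Y_\sigma(k-i), Y_\sigma(1))$ is an $(\varepsilon, \exp(-\Omega_\lambda(\varepsilon^2(k-i))))$-additive sampler. Converting this to a multiplicative $(\alpha, \beta, \delta)$-sampler with constants $\alpha, \delta$ produces failure probability $\beta \leq \exp(-\Omega_\lambda(k-i))$, after which the sampler-flipping theorem (cited in the preliminaries) transforms this into a multiplicative sampler from $(Y_\sigma(1), Y_\sigma(k-i))$ with mean threshold $\alpha' \approx \exp(-\Omega_\lambda(k-i))$ and constant multiplicative slack and failure probability.

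Translating the flipped sampler back to additive form gives an $(\varepsilon', \beta')$-additive sampler with $\varepsilon' \approx \exp(-\Omega_\lambda(k-i))$ and constant $\beta'$. Plugging these parameters into CEG's left-degree lower bound $D \geq \Omega(\log(1/\beta')/\varepsilon'^2)$ bounds the relevant link-degree at least $2^{\Omega_\lambda(k-i)}$. Applying this at every codimension $i$ (using hyper-regularity to identify the link-degrees with products of the $d_j$'s) then provides enough linear constraints on the $d_j$ to pin down $d_i \geq 2^{\Omega_\lambda(k-i)}$ at each level. Multiplying these per-level bounds across $i = 1, \ldots, k-1$ yields $\deg(X) \geq 2^{\Omega_\lambda(k^2)}$.

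The main obstacle will be extracting clean per-$d_i$ bounds from CEG, which natively lower-bounds the left-degree of a given sampler rather than the individual link-degree parameters we need. The delicate piece is carefully tracking parameters through the additive $\to$ multiplicative $\to$ flipped $\to$ additive chain so that the exponentially small tolerance $\varepsilon'$ -- which is what drives the exponential CEG bound -- is preserved uniformly at every codimension. The fundamental insight enabled by the new sampling theorem is that the exponential-in-$(k-i)$ concentration holds at \emph{every} codimension $i$, so iterating the per-level bound through the link tower aggregates into the quadratic-in-$k$ blowup in the exponent.
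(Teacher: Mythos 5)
Your high-level skeleton matches the paper: aggregate per-level link degrees via the hyper-regularity identity $\deg(X) \geq \frac{1}{(k-1)!}\prod_i \min_{t\in\X[i]}|\SC[X][1][t]|$, then bound each factor using the concentration theory. But the way you propose to extract a per-level degree bound from the sampling theorem has a genuine gap, and the missing ingredient is precisely the paper's \pref{lem:link-sampler}.

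Applying CEG to the flipped inclusion graph $(Y_\sigma(1), Y_\sigma(k-i))$ does not yield a lower bound on $d_i = |\SC[Y][1][\sigma]|$. CEG's relation $t|R| \geq \frac{|L|(1-2\varepsilon)}{2\delta}$ bounds the \emph{left degree} $t$, which here is the number of $(k-i)$-faces of $Y_\sigma$ containing a fixed vertex; substituting $|L|=d_i$ and $|R|=|Y_\sigma(k-i)|\approx \binom{d_i}{k-i}/(k-i)$ and solving, the huge $|R|$ washes out the exponential advantage and leaves only something like $d_i \gtrsim k-i$. Multiplied over all $i$ and divided by $(k-1)!$, this cancels to $\exp(O(k))$, no better than the spectral degree bound (\pref{prop:deg-spectral}) that ignores concentration entirely. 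Your parameter chase is also incorrect: after the additive $\to$ multiplicative $\to$ flip (\pref{claim:opposite-sampler}) $\to$ additive (\pref{claim:additive-multiplicative-sampler-equivalence}) chain, the exponential smallness lands in the \emph{mean threshold} $\alpha$ of the flipped multiplicative sampler, and converting back to additive gives tolerance $\max\{\delta,(1+\delta)(\alpha+p)\}$, which is dominated by the \emph{constant} multiplicative slack $\delta$; you do not obtain an exponentially small additive tolerance $\varepsilon'$, so the CEG $\Omega(\log(1/\beta)/\varepsilon^2)$ bound does not fire.

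The paper avoids this by first proving (Lemma \pref{lem:link-sampler}) that if $(\X[d],\X[1])$ and every vertex-link inclusion graph $(\SC[X][d][v],\SC[X][1][v])$ are $(\varepsilon,\delta)$-samplers, then the \emph{1-skeleton} of $X$, viewed as a bipartite double cover with $L=R=\X[1]$, is itself an $(O(\varepsilon),O(\delta))$-sampler. Only then does CEG apply: with $|L|=|R|$, the relation $t|R| \geq \frac{|L|(1-2\varepsilon)}{2\delta}$ collapses to $t \geq \frac{1-2\varepsilon}{2\delta}$, directly lower-bounding the 1-skeleton degree (which is $d_{i+1}$ inside the link $X_t$ under hyper-regularity). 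The failure probability $\delta \leq e^{-\Omega(\sqrt{d})}$ comes from exponential concentration of Lipschitz functions on $\lambda$-TD complexes (\pref{cor:TD-exponential}), giving $d_{i+1} \geq e^{\Omega(\sqrt{d})} \geq e^{\Omega(k)}$ uniformly at every level $i \leq \sqrt{d}$, whence the product over $k$ levels is $e^{\Omega(k^2)}$. You should replace your flip-and-CEG step with a local invocation of \pref{lem:link-sampler} and \pref{cor:sample-degree-bound}.
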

\pref{thm:intro-deg} exhibits yet another threshold phenomenon at the TD-barrier, since there exist $1$-TD complexes \cite{Golowich2021} which have exponential degree at any level. While the latter are not hyper-regular, they are still reasonably balanced and our techniques extend to this regime.\footnote{Namely, we only really require that the underlying graph of every link does not have a (roughly) $\exp(-\sqrt{d})$ fraction of vertices making up $1/2$ the measure. Unfortunately, constructions such as the Ramanujan complex have extremely unbalanced links, and actually do fail this property. See \pref{sec:degree} for further discussion.}

If one could prove a true Chernoff bound for $\lambda$-TD complexes (rather than our concentration of $\exp(-\sqrt{k})$), \pref{thm:intro-deg} could be improved to showing degree $2^{\Omega(d^2)}$ at the \textit{top level} of $X$, albeit still under the assumption of hyper-regularity. While bounded-degree hyper-regular HDX do exist \cite{FriedgutI2020}, their degree is substantially worse and removing this constraint from our lower bound remains an important open problem.

\paragraph{Separating MLSI and Reverse Hypercontractivity} The Modified Log-Sobolev Inequality (MLSI) is a powerful analytic inequality used to bound the mixing time of Markov chains and previous weakest sufficient condition for reverse hypercontractivity \cite{MosselOS2013}. To the best of our knowledge, it was open until this work whether MLSI was \textit{necessary}. We resolve this question for the setting allowing a leading constant $C>1$.
\begin{corollary}[Separating RHC and MLSI (Informal \pref{cor:MLSI})]
    There exist constants $C,q,q'$ such that for infinitely many $N \in \mathbb{N}$, there exist $(C,q,q')$-reverse hypercontractive operators on $N$ vertices with vanishing MLSI constant:
    \[
    \rho_{MLSI} \leq \tilde{O}\left(\frac{1}{\log(N)}\right)
    \]
\end{corollary}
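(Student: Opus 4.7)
The plan is to apply Theorem~\ref{thm:real-intro-rhc-real} to a family of nice HDX and separately upper bound the MLSI constant of the resulting operator via an entropy-dissipation argument that exploits the sparsity of the underlying complex. Concretely, I would take $\{X_n\}$ to be a family of nice $k_n$-\maxsize complexes with bounded link degree (for instance, $\sqrt{d_n}$-skeletons of bounded-degree $\lambda$-TD complexes of growing dimension $d_n$) such that $k_n \to \infty$ and $N_n := |X_n(k_n)| \to \infty$. For any fixed $\rho \in (0,1)$, Theorem~\ref{thm:real-intro-rhc-real} then yields a $(C,q,q')$-reverse hypercontractive noise operator $T_\rho$ on $X_n(k_n)$ with the constants depending only on $\rho$, supplying infinitely many RHC operators on $N_n$ vertices with uniform constants.

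For the MLSI upper bound, I would use a ``support-versus-entropy'' obstruction. Initialize the chain at the point mass $f_x = \delta_x/\pi(x)$; since $\pi$ is nearly uniform, $\mathrm{Ent}(f_x) = \Omega(\log N_n)$. MLSI of constant $\rho_{\mathrm{MLSI}}$ would force $\mathrm{Ent}(T_\rho^t f_x) \leq e^{-\rho_{\mathrm{MLSI}} t}\log N_n$, so reducing the entropy to $\log 2$ requires $t_* \gtrsim \rho_{\mathrm{MLSI}}^{-1}\log\log N_n$. On the other hand, a single step of $T_\rho$ starting from $s \in X_n(k_n)$ reaches at most $D = \exp(O(k_n))$ states, since the kept subset has at most $2^{k_n}$ possibilities and each extension lives in a bounded-degree link. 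Hence after $t$ steps the output is supported on at most $D^t$ states, and whenever $D^t \leq N_n/2$ the KL divergence from the stationary distribution is at least $\log 2$. This gives $t_* \geq \log N_n/\log D = \Omega(\log N_n/k_n)$, and combining the two inequalities produces
\[
\rho_{\mathrm{MLSI}} \leq O\!\left(\frac{k_n \log\log N_n}{\log N_n}\right).
\]

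The main obstacle is balancing the two halves. Theorem~\ref{thm:real-intro-rhc-real} demands $k_n$ sufficiently large (and $X_n$ sufficiently expanding), while the MLSI bound is $\tilde O(1/\log N_n)$ only when $k_n$ is at most polylogarithmic in $\log N_n$. Picking $k_n$ to grow very slowly---for instance $k_n = \Theta(\log\log n)$ on a bounded-degree high-dimensional family with $N_n = \mathrm{poly}(n)$---satisfies both constraints simultaneously and yields the desired $\tilde O(1/\log N_n)$ bound. A secondary technical point is translating the continuous-time form of the MLSI entropy-decay inequality into the discrete operator $T_\rho$; this is handled by passing to the generator $I - T_\rho$ and its semigroup, or by invoking the discrete-time analogue $\mathrm{Ent}(T_\rho f) \leq (1-\rho_{\mathrm{MLSI}})\mathrm{Ent}(f)$ directly. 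Together these ingredients produce infinitely many $(C,q,q')$-reverse hypercontractive operators on $N_n$ vertices whose MLSI constant vanishes at the claimed rate.
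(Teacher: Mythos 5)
The overall strategy is the same as the paper's (\pref{cor:MLSI}): invoke \pref{thm:reverse} on a bounded-degree HDX of dimension $k = \Theta(\log\log n)$ to get reverse hypercontractivity with uniform constants, and then bound $\rho_{\text{MLSI}}$ from above by showing the chain mixes slowly because a single step barely spreads out the distribution. The paper implements this via a mixing-time lower bound plugged into \pref{fact:MLSI}; you phrase it as an entropy-decay lower bound, which is equivalent. The choice of construction (you: skeletons of bounded-degree $\lambda$-TD complexes; paper: LSV complexes) is also essentially interchangeable.

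However, your central support-growth claim is false as stated, and this is a real gap. You assert that a single step of $T_\rho$ from $s \in \X[k]$ ``reaches at most $D = \exp(O(k))$ states, since the kept subset has at most $2^{k}$ possibilities and each extension lives in a bounded-degree link.'' But one of those $2^k$ possibilities is the empty kept subset, which occurs with probability $(1-\rho)^k > 0$, and its ``link'' is the entire complex: $X_\emptyset = X$. So $T_\rho$ has strictly positive transition probability to \emph{every} state, and its support after one step is all of $\X[k]$. The argument that ``after $t$ steps the output is supported on at most $D^t$ states'' therefore collapses, and the KL-divergence lower bound that depends on it is not valid. (Separately, your degree estimate $\exp(O(k))$ is also too optimistic -- for LSV complexes with $\lambda = 2^{-k}$ it is $\exp(O(k^3))$ -- but this only affects polylog factors and would not sink the argument.)

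The paper repairs exactly this issue by splitting off the dense part: write $T_{\rho} = T_{\rho}^{+} + (1-\rho)^k\Pi$ where $\Pi$ is the stationary projection, show that $T_\rho^+$ is non-negative and genuinely bounded-degree, and then use the identity $T_\rho^t = (T_\rho^+)^t + (1 - (1-(1-\rho)^k)^t)\Pi$ together with the observation that the $\Pi$-contribution is negligible for $t$ polynomially smaller than $(1-\rho)^{-k}$. With this decomposition the support argument goes through for $(T_\rho^+)^t$ alone, and the TV (or entropy) lower bound follows. If you add this decomposition to your proposal, the rest of your argument survives intact, so the gap is fixable but must be filled before the proof is complete.
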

This separation is essentially the strongest possible for operators with constant expansion, since any such operator has LSI (and therefore MLSI) at least $\Omega(\frac{1}{\log(N)})$ \cite{diaconis1996logarithmic}.

\paragraph{A Frankl–R\"{o}dl Theorem} The Frankl–R\"{o}dl Theorem \cite{frankl1987forbidden} is a broadly applied result in extremal combinatorics bounding the independence number of the graph on $\{0,1\}^n$ connecting strings of fixed intersection size. \pref{lem:intro-indicator-rhc} implies an analogous result for the $\gamma k$-step down-up walk on HDX.
\begin{corollary}[Frankl–R\"{o}dl for HDX (Informal \pref{claim:ind-set})]
    Fix $\gamma>0$ and let $X$ be a $k$-\maximal nice complex for $k$ sufficiently large. The $\gamma k$-step down-up walk has independence number 
    \[
    \alpha \leq \exp(-\Omega_\gamma(k))\cdot|\X[k]|
    \]
\end{corollary}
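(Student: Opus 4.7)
The plan is to apply the indicator-form reverse hypercontractivity for the down-up walk (\pref{lem:intro-indicator-rhc}) directly with $A = B = I$. Write $\mu := |I|/|\X[k]|$, and let $c, q$ denote the $\gamma$-dependent constants in that lemma. If $\mu \leq \exp(-ck)$ the desired bound already holds, so assume $\mu > \exp(-ck)$. The lemma then yields
\[
\Pr_{(s,s') \sim U_{\gamma k, k} D_{k, \gamma k}}[s \in I,\; s' \in I] \;\geq\; \mu^{2q}.
\]

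To upper bound the same quantity, recall the walk samples $(s,s')$ by first drawing $t \in \X[\gamma k]$ and then $s, s'$ independently from the $k$-faces containing $t$; conditioning on $t$ gives
\[
\Pr[s,s' \in I] \;=\; \E_{t \in \X[\gamma k]}\!\bigl[\Pr[s \in I \mid s \supseteq t]^2\bigr].
\]
Since $I$ is an independent set of the walk, no two distinct elements of $I$ share a $\gamma k$-face, hence each $t$ is contained in at most one $s_t \in I$. Setting $\varepsilon := \max_{t \in \X[\gamma k],\, u \supseteq t} \pi_k(u \mid t)$, this forces $\Pr[s \in I \mid s \supseteq t] \leq \varepsilon$, and combining with the identity $\E_t \Pr[s \in I \mid s \supseteq t] = \mu$ yields $\Pr[s,s' \in I] \leq \varepsilon \mu$. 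Comparing with the RHC lower bound gives $\mu \leq \varepsilon^{1/(2q-1)}$.

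The key quantitative step, which I expect to be the main obstacle, is establishing $\varepsilon \leq \exp(-\Omega_\gamma(k))$, i.e., that the conditional distribution on $k$-faces above any $\gamma k$-face is suitably spread out. For a nice complex this should follow from the fact that every link $X_t$ is itself a strong spectral expander and therefore must contain $\exp(\Omega(k))$ vertices with near-balanced weights; chaining this degree lower bound along a sequence of intermediate links from level $\gamma k$ up to level $k$ gives the required control on $\varepsilon$. In the sparse regime (e.g.\ $\sqrt{d}$-skeletons of TD complexes) this chaining has to be performed with some care, and if it only yields a polynomial $\varepsilon$, one can still recover the exponential independence guarantee either by descending one further level (to the $(\gamma k + 1)$-step down-up walk) or by iterating the RHC step so as to amplify the exponent $1/(2q-1)$.
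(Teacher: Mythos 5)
Your proposal takes a genuinely different route from the paper, and the difference introduces a gap the paper's argument deliberately avoids.

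\textbf{What the paper does.} The paper's \pref{claim:ind-set} is a one-line argument: if $\Pr[I]$ is large, greedily partition $I$ into two \emph{disjoint} parts $I_1, I_2$ each of measure at least $p_0 = \exp(-ck)$ (possible as long as the maximum single-vertex weight $q_0$ is small), and then apply \pref{thm:indicator-reverse-hc} to $A = I_1$, $B = I_2$ to conclude that there is positive edge mass between $I_1$ and $I_2$ — a direct contradiction to independence. Because $I_1$ and $I_2$ are disjoint, the diagonal $s = s'$ never arises and no upper bound on $\Pr[s,s' \in I]$ is needed; the entire argument is a lower bound versus zero.

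\textbf{What you do.} You set $A = B = I$, so the RHC lower bound $\Pr[s,s' \in I] \geq \mu^{2q}$ necessarily charges the diagonal, and independence forces all of that mass onto $\{s = s'\}$. You then need a matching upper bound on the diagonal mass, which you express via $\varepsilon := \max_{t, u \supseteq t}\pi_k(u \mid t)$ and defer. This is where the argument breaks down. The quantity $\varepsilon$ is not the maximum vertex weight $q_0 = \max_u \pi_k(u)$ that the paper uses: it is a \emph{conditional} maximum, requiring that every $\gamma k$-link sees its $k$-faces with uniformly small weight. In a sparse, bounded-degree HDX there is no a priori reason a single $k$-face can't carry a large fraction of the conditional measure above some $t$ — the paper itself stresses (\pref{sec:degree}) that classical HDX constructions such as the Ramanujan complexes have "extremely unbalanced links." The degree lower bound you want to invoke (\pref{thm:hdx-degree-TD}) is proved only under a hyper-regularity assumption, exactly to rule out this imbalance; it cannot be assumed for a generic nice complex. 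Your fallbacks also don't help: descending to a $(\gamma k + 1)$-step walk or iterating the RHC step changes the constant $q$ but not the character of the required bound $\varepsilon \leq \exp(-\Omega(k))$, so if $\varepsilon$ is only polynomially small the exponential conclusion is lost.

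In short, the correct fix is to replace $A = B = I$ by a disjoint partition of $I$ as the paper does; then the diagonal never enters, the only quantitative input needed is $q_0 \leq \exp(-\Omega(k))$ (max face weight, not max conditional weight), and the delicate $\varepsilon$ estimate disappears.
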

In comparison to Frankl–R\"{o}dl, the above bounds the size of sets avoiding intersections of size \textit{at most} instead of \textit{exactly} $(1-\gamma)k$. A similar issue appears when using reverse hypercontractivity toward this end in the cube and is handled by a separate Fourier analytic argument in \cite{Benabbas2012}. It would be interesting to extend our results to the exact case to give a true analog of Frankl–R\"{o}dl.

\paragraph{`It Ain't Over Till It's Over'} Reverse hypercontractivity was used to resolve Friedgut and Kalai's `It Ain't Over Till It's Over' Conjecture \cite{MosselOO2005}, which bounds the tail of random restrictions in the hypercube. The core of their proof was an analog for the closely related noise operator previously used in analysis of non-interactive correlation distillation protocols \cite{MosselOROSS2006}. We generalize the latter theorem to HDX:
\begin{theorem}[Noise Operator Tail Bounds (Informal \pref{thm:tail-bounds})]
    Fix $\rho \in (0,1)$ and $k \in \mathbb{N}$ sufficiently large. For any $k$-\maximal nice complex $X$ and $f:\X[k] \to [0,1]$ of density $\mu$:
    \[
    \Pr\left[T_\rho f \notin [\delta,1-\delta]\right] \leq \delta^{O_{\mu,\rho}(1)}
    \]
\end{theorem}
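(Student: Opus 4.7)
The plan is to follow the Boolean-cube template of \cite{MosselOROSS2006}: bound the two one-sided events
\[
A_\delta \defeq \set{s \in \X[k] : T_\rho f(s) < \delta}, \qquad B_\delta \defeq \set{s \in \X[k] : T_\rho f(s) > 1-\delta}
\]
separately using the reverse hypercontractivity of $T_\rho$ on nice HDX established in \pref{thm:real-intro-rhc-real}. Since $T_\rho$ is an averaging Markov operator, $T_\rho f \in [0,1]$ pointwise, so there is no boundary issue. Moreover the upper-tail bound on $B_\delta$ reduces to the lower-tail bound on $A_\delta$ by passing from $f$ to $1-f$, which has mean $1-\mu$ and satisfies $T_\rho(1-f) = 1 - T_\rho f$; hence I focus on $A_\delta$ (assuming $\mu \in (0,1)$, as the edge cases are trivial).

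The key identity comes directly from the definition of $A_\delta$:
\[
\delta \Pr[A_\delta] \;\geq\; \Ex[s]{\mathbf{1}_{A_\delta}(s) \cdot T_\rho f(s)} \;=\; \iprod{\mathbf{1}_{A_\delta}, T_\rho f}.
\]
Reverse hypercontractivity (\pref{thm:real-intro-rhc-real}) then produces constants $C = C(\rho) > 0$ and $q = q(\rho) \in (0,1)$ such that
\[
\iprod{\mathbf{1}_{A_\delta}, T_\rho f} \;\geq\; C \, \norm{\mathbf{1}_{A_\delta}}_q \, \norm{f}_q \;=\; C \, \Pr[A_\delta]^{1/q} \, \norm{f}_q.
\]
Since $0 \leq f \leq 1$ and $q \leq 1$, pointwise $f^q \geq f$, giving $\norm{f}_q^q = \E[f^q] \geq \mu$ and hence $\norm{f}_q \geq \mu^{1/q}$. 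Combining the two inequalities and using $1/q - 1 > 0$, a short rearrangement yields
\[
\Pr[A_\delta] \;\leq\; \left(\frac{\delta}{C \, \mu^{1/q}}\right)^{q/(1-q)} \;=\; O_{\mu,\rho}\!\left(\delta^{\,q/(1-q)}\right).
\]
Applying the same argument to $1-f$ gives the analogous bound on $\Pr[B_\delta]$, and summing the two yields the theorem: the exponent $q/(1-q)$ depends only on $\rho$, and the hidden constant depends only on $\mu, \rho$.

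There is no real obstacle here: the proof is an immediate bookkeeping consequence of \pref{thm:real-intro-rhc-real} once one pairs $T_\rho f$ against the indicator $\mathbf{1}_{A_\delta}$. The only micro-steps are the elementary monotonicity $\norm{f}_q \geq \mu^{1/q}$ in the $q<1$ regime, verifying that $T_\rho$ preserves $[0,1]$, and the final algebraic rearrangement---all routine. All substantive difficulty is imported from the HDX reverse hypercontractivity theorem itself.
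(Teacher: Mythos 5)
Your proof is correct, and it takes a genuinely different route from the paper. The paper proves the tail bound using only \emph{indicator}-RHC (\pref{thm:indicator-reverse-hc}): it introduces a binary proxy $h = \mathbf{1}_{f \leq (1+\mu)/2}$ for the level set of $f$, pairs $h$ against the indicator of the tail event $T = \{T_\rho f \geq 1-\delta\}$, shows that $\E[\mathbf{1}_T\, T_\rho h]$ is at most $O_\mu(\mu(T)\delta)$ via Markov, and at least a power of $\mu(T)\mu(h)$ via indicator-RHC, then solves for $\mu(T)$. You instead pair the tail indicator directly against the full function $f$ and invoke the \emph{general-function} RHC \pref{thm:real-intro-rhc-real}, replacing the Markov/cutoff bookkeeping with the elementary $\norm{f}_q \geq \mu^{1/q}$. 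Both are valid, and the exponent you obtain has the same qualitative form; the trade-off is that the paper's proof only relies on the simpler boolean RHC and tracks slightly more explicit $\mu$-dependence (their $c_{\mu,1}, c_{\mu,2}$), whereas yours is shorter because it outsources the reduction from functions to indicators to \pref{thm:generalizing}. One cosmetic inaccuracy: you write the two-function RHC inequality with a single exponent $q$ on both norms, whereas the formal statement \pref{thm:reverse} yields two exponents, $p = 1/\ell$ and $q' = \ell/(2\ell-1)$, on the respective factors. Since both lie in $(0,1)$ and $\ell_p$ norms are monotone in the exponent, your calculation goes through verbatim by setting $q = \min(p,q') = 1/\ell$ (or simply running the argument with both exponents separately), so this is only a notational slip.
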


\section{Related Work}
\paragraph{High Dimensional Expanders}
Spectral high dimensional expansion was developed over a series of works \cite{EvraK2016,KaufmanM2017,oppenheim2018local,DinurK2017} building on prior notions of high dimensional expansion in topology \cite{Garland1973,LubotzkySV2005a,LinialM2006,Gromov2010,KaufmanL2014,KaufmanKL2014}. Basic higher order random walks were introduced by Kaufman and Mass \cite{KaufmanM2017}, with the swap operators later introduced independently in \cite{DiksteinD2019,AlevJT2019}. A great deal of work has been done at the intersection of high dimensional expanders and analysis, including determining the optimal spectral gap of the down-up walks \cite{KaufmanM2017,KaufmanO2020,DiksteinDFH2018,alev2020improved,kaufman2021garland}, developing a general theory of Fourier analysis \cite{kaufman2020local,DiksteinDFH2018,gaitonde2022eigenstripping,BafnaHKL2022,GurLL2022, gotlib2023fine}, and applications thereof to agreement testing \cite{DinurK2017,DiksteinD2019,kaufman2020local,GotlibK2022}, mixing of Markov Chains \cite{AnariLOV2019,anari2021spectral,chen2021optimal,chen2021rapid, liu2021coupling, blanca2022mixing, feng2022rapid, anari2023universality} (among many others), and algorithms \cite{DoronW2022,bafna2022high,beaglehole2023sampling}.

Two works \cite{KaufmanS2020,kaufman2021scalar} in the HDX literature also study Chernoff-style bounds, albeit in very different regimes from our results. Kaufman and Sharakanski \cite{KaufmanS2020} study concentration for global functions over repeated random walks on $k$-faces, which is incomparable to our setting. Kaufman, Kyng, and Song \cite{kaufman2021scalar} prove scalar and matrix Chernoff bounds under \textit{$\ell_\infty$-independence}, a strengthening of spectral HDX that only holds for dense complexes. While in this work we are primarily interested in proving concentration for \textit{bounded-degree} complexes, it is interesting to ask whether their stronger result (Chernoff for matrix-valued functions) extends to the sparse regime.

In \cite{BafnaHKL2022, GurLL2022}, the authors prove a notion called ``global'' hypercontractivity for high dimensional expanders. While it is known that reverse hypercontractivity follows from the standard hypercontractive inequality \cite{MosselOS2013}, we are not aware of a reduction from the weaker global variant. Our work also differs substantially from \cite{BafnaHKL2022, GurLL2022} in its tools and regime of application.

\paragraph{Samplers and Chernoff-Hoeffding}

Samplers are among the most classical tools in theoretical computer science, see e.g.\ \cite{Goldreich1997,shaltiel2004recent,wigderson2009randomness} for surveys of their many constructions and applications. De-randomized Chernoff bounds are an important sub-family of sampler graphs that exhibit optimal tails in the degree of the graph. Sparse examples were constructed in \cite{ajtai1987deterministic,Gillman1998} using walks on expanders, and there is a great deal of literature toward understanding what general families of hypergraphs admit such concentration, e.g.\ under limited independence \cite{schmidt1989aspects,schmidt1990analysis,schmidt1995chernoff,pemmaraju2001equitable}, negative correlation \cite{dubhashi1996balls}, for edge colorings \cite{PanconesiS1997}, under Dobrushin uniqueness \cite{paulin2014convex}, for strongly Raleigh distributions \cite{kyng2018matrix}, and most recently under $\ell_\infty$-independence \cite{kaufman2021scalar}. Such bounds have many interesting applications beyond those discussed in this work.

Inclusion samplers were first introduced in Impagliazzo, Kabanets, and Wigderson's \cite{ImpagliazzoKW2012} work on agreement testers and PCPs with strong soundness. Prior to \pref{thm:intro-hdx-is-sampler}, there were only two known systems of inclusion samplers with optimal tails: the complete complex \cite{ImpagliazzoKW2012}, and curves \cite{moshkovitz2017low}. Both examples were used toward the construction of PCPs with strong soundness, but failed to achieve optimal parameters due to their blow-up in size. Several sparse variants of inclusion samplers were known to admit weaker `Chebyshev' type bounds, including high dimensional expanders \cite{DinurK2017} and the Grassmann \cite{ImpagliazzoKW2012}. The latter also lead to PCPs with strong soundness \cite{dinur2011derandomized}, but failed to achieve optimal parameters due to its polynomial tail. \pref{thm:intro-hdx-is-sampler} is the first to achieve the best of both worlds, though there remain substantial barriers toward its use for PCPs.

\paragraph{Reverse Hypercontractivity}

Reverse hypercontractivity was first shown by Borell \cite{Borell1982} and first utilized in theoretical computer science by \cite{MosselOROSS2006} who observed the inequality implies fine-control of mixing between sets (akin to \pref{lem:intro-indicator-rhc}). The result has since found great use in hardness of approximation \cite{MosselOO2005,feige2007understanding,sherman2009breaking,Benabbas2012,KauersOTZ2014}, social choice \cite{mossel2012quantitative1,Keller2012,mossel2012quantitative2}, and extremal combinatorics \cite{Benabbas2012, KauersOTZ2014}. In 2013, \cite{MosselOS2013} extended known bounds on reverse hypercontractivity to general product spaces and settings with bounded modified Log-Sobolev Inequalities, a surprising result given the failure of standard hypercontractivity in these settings. Dinur and Livni-Navon \cite{DinurL2017} were the first to apply these results to agreement testing, where they resolved the conjecture of \cite{ImpagliazzoKW2012} regarding exponential soundness for the Z-test and built the first `combinatorial' PCPs with optimal soundness. 

We note that MLSI bounds (and therefore RHC) are known for many `dense' settings of HDX studied in the sampling literature (see e.g.\ \cite{chen2021optimal}), with the most general prior condition being the notions of entropic independence and fractional log-concavity \cite{anari2022entropic} which are significant strengthenings of spectral HDX. These methods all necessarily rely on density and cannot capture our theory. 

\paragraph{Agreement Tests}
Agreement testing, also known as direct product testing, is a powerful tool in the construction of PCPs and locally testable codes \cite{RubinfeldS1996,AroraS1997,RazS1997,GoldreichS1997,dinur2006assignment,dinur2007pcp,DiksteinDHR2020,DinurHKR2022}. Agreement tests in the $99\%$-regime were studied by \cite{GoldreichS1997} and \cite{DinurS2014} in the complete complex, and extended to high dimensional expanders in \cite{DinurK2017,DiksteinD2019,kaufman2020local}. \cite{DinurFH2019} gave the first `subcomplex' agreement theorem in the $99\%$-regime (the setting of \pref{thm:intro-agreement-99-real}) in the complete complex. Agreement testing in the $1\%$-regime was studied in the complete complex in \cite{dinur2008locally,ImpagliazzoKW2012,DinurL2017}, with the lattermost work giving optimal bounds. \cite{ImpagliazzoKW2012} also gave a de-randomized test over the Grassmann complex, which is polynomial size, but suffers from exponentially worse soundness. A different $1\%$-regime test on the Grassmann was also used to resolve the $2$-$2$ Games Conjecture \cite{KhotMS2017,DinurKKMS2018-grassman, BarakKS19}. 

Finally, \cite{GotlibK2022,bafna2023characterizing,dikstein2023agreement,BafnaM2024,DiksteinDL2024} studied the $1\%$ (or closely related \textit{list} agreement) regime on high dimensional expanders. These works identified key \textit{topological} properties any complex must exhibit to be a good tester, and subsequently constructed complexes satisfying these notions, giving the first bounded degree $1\%$-agreement testers. The methods of \cite{dikstein2023agreement,DiksteinDL2024} in particular rely on \pref{thm:intro-hdx-is-sampler} and \pref{thm:intro-agreement-99-real}. The soundness achieved by the above works, however, is inverse \textit{logarithmic} in the dimension instead of inverse exponential (or even polynomial). Achieving inverse exponential soundness as in \pref{thm:Z-test-intro} for families of bounded degree complexes remains an important open question.

\section*{Open questions}
\begin{enumerate}
    \item While we are able to show optimal concentration for a fairly broad class of high dimensional expanders, in the weakest settings (namely under spectral independence and at the TD-Threshold), we are only able to prove exponential concentration at the top level. It is unclear whether this is a fundamental or purely technical barrier: do such complexes satisfy a true Chernoff bound? As discussed above, a resolution of this question in the positive leads to better degree lower bounds for HDX, and in particular a $2^{\Omega(d^2)}$ lower bound for the top level of hyper-regular $\lambda$-TD complexes.
    \item The best known constructions of high dimensional expanders are not hyper-regular. Is it possible to remove this constraint from our degree lower bound? While our technique holds even for `reasonably balanced' complexes, it cannot handle objects like the Ramanujan complexes have extremely unbalanced links. It seems likely this is a technical rather than inherent barrier, and we conjecture some finer notion of concentration or application thereof may be able to remove this constraint.
    \item In \pref{sec:concentration-all-nice-hdx} we prove near-optimal concentration for partite HDX, but our bounds suffer polynomial loss in $\varepsilon$ over the two-sided case. The reason boils down to understanding concentration of a simple hypergraph we call the \textit{swap complex}, made up of $\frac{k}{i}$-tuples of disjoint $i$-sets $s \in \binom{[n]}{i}$. We prove a Chernoff bound for this hypergraph when $n \gtrsim \frac{k}{\varepsilon^5}$. Does such a bound hold in the regime where $n=\Theta(k)$? This would imply fully optimal concentration for partite HDX.
    \item We show that `standard' simplicial complexes used in the construction of double samplers cannot achieve better than quasipolynomial overhead. Are there non-simplicial objects better served for this purpose?
    \item Our $1\%$-regime test only holds on dense complexes due to the assumption of $\ell_\infty$-expansion. However, the main argument only requires reverse hypercontractivity and spectral gap of the down-up walk. Can the argument be completed without the assumption of $\ell_\infty$-expansion to give new sparse agreement testers in the low acceptance regime?
\end{enumerate}

\section*{Roadmap}

In \pref{sec:preliminaries} we give more detailed and formal preliminaries on high dimensional expanders, random walks, samplers, and reverse hypercontractivity. In \pref{sec:chernoff} we prove optimal inclusion sampling for two-sided HDX, and cover all other cases in \pref{sec:concentration-all-nice-hdx}.

In \pref{sec:rhc} we prove reverse hypercontractivity. In \pref{sec:agreement} and \pref{sec:an-comb-apps} we give applications of reverse hypercontractivity to agreement testing, and analysis and combinatorics respectively. 
In \pref{sec:codes} we give basic sampling lemmas for splitting trees and construct high distance list-decodable LTCs. Finally in \pref{sec:lower-bounds} we show optimality of our inclusion sampling results both in terms of spectral requirements on $X$ and strength of sampling itself.
\section{Preliminaries and Notation}  \label{sec:preliminaries}

\subsection{Graphs and Spectral Expansion}
A weighted un-directed graph \(G = (V,E,\Pr)\) consists of a finite vertex set \(V\), a set of edges \(E \subseteq \binom{V}{2}\), and a distribution on the edges \(\Pr:E \to (0,1]\). 
For a vertex \(v\) we denote by \(\prob{v} = \frac{1}{2} \sum_{uv \in E} \prob{uv}\). 
In this paper all graphs are assumed to be weighted (the weight is implicit in notation). We also assume that there are no isolated vertices.

Let \(\ell_2(V) = \set{f:V \to \RR}\). The graph $G$ induces an inner product \(\iprod{f,g} = \Ex[v]{f(v)g(v)}\) on \(\ell_2(V)\), as well as a normalized adjacency operator \(A:\ell_2(V) \to \ell_2(V)\) defined as
\[
Af(v) := \Ex[u \sim v]{f(u)} = \sum_{uv \in E}\frac{\prob{uv}}{\sum\limits_{w\sim v}\prob{wv}} f(u).
\]
It is well known that \(A\) is diagonalizable and has eigenvalues 
\[
1=\lambda_1 \geq \lambda_2 \geq \ldots \geq \lambda_{|V|} \geq -1,
\]
where the first `trivial' eigenvalue corresponds to the space of constant functions. A graph is called a spectral expander if all non-trivial eigenvalues are small.
\begin{definition}[Spectral expander]
    A graph \(G=(V,E)\) is called a \emph{\(\lambda\)-one-sided spectral expander} if \(\lambda_2 \leq \lambda\). We say that \(G\) is a \emph{\(\lambda\)-two-sided spectral expander} if \(\lambda_2 \leq \lambda\) and \(\lambda_{|V|}\geq -\lambda\).
\end{definition}

We record the following inequality for \(\lambda\)-two-sided spectral expanders, a variant of the classical \textit{expander-mixing lemma}. For every \(f,g \in \ell_2(V)\),
\begin{equation} \label{eq:basic-expanders}
    \abs{\iprod{f,A g} - \ex{f}\ex{g}} \leq \lambda \norm{f}_2 \norm{g}_2.
\end{equation}

\subsubsection{Bipartite Graphs and Bipartite Expanders}
    A bipartite graph is a graph where the vertex set can be partitioned to two independent sets \(V = L \dunion R\), called sides. We sometimes denote such graphs by \(G=(L,R,E)\).

\paragraph{The Bipartite Adjacency Operator}
    In a bipartite graph, we view each side as a separate probability space, where for any \(v \in L\) (resp. \(R\)), \(\prob{v} = \sum_{w \sim v} \prob{wv}\). We can define the \emph{bipartite adjacency operator} as the operator  \(B: \ell_2(L) \to \ell_2(R)\) by
    \[\forall f \in \ell_2(L), v \in R, \; Bf(v) = \Ex[w \sim v]{f(u)}\]
    where the expectation is taken with respect to the probability space \(L\), conditioned on being adjacent to \(v\).
There is a similar operator \(B^*: \ell_2(R) \to \ell_2(L)\) as the bipartite operator for the opposite side. As the notation suggests, \(B^*\) is adjoint to \(B\) with respect to the inner products of \(\ell_2(L), \ell_2(R).\)

    We denote by \(\lambda(B)\) the spectral norm of $B$ when restricted to \(\ell_2^0(L) = \set{\one}^\bot\), the orthogonal complement of the constant functions (according to the inner product the measure induces on \(L\)). Namely
    \[ \lambda(B) = \sup \sett{\iprod{Bf,g}}{\norm{g},\norm{f}=1, f \bot \one_L}.\]
    \begin{definition}[Bipartite Expander]\torestate{\label{def:bipartite-expander}
    Let $G$ be a bipartite graph, let $\lambda < 1$. We say $G$ is a \emph{$\lambda$-bipartite expander}, if
    $\lambda(B) \leq \lambda$.}
    \end{definition}
It is easy to show that a bipartite graph is a \(\lambda\)-bipartite expander if and only if it is a \(\lambda\)-one-sided spectral expander. So we use these terms interchangeably on bipartite graphs.

We record the following inequality for \(\lambda\)-bipartite expanders similar to \eqref{eq:basic-expanders}. For every \(f \in \ell_2(L), g\in \ell_2(R)\),
\begin{equation} \label{eq:basic-bipartite-expanders}
    \abs{\iprod{f,A g} - \ex{f}\ex{g}} \leq \lambda \norm{f}_2 \norm{g}_2.
\end{equation}

\subsection{Reverse Hypercontractivity}
Hypercontractivity and reverse hypercontractivity are powerful analytic inequalities from boolean function analysis that bound the contraction behavior of operators between normed spaces. Recall that for \(p\ne 0\) and a function \(f\) on a probability space, we denote by \(\norm{f}_p = \ex{|f|^p}^{1/p}\) (for \(p\leq 0\) this is only defined for \(f\) that are non-zero almost everywhere). We note that \(p \mapsto \norm{f}_p\) is monotone increasing.

Let \(V\) be a finite probability space and let \(\ell_2(V) = \set{f:V \to \mathbb{R}}\). A monotone operator is an operator such that for any non-negative function \(f:V \to [0,\infty)\), \(Af\) is also non-negative (i.e.\ for any \(x \in X\), \(Af(x)\geq 0\)). For example, adjacency operators of graphs are always monotone.

Let \(A\) be a monotone operator. Let \(1\leq p<q\) and let \(C>0\). \(A\) is typically called \textit{\((p,q,C)\)-hypercontractive} if for every \(f:V \to \RR\),
\[
\norm{Af}_q \leq C \norm{f}_p.
\]
It is well know that this is equivalent to `two-function hypercontractivity', that for every \(f,g:V \to \RR\)
\[
\iprod{Af,g} \leq C\norm{f}_{p}\norm{g}_{q'}
\]
where \(q' = \frac{q}{q-1}\) is \(q\)'s Hölder conjugate \cite{ODonnell2014}. \textit{Reverse hypercontractivity} `flips' this inequality in multiple ways: both in the direction of the inequality, and in the relation of $p$ and $q$.
\begin{definition}[Reverse hypercontractivity]
    Let \(q < p < 1\) such that \(p,q\ne 0\). Let \(C > 0\). Let \(A\) be a monotone operator. We say that \(A\) is \((p,q,C)\)-reverse hypercontractive if for every \(f:V\to \RR^{\geq 0}\),
    \[\norm{Af}_q \geq C\norm{f}_p\]
    or equivalently if for every \(f,g:V \to \RR^{\geq 0}\) it holds that 
    \[\abs{\iprod{Af,g}} \geq C\norm{f}_p \norm{g}_{q'}\]
    where \(q' = \frac{q}{q-1}\) is \(q\)'s Hölder conjugate (the equivalence is similar to standard hypercontractivity).
\end{definition}

It is sometimes convenient to substitute \(f':=f^p, g'=g^{q'}\) and write the two-function reverse hypercontractivity inequality as
\[
\abs{\iprod{A f'^{1/p}, g'^{1/q'}}} \geq \ex{f'}^{1/p} \ex{g'}^{1/q'}.
\]
Probably the simplest (and most useful) interpretation of this inequality is when \(f\) and \(g\) are indicators of sets \(M,N \subseteq V\) respectively, and that \(A\) is an adjacency operator of a graph (whose vertices are the points in \(V\)). Then this inequality has a combinatorial interpretation as a type of `mixing' lemma. Namely that the probability a random edge hits \(M\) and \(N\):
\[
\Prob[uv \in E]{u\in M, v \in N} \geq \prob{M}^{1/p} \prob{N}^{1/q'}.
\]
This becomes a powerful tool in the regime where the sets are smaller than can be controlled by the expander-mixing lemma.

\subsection{Simplicial Complexes}
\paragraph{Simplicial complexes} A pure \(d\)-\maximal simplicial complex \(X\) is a finite set system (hypergraph) consisting of an arbitrary collection of sets of size \(d\) together with all their subsets (individually called `faces'). The sets of size \(i\) in \(X\) are denoted by \(\X[i]\), and in particular, the vertices of \(X\) are denoted by \(\X[1]\) (we do not distinguish between vertices and the singletons of vertices). We write $X^{\leq j}$ to denote faces of $X$ up to size $j$.

We note that this notation departs somewhat from the typical convention in the literature which denotes faces of size \(i+1\) by \(X(i)\), and what we have called a \(d\)-\maximal simplicial complex is more typically referred to as a \emph{\((d-1)\)-dimensional} simplicial complex. We have chosen to adopt the former notation since most theorems in this work require a substantial amount of arithmetic on set sizes, and the latter convention quickly becomes overly cumbersome.

\paragraph{Probability over simplicial complexes}
Let \(X\) be a simplicial complex and \(\Pr_d:\X[d]\to (0,1]\) a density function on \(\X[d]\). 
This density function induces densities on lower level faces \(\Pr_k:\X[k]\to (0,1]\) by \(\Pr_k(t) = \frac{1}{\binom{d}{k}}\sum_{s \in \X[d],s \supset t} \Pr_d(s)\). Equivalently $\Pr_k$ is the density induced by drawing a $d$-face from $s \sim \Pr_d$, and a uniformly random $k$-face $t \subset s$.

When clear from the context, we omit the level of faces and just write \(\Pr[T]\) or \(\Prob[t \in {\X[k]}]{T}\) for an event \(T \subseteq \X[k]\).

\paragraph{Links of faces}
Let \(X\) be a \(d\)-\maximal simplicial complex. Let \(k < d\) and \(s \in \X[k]\). The link of \(s\) (called a $k$-link) is a \((d-k)\)-\maximal simplicial complex defined by
\(X_s = \sett{t \setminus s}{t \in X, t \supseteq s}\). Note the link of the emptyset is $X$ itself \(X_\emptyset = X\).

Let \(s \in \X[k]\) for some \(k \leq d\). The density function \(\Pr_d\) on \(X\) induces a density function \(\Pr^s_{d-k}:\X[d-k]\to (0,1]\) on the link where 
\({\Pr}_{d-k}^s[{t}] = \frac{\prob{t \cup s}}{\prob{s} \binom{d}{k}}\).
We usually omit \(s\) in the probability, and for \(T \subseteq \SC[X][k][s]\) write \(\Prob[t \in {\SC[X][k][s]}]{T}\).

\paragraph{Connected Complexes} We call a complex connected if the graph underlying every link is connected.

\paragraph{Partite complexes}
A \(d\)-partite simplicial complex is a \(d\)-\maximal complex whose vertices can be partitioned into \(d\) disjoint sets (typically called ``parts'' or ``colors'')
\[
\X[0]=X[1] \dunion X[2] \dunion \ldots \dunion X[d] \] 
such that every \(s \in \X[d]\) has exactly one vertex from each part.

Let \(X\) be a \(d\)-partite simplicial complex. The \textit{color} of a face \(t \in \X[k]\) is \(col(t) = \sett{i \in [d]}{t \cap X[i] \ne \emptyset}\). Let \(F \subseteq [d]\). We denote by \(X[F] = \sett{s \in X}{col(s)=F}\) the faces of $X$ with color $F$. The projection of the \textit{complex} $X$ onto $F$ is the sub-simplicial complex \(X^F = \sett{s \in X}{col(s) \subseteq F}\). When $X$ is endowed with a density $\Pr_d$, $X^F$ has a naturally induced density $\Pr^F$ by sampling $\sigma \sim \Pr_d$, and outputting the projection $\sigma_F$. Quantitatively, the induced distribution can be written as
\[
\mathbb{P}^F(s) = \sum_{t \in \X[d], s\subseteq t} \Pr[t].
\]
Finally, when \(F=\set{i}\) is a singleton we just write $X[i]$, $X^i$, and $x_i$ for brevity.

Given a generic complex $X$, it will often be useful to consider $X$'s `partitification' $P=P(X)$, which simply includes every possible ordering of the faces of $X$ as tuples. Formally, endow the faces of $X$ with an arbitrary order and define
\[
\SC[P][k] \coloneqq \sett{(s,\pi) \coloneqq \left\{(s_{\pi(1)},1),(s_{\pi(2)},2),\ldots,(s_{\pi(k+1)},k)\right\}}{s \in \X[k], \pi \in S_{k}},
\]
where $S_{k}$ is the group of permutations on $k$-letters and the measure of a face $(s,\pi)$ is inherited naturally as $\frac{1}{k!} \Prob[X]{s}$. Note that $P(X)$ does not depend on the choice of ordering, which is simply a notational convenience.
\paragraph{Degree of Simplicial Complexes} Degree is a critical parameter of simplicial complexes capturing the (local) `blow-up' incurred by moving to higher \maximal faces. We define the (max) degree of a complex with respect to $i$-faces as:
\[
\deg^i(X) = \max_{v \in \X[1]}|\{s \in \X[i]: v \in s \}|
\]

We write just $\deg(X) \coloneqq \deg^d(X)$ to denote degree with respect to top level faces of $X$, and $\deg^{(i)}(v)$ for the degree of a specific vertex. An infinite family of $d$-\maximal simplicial complexes $\{X_n\}$ is called \textit{bounded-degree} if there exists a (dimension-dependent) constant $C$ such that $\deg^i(X) \leq C$ for all $i \leq d$.
\paragraph{Hitting Set}
It will occasionally be useful for us to use a common variant of sampling on simplicial complexes called \textit{hitting set}.
We call a complex $(\gamma,i)$-hitting if for any $A \subset \X[1]$:
    \[
    \Pr_{\sigma \in \X[i]}[\sigma \subset A] \leq \prob{A}^{i} + \gamma
    \]
In other words, if the probability $\sigma \in \X[i]$ \textit{hits} $\X[1] \setminus A$ is at least $1-\mu(A)^{i}-\gamma$. If a complex is $(\gamma,i)$-hitting for all sizes $i$, we just call it $\gamma$-hitting.
\subsection{Higher Order Random Walks}\label{sec:random-walks}
Simplicial complexes come equipped with several natural families of random walks generalizing the standard random walk on a graph. Toward this end, let \(X\) be a \(d\)-\maximal simplicial complex and \(\ell<k\leq d\) we define the standard `averaging' or `random walk' operators that move up and down the complex:
\paragraph{Down and up operators} The \textit{down operator} \(D_{k,\ell}:\ell_2(\X[k]) \to \ell_2(\X[\ell])\) is the bipartite operator of the containment graph of \((\X[k],\X[\ell])\), that is:
\[
D_{k,\ell}f(s) = \Ex[t \supseteq s]{f(t)}.
\]
The adjoint of $D_{k,\ell}$ is the \textit{up operator}, \(U_{\ell,k}:\ell_2(\X[\ell]) \to \ell_2(\X[k])\), given by:
\[
U_{\ell,k} g(t) = \Ex[s \subseteq t]{g(s)}.
\]
We also write $U_k\coloneqq U_{k-1,k}$ and $D_k \coloneqq D_{k,k-1}$ as shorthand throughout.

The composition of the up and down operators $U_{\ell,k}D_{k,\ell}$, called the ``down-up'' walk, is the normalized adjacency matrix of the graph whose vertices are \(\X[k]\) and whose edge distribution is defined by sampling $t \sim \Pr_\ell$, then $s,s' \sim \Pr_k$ conditioned on containing $t$. We denote the corresponding (bipartite) graph by $(\X[k],\X[\ell])$. For brevity, we sometimes just write \(UD_{k,\ell}\) instead of \(U_{\ell,k}D_{k,\ell}\).

\paragraph{Noise operator} The noise operator is one the most classical objects of study in Boolean function analysis. Given $\rho \in [0,1]$, the standard noise operator $T_{k,\rho}$ operates on the hypercube $\mathbb{F}_2^k$ by re-sampling each bit uniformly with probability $1-\rho$. 

The noise operator has a natural extension to simplicial complexes \cite{BafnaHKL2022, GurLL2022}. It is convenient to define \(T_{k,\rho}: \ell_2(\X[k]) \to \ell_2(\X[k])\) by its action on a face $\sigma \in \X[k]$ by the following process:
\begin{enumerate}
    \item Sample \(\ell \sim Bin(k,\rho)\)
    \item Sub-sample $\tau \subset \sigma$ of size $\ell$
    \item sample $\sigma' \in \X[k]$ conditioned on containing $\tau$.
\end{enumerate}
It is easy to check that when $X$ is a product space,\footnote{One can always express a product space $\otimes_i \Omega_i$ as a partite simplicial complex by defining each coordinate as a part, see \cite{BafnaHKL2022} for further details.} this exactly recovers the standard noise operator. We note it is also possible to write the noise operator as a convex combination of down-up walks:
\[
T_{k,\rho} = \sum_{\ell=0}^{k} \binom{k}{\ell}\rho^\ell (1-\rho)^{k-\ell} UD_{k,\ell}.
\]

\paragraph{Swap walks} Let \(X\) be a \(d\)-\maximal simplicial complex. Let \(i,j\) be so that \(i+j\leq d\). The swap walk \(S_{i,j} = S_{i,j}(X)\) is the bipartite adjacency operator of the graph \((\X[i],\X[j],E)\). An edge \(\set{s_i,s_j}\) is chosen in this graph by first selecting a face \(t \in \X[i+j]\), and then partitioning it to \(t=s_i \dunion s_j\) where \(s_i \in \X[i]\) and \(s_j \in \X[j]\) uniformly at random. This walk was defined and studied independently by \cite{AlevJT2019} and by \cite{DiksteinD2019}. 

\paragraph{Colored swap walks} The standard swap walks are not well behaved on partite complexes, but there is a useful analog for this setting defined in \cite{DiksteinD2019}. Let \(X\) be a \(d\)-\maximal partite simplicial complex, and \(F_1,F_2 \subseteq [d]\) be two disjoint subsets. The colored swap walk \(S_{F_1,F_2} = S_{F_1,F_2}(X)\) is the bipartite adjacency operator of the graph \((X[F_1],X[F_2],E)\). An edge \(\set{s_1,s_2}\) is chosen in this graph by first selecting a face \(t \in X[F_1 \dunion F_2]\), and then partitioning it to \(t=s_1 \dunion s_2\) according to its colors. 

A useful observation is that the spectral expansion of swap walks (and color swap walks) is monotone in the following sense.
\begin{observation} \label{obs:monotonicity-of-swap-walks}
    Let \(X\) be a $d$-uniform simplicial complex. Then for every \(i\) and \(j'<j\), \(\lambda(S_{i,j'}) \leq \lambda(S_{i,j})\). The same holds for partite complexes. For every disjoint \(F_1, F_2\) and \(F_2' \subseteq F_2\), \(\lambda(S_{F_1,F_2'}) \leq \lambda(S_{F_1,F_2})\).
\end{observation}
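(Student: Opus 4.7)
The plan is to show that the larger swap walk $S_{i,j}$ factors through the smaller one $S_{i,j'}$ via composition with a natural averaging operator of norm one, which immediately yields the spectral inequality. Specifically, I will establish the operator identity
\[
S_{i,j'} = D_{j,j'} \, S_{i,j},
\]
where $D_{j,j'}: \ell_2(\X[j]) \to \ell_2(\X[j'])$ is the down operator as defined in \pref{sec:random-walks}, and all operators are interpreted in the bipartite sense (so both sides map $\ell_2(\X[i])$ to $\ell_2(\X[j'])$).

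To verify this identity, I would compute the joint distribution on pairs $(s_i, s_{j'}) \in \X[i] \times \X[j']$ induced by each side. The distribution induced by $D_{j,j'} S_{i,j}$ is obtained by sampling an edge $\{s_i, s_j\}$ of the swap walk $S_{i,j}$ and then subsampling $s_{j'} \subseteq s_j$. Marginalizing over the ``extra'' $(j-j')$ vertices requires unfolding the binomial factors: using that $\Pr_{\X[i+j']}[s_i \cup s_{j'}]$ equals $\binom{i+j}{i+j'}^{-1}$ times the sum of $\Pr_{\X[i+j]}[s_i \cup s_{j'} \cup \text{extra}]$ over disjoint extras, a short combinatorial calculation shows the coefficients collapse to $\binom{i+j'}{i}^{-1}$, matching exactly the $S_{i,j'}$ edge distribution.

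Given the identity, the spectral bound follows from two standard observations. First, $D_{j,j'}$ has operator norm at most one on all of $\ell_2(\X[j])$: by Jensen's inequality (or Cauchy--Schwarz applied to the averaging kernel),
\[
\Norm{D_{j,j'} g}^2 = \Ex[s_{j'}]{\Paren{\Ex[s_j \supseteq s_{j'}]{g(s_j)}}^2} \leq \Ex[s_{j'}]{\Ex[s_j \supseteq s_{j'}]{g(s_j)^2}} = \Norm{g}^2.
\]
Second, $S_{i,j}$ maps $\one^\bot_{\X[i]}$ into $\one^\bot_{\X[j]}$, since its adjoint $S_{j,i}$ preserves constants. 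Combining these, for any $f \in \one^\bot_{\X[i]}$ with $\Norm{f} = 1$,
\[
\Norm{S_{i,j'} f} = \Norm{D_{j,j'} S_{i,j} f} \leq \Norm{S_{i,j} f} \leq \lambda(S_{i,j}),
\]
which gives $\lambda(S_{i,j'}) \leq \lambda(S_{i,j})$.

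The partite statement follows by the same template using a \emph{colored} down operator $D_{F_2, F_2'}: \ell_2(X[F_2]) \to \ell_2(X[F_2'])$ defined analogously by averaging $f(s)$ over $s \in X[F_2]$ containing a fixed $s' \in X[F_2']$; the identity $S_{F_1, F_2'} = D_{F_2, F_2'} \, S_{F_1, F_2}$ and the norm bound on $D_{F_2, F_2'}$ both carry over verbatim. The only real content of the argument is the combinatorial verification of the operator identity, which is straightforward once the marginal weights on $\X[i+j']$ and $\X[i+j]$ are written out; I do not anticipate any significant obstacle.
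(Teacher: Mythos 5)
Your proposal takes essentially the same approach as the paper: factor $S_{i,j'} = D_{j,j'} S_{i,j}$ and use that $D_{j,j'}$ is a $2$-norm contraction. You've simply spelled out the supporting details (the combinatorial verification of the identity, the Jensen bound, and the remark that $S_{i,j}$ preserves $\one^\bot$) that the paper leaves implicit.
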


\begin{proof}
    The proof follows from the fact that one can factor 
    \[
    S_{i,j'}(X) = D_{j,j'}S_{i,j}(X)
    \]
    and the observation that $D_{j,j'}$ contracts $2$-norms. A similar argument is true for colored swap walks in the partite case, replacing \(D_{j,j'}\) with the corresponding bipartite graph between the \(F_2\)-colored faces and the \(F_2'\)-colored faces they contain.
\end{proof}
\subsection{Sampler Graphs}
\textit{Sampler graphs} are bipartite graphs $G=(L,R,E)$ where a random vertex $v \in L$ ``sees'' any large enough set in $R$ with approximately the correct probability. Here we discuss a few classical variants of samplers and their relations, and refer the reader to \cite{Goldreich1997} for further discussion.

\begin{definition}[Multiplicative sampler]
Let \(G=(L,R,E)\) be a bipartite graph and \(\alpha,\beta,\delta>0\). \(G\) is an \emph{\((\alpha,\beta,\delta)\)-multiplicative sampler} if for every set \(A \subseteq R\) of size \(\prob{A} \geq \alpha\) it holds that
\[\Prob[v \in L]{\abs{\Prob[u \sim v]{u \in A}-\prob{A}} > \delta \prob{A}} \leq \beta.\]
\end{definition}
Note that the definition of a sampler is not a priori symmetric, \(L\) and \(R\) have different roles. We will also study a related \textit{additive} notion of samplers.
\begin{definition}[Additive sampler]
Let \(G=(L,R,E)\) be a bipartite graph and \(\beta,\varepsilon>0\). \(G\) is an \emph{\((\varepsilon,\beta)\)-additive sampler} if for every set \(A \subseteq R\) it holds that
\[\Prob[v \in L]{\abs{\Prob[u \sim v]{u \in A}-\prob{A}} > \varepsilon } \leq \beta.\]
\end{definition}

Finally, sometimes it is also useful to sample \emph{functions} instead of sets. toward this we introduce the following definition in additive notation.
\begin{definition}
    Let \(G=(L,R,E)\) be a bipartite graph and $\varepsilon,\beta > 0$. $G$ is a \((\varepsilon,\beta)\)-\emph{function additive sampler} if for every \(f:R \to [0,1]\) with expectation \(\Ex[v \in R]{f(v)} = \mu\), it holds that
    \[
    \Prob[v \in L]{\abs{\Ex[u \sim v]{f(v)} - \mu} > \varepsilon}<\beta,
    \]
    and similarly for a $(\alpha,\beta,\delta)$-\emph{function multiplicative sampler}.
\end{definition}

\subsubsection{Basic Sampler Properties}
While the definition of samplers are not symmetric with respect to $L$ and $R$, a near-tight correspondence $(L,R,E)$ and $(R,L,E)$ is given in \cite[Lemma 2.5]{ImpagliazzoKW2012}. We repeat their proof in \pref{app:sampler-proofs} in a more general setup.
\begin{claim} \label{claim:opposite-sampler}
    Let \(\beta,\delta > 0\), let \(\delta' > \delta\) and \(\alpha < \frac{\min \set{\delta,0.5}}{1+\delta}\). Then for every \((\alpha,\beta,\delta)\)-sampler \(G=(L,R,E)\), it holds that \(G_{op} \coloneqq (R,L,E)\) is a \((\frac{1-\alpha(1+\delta)}{\alpha(\delta' - \delta)}\beta, 2\alpha, \delta')\)-sampler.
\end{claim}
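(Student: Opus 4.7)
The plan is a proof by contradiction via two-sided double-counting on the bipartite edge measure of $G$, following the template of Impagliazzo--Kabanets--Wigderson. Write $\alpha^* := \frac{(1-\alpha(1+\delta))\beta}{\alpha(\delta'-\delta)}$ for the target density threshold. Suppose toward contradiction there is a set $B \subseteq L$ with $\Pr[B] = \beta_B \geq \alpha^*$ that is bad for $G_{op}$, i.e.\ $\Pr_{u \in R}[|\Pr_{v \sim u}[v \in B] - \beta_B| > \delta' \beta_B] > 2\alpha$. Split the bad vertices as $A^+ \dunion A^-$, where $A^+ = \{u : \Pr_{v \sim u}[v \in B] > (1+\delta')\beta_B\}$ and $A^- = \{u : \Pr_{v \sim u}[v \in B] < (1-\delta')\beta_B\}$. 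The union bound forces at least one of $\Pr[A^+] > \alpha$ or $\Pr[A^-] > \alpha$ to hold, which lets us invoke the sampler hypothesis of $G$ on that set.

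For the $A^+$ case, I would compute $\Pr_{(v,u)}[v \in B, u \in A^+]$ in two ways. From the $R$-side, the definition of $A^+$ gives the lower bound $\Pr[A^+](1+\delta')\beta_B$. From the $L$-side, I would partition $B$ into vertices good or bad for $A^+$ (where the bad vertices have total measure $b \leq \beta$ by the sampler property of $G$), obtaining the upper bound $\beta_B(1+\delta)\Pr[A^+] + b(1-(1+\delta)\Pr[A^+])$. When $(1+\delta)\Pr[A^+] \leq 1$, substituting $b \leq \beta$ and noting that the resulting expression is monotonically decreasing in $\Pr[A^+] \geq \alpha$ yields $\beta_B < \alpha^*$, a contradiction. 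When $(1+\delta)\Pr[A^+] > 1$, the coefficient of $b$ is negative so the upper bound is tightest at $b = 0$, forcing $(1+\delta')\Pr[A^+]\beta_B < (1+\delta)\Pr[A^+]\beta_B$, which directly contradicts $\delta' > \delta$.

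For the $A^-$ case, I would perform the dual computation. The $R$-side produces the upper bound $\Pr[v \in B, u \in A^-] < \Pr[A^-](1-\delta')\beta_B$, while the $L$-side (using $\Pr_{u \sim v}[u \in A^-] \geq (1-\delta)\Pr[A^-]$ for good $v$ and trivially $\geq 0$ for bad $v$) gives the lower bound $(\beta_B - \beta)(1-\delta)\Pr[A^-]$. Rearranging yields $\beta_B < \frac{(1-\delta)\beta}{\delta' - \delta}$, which is strictly less than $\alpha^*$ precisely when $2\alpha < 1$, as guaranteed by $\alpha < \frac{0.5}{1+\delta}$.

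The main technical subtlety I expect is in the $A^+$ analysis: a naive application of $b \leq \beta$ as an absolute upper bound yields only the weaker threshold $\frac{\beta}{\alpha(\delta'-\delta)}$, and the improved factor $1-\alpha(1+\delta)$ emerges only by exploiting the compensating term $b(1-(1+\delta)\Pr[A^+])$ in the $L$-side decomposition combined with monotonicity in $\Pr[A^+]$. The hypothesis $\alpha < \frac{\min\{\delta, 0.5\}}{1+\delta}$ is calibrated so that both case analyses yield the same unified threshold $\alpha^*$, and so that the two sub-regimes of the $A^+$ analysis (separated by whether $(1+\delta)\Pr[A^+]$ exceeds $1$) glue together without a gap.
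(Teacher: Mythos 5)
Your proposal is correct and follows essentially the same double-counting strategy as the paper's proof: split the misbehaving vertices on the $R$-side into over- and under-samplers, and bound $\mathbb{E}_{uv\in E}[\mathbf{1}_B(v)\mathbf{1}_{A^{\pm}}(u)]$ from both sides using the sampler hypothesis on $A^{\pm}$. In fact you are slightly more careful than the paper in the $A^+$ case: the paper writes the $L$-side upper bound as $\beta + \Pr[B_B](1+\delta)(\Pr[A]-\beta)$, which is the value of your expression at $b=\beta$, but that substitution is only valid when the coefficient $1-(1+\delta)\Pr[A^+]$ is nonnegative; your explicit case split on the sign of that coefficient (yielding the immediate contradiction $\delta'<\delta$ when it is negative) fills in a step the paper glosses over, while the paper's final conclusion survives anyway because the resulting bound is negative, so the conclusions agree.
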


We also note that additive samplers and multiplicative are equivalent, at least in the weak sense as below. We prove this claim in \pref{app:sampler-proofs}.
\begin{claim} \label{claim:additive-multiplicative-sampler-equivalence}
Let \(G=(L,R,E)\) be a bipartite graph.
\begin{enumerate}
    \item If \(G\) is a \((\beta,\delta)\)-additive sampler then \(G\) is a \((C \delta,\beta,\frac{1}{C})\)-multiplicative sampler for any \(C>1\).
    \item If \(G\) is a \((\alpha,\beta,\delta)\)-multiplicative sampler for \(\alpha \leq \frac{1}{2}\). Then \(G\) is a \((\beta,\delta)\)-additive sampler, where \(\delta = \max \set{\delta,(1+\delta)(\alpha+p)}\) and \(p = \max_{v \in R} \prob{v}\).
\end{enumerate}
\end{claim}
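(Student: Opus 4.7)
Both directions are elementary manipulations of the definitions; the only real content is handling the regime where the target set is smaller than the sampler's threshold $\alpha$ in part~(2). I would prove the two parts separately as follows.

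\textbf{Part (1): additive $\Rightarrow$ multiplicative.} Suppose $G$ is a $(\varepsilon,\beta)$-additive sampler (with $\varepsilon=\delta$ in the statement). Fix any $A\subseteq R$ with $\Pr[A]\ge C\delta$. Since $\tfrac{1}{C}\Pr[A]\ge \delta$, any vertex $v\in L$ with multiplicative deviation $|\Pr_{u\sim v}[u\in A]-\Pr[A]|>\tfrac{1}{C}\Pr[A]$ automatically satisfies additive deviation $>\delta$. Hence the bad set is contained in the additive bad set, which has measure at most $\beta$ by hypothesis.

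\textbf{Part (2): multiplicative $\Rightarrow$ additive.} Write $\varepsilon := \max\{\delta,(1+\delta)(\alpha+p)\}$ for the target additive error. Fix $A\subseteq R$ and split into two cases based on $\Pr[A]$. If $\Pr[A]\ge\alpha$, then $\delta\Pr[A]\le \delta \le \varepsilon$, so any additive deviation larger than $\varepsilon$ is in particular a multiplicative deviation larger than $\delta\Pr[A]$, and the multiplicative guarantee applied directly to $A$ bounds this event by $\beta$.

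If instead $\Pr[A]<\alpha$, I would \emph{extend} $A$ to a set $A'\supseteq A$ of measure $\Pr[A']\in[\alpha,\alpha+p]$. Such an $A'$ exists because the complement of $A$ has measure $\ge 1-\alpha\ge\alpha$ (using $\alpha\le 1/2$) and because vertices have measure at most $p$, so greedily adding vertices from $R\setminus A$ overshoots the threshold $\alpha$ by at most $p$. Now the multiplicative guarantee applies to $A'$: except on a set of measure at most $\beta$,
\[
\Pr_{u\sim v}[u\in A']\le (1+\delta)\Pr[A']\le (1+\delta)(\alpha+p)\le \varepsilon.
\]
Since $A\subseteq A'$ this forces $\Pr_{u\sim v}[u\in A]\le\varepsilon$. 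The lower additive tail $\Pr_{u\sim v}[u\in A]<\Pr[A]-\varepsilon$ is vacuous because $\Pr[A]<\alpha\le\varepsilon$, so the additive deviation exceeds $\varepsilon$ with probability at most $\beta$.

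\textbf{Expected obstacle.} There is essentially no hard step; the only mild subtlety is making the extension argument in part~(2) clean, namely checking that $\alpha\le 1/2$ is exactly what guarantees enough room in $R\setminus A$ to raise the measure into $[\alpha,\alpha+p]$, and that the one-sided multiplicative inequality on $A'$ is sufficient because the lower additive tail for $A$ is automatically empty in this regime.
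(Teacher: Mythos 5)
Your proof is correct and follows essentially the same route as the paper's: part (1) is the identical observation that a relative deviation of $\tfrac{1}{C}\Pr[A]\ge\delta$ is in particular an additive deviation of $\delta$, and part (2) splits on $\Pr[A]\gtrless\alpha$ and, in the small-set case, greedily pads $A$ up to a superset of measure in $[\alpha,\alpha+p]$ and applies the multiplicative guarantee there, observing that the lower tail is vacuous because $\Pr[A]<\alpha\le\varepsilon$. The only cosmetic difference is that the paper opens part (2) with a reduction to $\Pr[A]\le 1/2$ by passing to $R\setminus A$; your case split handles $\Pr[A]\ge\alpha$ directly via the multiplicative guarantee regardless of whether $\Pr[A]$ exceeds $1/2$, so that reduction is not needed (and indeed the paper never actually relies on it either).
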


Under a slightly stronger assumption one can also remove the dependence on \(\delta\) in the second item.
\begin{claim} \label{claim:improved-mult-to-add-sampler}
    Let \(\beta, \alpha_0 > 0\). If for every \(\alpha > \alpha_0\) it holds that \(G\) is an \((\alpha,\beta,\frac{\alpha_0}{\sqrt{\alpha}})\)-multiplicative sampler, then \(G\) is a\((\beta,2(\alpha+p))\)-additive sampler where \(p = \max_{v \in R} \prob{v}\).
\end{claim}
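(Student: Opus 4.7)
The plan is to split on the density of the target set \(A \subseteq R\), adaptively choosing the parameter \(\alpha\) in the multiplicative sampling hypothesis. Write \(\mu = \prob{A}\) and let \(\varepsilon = 2(\alpha_0 + p)\). I interpret the conclusion as saying \((\beta, \varepsilon)\)-additive sampling holds (the ``\(\alpha\)'' in the statement being a typo for \(\alpha_0\), since the only $\alpha$-like parameter in scope is the hypothesis parameter \(\alpha_0\)).

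\textbf{Easy case, \(\mu > \alpha_0\).} Apply the multiplicative sampler hypothesis directly to \(A\) with \(\alpha = \mu\). This gives that for all but a \(\beta\)-fraction of \(v \in L\),
\[
\Abs{\Prob[u \sim v]{u \in A} - \mu} \leq \frac{\alpha_0}{\sqrt{\mu}} \cdot \mu = \alpha_0 \sqrt{\mu} \leq \alpha_0 \leq \varepsilon,
\]
so the additive conclusion holds trivially in this regime, and in fact the factor of \(2\) is not needed here.

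\textbf{Hard case, \(\mu \leq \alpha_0\).} Now the multiplicative hypothesis does not apply to \(A\) itself, so I pad \(A\) to a superset. Greedily add vertices from \(R \setminus A\) to \(A\) until the measure first exceeds \(\alpha_0\); since each added vertex has weight at most \(p\), the result \(A' \supseteq A\) satisfies \(\alpha_0 < \prob{A'} \leq \alpha_0 + p\). Applying the hypothesis to \(A'\) with \(\alpha = \prob{A'}\), for all but a \(\beta\)-fraction of \(v \in L\),
\[
\Prob[u \sim v]{u \in A} \;\leq\; \Prob[u \sim v]{u \in A'} \;\leq\; \prob{A'} + \alpha_0 \sqrt{\prob{A'}} \;\leq\; (\alpha_0 + p) + \alpha_0 \sqrt{\alpha_0 + p} \;\leq\; \varepsilon,
\]
which controls the upper tail. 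The lower tail is automatic because \(\Prob[u \sim v]{u \in A} \geq 0 \geq \mu - \alpha_0 \geq \mu - \varepsilon\). Hence \(\bigabs{\Prob[u \sim v]{u \in A} - \mu} \leq \varepsilon\) with probability at least \(1-\beta\).

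The two cases together give the claimed \((\beta, 2(\alpha_0+p))\)-additive sampler. The only real subtlety is the padding argument in the hard case: the factor \(p\) appears precisely because one cannot in general hit the threshold \(\alpha_0\) exactly with a subset of \(R\), but the maximum vertex weight \(p\) bounds the overshoot. No nontrivial estimate beyond this is needed, and the argument is a direct sharpening of the standard multiplicative-to-additive reduction (\pref{claim:additive-multiplicative-sampler-equivalence}), with the key gain being that the stronger scaling \(\delta = \alpha_0/\sqrt{\alpha}\) in the hypothesis makes the multiplicative error \(\delta \cdot \alpha\) uniformly bounded by \(\alpha_0\) across all \(\alpha\), eliminating the \(\delta\)-dependence present in the weaker reduction.
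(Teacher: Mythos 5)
Your proof is correct and takes essentially the same route as the paper: a case split on whether $\prob{A}$ exceeds $\alpha_0$, with the small-set case handled by padding $A$ to a superset of measure in $(\alpha_0,\alpha_0+p]$ and applying the multiplicative hypothesis there. The paper presents this as a one-line delta to the proof of \pref{claim:additive-multiplicative-sampler-equivalence}, observing that the improved scaling \(\delta=\alpha_0/\sqrt{\alpha}\) makes \(\delta\alpha=\alpha_0\sqrt{\alpha}\leq\alpha_0\) uniformly over all \(\alpha\leq 1\) --- exactly the observation you flag at the end. One cosmetic remark: in the hard case your chain ends with \((\alpha_0+p)+\alpha_0\sqrt{\alpha_0+p}\leq 2(\alpha_0+p)\), which implicitly uses \(\alpha_0\leq 1\) (via \(\alpha_0^2\leq\alpha_0+p\)); the paper instead bounds \(\alpha_0\sqrt{\prob{A'}}\leq\alpha_0\) directly from \(\prob{A'}\leq 1\), landing at the slightly sharper \(2\alpha_0+p\). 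Both are fine, but the latter avoids the extra check.
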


Below we show that every additive sampler is also a function additive sampler (albeit with worse parameters). We did not try to optimize parameters and it could be the case that a better reduction exists.
\begin{claim} \torestate{\label{claim:sampler-for-functions}
        Let \(G=(L,R,E)\) be an \((\varepsilon,\beta)\)-additive sampler such that every \(r \in R\) has degree at least \(k\) and every \(v \in L\) has probability at most \(\frac{1}{k}\). Assume that \(\exp(-0.01\varepsilon^2 k)) < \frac{1}{4}\). Then \(G\) is also a \((4\beta, 2\varepsilon)\)-function additive sampler.}
\end{claim}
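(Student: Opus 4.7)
The plan is to reduce the function-sampler conclusion to the set-sampler hypothesis via a randomized rounding of $f$. Given $f : R \to [0,1]$ with mean $\mu$, I will independently round by drawing $Y_u \sim \mathrm{Ber}(f(u))$ and setting $A_Y := \{u \in R : Y_u = 1\}$. Linearity of expectation then gives $\mathbb{E}_Y[\Pr_{u \sim v}[u \in A_Y]] = F(v)$ and $\mathbb{E}_Y[\Pr[A_Y]] = \mu$, where $F(v) := \mathbb{E}_{u \sim v}[f(u)]$. The crucial point is that for each fixed $v$, both $\Pr_{u \sim v}[u \in A_Y]$ and $\Pr[A_Y]$ are sums of independent bounded random variables with the right means, so they concentrate around $F(v)$ and $\mu$ respectively.

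The argument then has two ingredients. First I will show that for every fixed $v \in L$,
\[
\Pr_Y\!\left[|\Pr_{u \sim v}[u \in A_Y] - F(v)| > \varepsilon/2\right] \leq 3/8 \quad\text{and}\quad \Pr_Y\!\left[|\Pr[A_Y] - \mu| > \varepsilon/2\right] \leq 3/8,
\]
by Hoeffding on the respective independent sums: the conditions $\Pr[v] \leq 1/k$ and $\deg(r) \geq k$ bound the variances of the two sums (roughly) by $1/k$, and the hypothesis $\exp(-0.01\varepsilon^2 k) < 1/4$ is calibrated so that Hoeffding yields failure probability at most $3/8$. I then couple this with the set-sampler property: if $v$ is bad, i.e.\ $|F(v) - \mu| > 2\varepsilon$, then by the union bound both Chernoff events succeed with $Y$-probability $\geq 1/4$, and on this event the triangle inequality forces $|\Pr_{u \sim v}[A_Y] - \Pr[A_Y]| > \varepsilon$. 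Interchanging expectations and invoking the set-sampler hypothesis pointwise in $Y$ gives
\[
\tfrac{1}{4} \Pr_v\!\left[|F(v) - \mu| > 2\varepsilon\right] \leq \mathbb{E}_Y \Pr_v\!\left[|\Pr_{u \sim v}[A_Y] - \Pr[A_Y]| > \varepsilon\right] \leq \beta,
\]
which rearranges to the claimed $4\beta$ bound.

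The main obstacle I expect is the quantitative Chernoff step, since the stated combinatorial hypotheses do not directly bound $\max_u p_v(u)$ or $\max_u \Pr[u]$. Instead the needed variance estimates $\sum_u p_v(u)^2,\ \sum_u \Pr[u]^2 = O(1/k)$ have to be extracted indirectly: $\Pr[v] \leq 1/k$ forces every edge weight $\Pr[v,u] \leq 1/k$, which spreads out the conditional distribution $p_v$, while the degree lower bound on $R$ spreads out the marginal on $R$. Putting these two together and combining with the precise calibration in $\exp(-0.01 \varepsilon^2 k) < 1/4$ is what yields the $3/8$ constants above and closes the argument; everything else is a clean interchange-of-expectations calculation.
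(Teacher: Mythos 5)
Your strategy---rounding $f$ to a Bernoulli-sampled set $A_Y$, concentrating the two resulting sums via Hoeffding, and interchanging expectations against the set-sampler hypothesis---is the same as the paper's proof; your two-sided union bound at threshold $\tfrac14$ is only a cosmetic variant of the paper's Markov-at-$\tfrac12$-per-one-sided-tail bookkeeping, and both yield $4\beta$.

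The genuine gap is in the variance estimates, and it comes from placing $f$ and $A_Y$ on the wrong side of the bipartition. You put them on $R$, matching the paper's formal definition of a function additive sampler, but the hypotheses $\Pr[v]\leq\tfrac1k$ for $v\in L$ and $\deg(r)\geq k$ for $r\in R$ are calibrated to the opposite convention, and your attempted derivation of $\sum_u p_v(u)^2,\ \sum_u\Pr[u]^2=O(1/k)$ does not close. Indeed, $\Pr[v]\leq\tfrac1k$ does give $\Pr[v,u]\leq\tfrac1k$ for every edge, but $p_v(u)=\Pr[v,u]/\Pr[v]$ is a ratio of two quantities both of which can be $\Theta(1/k)$, so $p_v(u)$ can be $\Theta(1)$ and $\sum_u p_v(u)^2$ is not controlled; similarly $\deg(r)\geq k$ together with $\Pr[v,r]\leq\tfrac1k$ only says $\Pr[r]$ is a sum of $\geq k$ terms each $\leq\tfrac1k$, which is compatible with $\Pr[r]=1$, so $\max_{u\in R}\Pr[u]$ and hence $\sum_u\Pr[u]^2$ are not bounded either. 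The paper's own proof quietly swaps $L$ and $R$: it fixes $f:L\to[0,1]$, rounds to $A\subseteq L$, and samples $r_0\in R$. Under that convention $\Pr[A]=\sum_{v\in L}\Pr[v]Y_v$ has $\sum_v\Pr[v]^2\leq\max_v\Pr[v]\leq\tfrac1k$ immediately, and $\Pr_{v\sim r_0}[A]$ is controlled by the degree bound via $\max_v p_{r_0}(v)\leq 1/\deg(r_0)\leq\tfrac1k$, which tacitly assumes the conditional distribution from $r_0$ is uniform over its neighborhood (an assumption neither the claim nor the paper's proof makes explicit, but which holds in every application in the paper). Flip your convention so $f$ lives on $L$ and the Hoeffding step closes; your union-bound and interchange-of-expectations argument then goes through unchanged.
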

We prove the claim in \pref{app:sampler-proofs}.
We note that \cite{Goldreich1997} also presents a reduction from an additive sampler to a function additive sampler. The reduction there changes the underlying graph, which is why we prove a reduction more suitable to our needs.
\subsection{Concentration of Measure}\label{sec:prelims-concentration}
Sampler graphs are a special case of the powerful concept of concentration of measure, a viewpoint we will also take throughout this work. We follow the standard notation of \cite{Boucheron} adapted in the natural way from product measures to simplicial complexes. Given a simplicial complex $X$ and a function $f: \X[k] \to \R$, let $Z=f(x_1,\ldots,x_k)$ denote the random variable distributed as $f(x)$ where $x \in \X[k]$. We are interested in understanding the concentration of $Z$ around its mean. $Z$ is said to satisfy \textit{subexponential} concentration if there exists a constant $c>0$ such that
\[
\Pr[|Z-\mathbb{E}[Z]| > t] \leq \exp(-ct),
\]
and \textit{subgaussian} concentration if
\[
\Pr[|Z-\mathbb{E}[Z]| > t] \leq \exp(-ct^2).
\]
Classical concentration bounds (e.g.\ on the cube, products) typically hold for functions $f$ satisfying certain Lipschitz-type conditions. Toward this end, define the $Z$-dependent variable $Z'_{(i)}=f(x_1,\ldots z_i, \ldots, x_k)$ where $z_i$ is sampled conditional on $x_{-i}$.\footnote{Recall that even in the non-partite setting, we assign each $k$-face an arbitrary order, making $Z'_{(i)}$ well defined.} Abusing notation slightly, we will call a function $\nu$-Lipschitz if its squared difference with respect to the down-up walk is bounded:
\begin{definition}[$\nu$-Lipschitz function]\label{def:bounded}
Let $X$ be a simplicial complex and $\nu>0$. We call a function $f: \X[d] \to \mathbb{R}$ $\nu$-Lipschitz if with probability $1$:
\[
\sum\limits_{i=1}^d (Z-Z'_{(i)})_+^2 \leq \nu,
\]
where $(z)_+ =\max\{z,0\}$. 
\end{definition}
We will occasionally rely on a stricter variant that requires the difference of $f$ on any two neighboring $k$-faces be bounded.
\begin{definition}[$\nu$-bounded difference]
    We say a function $f:\X[d] \to \mathbb{R}$ has $\nu$-bounded difference if for every $s \in \X[d]$ and neighboring $s'$ of the down-up walk:
    \[
        (f(s)-f(s'))^2 \leq \frac{\nu}{d}.
    \]
\end{definition}
Most of our work focuses on a special case of functions with $\nu$-bounded difference we call \textit{lifted} functions.
\begin{definition}[Lifted functions]
    Let $X$ be a $d$-\maximal simplicial complex and $f: \X[k] \to \R$ any function. For any $k < k' \leq d$, the $k'$-lift of $f$ is the function:
    \[
    U_{k,k'} f(s) = \mathbb{E}_{s' \subset s}[f(s')]
    \]
\end{definition}
We remark that lifted functions (sometimes called `degree' or `level'-$k$ functions) are fundamental objects in boolean analysis and heavily studied in the HDX literature \cite{DiksteinDFH2018,bafna2022high,gaitonde2022eigenstripping}. As discussed in the introduction, inclusion sampling is in one-to-one correspondence with concentration bounds for lifted functions. More generally, bipartite sampling guarantees on $G=(L,R,E)$ correspond to concentration for functions of the form $A_Gf$, where $f:R \to [0,1]$ and $A_G: \R^R \to \R^L$ is the normalized bipartite adjacency matrix of $G$.

On simplicial complexes, we will typically be interested in understanding concentration at various levels of the complex. One frequently useful fact is that once concentration has been established for $k$-faces, it can typically be `raised' or `lowered' to functions on other levels via simple reductions to concentration of the complete complex. In cases, this can even be used to \textit{improve} bounds on lower levels by bootstrapping from stronger bounds from larger set sizes. We record the following basic results to this effect here and use them throughout. Their proofs are given in \pref{app:sampler-proofs}. 

Given integers $k \leq d$, a $d$-\maximal simplicial complex $X$, and a function $f:\X[k] \to \R$, denote 
\[
\pi_{up}^{d,k,f}(t) = \max_{s \in \X[d]}\left\{\underset{r \subset s}{\Pr}\left[f(r)-\underset{r \subset s}{\mathbb{E}}[f] > t\right]\right\}
\]
and similarly 
\[
\pi_{low}^{d,k,f}(t) = \max_{s \in \X[d]} \left\{\underset{r \subset s}{\Pr}\left[f(r)-\underset{r \subset s}{\mathbb{E}}[f] < -t\right]\right\}
\]
to be the worst-case concentration of $f$ restricted to the induced complete complex on $d$-faces of $X$.
\begin{lemma}[Raising Concentration]\label{lem:raising}
    Let $X$ be a $d$-\maximal simplicial complex, $k \leq d$, and $f: \X[k] \to [0,1]$ a function satisfying
    \begin{enumerate}
        \item \textbf{Upper Tail}: $\underset{\X[k]}{\Pr}[f-\mathbb{E}[f] > t] \leq up(t)$
        \item \textbf{Lower Tail}: $\underset{\X[k]}{\Pr}[f-\mathbb{E}[f] < -t] \leq low(t)$.
    \end{enumerate}
    for some functions $up,low: \R_+ \to [0,1]$. Then the $d$-lift $U_{k,d}f: \X[d] \to \R$ satisfies:
    \begin{enumerate}
        \item \textbf{Upper Tail}: $\underset{\X[d]}{\Pr}[U_{k,d}f-\mathbb{E}[f] > t] \leq up(\frac{t}{2})(1-\pi_{low}^{d,k,f}(\frac{t}{2}))^{-1}$
        \item \textbf{Lower Tail}: $\underset{\X[d]}{\Pr}[U_{k,d}f-\mathbb{E}[f] < -t] \leq low(\frac{t}{2})(1-\pi_{up}^{d,k,f}(\frac{t}{2}))^{-1}$.
    \end{enumerate}
\end{lemma}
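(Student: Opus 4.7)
The plan is to exploit the two-step sampling coupling the $X(d)$ and $X(k)$ distributions: first draw $s \in X(d)$ and then a uniformly random $k$-subface $r \subset s$. Under this joint law the marginal of $r$ is exactly the $X(k)$-distribution, so the hypothesized upper/lower tail bounds $up, low$ for $f$ on $X(k)$ transfer verbatim to the joint distribution. The whole argument then reduces to conditioning on the `bad' event for the lifted function $U_{k,d}f$ and invoking the worst-case-in-$s$ conditional tail bounds $\pi_{up/low}^{d,k,f}$ on the complete complex of $k$-subsets inside a single top face.

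Concretely, for the upper tail let $A = \{ s \in X(d) : U_{k,d}f(s) - \E[f] > t\}$ denote the event we want to bound. For every $s \in A$, by definition of $U_{k,d}f$ we have $\E_{r \subset s}[f(r)] > \E[f] + t$, so by definition of $\pi_{low}^{d,k,f}(t/2)$,
\[ \Pr_{r \subset s}\!\left[ f(r) > \E[f] + t/2 \right] \;\ge\; \Pr_{r \subset s}\!\left[ f(r) > \E_{r \subset s}[f] - t/2 \right] \;\ge\; 1 - \pi_{low}^{d,k,f}(t/2). \]
Integrating this pointwise-in-$s$ inequality over $s \in A$ and using the marginal identification gives
\[ up(t/2) \;\ge\; \Pr_{s,\, r \subset s}\!\left[ f(r) > \E[f] + t/2 \right] \;\ge\; \Pr[A] \cdot \bigl(1 - \pi_{low}^{d,k,f}(t/2)\bigr), \]
which rearranges to the claimed upper tail for $U_{k,d}f$.

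The lower tail follows by the identical argument applied to $A' = \{ s : U_{k,d}f(s) - \E[f] < -t\}$: on $A'$ the conditional mean $\E_{r \subset s}[f]$ lies below $\E[f] - t$, so $\pi_{up}^{d,k,f}(t/2)$ forces a $(1 - \pi_{up}^{d,k,f}(t/2))$-fraction of $r \subset s$ to satisfy $f(r) < \E[f] - t/2$; integrating against $low(t/2)$ and rearranging gives the claimed lower tail. The only mild subtlety, and the one place where the statement could be misread, is that $\pi_{up/low}^{d,k,f}$ is defined as a \emph{maximum} over $s \in X(d)$ rather than an average, which is precisely what lets the conditional bound be applied pointwise in $s$ before integration; beyond tracking this definition there is no real obstacle.
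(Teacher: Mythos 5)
Your proof is correct and takes essentially the same approach as the paper's: both couple the $X(d)$ and $X(k)$ distributions by drawing $s\in X(d)$ and then $r\subset s$ uniformly, then observe that on the bad event for $U_{k,d}f$ the conditional complete-complex bound $\pi_{low}^{d,k,f}$ forces most sub-$k$-faces to also exceed $\E[f]+t/2$, which is then controlled by $up$. The paper phrases it as a proof by contradiction (and uses the rescaling $t\mapsto 2t$) while you argue directly, but the decomposition and key estimates are the same.
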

We remark that since concentration in the complete complex is typically quite good, for most function classes of interest the latter term is close to $1$ and very little is lost lifting the concentration to level $d$. We will also use a variant of this result for `lowering' concentration bounds specified to Lipschitz functions.
\begin{lemma}[Lowering Concentration]\label{lem:lowering}
    Let $X$ be a $d$-\maximal simplicial complex and $k \leq d$. Assume there exist functions $up(t,\nu)$ and $low(t,\nu)$ such that any $\nu$-Lipschitz $f: \X[d] \to \R$ satisfies:
    \begin{enumerate}
        \item \textbf{Upper Tail}: $\Pr[f-\mathbb{E}[f] > t] \leq up(t,\nu)$
        \item \textbf{Lower Tail}: $\Pr[f-\mathbb{E}[f] < -t] \leq low(t,\nu)$.
    \end{enumerate}
    Then any function $f':\X[k] \to \R$ with $\nu$-bounded difference satisfies:
        \begin{enumerate}
        \item \textbf{Upper Tail}: $\Pr[f'-\mathbb{E}[f'] > t] \leq up(\frac{t}{2},\frac{k}{d}\nu) + e^{-\frac{t^2}{4\nu}}$
        \item \textbf{Lower Tail}: $\Pr[f'-\mathbb{E}[f'] < -t] \leq low(\frac{t}{2},\frac{k}{d}\nu) + e^{-\frac{t^2}{4\nu}}$.
    \end{enumerate}
\end{lemma}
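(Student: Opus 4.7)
The plan is to reduce concentration for $f':\X[k] \to \R$ to the assumed concentration on its $d$-lift $g \coloneqq U_{k,d}f':\X[d] \to \R$, picking up an additive error term from concentration on each complete sub-complex $\binom{s}{k}$ for $s \in \X[d]$. First, since drawing $r \in \X[k]$ is equivalent to drawing $s \in \X[d]$ and then $r \subset s$ uniformly, one has $\mathbb{E}[g]=\mathbb{E}[f']$, and the triangle inequality yields
\[
\Pr_{r \in \X[k]}\!\left[f'(r)-\mathbb{E}[f'] > t\right] \leq \Pr_s\!\left[g(s)-\mathbb{E}[g] > \tfrac{t}{2}\right] + \mathbb{E}_s \Pr_{r \subset s}\!\left[f'(r)-g(s) > \tfrac{t}{2}\right],
\]
with an analogous decomposition for the lower tail.

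To bound the first summand I would verify that $g$ is $(k/d)\nu$-Lipschitz in the sense of \pref{def:bounded} and then directly apply the hypothesis with parameter $(k/d)\nu$ at threshold $t/2$. Given $s \in \X[d]$ and a resampling of its $i$-th vertex $v_i \to v_i'$ producing $s'$, I would couple uniform $r \subset s$ with uniform $r' \subset s'$ by setting $r'=r$ when $v_i \notin r$ and $r' = (r \setminus \{v_i\}) \cup \{v_i'\}$ otherwise; a direct count verifies both marginals are uniform. Since $r$ and $r'$ are down-up neighbors in $\X[k]$ whenever they differ, the $\nu$-bounded difference of $f'$ gives $|g(s)-g'_{(i)}(s)| \leq (k/d)\sqrt{\nu/k}$ deterministically, so $(g-g'_{(i)})^2 \leq k\nu/d^2$ and summing over the $d$ coordinates yields Lipschitz constant $(k/d)\nu$.

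For the second summand, observe that on every complete sub-complex $\binom{s}{k}$, $f'$ still has $\nu$-bounded differences (any two $k$-sets in $\binom{s}{k}$ differing in a single vertex are down-up neighbors in $\X[k]$). A one-sided Hoeffding–McDiarmid inequality for sampling without replacement with $k$ coordinates each of bound $\sqrt{\nu/k}$ then gives
\[
\Pr_{r \subset s}\!\left[f'(r)-g(s) > \tfrac{t}{2}\right] \leq \exp\!\left(-\tfrac{t^2}{2\nu}\right) \leq \exp\!\left(-\tfrac{t^2}{4\nu}\right),
\]
and averaging over $s$ yields the second term in the stated bound. The main technical subtlety is producing a coupling that simultaneously delivers the correct uniform marginals on both $\binom{s}{k}$ and $\binom{s'}{k}$ while pairing $r,r'$ as down-up neighbors in $\X[k]$, which is what forces the Lipschitz constant of $g$ to come out to exactly $(k/d)\nu$; everything else reduces to routine applications of the hypothesis and of classical Hoeffding-type concentration.
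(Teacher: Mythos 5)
Your proposal is correct and follows essentially the same route as the paper: decompose $\Pr[f'-\mathbb{E}[f']>t]$ via the $d$-lift $g=U_{k,d}f'$, show $g$ is $(k/d)\nu$-Lipschitz to apply the hypothesis at threshold $t/2$, and control the fluctuation of $f'$ around $g(s)$ inside each complete sub-complex $\binom{s}{k}$ by McDiarmid-type concentration for sampling without replacement. The only (minor) divergence is how the Lipschitz constant of $g$ is established: the paper bounds $(g(s)-g(s^{(i)}))^2$ by Cauchy--Schwarz over the $\binom{d-1}{k-1}$ indices $T\ni i$, whereas you use an explicit coupling of $r\subset s$ with $r'\subset s'$; both are pointwise arguments that yield the same $(k/d)\nu$, and the remaining steps coincide.
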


\subsection{High Dimensional Expanders}
In this section we define the notions of high dimensional expansion used throughout this work and dicsuss their relation with high order random walks.

\paragraph{Local Spectral Expanders} The most standard notion of spectral high dimensional expansion is `local-spectral expansion', due to \cite{DinurK2017,oppenheim2018local}. 
\begin{definition}[Local-spectral expander]\label{def:HDX}
    Let \(X\) be a \(d\)-\maximal simplicial complex. We say that \(X\) is a \(\lambda\)-one-sided (two sided) high dimensional expander if for every $i \leq d-2$ and \(s \in \X[i]\), the graph underlying \(X_s\) is a \(\lambda\)-one-sided (two sided) spectral expander.
\end{definition}
A key property of local-spectral expanders is that they imply the expansion of associated high order random walks \cite{KaufmanM2017,DinurK2017,KaufmanO2020,DiksteinDFH2018,alev2020improved}. We first state such a bound for the single and multi-step down-up walks (or more accurately their corresponding bipartite operator).
\begin{theorem}[{\cite{alev2020improved}}] \label{thm:gap-of-down-operator}
Let \(X\) be a \(\lambda\)-one-sided \(d\)-\maximal high dimensional expander. Then for every \(\ell<k\leq d\) it holds that \(\lambda(D_{k,\ell}) \leq \sqrt{\frac{\ell}{k}}\cdot (1+\lambda)^{\frac{k-\ell}{2}}.\)
\end{theorem}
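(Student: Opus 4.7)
The plan is to reduce the bound to the single-step case $\ell = k-1$ by sub-multiplicativity, and then prove the single-step bound inductively using a Kaufman--Oppenheim / Garland decomposition of the up-down walk. Concretely, I would factor $D_{k,\ell} = D_{\ell+1,\ell} \circ D_{\ell+2,\ell+1} \circ \cdots \circ D_{k,k-1}$. Each $D_{j,j-1}$ maps the constant function on \(\X[j]\) to the constant function on \(\X[j-1]\), hence restricts to a map $\ell_2^0(\X[j]) \to \ell_2^0(\X[j-1])$ on the mean-zero subspaces, so
\[
\lambda(D_{k,\ell}) \leq \prod_{j=\ell+1}^{k} \lambda(D_{j,j-1})
\]
by sub-multiplicativity of operator norms. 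It therefore suffices to prove the single-step bound $\lambda(D_{j,j-1})^2 \leq \frac{j-1}{j}(1+\lambda)$, since this telescopes to $\sqrt{\ell/k}\cdot(1+\lambda)^{(k-\ell)/2}$.

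I would establish the single-step bound by induction on $j$. In the base case $j=2$, the up-down walk $P_2 \coloneqq D_{2,1}U_{1,2}$ on \(\X[1]\) splits as $\frac{1}{2}I + \frac{1}{2}A_1$, where $A_1$ is the random walk on the 1-skeleton $X_\emptyset$; the HDX assumption directly gives $\lambda_2(A_1) \leq \lambda$, hence $\lambda(D_{2,1})^2 = \lambda_2(P_2) \leq \frac{1+\lambda}{2}$. For the inductive step, decompose $P_j \coloneqq D_{j,j-1}U_{j-1,j}$ on \(\X[j-1]\) according to whether the newly added vertex is the one later removed:
\[
P_j = \frac{1}{j}\, I + \frac{j-1}{j}\, Q_j,
\]
where $Q_j$ is the non-lazy walk that picks $u \in s$ uniformly, sets $r \coloneqq s \setminus \{u\}$, and replaces $u$ by a random neighbor $v$ of $u$ in the 1-skeleton of the link $X_r$.

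The heart of the argument is to apply Garland's method to $Q_j$: for $f \in \ell_2^0(\X[j-1])$, letting $f_r(v) \coloneqq f(r \cup \{v\})$ be the local restriction,
\[
\iprod{f, Q_j f} = \Ex[r \in {\X[j-2]}]{\iprod{f_r, A_{X_r} f_r}},
\]
where $A_{X_r}$ is the normalized adjacency operator of the link's 1-skeleton. Using $\lambda$-expansion of $X_r$ together with \eqref{eq:basic-expanders}, $\iprod{f_r, A_{X_r} f_r} \leq \lambda \normt{f_r}^2 + (1-\lambda)\ex{f_r}^2$. Averaging over $r$ gives $\Ex[r]{\normt{f_r}^2} = \normt{f}^2$ while $\Ex[r]{\ex{f_r}^2} = \normt{D_{j-1,j-2} f}^2$, and the inductive hypothesis at level $j-1$ bounds the latter by $\frac{j-2}{j-1}(1+\lambda)\normt{f}^2$. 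Combining with the lazy component and simplifying yields $\lambda_2(P_j) \leq \frac{j-1}{j}(1+\lambda)$ (in fact with slack of order $\lambda^2/j$), closing the induction.

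The main technical point will be verifying the Garland identity itself, i.e., that the joint distribution on $(r,u,v)$ induced by $Q_j$ factors as $r$ drawn from the induced measure on \(\X[j-2]\) and $(u,v)$ an edge of $X_r$. This reduces to checking the measure identity $\sum_{s \supseteq r,\, s \in {\X[j-1]}} \Pr_{j-1}(s) = (j-1)\, \Pr_{j-2}(r)$, which follows directly from the definition $\Pr_k(t) = \binom{d}{k}^{-1}\sum_{\sigma \supseteq t}\Pr_d(\sigma)$. Everything else is manipulation of elementary inequalities.
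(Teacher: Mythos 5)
The paper does not give a proof of this theorem; it is quoted directly from Alev and Lau (2020). Your proposal is correct and follows the standard route taken in that literature: reduce to the single-step operator by sub-multiplicativity of the norm on the mean-zero subspace, decompose the up-down walk $P_j = D_{j,j-1}U_{j-1,j}$ into its lazy and non-lazy parts, localize the non-lazy quadratic form via the Garland identity $\iprod{f,Q_jf} = \mathbb{E}_{r \in \X[j-2]}[\iprod{f_r,A_{X_r}f_r}_{X_r}]$, and close the induction using $\mathbb{E}_r[\ex{f_r}^2] = \normt{D_{j-1,j-2}f}^2$. All of these steps check out, including the measure identity you flag at the end and the averaging identities $\mathbb{E}_r[\normt{f_r}^2] = \normt{f}^2$.

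Two small nits, neither of which affects correctness. First, you cite \eqref{eq:basic-expanders}, which the paper states only for \emph{two-sided} expanders; what you actually use is the one-sided bound $\iprod{g,Ag} \leq \ex{g}^2 + \lambda\normt{g - \ex{g}\one}^2 = \lambda\normt{g}^2 + (1-\lambda)\ex{g}^2$, which needs only an upper bound on $\lambda_2$ and requires $\lambda < 1$ (a harmless assumption since the statement is vacuous otherwise). Second, tracking the slack at level $j$: substituting the inductive bound gives
\[
\iprod{f,P_jf} \leq \left[\tfrac{1}{j} + \tfrac{j-1}{j}\lambda + \tfrac{j-2}{j}(1-\lambda^2)\right]\normt{f}^2 = \left[\tfrac{j-1}{j}(1+\lambda) - \tfrac{j-2}{j}\lambda^2\right]\normt{f}^2,
\]
so the slack is $\frac{j-2}{j}\lambda^2 \approx \lambda^2$ for large $j$, not $\lambda^2/j$ as your parenthetical claims. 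This is purely a remark and does not touch the proof.
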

In the special case of the single step operator, we will occasionally use the following refined bound.
\begin{theorem}[{\cite{alev2020improved}}] \label{thm:gap-of-down-operator-single-step}
Let \(X\) be a \(\lambda_i\)-one-sided \(d\)-\maximal high dimensional expander. Then for every \(\ell<k\leq d\) it holds that \(\lambda(D_k) \leq \sqrt{\frac{1}{k}\prod\limits_{i=0}^{k-2}(1-\lambda_i)}\)
\end{theorem}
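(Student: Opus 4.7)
My plan is to reduce the bipartite singular value bound to a spectral gap bound on the down-up random walk, and then obtain that gap via an inductive local-to-global argument. Since $D_k^* = U_{k-1,k}$ with respect to the inner products on the two sides, $\lambda(D_k)^2$ equals the second largest eigenvalue of the down-up Markov chain $U_{k-1,k}D_{k,k-1}$ acting on $\ell_2(\X[k])$ restricted to the orthogonal complement of constants. So it suffices to prove a matching spectral gap lower bound for this chain; equivalently, an upper bound of $\tfrac{1}{k}\prod_{i=0}^{k-2}(1-\lambda_i)$ on its non-trivial spectrum.

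For the gap bound, the backbone is the Garland--Kaufman--Mass--Oppenheim decomposition: for any $f \perp \mathbf{1}$ on $\X[k]$, the Dirichlet form of $U_{k-1,k}D_{k,k-1}$ decomposes as a $\tfrac{1}{k}$-weighted copy of $\|f\|^2$ plus a $\tfrac{k-1}{k}$-weighted average over vertices $v \in \X[1]$ of the Dirichlet form of the analogous down-up walk one dimension lower inside the link $X_v$, applied to the localization $f_v$. This decomposition reduces the level-$k$ gap analysis to the level-$(k-1)$ gap analysis \emph{in each link}, plus a ``descent'' step that tracks how $\|f\|^2$ splits between its projection onto the constants and its orthogonal part after passing to each vertex link.

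The proof then proceeds by induction on $k$. The inductive hypothesis bounds the gap of each link's down-up walk by $\tfrac{1}{k-1}\prod_{i=0}^{k-3}(1-\lambda_i^{X_v})$, where the link expansions $\lambda_i^{X_v}$ are at most $\lambda_{i+1}$ of $X$ by hypothesis. The additional $(1-\lambda_0)$ factor emerges from the descent step itself: the expansion $\lambda_0$ of the underlying graph of $X$ controls precisely how much of $\|f\|^2$ survives as non-trivial variance after localizing $f$ to vertex links. Multiplying these per-level contractions together yields the full product $\prod_{i=0}^{k-2}(1-\lambda_i)$, and the $\tfrac{1}{k}$ prefactor is inherited from the averaging weights in the Garland decomposition. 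Base cases at $k = 2$ follow directly from the $\lambda_0$-expansion of the $1$-skeleton of $X$.

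The main obstacle I expect is obtaining the \emph{product} bound $\prod(1-\lambda_i)$ rather than the weaker Oppenheim trickling-down bound. Naively iterating Oppenheim's one-level inequality yields only an additive-type estimate that does not capture the fact that each level's expansion contracts variance multiplicatively. The Alev--Lau refinement extracts the product form by isolating the variance of the orthogonal component of $f$ at each descent step and showing it contracts by exactly $(1-\lambda_i)$, which requires tracking the eigenspace decomposition of the link walks rather than only their spectral norms. Correctly bookkeeping this level-by-level contraction, and verifying that the per-level factors interface cleanly with the $\tfrac{1}{k}$ from the Garland decomposition, is where the real work lies.
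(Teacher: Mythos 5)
There is no internal proof to compare against: the paper cites this result from Alev and Lau without proving it, so what follows is an assessment of your attempt on its own terms.

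First, the theorem as printed in the paper appears to contain a typo, and your proposal inherits it. The bound should read $\lambda(D_k) \leq \sqrt{1 - \frac{1}{k}\prod_{i=0}^{k-2}(1-\lambda_i)}$. The printed form, $\sqrt{\frac{1}{k}\prod_{i=0}^{k-2}(1-\lambda_i)}$, is false: already at $k=2$, using that the up-down walk on vertices satisfies $D_{2,1}U_{1,2} = \frac{1}{2}I + \frac{1}{2}A$ with $A$ the random walk on the $1$-skeleton, one gets $\lambda(D_2)^2 = \lambda_2(D_{2,1}U_{1,2}) = \frac{1}{2}(1+\lambda_0)$ \emph{exactly}, which exceeds $\frac{1}{2}(1-\lambda_0)$ whenever $\lambda_0 > 0$; for $k \geq 3$ it fails even on the complete complex, where $\lambda(D_k)^2 \to \frac{k-1}{k} > \frac{1}{k}$. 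The corrected form is tight on the complete complex, is what Alev--Lau actually prove, and is what the paper's later uses of this theorem (e.g.\ in \pref{claim:td-si-are-nice} and \pref{cor:TD-exponential}, which take the form $\lambda(UD_k) \leq 1 - \Theta(1/k)$) implicitly rely on. Your proposal then says you will prove ``a matching spectral gap lower bound $\dots$ equivalently, an upper bound of $\frac{1}{k}\prod$ on its non-trivial spectrum.'' These are not equivalent: a spectral gap lower bound of $\frac{1}{k}\prod$ gives $\lambda_2 \leq 1 - \frac{1}{k}\prod$ (the truth), whereas an upper bound of $\frac{1}{k}\prod$ on the nontrivial spectrum gives $\lambda_2 \leq \frac{1}{k}\prod$ (the typo). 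You must pick the former; as written you are aimed at a false target.

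Second, the Garland decomposition as you have stated it is incorrect, in a way that matters. You claim the Dirichlet form of $U_{k-1,k}D_{k,k-1}$ on $\X[k]$ decomposes as $\frac{1}{k}\|f\|^2$ plus a nonnegative average of link Dirichlet forms. If that were so, the down-up walk would have spectral gap at least $\frac{1}{k}$ \emph{unconditionally}, independent of any expansion hypothesis; this already fails at $k=2$, where $\langle f,(I-UD_2)f\rangle = \mathbb{E}_v[\mathrm{Var}_v(f_v)]$ can be made arbitrarily close to $0$ for a balanced $f$ on a complex whose $1$-skeleton is two cliques joined by a bridge. The $\frac{1}{k}$ laziness factor lives in the \emph{up-down} walk $D_{k,k-1}U_{k-1,k}$ on $\X[k-1]$, not in $UD_k$ on $\X[k]$; these share a nonzero spectrum, so passing between them is fine, but it is a step that needs to be made explicit. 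Even after that, isolating the $\frac{1}{k}I$ laziness only separates off the non-lazy walk $M^+$, whose second eigenvalue still has to be driven down via the link expansions --- this is where the factor $\prod_{i}(1-\lambda_i)$ is earned, and it is not simply the inductive constant multiplied through. Your high-level plan --- a local-to-global induction in the Garland/Alev--Lau style, tracking a \emph{multiplicative} per-level contraction of the orthogonal component rather than an additive error --- is the right one, but both the quantitative target and the description of the decomposition need to be corrected before the induction can be carried out.
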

A critical observation of \cite{AlevJT2019,DiksteinD2019} is that by removing the laziness inherent in the down-up walk, one can substantially improve expansion. We state an improvement of their bounds for these swap walks by Gur, Lifshitz, and Liu \cite{GurLL2022}:\footnote{Formally this result is stated in \cite{GurLL2022} for $\lambda$-products (defined below), but follows for $\lambda$-two-sided HDX since their partifications are $\lambda$-products.}
\begin{theorem}[{\cite{GurLL2022}}] \label{thm:swap-walks-expand}
    Let \(X\) be a $\lambda$-two-sided high dimensional expander. Then 
    \[
    \lambda(S_{\ell_1,\ell_2}) \leq \sqrt{\ell_1\ell_2}\lambda
    \]
\end{theorem}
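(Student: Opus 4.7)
Since this is a theorem cited from \cite{GurLL2022}, my goal is to sketch a plausible route rather than reconstruct the original argument in detail. The plan breaks into two pieces: a reduction to the partite setting, and an inductive analysis of colored swap walks on $\lambda$-products.

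\emph{Reduction to the partite case.} If $X$ is a $\lambda$-two-sided HDX, then (as noted in the footnote to the statement) its partification $P = P(X)$ is essentially a $\lambda$-product. Operator norms of the swap walks transfer cleanly between $X$ and $P$: the (non-colored) swap walk $S_{\ell_1,\ell_2}(X)$ and the colored walk $S_{F_1,F_2}(P)$, for any disjoint $F_1,F_2 \subseteq [d]$ with $|F_1|=\ell_1, |F_2|=\ell_2$, have the same non-trivial spectrum up to averaging over the symmetric group action permuting parts. So it suffices to show $\lambda(S_{F_1,F_2}(P)) \leq \sqrt{|F_1||F_2|}\lambda$ on a $\lambda$-product.

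\emph{Inductive analysis of $S_{F_1,F_2}$.} I would induct on $|F_1|+|F_2|$. The base case $|F_1|=|F_2|=1$ is exactly the definition of a $\lambda$-product, giving $\lambda(S_{\{i\},\{j\}}) \leq \lambda$. For the inductive step, pick any $i \in F_1$ and factor
\[
S_{F_1,F_2} \;=\; D_{F_1, F_1 \setminus \{i\}} \circ S_{F_1\setminus\{i\},\,F_2\cup\{i\}} \circ D_{F_2\cup\{i\},\,F_2}^{\,*}
\]
(up to constants and using that the down operators between colored parts are contractive). Combined with \pref{obs:monotonicity-of-swap-walks}, this gives a recursion for $\lambda(S_{F_1,F_2})$ in terms of swap walks with shifted index sets. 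Iterating naively produces only a linear-in-$\ell$ bound, so the analysis has to be sharper.

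\emph{Main obstacle.} The difficulty is precisely in recovering the $\sqrt{\ell_1\ell_2}$ dependence instead of a weaker $\ell_1+\ell_2$ or $\max(\ell_1,\ell_2)$ dependence. My expectation is that one has to Fourier-decompose $\ell_2(X[F_1])$ and $\ell_2(X[F_2])$ into ``level'' components indexed by subsets, show that $S_{F_1,F_2}$ respects (or almost respects) this grading, and then analyze each level separately. On each level the natural multiplicity combinatorics produce hypergeometric-type coefficients whose norms scale like $\sqrt{\binom{\ell_1}{j}\binom{\ell_2}{j}}$; summing carefully over levels (and using $\lambda$-product structure to absorb the cross terms) should yield the clean $\sqrt{\ell_1\ell_2}\lambda$ bound. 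This is the only step where the argument is genuinely delicate; everything else is bookkeeping once the grading is set up.
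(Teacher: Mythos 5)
The paper does not prove this theorem --- it is cited from \cite{GurLL2022}, with the footnote stating only that the two-sided version follows because partifications of $\lambda$-two-sided HDX are $\lambda$-products, and \cite{GurLL2022} prove the bound for $\lambda$-products. Your first step (reduction to the partite setting via partification) matches this reduction in spirit, and transferring swap-walk spectra along $X \to P(X)$ is exactly the observation underlying \pref{claim:partitification}, so that part of your plan is fine.

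The second half has a concrete problem. The proposed factorization
\[
S_{F_1,F_2} = D_{F_1, F_1 \setminus \{i\}} \circ S_{F_1\setminus\{i\},\,F_2\cup\{i\}} \circ D_{F_2\cup\{i\},\,F_2}^{*}
\]
does not produce the swap walk. Leaving aside that the types don't compose under the paper's conventions (the rightmost factor should have domain $\ell_2(X[F_1])$, not $\ell_2(X[F_2])$, and the leftmost factor should be an up operator to land back in $\ell_2(X[F_1])$), the underlying sampling picture fails: once you extend $s_2 \in X[F_2]$ to $s_2' \in X[F_2\cup\{i\}]$ and then swap to $t \in X[F_1 \setminus \{i\}]$ disjoint from $s_2'$, the color-$i$ vertex of $s_2'$ is thrown away, and re-extending $t$ to a fresh $s_1 \in X[F_1]$ no longer enforces $s_1 \cup s_2 \in X$, which is the defining constraint of the swap walk. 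The only valid decomposition of this flavor that appears in the paper is the much weaker $S_{i,j'} = D_{j,j'} S_{i,j}$ of \pref{obs:monotonicity-of-swap-walks}, which yields only monotonicity. Beyond the factorization issue, you acknowledge that the Fourier/level-decomposition step is left as ``expectation'' --- but that step is the entirety of \cite{GurLL2022}'s improvement over the $|F_1||F_2|\lambda$ bound of \pref{thm:color-swap-walks-expand}, so without it the sketch does not reach the claimed $\sqrt{\ell_1\ell_2}\lambda$.
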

Recently, \cite{alev2023sequential} proved tighter quantitative bounds that are non-trivial for any underlying local-spectral expansion. We will not need such fine-grain control in this work.

In the partite case, the colored swap walks also have known expansion bounds assuming partite local-spectral expansion \cite{DiksteinD2019,GurLL2022}.
\begin{theorem}[{\cite{DiksteinD2019}}]\label{thm:color-swap-walks-expand}
    Let \(X\) be a partite \(\lambda\)-one-sided high dimensional expander. Then 
    \[
    \lambda(S_{F_1,F_2}) \leq \abs{F_1} \abs{F_2}\lambda.
    \]
\end{theorem}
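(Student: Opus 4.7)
The plan is to prove the theorem by induction on $n = |F_1| + |F_2|$, reducing each step to a swap walk inside a single link. Fix mean-zero $f \in \ell_2(X[F_1])$ and $g \in \ell_2(X[F_2])$ and recall that
\[
\iprod{f, S_{F_1,F_2} g} = \E_{t \in X[F_1 \cup F_2]}\bigl[f(t|_{F_1})\, g(t|_{F_2})\bigr].
\]
By symmetry of the claim in $F_1$ and $F_2$, I may assume $|F_2| \geq 2$; pick $j \in F_2$ and condition on the color-$j$ vertex $v \in X[j]$. The inner conditional bilinear form is exactly that of the colored swap walk $S_{F_1, F_2\setminus\{j\}}$ inside the link $X_v$, which is itself a partite $\lambda$-HDX on $d-1$ colors. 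This sets up the inductive reduction.

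First I would apply the inductive hypothesis in each link to split the conditional expectation into a ``main part'' $\phi(v)\psi(v)$, where $\phi(v) = \E[\rest{f}{X_v}]$ and $\psi(v) = \E[g_v]$ with $g_v(t') := g(\{v\}\cup t')$, plus a ``correction'' bounded by $|F_1|(|F_2|-1)\lambda \normt{\rest{f}{X_v}} \normt{g_v}$. Averaging over $v$ and using Cauchy--Schwarz (together with the fact that the link marginals recover the overall measures on $X[F_1]$ and $X[F_2]$) yields a total correction of at most $|F_1|(|F_2|-1)\lambda \normt{f}\normt{g}$. For the main part, observe that $\phi = S_{\{j\}, F_1}f$ is mean-zero, so the inductive bound $\lambda(S_{\{j\}, F_1}) \leq |F_1|\lambda$ (which applies since $|F_1|+1 < n$) gives $\normt{\phi} \leq |F_1|\lambda \normt{f}$, and hence $|\iprod{\phi,\psi}| \leq |F_1|\lambda \normt{f}\normt{g}$ by Cauchy--Schwarz. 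Summing the two contributions gives exactly the required bound $|F_1||F_2|\lambda \normt{f}\normt{g}$.

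The remaining piece is the base case $|F_1|=|F_2|=1$: the bipartite graph between any two color classes $X[i]$ and $X[j]$ is a $\lambda$-expander. Since the HDX definition only directly supplies expansion of $1$-skeletons of codim-$\geq 2$ links, this is not immediate. The strategy is a garland-style propagation: by iteratively conditioning on the vertices of the remaining $d-2$ colors (or equivalently, by a telescoping identity in the spirit of Oppenheim's trickling-down), one reduces to codim-$2$ links whose $1$-skeletons are precisely bipartite graphs between $X_s[i]$ and $X_s[j]$, which are $\lambda$-expanding by hypothesis; combining these local bounds propagates upward without loss to the full marginal bipartite graph. This base case is the main technical obstacle of the proof: the inductive step is essentially bookkeeping, but extracting ``global'' expansion of the two-color projection from the HDX definition requires a careful garland-type calculation.
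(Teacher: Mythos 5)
Your inductive step is correct and is essentially the standard argument from \cite{DiksteinD2019}: conditioning on a color-$j$ vertex $v$, the conditional bilinear form becomes the swap walk $S_{F_1,F_2\setminus\{j\}}$ in the link $X_v$, and the split into $\phi(v)\psi(v)$ (main) plus a mean-zero remainder combines to give $|F_1|(|F_2|-1)\lambda + |F_1|\lambda = |F_1||F_2|\lambda$ after Cauchy--Schwarz on each piece. I verified the bookkeeping: using $\phi = S_{\{j\},F_1}f$ with the inductive bound $\normt{\phi}\leq |F_1|\lambda\normt{f}$, and the contraction $\normt{\psi}\leq\normt{g}$, together with $\E_v[\normt{\rest{f}{X_v}}^2]=\normt{f}^2$ and $\E_v[\normt{g_v}^2]=\normt{g}^2$, all the pieces close.

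The base case is where the argument breaks down, and you have the intuition backwards when you say the garland-style propagation goes ``without loss.'' The $\lambda$-HDX hypothesis gives expansion of the multi-partite $1$-skeleton of every link, not of the pairwise bipartite graphs $(X_s[i],X_s[j])$ directly, and the trickling-down conversion from one to the other genuinely loses a multiplicative factor. This is precisely what the paper's own \pref{claim:hdx-to-product} records: a $\frac{\lambda}{1+\lambda}$-one-sided partite HDX is a $\lambda$-product, equivalently a $\mu$-HDX only gives bipartite expansion $\approx\frac{\mu}{1-\mu}$. Concretely, if you set $\gamma_{ij}=\lambda(S_{\{i\},\{j\}})$ and recurse as you propose, you land on inequalities of the form $\gamma_{ij}\leq\gamma_{ki}\gamma_{kj}+\lambda$, whose relevant fixed point is $\frac{1-\sqrt{1-4\lambda}}{2}=\lambda+\lambda^2+\cdots>\lambda$ (and you would additionally need an a~priori continuity argument to rule out the large branch of the quadratic). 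So the chain cannot produce the literal $\lambda(S_{\{i\},\{j\}})\leq\lambda$. The theorem as cited is really proved in \cite{DiksteinD2019} with the $\lambda$-product hypothesis (each pairwise bipartite graph in each link is a $\lambda$-bipartite expander), under which the base case is immediate; the HDX-to-product conversion is then a separate claim that costs the $\frac{1}{1-\lambda}$ factor. Your induction is right, but you should be explicit that either (a) you strengthen the hypothesis to ``$\lambda$-product,'' or (b) you accept that the final constant degrades to $|F_1||F_2|\cdot\frac{\lambda}{1-\lambda}$.
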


We will also need a bound on the spectrum of the averaging operators based on local-spectral expansion. This has been the subject of intense study \cite{KaufmanM2017,DinurK2017,DiksteinDFH2018,KaufmanO2020,alev2020improved}, with the best known bounds given by Alev and Lau \cite{alev2020improved}.

\paragraph{$\lambda$-Products} Another useful notion of partite high dimensional expansion is to look directly at the expansion of the bipartite graphs between different components. We say that \(X\) is a \(\lambda\)-product (a term coined in \cite{GurLL2022}), if for every face \(s \in X\) and every two sides \(X_s[i],X_s[j]\), the bipartite graph between the two sides is a \(\lambda\)-one-sided spectral expander. Dikstein and Dinur showed any partite local-spectral expander is a $\lambda$-product.
\begin{claim}[{\cite{DiksteinD2019}}]\label{claim:hdx-to-product}
    Let $X$ be a \(k\)-partite \(\frac{\lambda}{1+\lambda}\)-one-sided local-spectral expander. Then $X$ is also a $\lambda$-product.
\end{claim}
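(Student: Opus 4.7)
The plan is to deduce the claim as an immediate corollary of \pref{thm:color-swap-walks-expand} (the colored swap walk expansion bound, also from \cite{DiksteinD2019}), specialized to singleton color sets. The key observation is that for any face $s \in X$ and any two colors $i, j \notin \text{col}(s)$, the bipartite graph between $X_s[i]$ and $X_s[j]$ is exactly the colored swap walk $S_{\{i\}, \{j\}}$ in the link $X_s$: an edge in either structure is obtained by picking a $2$-face of $X_s$ of colors $\{i, j\}$ and partitioning it into its two color-singletons.

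Since the local spectral property is hereditary to links (a link of a face in $X_s$ is itself a link of a face in $X$, and codimensions agree), $X_s$ is also a partite $\tfrac{\lambda}{1+\lambda}$-one-sided local spectral expander. Applying \pref{thm:color-swap-walks-expand} to $X_s$ with $F_1=\{i\}$ and $F_2=\{j\}$ then yields
\[
\lambda(B_{ij}^s) \;=\; \lambda\bigl(S_{\{i\},\{j\}}(X_s)\bigr) \;\leq\; |F_1|\cdot|F_2|\cdot \tfrac{\lambda}{1+\lambda} \;=\; \tfrac{\lambda}{1+\lambda} \;\leq\; \lambda,
\]
so every such bipartite graph is a $\lambda$-one-sided expander and $X$ is a $\lambda$-product in the sense of \pref{def:bipartite-expander}.

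The reduction itself is essentially free, as \pref{thm:color-swap-walks-expand} does all the substantive work. A direct proof not routed through the swap walk bound is harder: lifting mean-zero test functions $f \in \ell_2^0(X_s[i])$ and $g \in \ell_2^0(X_s[j])$ to $\tilde f + \tilde g \in \ell_2^0(X_s[1])$ and using only the one-sided expansion of $X_s$'s full multipartite underlying graph yields only $\lambda(B_{ij}^s) \leq (m-1)\tfrac{\lambda}{1+\lambda}$ with $m = k-|s|$, which is too weak whenever $m \geq 3$. The main obstacle in such a direct approach is absorbing this $(m-1)$ factor, which requires inductively propagating bipartite expansion from codimension-$2$ links (whose underlying graphs are already bipartite) up to $s$ via a trickling-down argument---this is essentially the content of the proof of \pref{thm:color-swap-walks-expand} itself, which is why routing through that theorem is the cleanest path.
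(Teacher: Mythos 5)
The paper does not actually prove this claim; it imports it as a black box from \cite{DiksteinD2019}, so there is no in-paper proof to compare against. Judged on its own terms, your mechanical application of \pref{thm:color-swap-walks-expand} to singletons inside links is syntactically fine, and the observation that the bipartite graph between $X_s[i]$ and $X_s[j]$ is exactly $S_{\{i\},\{j\}}(X_s)$ is correct. But the derivation is circular at the level of the original source, and your own output betrays the problem: you conclude $\lambda(B_{ij}^s) \leq \frac{\lambda}{1+\lambda}$, which is \emph{strictly stronger} than the stated $\lambda$-product bound. In \cite{DiksteinD2019}, the colored swap walk theorem is stated with hypothesis ``$X$ is a $\lambda$-product'' (not ``$X$ is a $\lambda$-HDX''), and it is proved by induction on $|F_1|+|F_2|$; the base case $|F_1|=|F_2|=1$ at every link \emph{is} the $\lambda$-product condition. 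The version quoted as \pref{thm:color-swap-walks-expand} in this paper has already silently composed the swap walk theorem with \pref{claim:hdx-to-product}. So citing that theorem to ``prove'' the claim is citing a downstream corollary of it, and the spurious $\frac{1}{1+\lambda}$ improvement you obtain is exactly the loop closing on itself rather than a genuine strengthening.

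Your second paragraph correctly senses that a direct proof requires a trickling-down argument, but draws the wrong conclusion from that observation. The honest proof of \pref{claim:hdx-to-product} is precisely a partite variant of Oppenheim's trickling-down (\pref{thm:td}), inducting on the number of parts: for $m=2$ parts the link's $1$-skeleton is already the bipartite graph in question and the local assumption gives the bound directly; for $m>2$ one uses the one-sided expansion of vertex links (which are $(m-1)$-partite products by induction) to control the spectrum of each bipartite slice $(X_s[i],X_s[j])$ in the ambient link, losing exactly the $\frac{1}{1+\lambda}$ factor your derivation pretended to keep. If you want to give a proof and not a citation chase, you should write out that inductive spectral argument; it is not long, and it is not ``essentially the content of the proof of \pref{thm:color-swap-walks-expand}''---it is the \emph{base case} that the swap walk theorem is built on top of.
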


\paragraph{$\lambda$-Trickling-Down Complexes}

Oppenheim's trickling-down theorem \cite{oppenheim2018local} shows that in any connected complex, expansion in the top links `trickles down' to expansion at lower levels.
\begin{theorem}[The Trickling-Down Theorem {\cite{oppenheim2018local}}] \label{thm:td}
    Let \(X\) be a connected \(d\)-\maximal simplicial complex. Assume that for every \(s \in \X[d-2]\), it holds that all non-trivial eigenvalues of \(X_s\) are in \([-\tau,\lambda]\) for some \(\tau,\lambda \geq 0\). Then all non-trivial eigenvalues of \(X\) are in 
    \[
    \left[-\frac{\tau}{1+(d-2)\tau},\frac{\lambda}{1-(d-2)\lambda}\right].
    \]
\end{theorem}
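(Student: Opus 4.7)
The plan is to induct on $d$, reducing the statement to a single-step ``Oppenheim lemma.'' The base case $d=2$ is vacuous, since $X(d-2)=\{\emptyset\}$ and $X_\emptyset = X$. For $d \ge 3$, each vertex link $X_v$ is a connected $(d-1)$-uniform complex whose codimension-$2$ links coincide with the codimension-$2$ links of $X$ through $v$; so applying the inductive hypothesis inside each $X_v$ would bound its graph eigenvalues in $[-\tau', \lambda']$ with $\tau' = \tau/(1+(d-3)\tau)$ and $\lambda' = \lambda/(1-(d-3)\lambda)$. A direct arithmetic check gives $\lambda'/(1-\lambda') = \lambda/(1-(d-2)\lambda)$ and $\tau'/(1+\tau') = \tau/(1+(d-2)\tau)$, so it suffices to prove the single-step lemma: \emph{if $X$ is a connected $d$-uniform complex and the graph of every vertex link has nontrivial eigenvalues in $[-\tau', \lambda']$, then the graph of $X$ has nontrivial eigenvalues in $[-\tau'/(1+\tau'),\, \lambda'/(1-\lambda')]$.}

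To attack this single-step lemma, I would fix a nontrivial eigenvalue $\mu$ of the normalized adjacency operator $A$ of $X$'s underlying graph, with eigenfunction $f \perp \mathbf{1}$ normalized so that $\|f\|_2 = 1$, and write $f_v = f|_{X_v(1)}$ and $\bar f_v = (Af)(v) = \mu f(v)$. The first step would be the Garland-style variance identity
\[
\mathbb{E}_v \mathrm{Var}_{X_v}(f_v) \;=\; \mathbb{E}_v\bigl[(Af^2)(v) - (Af)(v)^2\bigr] \;=\; 1-\mu^2,
\]
which, paired with the link expansion hypothesis applied to the centered function $f_v - \bar f_v \mathbf{1}$ (orthogonal to $\mathbf{1}_{X_v}$ with squared norm $\mathrm{Var}_{X_v}(f_v)$), would give
\[
-\tau'(1-\mu^2) \;\leq\; \mathbb{E}_v \langle f_v, A_v f_v\rangle_{X_v} - \mu^2 \;\leq\; \lambda'(1-\mu^2).
\]

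The crux is a combinatorial identity that collapses the average of link quadratic forms to an inner product on $X(1)$: unfolding the link measures, sampling $v \in X(1)$ and then an edge $\{u,w\} \in X_v(2)$ is equivalent to sampling a triangle $T \in X(3)$ together with a uniform distinguished vertex $v \in T$. Marginalizing out $v$ recovers exactly the edge measure $\mathbb{P}_2$ on $X(2)$ (via $\sum_{T \supset e} \mathbb{P}_3(T) = 3\mathbb{P}_2(e)$, which is immediate from the definition of the induced measures), so
\[
\mathbb{E}_v \langle f_v, A_v f_v\rangle_{X_v} \;=\; \mathbb{E}_{\{a,b\} \sim \mathbb{P}_2} f(a) f(b) \;=\; \langle f, Af\rangle \;=\; \mu.
\]
Substituting gives $\mu(1-\mu) \in [-\tau'(1-\mu^2),\, \lambda'(1-\mu^2)]$, and dividing by $1-\mu > 0$ (strictly positive because connectivity of $X$ makes the trivial eigenvalue $1$ simple, ruling out nontrivial $\mu = 1$) yields $\mu \in [-\tau'/(1+\tau'),\, \lambda'/(1-\lambda')]$ as required.

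The hard part will be the combinatorial collapse in the third paragraph; this is what lets one avoid explicitly analyzing any ``triangle walk'' operator and reduces the last step to an elementary one-variable inequality. The rest is first-moment bookkeeping plus one application of the expander-mixing inequality per link, and the two endpoints of the conclusion are formally symmetric under $t \mapsto t/(1-t)$ versus $t \mapsto t/(1+t)$.
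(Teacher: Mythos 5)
This theorem is quoted in the paper from \cite{oppenheim2018local} without proof, so there is no in-paper argument to compare against; your job is simply judged on correctness, and the proof is correct. It recovers Oppenheim's original argument: reduce to a single-step lemma via the exact identities $\lambda'/(1-\lambda') = \lambda/(1-(d-2)\lambda)$ and $\tau'/(1+\tau') = \tau/(1+(d-2)\tau)$, then prove the one-step bound by Garland localization. I checked the key steps: the variance identity $\mathbb{E}_v \mathrm{Var}_{X_v}(f_v) = 1-\mu^2$ (using $A\mathbf{1}=\mathbf{1}$ and self-adjointness of $A$ under the measure-weighted inner product), and the combinatorial collapse $\mathbb{E}_v\langle f_v, A_v f_v\rangle_{X_v} = \langle f, Af\rangle$, which after unwinding the link measure reduces to the counting identity $d\binom{d-1}{2} = 3\binom{d}{3}$. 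Two points are left tacit and would be worth stating explicitly in a full write-up: (i) each vertex link $X_v$ is itself connected so the inductive hypothesis genuinely applies — this holds because the paper's definition of connectivity already requires every link's base graph to be connected, and links of $X_v$ are links of $X$; (ii) the passage from $\mu(1-\mu)\leq\lambda'(1-\mu^2)$ to $\mu\leq\lambda'/(1-\lambda')$ requires $\lambda'<1$, equivalently $(d-2)\lambda<1$, which is the same implicit hypothesis that makes the claimed interval well-defined, so nothing is lost, but it should be flagged. With those small annotations the argument is complete and sound.
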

As discussed in the introduction, the Trickling-Down Theorem introduces a `phase transition' in the local-to-global behavior of HDX when co-dimension $2$ links have expansion $\frac{1}{d-1}$. Past this point, expansion of the $1$-skeleton of $X$ can be inferred from expansion of the links. Before it, no guarantee is possible. This threshold-type behavior naturally suggests studying following definition of complexes `approaching' the TD-barrier.

\begin{definition}[$\lambda$-Trickling-Down Complex]
    We call a $d$-\maximal simplicial complex $\lambda$-Trickling-Down ($\lambda$-TD) if it is connected, and all co-dimension 2 links have (one-sided) expansion $\frac{\lambda}{d-1}$.
\end{definition}

\paragraph{Spectral Independence} Spectral independence is a closely related notion to local-spectral expansion introduced in the sampling literature \cite{anari2021spectral}. Here we give an equivalent version of the definition in terms of link expansion, and refer the reader to \cite{anari2021spectral} for the original definition and their equivalence.
\begin{definition}[Spectral independence]
    For $\eta>0$, a \(d\)-\maximal complex $X$ is called $\eta$-spectrally independent ($\eta$-SI) if the graph underlying every co-dimension $j$ link is a (one-sided) $\frac{\eta}{j}$-expander.
\end{definition}

\subsection{Nice Complexes} \label{sec:nice}
In the following sections of the paper we prove results for various standard notions of high dimensional expansion including two sided high dimensional expanders, partite high dimensional expanders and \(\sqrt{d}\)-skeletons of one-sided high dimensional expanders. To compactify our statements, we bundle these notions into one definition henceforth dubbed `nice complexes'.
\begin{definition}[Nice complex] \label{def:nice-complexes}
    Let \(c > 0\). A $k$-\maximal complex \(X\) is \(c\)-\emph{nice} if it meets one of the following conditions:
    \begin{enumerate}
        \item $X$ is a $2^{-ck}$-two-sided HDX.
        \item $X$ is a $2^{-ck}$-one-sided \(k\)-partite HDX (or a skeleton thereof)\footnote{A skeleton of a $2^{-ck}$-two-sided HDX is still a $2^{-ck}$-two-sided HDX, so this distinction is not meaningful in the previous condition.}.
        \item $X$ is the $k$-skeleton of a $d$-\maximal complex $Y$ with $\lambda(U_{d-1}D_d) = 1- \frac{c}{d}$ and $d \geq 2k^2$.
    \end{enumerate}
We call $X$ \emph{\(c\)-locally nice} if every link of the complex is also $c$-nice.
\end{definition}
\begin{remark}
    We note that the two-sided condition of \pref{def:nice-complexes} could be relaxed to instead assuming that for some $\alpha \in [0,1]$, $X$ is the $k$-skeleton of a $\frac{1}{d}2^{-cd^{\alpha}}$-two-sided $d$-\maximal HDX for $d \geq k^{\frac{2}{1+\alpha}}$. We prove a Chernoff bound in this regime which we believe is of independent interest, but for simplicity focus just on the $\alpha=1$ case for the rest of the paper.
\end{remark}

Note that skeletons of nice complexes are nice (see \cite[Theorem 3.5]{lee2023parallelising} for the third case), and that if $X$ is \(c\)-nice for one of the first two items, it is also \(c\)-locally nice. It is simple computation to verify that skeletons of \(\lambda\)-TD and \(\eta\)-SI complexes are locally nice.

\begin{claim} \label{claim:td-si-are-nice}
    Let $\lambda < 1$ and $\eta > 0$. For any $d$-\maxsize complex $Y$:
    \begin{enumerate}
        \item If \(Y\) is \(\lambda\)-TD, then its \(\sqrt{(1-\lambda)d}\)-skeleton is \(e^{-1}\)-locally nice.
        \item If \(Y\) is \(\eta\)-SI, then its \(\sqrt{\frac{d}{\max\{2,\eta\}}}\)-skeleton is \(e^{-2}\)-locally nice.
    \end{enumerate}
\end{claim}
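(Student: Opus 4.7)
Plan: The claim reduces to verifying that $X$ and each of its links satisfy condition (3) of \pref{def:nice-complexes}, namely that they are $k$-skeletons of some $d$-\maximal complex with $\lambda(U_{d-1}D_d) \leq 1 - c/d$ and $d \geq 2k^2$. Since both the TD and SI properties are inherited by links (co-dimension $2$ sub-links of $Y_s$ coincide with co-dimension $2$ links of $Y$), it suffices to check the conditions for the complex itself and then rerun the argument inside each $Y_s$.

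First I would establish link-level expansion bounds. In the TD case, applying \pref{thm:td} (Oppenheim's trickling down) inductively starting from the assumed co-dimension-$2$ bound $\lambda/(d-1)$, one checks by induction on co-dimension that every link $Y_s$ with $|s| = j$ has $1$-skeleton one-sided expansion at most $\mu_j \coloneqq \lambda/(d - 1 - (d-j-2)\lambda)$. In the SI case, the definition of $\eta$-spectral independence directly gives $\mu_j \leq \eta/(d - j)$.

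Next I would plug these bounds into the refined Alev--Lau inequality (\pref{thm:gap-of-down-operator-single-step}):
\[
\lambda(U_{d-1}D_d) = \lambda(D_d)^2 \leq \tfrac{1}{d}\prod_{j=0}^{d-2}(1 - \mu_j).
\]
In the TD case, after the substitution $m = d - j - 1$ each factor becomes $(d-1-m\lambda)/(d-1-(m-1)\lambda)$, so the product telescopes to $(d-1)(1-\lambda)/(d-1) = 1 - \lambda$, yielding $\lambda(U_{d-1}D_d) \leq (1 - \lambda)/d \leq 1 - e^{-1}/d$ for all $d \geq 2$. In the SI case the product does not telescope, but the standard estimate $\prod_{m=2}^{d}(1 - \eta/m) \leq \exp(-\eta(\ln d - 1))$ comfortably delivers $\lambda(U_{d-1}D_d) \leq 1 - e^{-2}/d$.

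Finally I would check the size constraint $d \geq 2k^2$ and then localize. For SI, $k = \sqrt{d/\max\{2,\eta\}}$ gives $2k^2 \leq d$ directly; for TD, $k = \sqrt{(1-\lambda)d}$ gives $2k^2 = 2(1-\lambda)d \leq d$ in the primary regime $\lambda \geq 1/2$. For any face $s \in X$, the link $X_s$ is the $(k - |s|)$-skeleton of $Y_s$, and $Y_s$ retains TD or SI structure with at worst the original parameter but on a complex of dimension $d - |s|$, so applying the three steps above inside $Y_s$ establishes $c$-niceness of $X_s$. The main obstacle is the telescoping identity in the TD case, which collapses cleanly only because of the precise form of $\mu_j$ produced by trickling down; a secondary subtlety is the small-$\lambda$ regime of the TD bound, where the size constraint tightens and has to be verified carefully against the stronger spectral estimates derived above.
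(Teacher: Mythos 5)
There is a genuine gap: you apply the Alev--Lau product directly to the full $d$-\maximal complex $Y$, whereas the paper (and the claim's stated constants) require instead passing to a truncated \emph{skeleton} of $Y$ as the witness complex in condition (3) of \pref{def:nice-complexes}. Concretely, the paper takes the witness complex $Y'$ to be the $(1-\lambda)d$-skeleton (TD case) or the $\lfloor d/\max\{2,\eta\}\rfloor$-skeleton (SI case). For that skeleton, every link that appears in the Alev--Lau product is far enough from the trickling-down barrier that its $1$-skeleton expansion is uniformly $O\!\big(\tfrac{\lambda}{(1-\lambda)d}\big)$ (resp.\ $O(\tfrac{\eta}{d})$), so the product over the $\sim(1-\lambda)d$ (resp.\ $\sim d/\max\{2,\eta\}$) terms stays bounded below by $e^{-\lambda}\geq e^{-1}$ (resp.\ $e^{-2}$). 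Your version, keeping all $d-1$ levels of $Y$, does not produce a universal constant.

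In the TD case your telescoping is formally correct: $\prod_{j=0}^{d-2}(1-\mu_j)=1-\lambda$. But (after fixing the sign of the Alev--Lau inequality; as written in \pref{thm:gap-of-down-operator-single-step} it gives $\lambda(D_d)^2 \leq \frac{1}{d}\prod(\cdot)$, which must be read as $\leq 1-\frac{1}{d}\prod(\cdot)$, else the complete-complex case is already contradicted) this gives $\lambda(U_{d-1}D_d)\leq 1-\tfrac{1-\lambda}{d}$, i.e.\ $c=1-\lambda$. That is weaker than $e^{-1}$ precisely when $\lambda>1-e^{-1}$, so your chain ``$\leq (1-\lambda)/d \leq 1 - e^{-1}/d$'' only looks like it closes the argument because of the sign slip; with the correct sign the inequality $1-\tfrac{1-\lambda}{d}\leq 1-\tfrac{e^{-1}}{d}$ requires $\lambda\leq 1-e^{-1}$, and the claim is supposed to hold for all $\lambda<1$. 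In the SI case the failure is unconditional: $\prod_{m=2}^{d}(1-\eta/m)\to 0$ as $d\to\infty$, so there is no constant $c$ at all; the upper bound $\prod\leq \exp(-\eta(\ln d -1))$ that you cite is the wrong direction for the purpose (you need a lower bound on the product to conclude a gap), and in fact confirms the product vanishes. Both cases are rescued only by truncating to the appropriate skeleton before running Alev--Lau. The localization step (links of TD/SI complexes are TD/SI, and links of skeletons are skeletons of links with at least the same $d/k$ ratio) is fine and matches the paper.
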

\begin{proof}
    We first prove niceness. The first item follows from observing by \pref{thm:td} and \pref{thm:gap-of-down-operator-single-step} that the down-up walk on the $k=(1-\lambda)d$ skeleton of a $\lambda$-TD complex has \(\lambda(U_{k-1}D_{k}) = 1-\frac{e^{-1}}{(1-\lambda)d}\).

    For the second item, let $k'=\frac{d}{\max\{2,\eta\}}$. By definition one can check the \(k'\)-skeleton of $Y$ is a $\min \{\frac{3}{d}, \frac{\eta}{(1-1/\eta)d}\}$-one-sided local-spectral expander and \pref{thm:gap-of-down-operator-single-step} implies $\lambda(U_{k'-1}D_{k'}) \leq 1-\frac{e^{-2}}{k'}$.

    For local niceness, note that \(\eta\)-SI and \(\lambda\)-TD complexes have \(\eta\)-SI and \(\lambda\)-TD links respectively. Further, one can check directly that if \(X\) is a \(k\)-skeleton of \(d\)-\maximal \(Y\), then \(X_s\) is a \((k-|s|)\)-skeleton of \(Y_s\), which is \((d-|s|)\)-\maximalpunc. The rate \(\frac{d-|s|}{k-|s|}\geq \frac{d}{k}\) thus we can apply the same proof of niceness to every link as well.
\end{proof}

\section{Concentration on HDX} \label{sec:chernoff}
With `nice' complexes so defined, we may finally state our main concentration bound in full formality.
\begin{theorem}[Sampling on HDX]\label{thm:hdx-is-sampler}
    Fix $c >0$. There are constants $c_1,c_2>0$ such that for any $k \in \mathbb{N}$, $i < k \leq d$, and \(d\)-\maximal \(c\)-nice complex $X$, the inclusion graph \((\X[k],\X[i])\) is an \((\varepsilon, \beta)\)-function sampler for 
    \begin{enumerate}
        \item \(\beta = \frac{c_1}{\varepsilon}\exp\left(-c_2 \varepsilon^2 \frac{k}{i}\right)\) in the non-partite cases
        \item \(\beta = \frac{c_1}{\varepsilon}\exp\left(-c_2 \varepsilon^8 \frac{k}{i}\right)\) in the partite case
    \end{enumerate}
    Moreover, if \(X\) is a \(k\)-skeleton of a \(d\)-\maximal complex \(Y\), then for every \(k < k' \leq d\), the graph $(\Y[k'],\Y[i])$ is also an \((\varepsilon, O(\beta))\)-function sampler.
\end{theorem}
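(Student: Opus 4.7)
The plan decouples the argument into two pieces following the introduction's roadmap: first a ``$k$ vs.\ $1$'' Chernoff--Hoeffding bound for nice HDX, and then a bootstrap to general $i$ via an auxiliary faces complex.

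For the $i=1$ case, I would bound the moment generating function $\Phi_k(r) := \mathbb{E}_{s \in \X[k]}[\exp(r \sum_{v \in s} f(v))]$ recursively, in the spirit of Healy's proof of the expander Chernoff bound. Write $z_\ell(t) := \exp(r \sum_{v \in t} f(v))$ and sample $s \in \X[k]$ as $s = t_1 \dunion t_2$, where $(t_1,t_2)$ is distributed as an edge of the swap walk $S_{k/2,k/2}$ (or its colored variant in the partite case). The bipartite expander-mixing inequality \eqref{eq:basic-bipartite-expanders} together with the swap-walk spectral bounds of \pref{thm:swap-walks-expand} and \pref{thm:color-swap-walks-expand} yields
\[
\Phi_k(r) \;\leq\; (1-\gamma)\,\Phi_{k/2}(r)^2 \;+\; \gamma\,\mathbb{E}[z_{k/2}^2],
\]
where $\gamma$ is the spectral expansion of the relevant swap walk. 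The error term $\mathbb{E}[z_{k/2}^2]$ is handled by re-sampling a random $(k/2)$-face as a uniform subface of a random $k$-face, reducing it to a concentration question inside the complete complex on the $k$-set $s$; classical Hoeffding then gives $\mathbb{E}[z_{k/2}^2] \leq e^{O(r^2 k)}\,\Phi_k(r)$, which absorbs into the main term whenever $\gamma\,e^{O(r^2 k)} \ll 1$. Unwinding the recursion over $\log_2 k$ levels and optimizing $r$ produces tail $\exp(-\Omega(\varepsilon^2 k))$ in the two-sided HDX case (where $\gamma = 2^{-\Omega(k)}$), and $\exp(-\Omega(\varepsilon^2 \sqrt{k}))$ when $\gamma$ is only polynomially small (TD/SI skeletons); the latter loss is precisely compensated by the $\sqrt{d}$-skeleton appearing in the definition of niceness.

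To bootstrap from $i=1$ to general $i$, I would construct the \emph{faces complex} $\mathcal{Y}$, a $(k/i)$-uniform complex whose top faces are random partitions $(I_1,\ldots,I_{k/i})$ of $s \in \X[k]$ into size-$i$ blocks and whose vertices are labeled by $\X[i]$. The links of $\mathcal{Y}$ are governed by swap walks of $X$, so by \pref{thm:swap-walks-expand} the complex $\mathcal{Y}$ itself satisfies the hypotheses of the $i=1$ bound; applying it with $f: \X[i] \to [0,1]$ gives concentration of $\tfrac{i}{k}\sum_j f(I_j)$ at rate $\exp(-\Omega(\varepsilon^2 k/i))$. Since each block $I_j$ is \emph{marginally} a uniform $i$-subface of $s$, averaging this correlated-partition concentration over the random partition of each fixed $s$ (plus a short Markov step) converts it into the desired unconditional inclusion-sampling statement with only a constant-factor loss.

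The ``moreover'' clause for $k$-skeletons $X$ of a larger complex $Y$ follows by factoring $D_{k',i} = D_{k,i}\,D_{k',k}$ and invoking \pref{thm:gap-of-down-operator} to control $\lambda(D_{k',k})$: a standard $\ell_2$ argument then transfers sampling at level $k$ to level $k'$ with only an absolute-constant loss in $\beta$. The main obstacle I anticipate is the partite case, where the natural auxiliary object is a ``swap complex'' of ordered tuples of disjoint $i$-sets; proving Chernoff for it in the sparse regime $n=\Theta(k)$ appears genuinely harder, and the worse $\varepsilon^8$ exponent in the theorem seems to reflect a reduction to a denser regime $n \gtrsim k/\varepsilon^5$. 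Matching the non-partite $\varepsilon^2$ rate here would likely require new ideas, and is indeed flagged by the authors as an open question.
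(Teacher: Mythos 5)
Your plan for the two-sided and partite cases matches the paper's: the swap-walk MGF recursion (\pref{prop:concentration-splitting}, \pref{claim:Chernoff-inner-prod-bound}), the faces-complex bootstrap (\pref{prop:faces-to-inclusion}, \pref{lem:weird-distribution-to-uniform}), and your observation that the sparse-regime swap complex is the obstruction to the optimal $\varepsilon^2$ exponent in the partite case are all on target. But there are two genuine gaps.

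First and most importantly, you propose to handle the TD/SI (spectral-gap) case by running the same swap-walk recursion with a ``polynomially small $\gamma$.'' This does not go through. Case 3 of \pref{def:nice-complexes} only assumes a down-up walk spectral gap $\lambda(U_{d-1}D_d) \leq 1 - c/d$ on an ambient complex $Y$; it does not give two-sided or even partite local-spectral expansion, which are exactly the hypotheses under which \pref{thm:swap-walks-expand} and \pref{thm:color-swap-walks-expand} control swap walks. For a general complex satisfying only a Poincar\'e-type gap, the swap walks need not expand at all. The paper handles this case by an entirely different argument: the Herbst method of \pref{sec:herbst}, which proves exponential concentration for $\nu$-Lipschitz functions using only the down-up gap (via the approximate Efron--Stein inequality, \pref{thm:poincare-to-concentration}). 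It is then applied directly to $U_{i,d}g$ (which has $\frac{i^2}{d}$-bounded difference) at the top level $d$, and lowered to level $k \leq \sqrt{d/2}$ by \pref{lem:lowering}; the faces-complex bootstrap is not used at all in this case. Your plan has no mechanism to substitute for this, so the spectral-gap branch of the theorem remains unproven.

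Second, the ``moreover'' clause. Factoring $D_{k',i} = D_{k,i} D_{k',k}$ and invoking $\lambda(D_{k',k})$ yields at best Chebyshev-type control: a spectral-norm bound on $D_{k',k}$ cannot transfer an exponentially small tail at level $k$ to an exponentially small tail at level $k'$ with only a constant factor lost. The paper's \pref{lem:raising} is instead a direct probabilistic argument: draw $s' \in \Y[k']$, then $s \subset s'$ uniformly, and use concentration of the \emph{complete complex} inside $s'$ to show that a bad $s'$ forces a random subface $s$ to be bad with constant probability, whence $\Pr_{s'}[\text{bad}] \leq \Pr_s[\text{bad}]\cdot(1-\pi)^{-1}$ for a small complete-complex tail $\pi$. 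Replacing this with an operator-norm argument would lose polynomial factors in $\varepsilon$ rather than the stated absolute constant.
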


A few remarks are in order. First, in the $\lambda$-TD setting, it is worth noting the constant in the exponent can be taken to be $c_2=\Theta((1-\lambda)^{1/2})$. Second, we note that in the partite case, the optimal $\varepsilon^2$-dependence in the exponent can be recovered whenever $k \leq O(\varepsilon^6 d)$ (see \pref{sec:concentration-all-nice-hdx} for details). 

Finally, by applying \pref{claim:opposite-sampler} and \pref{claim:additive-multiplicative-sampler-equivalence}, we also get the following useful corollary for multiplicative sampling of the reversed inclusion graph $(\X[i],\X[k])$.
\begin{corollary} \label{cor:hdx-is-mult-sampler}
    Let \(X\) be a \(k\)-\maximal \(c\)-\emph{nice} complex.
    Then for every $\delta,\beta \in (0,1)$, $(\X[i],\X[k])$ is a $(\alpha, \beta, \delta)$-sampler for
        \begin{enumerate}
        \item \(\alpha  = \frac{c'}{\beta\delta}e^{-\Omega_c(\beta^2\delta^2\frac{k}{i})}\) in the non-partite cases
        \item \(\alpha = \frac{c'}{\beta\delta}e^{-\Omega_c(\beta^8\delta^8\frac{k}{i})}\) in the partite case.
    \end{enumerate}
\end{corollary}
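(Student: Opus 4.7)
The plan is to derive this corollary by combining \pref{thm:hdx-is-sampler} with the two general-purpose sampler manipulation claims in the preliminaries: \pref{claim:additive-multiplicative-sampler-equivalence}, which converts additive samplers into multiplicative ones, and \pref{claim:opposite-sampler}, which swaps the roles of the two sides $L$ and $R$ of a bipartite sampler. Since every indicator function is bounded in $[0,1]$, the function-additive guarantee of \pref{thm:hdx-is-sampler} immediately yields the set-level additive sampling statement that these claims are phrased for, so no extra work is needed to move between those two notions.

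Concretely, I would set $\varepsilon = \beta\delta/4$ and apply \pref{thm:hdx-is-sampler} to conclude that the inclusion graph $(\X[k],\X[i])$ is an $(\varepsilon,\beta_1)$-additive sampler with $\beta_1 = \frac{c_1}{\varepsilon}\exp(-c_2 \varepsilon^2 \frac{k}{i})$ (and the analogous $\varepsilon^8$-in-exponent version in the partite case). Next I would apply \pref{claim:additive-multiplicative-sampler-equivalence} with $C = 2/\delta$, which converts $(\X[k],\X[i])$ into a $(C\varepsilon,\beta_1,1/C) = (\beta/2,\beta_1,\delta/2)$-multiplicative sampler. Finally, flipping the sides via \pref{claim:opposite-sampler} with $\delta' = \delta$ (noting $\delta' > \delta/2$ and that the size condition $C\varepsilon < \min\{\delta/2,0.5\}/(1+\delta/2)$ is satisfied for, say, $\beta < 1$ of interest) yields that $(\X[i],\X[k])$ is an $(\alpha,\beta,\delta)$-multiplicative sampler with
\[
\alpha \;=\; \frac{1-(\beta/2)(1+\delta/2)}{(\beta/2)(\delta/2)}\,\beta_1 \;\leq\; \frac{O(1)}{\beta\delta}\cdot \beta_1 \;=\; \frac{O(1)}{\beta\delta}\cdot\frac{c_1}{\beta\delta/4}\exp\!\left(-\Omega_c\left(\beta^2\delta^2 \tfrac{k}{i}\right)\right).
\]
Absorbing the polynomial $(\beta\delta)^{-1}$ prefactor into the exponent (at the cost of a slightly smaller hidden constant in $\Omega_c$, which is valid whenever $\beta^2\delta^2 k/i$ dominates $\log(1/(\beta\delta))$ and otherwise the bound is trivial) gives the stated form $\alpha = \frac{c'}{\beta\delta}\exp(-\Omega_c(\beta^2\delta^2 \frac{k}{i}))$. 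The partite case is identical, simply substituting $\beta^8\delta^8$ in place of $\beta^2\delta^2$ in the exponent to track the weaker dependence in \pref{thm:hdx-is-sampler}.

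No step here is really an obstacle, since this is a bookkeeping corollary—the mathematical content lives entirely in \pref{thm:hdx-is-sampler}, and \pref{claim:opposite-sampler}, \pref{claim:additive-multiplicative-sampler-equivalence} are proved in the appendix. The only thing to be careful about is that the flipping claim requires the source multiplicative sampler to have $\alpha_{\text{old}}$ small relative to $\delta_{\text{old}}$, which forces the mild (and unavoidable for flipping) parameter constraint $\beta \lesssim \delta$; outside that regime the stated bound on $\alpha$ is already vacuous, so nothing is lost.
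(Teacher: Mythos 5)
Your overall plan is exactly what the paper itself indicates: the corollary is stated as a direct application of \pref{claim:opposite-sampler} and \pref{claim:additive-multiplicative-sampler-equivalence} to \pref{thm:hdx-is-sampler}, and the paper gives no independent proof. Your bookkeeping (set $\varepsilon = \beta\delta/4$, pass through additive $\to$ multiplicative $\to$ flipped, absorb the $\poly(1/(\beta\delta))$ prefactor by shrinking the constant in $\Omega_c$) matches what the authors must have in mind.

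The one thing that does not hold up is the last sentence, that ``outside that regime the stated bound on $\alpha$ is already vacuous.'' It is not. The flipping hypothesis $\alpha_{\mathrm{old}} < \min\{\delta_{\mathrm{old}}, 0.5\}/(1+\delta_{\mathrm{old}})$, combined with $\alpha_{\mathrm{old}} = C\varepsilon$ and $\delta_{\mathrm{old}} = 1/C$, hard-caps $\varepsilon = O(1/C^2) = O(\delta_{\mathrm{old}}^2)$ regardless of $\beta$. Since $\delta_{\mathrm{old}}$ must be strictly below the target $\delta$, this forces $\varepsilon = O(\delta^2)$, so when $\beta > \delta$ the best exponent this route yields is $\Omega(\delta^4 k/i)$, not $\Omega(\beta^2\delta^2 k/i)$. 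Meanwhile the corollary's stated $\alpha = \frac{c'}{\beta\delta}e^{-\Omega_c(\beta^2\delta^2 k/i)}$ is genuinely non-vacuous there (take, e.g., $\beta = 0.9$, $\delta = 0.1$, and $k/i$ large: then $\alpha$ is exponentially small, well below $1$). The honest account is that in the regime $\beta \gtrsim \delta$ one instead proves the stronger statement with failure probability $\beta_{\mathrm{new}} = \Theta(\delta) < \beta$, and pays in the exponent: $\beta^2\delta^2$ should really be read as $\min\{\beta,\delta\}^2\delta^2$. The authors' remark immediately after the corollary, that they ``typically take $\beta$ and $\delta$ \dots to be constant, in which case the bounds are tight,'' suggests the statement is intended with exactly this looseness, and no downstream use is affected. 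But your dismissal of the regime as vacuous is incorrect and should be replaced with this explicit concession.
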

We note that in application, we typically take $\beta$ and $\delta$ in the above to be constant, in which case the bounds are tight in both regimes.

In this section, we prove \pref{thm:hdx-is-sampler} only for the two-sided case, and defer the full proof to \pref{sec:concentration-all-nice-hdx}. The proof (and section) are split into two main components: a new `Chernoff-Hoeffding'-style bound for simplicial complexes based on expansion of swap walks, and a bootstrapping technique that lifts this bound to full inclusion sampling on two-sided HDX.

\subsection{Chernoff-Hoeffding}
In this section we introduce a new technique to prove `$k$ vs.\ $1$' concentration on simplicial complexes $X$ whose underlying strength depends on the expansion of the \textit{swap walks} of $X$. The general form is fairly cumbersome (we refer the reader to \pref{prop:concentration-splitting}, which bounds the moment generating function in terms of swap walk expansion), and for readability we only state the result here for $\frac{1}{k}2^{-k^{\alpha}}$-two-sided HDX, interpolating between $\frac{1}{2k}$-HDX at $\alpha=0$, and $2^{-\Omega(k)}$-HDX at $\alpha=1$.

\begin{theorem}\label{thm:split-concentrate}
    Let $\alpha \in [0,1]$ and $X$ be a $k$-\maximal $\frac{1}{k}2^{-k^{\alpha}}$-two-sided local-spectral expander. Then for any $f: \X[1] \to [0,1]$ and $\varepsilon >0$:
    \begin{enumerate}
        \item Upper tail: 
        \[
        \underset{\{v_1,\ldots,v_k\} \in \X[k]}{\Pr}\left[\frac{1}{k}\sum\limits_{i=1}^k f(v_i) - \mu > \varepsilon\right] \leq 2\min\left\{k,c_\alpha\right\}\exp\left(-\frac{1}{12}\varepsilon^2k^{\frac{1+\alpha}{2}}\right)
        \]
        \item Lower tail: 
        \[
        \underset{\{v_1,\ldots,v_k\}\in \X[k]}{\Pr}\left[\frac{1}{k}\sum\limits_{i=0}^k f(v_i) - \mu < -\varepsilon\right] \leq 2\min\left\{k,c_\alpha\right\}\exp\left(-\frac{1}{12}\varepsilon^2k^{\frac{1+\alpha}{2}}\right)
        \]
        where $c_\alpha \leq e^{\frac{(\frac{3}{\alpha})^{\frac{1}{\alpha}}}{e^{\alpha}}}$. 
    \end{enumerate}
\end{theorem}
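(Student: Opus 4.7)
The plan is to follow the Chernoff moment-generating-function (MGF) approach sketched in Part~I of the proof overview. By a standard truncation together with the Hoeffding-style expansion $e^{rf(v)} \leq 1 + (e^r-1)f(v)$, it suffices to prove the upper tail for an indicator $f = \mathbf{1}_A$; the lower tail is then obtained by applying the bound to $1-f$. Fix $r > 0$ to be optimized at the end, and for each $\ell \leq k$ define the partial MGF
\[
z_\ell(t) = e^{r|t\cap A|}, \qquad M_\ell = \mathbb{E}_{t\in \X[\ell]}[z_\ell(t)].
\]
Markov's inequality gives $\Pr[\tfrac{1}{k}\sum_i f(v_i) - \mu > \varepsilon] \leq e^{-rk(\mu+\varepsilon)} M_k$, so the problem reduces to upper bounding $M_k$.

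To set up the recursion, recall that a uniform $s \in \X[\ell]$ may be sampled as $s = t_1 \dunion t_2$ with $t_1 \in \X[\ell/2]$ and $t_2 \in \X[\ell/2][t_1]$, which is precisely a swap-walk edge $(t_1,t_2) \sim S_{\ell/2,\ell/2}$. Because $z_\ell(s) = z_{\ell/2}(t_1)z_{\ell/2}(t_2)$, the bipartite expander-mixing lemma applied with the bound $\lambda(S_{\ell/2,\ell/2}) \leq (\ell/2)\lambda$ from \pref{thm:swap-walks-expand} (using $\lambda \leq \frac{1}{k}2^{-k^\alpha}$) yields
\[
M_\ell \leq M_{\ell/2}^2 + \tfrac{\ell}{2}\lambda \cdot \mathbb{E}[z_{\ell/2}^2].
\]
To control the error term, reverse the sampling direction so that $\mathbb{E}[z_{\ell/2}^2] = \mathbb{E}_{s\in \X[\ell]}\mathbb{E}_{t\subset s,\,|t|=\ell/2}[e^{2r|t\cap A|}]$, where the inner expectation is now over the complete complex on the $\ell$-set $s$. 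Classical Hoeffding on this hypergeometric expectation gives $\mathbb{E}_t[e^{2r|t\cap A|}] \leq e^{r|s\cap A|}\,e^{O(r^2\ell)}$, and hence $\mathbb{E}[z_{\ell/2}^2] \leq e^{O(r^2\ell)} M_\ell$. Provided $\gamma_\ell := \tfrac{\ell}{2}\lambda \cdot e^{O(r^2\ell)} \leq 1/2$, rearranging yields $M_\ell \leq (1-\gamma_\ell)^{-1} M_{\ell/2}^2 \leq 2 M_{\ell/2}^2$. Iterating this $\log_2 k$ times down to the trivial $M_1 \leq e^{r\mu + O(r^2)}$ and plugging into Markov, the choice $r = \Theta(\varepsilon \, k^{(\alpha-1)/2})$ ensures $r^2 \ell = O(k^\alpha)$ for every $\ell \leq k$ (so $\gamma_\ell \leq 1/2$ holds uniformly) and delivers the target decay $\exp(-\Omega(\varepsilon^2 k^{(1+\alpha)/2}))$.

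The main obstacle is sharpening the multiplicative prefactor from the naive $\prod_\ell (1-\gamma_\ell)^{-1} = 2^{O(\log k)} = k^{O(1)}$ down to the stated $\min\{k, c_\alpha\}$ with $c_\alpha$ independent of $k$. The key observation is that $\gamma_\ell$ decays geometrically in $\ell$: since the swap-walk expansion contributes a factor $2^{-k^\alpha}$ that dwarfs the polynomial factor $\ell/(2k)$, the telescoping product is controlled by $\exp\bigl(O(\sum_\ell \gamma_\ell)\bigr)$, which converges to a constant depending only on $\alpha$ once $\alpha > 0$. Carefully balancing the optimal $r$ against the resulting geometric sum produces the explicit estimate $c_\alpha \leq e^{(3/\alpha)^{1/\alpha}/e^\alpha}$. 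A secondary technical point is handling $k$ that is not a power of two, which can be resolved either by padding the complex with a dummy complete-complex factor or by allowing slightly uneven splits in the recursion; both modifications affect constants only.
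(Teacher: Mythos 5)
Your proposal follows the same Chernoff--MGF strategy the paper uses: split $z_\ell$ via the swap walk, reduce the error term to the complete complex, and iterate. The main structure (the recursion $M_\ell \lesssim M_{\ell/2}^2$, the choice $r = \Theta(\varepsilon k^{(\alpha-1)/2})$, the argument that the prefactor is bounded because the relevant $\gamma$'s are uniformly tiny) matches the paper's Proposition~5.2 and Claim~5.3.

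Two points deserve care. First, your opening claim that ``it suffices to prove the upper tail for an indicator $f=\mathbf 1_A$'' is not justified by the cited Hoeffding expansion, and it would not be a free reduction here: for correlated $v_1,\ldots,v_k$, the bound $e^{rf(v)}\le 1+(e^r-1)f(v)$ does not convert $\mathbb{E}_s[\prod e^{rf(v_i)}]$ into an MGF over a set indicator, and routing through the sampler-for-sets-to-sampler-for-functions reduction (\pref{claim:sampler-for-functions}) costs constant factors in $\varepsilon$ and in the prefactor that would spoil the stated numerics. The paper avoids this entirely by working with arbitrary $f:\X[1]\to[0,1]$ throughout; it only invokes a bound of this flavor (specifically $e^x\le 1+x+\tfrac54 x^2$ for $x\in[0,2]$, together with $f^2\le f$) \emph{inside} the complete-complex step of \pref{claim:Chernoff-inner-prod-bound}, which is exactly where it is harmless. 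You should drop the indicator reduction and carry general $[0,1]$-valued $f$ through the recursion.

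Second, your heuristic that ``$\gamma_\ell$ decays geometrically in $\ell$'' is not quite what makes the prefactor constant; $\gamma_\ell=\tfrac{\ell}{2}\lambda\, e^{O(r^2\ell)}$ is in fact \emph{increasing} in $\ell$. What saves the telescoping product is that the level-$i$ term enters with exponent $2^i$ while the corresponding $\gamma$ carries a factor $2^{-i}$ (from $\lambda(S_{2^{j-1-i},2^{j-1-i}})\lesssim 2^{j-1-i}\lambda$ and $k=2^j$), so the contributions to $\log$ of the product are bounded by a geometric-type sum that is small because of the common factor $2^{-k^\alpha}$. This is the computation the paper performs explicitly to land on $c_\alpha\le e^{(3/\alpha)^{1/\alpha}/e^\alpha}$. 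Finally, the paper handles non-power-of-two $k$ not by padding but by proving the bound at the largest power $k'\le k$ and then invoking the raising lemma (\pref{lem:raising}); your padding/uneven-split suggestion is workable but different.
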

We remark it is possible to use this approach to prove similar concentration for any $\lambda$-TD complex, but in \pref{sec:herbst} we will see a stronger bound for this specific regime based on a variant of the Herbst argument. As such we allow ourselves the additional constant factor here in expansion for convenience. Second, note that by setting $\alpha$ appropriately, \pref{thm:split-concentrate} shows for any $c>0$ that the containment graph \((\X[k],\X[1])\) of any \(2^{-ck}\)-two-sided local spectral expanders is a \((\varepsilon,\beta)\)-sampler for \(\beta = \exp(-\Omega_{c}(-\varepsilon^2 k))\).

The core of \pref{thm:split-concentrate} centers around the notion of \textit{splittability} \cite{AlevJT2019}, which allows us to de-correlate the variables in our complex and bound the moment generating function of $U_{1,k}f$. We introduce a slight variant of the standard definition convenient for our purposes.
\begin{definition}[Balanced Splittability]
    We call a $d$-\maximal complex $X$ $\{\gamma_i\}_{i \in [\lfloor \log(d) \rfloor-1]}$-balanced-splittable if for every $i$ the $2^i$-th swap walk expands:
    \[
    \lambda(S_{2^i,2^i}) \leq \gamma_i
    \]
\end{definition}
All applications of splittability in the literature lose \textit{additive} factors in the $\gamma_i$. We introduce a new approach combining splittability with an `internal' reduction to the complete complex that allows finer control over this error and instead incurs more manageable \textit{multiplicative} loss.
\begin{proposition}\label{prop:concentration-splitting}
    For any $j \in \mathbb{N}$ and $r \in [0,1]$, let $k=2^j$, and $X$ be a $k$-\maximal $\{\gamma_i\}$-balanced-splittable complex with $e^{\frac{5}{4}2^{-i}r^2k}\gamma_{j-1-i} < 1$. Then for any $f: \X[1] \to [0,1]$:
    \[
    \underset{\{v_1,\ldots, v_k\} \in \X[k]}{\mathbb{E}}\left[\prod_{i=0}^ke^{r f(v_i)}\right] \leq c_{r,\gamma}\underset{\X[1]}{\mathbb{E}}[e^{rf}]^k
    \]
    where $c_{r,\gamma} \leq \left(\prod\limits_{i=0}^{j-1}\left(\frac{1-\gamma_{j-1-i}}{1-e^{\frac{5}{4}2^{-i}r^2k}\gamma_{j-1-i}}\right)^{2^{i}}\right)$.
\end{proposition}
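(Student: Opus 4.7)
The plan is to establish a recursive inequality on the partial moment generating function $M_\ell := \underset{t \in \X[\ell]}{\mathbb{E}}\left[z_\ell(t)\right]$ where $z_\ell(t) := e^{r\sum_{v \in t} f(v)}$. The key structural fact is that $z_\ell$ factorizes: $z_\ell(s_1 \dunion s_2) = z_{\ell/2}(s_1)\,z_{\ell/2}(s_2)$ for disjoint $s_1, s_2 \in \X[\ell/2]$. The target bound $M_k \leq c_{r,\gamma}\,M_1^k$ will follow by unrolling a single recursion that halves $\ell = 2^m$ at each step, from $m = j$ down to $m = 1$.

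The first step is splitting. By definition, an edge of the swap walk $S_{\ell/2,\ell/2}$ is drawn by sampling $s \in \X[\ell]$ and partitioning it uniformly into $s = t_1 \dunion t_2$, so
\[
M_\ell = \underset{(t_1,t_2) \sim S_{\ell/2,\ell/2}}{\mathbb{E}}\left[z_{\ell/2}(t_1)\,z_{\ell/2}(t_2)\right] = \iprod{z_{\ell/2},\,S_{\ell/2,\ell/2}\,z_{\ell/2}}.
\]
Decomposing $z_{\ell/2}$ into its mean plus orthogonal piece and invoking the bipartite expander mixing bound \pref{eq:basic-bipartite-expanders} with $\lambda(S_{\ell/2,\ell/2}) \leq \gamma_{m-1}$ yields
\[
M_\ell \leq (1-\gamma_{m-1})\,M_{\ell/2}^2 + \gamma_{m-1}\underset{\X[\ell/2]}{\mathbb{E}}\left[z_{\ell/2}^2\right].
\]

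The main technical step bounds the error term $\mathbb{E}[z_{\ell/2}^2]$ by $M_\ell$ itself, which is what will close the recursion. We re-sample a random $\ell/2$-face by first drawing $s \in \X[\ell]$ and then a uniform subset $t \subset s$ of size $\ell/2$:
\[
\underset{\X[\ell/2]}{\mathbb{E}}\left[z_{\ell/2}^2\right] = \underset{s \in \X[\ell]}{\mathbb{E}}\,\underset{t \subset s}{\mathbb{E}}\left[\exp\!\left(2r\sum_{v \in t}f(v)\right)\right].
\]
Setting $W := 2\sum_{v \in t}f(v) - \sum_{v \in s}f(v)$, one has $\underset{t \subset s}{\mathbb{E}}[W] = 0$, and $W$ is a sum over the \emph{complete} $\ell/2$-\maxsize complex on $s$ of $\ell$ bounded contributions. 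The classical Hoeffding bound for sampling without replacement then gives $\underset{t \subset s}{\mathbb{E}}[e^{rW}] \leq e^{\frac{5}{4}r^2\ell}$; factoring out $e^{r\sum_{v \in s}f(v)} = z_\ell(s)$ and averaging over $s$ yields
\[
\underset{\X[\ell/2]}{\mathbb{E}}\left[z_{\ell/2}^2\right] \leq e^{\frac{5}{4}r^2\ell}\,M_\ell.
\]

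Combining the two displayed bounds and using the hypothesis $\gamma_{m-1}e^{\frac{5}{4}r^2\ell} < 1$ (which matches the statement via $i = j - m$, so that $2^{-i}k = \ell$), we isolate $M_\ell$ on the left to obtain
\[
M_\ell \leq \frac{1-\gamma_{m-1}}{1 - \gamma_{m-1}\,e^{\frac{5}{4}r^2\ell}}\,M_{\ell/2}^2.
\]
Iterating from $m = j$ down to $m = 1$ produces $M_k \leq \prod_{m=1}^{j} B_m^{2^{j-m}}\,M_1^{k}$ where $B_m$ is the bracketed factor; reindexing via $i = j - m$ reproduces the claimed constant $c_{r,\gamma}$, and $M_1 = \underset{\X[1]}{\mathbb{E}}[e^{rf}]$ completes the bound. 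The hard part is the self-bounding estimate on $\mathbb{E}[z_{\ell/2}^2]$: a looser constant in the exponent either blows up the final product through the iterated powers of $2^{j-m}$, or forces the downstream choice of $r$ to be much smaller than $\Theta(\varepsilon)$, degrading the resulting Chernoff tail in \pref{thm:split-concentrate}.
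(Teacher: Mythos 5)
Your proposal is correct and tracks the paper's proof almost step for step: the same splitting $M_\ell = \iprod{z_{\ell/2}, S_{\ell/2,\ell/2} z_{\ell/2}}$, the same decomposition via expander mixing, and the same recursion closed by the self-bounding estimate $\mathbb{E}[z_{\ell/2}^2] \leq e^{\frac{5}{4}r^2\ell}\,\mathbb{E}[z_\ell]$, with matching index bookkeeping in the final product. The one place you diverge is inside that key estimate (\pref{claim:Chernoff-inner-prod-bound}): the paper first invokes Hoeffding's without-replacement domination theorem to pull the product outside the inner expectation, then finishes with an explicit Taylor bound $e^x < 1 + x + \tfrac{5}{4}x^2$ combined with $f^2 \leq f$; you instead center $W = 2\sum_{v\in t}f(v) - \sum_{v\in s}f(v)$ and apply Hoeffding's lemma for sampling without replacement directly to bound $\mathbb{E}_t[e^{rW}]$. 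Both routes lean on the same Hoeffding 1963 comparison between sampling with and without replacement, but yours is cleaner and in fact gives the sharper exponent $e^{r^2\ell/4}$; you quote the looser $e^{\frac{5}{4}r^2\ell}$ (without deriving it) simply to match the proposition's stated constant, which is harmless since the true Hoeffding constant is smaller. A minor stylistic note: it would be worth stating the explicit constant $\tfrac14$ you actually get rather than asserting $\tfrac54$ comes out of "the classical Hoeffding bound," since $\tfrac54$ is not the Hoeffding constant and a reader might wonder whether you computed it.
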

While the definition of \(c_{r,\gamma}\) may be somewhat hard to parse, note at least that for any fixed \(r\), \(c_{r,\gamma}\) goes to $1$ as \(\gamma\) tends to \(0\).
\begin{proof}
For any even $\ell$, denote by \(z_\ell:\X[\ell]\to \mathbb{R}\) the `partial' moment generating function:
    \[
    z_\ell(\set{v_1,\dots,v_{\ell}}) = \prod_{i=1}^{\ell} e^{rf(v_i)}.
    \]
    Our goal is to bound \(\ex{z_{k}}=\mathbb{E}[\exp(r\sum_{i=1}^k f(v_i))]\) by induction on $z_\ell$. Denote by \(z_\ell^0 = \ex{z_\ell}\one\) the projection of \(z_\ell\) to the space of constant functions and by \(z_\ell^\perp = z_\ell - z_\ell^0\) its orthogonal part. The first key observation is that we may split $z_\ell$ into two `copies' of $z_{\ell/2}$ using the swap operator:
    \[
    \Ex[{\X[\ell]}]{z_\ell} = \Ex[{\X[\ell/2]}]{z_{\ell/2} S_{\ell/2} z_{\ell/2}}.
    \]
    Now splitting $z_{\ell/2}$ itself into parallel and perpendicular components, we can bound this as
    \begin{align} \label{eq:do-again}
        \nonumber\Ex[X{(\ell/2)}]{z_{\ell/2} S_{\ell/2} z_{\ell/2}} &= \mathbb{E}[{z_{\ell/2} S_{\ell/2} (z_{\ell/2}^0 + z_{\ell/2}^\perp)}] \\
    \nonumber&= \ex{z_{\ell/2}}^2 + \iprod{z_{\ell/2}^\perp,S_{\ell/2} z_{\ell/2}^\perp}\\
    \nonumber&\leq \ex{z_{\ell/2}}^2 + \gamma_{\log(\ell)-1} \norm{z_{\ell/2}^\perp}_2^2\\
    &=(1-\gamma_{\log(\ell)-1})\ex{z_{\ell/2}}^2 + \gamma_{\log(\ell)-1} \norm{z_{\ell/2}}_2^2.
    \end{align}
    The first of these two terms can clearly be bounded by induction. The difficulty lies in the error term, as the 2-norm of the partial MGF may be much larger than its expectation. To handle this, we argue via reduction to the complete complex the second moment of $z_{\ell/2}$ is actually quite close to the original expectation of $z_\ell$.
    \begin{claim}\label{claim:Chernoff-inner-prod-bound}
        For any even $\ell$ and $r \in [0,1]$:
        \[
        \norm{z_{\ell/2}}_2^2 \leq e^{\frac{5}{4}r^2\ell}\mathbb{E}[z_\ell]
        \]
    \end{claim}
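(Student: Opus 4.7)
The plan is to reduce the inner randomness to the complete complex on $\ell$ vertices and then invoke a classical concentration bound for sampling without replacement. The starting observation is that, because $X$ is $k$-\maximal with $k \geq \ell$, sampling $t \in \X[\ell/2]$ from its induced measure is equivalent to first sampling $s \in \X[\ell]$ and then choosing $t \subseteq s$ uniformly among $(\ell/2)$-subsets. This lets us rewrite the quantity of interest as
\[
\norm{z_{\ell/2}}_2^2 = \Ex[s \in {\X[\ell]}]{\Ex[t \subseteq s,\, |t|=\ell/2]{e^{2r\sum_{v \in t} f(v)}}},
\]
where the inner average is entirely over $\binom{s}{\ell/2}$ and sees no HDX structure.

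Next, fix $s=\{v_1,\ldots,v_\ell\}$ and set $\mu_s := \frac{1}{\ell}\sum_i f(v_i)$. Then $\sum_{v \in t}f(v)$ is the total of an $(\ell/2)$-sample drawn without replacement from the bounded population $\{f(v_i)\}_{i=1}^\ell \subseteq [0,1]$ of mean $\mu_s$. Hoeffding's MGF bound for sampling without replacement (which follows from the i.i.d.\ case via his convex-order reduction combined with Hoeffding's lemma) gives, for every $\lambda \in \mathbb{R}$,
\[
\Ex[t \subseteq s,\, |t|=\ell/2]{e^{\lambda \paren{\sum_{v \in t}f(v) - (\ell/2)\mu_s}}} \leq e^{\lambda^2 \ell / 16}.
\]
Setting $\lambda = 2r$ and multiplying through by $e^{r\sum_{v \in s} f(v)} = e^{r\ell\mu_s}$ bounds the inner expectation above by $e^{r^2 \ell / 4}\cdot z_\ell(s)$.

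Averaging over $s \in \X[\ell]$ and using $\Ex[s \in {\X[\ell]}]{z_\ell(s)} = \ex{z_\ell}$ yields
\[
\norm{z_{\ell/2}}_2^2 \leq e^{r^2 \ell / 4}\,\ex{z_\ell},
\]
which is in fact stronger than the stated bound $e^{5r^2\ell/4}\,\ex{z_\ell}$. There is no real obstacle in this argument: the only conceptual step is the two-step sampling decomposition, after which the HDX structure is completely unwound and only Hoeffding on the complete complex is required. The generous slack (an extra factor of $e^{r^2\ell}$) also leaves room in case one prefers a cruder final estimate.
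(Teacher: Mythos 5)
Your proof is correct, and in fact yields a slightly stronger constant ($e^{r^2\ell/4}$ instead of $e^{\frac{5}{4}r^2\ell}$). You use the same two-step decomposition as the paper: write $\norm{z_{\ell/2}}_2^2$ as an expectation over $s \in \X[\ell]$ of an inner expectation over uniform $t \subset s$ of size $\ell/2$, so that the inner term lives entirely on the complete complex on $s$. The difference is what you do inside. The paper invokes Hoeffding's Theorem~4 (convex ordering for sampling without replacement) to pull the product apart as $\underset{t\subset s}{\mathbb{E}}[\prod_{v\in t} e^{2rf(v)}] \leq \underset{v\in s}{\mathbb{E}}[e^{2rf(v)}]^{\ell/2}$, then Taylor-expands $e^{2rf(v)}$ using $e^x \leq 1+x+\frac{5}{4}x^2$ on $[0,2]$ together with $f^2 \leq f$, and re-exponentiates — a hands-on manipulation that tracks the $\mu_s$-dependence in the quadratic term before discarding it. You instead go directly to the packaged Hoeffding MGF bound for sampling without replacement, which bundles the same Theorem~4 convex-ordering step with Hoeffding's lemma applied to each i.i.d. component, giving $\mathbb{E}[e^{\lambda(\sum_{v\in t}f(v)-(\ell/2)\mu_s)}] \leq e^{\lambda^2\ell/16}$. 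Both routes are valid and rest on the same 1963 Hoeffding machinery; yours is somewhat more black-box and gives a cleaner constant, though the improvement from $5/4$ to $1/4$ in the exponent is immaterial for the downstream use of the claim in \pref{prop:concentration-splitting}.
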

    Let's first complete the proof given this fact. Plugging the claim back into \pref{eq:do-again} and applying induction we get the following bound:
    \begin{align*}
    \mathbb{E}[z_{k}] &\leq \frac{1-\gamma_{j-1}}{1-e^{\frac{5}{4}r^2 \ell}\gamma_{j-1}}\mathbb{E}[z_{k/2}]^2\\
    &\leq \frac{1-\gamma_{j-1}}{1-e^{\frac{5}{4}r^2k}\gamma_{j-1}}\left(\prod_{i=0}^{j-2}\left(\frac{1-\gamma_{j-2-i}}{1-e^{2^{-i}\frac{5}{4}r^2\frac{k}{2}}\gamma_{j-2-i}}\right)^{2^{i+1}}\right)\mathbb{E}[z_1]^k\\
    &=\left(\prod_{i=0}^{j-1}\left(\frac{1-\gamma_{j-1-i}}{1-e^{2^{-i}\frac{5}{4}r^2(k+1)}\gamma_{j-1-i}}\right)^{2^{i}}\right)\mathbb{E}[e^{rf}]^k
    \end{align*}
    as desired.
\end{proof}
It is left to prove the claim relating $\norm{z_{\ell/2}}_2^2$ to $\mathbb{E}[z_\ell]$.
\begin{proof}[Proof of \pref{claim:Chernoff-inner-prod-bound}]
    Recall our goal is to show
    \[
    \mathbb{E}[z_{\ell/2}^2]= \underset{t \in \X[\ell/2]}{\mathbb{E}}\left[\prod_{v \in t}e^{2rf(v)}\right] \leq e^{\frac{5}{4}r^2\ell}\mathbb{E}[z_\ell]
    \]
    The key is to observe that we can draw $t \in \X[\ell/2]$ by first drawing $s \in \X[\ell]$, and then drawing $t \subset s$ uniformly at random:
    \begin{align*}
    \underset{t \in \X[\ell/2]}{\mathbb{E}}\left[\prod_{v \in t}e^{2rf(v)}\right]&=\underset{s \in \X[\ell]}{\mathbb{E}}\left[\underset{t \subset s}{\mathbb{E}}\left[\prod_{v \in t}e^{2rf(v)}\right]\right]
    \end{align*}
    The inner expectation is now within the complete complex which is negatively correlated, so may therefore `pull out' the product (e.g.\ by \cite[Theorem 4]{hoeffding1994probability}) and write:
    \[
    \underset{t \subset s}{\mathbb{E}}\left[\prod_{v \in t}e^{2rf(v)}\right] \leq \underset{v \in s}{\mathbb{E}}[e^{2rf(v)}]^{\ell/2}.
    \]
    Let $\mu_s=\mathbb{E}_{v \in s}[f(v)]$ and observe $\ell\mu_s=\sum_{v \in s}f(v)$. Combining the above, we have
    \begin{align*}
        \mathbb{E}[z_{\ell/2}^2] &\leq \underset{s \in \X[\ell]}{\mathbb{E}}\left[\underset{{v \in s}}{\mathbb{E}}[e^{2rf(v)}]^{\ell/2}\right]\\
        &\leq  \underset{s \in \X[\ell]}{\mathbb{E}}\left[\underset{{v \in s}}{\mathbb{E}}[1+2rf(v)+\frac{5}{2}r^2f(v)]^{\ell/2}\right]\\
        &= \underset{s \in \X[\ell]}{\mathbb{E}}\left[(1+2r\mu_s+\frac{5}{2}r^2\mu_s)^{\ell/2}\right]\\
        &\leq \underset{s \in \X[\ell]}{\mathbb{E}}\left[e^{(r+\frac{5}{4}r^2)\ell\mu_s}\right]\\
        &= \underset{s \in \X[\ell]}{\mathbb{E}}\left[\prod\limits_{v \in s}e^{(r+\frac{5}{4}r^2)f(v)}\right]\\
        &\leq e^{\frac{5}{4}r^2\ell}\mathbb{E}\left[z_\ell\right]
    \end{align*}
    where in the second inequality we've used that \(e^x < 1+x+\frac{5}{4}x^2\) for \(x \in [0,2]\) and that \(f^2(v) \leq f(v)\). 
\end{proof}
The proof of \pref{thm:split-concentrate} is now essentially immediate from Chernoff's method.
\begin{proof}[Proof of \pref{thm:split-concentrate}]
    We prove the upper tail. The lower tail follows from applying the upper tail to $1-f$. Let $k'$ be the largest power of $2$ such that $k' \leq k$. Fix $r=\frac{1}{2}\varepsilon k'^{\frac{\alpha-1}{2}}$, and recall by \pref{thm:swap-walks-expand} that $\gamma_i \leq \frac{2^{i}}{k'}2^{-(k')^\alpha}$. By \pref{prop:concentration-splitting}, we therefore have:
    \begin{align*}\label{eq:markov-for-z}
    \Prob[\set{v_1,\dots,v_{k'}} \in {\X[k']}]{\sum_{i=1}^{k'} f(v_i) \geq \mu k' + \varepsilon k'} = &\Prob[\set{v_1,\dots,v_{k'}} \in {\X[k']}]{\exp\left(r\sum_{i=1}^k f(v_i)\right) \geq \exp\left(rk' \mu + rk'\varepsilon\right)}\\
    \overset{Markov}{\leq} &\frac{\mathbb{E}\left[\prod_{i=0}^{k'} e^{rf(v_i)}\right]}{\exp(rk'\mu + rk'\varepsilon)}\\
    \leq& \left(\prod\limits_{i=0}^{j-1}\left(\frac{1}{1-2^{-i-1}e^{-\frac{1}{3}(k')^\alpha}}\right)^{2^{i}}\right) \exp\left(-\frac{\varepsilon^2}{3}(k')^{\frac{1+\alpha}{2}}\right)
    \end{align*}
    where in the final inequality we have applied \pref{prop:concentration-splitting} and bounded $\mathbb{E}[e^{rf}]^{k'}$ by standard manipulations. To bound the leading coefficient first note that for any $\alpha \geq 0$ we have
    \[
        \prod\limits_{i=0}^{j-1}\left(\frac{1}{1-2^{-i-1}e^{-\frac{1}{3}(k')^\alpha}}\right)^{2^{i}}= \prod\limits_{i=0}^{j-1}\left(\frac{1}{1-2^{-i-1}}\right)^{2^{i}} \leq k'
    \]
    since each individual term in the product is at most $\frac{1}{2}$ and $j=\log_2(k')$. For $\alpha>0$ we may also write
    \begin{align*}
        \prod\limits_{i=0}^{j-1}\left(\frac{1}{1-2^{-i-1}e^{-\frac{1}{3}(k')^\alpha}}\right)^{2^{i}} &\leq \prod\limits_{i=0}^{j-1}\left(\frac{1}{1-\frac{e^{-\frac{1}{3}(k')^\alpha}}{2}}\right)^{2^{i}}\\
        &\leq \left(1+e^{-\frac{1}{3}(k'+1)^\alpha}\right)^{k'}\\
        &\leq e^{\frac{k'}{e^{\frac{1}{3}(k')^\alpha}}}\\
        &\leq e^{\frac{(\frac{3}{\alpha})^{\frac{1}{\alpha}}}{e^{\alpha}}}
    \end{align*}
    where the last inequality follows from observing $e^{\frac{k'}{e^{\frac{1}{3}(k')^\alpha}}}$ is maximized at $k'=(\frac{3}{\alpha})^{-\alpha}$.

Finally we apply \pref{lem:raising} to lift this guarantee to $\X[k]$:
    \begin{align*}
    \Pr_{\X[k]}[U_{1,k}f - \mu > \varepsilon] &\leq \min\{k',c_\alpha\}\exp\left(-\frac{\varepsilon^2}{12}(k'+1)^{\frac{1+\alpha}{2}}\right)\left(1-\pi^{k,k',U^{k'}_0f}_{low}(\varepsilon/2)\right)^{-1}\\
    &\leq \min\{k,c_\alpha\}\exp\left(-\frac{\varepsilon^2}{12}(k+1)^{\frac{1+\alpha}{2}}\right)\left(1-e^{-\frac{\varepsilon^2(k+1)}{6}}\right)^{-1}\\
    &\leq 2\min\{k,c_\alpha\}\exp\left(-\frac{\varepsilon^2}{12}k^{\frac{1+\alpha}{2}}\right)
    \end{align*}
    where in the final inequality we have assumed $\varepsilon^2 \geq \frac{6}{k}$ (else the stated bound is trivial).
\end{proof}

\subsection{From Chernoff to Inclusion Sampling}\label{sec:inclusion}

We now show how to bootstrap Chernoff-type concentration to optimal sampling for the general inclusion graph $(\X[k],\X[i])$. We focus on the case of two-sided HDX, and give the more involved argument for the full theorem in \pref{sec:concentration-all-nice-hdx}. We restate the theorem in the two-sided case here for convenience.
\begin{theorem}[Sampling on two-sided HDX]\label{thm:hdx-is-sampler-two-sided}
    For any $c>0$ there exists $c'>0$ such that for any $k \in \mathbb{N}$, $i < k$, and \(k\)-\maximal $2^{-ck}$-two-sided local spectral expander $X$, \((\X[k],\X[i])\) is an \((\varepsilon, \beta)\) sampler for 
    \[
    \beta = \frac{c'}{\varepsilon}\exp\left(-\Omega_c\left(\varepsilon^2 \frac{k}{i}\right)\right).
    \]
    Moreover, if \(X\) is a \(k\)-skeleton of a \(d\)-\maximal complex \(Y\), then for every \(k < k' \leq d\), the graph $(\Y[k'],\Y[i])$ is also an \((\varepsilon,O(\beta))\)-function sampler.
\end{theorem}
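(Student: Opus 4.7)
The plan is to reduce the $k$-vs-$i$ sampling statement on $X$ to the $k$-vs-$1$ Chernoff bound \pref{thm:split-concentrate} applied to an auxiliary $m$-\maximal complex $Y$ with $m = k/i$ (assume $i \mid k$ for simplicity). Let $\Y[1] = \X[i]$ and take as top faces the ordered tuples $(I_1,\ldots,I_m)$ of pairwise-disjoint $i$-faces with $\bigcup_j I_j \in \X[k]$, weighted by sampling $s \sim \Pr_k$ together with a uniformly random ordered partition of $s$ into $i$-blocks. Each $I_j$ is then marginally distributed as $\Pr_i$ on $\X[i]$, so the induced measure on $\Y[1]$ coincides with $\Pr_i$ and in particular $\mathbb{E}_{\Y[1]} f = \mu$ for every $f : \X[i] \to [0,1]$.

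To invoke Chernoff on $Y$ we control its swap walks. The level-$j$ swap walk of $Y$ is an averaging of $S_{2^j i,\,2^j i}^X$ (the two halves consist of $2^j$ disjoint $i$-faces each, totaling $2^j i$ vertices of $X$), so by \pref{thm:swap-walks-expand}
\[
\lambda(S_{2^j,2^j}^Y) \;\leq\; \lambda(S_{2^j i,\, 2^j i}^X) \;\leq\; 2^j \cdot i \cdot 2^{-ck}.
\]
Plugging this into the splittability hypothesis of \pref{prop:concentration-splitting} with $r = \Theta(\varepsilon)$ and $k' = m$, a parameter chase (valid whenever $\varepsilon \lesssim \sqrt{ci}$, outside of which the stated bound is trivial) yields
\[
\Pr_{(I_1,\ldots,I_m)\in\Y[m]}\!\Bigl[\bigl|\tfrac{1}{m}\tsum_j f(I_j) - \mu\bigr| > \varepsilon\Bigr] \;\leq\; \exp\!\bigl(-\Omega_c(\varepsilon^2 k/i)\bigr).
\]
Let $g(s) = \mathbb{E}_{t \subset s,\,|t|=i}[f(t)]$ and $Z = \tfrac{1}{m}\sum_j f(I_j)$. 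Since each $I_j$ is marginally uniform in $\binom{s}{i}$, $\mathbb{E}[Z \mid s] = g(s)$, so $|g(s) - \mu| > 2\varepsilon$ forces either $|Z - \mu| > \varepsilon$ or $|Z - g(s)| > \varepsilon$. The first is controlled above. For the second, condition on $s$: $Z$ is an average over a uniform ordered partition of $[k]$ into $i$-blocks, an exchangeable negatively correlated family, so standard complete-complex Chernoff gives $\Pr_{\pi \mid s}[|Z - g(s)| > \varepsilon] \leq \exp(-\Omega(\varepsilon^2 m))$. Dividing out,
\[
\Pr_s\bigl[|g(s) - \mu| > 2\varepsilon\bigr] \;\leq\; \frac{\exp(-\Omega_c(\varepsilon^2 k/i))}{1 - \exp(-\Omega(\varepsilon^2 k/i))} \;=\; O\bigl(\exp(-\Omega_c(\varepsilon^2 k/i))\bigr),
\]
which after rescaling $\varepsilon \mapsto \varepsilon/2$ and absorbing an $O(1/\varepsilon)$ prefactor gives the claimed bound. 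The \emph{moreover} statement reduces to \pref{lem:raising}: the target function on $\Y[k']$ is precisely the lift $U_{k,k'}g$, and because $g\in[0,1]$ its complete-complex averaging concentrates at least as tightly, so raising costs only an $O(1)$ multiplicative factor.

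\textbf{Main obstacle.} The most delicate step is the parameter chase for $Y$'s splittability profile: each $\gamma_j^Y$ carries a scale factor $2^j \cdot i$, and one must show it is dominated by $2^{-ck}$ uniformly in $j$ so that $e^{(5/4)2^{-j}r^2 m}\gamma_{\log m-1-j}^Y<1$ holds throughout at $r = \Theta(\varepsilon)$. The two-sided HDX hypothesis $\lambda \leq 2^{-ck}$ is precisely what makes this go through. A secondary subtlety is isolating macroscopic ($s$-level) fluctuations of $g$ from microscopic (intra-$s$) fluctuations of $Z$, which forces a separate appeal to random-partition Chernoff on the complete complex rather than subsuming it into the faces-complex argument.
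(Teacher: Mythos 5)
Your proposal is essentially the paper's argument: build the faces complex $F_X^{(i)}$ with vertex set $\X[i]$ and top faces given by disjoint $i$-face partitions of a $k$-face, observe via \pref{thm:swap-walks-expand} that its links are swap walks of $X$ (so it inherits $\mathrm{poly}(i,k)\cdot 2^{-ck}$-two-sided local-spectral expansion), apply the Chernoff bound \pref{thm:split-concentrate} to it, and transfer the resulting concentration from the correlated partition to the inclusion graph $(\X[k],\X[i])$. The paper packages this last transfer in \pref{prop:faces-to-inclusion} via the abstract \pref{lem:weird-distribution-to-uniform}, which applies Markov's inequality to a general product-marginal distribution and incurs a $1/\varepsilon$ loss. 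Your transfer step is a localized variant: condition on the $k$-face $s$, observe the conditional partition distribution is negatively correlated (complete complex), apply complete-complex Chernoff to the intra-$s$ fluctuations, and divide out by $1-\exp(-\Omega(\varepsilon^2 k/i))$. This is cleaner for this specific setting and avoids the explicit $1/\varepsilon$ loss, though it relies on the conditional concentration of the partition, which \pref{lem:weird-distribution-to-uniform} does not need. Two small imprecisions worth flagging: you use \emph{ordered} tuples, which makes $Y$ partite (so colored swap walks are the right operator, not $S_{2^j,2^j}$); the paper's unordered faces complex sidesteps this. Also, the swap-walk identification is not literally $\lambda(S_{2^j,2^j}^Y)=\lambda(S_{2^ji,2^ji}^X)$; rather $S_{2^j,2^j}^Y$ factors through $S_{2^ji,2^ji}^X$ by union/random-repartition lifts, which are $L^2$-contractions fixing constants, so the spectral bound is inherited — worth one line, but correct.
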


The proof of \pref{thm:hdx-is-sampler-two-sided} is based on a refinement of an idea of \cite{ImpagliazzoKW2012}, who observed on the complete complex one can prove an analogous bound by partitioning $k$-sets into $i$ disjoint $\frac{k}{i}$-sets, and treating these as independent variables. In the world of general simplicial complexes, this type of subdivision is called the \textit{faces complex} of $X$ \cite{dikstein2023agreement}.
\begin{definition}[Faces Complex \cite{dikstein2023agreement}] \label{def:faces-complex}
    Let $X$ be a $d$-\maximal simplicial complex. For any $i \leq d$, the faces complex $F_X^{(i)}$ is the $\lfloor \frac{d}{i}\rfloor$-\maximal complex with vertices $\X[i]$ and top level faces:
    \[
    \SC[F^{(i)}][\left\lfloor \frac{d}{i}\right\rfloor][X] = \left\{\{s_1, \ldots, s_{\lfloor \frac{d}{i}\rfloor}\} ~\Bigg|~ \bigcup_{j=1}^{\lfloor \frac{d}{i}\rfloor} s_j \in X, ~ \bigcup_{j\neq \ell} s_j \cap s_\ell = \emptyset\right\}
    \]
    The weight of a face is proportional to the weight of its union in $X$. We drop $X$ from the notation when clear from context.
\end{definition}
In our setting, subdivided faces have extreme dependencies so \cite{ImpagliazzoKW2012}'s analysis for the complete complex fails. Nevertheless, it is in fact possible to generalize this approach to any simplicial complex through somewhat more careful analysis.
\begin{proposition}\label{prop:faces-to-inclusion}
    Let $X$ be a $d$-\maximal complex and $i \leq k \leq d$ be such that $\left (\SC[F^{(i)}][\lfloor \frac{k}{i}\rfloor],\SC[F^{(i)}][1] \right )$ is an $(\varepsilon,\beta)$-function additive sampler. Then $(\X[k],\X[i])$ is a $(2\varepsilon,\frac{2}{\varepsilon}\beta)$-function additive sampler.
\end{proposition}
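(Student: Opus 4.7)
The plan is to transfer the hypothesized concentration on the faces complex to the inclusion graph $(\X[k],\X[i])$ by interpreting the function $f$ itself as a function on the vertices $\SC[F^{(i)}][1]=\X[i]$ of the faces complex. Given $f:\X[i]\to [0,1]$ with mean $\mu$, set $g(s):=\Ex[t\subseteq s]{f(t)}$ for $s\in\X[k]$; our goal is to bound $\Prob[s \in {\X[k]}]{|g(s)-\mu|>2\varepsilon}$.

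First I would verify that the measure induced on $\SC[F^{(i)}][1]$ by the faces complex agrees with the standard measure on $\X[i]$. Sampling a top face of $F^{(i)}$ amounts to sampling some $s\in \X[k]$ and a uniformly random partition $\pi$ of $s$ into $k/i$ blocks of size $i$; by symmetry of this partition, the marginal distribution of any single block is exactly ``sample $s\in\X[k]$, then a uniform $i$-subset of $s$,'' which is precisely the measure on $\X[i]$ inherited from $X$. Consequently $\Ex[\SC[F^{(i)}][1]]{f}=\mu$, and the sampling assumption applied to $f$ reads
\[
\Prob[\{s_1,\dots,s_{k/i}\} \in \SC[F^{(i)}][k/i]]{\left|\tfrac{i}{k}\sum_{j} f(s_j) -\mu\right|>\varepsilon}<\beta.
\]

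The key step is to rewrite this probability by conditioning on $s=\bigcup_j s_j\in\X[k]$, and then push it to a statement about $g$ via a simple Markov-type bound. Let $Y_s:=\tfrac{i}{k}\sum_j f(\pi_j)$, where $\pi$ is the uniform partition of $s$. By the same symmetry used above, $\E[Y_s\mid s]=g(s)$, and $Y_s\in[0,1]$. If $s$ is bad in the sense that $g(s)>\mu+2\varepsilon$, then letting $p:=\Prob{Y_s>\mu+\varepsilon\mid s}$ and using $Y_s\leq 1$,
\[
g(s)=\E[Y_s\mid s]\leq (\mu+\varepsilon)(1-p)+p,
\]
so $p\geq g(s)-\mu-\varepsilon>\varepsilon$. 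The symmetric argument applied to $1-f$ handles the lower tail, giving $\Prob{|Y_s-\mu|>\varepsilon\mid s}>\varepsilon$ whenever $|g(s)-\mu|>2\varepsilon$. Integrating over $s$,
\[
\beta>\Ex[s\in\X[k]]{\Prob{|Y_s-\mu|>\varepsilon\mid s}}\geq \varepsilon\cdot\Prob[s\in\X[k]]{|g(s)-\mu|>2\varepsilon},
\]
which rearranges to the desired bound (with slack factor of $2$ absorbing a minor nuisance below).

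The only real obstacle I anticipate is notational: when $i\nmid k$, the partition $\pi$ covers only $\lfloor k/i\rfloor\cdot i$ of the $k$ vertices of $s$, so the identity $\E[Y_s\mid s]=g(s)$ must be replaced by the expected value of $f$ over a uniform $i$-subset of a uniform $\lfloor k/i\rfloor\cdot i$-subset of $s$. Both quantities differ from $g(s)$ by at most $O(i/k)\leq O(1/\log(1/\varepsilon))$ with the residual easily absorbed by the factor-$2$ slack in the statement. The rest of the proof is a clean Markov computation, and the measure-matching step is a routine symmetry argument that we will spell out for completeness.
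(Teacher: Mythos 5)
Your proof is correct and follows essentially the same route as the paper, which abstracts the key Markov step into a separate lemma (\pref{lem:weird-distribution-to-uniform}) and then instantiates it with the distribution you describe; you have simply inlined the two pieces. Two remarks. First, the obstacle you flag in the last paragraph is not actually there: a uniform $i$-subset of a uniform $\lfloor k/i\rfloor i$-subset of $s$ is exactly a uniform $i$-subset of $s$, so $\E[Y_s\mid s]=g(s)$ holds with no $O(i/k)$ residual --- the factor-$2$ slack in the parameters is not needed to absorb any error of that kind. Second, your one-shot Markov bound $\Pr[|g(s)-\mu|>2\varepsilon]<\beta/\varepsilon$ is in fact a factor of $2$ tighter than the paper's stated $\tfrac{2}{\varepsilon}\beta$, because the two tail events $\{g(s)>\mu+2\varepsilon\}$ and $\{g(s)<\mu-2\varepsilon\}$ are disjoint and you handle them in a single Markov application rather than invoking the one-sided lemma twice. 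This is a genuine (if minor) improvement in the constant, and your argument for it is sound.
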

The proof of \pref{prop:faces-to-inclusion} itself centers around a lemma for general bipartite samplers stating that sampling over correlated sub-divisions is sufficient to imply sampling for the general graph as long as the components are appropriately marginally distributed. 
\begin{lemma} \label{lem:weird-distribution-to-uniform}
    Given a graph \(G = (L,R,E)\), parameters \(\varepsilon, \beta > 0\) and $k \in \mathbb{N}$, and \(f:L \to [0,1]\), suppose there is a distribution \((v_1,v_2,\dots,v_k,u) \sim D\) over $L^{k} \times R$ satisfying
    \begin{enumerate}
        \item For every \(i\in [k]\), the marginal \((v_i,u)\) is distributed according to the edges of \(G\).
        \item \(\Prob[(v_1,v_2,\dots,v_k,u) \sim D]{\frac{1}{k}\sum\limits_{i=1}^k f(v_i) > \mathbb{E}[f] + \varepsilon} \leq \beta\).
    \end{enumerate}
    Then 
    \begin{equation}\label{eq:weird-distribution-to-uniform}
    \Prob[u \sim R]{\Ex[{v \in L, v \sim u}]{f(v)} > \mathbb{E}[f] + 2\varepsilon} \leq \frac{\beta}{\varepsilon}.
    \end{equation}
    The same statement applies replacing \(>\) with \(<\) and $\varepsilon$ with $-\varepsilon$ inside the probabilities.
\end{lemma}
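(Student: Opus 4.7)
Let me write $g(u) \coloneqq \Ex[v \sim u]{f(v)}$, so the conclusion of the lemma is a bound on $\Prob[u \sim R]{g(u) > \Ex{f} + 2\varepsilon}$. Call this "bad" set $B = \set{u \in R : g(u) > \Ex{f} + 2\varepsilon}$ and let $Y \coloneqq \frac{1}{k}\sum_{i=1}^k f(v_i)$, viewed as a random variable on $(L^k \times R, D)$. The first key observation is that the marginal hypothesis pins down the conditional expectation of $Y$ given $u$: by linearity and the fact that each $(v_i,u)$ is distributed as a random edge of $G$, we have $\Ex[D]{Y \mid u} = \frac{1}{k}\sum_i \Ex[v_i \mid u]{f(v_i)} = g(u)$. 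The same hypothesis forces the marginal of $u$ under $D$ to agree with the vertex distribution on $R$.

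The plan is to combine this with a reverse Markov step. Since $Y \in [0,1]$, for any $u \in B$ the Paley--Zygmund-type inequality (i.e.\ $\Ex{Z} \leq a + (1-a)\Prob{Z > a}$ for $Z \in [0,1]$ and $a < \Ex{Z}$) applied with $Z = Y$ conditional on $u$ and $a = \Ex{f} + \varepsilon$ gives
\[
\cProb{D}{Y > \Ex{f} + \varepsilon}{u} \;\geq\; \frac{g(u) - (\Ex{f} + \varepsilon)}{1 - (\Ex{f} + \varepsilon)} \;\geq\; \varepsilon,
\]
where the last step uses $g(u) > \Ex{f} + 2\varepsilon$ and that the denominator is at most $1$.

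Plugging this into hypothesis $(2)$ gives the desired bound:
\[
\beta \;\geq\; \Prob[D]{Y > \Ex{f} + \varepsilon} \;\geq\; \Prob[D]{Y > \Ex{f} + \varepsilon,\; u \in B} \;\geq\; \Prob[D]{u \in B}\cdot \varepsilon \;=\; \Prob[u \sim R]{u \in B}\cdot \varepsilon,
\]
so $\Prob[u \sim R]{g(u) > \Ex{f} + 2\varepsilon} \leq \beta/\varepsilon$, as claimed. The lower-tail statement (with $<$ and $-\varepsilon$) is entirely symmetric: apply the same argument to $1 - f$, or equivalently use the symmetric Paley--Zygmund bound $\Prob{Z < a} \geq (a - \Ex{Z})/a$ for $Z \in [0,1]$.

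There is no real obstacle here beyond selecting the correct two-sided Markov-type inequality; the proof is essentially a one-line reverse Markov once the conditional expectation identity $\Ex[D]{Y \mid u} = g(u)$ is recorded. The reason the factor of $2$ in front of $\varepsilon$ on the left-hand side is needed is precisely to give the reverse Markov step a slack of $\varepsilon$ in the numerator, which is what converts the tail bound $\beta$ into the stated bound $\beta/\varepsilon$.
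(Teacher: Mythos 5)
Your proof is correct and follows essentially the same approach as the paper: both establish that the conditional expectation of $Y = \frac{1}{k}\sum f(v_i)$ given $u$ equals $g(u)$ via the marginal hypothesis, then apply a reverse-Markov bound conditionally on each $u$ in the bad set to show a $\geq \varepsilon$ conditional tail probability, and finally combine with a Markov-type averaging over $u$. The only cosmetic differences are that you apply the reverse-Markov step at the threshold $\mu + \varepsilon$ directly rather than at $g(u) - \varepsilon$ as in the paper, and you fold the final Markov step into a single chain of inequalities rather than introducing the auxiliary random variable $Z(u_0)$.
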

We defer the proof to the end of the subsection and first prove \pref{prop:faces-to-inclusion}.
\begin{proof}[Proof of \pref{prop:faces-to-inclusion}]
    Let $m=\lfloor \frac{k}{i} \rfloor$ and fix a function \(f:\X[i] \to [0,1]\) with expectation \(\mu = \Ex[s \in {\X[i]}]{f(v)}\). Consider the distribution $D$ which samples \((s_1,s_2,\dots,s_m,t) \subset \X[i]^{m} \times \X[k]\) by drawing $\{s_i\}_{i=1}^m$ from the faces complex, and draws $t \supset \bigcup s_i$ from $\X[k]$. We claim $D$ satisfies the requirements of \pref{lem:weird-distribution-to-uniform}, namely:
    \begin{enumerate}
        \item $\forall i \in [m]$, \((s_i,t)\) is marginally distributed according to the inclusion graph edges $(\X[k],\X[i])$
        \item $\Pr_{\{s_1,s_2,\dots,s_m,t\} \sim D}\left[\left|\frac{1}{k}\sum\limits_{i=1}^k f(s_i)-\mu\right| > \varepsilon\right] \leq \beta$.
    \end{enumerate}
    The first fact follows from the definition of the faces complex, namely that marginally each $s_i$ is distributed as $\X[i]$, and $t$ is drawn from $\X[k]$ conditional on containing $s_i$. The second fact is immediate from our assumption that the faces complex is an $(\varepsilon,\beta)$-sampler.
    
\end{proof}

We are now ready to prove the two-sided case of \pref{thm:hdx-is-sampler}.
\begin{proof}[Proof of \pref{thm:hdx-is-sampler-two-sided} (\pref{thm:hdx-is-sampler} two sided case)]

By \pref{thm:split-concentrate} if the faces complex of $X$ is a $2^{-ck}$-two-sided HDX, then it is an $(\varepsilon,\beta)$-sampler for $\beta \leq c_1e^{\Omega_c\left( \varepsilon^2\lfloor \frac{k}{i}\rfloor \right)}$ and $c_1>0$ some constant dependent only on $c$. Then by \pref{prop:faces-to-inclusion}, $X$ would be a $(2\varepsilon,\frac{\beta}{\varepsilon})$-sampler. The proof is completed by observing $2^{-\Omega(k)}$-local-spectral expansion of the faces complex is immediate from \pref{thm:swap-walks-expand}, since the graph underlying any link of the faces complex is a swap walk within (a link of) $X$. The `moreover' statement follows directly from \pref{lem:raising}.
\end{proof}

It remains to prove \pref{lem:weird-distribution-to-uniform}.
\begin{proof}[Proof of \pref{lem:weird-distribution-to-uniform}]
    Let \(Z:R \to [0,1]\) record the probability $u_0 \in R$ $\varepsilon$-mis-samples $f$ under $D$:
    \[
    Z(u_0) \coloneqq \cProb{(v_1,v_2,\dots,v_k,u) \sim D}{\frac{1}{k}\sum_{i=1}^kf(v_i) > \mu+\varepsilon}{u=u_0}.
    \]
    By assumption, the expected mis-sampling probability is small
    \[
    \Ex[u_0]{Z(u_0)} = \Prob[(v_1,v_2,\dots,v_k,u) \sim D]{ \frac{1}{k}\sum_{i=1}^kf(v_i) > \mu+ \varepsilon} \leq \beta,
    \]
    so Markov's inequality implies the following tail bound on $Z(u_0)$:
    \[
    \Prob[u_0 \in R]{Z(u_0) > \varepsilon} \leq \frac{\beta}{\varepsilon}.
    \] 
    We argue if \(u_0\) is such that \(\Ex[v \in L, v\sim u_0]{f(v_i)} > \mu+ 2\varepsilon\) then \(Z(u_0) > \varepsilon\), thus concluding that \eqref{eq:weird-distribution-to-uniform} holds.

    The key is the simple observation that the (conditional) expectation $A$ under $D$ and $G$ are equivalent:    
    \begin{align}\label{eq:weird-to-true-expectation}
    \cEx{(v_1,v_2,\dots,v_k,u) \sim D}{f(v_i)}{u=u_0} &= \frac{1}{k}\sum_{i=1}^k \Ex[(v_i,u)]{f(v_i) | u=u_0} \nonumber \\ 
    &= \cEx{v \in L}{f(v)}{v \sim u_0}
    \end{align}
    
    where the first equality is by linearity of expectation, and the second is by our assumption that $(v_i,u)$ is distributed as $G$. The result now follows roughly from arguing the random variable \(\frac{1}{k}\sum_{i=1}^kf(v_i)\) restricted to \(u_0\), is non-trivially concentrated around its expectation \(\cEx{v \in L}{f(v)}{v \sim u_0}\), which is at least $2\varepsilon$-far from $\mu$.

    In particular, fix \(u_0\) such that \(\cEx{(v_1,v_2,\dots,v_k,u) \sim D}{f(v_i)}{u=u_0}=\mu_{u_0} > \mu + 2\varepsilon\). Then, we bound $Z(u_0)$ by its upper tail:
    \begin{align*}
        Z(u_0) &\geq \cProb{(v_1,v_2,\dots,v_k,u) \sim D}{\frac{1}{k}\sum_{i=1}^kf(v_i) > \mu_{u_0}-\varepsilon}{u=u_0}. 
    \end{align*}
    Denote the right-hand side by $p$, and as \(f(v) \in [0,1]\) we have
    \[ 
    \mu_{u_0} \leq p + (1-p)(\mu_{u_0}-\varepsilon). 
    \]
    Re-arranging gives 
    \[ 
    Z(u_0) \geq p \geq \frac{\varepsilon}{1-\mu_{u_0} - \varepsilon} \geq \varepsilon. 
    \]
    The last inequality follows from the the denominator being in \((0,1)\) (\(0<\mu_{u_0}-\varepsilon<1\) because \(\mu_{u_0}-\varepsilon \geq \mu+\varepsilon > 0\) and \(\mu_{u_0}-\varepsilon<\mu_{u_0}\leq 1\)).

    Finally, we note the usual trick (changing \(f\) in the proof to \(1-f\)) gives the bounds for 
    \[
    \Prob[u \in R]{\Ex[v \sim u, v \in L]{f(v)} < \mu - 2\varepsilon}.
    \]
\end{proof}
\section{Reverse Hypercontractivity} \label{sec:rhc}
In this section we recursively leverage optimal inclusion sampling to prove reverse hypercontractivity for high dimensional expanders. 
\begin{theorem}[Reverse Hypercontractivity]\label{thm:intro-rhc-real}
    Fix $c>0$, $\rho \in (0,1)$, and let \(X\) be a \(k\)-\maximal \(c\)-locally nice complex for $k$ sufficiently large. Then there exist constants $C,q$ (dependent only on $c,\rho$) such that
    \[
    \langle f, T_\rho g \rangle \geq C\norm{f}_q\norm{g}_q.
    \]
\end{theorem}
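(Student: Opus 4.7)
The plan is a two-step reduction: first to the indicator / down-up walk version of the inequality (Lemma 2.7), which was already outlined in Section 2.3, and then to lift the result from indicators to general nonnegative functions via a level-set decomposition.

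First I would reduce $T_\rho$ to a single down-up walk. Recall $T_{k,\rho} = \sum_{\ell=0}^{k}\binom{k}{\ell}\rho^{\ell}(1-\rho)^{k-\ell}\, U_{\ell,k} D_{k,\ell}$; the binomial weight concentrates around $\ell \approx \rho k$, so choosing a single $\gamma$ depending only on $\rho$ and discarding all other terms loses only a constant factor in the inequality. Each $U_{\ell,k}D_{k,\ell}$ is a monotone operator, so dropping terms from the convex combination can only decrease $\iprod{f, T_\rho g}$ and remains a valid lower bound. It therefore suffices to prove the two-function inequality for a single down-up walk $U_{\gamma k, k}D_{k,\gamma k}$.

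Second, I would establish the indicator version (Lemma~2.7). Given $A, B \sse \X[k]$ with $\prob{A}, \prob{B} \geq \exp(-c_0 k)$, write
\[
\Prob[s,s'\sim U_{\gamma k,k} D_{k, \gamma k}]{s \in A, s' \in B} \;=\; \Ex[t \in {\X[\gamma k]}]{\cProb{s}{A}{s \supseteq t} \cProb{s}{B}{s \supseteq t}},
\]
and sample $t$ incrementally as $t = t_1 \cup \cdots \cup t_m$, each $t_j$ a random $\ell$-face drawn from the link of $t_1 \cup \cdots \cup t_{j-1}$, where $\ell$ is chosen so that $\exp(-\Theta(k/\ell)) \approx \min(\prob{A}, \prob{B})$. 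Local niceness ensures every encountered link is itself nice, so \pref{cor:hdx-is-mult-sampler} applies in each one; with probability at least $1/2$ each step at most halves both conditional densities. Iterating over $m = O(\log(1/\min(\prob{A},\prob{B})))$ steps yields $\Prob[]{s \in A, s' \in B} \geq \prob{A}^{q_0}\prob{B}^{q_0}$ for some $q_0 = q_0(\rho, c)$.

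Third, I would lift to general $f,g \colon \X[k] \to \R_{\geq 0}$ via dyadic level sets. After normalizing $\norm{f}_\infty, \norm{g}_\infty \leq 1$ and truncating from below at $\exp(-c_0 k)$ (whose error contributes $\exp(-\Omega(k))$ and is absorbed into $C$), decompose $f = \sum_{i\geq 0} 2^{-i-1}\one_{A_i \sm A_{i+1}}$ for $A_i = \set{f > 2^{-i}}$, and likewise $g$ with level sets $B_j$. Only level sets with $\prob{A_i}, \prob{B_j} \geq \exp(-c_0 k)$ contribute non-negligibly, and for those the indicator bound gives
\[
\iprod{f, T_\rho g} \;\geq\; \tfrac{1}{4}\sum_{i,j}2^{-i-j}\,\prob{A_i}^{q_0}\prob{B_j}^{q_0}.
\]
Choosing a small enough $q \in (0,1)$ (the eventual exponent in the $q$-norms), the right-hand side can be compared term-by-term against $\norm{f}_q\norm{g}_q \leq (\sum_i 2^{-iq}\prob{A_i})^{1/q}(\sum_j 2^{-jq}\prob{B_j})^{1/q}$ by a H\"older/rearrangement argument, delivering the desired RHC with some $C = C(\rho, c) > 0$.

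The main obstacle is the calibration in the third step: the exponent $q_0$ coming from the indicator inequality is not the same as the target exponent $q$ in the norms, and one must simultaneously pick $q$, the dyadic scale, and the truncation threshold so that (i) all truncation and ``small level set'' contributions are absorbed in $C$, (ii) every surviving level set meets the $\exp(-c_0 k)$ threshold required by Lemma~2.7, and (iii) the sums $\sum_i 2^{-i}\prob{A_i}^{q_0}$ and $\sum_i 2^{-iq}\prob{A_i}$ can be matched by standard H\"older manipulations. The ``large enough level set'' restriction is the point that distinguishes our HDX setting from product spaces (where tensorization would be available), and it is the place where the inclusion-sampling guarantee of \pref{thm:hdx-is-sampler} is genuinely used rather than merely cited.
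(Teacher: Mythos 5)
Your overall route closely mirrors the paper's: (i) reduce $T_\rho$ to a down-up walk, (ii) prove the indicator bound $\Prob{s\in A,\,s'\in B}\geq \Pr[A]^{q_0}\Pr[B]^{q_0}$ by sampling the shared face incrementally and invoking link-level multiplicative sampling, (iii) lift from indicators to arbitrary nonnegative functions by discretizing. Step (ii) is essentially the proof of \pref{thm:indicator-reverse-hc} and is fine. However, there are two genuine gaps in the reduction steps.

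\textbf{Step 1 (reducing $T_\rho$ to a down-up walk).} Keeping a single term $\ell=\lfloor\gamma k\rfloor$ of the binomial expansion $T_\rho = \sum_\ell \binom{k}{\ell}\rho^\ell(1-\rho)^{k-\ell}UD_{k,\ell}$ does not lose only a constant factor: even at the mode $\ell\approx\rho k$ the weight is $\Theta(1/\sqrt{k})$, which goes to $0$ and cannot be absorbed into a $k$-independent constant $C$. The fix is to retain \emph{all} terms with $\ell\leq\lfloor\gamma k\rfloor$ for a suitable $\gamma>\rho$; this set carries total weight $\geq\frac12$ for $k$ large, and because \pref{thm:indicator-reverse-hc} gives the same constants $q,c$ uniformly for every $\gamma'\leq\gamma$, the single-$\gamma$ bound transfers to the whole truncated sum. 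This is exactly how the paper bridges $T_\rho$ and $UD_{k,\gamma k}$.

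\textbf{Step 3 (truncation and small sets).} Truncating at $\exp(-c_0 k)$ and ``absorbing the error into $C$'' is not sound. If $f=\one_A$ and $g=\one_B$ with $\Pr[A]=\Pr[B]=\exp(-2c_0 k)$, the truncation leaves nothing, so the left-hand side collapses to $0$ while $\norm{f}_q\norm{g}_q$ is positive --- no choice of $C>0$ survives. What is missing is the elementary bound for small sets: $T_\rho$ resamples all coordinates with probability $(1-\rho)^k$, hence $\iprod{\one_A,T_\rho\one_B}\geq(1-\rho)^k\Pr[A]\Pr[B]$, which for $\Pr[A]\Pr[B]\leq\exp(-ck)$ and $\ell$ large enough gives $\iprod{\one_A,T_\rho\one_B}\geq(\Pr[A]\Pr[B])^{\ell}$. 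Once you have this, the indicator RHC holds for \emph{all} sets, and the level-set argument can be run without any truncation threshold tied to $k$. The paper's \pref{thm:generalizing} relies on exactly this: it assumes the indicator inequality for all $A,B\subseteq V$, and its own truncation (by $\eta_i=\norm{f_i}^{1+\varepsilon}_{1\to1+\varepsilon}$, not by $\exp(-c_0k)$) is designed to preserve a constant fraction of the mass uniformly over all input functions. Finally, a minor point in the dyadic decomposition: you apply the indicator bound to $\one_{A_i\setminus A_{i+1}}$ but write $\Pr[A_i]^{q_0}$ on the right, which is an over-estimate in the wrong direction and can be zero; the cleaner object is $\one_{A_i}$ itself, using $f\geq\sum_i 2^{-i-1}\one_{A_i}\gtrsim f/2$.
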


At its core, the proof of \pref{thm:intro-rhc-real} really only relies on $X$ and its links satisfying optimal $i$ vs $k$ sampling. With this in mind, we will instead work with the following notion of a \textit{Link up-sampler} that captures this core property:
\begin{definition}[Link up-sampler (LUS)]\label{def:lus}
    Let \(\tau \in (0,1)\) and let \(X\) be a \(k\)-\maximal simplicial complex. We say that \(X\) is a \(\tau\)-LUS if for every \(i \leq k-2\), every $j \leq k-i$, and every \(s \in \X[i]\), it holds that \((\SC[X][j][s],\SC[X][k-i][s])\) is a \((\frac{1}{\tau}\exp(-\tau \frac{k-i}{j}),0.1,0.2)\)-multiplicative sampler. 
\end{definition}
We note the constants \(0.1,0.2\) are essentially arbitrary and fixed for convenience. The notion can also be further relaxed in a number of ways such as asking for sampling for $j$ only up to a linear number of levels or only within links up to a certain level and the main results of this section would still hold.

We prove link-up samplers are reverse hypercontractive. \pref{thm:intro-rhc-real} is an immediate corollary.
\begin{theorem}\label{thm:reverse}
    For every \(\rho \in (0,1)\) and \(\tau>0\), there is some \(\ell > 1\) and \(C > 0\) such that the following holds for all sufficiently large \(k \in \NN\). Let \(X\) be \(k\)-\maximal and \(\tau\)-LUS. Then \(T_\rho\) is \((\frac{1}{\ell},\frac{\ell}{1-\ell},C)\)-reverse hypercontractive.
\end{theorem}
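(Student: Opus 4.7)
The plan is to proceed in three stages: first establish a ``set-level'' reverse hypercontractivity inequality for a single down-up walk $U_{\gamma k,k}D_{k,\gamma k}$ applied to indicators of non-negligible sets; then lift this set-level bound to the noise operator $T_\rho$ using the convex decomposition $T_\rho = \sum_\ell \binom{k}{\ell}\rho^\ell(1-\rho)^{k-\ell}U_{\ell,k}D_{k,\ell}$; finally, bootstrap from indicators to arbitrary non-negative functions by a level-set discretization. The choice of abstraction (Link-up-samplers) is designed exactly so that the first stage goes through cleanly, and the remaining stages are more or less formal.

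For the set-level bound, fix $\gamma \in (0,1)$ and let $A,B \subseteq \X[k]$ with $\min\{\mu(A),\mu(B)\} \geq e^{-c k}$ for a small constant $c$ depending on $\tau,\gamma$. Following the overview in \pref{sec:rhc-intro-results}, I would write
\[
\Pr_{(s,s')\sim U_{\gamma k,k}D_{k,\gamma k}}\bigl[s \in A,\, s' \in B\bigr] \;=\; \Ex[t \in {\X[\gamma k]}]{\Pr[s \in A \mid s \supseteq t]\,\Pr[s \in B \mid s \supseteq t]},
\]
and produce $t$ incrementally by a chain $\emptyset = t_0 \subsetneq t_1 \subsetneq \dots \subsetneq t_m = t$, where $t_{j}\setminus t_{j-1}$ is drawn from the link of $t_{j-1}$ at a fixed step size $\ell$ calibrated so that $\tfrac{1}{\tau}\exp(-\tau (k-|t_{j-1}|)/\ell) \leq \tfrac{1}{2}\min\{\mu(A),\mu(B)\}\cdot 2^{-j}$. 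Applying the LUS property in the link of $t_{j-1}$ to the (conditional) indicators of $A$ and $B$ viewed on $(k-|t_{j-1}|)$-faces, a $(0.1,0.2)$-multiplicative sampler guarantees that with probability $\geq 1-2\cdot 0.1 = 0.8$ the new face $t_j$ preserves the conditional densities of both $A$ and $B$ up to a factor $0.8$. Iterating and choosing $m = \Theta(\log(1/\mu(A)))$ so that $m\ell = \gamma k$, a union bound over the $m$ steps shows that with probability at least $(0.8)^m = \mu(A)^{O_\gamma(1)}$ the final $t$ has $\Pr[A \mid t],\Pr[B \mid t] \geq \mu(A)^{O_\gamma(1)}\mu(A)$ and $\mu(B)^{O_\gamma(1)}\mu(B)$, respectively; substituting into the display above yields $\Pr[s\in A, s'\in B] \geq \mu(A)^{q_0}\mu(B)^{q_0}$ for some $q_0 = q_0(\gamma,\tau)$.

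To pass from $UD_{k,\gamma k}$ to $T_\rho$, I would fix $\gamma = \rho/2$ (say) and note that the binomial mass of $\ell \in [\gamma k, (1-\gamma)k]$ in the convex decomposition of $T_\rho$ is at least $1-e^{-\Omega_\rho(k)}$. By \pref{obs:monotonicity-of-swap-walks}-type monotonicity of the down-up walks, the set-level bound from the previous paragraph applied at the smallest relevant $\ell$ dominates, so
\[
\iprod{\mathbf 1_A, T_\rho \mathbf 1_B} \;\geq\; (1-e^{-\Omega_\rho(k)})\cdot \mu(A)^{q_0}\mu(B)^{q_0} \;\geq\; \tfrac{1}{2}\mu(A)^{q_0}\mu(B)^{q_0}
\]
for $k$ large enough and any $A,B$ of mass at least $e^{-ck}$. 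For the extension to non-negative $f,g\colon \X[k]\to \mathbb R_{\geq 0}$, I would normalize $\|f\|_q = \|g\|_q = 1$ and decompose into dyadic level sets $A_j := \{f > 2^{-j}\}$ and $B_j := \{g > 2^{-j}\}$. Writing $f \geq \sum_j 2^{-j}\mathbf 1_{A_j \setminus A_{j-1}}$ (and similarly for $g$), monotonicity of $T_\rho$ gives
\[
\iprod{f,T_\rho g} \;\geq\; \sum_{j,j'}2^{-j-j'}\iprod{\mathbf 1_{A_j}, T_\rho \mathbf 1_{B_{j'}}},
\]
and I would truncate the sum to levels where $\mu(A_j),\mu(B_{j'}) \geq e^{-ck}$, bounding the tail directly by $\|f\|_q$ (using $q<0$ small enough). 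For the surviving terms I apply the set-level bound and collect the resulting geometric series, choosing $\ell$ and $q = \tfrac{\ell}{1-\ell}$ so that $\sum_j 2^{-j}\mu(A_j)^{q_0} \lesssim \|f\|_{1/\ell}$ by a standard Kahane--Khintchine style computation.

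The main obstacle I anticipate is bookkeeping in the set-level stage: the step size $\ell$ and number of steps $m$ must be chosen so that (i) the LUS sampling threshold $\tfrac{1}{\tau}e^{-\tau(k-|t_{j-1}|)/\ell}$ stays below the conditional densities throughout, (ii) the total $m\ell$ equals the target level $\gamma k$, and (iii) the resulting $m$ is $O(\log(1/\mu(A)))$ so that the success probability $0.8^m$ is polynomial in $\mu(A)$. The delicate point is that as the walk descends the conditional densities shrink, so the threshold must track them, which is why the bound on $|A|,|B|$ cannot be taken below $e^{-ck}$; this is what forces the level-set truncation in the final step and determines the allowable range of $q$. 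Beyond this, the rest of the argument is largely bookkeeping.
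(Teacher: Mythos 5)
Your overall outline parallels the paper's lemma structure (a set-level bound for down-up walks, a lift to $T_\rho$, and a level-set reduction from indicators to general functions), and your Stage 1 is essentially identical to the paper's Theorem~8.5 / Lemma~8.10. However, Stage 2 contains a genuine error, and Stage 3 is quietly missing the observation that makes the paper's route clean.

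In Stage 2 you fix $\gamma = \rho/2$ and argue that the binomial mass of $\ell \in [\gamma k, (1-\gamma) k]$ is $1 - e^{-\Omega_\rho(k)}$, then appeal to ``monotonicity'' to cover these $\ell$ by the set-level bound. This is backwards. The set-level bound (your Stage 1, the paper's \pref{thm:indicator-reverse-hc}) applies to the walks $U_{\ell,k}D_{k,\ell}$ with intersection size $\ell \leq \gamma k$; it says nothing for $\ell > \gamma k$, and in fact cannot, since as $\ell \to k$ the walk $U_{\ell,k}D_{k,\ell} \to I$ and reverse hypercontractivity fails outright (disjoint sets have zero correlation under the identity). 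So what you need is a large probability that $\ell \leq \gamma k$ under $\ell \sim \mathrm{Bin}(k,\rho)$, which forces $\gamma > \rho$ (not $\gamma = \rho/2 < \rho$). The paper simply takes $k$ large enough that $\Pr[\ell \leq \lfloor \gamma k \rfloor] \geq 1/2$ and conditions on this event; no monotonicity claim is needed because \pref{thm:indicator-reverse-hc} already holds uniformly for every $\gamma' \leq \gamma$.

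Separately, you are missing the resampling trick that handles small sets at the indicator level. When one of $\Pr[A], \Pr[B]$ drops below $e^{-ck}$, the paper uses the fact that $T_\rho$ resamples \emph{every} coordinate with probability $(1-\rho)^k$, giving $\langle \mathbf 1_A, T_\rho \mathbf 1_B\rangle \geq (1-\rho)^k \Pr[A]\Pr[B] \geq \Pr[A]^{\ell''+1}\Pr[B]^{\ell''+1}$ for a suitable constant $\ell''$ determined by $\rho$ and $c$. This makes the indicator inequality $\langle \mathbf 1_A, T_\rho \mathbf 1_B\rangle \geq \kappa\Pr[A]^\ell\Pr[B]^\ell$ hold unconditionally, after which the reduction to arbitrary non-negative functions (\pref{thm:generalizing}) is a self-contained fact about monotone operators and never has to worry about small level sets. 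Your proposal to ``truncate the sum to levels where $\mu(A_j) \geq e^{-ck}$'' and ``bound the tail directly by $\|f\|_q$'' is substituting for this resampling observation, and while it may be salvageable, it is substantially more delicate than the sketch suggests: the sets $A_j \setminus A_{j-1}$ in your dyadic decomposition may be empty or exponentially tiny with no warning, so you need the indicator bound for \emph{all} sets, not just large ones, or else a carefully calibrated truncation argument (this is exactly what the paper's ``nice functions'' machinery in \pref{thm:generalizing} and \pref{prop:general-to-nice} is doing under the hood). I would strongly suggest reorganizing: first prove the unconditional indicator bound for $T_\rho$ via the resampling trick, then apply a standalone discretization reduction.
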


We split the proof of \pref{thm:reverse} below into two parts, roughly corresponding to reverse hypercontractivity for boolean functions, and a reduction from the general to boolean case.
\subsection{The Boolean Case}\label{sec:rhc-indicators}
The main workhorse behind \pref{thm:reverse} is to show that any linear-step \textit{down-up walk} on an LUS is reverse hypercontractive for sets of size up to \( \exp(-\Omega(k))\).
\begin{theorem}\torestate{\label{thm:indicator-reverse-hc}
Let \(\tau > 0\) and $\gamma \in (0,1)$. There exist constants \(c,q > 0\) such that the following holds for any sufficiently large \(k\). Let \(X\) be a \(k\)-\maximal \(\tau\)-LUS. Then for any sets \(A,B \subseteq \X[k]\) of relative size at least \( \exp \left (-c k \right )\) and $\gamma' \leq \gamma$ it holds that
\[
    \Prob[t,t'\sim UD_{k, \lfloor \gamma' k\rfloor}]{t \in A, t' \in B} \geq \prob{A}^q \prob{B}^q.
\]}
\end{theorem}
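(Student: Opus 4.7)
\medskip

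\noindent\textbf{Proof proposal.} The plan is to exploit the fact that an edge $(t,t')$ of $UD_{k,\lfloor \gamma' k\rfloor}$ is sampled by first drawing an intermediate face $\sigma \in \X[\lfloor \gamma' k\rfloor]$ and then drawing $t,t' \in \X[k]$ \emph{conditionally independently} from the link of $\sigma$. Writing $\mu_A(\sigma) := \cProb{t \in \X[k]}{t \in A}{t \supseteq \sigma}$ and similarly for $\mu_B$, this gives
\[
    \Prob[t,t' \sim UD_{k,\lfloor \gamma' k\rfloor}]{t \in A, t' \in B} = \Ex[\sigma \in {\X[\lfloor \gamma' k\rfloor]}]{\mu_A(\sigma)\mu_B(\sigma)}.
\]
It thus suffices to find an event on $\sigma$ of probability $\gtrsim \prob{A}^{q-1}\prob{B}^{q-1}$ on which both $\mu_A(\sigma) \gtrsim \prob{A}$ and $\mu_B(\sigma) \gtrsim \prob{B}$.

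Applying the LUS assumption directly at $j=\lfloor \gamma' k \rfloor$ is far too weak: its density threshold $\frac{1}{\tau}\exp(-\Theta(\tau))$ is just a constant, so the guarantee is vacuous against sets of exponentially small mass. Instead, I would build $\sigma$ gradually as a disjoint union $\sigma = \sigma_1 \dunion \cdots \dunion \sigma_m$, where each $\sigma_j$ has size $\ell$ and is drawn from the link of $\sigma_{<j} := \sigma_1 \cup \cdots \cup \sigma_{j-1}$, and apply LUS (which is a property of \emph{all} links) once per step. I would choose $\ell$ so that $\frac{1}{\tau}\exp(-\tau(k - (j-1)\ell)/\ell)$ stays below $\tfrac{1}{10}\min(\prob{A},\prob{B})$ for every $j \leq m$; concretely $\ell = \Theta\bigl(k/\log(1/\min(\prob{A},\prob{B}))\bigr)$ suffices, and the hypothesis $\min(\prob{A},\prob{B}) \geq \exp(-ck)$ (for small enough $c$ depending on $\tau,\gamma$) guarantees both $\ell$ is at least a positive constant and $m = \lfloor \gamma' k/\ell\rfloor = O_{\tau,\gamma}(\log(1/\min(\prob{A},\prob{B})))$.

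The inductive invariant is: after stage $j$, the conditional densities of $A,B$ in the link $X_{\sigma_{\le j}}$ are at least $2^{-j}\prob{A}$ and $2^{-j}\prob{B}$ respectively. At stage $j$, inside the link $X_{\sigma_{<j}}$ the LUS property gives that $\bigl(\SC[X][\ell][\sigma_{<j}],\SC[X][k-(j-1)\ell][\sigma_{<j}]\bigr)$ is a multiplicative sampler with the parameters above. Flipping orientation via \pref{claim:opposite-sampler} and converting to additive form via \pref{claim:additive-multiplicative-sampler-equivalence} yields a sampler guarantee in the direction we need; since the inductive conditional densities of $A$ and $B$ sit well above the threshold $\alpha$, with probability at least $\tfrac{1}{2}$ over $\sigma_j$ the conditional densities of \emph{both} $A$ and $B$ in $X_{\sigma_{\le j}}$ drop by at most a factor of $2$ (union bound over the two sets, each with failure probability $\leq \tfrac14$). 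Chaining gives the invariant at stage $m$ with probability at least $2^{-m}$, and plugging back in yields
\[
\Prob[t,t' \sim UD_{k,\lfloor \gamma'k\rfloor}]{t \in A, t' \in B} \geq 2^{-m}\cdot(2^{-m}\prob{A})\cdot(2^{-m}\prob{B}) = 2^{-3m}\prob{A}\prob{B}.
\]
By the choice of $\ell$, $2^{-m} = \min(\prob{A},\prob{B})^{\Theta_{\tau,\gamma}(1)}$, so the right-hand side is at least $\prob{A}^q\prob{B}^q$ for a constant $q$ depending only on $\tau,\gamma$.

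The main obstacle is the bookkeeping needed to keep the multiplicative-sampler threshold $\alpha$ genuinely below the conditional densities at \emph{every} step, which is a tight balancing act: the loss in density per step (a factor of $2$) must stay compatible with the slack between $\alpha$ and $\min(\prob{A},\prob{B})$, and simultaneously $m\cdot \ell \leq \gamma' k$ must hold; this forces the relation $\ell = \Theta(k/\log(1/\min(\prob{A},\prob{B})))$ and, via the requirement $\ell \geq 1$, the set-size hypothesis $\min(\prob{A},\prob{B}) \geq \exp(-ck)$. A secondary subtlety is the conversion between the two orientations of the inclusion sampler and between multiplicative and additive form, which will go through \pref{claim:opposite-sampler} and \pref{claim:additive-multiplicative-sampler-equivalence} with the constants $(0.1,0.2)$ built into the LUS definition precisely to make this conversion lossless up to universal constants.
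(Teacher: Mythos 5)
You have correctly identified the core mechanism used in the paper: decomposing the down-up walk by conditioning on the intermediate $\sigma \in \X[\lfloor \gamma' k\rfloor]$, then building $\sigma$ as a disjoint union $\sigma_1 \dunion \cdots \dunion \sigma_m$ drawn one piece at a time through the links, applying the LUS sampling guarantee at each stage with step size $\ell = \Theta\bigl(k/\log(1/\min(\prob{A},\prob{B}))\bigr)$, and chaining the per-step factor-of-2 decay. This is exactly the engine behind \pref{lem:bootstrap-rhc}.

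However, there is a genuine gap, and it is hiding precisely in the ``tight balancing act'' you flag as the main obstacle. Your chain runs for $m = \lfloor \gamma' k/\ell\rfloor$ stages, and at stage $j$ the conditional density of $A$ has decayed to roughly $2^{-j}\prob{A}$; at the same time, the LUS threshold at a link of a $(j-1)\ell$-face is $\frac{1}{\tau}\exp\bigl(-\tau(k-(j-1)\ell)/\ell\bigr)$. Working out when the latter stays below the former \emph{at the final stage} $j = m$, with $\ell = Ck/\log(1/\prob{A})$, the exponents must satisfy $C \leq \tau(1-\gamma') - \gamma'\log 2$. This has a positive solution in $C$ only when $\gamma' < \frac{\tau}{\tau + \log 2}$. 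So your single chain cannot reach the intermediate level $\gamma' k$ for $\gamma'$ anywhere close to $1$; it only works for $\gamma'$ bounded by a constant proportional to $\tau$. The paper resolves this by splitting into two phases: \pref{lem:bootstrap-rhc} proves the statement only for $\gamma' \leq \gamma_0$ where $\gamma_0 = \min\{\frac{\tau}{3\ln(1/\tau)},\gamma\}$ is deliberately small (so the balancing act closes), and then \pref{claim:from-one-constant-to-any-constant} iterates this ``loveliness'' property through successively deeper links via the composition observation (\pref{obs:links-goodness}), reaching arbitrary $\gamma < 1$ after a constant number of bootstrap rounds (each costing a further polynomial loss in the exponent $q$). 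Your proposal is missing this second phase, and without it the constants in the exponent of the LUS threshold and the geometric decay of conditional densities cannot be reconciled.
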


It will be convenient for us to assume at least one of the sets $A$ and $B$ is not too large. To this end, we first handle the case that both are large separately.
\begin{claim}\label{claim:big-sets-not-a-problem}
    For every \(p>0.5\) there exists \(q \geq 1\) so that the following holds. Let \(X\) be a simplicial complex and \(A,B \subseteq \X[k]\) subsets such that \(\prob{A},\prob{B} \geq p\). Then for any $\gamma' \leq \gamma$:
    \[
    \Prob[t,t'\sim UD_{k, \lfloor \gamma' k\rfloor}]{t \in A, t' \in B} \geq \prob{A}^q \prob{B}^q.
    \]
\end{claim}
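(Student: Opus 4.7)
The claim does not actually require any structural property of the walk: since the down-up walk $UD_{k,\lfloor \gamma' k\rfloor}$ preserves the stationary measure on $\X[k]$, both marginals of $(t,t')$ are distributed according to $\prob{\cdot}$. So the plan is to obtain the lower bound purely via a union bound and then absorb the resulting affine estimate into a power of $\prob{A}\prob{B}$.

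First I would apply inclusion--exclusion: the bad event ``$t \notin A$ or $t' \notin B$'' has probability at most $(1-\prob{A}) + (1-\prob{B})$ by a union bound on the two marginals, so
\[
\Prob[(t,t') \sim UD_{k,\lfloor \gamma' k\rfloor}]{t \in A,\, t' \in B} \;\geq\; \prob{A} + \prob{B} - 1.
\]
Note that this estimate is independent of $\gamma'$, so the final claim is reduced to a purely analytic problem: find $q = q(p) \geq 1$ such that $a + b - 1 \geq (ab)^q$ for every $a,b \in [p,1]$.

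To prove this two-variable inequality, I would reduce to the diagonal. By AM-GM, among $(a,b)$ with fixed sum $s = a+b$, the product $ab$ (and hence $(ab)^q$) is maximized at $a = b = s/2$, so it is enough to check $2a - 1 \geq a^{2q}$ for $a \in [p,1]$. Define $f(a) = 2a - 1 - a^{2q}$. Then $f''(a) = -2q(2q-1)a^{2q-2} \leq 0$ whenever $q \geq 1/2$, so $f$ is concave on $(0,1]$. Since $f(1) = 0$ automatically, concavity together with $f(p)\geq 0$ would give $f \geq 0$ on all of $[p,1]$.

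The last step is to choose $q$ making $f(p) \geq 0$, i.e.\ $p^{2q} \leq 2p - 1$, which rearranges to $q \geq q_0(p) := \frac{\log(2p-1)}{2\log p}$. Both numerator and denominator are negative on $(1/2,1)$, so $q_0(p)$ is a well-defined positive real, and the inequality $q_0(p) \geq 1$ is equivalent to $2p - 1 \leq p^2$, i.e.\ $(p-1)^2 \geq 0$, which always holds. So $q := q_0(p) \geq 1$ is a legal choice and completes the proof. There is no substantial obstacle --- the statement is essentially the routine observation that two sets of measure above $1/2$ must overlap nontrivially, and the only real work is the one-dimensional concavity check ensuring this overlap beats the power $(\prob{A}\prob{B})^q$.
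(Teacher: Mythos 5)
Your proof is correct and takes a genuinely different route from the paper's. The paper establishes $\Pr_{t,t'}[t\in A,\, t'\in B] \geq 2\prob{A}-1$ (assuming $\prob{A}\leq\prob{B}$) via a direct accounting of directed edges of the walk, and then reduces to the one-variable inequality $2x-1 \geq x^q$ which it handles by picking $q$ an integer, factoring $2x-1-x^q = (1-x)(\sum_{j=1}^{q-1}x^j - 1)$, and doing a two-range case analysis. You instead get the symmetric bound $\prob{A}+\prob{B}-1$ from a plain union bound on the two marginals (using that the down-up walk is stationary on $\pi_k$), reduce the two-variable inequality $a+b-1 \geq (ab)^q$ to the diagonal via AM-GM (valid since $(a+b)/2 \in [p,1]$ whenever $a,b \in [p,1]$), and settle the one-variable inequality $2a-1 \geq a^{2q}$ by concavity plus checking the two endpoints $a=p$ and $a=1$. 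This buys you an explicit optimal exponent $q_0(p) = \frac{\log(2p-1)}{2\log p}$ and avoids both the WLOG assumption and the case split; the verification that $q_0(p)\geq 1$ reduces to $(p-1)^2\geq 0$, which is clean. The only place one might ask you to say a word more is the observation that $q_0(p) > 1$ strictly for $p<1$ (so that concavity, which needs $q>1/2$, holds), but that follows from your $(p-1)^2 \geq 0$ argument being strict for $p\neq 1$, so the argument is complete as written.
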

The proof is largely un-enlightening computation, so we postpone it to the end of the subsection. Henceforth, assume without loss of generality that \(\prob{A} \leq \prob{B}\) and has \(\prob{A} \leq 0.8\).

\bigskip

With this out of the way, we overview the main proof strategy. Assume for the moment that $\gamma'k$ is integral. As discussed in the introduction, \pref{thm:indicator-reverse-hc} relies on the following elementary observation: since the down-up walk samples an edge by first sampling \(s \in \X[\gamma k]\) and then \textit{independently} sampling \(t,t' \in \X[k]\) containing \(s\), we can express the $\gamma'$-correlated mass of $A$ and $B$ as:
    \[
        \Prob[t,t'\sim U D_{k, \gamma' k}]{t \in A, t' \in B} = \Ex[s \in {\X[\gamma' k]}]{\cProb{t}{A}{t \supseteq s} \cdot \cProb{t}{B}{t \supseteq s}}.
    \]
This motivates us to consider the set of \(s \in \X[\gamma k]\) that \textit{simultaneously} see a large enough fraction of \(A,B\). Toward this end, for any integer \(\ell_0 \leq k\) and real number \(\delta>0\), let
\[
G^{X}(\ell_0,\delta) = \sett{s \in {\X[\ell_0]}}{\cProb{t}{A}{t \supseteq s} \geq \delta \prob{A} \ve \cProb{t}{B}{t \supseteq s} \geq \delta \prob{B}}
\] 
denote the set of elements in \(\X[\ell_0]\) that see a noticeable fraction of both \(A\) and \(B\). 

Recalling \(\prob{B} \geq \prob{A}\), to prove \pref{thm:indicator-reverse-hc} it suffices to show for any $\gamma' \leq \gamma$
\begin{equation}\label{eq:G-is-large}
\prob{G^{X}(\gamma' k,\prob{A}^{O(1)})} \geq \prob{A}^{O(1)},
\end{equation}
since (writing $G \coloneqq G^X(\gamma' k,\prob{A}^{O(1)})$) we then have
\begin{align} 
    \nonumber \Prob[t,t'\sim U D_{k, \gamma' k}]{t \in A, t' \in B}  &\geq \Prob[s \in {\X[\gamma' k]}]{G}\cdot\Pr_{s \in \X[\gamma' k]}\left[\Pr_t[t \in A|s \in G]\Pr_t[t \in B|s \in G]\right]\\
    &\geq \Pr[A]^{O(1)}\Pr[B]\label{eq:large-G-implies-rhc}
\end{align}

As such, our main goal is to prove \eqref{eq:G-is-large}. A key observation toward this end is that $G$ satisfies certain useful `composition' behavior: we can bound $G(\gamma' k, \cdot)$ by iteratively bounding $G(i, \cdot)$ for smaller $i$ within repeatedly deeper links of the complex. In particular, for sets $A$ and $B$ and any $r \in X$, define the restrictions of $A$ and $B$ to the link of $r$ as
\[
A_r \coloneqq \{t \in X_r: r \cup t \in A\} \quad \quad \text{and} \quad \quad B_r \coloneqq \{t \in X_r: r \cup t \in B\}
\]
and define the \textit{local} subset $G^{X_r}(\ell_0,\delta)$ as the set of $s \in \X[\ell_0][r]$ that see a sufficiently fraction of $A_r$ and $B_r$:
\[
G^{X_r}(\ell_0,\delta) = \sett{s \in {\X[\ell_0][r]}}{\cProb{t \in \X[k-r][r]}{A_r}{t \supseteq s} \geq \delta \prob{A_r} \ve \cProb{t \in \X[k-r][r]}{B_r}{t \supseteq s} \geq \delta \prob{B_r}}.
\] 
$G$ and $G^{X_r}$ compose in the following fashion:
\begin{observation}[Composition of \(G^X\)] \label{obs:links-goodness}
    Let \(A,B \subset \X[k]\), \(\ell_1, \ell_2 \in \mathbb{N}\), and let \(\delta_1,\delta_2,\eta_1,\eta_2 > 0\). If
    \begin{enumerate}
        \item \(\prob{G^X(\ell_1,\delta_1)} \geq \eta_1\).
        \item \(\forall r \in G^X(\ell_1,\delta_1)\): $\Prob[{\X[\ell_2][r]}]{G^{X_r}(\ell_2,\delta_2)} \geq \eta_2$
    \end{enumerate}
    Then 
    \[\prob{G^X(\ell_1 + \ell_2,\delta_1\delta_2)} \geq \eta_1 \eta_2.\]
\end{observation}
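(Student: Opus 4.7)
The plan is to decompose the sampling of a face $u \in \X[\ell_1+\ell_2]$ into a two stage process, sampling first an $\ell_1$-face $r$ and then an $\ell_2$-face $s$ from its link, and then to show that conditional probabilities factor multiplicatively through this decomposition. The main observation is that if $r \in G^X(\ell_1,\delta_1)$ and $s \in G^{X_r}(\ell_2,\delta_2)$, then $r \cup s$ must lie in $G^X(\ell_1+\ell_2,\delta_1\delta_2)$; combining this with the two hypotheses gives the claim.

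More concretely, recall that sampling $u \in \X[\ell_1+\ell_2]$ is equivalent to first drawing $r \in \X[\ell_1]$ from its induced measure and then drawing $s \in \X[\ell_2][r]$ from the link, setting $u = r \dunion s$. For any such pair observe that conditioning on containing $r \cup s$ is the same as first localizing to the link $X_r$ and then conditioning on containing $s$; i.e.
\[
\cProb{t \in \X[k]}{t \in A}{t \supseteq r \cup s} = \cProb{t' \in \X[k-\ell_1][r]}{t' \in A_r}{t' \supseteq s},
\]
and likewise for $B$. Therefore if $r \in G^X(\ell_1,\delta_1)$, so that $\Pr[A_r] \geq \delta_1 \Pr[A]$ and $\Pr[B_r] \geq \delta_1 \Pr[B]$, and additionally $s \in G^{X_r}(\ell_2,\delta_2)$, then the right hand side above is at least $\delta_2 \Pr[A_r] \geq \delta_1\delta_2\Pr[A]$, with the analogous inequality for $B$. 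This is precisely the statement that $r \cup s \in G^X(\ell_1+\ell_2,\delta_1\delta_2)$.

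Finally, we bound the probability of the joint event using the two hypotheses:
\begin{align*}
\Prob[u \in {\X[\ell_1+\ell_2]}]{u \in G^X(\ell_1+\ell_2,\delta_1\delta_2)}
&\geq \Ex[r \in {\X[\ell_1]}]{\mathbf{1}[r \in G^X(\ell_1,\delta_1)] \cdot \Prob[s \in {\X[\ell_2][r]}]{s \in G^{X_r}(\ell_2,\delta_2)}} \\
&\geq \eta_2 \cdot \Prob[r \in {\X[\ell_1]}]{r \in G^X(\ell_1,\delta_1)} \\
&\geq \eta_1 \eta_2,
\end{align*}
where the second inequality uses hypothesis 2 pointwise on the support of the indicator, and the third uses hypothesis 1. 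This completes the proof. The only step that requires any care is verifying that the link-level conditional probability factorization is exactly what is needed; once that is in place, the rest is a clean two-stage sampling argument with no losses.
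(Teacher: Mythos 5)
Your proposal is correct and follows essentially the same approach as the paper's proof: decompose the sampling of an $(\ell_1+\ell_2)$-face into a two-stage draw, show that goodness of $r$ in $X$ plus goodness of $s$ in the link $X_r$ implies goodness of $r\dunion s$ at the composed scale, and multiply the two conditional probabilities. The only difference is that you make the link-conditioning identity explicit as a displayed equation, whereas the paper applies it implicitly in a chained inequality; the substance is identical.
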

\begin{proof}
    Sample \(s \in \X[\ell_1+\ell_2]\) by sampling \(r_1 \in \X[\ell_1]\) and then \(r_2 \in \X[\ell_0][r]\) setting \(s=r_1 \dunion r_2\). Fix \(r_1 \in G^X(\ell_1,\delta_1)\). Observe that if \(r_2 \in G^{X_{r_1}}(\ell_2,\delta_2)\) then
    \begin{align*}
        \cProb{t}{A}{r_1 \dunion r_2 \subseteq t} &\geq \delta_2 \cProb{t}{A}{r_1 \subseteq t} \\
        &\geq \delta_1 \delta_2 \prob{A},
    \end{align*}
    and similarly for \(B\). Therefore in this case \(s \in G^X(\ell_1 + \ell_2,\delta_1\delta_2)\). Hence
    \begin{align*}
        \Prob[s]{G^X(\ell_1 + \ell_2,\delta_1\delta_2)} &\geq \Prob[r_1]{G^X(\ell_1,\delta_1)} \cdot \cProb{s=r_1 \dunion r_2}{r_2 \in G^{X_{r_1}}(\ell_2,\delta_2)}{r_1 \in G^X(\ell_1,\delta_1)}\\
        &\geq \eta_1 \eta_2.    
    \end{align*}
\qedhere
\end{proof}
The proof strategy (modulo details), is then simple. To bound $G(\gamma' k, \Pr[A]^{q_0})$, we divide $t \in \X[k]$ into pieces as $r_1 \cup r_2 \cup r_3 \ldots$, where each $r_i$ is viewed as being drawn from the link of $r_1\cup \ldots \cup r_{i-1}$. The local $G$-set for each $r_i$ can be lower bounded by the sampling properties of $X_{\cup_{j < i} r_j}$, and by carefully choosing the size of the $r_i$, we can ensure that composition gives the desired bound on $G$.

In practice, this approach hits some complications due to the fact that the localized restrictions of $A$ and $B$ shrink throughout the process. To handle this, we'll instead use a `two-phase' version of the above. First, we show that there exists some small absolute constant $\gamma_0$, potentially much smaller than the desired $\gamma$, for which the above approach does work. We call complexes satisfying such a bound lovely:
\begin{definition}[Lovely complex]
    We say that \(X\) is \((c,1-\gamma_0,k,q)\)-\emph{lovely}, if for all \(A,B \subset \X[k]\) with $\mathbb{E}[B] \geq \mathbb{E}[A]\geq \exp(-ck)$
    \[
    \prob{G^X(\gamma_0 k, \prob{A}^q)} \geq \prob{A}^q.
    \]
\end{definition}

We first argue any LUS complex is $\gamma'$-lovely for some $\gamma'$ depending only on $\tau$ and $\gamma$.

\begin{lemma} \label{lem:bootstrap-rhc}
    For every \(\tau > 0\) and \(\gamma < 1\) there exist \(c,q\) and \(\gamma_0 > 0\) so that following holds for large enough \(k\). Let \(X\) be \(\tau\)-LUS, then for every \(t \in X^{\leq \gamma k}\), \(X_t\) is \((c,1-\gamma',k-|t|,q)\)-lovely for any \(\gamma' \leq \gamma_0\).
\end{lemma}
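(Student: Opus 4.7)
The plan is to build the desired $\gamma'(k-|t|)$-face of $X_t$ incrementally, peeling off vertices in small batches of size $\ell$, and invoking the LUS property within each link to show that most extensions preserve a proportional fraction of both $A$ and $B$. Formally, set $d := k - |t|$ and $\nu := \log(1/\mathbb{E}[A]) \leq cd$. Choose step size $\ell \asymp \tau d / \nu$ and number of steps $m := \lfloor \gamma' d/\ell \rfloor \asymp \gamma' \nu / \tau$, with $\gamma_0$ taken sufficiently small in terms of $\tau$.

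The core inductive claim is that $\Pr[G^{X_t}(i\ell, 0.8^i)] \geq 0.8^i$ for every $i \leq m$. For the inductive step, fix $r \in G^{X_t}(i\ell, 0.8^i)$. In the link $X_{t \cup r}$, the restrictions $A_r, B_r$ have measure at least $0.8^i \mathbb{E}[A]$ and $0.8^i \mathbb{E}[B]$. Applying the LUS property of $X$ at level $|t|+i\ell$, the sampler $(X_{t\cup r}(\ell), X_{t \cup r}(d-i\ell))$ has threshold $\tfrac{1}{\tau}\exp(-\tau(d-i\ell)/\ell)$, which by the choice of $\ell$ and smallness of $\gamma_0$ stays below the current masses $0.8^i\mathbb{E}[A]$ and $0.8^i\mathbb{E}[B]$. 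The multiplicative sampler guarantee plus a union bound over $A$ and $B$ then gives $\Pr[G^{X_{t \cup r}}(\ell, 0.8)] \geq 0.8$, and \pref{obs:links-goodness} concludes the step.

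After $m$ steps we obtain $\Pr[G^{X_t}(m\ell, 0.8^m)] \geq 0.8^m$. Setting $q := C\gamma_0/\tau$ for an appropriate constant $C$ yields $0.8^m \geq \mathbb{E}[A]^q$ and $m\ell \leq \gamma' d$ for every $\gamma' \leq \gamma_0$. A small discrepancy between $m\ell$ and $\gamma' d$ (due to rounding in $\ell$) is absorbed by one final, possibly shorter, iteration step. The corner case where $\mathbb{E}[A]$ is close to $1$ (so $\nu$ is very small) falls to a single application of LUS combined with \pref{obs:links-goodness}: most $\ell$-faces already see a constant fraction of each set, and a constant number of such steps already reaches level $\gamma_0 d$.

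The main obstacle is the delicate joint choice of $c$, $q$, and $\gamma_0$ as functions of $\tau$: the LUS threshold $\tfrac{1}{\tau}\exp(-\tau(d-i\ell)/\ell)$ must dominate the decaying conditional masses $0.8^i \mathbb{E}[A]$ throughout the iteration, while simultaneously $0.8^m \geq \mathbb{E}[A]^q$ and $m \ell$ reaches $\gamma' d$. These three competing constraints force $\gamma_0 \lesssim \tau$ and $c$ small enough that $\nu \leq cd$ still leaves room in the LUS exponent. Once these constants are pinned down and the above corner case is handled, the remainder is a mechanical induction using \pref{obs:links-goodness}.
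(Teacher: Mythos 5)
Your proposal is correct and follows essentially the same strategy as the paper: an iterative descent in batches of size $\ell$, using the multiplicative-sampler guarantee of the link-up-sampler at each level, together with the composition observation (\pref{obs:links-goodness}), to accumulate the $G$-set bound multiplicatively. The core inductive invariant $\Pr[G^{X_t}(i\ell,0.8^i)]\geq 0.8^i$, the step size $\ell \asymp \tau d/\nu$ with $\nu = \log(1/\mathbb{E}[A])$, and the parameter constraints ($\gamma_0$ small relative to $\tau$ to absorb the $1/\tau$ prefactor and keep the sampler threshold below the decaying masses) all match the paper's argument.

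Two small differences worth noting. First, the paper parameterizes the step size via $m = \nu/c$ and $\ell = k/m$ with $c$ itself of order $\gamma_0\tau$, so its $\ell$ carries an additional factor of $\gamma_0$ compared to yours; both choices work, but this makes the paper's final exponent $q$ independent of $\gamma_0$ while yours scales as $\gamma_0/\tau$. Second, the paper dispenses with the large-$\mathbb{E}[A]$ corner case by citing \pref{claim:big-sets-not-a-problem} and restricting to $\Pr[A]\leq 0.8$ at the outset, whereas you handle it directly with a short argument; your handling is plausible but quite informal --- in particular, for $\mathbb{E}[A]$ \emph{very} close to $1$ the requirement $\Pr[G(\gamma' d,\Pr[A]^q)]\geq\Pr[A]^q$ becomes stringent (both sides tend to $1$), and you would want to spell out the sampler argument showing that essentially all $\gamma'd$-faces see $A$ with the right conditional density. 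Also note a phrasing slip: in your list of three constraints you write that the LUS threshold must ``dominate'' the conditional masses, but as your own inductive step makes clear, the threshold must be \emph{dominated by} them.
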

Second, we give a similar argument showing that loveliness itself can be bootstrapped to higher levels.
\begin{claim} \label{claim:from-one-constant-to-any-constant}
    Let \(X\) be a \(k\)-\maximal simplicial complex. Let \(q',q''\geq 1\), and \(c',c'',\gamma_1,\gamma_2 \geq 0\). Assume that:
    \begin{enumerate}
        \item \(X\) is \((c',1-\gamma_1,k,q')\)-lovely.
        \item For every \(r \in X(\gamma_1 k)\), \(X_r\) is \((c'',1-\gamma_2,(1-\gamma_1)k,q'')\)-lovely.
    \end{enumerate}
    Then \(X\) is \((c,(1-\gamma_1)(1-\gamma_2),k,q)\)-lovely for \(q=q''q'+q''+q'\) and \(c= \min \set{\frac{c''}{(q'+1)(1-\gamma_1)},c'}\). 
\end{claim}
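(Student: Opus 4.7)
The plan is to derive the claim by composing the two loveliness hypotheses directly through \pref{obs:links-goodness}. The target level $1-(1-\gamma_1)(1-\gamma_2) = \gamma_1 + \gamma_2(1-\gamma_1)$ decomposes cleanly into an ``outer'' chunk of size $\gamma_1 k$ handled by the loveliness of $X$ itself, plus an ``inner'' chunk of size $\gamma_2(1-\gamma_1)k$ handled inside each link $X_r$ for $r$ belonging to the outer chunk.

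Concretely, fix $A,B \subseteq \X[k]$ with $\prob{B} \geq \prob{A} \geq \exp(-ck)$. First I would apply the $(c',1-\gamma_1,k,q')$-loveliness of $X$ (legal since $c \leq c'$) to obtain
\[
\prob{G^X(\gamma_1 k, \prob{A}^{q'})} \geq \prob{A}^{q'}.
\]
For any $r$ in this set, the restrictions $A_r, B_r$ sit inside the $(1-\gamma_1)k$-uniform link $X_r$ and by definition of $G^X$ satisfy $\prob{B_r} \geq \prob{A_r} \geq \prob{A}^{q'+1}$. Next I would invoke the $(c'',1-\gamma_2,(1-\gamma_1)k,q'')$-loveliness of $X_r$ for each such $r$, which gives
\[
\Pr_{X_r}\left[G^{X_r}(\gamma_2(1-\gamma_1)k,\, \prob{A_r}^{q''})\right] \geq \prob{A_r}^{q''} \geq \prob{A}^{(q'+1)q''}.
\]
Since $G^{X_r}(\ell,\delta)$ is antitone in $\delta$, the link-dependent threshold $\prob{A_r}^{q''}$ can be replaced by the uniform $\prob{A}^{(q'+1)q''}$ without loss, so \pref{obs:links-goodness} applies with $\delta_1=\eta_1=\prob{A}^{q'}$ and $\delta_2=\eta_2=\prob{A}^{(q'+1)q''}$, yielding
\[
\prob{G^X\bigl((\gamma_1+\gamma_2(1-\gamma_1))k,\; \prob{A}^{q'+(q'+1)q''}\bigr)} \geq \prob{A}^{q'+(q'+1)q''}.
\]
Since $\gamma_1+\gamma_2(1-\gamma_1) = 1-(1-\gamma_1)(1-\gamma_2)$ and $q'+(q'+1)q'' = q'q''+q''+q'$, this is exactly the asserted loveliness at the claimed level with the claimed exponent $q$.

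The only real bookkeeping is the combined threshold on $\prob{A}$: the outer application requires $c \leq c'$, while the inner application requires the restricted measure in $X_r$ to satisfy $\prob{A_r} \geq \exp(-c''(1-\gamma_1)k)$, which, using $\prob{A_r} \geq \prob{A}^{q'+1}$, translates to a second upper bound on $c$ in terms of $c''$, $q'$, and $1-\gamma_1$. Taking the minimum of these two constraints yields the stated value of $c$, and I expect no genuine obstacle --- the argument is essentially an unpacking of the two hypotheses through the composition principle, and the main care is in correctly tracking the measure thresholds so that both loveliness hypotheses remain applicable.
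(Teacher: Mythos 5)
Your proposal is correct and follows essentially the same route as the paper's proof: decompose $1-(1-\gamma_1)(1-\gamma_2) = \gamma_1 + \gamma_2(1-\gamma_1)$, apply the outer loveliness of $X$ on the first $\gamma_1 k$ vertices, the inner loveliness of each link $X_r$ on the next $\gamma_2(1-\gamma_1)k$ vertices, and combine with \pref{obs:links-goodness}; the exponent arithmetic and the two constraints on $c$ are tracked the same way.
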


Combined with some definition chasing, \pref{lem:bootstrap-rhc} and \pref{claim:from-one-constant-to-any-constant} immediately imply \pref{thm:indicator-reverse-hc}.
\begin{proof}[Proof of \pref{thm:indicator-reverse-hc}]
    Recall that by \eqref{eq:large-G-implies-rhc} it suffices to show \(X\) is \((c,1-\gamma,k,q)\)-lovely for some constants \(c,q\) depending only on \(\gamma\) and $\tau$. We first prove a slightly different statement by induction:
    \[
    \forall \ell \in \left[\left\lceil \frac{\log(1-\gamma)}{\log(1-\gamma_0)}\right\rceil\right]: \quad \text{\(X\) is \((c_\ell,(1-\gamma_0)^\ell,k,q_\ell)\)-lovely}.
    \]
    for constants $c_\ell,q_\ell$ dependent only on $\tau,\gamma$. Note this implicitly assumes $(1-\gamma_0)^\ell$ is integer. This is unecessary and we discuss removing this assumption at the end of the subsection. Finally, observe that for $\ell = \left\lceil \frac{\log(1-\gamma)}{\log(1-\gamma_0)}\right\rceil$ we have \((1-\gamma_0)^\ell \leq 1-\gamma\). We will argue a slight modification gives exactly $1-\gamma''$ for any $\gamma'' \leq \gamma$ after showing the above.
    
    The base case $\ell=1$ is immediate from \pref{lem:bootstrap-rhc}. Assume the statement is true for some $\ell < \left\lceil \frac{\log(1-\gamma)}{\log(1-\gamma_0)}\right\rceil$, then \(X\) is \((c(\ell),(1-\gamma_0)^\ell,k,q(\ell))\)-lovely. Moreover, for \(j= 1-(1-\gamma_0)^\ell \leq \gamma k\), \pref{lem:bootstrap-rhc} implies every $j$-link \(X_t\) is \((c'',1-\gamma',k-j,q'')\)-lovely for any \(\gamma' \leq \gamma_0\).
    Thus by \pref{claim:from-one-constant-to-any-constant}, we have that $X$ is \((c_{\ell+1},1-(1-\gamma_0)^{\ell+1},k-|t|,q_{\ell+1})\)-lovely where $c_{\ell+1}$ and $q_{\ell+1}$ are functions of $\gamma_0$, $c(\ell)$, $q(\ell)$, $c''$, and $q''$ as described in \pref{claim:from-one-constant-to-any-constant}. Since $\gamma_0$, $c''$, and $q''$ are themselves functions only of $\tau$ and $\gamma$, this gives the desired bound.
    
    In order to prove $X$ is $(c,1-\gamma,k,q)$-lovely, simply observe that in the final step of this induction we may choose $\gamma_0' \leq \gamma_0$ such that $(1-\gamma_0)^\ell(1-\gamma_0')=1-\gamma$. Exactly the same argument as above then gives $X$ is at least \((c_\ell,1-\gamma,k,q_\ell)\)-lovely for $\ell = \left\lceil \frac{\log(1-\gamma)}{\log(1-\gamma_0)}\right\rceil$. The same strategy can be used to achieve the same (or better) constants for any $\gamma' < \gamma$ as well.
\end{proof}
It remains to prove \pref{lem:bootstrap-rhc} and \pref{claim:from-one-constant-to-any-constant}. While \pref{lem:bootstrap-rhc} is the central component, it is instructive to first prove the simpler \pref{claim:from-one-constant-to-any-constant} which is in some sense a `one-step' variant of the former.
\begin{proof}[Proof of \pref{claim:from-one-constant-to-any-constant}]
    Let \(A,B\) be sets of size at least \( \exp(-ck)\). Our goal can be rephrased as showing
    \[
    G^{X}((\gamma_1 + \gamma_2 (1-\gamma_1))k, \prob{A}^{q''q'+q''+q'}) \geq  \prob{A}^{q''q'+q''+q'}
    \]  
    since $1-(\gamma_1 + \gamma_2 (1-\gamma_1))=(1-\gamma_1)(1-\gamma_2)$. We prove this via \pref{obs:links-goodness}. In particular since $c' < c$, \((c',1-\gamma_1,k,q')\)-loveliness of \(X\) implies
    \[ 
    G^{X}(\gamma_1 k, \prob{A}^{q'}) \geq \prob{A}^{q'}.
    \]
    Moreover, for every \(r \in G^{X}(\gamma_1 k, \prob{A}^{q'})\) we have that \(\Pr_{X_{r}}[A],\Pr_{X_{r}}[B] \geq \prob{A}^{q'+1} \geq \exp(-c''(1-\gamma_1)k)\). Therefore by \((c'',1-\gamma_2,(1-\gamma_1)k,q'')\)-loveliness of \(X_{r}\) it holds that
    \[
    G^{X_{t_1}}(\gamma_2(1-\gamma_1) k, \Prob[X_r]{A}^{q''}) \geq \Prob[X_r]{A}^{q''} \geq \prob{A}^{(q'+1)q''}.
    \]
    Combining the two bounds and applying \pref{obs:links-goodness} we then have
    \[
    G^{X}(\gamma_1k + \gamma_2 (1-\gamma_1)k, \prob{A}^{q''q'+q''+q'}) \geq  \prob{A}^{q''q'+q''+q'}
    \]
    as desired.
\end{proof}

Finally, we end with the proof of \pref{lem:bootstrap-rhc}. Here, instead of decomposing \(s=r \cup r'\), we will need to more carefully decompose \(s=r_1 \cup r_2 \cup r_3 \dots\) and control in every step how much \(A\) and \(B\) shrink using the link-up-sampling property.

\begin{proof}[Proof of \pref{lem:bootstrap-rhc}]
    We start with a few simplifying assumptions. First, recall that by \pref{claim:big-sets-not-a-problem} we may assume that \(\prob{A} \leq 0.8\). Second, note it is sufficient to prove the theorem just for $X$. The result then follows for all $\gamma k$-links \(X_t\) since the link of a $\tau$-LUS complex is itself $\tau$-LUS, and the ambient dimension of each link is at least \((1-\gamma)k\) which we take to be still sufficiently large. Finally we introduce one notational simplification. Recall in the definition of a \(\tau\)-LUS that for any $i$-face $s$ the graph \((\SC[X][j][s],\SC[X][k-i][s])\) is a \((\frac{1}{\tau}\exp(-\tau \frac{k-i}{j}),0.1,0.2)\)-multiplicative sampler. In the proof of this lemma we will only consider links of faces \(s\) with size \(i \leq \gamma_0 k\) for some sufficiently small \(\gamma_0\), so these are always \((\frac{1}{\tau}\exp(-\tau(1-\gamma_0) \frac{k}{j}),0.1,0.2)\)-multiplicative samplers. Moreover, we will also always ensure \(j \leq \gamma_0 k\), allowing us to subsume the constant $\frac{1}{\tau}$ into the exponent. In particular, as long as $\gamma_0 \leq \frac{\tau}{3\ln \frac{1}{\tau}}$, we may assume all (examined) links in the proof are \((\exp(-\tau' \frac{k}{j}),0.1,0.2)\)-multiplicative samplers where $\tau' = \frac{\tau}{3}$.    
    
    \medskip
    Fix $\gamma_0 = \min\{\frac{\tau}{3\ln \frac{1}{\tau}},\gamma\}$, let \(c = \gamma_0\frac{\tau'}{3}\), and assume \(\prob{B}\geq \prob{A} \geq \exp(-ck)\). Fix $m=\frac{-\log \prob{A}}{c}$, and define the `step size' $\ell=\frac{k}{m}$. Observe that $1 \leq \ell \leq \gamma_0k$. Assume for the moment that \(\ell\) is also integer (see \pref{rem:integer}). We record the following immediate observations:
\begin{observation}\label{obs:rhc-ind}~
    \begin{enumerate}
        \item \(\prob{A} = \exp(-c m)\).
        \item \(0.8^{m} = \prob{A}^{q_0}\) where \(q_0 = -c^{-1}\log 0.8\) is a constant.
        \item For every \(i \leq \gamma_0 m\): 
        \[
        0.8^i \prob{A} \geq \exp(-\tau' m) = \exp(-\tau' \frac{k}{\ell}).
        \]
    \end{enumerate} 
\end{observation}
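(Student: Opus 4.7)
The plan is to verify all three items by direct substitution from the definitions $m = -\log\prob{A}/c$, $\ell = k/m$, $c = \gamma_0\tau'/3$, $\tau' = \tau/3$, and $\gamma_0 \leq \tau/(3\ln(1/\tau))$. All three assertions are arithmetic identities packaging together scaling facts that will be invoked inside the inductive step-decomposition in the proof of \pref{lem:bootstrap-rhc}; no new ideas beyond careful bookkeeping are required.

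Item 1 is immediate from the definition of $m$: $\exp(-cm) = \exp(\log\prob{A}) = \prob{A}$. For item 2, I would set $q_0 = -c^{-1}\log 0.8 > 0$, which depends only on $\tau$ and $\gamma$ since $c$ does. Then $0.8^m = \exp(m\log 0.8) = \exp(-cm \cdot q_0) = \prob{A}^{q_0}$ by item 1.

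Item 3 is the one that genuinely uses the specific choice of $\gamma_0$. The second equality is just the identity $m = k/\ell$. For the inequality, I would substitute $\prob{A} = \exp(-cm)$ from item 1 and take logarithms to reduce the claim to $i\log(1/0.8) \leq (\tau'-c)m$. Since $i \leq \gamma_0 m$ and $c = \gamma_0\tau'/3 \leq \tau'/3$, it suffices to check $\gamma_0\log(1/0.8) \leq 2\tau'/3$; plugging in $\tau' = \tau/3$ and the bound $\gamma_0 \leq \tau/(3\ln(1/\tau))$, this reduces to $\ln(1/\tau) \geq (3/2)\log(1/0.8)$, which holds once $\tau$ is below a small absolute constant. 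The only potential wrinkle is that the leading $1/3$ factor in the definition of $\gamma_0$ may need to be shrunk by a further absolute constant to make the inequality uniform; this adjustment does not affect items 1 or 2 and introduces no real obstacle. I therefore expect this observation to be a routine calculation.
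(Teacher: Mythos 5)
Your proposal is correct and follows essentially the same route as the paper: items 1 and 2 are definitional, and item 3 is the single calculation that uses the choice of constants. The paper records the condition as ``$c(1+\gamma_0 q_0)\leq\tau'$'' and says ``one can check'' that the chosen $\gamma_0$ suffices; your logarithmic rearrangement $i\ln(1/0.8)\leq(\tau'-c)m$ is an equivalent form of the same inequality, and the residual requirement $\ln(1/\tau)\geq\tfrac{3}{2}\ln(1/0.8)$ (i.e.\ $\tau\lesssim 0.716$) that you flag is a real, if minor, loose end in the paper's stated constants — your suggestion to absorb it by shrinking the absolute constant in $\gamma_0$ is exactly the right fix and matches the spirit of the paper's argument.
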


The first two items are by definition. The third is satisfied for any $\gamma_0,c$ such that $c(1+\gamma_0 q_0) \leq \tau'$. One can check that taking $\gamma_0 \leq \min\{\gamma,\frac{\tau}{3\ln \frac{1}{\tau}}\}$ suffices. Finally, assume for simplicity $k$ is such that $\gamma_0 k \in \mathbb{N}$.

We now move to the main argument. Let \(j \in \set{1,2,\ldots,j_{fin} = \lceil \gamma_0 m \rceil}\) and for every \(j \leq \lfloor \gamma_0 m \rfloor\) define
\[
    G_j \coloneqq G^X(j \ell,(0.8)^j)
\]
as above. If \(j_{fin} > \gamma_0 m\), additionally define
\[
    G_{j_{fin}} \coloneqq G^X(\gamma_0 k, 0.8^{j_{fin}}) \subseteq G^X(\gamma_0 k, \Prob{A}^{\gamma_0 q_0+1}).
\]
where in the last step we've used the assumption that $\Pr[A] \leq 0.8$. We will show
\begin{equation} \label{eq:inductive-conclusion-on-Gm}
    \prob{G_{j_{fin}}} \geq \left (0.8 \right )^{j_{fin}} \geq \prob{A}^{\gamma_0 q_0 +1}.
\end{equation} 
Setting \(q = \gamma_0 q_0 +1\) proves the lemma.
We prove \eqref{eq:inductive-conclusion-on-Gm} by inductively showing that for every \(j=1,2,\ldots,j_{fin}\), 
\begin{equation*}
    \prob{G_j} \geq \left (0.8 \right )^j.
\end{equation*} 
We start with the base case $j=1$, which already contains the main idea. Given $S \subset \X[k]$, define
\[
    T(S) \coloneqq \left\{ r \in \X[\ell] : \cProb{}{S}{r} < 0.8\prob{S}\right\}
\]
to be the set of `terrible' $\ell$-sets which see too little of $S$. Observe that by construction
\[
    G_1 = \X[\ell] \setminus (T(A) \cup T(B)).
\]
Since $\ell \leq \gamma_0k$, we have (\(\X[k],\X[\ell])\) is a \((\exp(-\tau'\frac{k}{\ell}),0.1,0.2)\)-sampler by our earlier discussion. On the other hand, by \pref{obs:rhc-ind}, we can bound the sizes of $A$ and $B$ as: 
\[
    \Pr[B] \geq \Pr[A] = \exp(-c m) \geq \exp(-\tau'\frac{k}{\ell}).
\]
Thus applying sampling we can bound $\prob{G_1}$ by
\[
    \prob{G_1} \geq 1- \prob{T(A)} - \prob{T(B)} \geq 1-2\cdot 0.1 = 0.8
\]
as desired.

The inductive step is similar to the base case, only performed \textit{locally} within the links of \(X\) together with \pref{obs:links-goodness}.
By induction hypothesis \(\prob{G_i} \geq 0.8^i\). For every \(s \in G_i\), by \pref{obs:rhc-ind}
\[
\Prob[X_s]{B}, \Prob[X_s]{A} \geq (0.8)^i \prob{A} \geq \exp(-\tau' m) \geq \exp(-\tau'\frac{k}{\ell}).
\]
Thus by the same argument above for \(X_s\) and \(A_s=\sett{t \setminus s}{s \subseteq t \in A}, B_s=\sett{t \setminus s}{s \subseteq t \in B}\), we have
\[
\prob{G^{X_s}(\ell, 0.8)~\Bigg|~s \in G_i} \geq 1- \prob{T(A)} - \prob{T(B)} \geq 1-2\cdot 0.1 = 0.8.
\]
The inequality \(\prob{G_{i+1}} \geq 0.8^{i+1}\) then follows by \pref{obs:links-goodness}.

Finally, we note that for the last step, going between \(j_{fin}-1\) to \(j_{fin}\) (in the case that \(j_{fin} > \gamma m\)), we follow the same procedure except that we may need to sample less than \(\ell\) new vertices. This only improves the sampling so the same analysis as above holds.
\end{proof}

\begin{remark}\label{rem:integer}
    Throughout the proof we have `cheated' in the typical way assuming a priori that various parameters (\(\ell\), $\gamma k$, $\gamma_0 k$, and $(1-\gamma_0)^i(1-\gamma')k$) are integer. Note that we did not assume above $\gamma_0m$ was integer because of the careful interplay between $m$ and $\Pr[A]$. In all other cases, however, these assumptions are benign and taking the appropriate integer floor does not substantially change the proof. The only modification is that every `step' in the process (e.g., drawing an $\lfloor \ell \rfloor$-size set) potentially moves one fewer level up the complex. In the worst case, this results in taking a constant multiplicative factor more steps than under integer assumptions, resulting in additional constant factors in $c$ and $q_0$. We omit the details.
\end{remark}

\begin{proof}[Proof of \pref{claim:big-sets-not-a-problem}]
    Denote by \(E(X,Y)\) the set of directed edges \((t,t')\) such that \(t \in X, t' \in Y\).  It is easy to see that \(\prob{E(A,B)}+\prob{E(A,B^c)}=\prob{A}\). Similarly we have that \(\prob{E(A,B^c)}+\prob{E(A^c,B^c)}=\prob{B^c}\) so \(\prob{B^c} \geq \prob{E(A^c,B^c)}\). combining these together gives that \(\prob{E(A,B)} \geq \prob{A} - \prob{B^c}\). Assume that \(\prob{A} \leq \prob{B}\), this implies that \(\prob{E(A,B)} \geq \prob{A} - \prob{A^c}=2\prob{A}-1\).
    Hence it is sufficient to show that given \(A\) such that \(\prob{A} > p\), there exists a constant $q>0$ such that \(2\prob{A}-1 \geq \prob{A}^q\) (which in turn is at least \( \prob{A}^q \prob{B}^q\)). Indeed let us consider \(f(x)=2x-1-x^q\) and show this function is non-negative in the range \([p,1]\) for large enough \(q\).

    % https://i.imgflip.com/8eid3k.jpg
    Assume without loss of generality that \(q\) is an integer. In this case \(f(x) = (1-x)(\sum_{j=1}^{q-1}x^j - 1)\). This function is greater than \(0\) in \([p,1]\) if and only if \(\sum_{j=1}^{q-1}x^j \geq 1\) in this range. In the range \([p,0.9]\) the series \(\sum_{j=1}^{q-1}x^j\) converges uniformly to \(\frac{x}{1-x}\) which is always greater or equal \(\frac{p}{1-p}>1\). In the range \(x \in [0.9,1]\) we can take any \(q\geq 3\) and one can verify that \(\sum_{j=1}^{q-1}x^j > 0.9+0.9^2 > 1\). 
\end{proof}

\subsection{The General Case}
We now prove \pref{thm:reverse} using our restricted reverse hypercontractive inequality for indicators. This is done via the following reduction from reverse hypercontractivity for sets to the general case.
\begin{theorem}\torestate{\label{thm:generalizing}
    For every \(\ell \geq 1\) and \(\varepsilon > 0\) there exists \(\kappa' \geq \left(\frac{1}{5}\right)^{2\ell(1+\varepsilon)}\left(\frac{1}{18 + \frac{12}{\varepsilon}}\right)^{2(\ell-1)}\) such that the following holds. Let $V$ be a finite probability space and \(D\) a monotone linear operator such that for every \(A,B \subseteq V\),
    \begin{equation} 
        \langle 1_A, D1_B \rangle \geq \kappa \prob{A}^\ell\prob{B}^{\ell}.
    \end{equation}
    Then for every two \emph{arbitrary} functions \(f_1,f_2:V \to \RR_{\geq 0}\) it holds that
    \begin{equation}
    \iprod{f_1, D f_2} \geq \kappa \kappa' \norm{f_1}_{\frac{1}{\ell(1+\varepsilon)}} \norm{f_2}_{\frac{1}{\ell(1+\varepsilon)}} .   
    \end{equation}
    That is, \(D\) is \((\frac{1}{\ell(1+\varepsilon)},\frac{1}{1-\ell(1+\varepsilon)},\kappa \kappa')\)-reverse hypercontractive.}
\end{theorem}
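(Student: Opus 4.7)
The strategy is the standard layer-cake discretization: approximate each $f_i$ from below by a geometric step function built out of its level sets, apply the indicator hypothesis level-by-level, and then reassemble using a careful scalar inequality that exploits the slack factor $(1+\varepsilon)$.

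\textbf{Step 1 (Normalize and discretize).} By homogeneity assume $\norm{f_1}_p = \norm{f_2}_p = 1$ where $p = \frac{1}{\ell(1+\varepsilon)} \in (0,1)$. Fix a discretization base $\alpha > 1$ (eventually $\alpha = 5$) and define the nested level sets $A_i = \set{f_1 \geq \alpha^{-i}}$ and $B_j = \set{f_2 \geq \alpha^{-j}}$ for $i,j \in \ZZ$. A direct computation gives the pointwise bound
\[
f_1 \;\geq\; \Paren{1-\tfrac{1}{\alpha}}\sum_{i \in \ZZ} \alpha^{-i} \one_{A_i}, \qquad f_2 \;\geq\; \Paren{1-\tfrac{1}{\alpha}}\sum_{j \in \ZZ} \alpha^{-j} \one_{B_j}.
\]

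\textbf{Step 2 (Apply the indicator inequality).} By monotonicity and bilinearity of $\iprod{\cdot,D\cdot}$ together with the hypothesis,
\[
\iprod{f_1, D f_2} \;\geq\; \Paren{1-\tfrac{1}{\alpha}}^2 \kappa \cdot \Paren{\sum_{i} \alpha^{-i} \prob{A_i}^\ell}\Paren{\sum_{j} \alpha^{-j} \prob{B_j}^\ell}.
\]
The problem thus reduces to a purely scalar one-sided estimate: for every non-negative $f$ on $V$ with $\norm{f}_p = 1$, show that
\[
S(f) \;\coloneqq\; \sum_{i \in \ZZ} \alpha^{-i} \Prob{f \geq \alpha^{-i}}^\ell \;\geq\; \kappa_0
\]
for a constant $\kappa_0 = \kappa_0(\alpha,\ell,\varepsilon)$. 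Applying this to $f_1$ and $f_2$ and combining with the display above yields the theorem with $\kappa' = (1-1/\alpha)^2 \kappa_0^2$.

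\textbf{Step 3 (The core one-sided estimate).} The normalization $\norm{f}_p = 1$ unwinds, via the layer-cake representation of $\ex{f^p}$, to
\[
\sum_{i}\alpha^{-ip}\,\Prob{A_i\setminus A_{i-1}} \;\in\; [\alpha^{-p},1],
\]
which forces the existence of an interval $[i_-,i_+] \subset \ZZ$ of ``typical'' indices carrying a definite fraction (say at least $1/2$) of this sum; outside $[i_-,i_+]$ the contribution is negligible by a Markov-type tail bound. On this interval, one uses that $\Pr[A_i] \geq \Pr[A_i \setminus A_{i-1}]$ and writes
\[
\alpha^{-i}\prob{A_i}^\ell \;=\; \bigl(\alpha^{-ip}\prob{A_i}\bigr)^{1/p}\cdot \prob{A_i}^{\ell - 1/p} \;=\; \bigl(\alpha^{-ip}\prob{A_i}\bigr)^{\ell(1+\varepsilon)}\cdot \prob{A_i}^{-\ell\varepsilon},
\]
so that after passing the sum through a (reverse) H\"older inequality with the exponents $\ell(1+\varepsilon)$ and its conjugate, we convert control of $\sum_{i} \alpha^{-ip}\prob{A_i}$ into control of $S(f)$. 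Taking $\alpha = 5$ and the ``width'' of $[i_-,i_+]$ chosen as $O(1/\varepsilon)$ produces the constants $(1/5)^{\ell(1+\varepsilon)}$ (from $(1-1/\alpha)$ to the power $1/p$) and $(1/(18 + 12/\varepsilon))^{\ell-1}$ (from the truncation width and the passage $\Pr[A_i] \mapsto \Pr[A_i]^\ell$), which on squaring for the two functions match the bound claimed for $\kappa'$.

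\textbf{Main obstacle.} The subtle step is the scalar inequality of Step~3: for $p < 1$ the $L^p$ ``quasi-norm'' is dominated by small values of $f$, so naively the tail with $f$ small could make $S(f)$ arbitrarily small compared to $\norm{f}_p$. The slack $\varepsilon$ in $p = 1/(\ell(1+\varepsilon))$ is what lets us truncate the layer-cake to a finite window and absorb this tail, and balancing the truncation width against the loss incurred by raising $\Pr[A_i]$ to the $\ell$-th power is the only nontrivial calculation in the proof.
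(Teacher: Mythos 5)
Your Steps 1 and 2 are correct: by homogeneity, monotonicity, and bilinearity the theorem reduces to the scalar inequality $S(f) \coloneqq \sum_{i\in\Z}\alpha^{-i}\prob{A_i}^\ell \geq \kappa_0$ for non-negative $f$ normalized so that $\norm{f}_p=1$ with $p=\frac{1}{\ell(1+\varepsilon)}$, where $A_i = \set{f\geq\alpha^{-i}}$. This is a valid reformulation (and the scalar inequality is in fact true), but Step 3 does not prove it, and the gap there is genuine.

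The ``reverse H\"older'' step discards exactly the wrong factor. Writing $\alpha^{-i}\prob{A_i}^\ell=(\alpha^{-ip}\prob{A_i})^{1/p}\cdot\prob{A_i}^{-\ell\varepsilon}$ and bounding $\prob{A_i}^{-\ell\varepsilon}\geq 1$ leaves $S(f)\geq\sum_i(\alpha^{-ip}\prob{A_i})^{1/p}$, but the only control the normalization gives is $\sum_i\alpha^{-ip}\prob{A_i}=\Theta(1)$ (summation by parts against $\sum_i\alpha^{-ip}\Pr[A_i\setminus A_{i-1}]\approx\ex{f^p}$). Since $1/p>1$, the quantity $\sum_i(\alpha^{-ip}\prob{A_i})^{1/p}$ can be made arbitrarily small under this constraint by spreading the mass over $N$ scales: take $\Pr[A_i\setminus A_{i-1}]\propto\alpha^{ip}$ for $i=1,\dots,N$, so that $\alpha^{-ip}\prob{A_i}$ is essentially constant over $i\in[N]$, giving $\sum_i(\alpha^{-ip}\prob{A_i})^{1/p}=\Theta(N^{1-1/p})\norm{f}_p\to 0$. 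In that example $S(f)/\norm{f}_p$ actually grows, but the growth comes entirely from the factor $\prob{A_i}^{-\ell\varepsilon}$ you dropped, so your chain of inequalities has already lost everything it needed. The same example defeats the ``typical window plus Markov'' claim: the mass $\alpha^{-ip}\Pr[A_i\setminus A_{i-1}]$ is literally uniform over $N$ indices, so no window of width $O(1/\varepsilon)$ (or of any width independent of $f$) captures a constant fraction, and Markov on $\ex{f^p}$ gives no uniform tail bound.

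What makes the paper's argument work is that the truncation is not of fixed width; it is calibrated to the function through $\eta=\ex{f^{1+\varepsilon}}/\ex{f}^{1+\varepsilon}$. \pref{prop:general-to-nice} constructs $f'\leq f$ with $\ex{f'}\geq\frac{1}{5}\ex{f}$ supported on $n=O(\log\eta+1/\varepsilon)$ dyadic levels, via an iterated re-balancing that uses Markov applied to $f^{1+\varepsilon}$ to control the upper tail at every stage. \pref{prop:nice-to-bool} then applies your finite-support H\"older step; the loss $n^{1-\ell}$ is exactly offset by a gain of $\eta^\ell$ that appears when $\ex{(f')^{1+\varepsilon}}^\ell$ is compared to $\ex{f'}^{\ell(1+\varepsilon)}$. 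This trade-off between a $\log\eta$ contribution to the window width and an $\eta^\ell$ gain is the crux of the proof and cannot be bypassed by a fixed-width truncation.
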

The proof involves careful discretization and thresholding of $f_1$ and $f_2$ into level sets in a way that largely maintains the functions' moments. We defer the details to \pref{app:general-hc} and prove the main Theorem assuming this result.
\begin{proof}[Proof of \pref{thm:reverse}]
    By \pref{thm:generalizing} it is enough that we show that there exist \(\ell>1, \kappa>0\) such that for every \(A,B \subseteq \X[k]\)
    \[
    \langle 1_A,T_\rho 1_B \rangle \geq \kappa \prob{A}^{\ell} \prob{B}^{\ell}.
    \]
    Since $\gamma > \rho$, observe we can always take \(k\) sufficiently large so that \(\Prob[Y \sim Bin(\rho,k)]{Y \leq \lfloor \gamma k \rfloor} \geq \frac{1}{2}\). Let \(q,c\) be the constants promised by \pref{thm:indicator-reverse-hc} with respect to the \(UD_{\lfloor \gamma k \rfloor,k}\). By monotonicity of $q$ and $c$, these apply to all walks of intersection at most \(\lfloor \gamma k \rfloor\)). Thus if both \(\prob{A},\prob{B} \geq \exp(-ck)\) then by \pref{thm:indicator-reverse-hc}:
    \[
    \Prob[t_1,t_2 \sim T_\rho]{t_1 \in A, t_2 \in B} \geq \Prob[Y \sim Bin(\rho,k)]{Y \leq \gamma k} \prob{A}^{q} \prob{B}^{q} \geq \frac{1}{2}\prob{A}^{q} \prob{B}^{q}.
    \]
    Otherwise, it holds that \(\prob{A}\prob{B} \leq \exp(-ck)\). On the other hand, the probability of resampling \emph{all vertices} in the noise operation is \((1-\rho)^k = \exp(-ck)^{\ell''}\) for some constant \(\ell'' > 1\) that depends on \(c\) and on \(\rho\). In particular,
    \[ 
    \Prob[t_1,t_2 \sim T_\rho]{t_1 \in A, t_2 \in B} \geq (1-\rho)^k \prob{A} \prob{B} \geq \prob{A}^{\ell''+1} \prob{B}^{\ell''+1}.
    \]
    Taking \(\ell = \max \set{q,\ell''+1}\) completes the proof.
\end{proof}

\subsection{Proof of \pref{thm:intro-rhc-real}}\label{sec:LUS}
Finally, we briefly give the formal proof for reverse hypercontractivity on high dimensional expanders. This follows from the fact that any \(c\)-locally nice complex is $\tau(c)$-LUS for some appropriate setting of constants.

\begin{corollary} \label{cor:HDX-are-lus}
    For every \(c < 1\) there exists a constant \(\tau > 0\) such that any \(c\)-locally nice complex is \(\tau\)-LUS.
\end{corollary}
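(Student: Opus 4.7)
The plan is to derive the LUS property directly from \pref{cor:hdx-is-mult-sampler}, the multiplicative inclusion sampler bound already established for nice complexes. Fix $c < 1$ and let $X$ be a $c$-locally nice $k$-maximal complex. By the definition of local niceness (\pref{def:nice-complexes}), for every face $s \in X$ of size $i \leq k-2$, the link $X_s$ is itself $c$-nice and is a $(k-i)$-maximal complex. Hence \pref{cor:hdx-is-mult-sampler} applies verbatim to $X_s$: for any $j \leq k-i$ and any target parameters $\beta,\delta \in (0,1)$, the inclusion graph $(X_s(j), X_s(k-i))$ is an $(\alpha_j, \beta, \delta)$-multiplicative sampler, where
\[
\alpha_j \leq \frac{c'}{\beta \delta} \exp\!\left(-\Omega_c\!\left(\beta^2 \delta^2 \, \frac{k-i}{j}\right)\right)
\]
in the non-partite cases, and with $\beta^2\delta^2$ replaced by $\beta^8\delta^8$ in the partite case.

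The next step is simply to specialize the parameters to match the LUS definition. Setting $\beta = 0.1$ and $\delta = 0.2$, both of which are absolute constants, the above bound becomes
\[
\alpha_j \leq C_c \, \exp\!\left(-\eta_c \, \frac{k-i}{j}\right),
\]
where $C_c, \eta_c > 0$ depend only on $c$ (the partite versus non-partite distinction is absorbed into the constant $\eta_c$, since $\beta$ and $\delta$ are fixed). Now choose $\tau = \tau(c) > 0$ small enough that simultaneously $\tau \leq \eta_c$ and $\tau \leq 1/C_c$. Then
\[
\alpha_j \leq C_c \exp\!\left(-\eta_c \frac{k-i}{j}\right) \leq \frac{1}{\tau}\exp\!\left(-\tau \frac{k-i}{j}\right),
\]
since $\tfrac{1}{\tau} \geq C_c$ and $-\tau \geq -\eta_c$ in the exponent (noting $(k-i)/j \geq 1$ so decreasing the rate only increases the right-hand side). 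This is precisely the condition in \pref{def:lus}, so $X$ is $\tau$-LUS, proving the corollary.

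There is no substantive obstacle here: the corollary is a repackaging of \pref{cor:hdx-is-mult-sampler} into a form that quantifies over links. The only points to verify are (i) that local niceness is exactly what lets us invoke \pref{cor:hdx-is-mult-sampler} uniformly on every link, which holds by the very definition of $c$-locally nice, and (ii) that the constants $C_c, \eta_c$ coming out of the sampler bound depend only on $c$, not on $k$, $i$, $j$, or the specific face $s$—which is clear from the statement of \pref{cor:hdx-is-mult-sampler} since the niceness parameter is preserved when passing to links. Once these are noted, the choice of $\tau$ is routine.
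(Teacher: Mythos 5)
Your proposal is correct and follows exactly the same route as the paper: apply \pref{cor:hdx-is-mult-sampler} inside each link $X_s$ (which is $c$-nice by the definition of local niceness), fix $\beta = 0.1$, $\delta = 0.2$, observe the resulting constants depend only on $c$, and set $\tau$ to the minimum of the two. The paper's proof is a terser version of the same three-line calculation.
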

\begin{proof}
By definition every $i$-link \(X_s\) is a \(c\)-nice complex, so by \pref{cor:hdx-is-mult-sampler} every $(\X[j][s],\X[k-i][s])$ is a \((\frac{1}{c_1}\exp(-c_2\frac{k-i}{j})),0.1,0.2)\)-sampler for some constants $c_1,c_2$ dependent only on $c$. Taking $\tau=\min\{c_1,c_2\}$ then suffices.
\end{proof}

\begin{proof}[Proof of \pref{thm:intro-rhc-real}]
    The proof is immediate from \pref{cor:HDX-are-lus} and \pref{thm:reverse}.
\end{proof}

\section{Concentration for All Nice HDX} \label{sec:concentration-all-nice-hdx}
In this section we study concentration of measure for partite complexes and complexes whose down-up walk have an \(\Omega(\frac{1}{d})\)-spectral gap, completing the proof of \pref{thm:hdx-is-sampler}. Along the way, we'll show that complexes near the TD barrier satisfy exponential concentration for Lipschitz functions.
\subsection{Concentration from Spectral Gap}\label{sec:herbst}

In this section, we prove concentration inequalities for simplicial complexes under the (typically) weaker quantitative assumption that the down-up walk has a good spectral gap. The bounds so derived are incomparable to the techniques in \pref{sec:chernoff} for two-sided HDX: they hold for a broader family of complexes (and functions), but imply at best exponential (rather than subgaussian) concentration at the top level. We derive subgaussian bounds in this setting by moving to an appropriate skeleton of $X$.

Before stating our results, we briefly recall some notation from \pref{sec:prelims-concentration}. Given a function $f: \X[k] \to \R$, let $Z=f(x_1,\ldots,x_k)$ and $Z'_{(i)}=f(x_1,\ldots z_i, \ldots, x_k)$ where $z_i$ is sampled conditional on $x_{-i}$. We call $f$ $\nu$-Lipschitz if with probability $1$ over $Z,Z'$
\[
\sum\limits_{i=1}^k (Z-Z'_{(i)})_+^2 \leq \nu
\]
and say it has $\nu$-bounded difference if for all adjacent $(s,s')$ in the down-up walk:
\[
(f(s)-f(s'))^2 \leq \frac{\nu}{k}
\]
We prove that $\nu$-Lipschitz functions on HDX satisfy exponential concentration of measure.
\begin{theorem}\label{thm:poincare-to-concentration}
    Let $X$ be a $d$-\maximal simplicial complex such that $\lambda_2(U_{d-1}D_d) \geq 1-\frac{1}{Cd}$ for some $C>0$. Then for any $0 \leq k \leq d$ and $\nu$-Lipschitz function $f:\X[k] \to \R$:
    \begin{enumerate}
        \item Upper Tail: $\Pr[f-\mathbb{E}[f] \geq t] \leq 2e^{-t/\sqrt{(C+1)\nu}}$
        \item Lower Tail: $\Pr[f-\mathbb{E}[f] \leq -t] \leq 2e^{-t/\sqrt{(C+1)\nu}}$
    \end{enumerate}
\end{theorem}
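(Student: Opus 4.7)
The plan is to apply a discrete Herbst-style (Aida--Stroock / Bobkov--Ledoux) argument, converting the Poincar\'e inequality implied by the spectral gap into exponential tails for $f$. First, one observes that the assumed gap $1-\lambda_2(U_{d-1}D_d) \geq 1/(Cd)$ at the top level trickles down, via \pref{thm:gap-of-down-operator-single-step} and monotonicity of the down operators, to a Poincar\'e inequality on $\X[k]$ of the form
\[
\mathrm{Var}(g) \;\leq\; O(Ck)\cdot \mathcal{E}(g,g)
\]
for every $g:\X[k]\to \RR$, where $\mathcal{E}$ is the Dirichlet form of the level-$k$ down-up walk $U_{k-1}D_k$.

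The heart of the argument is to apply this Poincar\'e inequality to $g = e^{rf/2}$ and control $\mathcal{E}(g,g)$ using the Lipschitz hypothesis. Using the elementary pointwise bound $(e^{a/2}-e^{b/2})^2 \leq \tfrac{(a-b)^2}{4}\max(e^a,e^b)$ together with exchangeability of the resampled pair $(Z,Z'_{(i)})$ (to symmetrize the $\max$ into a bound depending only on the positive part), I would establish
\[
\mathcal{E}(e^{rf/2}, e^{rf/2}) \;\leq\; \frac{r^2}{4k}\,\mathbb{E}\Bigl[e^{rf}\sum_{i=1}^k (Z-Z'_{(i)})_+^2\Bigr] \;\leq\; \frac{r^2\nu}{4k}\mathbb{E}[e^{rf}],
\]
the second inequality being the $\nu$-Lipschitz hypothesis applied pointwise. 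Combining with Poincar\'e yields the self-bounding estimate $\mathbb{E}[e^{rf}]\bigl(1-\tfrac{r^2 C\nu}{4}\bigr) \leq \mathbb{E}[e^{rf/2}]^2$, and taking logarithms produces the recursion $\Psi(r) \leq 2\Psi(r/2) + O(r^2 C\nu)$ for $\Psi(r) = \log\mathbb{E}[e^{rf}]$, valid whenever $r \leq O(1/\sqrt{C\nu})$. Iterating downward and using $\Psi(s) = s\mathbb{E}[f]+o(s)$ as $s\to 0$ collapses the telescoping sum to $\Psi(r) - r\mathbb{E}[f] \leq O(r^2 C\nu)$ on this range of $r$.

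Finally, Markov's inequality with $r = 1/\sqrt{(C+1)\nu}$ (chosen just below $1/\sqrt{C\nu}$ so that the recursion is strictly in its regime of validity, producing the ``$+1$'' in the stated constant) gives $\Pr[f-\mathbb{E}[f] \geq t] \leq 2e^{-t/\sqrt{(C+1)\nu}}$; the lower tail follows by applying the entire argument to $-f$, whose Lipschitz bound amounts to the same hypothesis with the roles of $Z$ and $Z'_{(i)}$ exchanged.

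The main technical obstacle is the Dirichlet form estimate. The pointwise exponential inequality naturally produces \emph{two} terms, $(Z-Z'_{(i)})_+^2\, e^{rZ}$ and $(Z'_{(i)}-Z)_+^2\, e^{rZ'_{(i)}}$, and it is only after carefully exploiting the conditional exchangeability of $(Z,Z'_{(i)})$ given the remaining coordinates that these collapse into the single clean bound above involving only the one-sided Lipschitz quantity $\sum_i (Z-Z'_{(i)})_+^2$ that the hypothesis controls pointwise. Verifying that this collapse goes through cleanly and that the resulting constants line up with the claimed $\sqrt{(C+1)\nu}$ is where the real work lies.
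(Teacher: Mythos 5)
Your proposal is correct and takes essentially the same route as the paper: a Herbst-type (Aida--Stroock / Bobkov--Ledoux) argument that descends the top-level spectral gap to a level-$k$ Poincar\'e inequality, applies it to $e^{rf/2}$ to obtain the self-bounding MGF recursion $\mathbb{E}[e^{rf}]\bigl(1-\Theta(r^2 C\nu)\bigr)\leq\mathbb{E}[e^{rf/2}]^2$, and iterates with $r\asymp 1/\sqrt{(C+1)\nu}$. The paper packages the Dirichlet-form/exchangeability step you flag as the main obstacle into a standalone ``approximate Efron--Stein'' lemma (proved via $L=I-UD=\tfrac{1}{k}\sum_i L_i$ and $\mathbb{E}_s[\mathrm{Var}^{(i)}_s(g)]=\mathbb{E}[(g(Z)-g(Z'_{(i)}))_+^2]$) so the one-sided positive part appears before exponentiation rather than after, and it obtains the level-$k$ gap from \cite{lee2023parallelising} rather than from \pref{thm:gap-of-down-operator-single-step}, which bounds $\lambda(D_k)$ from local-spectral data rather than descending the global down-up gap; otherwise the two arguments coincide.
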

Note that the spectral gap case of \pref{thm:hdx-is-sampler} follows as an almost immediate corollary.
\begin{proof}[Proof of \pref{thm:hdx-is-sampler}, spectral gap case]

Note the `moreover' part follows from \pref{lem:raising}. Let $Y$ be the promised $d$-\maximal complex such that $X$ is an (at most) $\sqrt{\frac{d}{2}}$-\maximal skeleton of $Y$ such that $Y$'s down-up walk has spectral gap $\lambda(U_{d-1}D_d) \leq 1-\frac{c}{d}$. For any $g: \X[i] \to [0,1]$ \pref{thm:poincare-to-concentration} implies $U_{i,d}g$ has the following exponential tail:
\[
\Pr[U_{i,d}g(s) - \mathbb{E}[g] \geq \varepsilon] \leq 2e^{-\Omega_c(\varepsilon\frac{\sqrt{d}}{i})} \leq 2e^{-\Omega_c(\varepsilon\frac{k}{i})}
\]
since $U_{i,d}g$ has $\frac{i^2}{d}$-bounded difference. By \pref{lem:lowering}, $U_{i,k}g$ therefore has tail:
\[
\Pr[U_{i,k}g(s) - \mathbb{E}[g] \geq \varepsilon] \leq 2e^{-\Omega_c(\varepsilon\frac{k}{i})}+\frac{2}{\varepsilon}e^{-\Omega(\varepsilon^2\frac{k}{i})},
\]
where the latter term is from the following concentration bound on the complete complex:
\begin{claim}
    For any $i \leq k \leq n$ and $g: \SC[\Delta][i][n] \to [0,1]$:
    \[
    \Pr_{\Delta_n(k)}[U_{i,k}g - \mathbb{E}[g] > \varepsilon] \leq \frac{2}{\varepsilon}e^{-\Omega(\varepsilon^2 \frac{k}{i})}
    \]
    and likewise for the lower tail.
\end{claim}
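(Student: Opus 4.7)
The plan is to invoke \pref{prop:faces-to-inclusion} to reduce the claim to a concentration bound on the faces complex of $\Delta_n^{(k)}$, and then derive the latter from the classical Chernoff--Hoeffding inequality for sampling without replacement. Setting $m = \lfloor k/i \rfloor$, the top-level faces of $F^{(i)}_{\Delta_n^{(k)}}$ are $m$-tuples $(T_1,\ldots,T_m)$ of pairwise disjoint $i$-subsets of $[n]$, with weight proportional to the uniform distribution on their $mi$-element union (a straightforward computation shows the induced measure on $j$-subsets of the uniform complete complex is itself uniform). By \pref{prop:faces-to-inclusion} it therefore suffices to show that for every $g:\SC[\Delta][i][n]\to[0,1]$ with $\mu=\mathbb{E}[g]$,
\[
\Pr\!\left[\left|\tfrac{1}{m}\sum_{j=1}^m g(T_j)-\mu\right|>\varepsilon\right] \;\leq\; 2\exp\!\left(-\Omega(\varepsilon^2 m)\right) \;=\; 2\exp\!\left(-\Omega(\varepsilon^2 k/i)\right),
\]
which, after absorbing the $(\varepsilon \mapsto 2\varepsilon,\; \beta \mapsto (2/\varepsilon)\beta)$ loss from \pref{prop:faces-to-inclusion} and rescaling, yields the advertised $\tfrac{2}{\varepsilon}\exp(-\Omega(\varepsilon^2 k/i))$ bound for $(\SC[\Delta][k][n],\SC[\Delta][i][n])$.

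To establish the inner concentration, I would appeal to the standard Chernoff--Hoeffding bound for bounded random variables sampled without replacement. Each $g(T_j)\in[0,1]$ is marginally distributed as $g(T)$ for $T$ a uniform $i$-subset of $[n]$, so $\mathbb{E}[g(T_j)]=\mu$, and the disjointness of the blocks endows the collection with negative association. One can justify the bound either by invoking Hoeffding's 1963 result (which shows that the moment generating function of sums from sampling without replacement is dominated by that of the i.i.d.\ case) together with its standard extension to negatively associated random variables, or by a direct martingale argument: generate $(T_1,\ldots,T_m)$ sequentially by revealing a uniformly random permutation of $[n]$, and apply Azuma--Hoeffding to the resulting Doob martingale, whose bounded differences are at most $O(1/m)$. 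The lower tail follows by applying the same argument to $1-g$.

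The only step requiring any care is justifying the Chernoff bound for sampling disjoint $i$-\emph{blocks} rather than single elements, but this is a routine extension of the classical sampling-without-replacement inequality: block sampling is itself a form of sampling without replacement from the population $\binom{[n]}{i}$, and the negative correlation structure that drives the MGF-domination argument is preserved at the block level. No new ideas beyond this observation are needed, and plugging the resulting block-Chernoff bound into \pref{prop:faces-to-inclusion} gives the claim.
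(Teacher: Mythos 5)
Your proposal hinges on the assertion that the disjoint blocks $(T_1,\ldots,T_m)$ of the faces complex are negatively associated, so that the Chernoff--Hoeffding bound for sampling without replacement applies. This is the gap: disjoint-block sampling is \emph{not} sampling without replacement from the population $\binom{[n]}{i}$. The latter requires only that the $T_j$ be \emph{distinct} as sets; you require that they be \emph{pairwise disjoint}, which is a far stronger constraint and breaks negative association. A concrete counterexample: take $n=6$, $i=2$, $m=2$, and $g$ the indicator of $\{\{1,2\},\{3,4\}\}$. Then $\Pr[g(T_1)=1]=2/15$, but $\Pr[g(T_1)=1 \wedge g(T_2)=1]=1/45 > (2/15)^2$, so $g(T_1)$ and $g(T_2)$ are \emph{positively} correlated. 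The paper itself flags this phenomenon just before \pref{thm:probability-to-deviate-from-expectation-in-complete-swap-walk}: ``disjointness introduces non-trivial positive correlations between the variables'' in the regime $n=\Theta(k)$, and it therefore only proves concentration for the faces complex of $\Delta_n$ under the restriction $n \gtrsim k\varepsilon^{-5}\log(1/\varepsilon)$. Your claim has no lower bound on $n$, so routing through \pref{prop:faces-to-inclusion} cannot work as stated.

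The paper's actual proof (\pref{claim:complete-complex-multiplicative-sampler} in the appendix) avoids this trap entirely: rather than the disjoint faces-complex distribution, it plugs into \pref{lem:weird-distribution-to-uniform} the distribution in which $s_1,\ldots,s_m$ are drawn \emph{i.i.d.}\ uniformly from $\binom{[n]}{i}$ (they may overlap), and $t$ is then sampled uniformly among $k$-faces containing $\bigcup_j s_j$ — which is always nonempty since $mi \le k$. Because the complete complex lets you extend \emph{any} such union to a $k$-face, you get genuinely independent $s_j$, and Hoeffding applies directly to $\frac{1}{m}\sum g(s_j)$ with no correlation structure to argue about. One still checks that each marginal $(s_j,t)$ is a uniform containment edge, which follows from the full symmetry of $\Delta_n$. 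This is exactly the special structure of the complete complex that makes the claim easy there while remaining hard (and requiring the machinery of \pref{thm:probability-to-deviate-from-expectation-in-complete-swap-walk}) for the faces complex of a general partite HDX. If you want to rescue your route, you would need to reprove the paper's swap-complex concentration and also handle the small-$n$ regime separately — at which point you have done strictly more work than the i.i.d.\ coupling.
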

We prove this claim in \pref{app:complete} via a simple application of \pref{lem:weird-distribution-to-uniform}. The lower tail follows similarly.
\end{proof}

We also get the following corollaries for TD and SI complexes potentially of independent interest.
\begin{corollary}[Concentration up to the TD Barrier]\label{cor:TD-exponential}
    Let $X$ be a $d$-\maximal $\lambda$-TD complex for $\lambda < 1$. Then for any $0 \leq k \leq d$ and $\nu$-Lipschitz function $f:\X[k] \to \R$
    \begin{enumerate}
        \item \textbf{Upper tail:} $\Pr[f - \mathbb{E}[f] \geq t ] \leq 2e^{-\frac{t}{\sqrt{c_\lambda\nu}}}$
        \item \textbf{Lower tail:} $\Pr[f - \mathbb{E}[f] \leq -t ] \leq2e^{-\frac{t}{\sqrt{c_\lambda\nu}}}$
    \end{enumerate}
    where $c_\lambda \leq 1+e^{\frac{\lambda}{(1-\lambda)}}$.
\end{corollary}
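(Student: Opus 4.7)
The corollary is a direct application of \pref{thm:poincare-to-concentration} once we establish that a $\lambda$-TD complex satisfies its spectral-gap hypothesis with $C \leq e^{\lambda/(1-\lambda)}$. The plan breaks into three steps.

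First, I would iterate Oppenheim's Trickling-Down Theorem (\pref{thm:td}) to convert the hypothesis ``codim-$2$ links are $\frac{\lambda}{d-1}$-one-sided expanders'' into expansion bounds on the $1$-skeleton of every link of $X$. Since a size-$(d-j)$ face $s$ has a $j$-\maximal link $X_s$ whose own codim-$2$ links are still codim-$2$ links of $X$ (and hence inherit the global $\frac{\lambda}{d-1}$ bound), a single application of \pref{thm:td} to $X_s$ yields a bound of the form $\lambda_2(X_s) \leq \frac{\lambda}{d-1-(j-2)\lambda}$ for each $j \geq 2$.

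Second, I would plug these link-expansion bounds into the Alev--Lau local-to-global estimate (\pref{thm:gap-of-down-operator-single-step}), which, in the form needed here, gives
\[
1 - \lambda_2(U_{d-1}D_d) \;\geq\; \frac{1}{d}\prod_{i=0}^{d-2}(1-\lambda_i),
\]
where $\lambda_i$ denotes the bound from step one at the appropriate level. The crux is to lower-bound this product. Using $\log(1-x) \geq -x/(1-x)$ together with the trickled-down bounds produces the telescoping estimate
\[
\sum_i \frac{\lambda_i}{1-\lambda_i} \;\leq\; (d-1)\cdot\frac{\lambda}{(d-1)(1-\lambda)} \;=\; \frac{\lambda}{1-\lambda},
\]
and therefore $\prod(1-\lambda_i) \geq \exp(-\lambda/(1-\lambda))$. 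This gives a spectral gap of order $e^{-\lambda/(1-\lambda)}/d$, so we may take $C \leq e^{\lambda/(1-\lambda)}$ in \pref{thm:poincare-to-concentration}.

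Finally, I would invoke \pref{thm:poincare-to-concentration} for any $\nu$-Lipschitz $f:\X[k]\to\RR$ and $0 \leq k \leq d$ to conclude the stated upper tail with $c_\lambda = C+1 \leq 1 + e^{\lambda/(1-\lambda)}$; the lower tail follows by applying the same bound to $-f$. The main obstacle is the product-bounding computation in step two: the individual link bounds depend on position in a non-trivial way, and both the correct identification of $\lambda_i$ with the trickled-down bound at the right codimension and the clean telescoping estimate above must be verified carefully. Once this is in place, the corollary follows by a single invocation of each of \pref{thm:td}, \pref{thm:gap-of-down-operator-single-step}, and \pref{thm:poincare-to-concentration}.
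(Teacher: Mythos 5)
Your proposal is correct and follows essentially the same route as the paper: iterate \pref{thm:td} to bound the expansion of every link's $1$-skeleton by $\frac{\lambda}{(d-1)(1-\lambda)}$ (the paper states this uniform bound directly, where you carry the position-dependent form $\frac{\lambda}{d-1-(j-2)\lambda}$ and then relax it), plug into \pref{thm:gap-of-down-operator-single-step} to get $\lambda_2(U_{d-1}D_d) \leq 1-\frac{1}{d}e^{-\lambda/(1-\lambda)}$ via $\prod(1-\lambda_i)\geq e^{-\lambda/(1-\lambda)}$, and invoke \pref{thm:poincare-to-concentration} with $C = e^{\lambda/(1-\lambda)}$. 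The only superfluous remark is deriving the lower tail ``by applying the bound to $-f$''—\pref{thm:poincare-to-concentration} already asserts both tails directly, which is fortunate since the one-sided positive-part truncation in \pref{def:bounded} does not obviously transfer from $f$ to $-f$ pointwise.
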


\begin{corollary}[Concentration under Spectral Independence]\label{cor:SI-exponential}
    Fix $\eta>0$ and let $X$ be a $d$-\maximal $\eta$-spectrally independent complex. Then for any $0 \leq k \leq d$, $g:\X[1] \to [0,1]$, and $f=U_{1,k}g$:
    \begin{enumerate}
        \item \textbf{Upper tail:} $\Pr[f - \mathbb{E}[f] \geq \varepsilon] \leq 4e^{-c_\eta\varepsilon\sqrt{k}}$
        \item \textbf{Lower tail:} $\Pr[f - \mathbb{E}[f] \leq -\varepsilon] \leq 4e^{c_\eta\varepsilon\sqrt{k}}$
    \end{enumerate}
    where $c_\eta \leq \frac{1}{12\max\{1,\sqrt{\eta}\}}$.
\end{corollary}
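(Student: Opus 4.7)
The plan is to apply \pref{thm:poincare-to-concentration} at a suitable skeleton of $X$, exploiting the trivial bounded-difference structure of lifted functions. First, observe that $f = U_{1,k}g$ is $(1/k)$-Lipschitz in the sense of \pref{def:bounded}: re-sampling one coordinate $v_i$ in $s = \{v_1, \ldots, v_k\}$ changes $f$ by at most $|g(v_i) - g(v'_i)|/k \leq 1/k$, so $\sum_{i=1}^{k}(Z-Z'_{(i)})_+^2 \leq k \cdot (1/k)^2 = 1/k$, yielding Lipschitz constant $\nu = 1/k$.

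The main task is bounding the spectral gap of the down-up walk on an appropriate skeleton. Set $k_0 := \min\{k, \lfloor d/\max\{2,\eta\}\rfloor\}$ and let $X'$ be the $k_0$-skeleton of $X$. Since the link graphs of $X'$ at level $i \leq k_0 - 2$ coincide with those of $X$ (links of codimension $d-i$), $\eta$-spectral independence gives one-sided expansion $\lambda_i \leq \eta/(d-i)$. Applying \pref{thm:gap-of-down-operator-single-step} together with $\log(1+x) \leq x$,
\[
\lambda(D_{k_0})^2 \leq \frac{1}{k_0}\prod_{i=0}^{k_0-2}(1+\lambda_i) \leq \frac{1}{k_0}\exp\left(\sum_{i=0}^{k_0-2}\frac{\eta}{d-i}\right) \leq \frac{1}{k_0}\exp\left(\eta \log \frac{d}{d-k_0}\right).
\]
By our choice of $k_0$, $d/(d-k_0) \leq \max\{2,\eta\}/(\max\{2,\eta\}-1)$, so the inner exponent is an absolute constant and $\lambda_2(U_{k_0-1}D_{k_0}) \leq 1 - 1/(Ck_0)$ for some absolute $C$. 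Applying \pref{thm:poincare-to-concentration} to $X'$ with the $(1/k_0)$-Lipschitz function $h := U_{1,k_0}g$ yields
\[
\Pr[|h - \mathbb{E}[g]| \geq t] \leq 2\exp\left(-t\sqrt{k_0/(C+1)}\right).
\]

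When $k = k_0$, this is exactly the claimed bound with an absolute $c_\eta$. Otherwise $k > k_0 = \lfloor d/\max\{2,\eta\}\rfloor$, and since $U_{k_0,k}h = U_{1,k}g = f$, I would lift via \pref{lem:raising}:
\[
\Pr[f - \mathbb{E}[g] \geq t] \leq \frac{2\exp\left(-(t/2)\sqrt{k_0/(C+1)}\right)}{1 - \pi_{low}^{k,k_0,h}(t/2)}.
\]
The denominator involves concentration of $h$ on random $k_0$-subsets of a fixed $k$-face; since $h$ is an empirical mean of $g$, this reduces to Chernoff--Hoeffding for sampling without replacement, giving $\pi_{low}(t/2) \leq \exp(-\Omega(t^2 k_0))$ and keeping the denominator $\Omega(1)$ whenever $t \gtrsim 1/\sqrt{k_0}$. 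Substituting $\sqrt{k_0} \geq \sqrt{k/\max\{2,\eta\}}$ into the exponent produces $\exp(-c_\eta t\sqrt{k})$ with $c_\eta = \Theta(1/\sqrt{\max\{1,\eta\}})$, matching the claimed rate. The lower tail follows by replacing $g$ with $1-g$ and re-applying the argument.

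The main obstacle is a careful accounting of absolute constants through Alev--Lau, Poincar\'e, and the lift so that the final constant meets the sharp target $c_\eta \leq 1/(12\max\{1,\sqrt{\eta}\})$. A secondary concern is the regime $t \lesssim 1/\sqrt{k_0}$ where the Chernoff correction could in principle dominate; there, however, the claimed tail bound is trivially $\geq 1$ up to the leading factor of $4$ and so holds automatically.
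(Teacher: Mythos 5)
Your proof is essentially the same as the paper's: take the $\lfloor d/\max\{2,\eta\}\rfloor$-skeleton of $X$, apply \pref{thm:poincare-to-concentration} there (using Alev--Lau on the links to get a $1-\Omega(1/k_0)$ bound for the down-up walk), then lift to higher levels via \pref{lem:raising} with a Chernoff bound on the complete complex controlling the correction term, discarding the trivial regime $\varepsilon \lesssim 1/\sqrt{k_0}$ exactly as you observe. The paper's own proof splits into the two cases $\eta \leq 2$ and $\eta > 2$ (with $k' = \lfloor d/3\rfloor$ and $k' = \lfloor d/\eta\rfloor$, respectively) rather than using a unified $k_0 = \lfloor d/\max\{2,\eta\}\rfloor$, but this is purely cosmetic and serves only to track the constant $c_\eta = 1/(12\max\{1,\sqrt{\eta}\})$ cleanly; your version gets the same asymptotics.

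One intermediate step is garbled and worth fixing. The line
\[
\lambda(D_{k_0})^2 \leq \frac{1}{k_0}\prod_{i=0}^{k_0-2}(1+\lambda_i)
\]
cannot be the right reading of \pref{thm:gap-of-down-operator-single-step}: for the complete complex it would give $\lambda_2(U_{k_0-1}D_{k_0}) = O(1/k_0)$, whereas that walk in fact has $\lambda_2 = 1-\Theta(1/k_0)$. What Alev--Lau actually gives is a lower bound on the spectral \emph{gap}, namely $\lambda_2(U_{k_0-1}D_{k_0}) \leq 1 - \frac{1}{k_0}\prod_{i=0}^{k_0-2}(1-\lambda_i)$. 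The correct manipulation is to lower bound $\prod(1-\lambda_i) \geq \exp\bigl(-2\sum_i \lambda_i\bigr)$ for $\lambda_i \leq 1/2$ and then bound $\sum_{i=0}^{k_0-2}\frac{\eta}{d-i} \leq \eta\log\frac{d}{d-k_0+1}$ by an absolute constant via your choice of $k_0$. This lands exactly at the conclusion you state, $\lambda_2(U_{k_0-1}D_{k_0}) \leq 1-1/(Ck_0)$ for absolute $C$, so the rest of your argument is unaffected---but as written the display has the wrong object on the left and a sign error on the right, and should be replaced.
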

We note it is possible to extend the latter to lifted Lipschitz functions from an appropriate skeleton.
\subsubsection{The Herbst Argument}
The proof of \pref{thm:poincare-to-concentration} is via a variant of the \textit{Herbst Argument}, a classical strategy for proving concentration of random variables based on applying functional inequalities like MLSI and LSI to the moment generating function. While sparse simplicial complexes cannot have bounded (modified) log-sobolev constants, a somewhat lesser known variant of the Herbst argument using spectral gap was developed by Aida and Stroock \cite{aida1994moment}, Bobkov and Ledoux \cite{bobkov1997poincare}, and Boucheron, Lugosi, and Massart \cite{Boucheron}. We will give an elementary adaption of their method to HDX achieving the above bounds.

The key to \pref{thm:poincare-to-concentration} is really the following approximate variant of the Efron-Stein inequality, sometimes called `approximate tensorization of variance'. We remark that similar inequalities for local-spectral expanders appear in the literature \cite{kaufman2020local,chen2021rapid}.
\begin{lemma}[Approximate Efron-Stein Inequality]
    Let $X$ be a $d$-\maximal simplicial complex such that $\lambda_2(U_{d-1}D_d) \leq 1-\frac{1}{Cd}$ for some $C>0$. Then for any $0 \leq k \leq d$, $f:\X[k]\to \R$, and $Z=f(X)$ we have:
    \[
    Var(f) \leq (C+1)\sum\limits_{i=1}^k\mathbb{E}\left[(Z-Z'_{(i)})_+^2\right]
    \]
\end{lemma}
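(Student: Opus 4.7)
The plan is a direct application of the Poincar\'e inequality for the down-up walk at level $k$, combined with the standard identification of its Dirichlet form with the squared differences produced by resampling a single coordinate.

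The first step is to promote the hypothesis, which is stated at the top level $d$, to a spectral-gap bound at level $k$: namely that $\lambda_2(U_{k-1}D_k) \leq 1 - \frac{1}{(C+1)k}$ for every $k \leq d$. This should follow from standard relations between down-up walks at different levels---e.g.\ by comparing Dirichlet forms under $U_{k,d}$, or by lifting an $U_{k-1}D_k$-eigenfunction $f$ to $U_{k,d}f$ on $X(d)$ and tracking how its Rayleigh quotient relates to that of the top-level walk. This transfer is the step I expect to require the most care and is where the constant $(C+1)$ (as opposed to $C$) enters.

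Given the level-$k$ gap, the Poincar\'e inequality for $U_{k-1}D_k$ gives
\[
Var(f) \;\leq\; \frac{1}{1 - \lambda_2(U_{k-1}D_k)} \cdot \mathcal{E}_k(f,f) \;\leq\; (C+1)k \cdot \mathcal{E}_k(f,f),
\]
where $\mathcal{E}_k(f,f) = \tfrac{1}{2}\mathbb{E}_{(s,s')}[(f(s)-f(s'))^2]$ and $(s,s')$ is a random edge of the down-up walk.

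Next, I would identify the Dirichlet form with the resampling quantities on the right-hand side of the claim. A random edge of $U_{k-1}D_k$ is produced by sampling $s = \{x_1,\ldots,x_k\} \in X(k)$, choosing a uniform index $i \in [k]$, deleting $x_i$, and drawing a new vertex $z_i$ conditionally on $x_{-i} = s \setminus \{x_i\}$. In the notation of the lemma, this is exactly the pair $(Z, Z'_{(i)})$ with $i$ uniform on $[k]$, hence
\[
\mathcal{E}_k(f,f) \;=\; \frac{1}{2k}\sum_{i=1}^k \mathbb{E}\bigl[(Z - Z'_{(i)})^2\bigr] \;=\; \frac{1}{k}\sum_{i=1}^k \mathbb{E}\bigl[(Z - Z'_{(i)})_+^2\bigr],
\]
where the last equality uses reversibility: $(Z, Z'_{(i)})$ and $(Z'_{(i)}, Z)$ are equidistributed, so $\mathbb{E}[(Z-Z'_{(i)})_+^2] = \mathbb{E}[(Z-Z'_{(i)})_-^2] = \tfrac{1}{2}\mathbb{E}[(Z-Z'_{(i)})^2]$.

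Combining the two inequalities cancels the factor of $k$ and gives $Var(f) \leq (C+1)\sum_{i=1}^k \mathbb{E}[(Z - Z'_{(i)})_+^2]$, as required. The only real content beyond the routine Poincar\'e/Dirichlet calculation is the gap-propagation step from level $d$ to level $k$, which is what the assumption on $\lambda_2(U_{d-1}D_d)$ must ultimately unlock.
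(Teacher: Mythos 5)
Your proposal is correct and follows essentially the same route as the paper's proof: reduce the top-level spectral gap hypothesis to a level-$k$ gap of the form $\lambda_2(U_{k-1}D_k) \leq 1-\frac{1}{(C+1)k}$ (the paper cites Theorem 3.5 of the reference \texttt{lee2023parallelising} for exactly this transfer), apply the Poincar\'e inequality for $U_{k-1}D_k$, and identify the Dirichlet form with the sum of single-coordinate resampling variances via reversibility. The paper phrases the last step by decomposing the Laplacian $L = I - UD$ into the per-coordinate projections $L_i = I - E_i$ and writing each $\langle L_i f, L_i f\rangle$ as an expected conditional variance, while you describe the edge distribution of the walk directly; these are the same computation.
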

\begin{proof}
    It is sufficient to instead prove that any $k$-\maximal complex whose down-up walk $U_{k-1}D_{k}$ has gap $\frac{1}{C'k}$ satisfies
    \[
    Var(f) \leq C'\sum\limits_{i=1}^{k}\mathbb{E}\left[(Z-Z'_{(i)})_+^2\right].
    \]
    The result then follows from \cite[Theorem 3.5]{lee2023parallelising}, who shows that $\lambda_2(U_{d-1}D_d) \leq 1-\frac{1}{Cd}$ implies $\lambda_2(U_{k-1}D_k) \leq 1-\frac{1}{(C+1)k}$ for any $0 \leq k \leq d$.

    With this in mind, let $E_i$ be the $i$th averaging operator:
    \[
    E_if(s) = \underset{s_i \sim X_{s_{-i}}}{\mathbb{E}}[f(s_{-i} \cup s_i)],
    \]
    where we recall $s_{-i}$ is $s$ without the $i$th vertex in its ordering. Define the $i$th Laplacian $L_i=I-E_i$, and the total laplacian operator as $L=I-UD=\frac{1}{k+1}\sum\limits L_i$. By assumption on the spectral gap of $UD$, we have:
    \[
    Var(f) \leq Ck \langle f, Lf \rangle
    \]
    On the other hand expanding out the Laplacian gives:
    \begin{align*}
    Ck\langle f, Lf \rangle &= C\sum\limits_{i=1}^k \langle f, L_if \rangle\\
    &=C\sum\limits_{i=1}^k \langle L_if, L_if \rangle\\
    &=C\sum\limits_{i=1}^k \underset{{s \sim \X[k]}}{\mathbb{E}}[Var^{(i)}_{s}(f)]
    \end{align*}
    where $Var^{(i)}_{s}(f)$ is the variance of $f$ over the link $X_{s_{-i}}$. Finally examining this expected variance, we have:
    \begin{align*}
    \underset{{s \sim \X[k]}}{\mathbb{E}}[Var^{(i)}_{s}(f)]
    &=\frac{1}{2}\mathbb{E}[(Z-Z'_{(i)})^2] =\mathbb{E}[(Z-Z'_{(i)})_+^2]
    \end{align*}
    where both equalities follow from the fact that $Z$ and $Z'_{(i)}$ are i.i.d conditioned on $s_{-i}$. Combining with the above completes the proof.
\end{proof}
We can now prove \pref{thm:poincare-to-concentration} closely following the approach of \cite{Boucheron}.
\begin{proof}[Proof of \pref{thm:poincare-to-concentration}]
    We prove only the upper tail. The lower tail follows similarly. It is sufficient to prove the following exponential integrability of the moment generating function:
    \begin{equation}\label{eq:exp-int}
    \mathbb{E}\left[e^{ \frac{1}{\sqrt{C\nu}}(f-\mathbb{E}[f])}\right] \leq 2,
    \end{equation}
    since we then have:
    \begin{align*}
        \Pr[f-\mathbb{E}[f] > t] &= \Pr\left[\exp\left(\frac{1}{\sqrt{C\nu}}(f-\mathbb{E}[f])\right) > \exp\left(\frac{1}{\sqrt{C\nu}}t\right)\right]\\
        &\leq 2\exp\left(-\frac{1}{\sqrt{C\nu}}t\right)
    \end{align*}
    by Markov.
    
    For notational simplicity, let $g=f-\mathbb{E}[f]$ and let $Z=g(X)$. Following \cite{Boucheron}, we use Efron-Stein to set up a recurrence relating $e^{\lambda Z}$ to $e^{\lambda Z/2}$ for small enough $\lambda>0$. Applying Efron-Stein to the latter gives:
    \begin{align*}
        \mathbb{E}[e^{\lambda Z}]-\mathbb{E}[e^{\lambda Z/2}]^2 &\leq C\sum\limits_{i=1}^k\mathbb{E}\left[(e^{\lambda Z/2}-e^{\lambda Z'_{(i)}/2})_+^2\right]\\
        &\leq C\sum\limits_{i=1}^k\mathbb{E}\left[e^{-\lambda Z}(1-e^{\lambda (Z'_{(i)}-Z)/2})_+^2\right]\\
        &\leq \frac{C\lambda^2}{4}\mathbb{E}\left[e^{-\lambda Z}\sum\limits_{i=1}^k(Z-Z_{(i)}')^2\right]\\
        &\leq \frac{C\lambda^2\nu}{4}\mathbb{E}\left[e^{-\lambda Z}\right]
    \end{align*}
    since $f$ (and therefore $g$) are $\nu$-bounded. Re-arranging we have the recurrence:
    \[
    \mathbb{E}[e^{\lambda Z}] \leq \frac{1}{1-\frac{\lambda^2C\nu}{4}}\mathbb{E}[e^{\lambda Z/2}]^2.
    \]
    It is then an elementary exercise to solve when $\lambda \leq \frac{1}{\sqrt{C\nu}}$ and derive \pref{eq:exp-int} (see \cite[Section 3.6]{Boucheron} for the full derivation).
\end{proof}
We now prove the corollaries for SI/TD-complexes
\begin{proof}[Proof of \pref{cor:TD-exponential}]
    By \pref{thm:td}, $X$ is a $\frac{\lambda}{(d-1)(1-\lambda)}$-one-sided local-spectral expander, so \pref{thm:gap-of-down-operator-single-step} gives $\lambda_2(U_{d-1}D_d) \leq 1-\frac{1}{d}e^{\frac{\lambda}{1-\lambda}}$. Plugging this into \pref{thm:poincare-to-concentration} gives the result.
\end{proof}
In the spectral independence regime, one must be slightly more careful since an $\eta$-spectrally independent system has $C \approx k^\eta$. We recover concentration by looking only at lifted functions and analyzing concentration on some lower skeleton.
\begin{proof}[Proof of \pref{cor:SI-exponential}]
    We prove the upper tail. The lower tail follows similarly. We break the proof into two cases. First, assume that $\eta \leq 2$ and let $k'=\lfloor \frac{d}{3} \rfloor$. The $k'$-skeleton of $X$ is a one-sided $\frac{1}{k'}$-local-spectral expander. Thus by \pref{thm:gap-of-down-operator-single-step} $\lambda(U_{k'-1}D_{k'}) \leq 1-\frac{1}{ek'}$ and \pref{thm:poincare-to-concentration} implies every $k \leq k'$ has concentration:
    \[
    \Pr[f - \mathbb{E}[f] > \varepsilon] \leq 2e^{-\frac{1}{\sqrt{e+1}}\varepsilon\sqrt{k}}.
    \]
    Further, by \pref{lem:raising} every $k > k'$ has concentration:
    \begin{align*}
    \Pr[f - \mathbb{E}[f] > \varepsilon] &\leq 2e^{-\frac{1}{2\sqrt{e+1}}\varepsilon\sqrt{k'}}(1-e^{-\frac{\varepsilon^2}{12}k})\\
    &\leq 4e^{-\frac{1}{2\sqrt{6(e+1)}}\varepsilon\sqrt{d}}\\
    &\leq 4e^{-\frac{1}{12}\varepsilon\sqrt{d}}
    \end{align*}
    assuming $k \geq 6$ and $\varepsilon^2 \geq \frac{12}{k}$ (else the bound is trivial).
    
    Otherwise, assume $\eta > 2$ and let $k'=\lfloor \frac{d}{\eta} \rfloor$. The $k'$-skeleton of $X$ is a $\frac{\eta}{(1-\frac{1}{\eta})d}$-one-sided local-spectral expander. Then by \pref{thm:gap-of-down-operator-single-step} $\lambda(U_{k'-1}D_{k'}) \leq 1-\frac{1}{e^2k'}$ and \pref{thm:poincare-to-concentration} implies for every $k \leq k'$:
    \[
    \Pr[f - \mathbb{E}[f] > \varepsilon] \leq 2e^{-\frac{1}{\sqrt{e^2+1}}\varepsilon\sqrt{k}}.
    \]
    Further, by \pref{lem:raising} every $k > k'$ has concentration:
    \begin{align*}
    \Pr[f - \mathbb{E}[f] > \varepsilon] &\leq 2e^{-\frac{1}{2\sqrt{e^2+1}}\varepsilon\sqrt{k'}}(1-e^{-\frac{\varepsilon^2}{12}k})\\
    &\leq 4e^{-\frac{1}{2\sqrt{4\eta(e^2+1)}}\varepsilon\sqrt{d}}\\
    &\leq 4e^{-\frac{1}{12\sqrt{\eta}}\varepsilon\sqrt{d}}
    \end{align*}
    assuming $\eta \leq \frac{d}{2}$ and $\varepsilon^2 \geq \frac{12}{k+1}$ (again, otherwise the stated bound is trivial).
    
\end{proof}

Finally before moving to reverse hypercontractivity, we briefly remark one a additional corollary regarding subgaussian concentration for more general function classes:

\begin{corollary}[Subgaussian concentration for skeletons]\label{cor:skeleton-subgauss-general}
    Let $X$ be a $d$-\maximal simplicial complex with $\lambda_2(U_{d-1}D_d) \leq 1-\frac{1}{Cd}$. Then for any $k \leq d$, $\nu>0$, and function $f:\X[k] \to \R$ with $\nu$-bounded difference:
    \begin{enumerate}
        \item \textbf{Upper Tail}: $\Pr[f-\mathbb{E}[f] \geq t] \leq 2e^{-\frac{t}{\sqrt{4(C+1)\nu}}\cdot\sqrt{\frac{d}{k}}} + e^{-\frac{t^2}{4\nu}}$
        \item \textbf{Lower Tail:} $\Pr[f-\mathbb{E}[f] \leq -t] \leq 2e^{-\frac{t}{\sqrt{4(C+1)\nu}}\cdot\sqrt{\frac{d}{k}}} + e^{-\frac{t^2}{4\nu}}$
    \end{enumerate}
\end{corollary}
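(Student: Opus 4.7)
The plan is to obtain this corollary as a near-immediate composition of \pref{thm:poincare-to-concentration} with the lowering-concentration machinery of \pref{lem:lowering}. The former already delivers exponential concentration for $\nu'$-Lipschitz functions on $\X[d]$ whenever the top-level down-up walk has spectral gap $\Omega(1/d)$, and the latter is designed precisely to transfer such top-level tail bounds down to bounded-difference functions on lower skeletons, at the price of replacing $\nu$ by $(k/d)\nu$ and picking up an additive subgaussian term from concentration on the complete complex inside each $d$-face. Importantly, it is this $(k/d)$ rescaling that produces the $\sqrt{d/k}$ improvement over what one would get by applying \pref{thm:poincare-to-concentration} directly to $f$ on $\X[k]$.

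Concretely, I would feed \pref{thm:poincare-to-concentration} into \pref{lem:lowering} by taking $up(s,\nu') = low(s,\nu') = 2\exp(-s/\sqrt{(C+1)\nu'})$; these are valid tail functions for every $\nu'$-Lipschitz $g:\X[d]\to\mathbb{R}$ under the hypothesis $\lambda_2(U_{d-1}D_d) \leq 1 - 1/(Cd)$. Then \pref{lem:lowering}, applied to the given $\nu$-bounded difference function $f:\X[k]\to\mathbb{R}$, yields
\[
\Pr[f - \mathbb{E}[f] \geq t] \leq 2\exp\!\left(-\frac{t/2}{\sqrt{(C+1)(k/d)\nu}}\right) + e^{-t^2/(4\nu)} = 2\exp\!\left(-\frac{t}{\sqrt{4(C+1)\nu}}\sqrt{\frac{d}{k}}\right) + e^{-t^2/(4\nu)},
\]
exactly as claimed. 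The lower tail follows identically via the $low$ branch of \pref{lem:lowering}.

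There is no serious obstacle: \pref{lem:lowering} was tailored for exactly this type of reduction, and the two statements compose via the single substitution $\nu' = (k/d)\nu$. The one subtle point worth flagging is the origin of the additive $e^{-t^2/(4\nu)}$ correction, which comes from the internal subgaussian concentration on the complete complex sitting inside each $d$-face; this term is independent of the spectral properties of $X$ and is responsible for the result retaining a genuine subgaussian component even though the first exponential is only subexponential in $t$.
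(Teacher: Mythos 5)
Your proposal is correct and matches the paper's proof exactly: the paper also obtains the corollary by plugging the tail bounds $f_{up}(t,\nu)=f_{low}(t,\nu)=2e^{-t/\sqrt{(C+1)\nu}}$ from \pref{thm:poincare-to-concentration} into \pref{lem:lowering}. Your algebraic verification of the resulting constants, including the source of the additive $e^{-t^2/(4\nu)}$ term, is also accurate.
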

\begin{proof}
The proof is immediate from \pref{thm:poincare-to-concentration} and \pref{lem:lowering}, setting $f_{up}(t,\nu)=f_{low}(t,\nu)= 2e^{\frac{-t}{\sqrt{(C+1)\nu}}}$.
\end{proof}
We remark it is likely possible to generalize the above strategy from bounded difference to Lipschitz functions on sufficiently strong local-spectral expanders by more carefully analyzing the eigenspaces of the Laplacian and higher order random walks associated with the $d$-lift (e.g.\ as in \cite{KaufmanO2020,DiksteinDFH2018,gaitonde2022eigenstripping}).
\subsection{Concentration for Partite HDX}\label{sec:conc-partite}

The final component of \pref{thm:hdx-is-sampler} is the partite case. We re-state the partite version of the theorem here for convenience.
\begin{theorem}[Sampling on HDX]\label{thm:hdx-is-sampler-2-cases}
    Let $X$ be a \(d\)-\maximal partite $2^{-cd}$-HDX. Then for any $i < k \leq d$ the containment graph \((\X[k],\X[i])\) is a \((\varepsilon, \beta)\) sampler for 
    \[
    \beta = \frac{64}{\varepsilon}\exp\left(-\Omega_c\left(\varepsilon^8 \frac{k}{i}\right)\right).
    \]
\end{theorem}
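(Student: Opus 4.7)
The plan is to follow the same two-step template used for the two-sided case in \pref{thm:hdx-is-sampler-two-sided}: first establish a ``Chernoff-Hoeffding'' style bound for the top-level inclusion graph $(\X[k],\X[1])$ of a partite HDX, then bootstrap to arbitrary $i$ via the faces complex reduction of \pref{prop:faces-to-inclusion} and \pref{lem:weird-distribution-to-uniform}, neither of which uses any non-partite structure. So the real work is producing the Chernoff bound.

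For the Chernoff step, I would adapt the splittability argument of \pref{prop:concentration-splitting}, replacing the standard swap walks with colored swap walks (\pref{thm:color-swap-walks-expand}). Concretely, I would fix a balanced bipartition $[d]=F_1 \dunion F_2$ of the color set and split a random $k$-face as $s = s_1 \dunion s_2$ with $s_i \in X[F_i]$; the pair $(s_1,s_2)$ is then an edge of the colored swap walk $S_{F_1,F_2}$, which under $2^{-cd}$-expansion has spectral norm at most $\binom{k/2}{i}^2 2^{-cd} \leq 2^{-\Omega_c(d)}$ at every recursive level. The recursion on the partial MGF $z_\ell$ and the orthogonal decomposition $z_\ell = z_\ell^0 + z_\ell^\perp$ goes through verbatim, except that the ``internal reduction'' of \pref{claim:Chernoff-inner-prod-bound} — which in the non-partite case dropped into the \emph{complete complex}, whose negative correlation and Chernoff bound gave the clean $e^{r^2\ell}$ factor — now drops into the \emph{swap complex} on $k/i$-tuples of disjoint $i$-sets. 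This is the new object whose concentration must be controlled directly.

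The concentration bound I would establish for the swap complex, which is the main obstacle, goes through a separate combinatorial Chernoff-type argument: when the ambient vertex set is large enough ($n \gtrsim k/\varepsilon^{5}$, as flagged in the open questions), one can decouple the disjointness constraint via a coupling to independent draws plus a union bound over collisions, and recover an exponential tail. This $n$ vs.\ $k$ slack is precisely what forces the exponent $\varepsilon^8$ instead of $\varepsilon^2$: working at the ``granularity'' $\varepsilon^5$ needed for the swap-complex Chernoff, and then combining multiplicatively with the $\varepsilon^2$ from the outer Chernoff-via-splitting argument, produces the stated $\exp(-\Omega_c(\varepsilon^8 k/i))$ tail.

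Once the Chernoff bound for the partite faces complex is in hand, the rest is mechanical: verify via \pref{thm:color-swap-walks-expand} that links of the faces complex $F_X^{(i)}$ are themselves $2^{-\Omega_c(d)}$-expanders (their underlying graphs are colored swap walks inside links of $X$), apply the partite Chernoff bound to $F_X^{(i)}$, and invoke \pref{prop:faces-to-inclusion} to lift concentration from the correlated partition to a uniformly random $i$-subface of a random $k$-face, which is exactly the statement that $(\X[k],\X[i])$ is an $(\varepsilon,\beta)$-sampler with the claimed $\beta$. The factor of $64/\varepsilon$ in the front absorbs the two constant-factor losses from applying \pref{lem:weird-distribution-to-uniform} (once at parameter $\varepsilon$, once when raising concentration via \pref{lem:raising}).
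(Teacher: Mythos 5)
Your bootstrapping step via \pref{prop:faces-to-inclusion} and \pref{lem:weird-distribution-to-uniform} is correct, and you are right that the swap-complex concentration bound (\pref{thm:probability-to-deviate-from-expectation-in-complete-swap-walk}) is a key ingredient that drives the $\varepsilon^8$ loss. But your route to assembling these pieces has two real gaps, both tied to the fact that the partite case cannot be treated as a cosmetic recoloring of the two-sided case.

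First, the ``partite MGF recursion'' you propose does not close. The crux of \pref{prop:concentration-splitting} is the internal reduction \pref{claim:Chernoff-inner-prod-bound}, which controls $\|z_{\ell/2}\|_2^2$ by sampling $t \in \X[\ell/2]$ as a \emph{uniformly random} $(\ell/2)$-subset of a random $\ell$-face and exploiting negative correlation of the resulting complete sub-complex. In the partite analogue with a fixed color split $F_1 \dunion F_2$, the partial MGF $z_{F_1}$ lives on $X[F_1]$; if you attempt the same move and sample $t \in X[F_1]$ by drawing $s$ and taking its $F_1$-projection, the projection is \emph{deterministic} given $s$, so there is no inner randomness to concentrate over and the reduction has no content. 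This is exactly why the paper abandons splittability in the partite case and instead proves the partite Chernoff bound (\pref{thm:sampling-like-complete}) by a localization argument: \pref{prop:hitting} and \pref{lem:total-variation} show that the pushforward of the face distribution is TV-close to the independent product measure, from which concentration is inherited wholesale.

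Second, your final paragraph asserts that the links of the faces complex $F_X^{(i)}$ of a partite HDX are $2^{-\Omega_c(d)}$-expanders via \pref{thm:color-swap-walks-expand}, after which you apply the Chernoff bound to $F_X^{(i)}$ directly. This is false in the partite case: the faces complex of a partite HDX is \emph{not} an HDX, and its links may even be disconnected (two $i$-sets of the same color type can never co-occur in a top face). The paper handles this with a two-stage conditioning that your proposal omits entirely: it first controls the distribution of the random color pattern $\mathcal{J}=(col(s_1),\dots,col(s_m))$ using the swap-complex Chernoff — this, not an ``internal reduction'' inside an MGF argument, is where \pref{thm:probability-to-deviate-from-expectation-in-complete-swap-walk} is actually applied — and then, conditioned on $\mathcal{J}$, observes that $C_{\mathcal{J}}$ \emph{is} a partite HDX (its links are colored swap walks in links of $X$), to which \pref{thm:sampling-like-complete} applies. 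Without this decomposition, neither half of your argument for the faces-complex sampler is justified.
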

Note that in the special case $i=1$ (i.e. the Chernoff-Hoeffding setting), the $\varepsilon$-dependence can be improved to $\varepsilon^{-2}$ in the exponent and removed entirely from the leading coefficient. Similarly, the exponent can be improved to $\varepsilon^{-2}$ dependence by \pref{lem:lowering} whenever $k \leq \varepsilon^6 d$.

Similar to our approach in the two-sided case, we will split our analysis into a Chernoff-Hoeffding bound, and a bootstrapping method to lift Chernoff to full inclusion sampling. Both steps require significant modification from the two-sided case to handle inherent dependencies between coordinates only occuring in the partite case.
\subsubsection{Chernoff-Hoeffding}

Unfortunately, partite HDX are not `splittable' in the sense of two-sided HDX. Thus to prove a Chernoff-Hoeffding bound, we will instead rely on a more classical \textit{localization} approach, leveraging the local sampling properties between bipartite components of the complex. The resulting quantitative bounds via this technique are again incomparable with our prior methods. While they are weaker for large $\lambda$, they approach the error of true independent sampling for small enough $\lambda$ and remove the restrictive condition of two-sided expansion. The latter in particular actually leads to sparser complexes with subgaussian concentration due to better known degree bounds for partite constructions. 

Before stating the result formally, we briefly introduce some relevant notation. Given a complex $X$, let \(X^{ind}\) denote the complete complex whose vertices are \(\X[1]\), that is:
\[
X^{ind}(k) = \{\{v_1,v_2,\ldots,v_k\}: v_i \in \X[1]\}
\]
where each $v_i$ is drawn independently from $\X[1]$ \textit{with repetition}.\footnote{We remark we have abused notation somewhat as this is not formally a simplicial complex and the faces are \textit{multi}-sets, but this is irrelevant for our purposes.} Similarly for $X$ partite, let \(X^{ind,p}\) be the complete \(k\)-partite complex whose sides are \(X[1],X[2],\dots,X[k]\):
\[
X^{ind,p}(k)=\{\{v_1,v_2,\ldots,v_k\}: v_1 \in X[1],v_2 \in X[2],\dots, v_k \in X[k]\}
\]
where each $v_i$ is drawn independently from the marginal over $X[i]$.

Finally, let $\sigma(X,k,\varepsilon)$ denote the worst case `\(k\) vs \(1\)' \(\varepsilon\)-sampling error of \(X\).
\[
    \sigma(X,k,\varepsilon) = \max \sett{\Prob[s \in {\X[k]}]{\abs{\cProb{v \in \X[1]}{A}{v \in s} - \Prob[v \in {\X[1]}]{A}} > \varepsilon} }{A \subseteq \X[1]}
\]

We prove that sampling in two-sided and one-sided partite HDX match the bounds in their respective independent complex up to \(\poly(k)\lambda\)-error.

\begin{theorem} \label{thm:sampling-like-complete}
    Let \(\eta > 0, \varepsilon \in (0,\frac{1}{2}) \), and \(\lambda \leq \frac{\eta^{2.5}}{16k^4}\). Then
    \begin{enumerate}
        \item If \(X\) is $k$-\maximal \(\lambda\)-two-sided local-spectral expander:
        \[
        \sigma(X,k,\varepsilon) \leq \sigma(X^{ind},k,\varepsilon) + \eta
        \]
        \item If \(X\) is a \(k\)-partite \(\lambda\)-one-sided local-spectral expander:
        \[
        \sigma(X,k,\varepsilon) \leq \sigma(X^{ind,p},k,\varepsilon) + \eta
        \]
    \end{enumerate}
    In particular, in both cases \(X\) is then a \((\varepsilon,\beta)\)-sampler for \(\beta = \eta + \exp(-\Omega(\varepsilon^2 k))\) by Chernoff-Hoeffding.
\end{theorem}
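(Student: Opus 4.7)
The plan is to construct a hybrid/coupling between the HDX distribution $\pi$ on $X(k)$ and the independent product distribution $\pi^{ind}$ (or $\pi^{ind,p}$), and compare the sampling failure event on both distributions via a telescoping spectral argument. Concretely, fix $A \subseteq \X[1]$ with $\mu = \Pr[A]$ and let $g: \X[k] \to \{0,1\}$ be the indicator of the bad event $|\text{avg}_s(1_A) - \mu| > \varepsilon$. The goal is to prove $\mathbb{E}_{\pi}[g] - \mathbb{E}_{\pi^{ind}}[g] \leq \eta$, and a symmetric lower bound is not needed since the statement is one-sided.

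I would define hybrid distributions $\pi_0 = \pi^{ind}, \pi_1, \dots, \pi_k = \pi$, where $\pi_i$ samples the first $i$ vertices from $\pi$'s conditional and the remaining $k-i$ as independent marginals from their respective parts. Telescoping gives the difference as a sum of $k$ terms, each of which corresponds to swapping a single coordinate from `marginal' to `conditional on the prefix'. For the partite case, the bipartite graph between the `prefix part' $X[\{1,\dots,i-1\}]$ and the next component $X[i]$ is the colored swap walk $S_{[i-1],\{i\}}$, which is an $(i-1)\lambda \leq k\lambda$-bipartite expander by \pref{thm:color-swap-walks-expand}. Applying the bipartite expander mixing inequality \eqref{eq:basic-bipartite-expanders} to the appropriate conditional of $g$ controls each telescope term. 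For the two-sided case, one reduces to the partite setting by passing to the partitification $P(X)$, which inherits a $\lambda$-product structure from the two-sided assumption (essentially via \pref{claim:hdx-to-product}), and sampling properties transfer directly up to the symmetry of picking an ordering.

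The main obstacle will be making the quantitative bound match $\lambda \leq \eta^{2.5}/(16k^4)$. The naive expander mixing bound gives per-step error $k\lambda \cdot \|h_i\|_2 \|h_i'\|_2$, and without care on the norms of the conditional functions $h_i$ this only yields a total error of $k^2\lambda$, which is too weak. The trick will be a truncation step: restrict attention to a `typical' set $T_i$ of prefixes on which the conditional distribution of $g$ concentrates near its mean, and bound the measure of atypical prefixes via a Markov/Chebyshev-style estimate on an auxiliary variance quantity. Setting the truncation threshold at roughly $\eta$ loses $\eta$ mass in the typical set while reducing the effective $L^2$-norms of the relevant conditionals by a factor of $\eta^{-3/4}$-type scaling, which when combined with Cauchy-Schwarz and the per-step bipartite expansion yields the precise $\text{poly}(k)\lambda/\eta^{2.5}$ rate. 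Getting this balance right---and verifying that the conditional functions coming out of the telescope satisfy the needed moment bounds on the typical set---is where the real work lies.
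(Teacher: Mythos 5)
Your approach is genuinely different from the paper's. The paper establishes a total-variation bound between push-forward distributions $\bar{f}_*D_X$ and $\bar{f}_*D_{ind}$ (\pref{lem:total-variation}), by splitting the bad event into a ``rare-coordinate'' part $B_1$ controlled by union bounds on marginals and a ``moderate'' part $B_2$ controlled by the recursive product lower bound of \pref{prop:hitting}. You instead propose telescoping a hybrid sequence of distributions directly against the bad-event indicator $g$, using the swap-walk expander mixing inequality step by step, and reducing the two-sided case to the partite one via partitification (which is also what the paper does). This is a reasonable alternative route, but your write-up contains a genuine gap at the telescope step and a miscount in the quantitative comparison.

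The gap: after averaging over the independent suffix $v_{>i}$, the residual function $h_i(v_{<i},v_i)$ in the $i$th telescope term is a function on the pair $(v_{<i},v_i)$; it is not a priori a product of a function on $X[\{1,\dots,i-1\}]$ with a function on $X[i]$, so \eqref{eq:basic-bipartite-expanders} does not directly bound $\mathbb{E}_{\mathrm{joint}}[h_i]-\mathbb{E}_{\mathrm{prod}}[h_i]$. Your phrase ``applying the bipartite expander mixing inequality to the appropriate conditional of $g$'' skips over this, and the later norm-bounding write-up $\lambda\norm{h_i}_2\norm{h_i'}_2$ presupposes a product $h_ih_i'$ that you never exhibit. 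The repair, for the specific $g$ at hand, is to use that $g$ depends on the tuple only through $s=\sum_j 1_A(v_j)$: then $h_i(v_{<i},v_i)=\psi_i\bigl(s_{<i}+1_A(v_i)\bigr)$ for some $\psi_i:\{0,\dots,i\}\to[0,1]$, which decomposes as $h_i = \psi_i(s_{<i})\cdot(1-1_A(v_i)) + \psi_i(s_{<i}+1)\cdot 1_A(v_i)$ --- a sum of two genuine products on which the swap-walk mixing inequality applies.

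With this decomposition in place, your claimed obstacle disappears. All of $\psi_i(s_{<i})$, $\psi_i(s_{<i}+1)$, $1_A$, $1-1_A$ lie in $[0,1]$, so each telescope step contributes error at most $2\lambda(S_{[i-1],\{i\}})\leq 2k\lambda$, and summing over the $k$ steps gives $\leq 2k^2\lambda \leq \eta^{2.5}/(8k^2)\leq\eta$. Your assertion that $k^2\lambda$ is ``too weak'' under $\lambda\leq\eta^{2.5}/(16k^4)$ is simply wrong, and the truncation/typical-set machinery you sketch is unnecessary. In fact, after the fix your hybrid argument proves the theorem under the much milder assumption $\lambda\leq\eta/(2k^2)$; the paper's poorer dependence on $\eta$ and $k$ is the price of its more general conclusion --- a TV-distance bound for arbitrary coordinate-wise Boolean tuples $\bar{f}$ --- which the coordinate-by-coordinate telescope for a single sampling indicator does not recover.
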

In the two-sided case, note that \pref{thm:split-concentrate} is tighter when \(\lambda > \exp(-\Omega(k))\), while for small enough $\lambda$ \pref{thm:split-concentrate} wins out achieving parameters arbitrarily close to independent sampling. We note that here we only give a bound on sampling sets, but this can be converted to concentration for lifts of bounded functions with essentially no loss by \pref{claim:sampler-for-functions}.

The core of \pref{thm:sampling-like-complete} is a lower bound on the $k$-wise correlation across coordinates of a $\lambda$-product.
\begin{proposition} \label{prop:hitting}
    Let \(X\) be a $k$-\maximal $\lambda$-product. For every \(\eta < 1\) and functions \(f_i:X[i] \to [0,1]\):
    \[
    \Ex[\set{v_1,v_2,\dots,v_k} \in {\X[k]}]{\prod_{i=1}^k f_i(v_i)} \geq (1-\eta)^{k} \prod_{i=1}^k \mathbb{E}[f_i]
    \]
    whenever  \(\lambda \leq \frac{\eta^{1.5}\min_i\{\mathbb{E}[f_i]^2\}}{\sqrt{8k^3}}\).
\end{proposition}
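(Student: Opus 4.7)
The plan is to prove \pref{prop:hitting} by induction on $k$, peeling off one vertex at a time and recursing into its link. The base case $k=1$ is trivial since $(1-\eta)\mathbb{E}[f_1] \leq \mathbb{E}[f_1]$.

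For the inductive step, I would write
\[
\mathbb{E}_{X(k)}\Big[\prod_{i=1}^k f_i(v_i)\Big] = \mathbb{E}_{v_1 \in X[1]}\!\Big[f_1(v_1)\,\Phi(v_1)\Big], \qquad \Phi(v_1) := \mathbb{E}_{(v_2,\ldots,v_k)\mid v_1}\!\Big[\prod_{i=2}^k f_i(v_i)\Big],
\]
and apply the inductive hypothesis to the $(k-1)$-partite link $X_{v_1}$. The link is itself a $\lambda$-product because the definition already quantifies over every face $s \in X$ and is therefore closed under taking links. The obstacle is that applying induction requires the conditional means $\nu_i(v_1) := \mathbb{E}_{X_{v_1}[i]}[f_i]$ to be bounded below by something comparable to $\mu_*$, and this fails on a small set of ``bad'' $v_1 \in X[1]$.

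To handle this, I would restrict to the good set $G := \{v_1 : \nu_i(v_1) \geq (1-\delta)\mu_i \text{ for all } i \in [2,k]\}$ with $\delta := \eta/(2k)$. Since $\nu_i = B_{1,i} f_i$ is the image of $f_i$ under the bipartite adjacency operator between $X[1]$ and $X[i]$, the $\lambda$-product assumption together with \eqref{eq:basic-bipartite-expanders} gives $\|\nu_i - \mu_i\|_2 \leq \lambda \sqrt{\mu_i}$, and Chebyshev plus a union bound then yield
\[
\Pr[v_1 \notin G] \leq \sum_{i=2}^k \frac{\lambda^2 \mu_i}{(\delta\mu_i)^2} \leq \frac{(k-1)\lambda^2}{\delta^2 \mu_*} \leq \frac{4k^3 \lambda^2}{\eta^2 \mu_*}.
\]
On $G$, all conditional means are at least $(1-\delta)\mu_*$, and I would verify that the inductive hypothesis applies to $X_{v_1}$ by checking that $\lambda \leq \eta^{1.5}\mu_*^2/\sqrt{8k^3}$ implies $\lambda \leq \eta^{1.5}((1-\delta)\mu_*)^2/\sqrt{8(k-1)^3}$, via the elementary inequality $(1-1/k)^{3/2} \leq 1 - 1/k \leq 1 - \eta/k \leq (1-\delta)^2$ valid for $\eta \leq 1$.

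With induction giving $\Phi(v_1) \geq (1-\eta)^{k-1}\prod_{i=2}^k \nu_i(v_1) \geq (1-\eta)^{k-1}(1-\delta)^{k-1}\prod_{i=2}^k\mu_i$ on $G$, and using $\mathbb{E}[f_1 \mathbf{1}_G] \geq \mu_1 - \Pr[G^c]$, the bound I need reduces to $(1-\delta)^{k-1}\bigl(1 - \Pr[G^c]/\mu_1\bigr) \geq 1 - \eta$. Since $(1-\delta)^{k-1} \geq 1 - \eta/2$, and the hypothesis on $\lambda$ forces $\Pr[G^c] \leq \eta \mu_1/2$, this follows from $(1-\eta/2)^2 \geq 1-\eta$. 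The hard part will be the parameter bookkeeping: $\delta$ must be small enough for the inductive hypothesis to still apply in the link (where the minimum mean shrinks from $\mu_*$ to $(1-\delta)\mu_*$), yet large enough that the bad set has mass at most $\eta \mu_1/2$. The choice $\delta = \eta/(2k)$ is what makes both constraints compatible, and the factor $k^3$ in the denominator of the hypothesis on $\lambda$ exactly absorbs the $k^2$ blowup from the union-bound-plus-Chebyshev step together with the extra factor of $k$ needed to degrade parameters from $k-1$ to $k$.
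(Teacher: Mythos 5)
Your proposal is correct and follows essentially the same strategy as the paper: induct on $k$ by conditioning on the first vertex, bound the set of bad $v_1$ (on which the conditional means drop too much) by Chebyshev plus the $\lambda$-product expansion of the bipartite graphs $(X[1], X[i])$, and recurse into the link $X_{v_1}$ on the good set. The paper organizes the bookkeeping slightly differently — it first proves a parameterized intermediate inequality over general $\varepsilon_2,\dots,\varepsilon_k$ and substitutes $\varepsilon_i = \eta\mu_i/(2i)$ at the end, whereas you bake the choice $\delta = \eta/(2k)$ directly into the inductive statement — but the underlying argument, including the role of the $k^3$ factor in the $\lambda$-hypothesis, is the same.
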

\begin{proof}
    The proof relies on the following adaptation of an elementary connection between bipartite expansion and sampling observed in \cite{DinurK2017}:
    \begin{claim}\torestate{\label{claim:chebyshev-expanders}
        Let \(G=(L,R,E)\) be a \(\lambda\)-bipartite expander and let \(A\) be its bipartite adjacency operator. Let \(f:L \to [0,1]\) be a function with \(\ex{f} \geq \mu\) and let \(0 < \varepsilon < \mu\). Then
        \[
        \Pr_{u \in R}[Af(u) < \mu - \varepsilon] < \frac{\lambda^2 \mu }{\varepsilon^2}.
        \]}
    \end{claim}
    We prove this in \pref{app:sampler-proofs} for completeness. Using this fact, we prove the following more general claim. Write $\mu_i=\mathbb{E}[f_i]$ and fix \(\varepsilon_1,\varepsilon_2,\dots,\varepsilon_k >0\) such that \(\mu_i - (i-1)\varepsilon_i > 0\). For \(\varepsilon = \min_{i}\{\varepsilon_i\}\) we show
    \begin{align} \label{eq:link-product-bound-less-general}
            \Ex[\set{v_1,v_2,\dots,v_k} \in {\X[k]}]{\prod_{i=1}^k f_i(v_i)} \geq \prod_{i=1}^{k} \left( \mu_i -(i-1)\varepsilon_i - \lambda^2 \frac{k-i}{\varepsilon^2 \cdot \underset{j>i}{\min} (\mu_j - (i-1)\varepsilon_j)} \right).
    \end{align}
    Setting \(\varepsilon_i = \frac{\mu_i \eta}{2i}\) and applying our assumed bound on \(\lambda\) then gives the desired result.
    
    We prove \eqref{eq:link-product-bound-less-general} by induction on \(k\), the \maxsizity of the complex.
    For \(k=1\) the bound is trivial since \(\ex{f_1} = \mu_1\). Assume the statement holds for any \((k-1)\)-\maximal $\lambda$-product, and for each \(i=2,\dots,k\) let $T_i$ denote the vertices in $X[1]$ that under-sample $f_i$:
        \[
        T_i = \sett{v_1 \in X[1]}{\Ex[{v_i \in X_{v_1}[i]}]{f(v_i)} < \mu_i - \varepsilon_i}.
        \]
    Since each $(X[1],X[i])$ is a $\lambda$-bipartite expander, \pref{claim:chebyshev-expanders} implies
    \[
    \prob{\bigcup_{i=2}^k T_i} \leq \sum_{i=2}^k\frac{\lambda^2}{\mu_i \varepsilon_i^2} \leq \lambda^2 \frac{k-1}{\varepsilon^2\cdot \min_{j>1}\{\mu_j\}}.
    \]
    We de-correlate our expected product along the first coordinate by conditioning on \(G = X[1] \setminus \{\bigcup_{i=2}^k T_i\}\):
    \begin{align*} 
        \Ex{\prod_{i=1}^{k}f_i(v_i)} &\geq \Ex{\mathbf{1}_G(v_{0})\prod_{i=1}^{k}f_i(v_i)} \\
        &\geq \ex{\mathbf{1}_G(v_{0})f_{0}(v_{0})} \cdot \min_{v_{1} \in G} \left (\Ex[\set{v_2,v_3,\dots,v_k} \in X_{v_{1}}]{\prod_{i=2}^k f_i(v_i)} \right ).
    \end{align*}
    We bound the two terms separately. For the first:
    \begin{align}
        \ex{\mathbf{1}_G(v_{0})f_{0}(v_{0})} &\geq \ex{f_{0}(v_{0})} - \prob{\bigcup_{i=2}^k T_i}\\
        &\geq \mu_0 - \lambda^2 \frac{k-1}{\varepsilon^2\cdot \min_{j>1}\{\mu_j\}}.\label{eq:decor}
    \end{align}

To bound the minimum term, we apply the inductive hypothesis on every link \(X_{v_1}\). Toward this end, for notational simplicity re-index the sides of \(X_{v_1}\) to be from \(1\) to \(k-1\) and set:
\begin{enumerate}
    \item \(f'_i=f_{i+1}\),
    \item \(\varepsilon_i'=\varepsilon_{i+1}\),
    \item \(\mu_i' =\mu_{i+1} - \varepsilon_{i+1}\) and \(\mu' = \min_{j>1} \mu_i'\).
\end{enumerate}
Since \(\Ex[{X_{v_1}[i]}]{f'_i} \geq \mu_i'\) by assumption on the links, applying the inductive hypothesis to $X_{v_1}$ and $\{f'\}$ bounds the minimum by
\[
\prod_{i=1}^{k-1} \left( \mu_i' -(i-1)\varepsilon_i' - \lambda^2 \frac{k-i}{\varepsilon^2 \cdot \underset{j>i}{\min} (\mu_j' - (i-1)\varepsilon_j')} \right)
\]
Combining this with the prior term (corresponding to $i=0$) gives the desired bound
\[
\prod_{i=1}^k \left( \mu_i -(i-1)\varepsilon_i - \lambda^2 \frac{k-i}{\varepsilon^2 \cdot \underset{j>i}{\min} (\mu_j - (i-1)\varepsilon_j)} \right).
\]
\end{proof}

We note a more involved variant of the above applying different bounds in each level of induction implies the following finer-grained bound: 

\begin{claim}\label{claim:hitting-technical}
    Let \(X\) be a \(k\)-partite \(\lambda\)-product. Let \(f_i:X[i] \to [0,1]\) be such that  \(\Ex[v_i \in X{[i]}]{f(v_i)} \geq \mu_i\). Let \(\set{\varepsilon_{i,\ell} > 0}_{1\leq i < \ell \leq k}\) and let \(s_{i,\ell}^{(m)} = \sum_{j=\ell}^{m} \varepsilon_{j,i}\) (and for brevity \(s_{m+1,\ell}^{(m)}=0\)). Then
    \begin{equation}
        \Ex[\set{v_0,v_1,\dots,v_k} \in {\X[k]}]{\prod_{i=1}^k f_i(v_i)} \geq \prod_{i=1}^k \left( \mu_i - s_{i,i}^{(k-1)} - \lambda^2 \sum_{j=0}^{i-1} \frac{\mu_j - s_{i,j}^{(k-1)}}{\varepsilon_{i,j}^2} \right).
    \end{equation}
\end{claim}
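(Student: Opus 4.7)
The plan is to mirror the induction-on-$k$ argument used for \pref{prop:hitting}, but refine the bookkeeping so that the sampling threshold applied to coordinate $i$ when peeling off a particular side $\ell$ is precisely the prescribed $\varepsilon_{i,\ell}$, rather than a single $\varepsilon_i$ reused across all induction depths. The base case reduces to $\mathbb{E}[f_i]\geq \mu_i$: with no smaller coordinates to integrate out, every partial sum $s_{i,\ell}^{(m)}$ and every correction term $\lambda^2\sum_{j<i}(\cdots)/\varepsilon_{i,j}^2$ is vacuous, so the stated inequality reduces to the hypothesis.

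For the inductive step I would peel off the ``lowest'' coordinate (call it $v_0$) and, for each $i\geq 1$, define the bad set
\[
T_i \;=\; \Bigl\{\, v_0 \in X[0] \;:\; \underset{v_i \in X_{v_0}[i]}{\mathbb{E}}[f_i(v_i)] < \mu_i - \varepsilon_{i,0}\Bigr\}.
\]
The bipartite expander-Chebyshev inequality (\pref{claim:chebyshev-expanders}) applied to each $(X[0],X[i])$ gives $\Pr[T_i]\leq \lambda^2\mu_i/\varepsilon_{i,0}^2$. Setting $G = X[0]\setminus\bigcup_{i\geq 1} T_i$ and decorrelating along the first coordinate yields
\[
\mathbb{E}\Bigl[\,\prod_{i=0}^k f_i(v_i)\Bigr] \;\geq\; \mathbb{E}\bigl[f_0\,\mathbf{1}_G\bigr]\;\cdot\;\min_{v_0\in G}\underset{X_{v_0}}{\mathbb{E}}\Bigl[\,\prod_{i=1}^k f_i(v_i)\Bigr],
\]
with the first factor lower bounded by $\mu_0 - \lambda^2\sum_{i\geq 1}\mu_i/\varepsilon_{i,0}^2$. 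After matching conventions for the indices of the partial sums $s_{i,\ell}^{(m)}$, this is precisely the $i=0$ factor appearing on the right-hand side of the target inequality.

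To control the minimum factor I would invoke the inductive hypothesis on the $(k-1)$-partite $\lambda$-product $X_{v_0}$, reindexed so that its sides are labeled $1,\dots,k-1$, with functions $f'_i = f_{i+1}$ and baseline means $\mu'_i = \mu_{i+1} - \varepsilon_{i+1,0}$, which is valid for every $v_0\in G$ by construction. The shifted threshold family $\{\varepsilon'_{i,\ell}\}_{1\leq i<\ell\leq k-1}$ is obtained by pulling back $\varepsilon_{\cdot,\cdot}$ along this reindexing. The key bookkeeping check is that the partial sum $s_{i,0}^{(k-1)}$ splits as $\varepsilon_{i,0}$ (paid at the present peeling step) plus the residual sum produced by the inductive call inside $X_{v_0}$, and that the denominator sum $\lambda^2\sum_{j<i}(\mu_j - s_{i,j}^{(k-1)})/\varepsilon_{i,j}^2$ decomposes analogously into the $j=0$ contribution handled at this step and the $j\geq 1$ contributions produced by the recursion.

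The main obstacle is almost entirely combinatorial bookkeeping: verifying that the two-index structure of $\{\varepsilon_{i,\ell}\}$ and the partial sums $s_{i,\ell}^{(m)}$ telescope consistently under the reindexing $(i,\ell)\mapsto(i-1,\ell-1)$ in the link, so that no term is double-counted or dropped when the inductive hypothesis is applied. All of the analytic content again reduces to a single application of \pref{claim:chebyshev-expanders} per induction step, exactly as in \pref{prop:hitting}; the finer quantitative statement is pure linear bookkeeping layered on top of that same backbone.
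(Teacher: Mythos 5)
Your proposal matches the paper's stated intent: the paper omits the proof of this claim entirely, asserting only that it follows from ``a more involved variant of the above applying different bounds in each level of induction,'' which is precisely your plan of peeling the lowest side, defining $T_i$ with step-specific thresholds $\varepsilon_{i,0}$, invoking \pref{claim:chebyshev-expanders} to get $\Pr[T_i]\leq \lambda^2\mu_i/\varepsilon_{i,0}^2$, decorrelating, and recursing on $X_{v_0}$ with shrunken baselines $\mu'_i=\mu_{i+1}-\varepsilon_{i+1,0}$. Be aware, however, that the printed statement of \pref{claim:hitting-technical} contains internal index inconsistencies (the declared range $1\leq i<\ell\leq k$ is at odds with the later appearance of $\varepsilon_{i,j}$ for $j<i$; the expectation runs over $\{v_0,\dots,v_k\}$, i.e.\ $k+1$ vertices, in a $k$-partite complex; and the error sum $\sum_{j=0}^{i-1}$ indexes already-peeled sides rather than the surviving ones the union bound actually touches), so the telescoping you flag as the main obstacle will first require settling on a single self-consistent convention for the $\varepsilon$'s and $s$'s --- an issue with the statement itself rather than with your argument.
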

We omit the proof which is an unenlightening technical extension of the above. 

\medskip

Using our lower bound on negative correlation, we show that for any function tuple $f=(f_1,f_2,\ldots,f_k)$, the random variables $f(X)$ and $f(X^{ind})$ are actually \textit{distributionally} close. This will allow us to immediately recover any concentration bound on $X^{ind}$ up to the distributional error. More formally, given a \(k\)-partite simplicial complex \(X\), let \(D_{ind}\) denote the distribution supported by of \(X^{ind,p}\), i.e. \(\Prob[D_{ind}]{\set{v_1,v_2,\dots,v_k}} = \prod_{i=1}^k \Prob[X{[i]}]{v_i}\). Moreover, for any distribution \(D\) over \(\Omega = X[1] \times X[2] \times \dots \times X[k]\), and a function \(\bar{f}:\Omega \to \mathbb{R}^{k}\) define the \textit{push forward} distribution \(\bar{f}_*D:\mathbb{R}^{k} \to [0,1]\) as
\[
\bar{f}_* D(\bar{x}) = \Prob[\omega \sim D]{\bar{f}(\omega) = \bar{x}}.
\]
Let $D_X=\pi_k$ denote the distribution over $k$-sets of $X$. We show the pushforward distributions $\bar{f}_* D_X$ and $\bar{f}_* D_{ind}$ are close in TV-distance.
\begin{lemma} \label{lem:total-variation}
    Let \(\varepsilon,\eta > 0\) and let \(X\) be a \(k\)-partite \(\lambda\)-product for \(\lambda \leq \frac{\eta^{2.5}}{8 k^{4}}\). Then for any \(f_i:X[i] \to \set{0,1}\) and \(\bar{f}=(f_1,f_2,\dots,f_k)\):
    \[
    d_{TV}(\bar{f}_* D_X, \bar{f}_* D_{ind}) \leq \eta.
    \]
\end{lemma}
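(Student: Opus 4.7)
Since each $f_i$ is $\{0,1\}$-valued, the pushforwards $\bar{f}_* D_X$ and $\bar{f}_* D_{ind}$ are supported on the finite cube $\{0,1\}^k$, and it suffices to control the discrete total variation
\[
d_{TV}(\bar{f}_* D_X, \bar{f}_* D_{ind}) = \sum_{b \in \{0,1\}^k:\, Q(b) > P(b)} (Q(b) - P(b)),
\]
where I set $P(b) := \bar{f}_* D_X(b)$ and $Q(b) := \bar{f}_* D_{ind}(b)$. For a fixed $b \in \{0,1\}^k$, define $g_i^{(b_i)}(v) := \mathbf{1}[f_i(v) = b_i]$ and let $q_i(b_i) := \Ex[v_i \in X[i]]{g_i^{(b_i)}(v_i)}$. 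Because $X$ is partite, the marginal of $D_X$ on $X[i]$ equals that of $D_{ind}$, so $q_i$ is well-defined, $Q(b) = \prod_i q_i(b_i)$ by independence under $D_{ind}$, and $P(b) = \Ex[D_X]{\prod_i g_i^{(b_i)}(v_i)}$.

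The plan is to lower bound $P(b)$ in terms of $Q(b)$ via \pref{prop:hitting} applied to the functions $g_i^{(b_i)}$. Partition the cube into ``good'' and ``bad'' points $G := \{b : \min_i q_i(b_i) \geq \tau\}$ and $B := \{0,1\}^k \setminus G$, for a threshold $\tau > 0$ chosen at the end. Bad $b$'s are handled trivially: each coordinate $i$ admits at most one value $b_i^*$ with $q_i(b_i^*) < \tau$ (since $q_i(0) + q_i(1) = 1$ and $\tau < 1/2$), so the total $Q$-mass concentrated where $q_i(b_i) < \tau$ is at most $\tau$ per coordinate, and a union bound over $i$ gives $\sum_{b \in B} Q(b) \leq k\tau$, hence a contribution of at most $k\tau$ to the TV sum.

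For $b \in G$, \pref{prop:hitting} applied with parameter $\eta'$ to the $\{0,1\}$-valued tuple $\{g_i^{(b_i)}\}$ yields $P(b) \geq (1-\eta')^k Q(b)$, whose hypothesis $\lambda \leq \eta'^{1.5} (\min_i q_i(b_i))^2/\sqrt{8k^3}$ is met as long as $\lambda \leq \eta'^{1.5} \tau^2/\sqrt{8k^3}$. Rearranging gives $Q(b) - P(b) \leq (1 - (1-\eta')^k) Q(b) \leq k\eta' \cdot Q(b)$, so summing over $G$ contributes at most $k\eta'$. Combining regimes gives $d_{TV} \leq k\tau + k\eta'$, and choosing $\tau = \eta' = \eta/(2k)$ yields $d_{TV} \leq \eta$ under a bound of the shape $\lambda \lesssim \eta^{3.5}/k^5$.

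The main obstacle is tightening this to the stated $\lambda \leq \eta^{2.5}/(8k^4)$. The slack arises because \pref{prop:hitting} is being invoked uniformly with the worst-case marginal $\tau$, even for $b$'s whose $q_i(b_i)$'s are much larger than $\tau$. I would rerun the $G$-estimate with the finer \pref{claim:hitting-technical}, choosing the error parameters $\varepsilon_{i,j}$ adaptively as a function of the actual marginals $q_i(b_i)$ so that each $\lambda^2/\varepsilon_{i,j}^2$ penalty is balanced against the factor $q_i(b_i)$ it sits next to, rather than against the uniform threshold. This redistribution should shave one factor each of $\eta$ and $k$ from the requirement on $\lambda$ and recover the advertised exponent.
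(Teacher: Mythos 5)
Your proof takes essentially the same route as the paper's. Both arguments partition $\{0,1\}^k$ into points where some coordinate has small marginal probability (threshold $\tau \approx \eta/(2k)$) and points where all marginals are moderate, dispose of the former by a union bound (which works because $D_X$ and $D_{ind}$ share their single-coordinate marginals), and control the latter via \pref{prop:hitting}. Your pointwise formulation $d_{TV} = \sum_b (Q(b)-P(b))_+$ is equivalent to the paper's one-sided inequality $\Pr_{\bar f_* D_X}[B] \ge \Pr_{\bar f_* D_{ind}}[B] - \eta$ over arbitrary events, and your ``bad'' set coincides with the paper's $B_1$ (the $\bar b$ with $b_i = 1$ for some $i$ with $\mathbb{E}[f_i] \le \eta/(2k)$ or $b_j = 0$ for some $j$ with $\mathbb{E}[f_j] \ge 1-\eta/(2k)$).

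On the constant: your derivation $\lambda \lesssim \eta^{3.5}/k^5$ is actually what the paper's own proof yields. The parenthetical ``expectation $\ge \eta/2$'' in the paper is inconsistent with the $\eta/(2k)$ threshold used to define $I$ and $J$; for $\bar b \in B_2$ one only gets $\Pr[f_i = b_i] \ge \eta/(2k)$ per coordinate, and plugging $\eta_{\mathrm{prop}} = \min_i\mathbb{E}[g_i] = \eta/(2k)$ into the hypothesis of \pref{prop:hitting} gives $\lambda \le (\eta/(2k))^{3.5}/\sqrt{8k^3} = \Theta(\eta^{3.5}/k^5)$, not the stated $\eta^{2.5}/(8k^4)$. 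So you are not missing a step: the lemma's constant is not delivered by the written proof, and your arithmetic is the correct one. (This looseness is harmless downstream since the results are invoked with $\lambda$ exponentially small in $k$.)

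Regarding your proposed tightening via \pref{claim:hitting-technical}: I'm skeptical that adaptive $\varepsilon_{i,j}$ recovers $\eta^{2.5}/k^4$. The $\eta^{3.5}$ arises from the worst case in which essentially every coordinate has $q_i(b_i) \approx \tau = \eta/(2k)$, and an adaptive error budget offers no improvement when all marginals sit at the threshold. You should treat the stated $\eta^{2.5}/(8k^4)$ as an overstatement rather than a target you need to hit; your proof as written is sound under the slightly stronger hypothesis $\lambda \lesssim \eta^{3.5}/k^5$.
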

\begin{proof}
    Note that both distributions are supported in \(\set{0,1}^{k}\) by our assumption on the domain of $\bar{f}$. Thus we need to show for every \(B \subseteq \set{0,1}^{k}\):
    \[
    \abs{\Prob[\bar{f}_* D_X]{B} - \Prob[\bar{f}_* D_{ind}]{B}} \leq \eta.
    \]
    Indeed since $\Prob[\bar{f}_* D_X]{B} - \Prob[\bar{f}_* D_{ind}]{B} = \Prob[\bar{f}_* D_{ind}]{\bar{B}} - \Prob[\bar{f}_* D_X]{\bar{B}}$ for $\bar{B}=\{0,1\}^{k}\setminus B$, it is enough to just show the one-sided bound 
    \[
    \Prob[\bar{f}_* D_X]{B} \geq \Prob[\bar{f}_* D_{ind}]{B} - \eta.
    \]
    Fix any event \(B \subseteq \set{0,1}^{k}\). We'll decompose $B$ into two parts. First, we'll look at the subset $B_1$ on which $f$ takes many `unlikely' values, and argue the difference is small just because $D_X$ and $D_{ind}$ have the same marginals over each coordinate of $\{0,1\}^{k}$. In particular, let \(I = \sett{i \in [k]}{\ex{f_i} \leq \frac{\eta}{2k}}\) and \(J = \sett{i \in [k]}{\ex{f_i} \geq 1-\frac{\eta}{2k}}\) be the sets of coordinates with extreme expectations and let
\[
B_1 = B \cap \left \{ x~\Bigg | \left(\bigvee_{i \in I} \set{f_i(x) = 1}\right) \lor \left(\bigvee_{j \in J} \set{f_j(x) = 0}\right) \right \}.
\]
Set \(B_2 = B \setminus B_1\).
    
    We first show that \(\Prob[\bar{f}_* D_X]{B_1} - \Prob[\bar{f}_* D_{ind}]{B_1} \leq \frac{\eta}{2}\). Since both distributions have the same marginals, we can write for either $D_X$ or $D_{ind}$ by a union bound:
    \[
    \prob{B_1} \leq \sum_{i\in I} \prob{f_i=1} + \sum_{j \in J}\prob{f_j = 0} \leq (|I|+|J|)\frac{\eta}{2k} \leq \frac{\eta}{2}.
    \]
    In particular the difference is no more than \(\frac{\eta}{2}\).

    Moving on to \(B_2\), we decompose the probability as a sum over $k$-wise products:
    \[
    \Prob[\bar{f}_* D_X]{B_2} = \sum_{\bar{b} \in B_2} \Prob[D_X]{\bigwedge_{i=1}^k f_i=b_i} = \sum_{\bar{b} \in B_2} \Ex[\set{v_1,v_2,\dots,v_k} \sim D_X]{\prod_{i=1}^k \abs{b_i - f_i(v_i)}}.
    \]
    We note that the functions \(\abs{b_i - f_i(v_i)}\) have values in \([0,1]\) and by assumption have expectation \(\geq \frac{\eta}{2}\) (since otherwise \(\bar{b} \in B_1\)). Then by \pref{prop:hitting} setting \(\varepsilon = \frac{\eta}{2}\) we have
    \begin{align*}
    \Pr_{D_X}[B_2] &\geq (1-\frac{\eta}{2k})^{k} \sum_{\bar{b} \in B_2} \prod_{i=1}^k \Ex[v_i \in X{[i]}]{\abs{b_i - f_i(v_i)}}\\
    &=(1-\frac{\eta}{2k})^{k} \sum_{\bar{b} \in B_2} \prod_{i=1}^k \Prob[D_{ind}]{\mathbf{1}_{f_i=b_i}}\\
    &=(1-\frac{\eta}{2k})^{k} \sum_{\bar{b} \in B_2} \prod_{i=1}^k \Prob[D_{ind}]{\mathbf{1}_{f_i=b_i}}\\
    &\geq \Prob[D_{ind}]{B_2} - \frac{\eta}{2}
    \end{align*}
    where we've used the marginal equivalence of $D_X$ and $D_{ind}$ and independence of $D_{ind}$ in the second and third steps respectively. Combining our bounds on $B_1$ and $B_2$ gives the result.
\end{proof}

We can now easily prove \pref{thm:sampling-like-complete}.
\begin{proof}[Proof of \pref{thm:sampling-like-complete}] 
    Let \(f:\X[1] \to [0,1]\). For the partite case, set \(f_i = f|_{{X[i]}}\). Consider the set of `bad' $k$-tuples that mis-sample $f$:
    \[
    B = \Set{\abs{\sum_{j=1}^k f(v_j) - \mu} > \varepsilon}.
    \]
    By \pref{lem:total-variation}, the TV-distance between the distribution over values of $\bar{f}(x)=f(x_1,x_2,\ldots,x_{k})$ over $X$ and $X^{ind,p}$ is at most $\eta$ so we have:
    \begin{align*}
    \sigma(X,k,\varepsilon) &= \max_{\bar{f}} \Pr_{\bar{f}_* D_X}[B]\\
    &\leq \max_{\bar{f}} \Pr_{\bar{f}_* D_{ind}}[B] + \eta\\
    &\leq S(X^{ind,p},k,\varepsilon) + \eta
    \end{align*}
    as desired. The two-sided case follows the same argument after taking the partitification \(P(X)\) of \(X\) and setting \(f_i(v,i)=f(v)\).
\end{proof}

\subsubsection{Partite Inclusion Sampling}\label{sec:inclusion-sampler-proof}
Broadly speaking, the proof of \pref{thm:hdx-is-sampler-2-cases} (that is the partite case of \pref{thm:hdx-is-sampler}), follows the same strategy as the two-sided case. We'd like to lift Chernoff-Hoeffding to sampling of the inclusion graph via a Chernoff-Hoeffding style bound for the faces complex. Unfortunately, unlike the two-sided case, the faces complex of a partite HDX is not necessarily an HDX (indeed its links may not even be connected). Nevertheless, lacking expansion does not necessarily mean the faces complex is a poor sampler. Indeed we will argue $F_X$ is a near-optimal sampler by splitting its concentration into two components 1) the probability of sampling a good partition of \textit{colors} 2) sampling a good subset conditioned on this partition. We show the first step may be reduced to concentration on the faces complex of the complete complex (which we'll call the \textit{swap complex}), while the second reduces to Chernoff for partite HDX.

With this in mind, before jumping into the proof we will first cover some necessary background on the swap complex, which is simply the uniform distribution over (families of) disjoint \textit{$\ell$}-sets of $[n]$.
\begin{definition}[The Swap Complex]
    Let $n \geq k \geq i \in \mathbb{N}$. The $(i,k,n)$-swap complex is the $\lfloor\frac{k}{i}\rfloor$-\maximal simplicial complex \(C=C_{i,k,n}\) whose vertices are all size-$\ell$ subsets $[n]$:
    \[
    \SC[C][1] = \binom{[n]}{i},
    \]
    and whose top-level faces are all possible pair-wise disjoint $\ell$-sets:
    \[
    \SC[C][\left\lfloor\frac{k}{i}\right\rfloor] = \left\{\set{s_1,s_2,\ldots,s_{\lfloor\frac{k}{i}\rfloor}}: \bigcup_{i\neq j} s_i \cap s_j = \emptyset\right\}
    \]
    endowed with the uniform distribution.
\end{definition}
In other words, $C_{\ell,k,n}$ is exactly $F^{i}_{\Delta_n(k)}$, the faces complex of the $k$-\maximal complete complex on $n$ vertices, and is simply the complete complex itself when $i=1$.

Thus, like the complete complex, one might reasonably hope that the disjointness requirement on the swap complex does not substantially hinder concentration. Indeed this is relatively easy to show when $n \gg k$ (e.g.\ from $\ell_\infty$-independence \cite{kaufman2021scalar}, or even just negative correlation for large enough $n$), but becomes challenging in the regime where $n=\Theta(k)$ where disjointness introduces non-trivial positive correlations between the variables. Nevertheless, in \pref{app:swap-complex}, we show these correlations can be handled whenever $n=\Theta_\varepsilon(k)$:
\begin{theorem} \torestate{\label{thm:probability-to-deviate-from-expectation-in-complete-swap-walk}
    Let \(\varepsilon > 0\), $i \leq k \in \mathbb{N}$, and $C=C_{i,k,n}$ for $n \geq \frac{1153 k \log \left (\frac{2}{\varepsilon} \right )}{\varepsilon^5}$. For any \(f:C(1)\to [0,1]\):
    \begin{enumerate}
        \item \textbf{Upper tail}: $\Pr[U_{1,\frac{k}{i}}f-\mathbb{E}[f] > \varepsilon] \leq \exp(-\Omega(\varepsilon^2 \frac{k}{i}))$
        \item \textbf{Lower tail}: $\Pr[U_{1,\frac{k}{i}}f-\mathbb{E}[f] < -\varepsilon] \leq \exp(-\Omega(\varepsilon^2 \frac{k}{i}))$
    \end{enumerate}
    }
\end{theorem}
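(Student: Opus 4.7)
The plan is to expose the blocks $T_1,\ldots,T_N$ of a uniformly random top-face of $C = C_{i,k,n}$ sequentially (where $N = \lfloor k/i \rfloor$) and apply Azuma--Hoeffding to the Doob martingale $X_j = \mathbb{E}\bigl[\tfrac{1}{N}\sum_{j'} f(T_{j'}) \mid T_1,\ldots,T_j\bigr]$. By symmetry of the uniform partition, each unrevealed block is marginally uniform over $i$-subsets of the residual universe $R_j := [n]\setminus \bigcup_{j'\leq j} T_{j'}$, so the conditional expectation has the explicit form
\[
X_j \;=\; \tfrac{1}{N}\Bigl(\textstyle\sum_{j'\leq j} f(T_{j'}) + (N-j)\,\mu_{R_j}\Bigr),
\qquad \mu_S := \mathbb{E}_{T\subset S,\,|T|=i}[f(T)].
\]

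The first step is to bound the martingale differences by $|X_j - X_{j-1}| \leq \tfrac{1}{N} + O(ki/|R_{j-1}|)$. The identity $\mu_{R_{j-1}} = \alpha\,\mu_{R_j} + (1-\alpha)\bar\nu$ (where $\alpha = \binom{|R_{j-1}|-i}{i}/\binom{|R_{j-1}|}{i}$ and $\bar\nu$ averages $f$ over $i$-subsets of $R_{j-1}$ meeting $T_j$) yields $|\mu_{R_j} - \mu_{R_{j-1}}| \leq 1-\alpha = O(i^2/|R_{j-1}|)$, from which the bound follows after multiplying by $N-j \leq k/i$. Under the hypothesis $n \geq \Omega(k \log(2/\varepsilon)/\varepsilon^5)$, the residual $|R_{j-1}| \geq n-k$ is much larger than $ki$, so the second term is negligible compared to $1/N$, and Azuma--Hoeffding yields
\[
\Pr\bigl[|U_{1,k/i}f - \mu| > \varepsilon\bigr] \leq 2\exp\bigl(-\Omega(\varepsilon^2 N)\bigr) = 2\exp\bigl(-\Omega(\varepsilon^2 k/i)\bigr)
\]
for both tails, as required.

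The main technical obstacle will be the final few steps of the revelation, where $|R_{j-1}|$ can shrink to a size comparable to $i$ and the crude $L^\infty$ bound on $|X_j - X_{j-1}|$ degrades. The plan is to truncate the martingale at the first index $j^\star$ where $|R_{j^\star}| \leq n/2$, handling the short remaining tail directly via a Chernoff bound on the complete complex $\Delta_{|R_{j^\star}|}$ applied to the partition-into-blocks structure (using \pref{lem:weird-distribution-to-uniform} with the faces-complex distribution); this truncation argument is the likely origin of the $\log(2/\varepsilon)$ factor in the hypothesis on $n$, via a union bound over possible truncation indices. If uniform $L^\infty$ control is still insufficient in regimes with $i$ close to $k$, we will instead use Freedman's inequality and bound the conditional variance of the differences by exploiting that $\mu_{R_j}$ concentrates sharply around $\mu_{R_{j-1}}$ as $T_j$ varies, which is a standard sampling statement on $\Delta_{|R_{j-1}|}$ that can be extracted from \pref{claim:hitting-technical}.
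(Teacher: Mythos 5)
Your approach (Doob martingale on the block-revelation filtration plus Azuma) is genuinely different from the paper's. The paper first reduces to indicator functions $f = \mathbf{1}_A$, then proves a moment bound $\Pr[s_1,\ldots,s_\ell \in A] \le (1+\varepsilon^2)^\ell\,\Pr[A]^\ell$ by induction on $\ell$ (conditioning on the first $\ell-1$ blocks and appealing to multiplicative sampling of the complete complex on the residual universe), and then feeds this into the Chernoff bound of Panconesi and Srinivasan for nearly-independent indicators. Unfortunately, the martingale route has a genuine gap that the hypothesis on $n$ does not close.

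Setting $N=\lfloor k/i\rfloor$, your increment bound reads $|X_j-X_{j-1}|\le \frac1N\bigl(1+(N-j)(1-\alpha_j)\bigr)\lesssim \frac1N+\frac{i^2}{|R_{j-1}|}$, and under the hypothesis $|R_{j-1}|\ge n-k\ge 0.999\,n$, so the increment is $\lesssim \frac1N+\frac{i^2}{n}$. For Azuma to produce $\exp(-\Omega(\varepsilon^2 N))$ one needs $\frac{i^2}{n}\lesssim\frac1N=\frac{i}{k}$, i.e., $n\gtrsim ik$; but $n\ge 1153\,k\log(2/\varepsilon)/\varepsilon^5$ implies $n\gtrsim ik$ only when $i\lesssim \log(2/\varepsilon)/\varepsilon^5$, and the theorem imposes no such bound on $i$. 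Concretely, for $i\gg\log(2/\varepsilon)/\varepsilon^5$, $k\gg i/\varepsilon^2$ (so the conclusion is non-trivial), and $n=\Theta(k\log(2/\varepsilon)/\varepsilon^5)$, a single worst-case increment already has size $\gtrsim \varepsilon$ and Azuma is vacuous. Your proposed repairs do not address this: the truncation at $|R_{j^\star}|\le n/2$ never fires (the hypothesis already forces $|R_j|>0.999\,n$ for every $j$, so the $\log(2/\varepsilon)$ factor is not a truncation artifact), and replacing the $L^\infty$ bound by conditional variance via Freedman (one can estimate $\mathrm{Var}(\mu_{R_j}\mid T_{<j})\lesssim i^3/|R_{j-1}|^2$ from Efron--Stein) only relaxes the requirement to roughly $n\gtrsim k\sqrt{i}$, again not implied. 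By contrast, the paper's argument never differentiates $\mu_{R_j}-\mu_{R_{j-1}}$ step by step: each inductive step only needs the complete complex on the residual universe $[n]\setminus\bigcup_{j'\le j}s_{j'}$ (always of size at least $n-k$) to be a good multiplicative sampler for the single fixed set $A$, which is precisely what the threshold $\frac{n-k}{i}\gtrsim\frac{m\log(2/\varepsilon)}{\varepsilon^5}$ buys.
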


With this in hand, we are finally ready to handle the partite case of \pref{thm:hdx-is-sampler}. 
\begin{proof}[{Proof of \pref{thm:hdx-is-sampler-2-cases} (\pref{thm:hdx-is-sampler}, partite case)}] 
Let $X$ be a $k$-\maximal skeleton of a $d$-partite $2^{-\Omega(d)}$-HDX. We will actually work with $k'$-skeleton of $X$ where $k' = \min\left\{ k, \frac{\varepsilon^5}{1153\log\left(\frac{2}{\varepsilon}\right)}d\right\}$. Note that if $i \geq \frac{\varepsilon^2k'}{100\log \frac{1}{\varepsilon}}$, the stated bound is trivial. Otherwise, by \pref{lem:raising}, it is enough to show $(X(k'),X(i))$ is an $(\varepsilon,\beta)$-sampler for
\[
\beta \leq \frac{16}{\varepsilon}\exp\left(-\Omega\left(\varepsilon^2\frac{k'}{i}\right)\right).
\]
Let $m=\lfloor \frac{k'}{i} \rfloor$. By \pref{prop:faces-to-inclusion}, it is enough to prove the faces complex $F_X^{(i)}(m)$ is an $(\varepsilon,8\exp(-\Omega(\varepsilon^2 m)))$ sampler for any $\varepsilon>0$. 

Toward this end, fix \(f: \X[i] \to [0,1]\) of expectation \(\mu\) viewed as a function on the vertices of the faces complex. Consider the random variable \(\mathcal{J} \coloneqq (col(s_1),col(s_2),\dots,col(s_m))\) corresponding to the partition generated by drawing $(s_1,\ldots,s_m)$ from the faces complex of $X$. For a fixed $\mathcal{J}=\set{I^1,I^2,\dots,I^m}$, denote the conditional expectation of $f$ under $\mathcal{J}$ as 
    \[
    \mu_\mathcal{J} \coloneqq \underset{s \sim \X[i]}{\mathbb{E}}[f|~col(s) \in \mathcal{J}].
    \]
    Further let $C_{\mathcal{J}}$ denote the complex generated by restricting to top-level faces of $C$ with partition $\mathcal{J}$. We define two `bad' events based on $\mathcal{J}$ outside which our sampling is $\varepsilon$-accurate:  
    \begin{enumerate}
        \item \textbf{(Bad Colors):} \(E_1\) the event that $\mu_\mathcal{J}$ deviates significantly from $\mu$
        \[
        \abs{\mu_\mathcal{J} - \mu} \geq \varepsilon/2.
        \]
        \item \textbf{(Bad Sampling):} \(E_2\) the event that $(s_1,\ldots,s_m) \sim C_\mathcal{J}$ samples $\mu_\mathcal{J}$ poorly
        \begin{align*}
            \abs{\frac{1}{m}\sum\limits_{i \in [m]}f(s_i) - \mu_\mathcal{J}} 
            \geq \varepsilon/2.
        \end{align*}
    \end{enumerate}
    It is then sufficient to prove the following three claims:
    \begin{enumerate}
        \item ~  \(\Prob[(s_1,s_2,\dots,s_m) \sim C]{\Abs{\frac{1}{m}\sum\limits_{i \in [m]}f(s_i) - \mu} > \varepsilon} \leq \prob{E_1} + \prob{E_2 \land \neg E_1}.\)
        \item ~ \(\prob{E_1} \leq 2\exp\left(-\Omega(\varepsilon^2m)\right)\).
        \item ~ \(\prob{E_2 \land \neg E_1} \leq 4\exp(-\Omega(\varepsilon^2 m))\).
    \end{enumerate}
    The first claim is immediate from definition and the triangle inequality, as the event $|\sum f - \mu| > \varepsilon$ only occurs if $E_1 \lor E_2$ holds, the probability of which is bounded by $\Pr[E_1] + \Pr[E_2 \land \lnot E_1]$ as desired.
    \paragraph{Bounding \(\prob{E_1}\)} Observe $\mathcal{J}$ is equidistributed with the complex $C_{i,im,d}$ and define \(g:\SC[C][1] \to [0,1]\) by $g(I^j) = \mathbb{E}_C[f~|~I^j \in col(s)]$. For $\mathcal{J}=\{I^1,\ldots,I^m\}$ we then have:
    \begin{enumerate}
        \item $\mathbb{E}[g] = \underset{I \sim \SC[C][1]}{\mathbb{E}}\left[\mathbb{E}[f~|~col(s) \in I]\right] = \mu$
        \item \(\frac{1}{m}\sum\limits_{j=1}^{m} g(I^j)=\mu_{\mathcal{J}}\).
    \end{enumerate}
Combining these observations with \pref{thm:probability-to-deviate-from-expectation-in-complete-swap-walk} we can write
    \begin{align*} \label{eq:bounding-E1}
        \prob{E_1} &= \Pr_{\mathcal{J}}\left[\left|\mu_{\mathcal{J}} - \mu\right|> \frac{\varepsilon}{2}\right]\\
        &=\Pr_{\{I^1,\ldots,I^m\} \sim \SC[C][m]}\left[\left|\frac{1}{m}\sum\limits_{j=1}^{m}g(I^j) - \mathbb{E}[g]\right|>\frac{\varepsilon}{2}\right]\\
        &\leq 2\exp\left(-\Omega(\varepsilon^2 m)\right)
    \end{align*}
    as desired.
    \paragraph{Bounding \(\prob{E_2 \land \neg E_1}\)} It is left to argue that for any fixed $\mathcal{J}=\{I^1,I^2,\ldots,I^m\}$, $C_{\mathcal{J}}$ is a partite $2^{-\Omega(k')}$-HDX. The result is then immediate from applying \pref{thm:sampling-like-complete} with $\eta=\exp(-\frac{1}{4}\varepsilon^2m)$:
    \begin{align*}
    \Pr[E_2 \land \lnot E_1] &\leq \Pr_{\{s_1,\ldots,s_m\} \sim D}[E_2 |~ \mathcal{J} \in \lnot E_1]\\
    & = \underset{\mathcal{J} \in \lnot E_1}{\mathbb{E}}\left[\Pr_{s \sim C_{\mathcal{J}}}\left[\abs{\frac{1}{m}\sum\limits_{i \in [m]} f(s_i) - \mu_\mathcal{J}} \geq \frac{\varepsilon}{2}\right]\right]\\
    &\leq 3\exp\left(-\frac{1}{4} \varepsilon^2m\right).
\end{align*}
    By construction it is easily checked $C_{\mathcal{J}}$ is a partite complex whose links are isomorphic to swap walks $S_{I_i,I_j}$ within a link of the original complex $X$. By \pref{thm:color-swap-walks-expand}, $C_{\mathcal{J}}$ is therefore a partite $\lambda$-one-sided HDX for $\lambda \leq 2^{-\Omega(d)}$ as well.
\end{proof}

\section{Agreement Testing} \label{sec:agreement}
In this section we use our reverse hypercontractive inequality to prove several new agreement testing theorems on HDX. In the first subsection, we give new agreement tests in the $99\%$-regime between any two levels of an HDX. In the second subsection, we explore the role of reverse hypercontractivity in the $1\%$-regime, and prove that HDX admit agreemment tests with optimal \textit{local} soundness (a weaker notion of soundness only requiring globalness on links), and show under the stronger assumption of $\ell_\infty$-expansion that this guarantee can be propagated to a true tester with optimal soundness.
\subsection{Background}
Suppose \(U\) is some finite set of vertices and \(S \subseteq \mathbb{P}(U)\) are subsets of \(U\). An \emph{agreement test} is a procedure that inputs a family of assignments to the subsets \(\set{f_s :s \to \set{0,1}}_{s \in S}\), and aims to check whether the assignments are consistent with a \emph{global} assignment to the ground set \(G:U \to \set{0,1}\). This is a classical setup in hardness of approximation (in particular in the construction of PCPs), where one repeats a problem in parallel and needs to check that answers across repetitions are consistent.

Let us give the simplest setup first, slightly generalizing it later on. Let \((U,S,\Sigma,D)\) be such that
\begin{enumerate}
    \item \(U\) is a finite set (``Universe'').
    \item \(S \subseteq \mathbb{P}(U)\) (``Sets'').
    \item \(\Sigma\) is another finite set (``Alphabet'').
    \item A distribution \(s_1,s_2 \sim D\) such that \(s_1,s_2 \in S\) (``Distribution'').
\end{enumerate}

An \emph{ensemble of functions} in this context is a family \(\mathcal{F} = \set{f_s:s\to \Sigma}_{s \in S}\). An ensemble is called \emph{global} if there exists a function \(G:U \to \Sigma\) such that for every \(s \in S\) it holds that \(f_s = G|_s\).

We wish to relate the following two quantities. The first quantity is the ``agreement'' of an ensemble. For \(\eta \in [0,1)\) we denote by
\begin{equation}\label{eq:agreement-1}
    Agree_\eta (\mathcal{F}) := \Prob[s_1,s_2,t \sim D]{f_{s_1}|_{s_1 \cap s_2} \overset{\eta}{\approx} f_{s_2}|_{s_1 \cap s_2}},
\end{equation}
where the notation inside the probability means that there is a set \(t' \subseteq s_1 \cap s_2\) of size \(|t'| \geq (1-\eta)|s_1 \cap s_2|\) such that \(f_{s_1}|_{t'} =f_{s_1}|_{t'}\). We will also use $\overset{\eta}{\not\approx}$ to denote the negation of this.

The second quantity is the distance from global functions. For \(\eta, \varepsilon > 0\) we say that \(\mathcal{F}\) is \((\eta,\varepsilon)\)-close to a global function if the exists some \(G:U \to \Sigma\) such that
\[\Prob[s]{f_s \overset{\eta}{\approx} G|_s} \geq 1-\varepsilon.\]
The probability distribution in which we take \(s \in S\) above is the marginal distribution of \(D\).\footnote{Formally, draw $(s,s') \sim D$, then $s$ or $s'$ with probability $1/2$ independently.}

We denote by 
\[
    \dist_{\eta}(\mathcal{F},Glob) = \min \sett{\varepsilon \geq 0}{\text{there exists \(G:U \to \Sigma\) such that \(\mathcal{F}\) is \((\eta,\varepsilon)\)-close to \(G\)}}.
\]

A \emph{high acceptance agreement theorem} (also known as a ``99\% agreement theorem'') is a theorem that states that if a set of local functions pass the test with almost perfect probability (``99\%''), then the functions are close to being global. In other words, this is a theorem that states for a specific \((U,S,\Sigma,D)\) that for every ensemble \(\mathcal{F}\),
\[ 
Agree_\eta (\mathcal{F}) = 1-\varepsilon \quad \implies \quad \dist_{\eta'}(\mathcal{F},Glob) \leq \varepsilon'.
\]
where \(\varepsilon,\varepsilon',\eta,\eta'>0\). Ideally, one wants such a statement for every \(\varepsilon,\eta\) small enough, with \(\varepsilon' = O(\varepsilon)\) and \(\eta' = O(\eta)\). But other parameter regimes are interesting in applications as well.

Agreement tests originated in low-degree testing results such as \cite{RubinfeldS1996} and were later abstracted by \cite{GoldreichS1997} to roughly the setup above. They are an important component of Dinur's proof of the PCP Theorem \cite{dinur2007pcp}. An optimal high acceptance agreement theorem was proven by \cite{DinurS2014} on the complete complex. This was extended by \cite{DinurFH2019} to another setup related to the complete complex we discuss below. Agreement theorems on sparse complexes such as high dimensional expanders are also known, pioneered by \cite{DinurK2017}.

A \emph{low acceptance agreement theorem} (also known as a ``1\% agreement theorem'') deals with the regime where it is only assumed that the functions pass the agreement test with small (sometimes even sub-constant) probability (``1\%''). This is a theorem of the form
\[ 
Agree_\eta (\mathcal{F}) = \delta \geq \varepsilon \quad \implies \quad \dist_{\eta'}(\mathcal{F},Glob) \leq 1-\poly(\delta).
\]
for some \((\varepsilon,\eta,\eta')\) (and some explicit polynomial in \(\delta\)). We note here that in this test even when \(\eta = 0\), an \(\eta' > 0\) is sometimes unavoidable (see discussions in \cite{DinurG2008, DinurL2017}).

Agreement theorems in the low acceptance regime are typically harder to prove than in the high acceptance regime, since the promised structure on the family of functions is mild. However, in applications such that a parallel repetition theorem \cite{ImpagliazzoKW2012} and low-degree testing \cite{AroraS1997}, they are often crucial. Dinur and Goldenberg \cite{DinurG2008} proved a low acceptance agreement theorem on the complete complex. Their proof was simplified by \cite{ImpagliazzoKW2012}, who also derived a more sophisticated Z-test (see below) and gave the first sparse structure supporting such agreement tests using subspaces. Work by \cite{DinurL2017} gave an improved analysis of \cite{ImpagliazzoKW2012}, asymptotically matching the lower bound of a random set of local functions. Their work heavily relied on reverse hypercontractivity, a strategy we follow below. Recently, \cite{dikstein2023agreement} and \cite{bafna2023characterizing} gave characterizations of high dimensional expanders that support agreement theorems, and \cite{DiksteinDL2024,BafnaM2024} constructed such complexes.

There are also other kinds of theorems about agreement tests that don't strictly fall into any of these categories. One example such as \cite{GotlibK2022} list agreement theorem, that inspired the former line \cite{dikstein2023agreement,bafna2023characterizing,DiksteinDL2024,BafnaM2024}.

\subsubsection{Varying the Test}
We present two simple extensions to the agreement test above. The first comes from sampling more than two sets. For example, in the Z-test by \cite{ImpagliazzoKW2012}, three sets \(s_1,s_2,s_3\) are sampled, and the test passes if \(f_{s_1} = f_{s_2}\) \emph{and} \(f_{s_2} = f_{s_3}\) (equality is with respect to the respective intersections). See the formal definition of this test below.

\medskip

Another extension to the definition of a test, mainly for technical reasons, is a test that checks \(f_{s_1}|_t = f_{s_2}|_t\) only on some \(t \subset s_1 \cap s_2\) which is also sampled by the tester. This is sometimes more convenient to analyze than the test checking agreement on the whole intersection. To summarize this discussion more formally, the distribution \(D\) in our setup samples \((\set{s_i}_{i=1}^k, \set{t_{i,j}}_{1 \leq i<j\leq k})\) such that \(\set{s_i}_{i=1}^k \subseteq S\) are sets, and for every \(i<j\), \(t_{i,j} \subseteq s_i \cap s_j\). The agreement test passes if for all \(i <j\), \(f_{s_i}|_{t_{i,j}} \overset{\eta}{\approx} f_{s_j}|_{t_{i,j}}\).

\subsection{Agreement Testing for Subsets}
We now move to the first setting of interest, agreement tests on simplicial complexes in the 99\% regime. In particular we study the following setup due to \cite{DinurFH2019}. Let \(j<k<d\) and let \(X\) be a \(d\)-\maximal simplicial complex. Let us denote by \(SS(X,\Sigma,j,k,d) \coloneqq (U,S,\Sigma,D)\) where: 
\begin{enumerate}
    \item \(U = \X[j]\).
    \item \(S = \sett{\binom{s}{j}}{s \in \X[d]}\).
    \item The distribution \(D = D_{d,k}\) samples \(s_1,s_2\) according to the down-up walk. The `test' set \(t \subseteq s_1 \cap s_2\) is a randomly chosen \(k\)-face.
\end{enumerate}
For simplicity of notation we sometimes refer to the sets as \(s\) instead of \(\binom{s}{j}\), and also index functions by \(f_s\) for \(s \in \X[d]\) (and not \(f_{\binom{s}{j}}\)). However, in this context when writing \(v \in s\), we mean a \(j\)-set instead of a vertex.

In other words, we study the following variant of the classical `V-test' in this setting:
\\
\\
\fbox{\parbox{\textwidth}{
\vspace{.1cm}
\underline{The V-test on $\X[k]$}:
\begin{enumerate}
    \item Draw $t \in \X[k]$, and $s_1,s_2 \in \X[d][t]$ independently
    \item Accept if:
    \[
    f_{t \cup s_1}|_t \overset{\eta}{\approx} f_{t \cup s_2}|_t
    \]
\end{enumerate}}}
\\
\\
When \(j=1\), this reduces to the heavily studied setting of (de-randomized) `direct product testing'.

We are now ready state the main result of the subsection, our generalized 99\%-tester:
\begin{theorem}\label{thm:agreement-99-percent}
        Let \(j,d\) be integers, let \(\gamma,\eta,c, \nu < 1\). Denote by \(k = \gamma d\). Suppose that \(j<k\), \(\eta \lfloor \frac{k}{j} \rfloor \geq 192 \log \frac{16}{\eta}\) and \(\nu < 1\). Let \(X\) be a \(d\)-\maximal \(c\)-locally nice complex such that for every link of a set \(v \in \X[j]\), \(\lambda(UD_{d-j,k-j}(X_v)) \leq \nu\). 
        Then the following holds for the set system \(SS(X,\Sigma,j,k,d)\); for every alphabet \(\Sigma\), family \(\mathcal{F}\), and \(\varepsilon > 0\)
        \begin{equation}
            Agree_{\eta}(\mathcal{F}) \geq 1- \varepsilon \quad \implies \quad \dist_{8\frac{\eta+\varepsilon}{1-\nu}}(\mathcal{F},g_{\text{maj}}) \leq \varepsilon^{O_{\gamma,c}(1)} + 2^{-\Omega_{\gamma,c}(d)}
        \end{equation}
        where $g_{\text{maj}} \coloneqq \underset{s \in \X[d]: s \supset v}{\text{plurality}} \set{f_{s}(v)}$ is the plurality decoding.\footnote{Note that ties may be broken arbitrarily and the weighting is according to the (normalized) complex weights $\pi_{d}$.}
\end{theorem}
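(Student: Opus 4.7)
The plan follows the standard two-step template for 99\%-regime agreement tests, using the local $\nu$-spectral bound for a popularity step and the reverse hypercontractive inequality \pref{thm:indicator-reverse-hc} for the tail bound. For the popularity step, set $G = g_{\text{maj}}$, and for each $j$-face $v$ and label $a\in\Sigma$ let $p_a(v) = \Prob[s\supset v]{f_s(v)=a}$ and $p(v)=p_{G(v)}(v)$. Restricted to a random $v\in t$, the V-test is precisely the down-up walk $UD_{d-j,k-j}$ on $X_v$; applying expander mixing with bound $\nu$ to each indicator $s\mapsto 1[f_s(v)=a]$ and summing over $a$ gives the pointwise disagreement bound
\[
\Pr\left[f_{s_1}(v)\neq f_{s_2}(v)\mid v\in t\right] \;\geq\; (1-\nu)\bigl(1-\textstyle\sum_a p_a(v)^2\bigr) \;\geq\; (1-\nu)(1-p(v)).
\]
Since $Agree_\eta(\mathcal F)\geq 1-\varepsilon$ bounds the average disagreement on a random $v\in t$ by $\eta+\varepsilon$, this yields $\Ex[v]{1-p(v)}\leq(\eta+\varepsilon)/(1-\nu)$. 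Writing $d(s) = \Prob[v\in\binom{s}{j}]{f_s(v)\neq G(v)}$ and swapping expectations, $\Ex[s]{d(s)} = \Ex[v]{1-p(v)} \leq (\eta+\varepsilon)/(1-\nu)$.

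For the tail bound, set $\alpha = 8(\eta+\varepsilon)/(1-\nu)$ and $B = \{s\in\X[d]:d(s)>\alpha\}$; the goal is $\prob{B}\leq\varepsilon^{O_{\gamma,c}(1)}+2^{-\Omega_{\gamma,c}(d)}$. If $\prob{B} \leq 2^{-\Omega_{\gamma,c}(d)}$ we are done, so assume the contrary. Then \pref{thm:indicator-reverse-hc} applied to the V-test walk $UD_{d,k}$ (note $k = \gamma d$, and $c$-local niceness gives the required $\tau$-LUS hypothesis via \pref{cor:HDX-are-lus}) produces constants $q = O_{\gamma,c}(1)$ satisfying $\Prob[(s_1,s_2)\sim UD_{d,k}]{s_1,s_2\in B}\geq \prob{B}^{2q}$. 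It then suffices to exhibit a constant $c''$ (independent of $\eta,\varepsilon,\nu$) such that the V-test fails on a pair from $B\times B$ with probability at least $c''$; chaining gives $\varepsilon\geq c''\prob{B}^{2q}$ and hence $\prob{B}\leq(\varepsilon/c'')^{1/(2q)} = \varepsilon^{O_{\gamma,c}(1)}$.

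The difficult step is lower bounding the conditional failure probability $\Prob[(s_1,s_2)\sim UD_{d,k}]{\text{test fails}\mid s_1,s_2\in B}$ by an absolute constant; the issue is that the conditioning $s_2\in B$ could in principle concentrate $f_{s_2}|_t$ on a single ``alternative plurality'' $G'|_t$ that agrees with $f_{s_1}|_t$. The plan combines three ingredients: (a) Chernoff sampling in the complete complexes $(X_{s_i}(k),X_{s_i}(j))$ to show that typical shared $t$'s inherit an $\alpha/2$-fraction of $G$-bad $j$-faces from each $s_i$; (b) the sampler $(\X[d],\X[j])$ from \pref{thm:hdx-is-sampler} together with Markov on the popularity bound to control the unpopular fraction of $v\in s$; and (c) a second application of $\nu$-expansion on each $X_v$, combined with the global scarcity of unpopular $j$-faces derived in the popularity step, to rule out the ``structured alternative plurality'' conspiracy on all but a small measure of $v$'s. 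The quantitative interplay between the global popularity bound (Step 1) and the local spectral bound on each link $X_v$ is the main technical work, and it is what converts the reverse hypercontractive statement into a genuine $\varepsilon^{O(1)}$ tail.
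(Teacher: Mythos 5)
Your popularity step is correct and matches the paper's \pref{claim:vertex-closeness-to-G} almost verbatim. The divergence occurs at the reverse hypercontractive step: you apply \pref{thm:indicator-reverse-hc} to the pair $(B,B)$, whereas the paper applies it to $(A,B)$ where $A \coloneqq \{s : d(s) \leq 4(\eta+\varepsilon)/(1-\nu)\}$. This is not a cosmetic difference --- it is the crux of the argument, and your version leaves a genuine gap exactly where you flag the ``structured alternative plurality'' issue.

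The reason $(A,B)$ works cleanly is that the conditional rejection probability on an $A\times B$ edge is forced by a mass-gap argument with \emph{no possibility of conspiracy}: for $s_1 \in A$, at most a $5(\eta+\varepsilon)/(1+\gamma)$-fraction of the shared $j$-faces in $t$ are $G$-bad for $s_1$ (by Chernoff in the complete complex on $s_1$); for $s_2 \in B$, at least a $7(\eta+\varepsilon)/(1+\gamma)$-fraction are $G$-bad for $s_2$; so there is a $2(\eta+\varepsilon)/(1+\gamma) > \eta$ fraction of $v \in t$ on which $f_{s_2}(v) \neq G(v) = f_{s_1}(v)$, and the test rejects. The set $A$ anchors one side to $G$. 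Your $(B,B)$ version has no such anchor: if all members of $B$ flip the same set $S$ of $j$-faces to the same alternative labels, then every $B\times B$ edge \emph{passes}, and the conditional rejection probability is zero. Ingredients (a) and (b) of your plan establish that a typical $t$ inherits $\Omega(\alpha)$ of the $G$-bad faces from each endpoint, but that is compatible with the conspiracy --- the two inherited bad sets can coincide and carry matching labels. Ingredient (c) (a second local spectral argument on the $X_v$'s) is gestured at but not spelled out, and I don't see how it rules the conspiracy out: the popularity bound $\Ex[v]{1-p(v)} \leq \alpha/8$ only controls the \emph{average} unpopularity, and Markov gives $\mu\{v : 1-p(v) \geq \Pr[B]\} \leq \alpha/(8\Pr[B])$, which is large precisely when $\Pr[B]$ is small --- the regime you are trying to bound. (In the fully-conspired scenario the test does fail heavily on $B\times B^c$ edges, which would rescue the conclusion by a different route, but your plan does not invoke those edges at all.) The fix is simply to use $(A,B)$ as the paper does, exploiting that $\Pr[A] \geq 1/2$ by Markov applied to the popularity bound; this both eliminates the conspiracy problem and yields the cleaner exponent $\Pr[B] \lesssim \varepsilon^{1/q}$ rather than $\varepsilon^{1/(2q)}$.
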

We note that every \(c\)-nice complex has \(\lambda(UD_{d-j,k-j}) \leq \frac{k-j}{d-j} (1+\frac{1}{c'(d-j)})^{(1-\gamma) (d-j)}\), so this is not an extra requirement in most cases (see \cite[Theorem 3.5]{lee2023parallelising}).

We remark that for the setup of \(j=1\) there exist agreement theorems that achieve (exact) closeness \(1-O(\varepsilon)\) for high dimensional expanders \cite{DinurK2017,DiksteinD2019,kaufman2020local}. \pref{thm:agreement-99-percent} is incomparable to these results. On the one hand, in \pref{thm:agreement-99-percent} the resulting `globalness' is weaker than \cite{DinurK2017,DiksteinD2019,kaufman2020local} where one gets a guarantee of \(\dist_0(\mathcal{F},g_{maj}) \leq O(\varepsilon)\). On the other hand, the initial assumption on the sets of local functions in the aforementioned results is \(Agree_0 (\mathcal{F}) \geq 1-\varepsilon\), wheras this result assumes \(Agree_\eta (\mathcal{F}) \geq 1-\varepsilon\) for \(\eta > 0\). This relaxed assumption is crucial in the low-soundness regime theorem we give in the next section. 
 
For $j>1$, to our knowledge \pref{thm:agreement-99-percent} is the first agreement theorem beyond the complete complex, where Dinur, Filmus, and Harsha \cite{DinurFH2019} showed
\begin{equation} \label{eq:dfh-agreement-guarantee}
Agree_{0}(\mathcal{F}) \geq 1- \varepsilon \Rightarrow \dist_{0}(\mathcal{F},Glob) \geq 1-O_{j}(\varepsilon).
\end{equation}
It is plausible that, with more effort, \pref{eq:dfh-agreement-guarantee} could be proven directly on HDX. The advantage of our method in this sense lies in its generality and (relative) simplicity.

\begin{proof}[Proof of \pref{thm:agreement-99-percent}]    
    Given a family $\mathcal{F}=\{f_s\}$ with high agreement, we show it can be (approximately) decoded to the majority function $G=g_{\text{maj}}$. We argue this can be inferred directly from reverse hypercontractivity and the following claim bounding the \textit{average} probability of disagreement with $G$ for a random pair $v \subseteq s$ drawn from $(\X[j],\X[d])$:
    \begin{claim} \label{claim:vertex-closeness-to-G}~
        \(\Prob[v \subset s]{f_s(v)\ne G(v)} \leq \frac{\varepsilon + \eta}{1-\nu}.\)
    \end{claim}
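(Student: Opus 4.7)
The plan is to reduce bounding $\Pr_{v\subset s}[f_s(v)\neq G(v)]$ to a standard first-moment estimate on the V-test failure via a spectral-gap step that passes through a ``two-sample'' quantity. For each $v\in \X[j]$, let $p_\sigma^v=\Pr_{s\supset v}[f_s(v)=\sigma]$, so that $G(v)=\arg\max_\sigma p_\sigma^v$ and
\[
p_v \;:=\; \Pr_{s\supset v}[f_s(v)\neq G(v)] \;=\; 1-\max_\sigma p_\sigma^v.
\]
The target quantity is then $\mathbb{E}_{v\sim \pi_j}[p_v]$, since the distribution on $v$ induced by drawing $s\sim\pi_d$ and $v\subset s$ uniformly is exactly $\pi_j$.

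The first key step is to pass from $p_v$ to the independent-pair disagreement $\alpha_v := \Pr_{s_1,s_2\supset v \text{ ind.}}[f_{s_1}(v)\neq f_{s_2}(v)] = 1-\sum_\sigma (p_\sigma^v)^2$. Using $\sum_\sigma (p_\sigma^v)^2 \leq (\max_\sigma p_\sigma^v)\cdot \sum_\sigma p_\sigma^v = 1-p_v$, one gets the elementary inequality $\alpha_v \geq p_v$. Next, I will upgrade $\alpha_v$ to the V-test conditioned on containing $v$: apply the bipartite spectral bound $\langle h, UD_{d-j,k-j}(X_v)\, h\rangle \leq (1-\nu)\mathbb{E}[h]^2 + \nu \mathbb{E}[h^2]$ to the indicator $h(s)=\mathbf{1}[f_s(v)=\sigma]$ and sum over $\sigma$. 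Since the V-test conditioned on both $s_1,s_2$ containing $v$ is exactly this down-up walk on the link $X_v$ (running from the $(d-j)$-faces to the $(k-j)$-faces and back), this yields
\[
\Pr_{V\text{-test}\mid v}[f_{s_1}(v)\neq f_{s_2}(v)] \;\geq\; (1-\nu)\,\alpha_v \;\geq\; (1-\nu)\,p_v.
\]

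For the second main step, I will bound the expected fraction of $j$-face disagreements of the V-test. If the test $\overset{\eta}{\approx}$-passes then the disagreement fraction is at most $\eta$, and otherwise it is at most $1$. Since $Agree_\eta(\mathcal{F})\geq 1-\varepsilon$, averaging over $(t,s_1,s_2)$ gives
\[
\mathbb{E}_{(t,s_1,s_2)}\!\left[\frac{\#\{v\in\binom{t}{j}:f_{s_1}(v)\neq f_{s_2}(v)\}}{\binom{k}{j}}\right] \;\leq\; \eta(1-\varepsilon)+\varepsilon \;\leq\; \eta+\varepsilon.
\]
Now I swap the order of expectation: drawing $v$ uniformly from $\binom{t}{j}$ on top of $(t,s_1,s_2)$ reweights to marginal $\pi_j$ on $v$ and the V-test conditional distribution on $(s_1,s_2)$ given $v$. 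Hence the left-hand side equals $\mathbb{E}_{v\sim\pi_j}\bigl[\Pr_{V\text{-test}\mid v}[f_{s_1}(v)\neq f_{s_2}(v)]\bigr]$, which by the previous paragraph is at least $(1-\nu)\,\mathbb{E}_v[p_v]$. Rearranging gives $\mathbb{E}_v[p_v]\leq (\eta+\varepsilon)/(1-\nu)$, which is exactly the claim.

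The only non-cosmetic step is the spectral bound, but this is a direct consequence of the standard eigenvalue decomposition of $UD_{d-j,k-j}$ on $X_v$ together with the given hypothesis $\lambda(UD_{d-j,k-j}(X_v))\leq \nu$; the rest is Fubini and the plurality-minimizes-error observation. I do not foresee any other obstacle, and in particular reverse hypercontractivity is not needed for this intermediate claim (it enters later to boost ``average closeness'' into ``closeness for most $s$'').
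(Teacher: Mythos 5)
Your proposal is correct and takes essentially the same route as the paper: in both cases the key is to lower-bound the $v$-conditioned V-test disagreement $\varepsilon_v$ by $(1-\nu)\bigl(1-\max_\sigma\Pr[S^{(v)}_\sigma]\bigr)$ via the spectral gap of the down-up walk in $X_v$, and to upper-bound $\Ex[v]{\varepsilon_v}$ by $\eta+\varepsilon$ by a Fubini swap; your intermediate quantity $\alpha_v$ is just a named version of the chain $(1-\nu)(1-\sum_\sigma p_\sigma^2)\geq(1-\nu)(1-\max_\sigma p_\sigma)$ that the paper writes in one line. One wording slip worth fixing: the V-test conditional that equals the down-up walk $UD_{d-j,k-j}$ on $X_v$ is obtained by conditioning on $v\subset t$ (equivalently, on $v$ being the uniformly-drawn $j$-face of $t$), not merely on $s_1,s_2\supset v$; your Fubini step in the next paragraph implicitly uses the correct conditioning, but the earlier phrase as written describes a different (strictly larger) event whose conditional law is not the down-up walk.
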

    We first show this implies the result. Define the set of `bad' $d$-faces which noticeably disagree with $G$ as
    \[
    B \coloneqq \sett{s \in S}{\Prob[v \in s]{f_s(v) \neq G(v)} > \frac{8(\varepsilon + \eta)}{1-\nu}}.
    \]
    It is enough to show
    \(\prob{B} \leq \max\{\varepsilon^{O_{\gamma,c}(1)},\exp(-\Omega_{\gamma,c}(d))\} \}\). Toward bounding $B$, define an intermediary set $A$ of `good' faces:
    \[
    A \coloneqq \sett{s \in S}{\Prob[v \in s]{f_s(v) \neq G(v)} \leq \frac{4(\varepsilon + \eta)}{1-\nu}}
    \]
    and note that $\Pr[A] \geq \frac{1}{2}$ by \pref{claim:vertex-closeness-to-G} and Markov's Inequality.
    The idea is to argue that, on the one hand, reverse hypercontractivity implies the number of edges $(s_1,s_2)$ between $A$ and $B$ is at least some power of $\Pr[B]$, while, on the other hand, agreement implies the number of such edges is at most $O(\varepsilon)$ because they (typically) contain many $v \in s_1 \cap s_2$ that disagree and therefore fail the test.
    
    More formally, denote the event that $(s_1,s_2) \in (A \times B)$ as $E_{A,B}$. Assuming $\Pr[B] \geq \exp(-c' d)$ (for $c'$ depending on $c,\gamma$ as in \pref{thm:indicator-reverse-hc}), reverse hypercontractivity lower bounds the measure of this event by
    \[
    \Pr[E_{A,B}] \geq \left(\frac{\Pr[B]}{2}\right)^{q}
    \]
    for some constant \(q\) depending only on \(c,\gamma\). On the other hand, since the test rejects with probability $\varepsilon$, we can upper bound the measure of $E_{A,B}$ by 
    \(O(\prob{\text{Test rejects}}) = O(\varepsilon)\) we will obtain a bound of the form \(\prob{B} \leq O(\varepsilon^{1/q})\). To this end, note
    \[
    \Pr[E_{A,B}] \leq \frac{\varepsilon}{\Pr[\text{Test rejects} |~E_{A,B}]}.
    \]
    It is therefore enough to argue the conditional rejection probability is at least some constant (say $\frac{1}{2}$). Toward this end, for a face $s \in \X[d]$, define 
    \[
    T_s \coloneqq \{v \in s : f_s(v) \neq G(v)\}
    \]
    to be the $j$-faces on which $f_s$ disagrees with majority, and recall that an edge in our test is sampled by first picking $s_1 \in \X[d]$, then $t \subset s_1$ in $\X[\gamma d]$, then $s_2 \supset t$ conditionally. Given $t$, denote by $t(j) = \binom{t}{j}$ the set of sub $j$-faces in $t$, for shortness. We define two bad events outside of which the test rejects and prove they occur with vanishing probability.
\begin{enumerate}
    \item \textbf{Mis-sampling $\mathbf{A}$:} $E_1$ the event that $t$ sees `too many' $v \subset s_1$ that disagree with majority:
    \[
    E_1 \coloneqq \left\{ \frac{|t(j) \cap T_{s_1}|}{\binom{k}{j}} \geq 5 \frac{\varepsilon+\eta}{1+\gamma} \right\}
    \]
    \item \textbf{Mis-sampling $\mathbf{B}$:} $E_2$ the event that $t$ sees `too few' $v \subset s_2$ that disagree with majority:
    \[
    E_2 \coloneqq \left\{ \frac{|t(j) \cap T_{s_2}|}{\binom{k}{j}} \leq 7 \frac{\varepsilon+\eta}{1+\gamma} \right\}
    \]
\end{enumerate}
Conditioning on $\lnot E_1$ and $\lnot E_2$ the test rejects since $s_1$ and $s_2$ disagree on a $2\frac{\varepsilon+\eta}{1-\nu}$-fraction of $t$. Thus by a union bound:
\[
\Pr[\text{Test Rejects}|~E_{A,B}] \geq 1-\Pr[E_1|~E_{A,B}] - \Pr[E_2|~E_{A,B}]
\]
and it is enough to show that each $\Pr[E_i |~E_{A,B}] \leq \frac{1}{4}$
\paragraph{Bounding $\mathbf{E_1}$ and $\mathbf{E_2}$} Events $E_1$ and $E_2$ occur with vanishing probability due to the sampling properties of the complete complex $\Delta$ (\pref{claim:complete-complex-multiplicative-sampler}), namely that for any $\alpha,\delta>0$ and $j \leq k$, the inclusion graph $(\Delta(k),\Delta(j))$ is an $(\alpha, \frac{4}{\alpha \delta}\exp(\frac{\delta^2}{12}\alpha \left\lfloor \frac{k}{j}\right\rfloor),\delta)$-sampler.

With this in mind, recall $t \in \X[\gamma d]$ is drawn uniformly at random from $s_1$, so we may equivalently view $t$ as being drawn from a $\gamma d$-\maximal complete complex $\Delta$ on the $d$ vertices of \(s_1\). By assumption, $s_1 \in A$, so $T_{s_1} \subset \Delta(j)$ is of measure at most $4\frac{\varepsilon+\eta}{\gamma}$, and sampling then implies
\[
\Pr\left[\frac{|t(j) \cap T_{s_1}|}{\binom{k}{j}} > 5\frac{\varepsilon+\eta}{1+\gamma}\right] \leq \frac{4(1+\gamma)}{\varepsilon+\eta}\exp\left(-\frac{\eta}{96}\left\lfloor \frac{k }{j}\right\rfloor\right) \leq \frac{1}{4}.
\]
Since we can equivalently sample an edge $(s_1,s_2)$ by first picking $s_2$, then $t \subset s_2$, then $s_1 \supset t$, $E_2$ can be bounded using the lower tail of $\Delta$ in exactly the same fashion to get
\[
\Pr\left[\frac{|t(j) \cap T_{s_2}|}{\binom{k}{j}} < 7\frac{\varepsilon+\eta}{1+\gamma}\right] \leq \frac{4(1+\gamma)}{\varepsilon+\eta}\exp\left(-\frac{\eta}{192}\left\lfloor \frac{k}{j}\right\rfloor\right) \leq \frac{1}{4}.
\]
\end{proof}
\begin{proof}[Proof of \pref{claim:vertex-closeness-to-G}]
    Fix \(v \in \X[j]\) and let $D^{(v)}_{d,k}$ denote the conditional distribution over $(s_1,t,s_2) \sim D_{d,k}$ such that \(v \subset t\). Note that by construction, $D^{(v)}_{d,k}$ (after removing $v$ from each face) is distributed exactly as the down-up walk $D_{d-j,k-j}$ within the link of $v$. Partition the vertices of this graph into sets $S^{(v)}_\sigma \coloneqq \{s|~f_s(v)=\sigma\}_{\sigma\in \Sigma}$ by the alphabet value they assign $v$, and observe that since the marginals of $D^{(v)}_{d,k}$ are distributed as $\SC[X][d-j][v]$ we have
    \[
    1- \Prob[v \subset s]{f_s(v)\ne G(v)} = \Pr_{v \subset s}[f_s(v)=G(v)] = \Ex[v \sim {\X[j]}]{\max_{\sigma \in \Sigma}\left\{\Pr\left[S^{(v)}_\sigma\right]\right\}},
    \]
    where the inner probability is over the $v$-conditioned marginal.

    The idea, which is fairly standard (see e.g.\ \cite[Claim 5.2]{DinurK2017}), is now to relate this maximum partition size to the \textit{local} disagreement of $\mathcal{F}$ using expansion. In particular, for each $v \in \X[j]$ define the $v$-local disagreement of $\mathcal{F}$ as:
    \[
    \varepsilon_v \coloneqq \Pr_{(s_1,s_2) \sim D_{d,k}^{(v)}}[f_{s_1}(v) \neq f_{s_2}(v)].
    \]
    It is easy to see that \( \Ex[v]{\varepsilon_v} \leq \varepsilon + \eta\). This follows from observing that sampling a random $v \in \X[j]$, then $(s_1,t,s_2) \sim D_{d,k}^{(v)}$ is equivalent to \textit{first} sampling $(s_1,t,s_2) \sim D_{d,k}$, then sampling $v$ uniformly from $t$. The bound is clear from the latter interpretation since the tester rejects with probability at most $\varepsilon$, and otherwise agrees on all but an $\eta$ fraction of the $j$-sets in $t$. It is therefore enough to prove
    \[
    \max_{\sigma \in \Sigma}\left\{\Pr\left[S^{(v)}_\sigma\right]\right\} \geq 1-\frac{\varepsilon_v}{1-\nu},
    \]
as taking expectation on both sides gives the result.
    
Toward this end, define $E=E^{(v)}$ to be the set of edges $(s_1,s_2) \sim D^{(v)}_{d,k}$ that cross the alphabet partition. Since the marginal is distributed as the down-up walk in $v$'s link, this graph is a $\nu$-spectral expander by assumption. Writing $S_\sigma=S^{(v)}_\sigma$ for simplicity of notation, we have
        
    \begin{align} \label{eq:agreement-claim-calculation}
        \varepsilon_v &= \prob{E} = \sum_{\sigma \in \Sigma} \Prob[(s_1,s_2)]{s_1 \in S_\sigma, s_2 \notin S_{\sigma}} \nonumber \\
        &= \sum_{\sigma \in \Sigma} \iprod{\one_{S_\sigma}, D^{(v)}_{d,k}(\one - \one_{S_\sigma})} \nonumber\\
        &= 1 - \sum_{\sigma \in \Sigma} \iprod{\one_{S_\sigma}, D^{(v)}_{d,k}\one_{S_\sigma}}
    \end{align}
    where \(\one_{S_\sigma}\) is the indicator of \(S_\sigma\) and \(\one\) is the all-ones function and the final equality is from by stochasticity of $D^{(v)}_{d,k}$. Let $f_\sigma = \one_{S_\sigma} - \prob{S_\sigma} \one$ be the projection of $\one_{S_\sigma}$ onto $\one^\perp$. By orthogonality it holds that \(\snorm{f_\sigma} = \snorm{\one_{S_\sigma}} - \prob{S_\sigma}^2 \snorm{\one} = \prob{S_\sigma} - \prob{S_\sigma}^2\). Therefore exploiting the expansion of \(D^{(v)}_{d,k}\) we have
    \begin{align*}
      \iprod{\one_{S_\sigma}, D^{(v)}_{d,k} \one_{S_\sigma}} &= \prob{S_\sigma}^2\iprod{\one,\one} + \iprod{f_\sigma,D^{(v)}_{d,k} f_\sigma} \\
      &\leq \prob{S_\sigma}^2 + \nu \snorm{f_\sigma}\\
      &= \nu \prob{S_\sigma} + (1-\nu) \prob{S_\sigma}^2.
    \end{align*}
    Inserting this back to \eqref{eq:agreement-claim-calculation} we have that
    \begin{align*}
        \varepsilon_v &\geq 1- \sum_{\sigma \in \Sigma} \left (\nu \prob{S_\sigma} + (1-\nu) \prob{S_\sigma}^2  \right )\\
        &= (1-\nu) \left(1-\sum_{\sigma \in \Sigma} \prob{S_\sigma}^2\right)\\
        &\geq (1-\nu)\left(1-\max_{\sigma \in \Sigma}\left\{\Pr[S_\sigma]\right\}\right).
    \end{align*}
    Re-arranging gives the desired result.
\end{proof}

\subsubsection{Testing all the intersection}
We note that one could also consider the standard V-test where $s$ and $s'$ are still drawn from the down-up walk, but we test agreement over \(t=s_1 \cap s_2\) (whereas in the definition of the test $s_1$ and $s_2$ may have some elements in their intersection not in $t$). The success in these tests are related by the following claim.
    \begin{claim} \label{claim:t-test-related-to-all-vertices-test}
        Let \(\eta > 0\). Let \(D_1\) be the test above. Let \(D_2\) be the test with a similar distribution to \(D_1\), with the distinction that \(t=s_1 \cap s_2\). Then for any \(\mathcal{F}\), \(Agree_{\eta,D_2}(\mathcal{F}) \geq 1-\varepsilon\) implies that \(Agree_{2\eta,D_1}(\mathcal{F}) \geq 1-\varepsilon - 2^{-\Omega_{\gamma}(\eta^2 k/j)}\).
    \end{claim}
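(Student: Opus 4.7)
The plan is to view $D_1$ as a ``sub-sampled'' version of $D_2$: both tests induce the same joint distribution on $(s_1,s_2)$ (the down-up walk at level $k$), and the only difference is that $D_1$ tests agreement on a random $k$-subset $t \subseteq s_1 \cap s_2$, whereas $D_2$ tests on the full intersection $E = s_1\cap s_2$. The intuition is that if $f_{s_1}$ and $f_{s_2}$ disagree on at most an $\eta$-fraction of $E(j)$, then by an inclusion-sampling argument they should also disagree on at most a $(1+o(1))\eta$-fraction of $t(j)$, except on a small event controlled by a Chernoff-type bound of the form $2^{-\Omega(\eta^2 k/j)}$.

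Concretely, write $\mathrm{dis}(s_1,s_2) = |D_{s_1,s_2}|/\binom{|E|}{j}$ and $\mathrm{dis}_t(s_1,s_2) = |D_{s_1,s_2}\cap t(j)|/\binom{k}{j}$ for the disagreement fractions. The event ``$D_1$ rejects and $D_2$ accepts'' entails $\mathrm{dis}_t > 2\eta$ together with $\mathrm{dis}\leq \eta$, hence $\mathrm{dis}_t > \mathrm{dis} + \eta$. A conditioning argument then gives
\[
    \Pr[D_1 \text{ rejects}] \leq \Pr[D_2 \text{ rejects}] + \Pr[\mathrm{dis}_t > \mathrm{dis} + \eta] \leq \varepsilon + \Pr[\mathrm{dis}_t > \mathrm{dis} + \eta],
\]
so the task reduces to showing $\Pr_{(s_1,t,s_2)}[\mathrm{dis}_t > \mathrm{dis} + \eta] \leq 2^{-\Omega_\gamma(\eta^2 k/j)}$.

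Conditioning on $(s_1, s_2)$ fixes $E$ and $D := D_{s_1,s_2}\subseteq E(j)$, and when $|E| = k$ one has $t = E$ so the bad event is impossible; the interesting case is $|E| > k$. If the conditional law of $t$ on $\binom{E}{k}$ were \emph{uniform}, the desired tail bound would be immediate from an inclusion-sampling bound on the complete complex (a special case of \pref{thm:probability-to-deviate-from-expectation-in-complete-swap-walk}, or a direct Chernoff–Hoeffding argument for random $k$-subsets of an $|E|$-vertex set): for any $D\subseteq \binom{E}{j}$, a uniform random $t\in\binom{E}{k}$ satisfies
\[
    \Pr\!\left[\tfrac{|D\cap t(j)|}{\binom{k}{j}} - \tfrac{|D|}{\binom{|E|}{j}} > \eta\right] \leq 2^{-\Omega(\eta^2 k/j)}.
\]

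The main obstacle will be that the conditional law of $t$ given $(s_1,s_2)$ is \emph{not} uniform on $\binom{E}{k}$: an explicit calculation using $\Pr[(s_1,t,s_2)] \propto \Pr[s_1]\Pr[s_2]/\Pr_X[t]$ shows its density is proportional to $1/\Pr_X[t]$. The hard step is to argue that this reweighting only costs at most a sub-exponential-in-$k$ factor, which can be absorbed into the $\Omega_\gamma$ exponent. For the nice complexes considered here I would bound the weight ratio on $k$-subsets of a fixed $d$-face either by near-regularity within top links or by applying the link-wise multiplicative sampling guarantees of \pref{cor:hdx-is-mult-sampler} inside an appropriately chosen sub-face, thereby reducing cleanly to the complete-complex sampling bound above.
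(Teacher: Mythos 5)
Your approach is essentially the same as the paper's: condition on a pair $(s_1,s_2)$ on which the $D_2$-test passes, note the disagreement set $A \subseteq \binom{s_1 \cap s_2}{j}$ has density at most $\eta$, and then invoke inclusion sampling in the complete complex on the vertices of $s_1 \cap s_2$ to argue that the fraction of $k$-subsets $t$ with $|A \cap t(j)| \geq 2\eta\binom{k}{j}$ is $2^{-\Omega_\gamma(\eta^2 k/j)}$; finally sum over $(s_1,s_2)$. The paper's proof (a single paragraph citing \pref{claim:complete-complex-multiplicative-sampler}) does exactly this and implicitly treats $t$ as a \emph{uniformly} random $k$-subset of $s_1 \cap s_2$.

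Where you go beyond the paper is in flagging that, if $t$ is the seed of the down-up walk $D_{d,k}$, then the conditional law of $t$ given $(s_1,s_2)$ is not uniform on $\binom{s_1 \cap s_2}{k}$: it has density proportional to $1/\Pr_k[t]$, by the calculation $\Pr[(t,s_1,s_2)] \propto \Pr_d[s_1]\Pr_d[s_2]/\Pr_k[t]$. This is a genuine subtlety that the paper does not address---it simply applies the complete-complex sampling bound as if $t$ were uniform. The two coincide whenever $\Pr_k$ is constant over the $k$-subsets of every intersection $s_1 \cap s_2$ (e.g.\ uniform or hyper-regular complexes), but not in general. Your identification of the issue is the right one; however, your proposed remedy is only sketched, and it is worth noting that bounding the weight ratio $\max_{t\subseteq E}\Pr_k[t]/\min_{t\subseteq E}\Pr_k[t]$ by a quantity sub-exponential in $k$ is a nontrivial requirement---the link-wise multiplicative sampling bounds you cite control densities of \emph{sets}, not point masses of individual faces, so turning them into a pointwise weight bound would need a separate argument. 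The safest reading is that the claim is meant under the convention that $t$ is drawn afresh uniformly from $\binom{s_1\cap s_2}{k}$ given $(s_1,s_2)$ (consistent with the in-line description of the test), in which case the reweighting issue disappears and both your argument and the paper's close without the extra step.
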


\begin{proof}[Proof of \pref{claim:t-test-related-to-all-vertices-test}]
        Let \(s_1,s_2\) be a pair supported by \(D_2\) such that \(\mathcal{F}\) passes the \(D_2\)-test on the pair. Let \(A\) be the set of \(j\)-faces contained in \(s_1 \cap s_2\) on which \(f_{s_1},f_{s_2}\) disagree on. By assumption \(|A| \leq \eta |s_1 \cap s_2|\). By the sampling properties of the complete complex \pref{claim:complete-complex-multiplicative-sampler} the fraction of \(t \subseteq s_1 \cap s_2\) such that \(|A \cap t| \geq 2\eta |t|\) is \(2^{-\Omega_{\gamma}(\eta^2 k/j)}\). Thus the probability of \emph{failing} the \(D_1\) test is at most the probability of failing the \(D_2\) test plus \(2^{-\Omega_{\gamma}(\eta^2 k/j)}\).
\end{proof}

\begin{remark}
    When $\eta \lesssim \varepsilon$, the `additive' error $\exp(-\Omega(d))$ can be removed from \pref{thm:agreement-99-percent} by interpolating the result with a standard averaging argument over \pref{claim:vertex-closeness-to-G}. This technique stops working as soon as $\eta$ is much larger than $\varepsilon$. While the additional $\exp(-\Omega(d))$ error is not harmful in most applications, it is nevertheless interesting to ask whether the term can be removed for general $\eta$.
\end{remark}

\subsubsection{Agreement Testing for the non-Lazy Down-Up Walk}
Using a similar technique, we can also prove an agreement test for the following distribution based on the swap walks. Define the test distribution \((s_1,s_2,t) \sim A_{d,k}\) by the following procedure:
\begin{enumerate}
    \item Sample \(t \in \X[k]\).
    \item Sample \(r_1,r_2 \in \X[d-k]\) in the swap-walk inside \(X_t\).
    \item Output \(s_i = t \dunion r_i\) and \(t\).
\end{enumerate}
The resulting edge distribution over $(s_1,s_2)$ is sometimes called the `partial' swap walk \cite{AlevJT2019}. Note this walk has the property that the intersection is fixed. The following claim shows that when the local swap walks expand, an agreement theorem for this walk follows as well. 
\begin{claim} \label{claim:agreement-for-non-lazy-swap}
        Let \(j,d\) be integers, let \(\gamma,\eta,c,\lambda, \nu < 1\). Let \(k = \gamma d\) and suppose \(j<k\), \(\eta \lfloor \frac{k}{j} \rfloor \geq 192 \log \frac{16}{\eta}\) and \(\nu < 1\). Let \(X\) be a \((2d-k)\)-\maximal \(c\)-locally nice complex such that for every \(v \in \X[j]\), \(\lambda(A_{d-j,k-j}(X_v)) \leq \nu\). Suppose additionally that for every \(t \in \X[k]\), \(S_{d-k,d-k}(X_t) \leq \lambda\). 
        Then the following holds for the set system \(SS(X,\Sigma,j,k,d)\); for every alphabet \(\Sigma\), family \(\mathcal{F}\), and \(\varepsilon > 0\)
        \begin{equation}
            Agree_{\eta}(\mathcal{F}) \geq 1- \varepsilon \quad \implies \quad \dist_{8\frac{\eta+\varepsilon}{1-\nu}}(\mathcal{F},g_{\text{maj}}) \leq \varepsilon^{O_{\gamma,C}(1)} + \lambda^{O_{\gamma,C}(1)}.
        \end{equation}
        where $g_{\text{maj}} \coloneqq \underset{s \in \X[d]: s \supset v}{\text{plurality}} \set{f_{s}(v)}$ is the plurality decoding.
\end{claim}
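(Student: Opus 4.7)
I will follow the blueprint of \pref{thm:agreement-99-percent} almost verbatim, with two modifications that account for switching the edge distribution from the down-up walk $D_{d,k}$ to the partial swap walk $A_{d,k}$. Take $G = g_{\text{maj}}$ and define the ``good'' set $A = \{s \in \X[d] : \Pr_{v \subset s}[f_s(v) \neq G(v)] \leq 4(\varepsilon+\eta)/(1-\nu)\}$ and the ``bad'' set $B = \{s \in \X[d] : \Pr_{v \subset s}[f_s(v) \neq G(v)] > 8(\varepsilon+\eta)/(1-\nu)\}$. The plan, as in the original, is to sandwich $\Pr_{A_{d,k}}[(s_1,s_2) \in A \times B]$ between a reverse-hypercontractive lower bound in $\Pr[B]$ and an agreement-based upper bound of $O(\varepsilon)$, and solve for $\Pr[B]$.

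The first modification is a direct analog of \pref{claim:vertex-closeness-to-G}, used to deduce $\Pr[A] \geq 1/2$. Conditioning the $A_{d,k}$ distribution on $v \subseteq t$ and deleting $v$ from every face exactly recovers the partial swap walk $A_{d-j,k-j}$ inside $X_v$, whose bipartite adjacency operator is $\nu$-expanding by hypothesis. The eigenspace-projection calculation in the proof of \pref{claim:vertex-closeness-to-G} therefore carries over verbatim to yield $\Pr_{v \subset s}[f_s(v) \neq G(v)] \leq (\varepsilon+\eta)/(1-\nu)$, and hence $\Pr[A] \geq 1/2$ by Markov.

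The second modification is that \pref{thm:indicator-reverse-hc} does not directly apply to $A_{d,k}$, so I plan to reduce to the down-up walk using swap-walk expansion. Conditioning on the shared face $t \in \X[k]$, the pair $(s_1 \setminus t, s_2 \setminus t)$ is independent under $D_{d,k}$, whereas it is an edge of $S_{d-k,d-k}(X_t)$ under $A_{d,k}$. Expander mixing on this swap walk (spectral norm $\lambda$) gives
\[
\langle 1_{A_t},\, S_{d-k,d-k}(X_t)\, 1_{B_t}\rangle \;\geq\; \Pr[A_t]\Pr[B_t] - \lambda\sqrt{\Pr[A_t]\Pr[B_t]},
\]
so integrating over $t$ yields $\Pr_{A_{d,k}}[(s_1,s_2) \in A \times B] \geq \Pr_{D_{d,k}}[(s_1,s_2) \in A \times B] - \lambda$. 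Applying \pref{thm:indicator-reverse-hc} to the $c$-locally nice complex $X$ then produces a lower bound $\Pr_{A_{d,k}}[(s_1,s_2) \in A \times B] \geq 2^{-q}\Pr[B]^q - \lambda$ for constants $c', q$ depending only on $c$ and $\gamma$, provided $\Pr[B] \geq \exp(-c'd)$. The upper bound $\Pr_{A_{d,k}}[(s_1,s_2) \in A \times B] \leq O(\varepsilon)$ follows from the same complete-complex sampling computation as in \pref{thm:agreement-99-percent}, once we check that under $A_{d,k}$ the marginal of $s_1$ is $\Pr_X$ and $t \mid s_1$ is uniform over $\binom{s_1}{k}$. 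Balancing the two bounds gives $\Pr[B] \leq O((\varepsilon + \lambda)^{1/q})$, as claimed.

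The main technical obstacle is verifying the conditional uniformity $\Pr[t \mid s_1] = 1/\binom{d}{k}$ under $A_{d,k}$; this is a short but fiddly marginalization using $\Pr_{X_t(d-k)}[r] = \Pr_X[t \cup r]/(\binom{d}{k}\Pr_{X(k)}[t])$, which telescopes to $\Pr[s_1, t] = \Pr_X[s_1]/\binom{d}{k}$. Once this marginal identity is in hand, the entire argument is essentially bookkeeping: the only nontrivial new ingredient beyond \pref{thm:agreement-99-percent} is the one-line expander-mixing reduction from $A_{d,k}$ to $D_{d,k}$, which forces the appearance of the additive $\lambda^{O(1)}$ term in the conclusion.
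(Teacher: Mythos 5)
Your proposal is correct and takes essentially the same route as the paper's proof, with a slightly cleaner packaging of the reverse-hypercontractivity step. Both proofs use the analogue of \pref{claim:vertex-closeness-to-G} with the local partial swap walk $A_{d-j,k-j}(X_v)$ in place of the local down-up walk to get $\Pr[A]\geq 1/2$, and both use expander mixing on $S_{d-k,d-k}(X_t)$ to compare the $A_{d,k}$ edge count to the $D_{d,k}$ edge count at the cost of an additive $\lambda$. The only structural difference is that the paper invokes the internal ``loveliness'' / $G$-set consequence of \pref{thm:indicator-reverse-hc} (a positive fraction of $t\in\X[k]$ see both $A$ and $B$ with probability $(\Pr[A]\Pr[B])^{O(1)}$) and applies expander mixing pointwise over those $t$ before multiplying, whereas you integrate the expander-mixing bound over all $t$ first and then apply \pref{thm:indicator-reverse-hc} as a black box to the down-up walk on the $d$-skeleton of $X$ (which is $c$-locally nice, hence a $\tau$-LUS). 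The two routes produce the same bound $\Pr[B] \leq O(\varepsilon^{1/q}+\lambda^{1/q})$, and your marginal check $\Pr[t\mid s_1]=1/\binom{d}{k}$ under $A_{d,k}$ is correct and needed for the complete-complex sampling step to go through verbatim. One small caveat you implicitly share with the paper: the $\exp(-\Omega(d))$ term arising when $\Pr[B]$ falls below the RHC threshold should formally appear in the conclusion (as in \pref{thm:agreement-99-percent}) unless it is absorbed into $\lambda^{O(1)}$ by a quantitative assumption on $\lambda$.
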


\begin{proof}[Proof of \pref{claim:agreement-for-non-lazy-swap}]
    As in the proof of \pref{thm:agreement-99-percent}, we define
    \[
    B \coloneqq \sett{s \in S}{\Prob[v \in s]{f_s(v) \neq G(v)} > \frac{8(\varepsilon + \eta)}{1-\nu}}.
    \]
    and
    \[
    A \coloneqq \sett{s \in S}{\Prob[v \in s]{f_s(v) \neq G(v)} \leq \frac{4(\varepsilon + \eta)}{1-\nu}}.
    \]
    As in the proof of \pref{thm:agreement-99-percent}, $\Pr[A] \geq \frac{1}{2}$ by an argument similar to \pref{claim:vertex-closeness-to-G} and Markov's inequality. More precisely, we need to repeat the proof of \pref{claim:vertex-closeness-to-G}, only instead of using the down-up walk conditioned on \(v\), we use \(A_{d,k}\) conditioned on \(v \in t\). This is just \(A_{d-j,k-j}\) inside \(X_v\).

    In addition, as in the proof of \pref{thm:agreement-99-percent}, it also holds that the relative fraction of edges crossing between \(A\) and \(B\) is small, i.e.\ \(\prob{E_{A,B}} \leq 2\varepsilon\).

    Our goal is to show that 
    \(\prob{B} \leq \max\{\varepsilon^{\Omega_{\gamma,c}(1)},\exp(-\Omega_{\gamma,c}(d)),\lambda^{\Omega_{\gamma,c}(1)} \}\).

    Assume otherwise, and that in particular reverse hypercontractivity for indicators applies for sets of size \(\prob{B}\). By \pref{thm:indicator-reverse-hc} 
    \[\Prob[t \in {\X[k]}]{\Prob[s \supseteq t]{A} \cdot \Prob[s \supseteq t]{B} \geq (\prob{A}\prob{B})^{O_{\gamma,c}(1)}} \geq \prob{A}\prob{B}^{O_{\gamma,c}(1)}.\]
    Let us denote by \(G\) the set of \(t\) satisfying the above condition.
    Suppose we sampled \(t \in G\) in the first step of our agreement distribution. Recall that now we sample and edge by going in the swap walk in the link of \(t\). The probability we sampled an edge between \(A\) and \(B\) is therefore
    \[
    \Prob[r_1,r_2 \sim S(X_t)]{t \cup r_1 \in A, t \cup r_2 \in B} \geq  (\prob{A}\prob{B})^{O_{\gamma,c}(1)} - \lambda = p \cdot \prob{B}^{O_{\gamma,c}(1)} - \lambda
    \]
    by the expansion of the swap walk and the expander mixing lemma (for some constant \(p\)).
    Thus if \(p \prob{B}^{O_{\gamma,c}(1)} \geq 2\lambda\) then we have that
    \[ 
    \Omega(\prob{B}^{O_{c,\gamma}(1)}) \leq \prob{G}\cProb{}{s_1 \in A,s_2 \in B}{G} \leq \prob{E_{A,B}} \leq 4\varepsilon
    \] 
    and the claim follows.
\end{proof}

\subsection{Low Soundness and the Z-Test}
In this section we will apply \pref{thm:agreement-99-percent} and reverse hypercontractivity to prove soundness of \cite{ImpagliazzoKW2012}'s Z-test in the $1\%$-regime under certain stronger assumptions on the complex. Before moving on, we briefly comment on the notation in this subsection. Unlike before, we will use the convention that \(s,t\) are \(\frac{k}{2}\)-sets (and not \(k\)-sets as in the previous subsections). We will typically denote \(k\)-sets by a capital letter instead. We first recall their test:
\\
\\
\fbox{\parbox{\textwidth}{\label{fig:z-test}
\vspace{.1cm}
\underline{The Z-test on $\X[k]$}:
\begin{enumerate}
    \item Draw $t \in \X[\frac{k}{2}]$, and $s,s' \in \SC[X][\frac{k}{2}][t]$ independently
    \item Draw $s'' \sim \SC[X][\frac{k}{2}][s']$
    \item Accept if: 
    \[
    f_{t \cup s} = f_{t \cup s'} \quad \text{and} \quad f_{s' \cup t} = f_{s' \cup s''}
    \]
\end{enumerate}}}
\\
\\

We prove soundness for complexes which are `$\lambda$-global', a slight strengthening of $\ell_\infty$-independence \cite{kaufman2021scalar}:
\begin{definition}[Global Complex]\label{def:global}
    A $k$-\maximal complex $X$ is called $\lambda$-global if \(\forall t \in \X[\frac{k}{2}]\): 
    \[
    \norm{S_t - \pi_{\frac{k}{2}}}_{TV} \leq \lambda
    \]
    where $S_t$ is the distribution over neighbors of $t$ in $S_{\frac{k}{2},\frac{k}{2}}$ and \(\pi_{\frac{k}{2}}\) is the induced distribution on \(\X[\frac{k}{2}]\).
\end{definition}
Globality of $X$ essentially promises that every $\frac{k}{2}$-set in $X$ `sees' most other $\frac{k}{2}$-sets in $X$, and is equivalent to bounding the matrix infinity norm of the swap walk from its stationary operator \(\frac{1}{2}\norm{S_{\frac{k}{2},\frac{k}{2}}-\Pi_{\frac{k}{2}}}_\infty \leq \lambda\). We note that the theorem below really only relies on a somewhat weaker average-case bound on the $1$-norm of the rows of $S_{\frac{k}{2},\frac{k}{2}}-\Pi_{\frac{k}{2}}$. However, we are not aware of any complexes satisfying such a notion that are not already global in the defined sense, so we stick with this more standard requirement below.

\begin{theorem}\label{thm:Z-test}
    $\forall \lambda,\eta >0$ and large enough $k$, let $X$ be a $k$-\maximalpunc, \(\lambda\)-global, \(c\)-locally nice complex. Then for any $\delta \in (8\lambda + e^{-\Omega_c(\eta k)},\frac{1}{8})$ if $ Agree_0^Z(\mathcal{F}) \geq \delta$:
    \[
        \exists g:\X[0] \to \Sigma, \; \; \Pr_{s\in \X[k]}[f_s \overset{\eta}{\approx} g(s)] \geq \delta/8.
    \]
\end{theorem}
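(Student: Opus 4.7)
The plan is to follow the high-level strategy of \cite{DinurL2017}, adapting it to the HDX setting by replacing products with the link/swap structure and using $\lambda$-globality wherever cubic tensorization was previously used. The starting observation is that the Z-test factorizes: the marginal on $(t\cup s,t\cup s')$ is precisely the V-test distribution around the pivot $t$, and the marginal on $(s'\cup t,s'\cup s'')$ is the V-test distribution around the pivot $s'$. Consequently $Agree_0^Z(\mathcal{F})\ge\delta$ forces both marginal V-tests to succeed with probability $\ge\delta$, so our \pref{thm:intro-local-agreement} (the Local Agreement Theorem) applies and produces an $\Omega(\delta)$-fraction of ``good'' pivots $t\in\X[k/2]$ for which there is a smoothing $\mathcal{F}^{(t)}$ of $\mathcal{F}$ whose restriction to $X_t$ has $(1-\delta^2)$-V-agreement in the link.

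Next, because $X$ is $c$-\emph{locally} nice, each link $X_t$ of a good $t$ is itself a nice $k/2$-\maximal complex. I would therefore run the $99\%$-regime tester \pref{thm:agreement-99-percent} \emph{inside} $X_t$ on the smoothing $\mathcal{F}^{(t)}$, with noise parameter $\eta$ chosen to match the theorem's hypothesis. This yields, for each good $t$, a local plurality decoder $g_t\colon \X[1][t]\to\Sigma$ such that, on all but a $\delta^{\Omega_c(1)}+e^{-\Omega_c(\eta k)}$-fraction of $k/2$-faces $s$ in $X_t$, one has $f_{t\cup s}\overset{O(\eta)}{\approx} g_t|_s$. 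Here is where the lower bound $\delta \gtrsim 8\lambda+e^{-\Omega_c(\eta k)}$ enters: the error of the $99\%$ test must be comfortably smaller than $\delta$ for the conclusion to be informative in the next step.

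The core obstacle, and the main place where $\lambda$-globality is used, is gluing the local decoders $\{g_t\}$ into a single global function $g\colon\X[1]\to\Sigma$. I would simply let $g$ be the overall plurality decoder $g(v)=\mathrm{plurality}_{s\ni v}\{f_s(v)\}$ and show it coincides with ``most'' $g_t$'s on ``most'' vertices. Consider pairs of good $t,t'$ linked by the third leg of the Z-test (so $t'$ plays the role of $s'$ in the test with pivot $t$, and then the $(s'\cup t,s'\cup s'')$ constraint is a V-test around $t'$). Success of the Z-test forces $g_t$ and $g_{t'}$ to agree on their common vertices for most such pairs; $\lambda$-globality turns this statement about correlated pairs into one about generic pairs, because $\|S_{k/2,k/2}-\Pi_{k/2}\|_{\infty}\le\lambda$ says that the Z-distribution of pairs $(t,t')$ is $\lambda$-close in total variation to the independent product on $\X[k/2]$. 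This lets me replace the (rare, correlated) Z-pairs with (typical, globally-distributed) pairs at an additive cost $O(\lambda)$, which is absorbed by the assumption $\delta>8\lambda$. The output is that $g$ equals $g_t$ on a $1-O(\eta)$ fraction of $\X[1][t]$ for an $\Omega(\delta)$-fraction of good $t$.

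Finally, I would convert this ``average'' local-to-global consistency into a pointwise statement about $k$-faces via reverse hypercontractivity (\pref{thm:intro-rhc-real}) together with the inclusion-sampling guarantee of \pref{thm:hdx-is-sampler}. The set $B$ of good $t$'s has measure $\Omega(\delta)$, and for each such $t$ the set $A_t$ of $k$-faces $s\supset t$ with $f_s\overset{\eta}{\approx} g|_s$ has measure $\ge 1-O(\eta+\delta^{\Omega(1)})$ inside $X_t$. Applying the up-sampling direction of \pref{thm:hdx-is-sampler} to these link densities shows the \emph{global} set $A=\bigcup_t A_t\subseteq\X[k]$ has measure at least $\Omega(\delta)$, with enough slack to yield the stated $\delta/8$ after tracking constants. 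The delicate point throughout is the interaction between the three error sources (the $e^{-\Omega_c(\eta k)}$ from $99\%$ testing and sampling, the $\lambda$ from globality, and the $\delta^{\Omega(1)}$ from the local agreement theorem); these must all be dominated by $\delta$, which is exactly the regime $\delta\in(8\lambda+e^{-\Omega_c(\eta k)},1/8)$ in the statement.
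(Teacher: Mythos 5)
Your proposal takes a genuinely different route from the paper's proof, and unfortunately it contains gaps that the paper's route is specifically engineered to avoid. The paper does \emph{not} glue local decoders $\{g_t\}$ into a global $g$. Instead, following \cite{ImpagliazzoKW2012,DinurL2017}, it fixes a \emph{single} DP-restriction $\tau$, extends its decoder $g_\tau$ arbitrarily to all of $\X[1]$, and then argues by contradiction: if $\mathcal{F}$ were not $(\eta,\delta/8)$-close to $g_\tau$, the third leg of the Z-test (the $(s',s'')$ query) would reject too often and the overall acceptance probability could not reach $\delta$. The four ``bad events'' $E_1,\ldots,E_4$ partition the ways the test could accept despite large disagreement, each bounded by $\delta/8$; summing gives $<\delta$, a contradiction. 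Here $\lambda$-globality is used purely to decouple $(s',s'')$ from $(t,s)$ via the proxy test, not for any gluing step.

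Your plan runs into two concrete problems. First, a conflation of the smoothed and original families: the $99\%$-regime tester applied inside $X_t$ operates on the \emph{smoothing} $\mathcal{F}^{(t)}=\widetilde{\mathcal{F}}_\tau$, and the resulting closeness to plurality is a statement about $\widetilde{\mathcal{F}}$, not $\mathcal{F}$. Lemma \pref{lem:1-key-lemma} only transfers this to $\mathcal{F}$ conditionally on $s\in\mathcal{V}_\tau$, and $\Prob[X_t]{\mathcal{V}_\tau}$ is only of order $\delta$. So your claim that $A_t$ has density $1-O(\eta+\delta^{\Omega(1)})$ inside $X_t$ is off: the unconditional density of $\{s: f_{t\cup s}\approx_\eta g_t|_s\}$ is only about $\delta/2$, and feeding this into your final sampling step gives $\Pr[A]=\Omega(\delta^2)$, not the required $\Omega(\delta)$. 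Second, the gluing via pairwise agreement of $g_t$ and $g_{t'}$ across the third leg cannot go through: the Z-test only passes with probability $\delta$, so pairwise consistency of decoders holds for at most a $\delta$-fraction of pivot pairs, and $\lambda$-globality is an \emph{additive} TV bound that cannot upgrade a $\delta$-fraction of agreement to near-complete agreement. The paper's insight is precisely to sidestep gluing and work with one $g_\tau$, letting the Z-test's triple structure and a contradiction argument do the rest.
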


A couple of remarks are in order. First, we mention that it is well known that no tester can do better than \(\exp(-\Omega(k))\) soundness \cite{DinurG2008,DinurL2017}, so our bound is essentially optimal for small $\lambda$. Second, we remark that many (dense) complexes of interest are indeed $\lambda$-global HDX, including basic examples such as the complete complex and random complexes, and more involved examples such as skeletons of various Ising models, independent sets, list-colorings, or more generally essentially any of the myriad complexes studied in the approximate sampling literature (see e.g.\ \cite{anari2021spectral,liu2021coupling,chen2021optimal,feng2022rapid,blanca2022mixing} among many others). \pref{thm:Z-test} shows the Z-test is sound on these families in a blackbox fashion, even up to $\exp(-\eta k)$ for the appropriate parameters. We elaborate on these examples further in the end of this section.

The proof of \pref{thm:Z-test} largely follows the strategy of \cite{ImpagliazzoKW2012,DinurL2017}. The main difference lies in replacing certain ad-hoc arguments for the complete complex with more general methods based on reverse hypercontractivity and globality.
\subsubsection{A Local Agreement Theorem}
The core of \pref{thm:Z-test} is a \textit{local} structure theorem for the V-test on high dimensional expanders. Stating this requires a few definitions. We remark that these are standard notions in the literature, and we follow the notation of \cite{DinurL2017}. We first define \emph{restrictions}. 
\begin{definition}[Restriction]
    A \emph{restriction} is a pair $\tau=(t,\sigma)$ such that $t \in \X[\frac{k}{2}]$ and $\sigma: t \to \Sigma$ is an assignment to $t$.
\end{definition}
Given a family of functions as above, there is a natural distribution over restrictions that comes from sampling a random function and a random restriction of it. More formally:
\begin{enumerate}
    \item Sample \(t \in \X[\frac{k}{2}]\) and \(s \in \SC[X][\frac{k}{2}][t]\).
    \item Output \(\tau=(t,f_{s \cup t}|_t)\).
\end{enumerate}

Whenever we mention a distribution over restrictions \(\tau\) we always mean \(\tau=(t,f_{s \cup t}|_t)\) sampled as above. Given such a restriction $\tau$, it will also be useful to have notation for the faces in $X$ that are \textit{consistent} with $\tau$.
\begin{definition}[Consistent Strings]
    Given a restriction $\tau=(t,\sigma)$, denote by $\mathcal{V}_\tau$ the set of faces $s \in \X[\frac{k}{2}][t]$ consistent with $\tau$:
    \[
    \mathcal{V}_\tau \coloneqq \left\{s \in \SC[X][\frac{k}{2}][t]: f_{s\cup t}|_t=\sigma\right\}
    \]
\end{definition}
A restriction is called \textit{good} if it has many consistent faces.
\begin{definition}[Good Restriction]
    A restriction $\tau=(t,\sigma)$ is called \(\delta\)-good if $\Prob[s \in X_t]{\mathcal{V}_\tau} \geq \frac{\delta}{2}$.
\end{definition}
When $\delta$ is clear from context, we just call such restrictions good. Finally, a restriction is called `DP' if its consistent strings agree with a global function (`DP' is for `direct product').
\begin{definition}[DP-Restriction]
    A restriction $\tau=(t,\sigma)$ is called \((\eta,\delta)\)-DP if there exists $g_\tau: \X[1][t] \to \Sigma$ such that:
    \[
    \Pr_{s \sim X_t}\left[f_{t\cup s} \overset{\eta/4}{\not\approx} g_\tau ~ \bigg| ~s \in \mathcal{V}_\tau\right] \leq \delta^2
    \]
\end{definition}
As before, when \(\eta,\delta\) are clear from context we just write `DP'.
We can now state the local agreement theorem for high dimensional expanders.
\begin{theorem}[The Local Agreement Theorem]\label{thm:local-agreement}
    For any $\eta>0$ and $k$ sufficiently large, let $X$ be a $k$-\maximal \(c\)-locally nice complex. Then for all $\delta > e^{-\Omega(\eta k)}$, if $Agree^V_0(\mathcal{F}) \geq \delta$:
    \begin{enumerate}
        \item Many restrictions are good:
        \[
        \Pr_{\tau}[\text{$\tau$ is good}] \geq \frac{\delta}{2}
        \]
        \item Almost all `good' restrictions are DP:
                \[
        \Pr_{\tau}[\text{$\tau$ is DP $|$ $\tau$ is good}] \geq 1-\delta^2
        \]
    \end{enumerate}
\end{theorem}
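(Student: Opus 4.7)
The plan is to prove the two parts separately. Part 1 is a clean Markov argument. Writing $p_\tau := \Pr_{s \in X_t(k/2)}[s \in \mathcal{V}_\tau]$ and noting that the restriction distribution samples $(t,\sigma)$ with weight $\Pr[t]\cdot p_{(t,\sigma)}$, the V-test agreement unfolds as
\[
Agree^V_0(\mathcal{F}) = \mathbb{E}_t\!\left[\sum_\sigma p_{(t,\sigma)}^2\right] = \mathbb{E}_\tau[p_\tau] \geq \delta.
\]
Since $p_\tau \in [0,1]$, Markov immediately gives $\Pr_\tau[p_\tau \geq \delta/2] \geq \delta/2$, which is exactly Part 1.

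For Part 2, fix a good restriction $\tau=(t,\sigma)$ and take $g_\tau(v) := \mathrm{plurality}_{s\in\mathcal{V}_\tau,\, v\in s}\{f_{t\cup s}(v)\}$ as the candidate global function on $X(1)_t$. The plan is to apply the $99\%$-regime agreement theorem (\pref{thm:agreement-99-percent}) inside the $c$-locally nice link $X_t$ to the restricted family $\mathcal{F}_\tau = \{f_{t\cup s}|_s : s \in \mathcal{V}_\tau\}$. The input required is high internal V-test agreement: defining
\[
\mathrm{Fail}_\tau := \Pr_{(u,s,s')}\!\left[\,f_{t\cup s}|_u \overset{\eta/8}{\not\approx} f_{t\cup s'}|_u \;\middle|\; s,s'\in\mathcal{V}_\tau\,\right]
\]
with $u \in X_t(k/4)$ and $s,s' \in X_t(k/2)$ extending $u$, we aim to show $\mathrm{Fail}_\tau \leq \delta^2$ for all but a $\delta^2$-fraction of good $\tau$. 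Granted this, \pref{thm:agreement-99-percent} applied inside $X_t$ (with $j=1$, intersection size $k/4$, and the spectral input from $c$-local niceness of $X_t$) yields $g_\tau$ satisfying $\Pr_{s\in\mathcal{V}_\tau}[f_{t\cup s}\overset{\eta/4}{\not\approx}g_\tau|_s] \leq \delta^{O(1)} + e^{-\Omega_c(k)}$, which is $\leq \delta^2$ by the assumption $\delta \geq e^{-\Omega(\eta k)}$ for an appropriate constant in the exponent.

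The crux is to establish $\mathbb{E}_{\tau\,\text{good}}[\mathrm{Fail}_\tau] \leq \delta^{O(1)}$; Markov then produces the desired pointwise bound. Rewriting the weighted failure $p_\tau^2 \cdot \mathrm{Fail}_\tau$ as a joint probability over $(t,u,s,s')$ that the outer V-test agrees on $t$ while the inner V-test disagrees on $u$, the outer-agreement probability is bounded by $\delta$ directly from the V-test hypothesis. Reverse hypercontractivity on the link (\pref{thm:intro-rhc-real}), applied to the indicators of $\mathcal{V}_\tau$ and of the ``alphabet-consistency on $u$'' event, supplies the quantitative correlation needed to upgrade this $\delta$ bound into a $\delta^{O(1)}$ bound after summing over $\tau$. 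The principal technical obstacle is balancing the RHC exponents against the hypothesis $\delta \geq e^{-\Omega(\eta k)}$ so both the reverse-hypercontractive lower bound on small sets and the error term $e^{-\Omega_c(k)}$ from \pref{thm:agreement-99-percent} are absorbed into $\delta^2$. A milder subtlety is that $\mathcal{F}_\tau$ is a priori defined only on $\mathcal{V}_\tau \subseteq X_t(k/2)$; formally invoking \pref{thm:agreement-99-percent} requires extending the family arbitrarily to the remainder of $X_t(k/2)$ and observing that those extensions appear only on an $1-p_\tau$ fraction, contributing at most $\delta^{O(1)}$ to the overall distance from $g_\tau$.
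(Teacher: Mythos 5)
Part 1 is correct and is the same Markov-style argument the paper uses (Claim~7.11); the identity $\mathbb{E}_\tau[p_\tau] = Agree_0^V(\mathcal{F})$ plus the boundedness of $p_\tau$ give the claim immediately.

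Part 2 has two substantial gaps that you underestimate.

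\textbf{Extending $\mathcal{F}_\tau$ is not a mild subtlety.} The family $\{f_{t\cup s}|_s\}$ lives only on $\mathcal{V}_\tau$, which for a good $\tau$ is only guaranteed to have measure $p_\tau \geq \delta/2$ inside $X_t(k/2)$, and $\delta$ can be as small as $e^{-\Omega(\eta k)}$. If you extend the family ``arbitrarily'' to the rest of $X_t(k/2)$, then a random V-test edge $(s,s')$ in the link lands in $\mathcal{V}_\tau \times \mathcal{V}_\tau$ with probability only about $p_\tau^2$, so the extended family passes the internal V-test with probability roughly $p_\tau^2 + \text{(whatever the arbitrary extension does)}$. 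Since $p_\tau^2$ may be as small as $\delta^2/4$, the agreement of the extended family can be arbitrarily far from $1$, and \pref{thm:agreement-99-percent} cannot be invoked. This is exactly why the paper constructs the smoothed family $\widetilde{\mathcal{F}}_\tau$ (and proves \pref{lem:1-key-lemma} and \pref{lem:2-key-lemma}): the smoothing provides a \emph{structured} extension to the whole link that simultaneously agrees with $\mathcal{F}$ on $\mathcal{V}_\tau$ and has exponentially high V-test agreement on all of $X_t$. There is no way to replace this by an arbitrary extension.

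\textbf{The quantitative chain does not close with a $\delta^2$ target.} You aim for $\mathrm{Fail}_\tau \leq \delta^2$ and then feed this into \pref{thm:agreement-99-percent} as the rejection rate $\varepsilon$. But the theorem returns distance at most $\varepsilon^{1/q} + e^{-\Omega_c(k)}$ for some constant $q\geq 1$ coming from the reverse hypercontractivity exponent; with $\varepsilon = \delta^2$ this is $\delta^{2/q}$, which is \emph{larger} than $\delta^2$ whenever $q>1$. The hypothesis $\delta \geq e^{-\Omega(\eta k)}$ only lets you absorb the additive $e^{-\Omega_c(k)}$ tail, not the multiplicative loss from raising $\delta$ to a power $<2$. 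Moreover, after conditioning on $s\in\mathcal{V}_\tau$ (which is what the DP condition demands), you pay another factor of $1/p_\tau = O(1/\delta)$, widening the gap. For the paper, these are precisely the reasons the ``excellence'' event is defined with a threshold $\mu = e^{-c_1\eta k}$, not $\delta^2$; the exponentially small internal failure rate, passed through the 99\%-theorem and divided by $\delta$, is what lands below $\delta^2$. You would need to show $\mathrm{Fail}_\tau \leq e^{-\Omega(\eta k)}$ for most good $\tau$, and your RHC argument as stated is aimed at a $\delta^{O(1)}$ bound, which is too weak.

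The overall strategy of reducing to the 99\%-regime theorem in the link is the same as the paper's, but without a construction of a high-agreement extension of $\mathcal{F}_\tau$ to all of $X_t(k/2)$ (the role the smoothing plays) and without tightening the target to an exponential bound, the argument does not close.
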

We remark that while \pref{thm:local-agreement} does not give a true agreement tester in its own right, it is a powerful tool independent of \pref{thm:Z-test}. It roughly states that good agreement implies a non-negligible fraction of good restrictions, \emph{and} that if \(f_s\) agrees with a restriction of a face \(t\) it is almost certainly a restriction of a global assignment on the link \(X_t\). This tool is a critical part of many of the previous works on agreement testing \cite{DinurG2008,DinurS2014,ImpagliazzoKW2012,DinurL2017}.

We prove \pref{thm:local-agreement} in the next section. We first show it implies the main theorem under the additional assumption of $\lambda$-globality. The argument is a simple adaptation of \cite{ImpagliazzoKW2012}.
\begin{proof}[Proof of \pref{thm:Z-test}]
    As in \cite{ImpagliazzoKW2012}, we replace the standard Z-test in \pref{fig:z-test} with a related proxy test where we first sample \((t,s)\) as before, but then draw \((s',s'')\) \textit{independent of} \((t,s)\). If \(s' \in X_t\) we continue as in the Z-test, and if \(s' \notin X_t\) we accept. The probability of passing the new test is only greater or equal to the probability of passing the original test, but by \(\lambda\)-globalness, the swap walk starting at \(t\) is \(\lambda\)-close to the stationary distribution hence the probability that $s'$ misses the link of $t$ is at most $\lambda$.

    With this in mind, let $\tau$ denote the restriction $(t,f_{s \cup t}|_t)$ and observe that if $\tau$ is not good, the new Z-test accepts with probability at most $\delta/2+\lambda \leq \delta$. Thus it must be the case that conditioned on $\tau$ being good, the test still accepts with probability at least $\delta$. Condition on this event, and observe that by \pref{thm:local-agreement}, $\tau$ is also a DP-restriction except with probability at most $\delta^2 \leq \delta/8$ (recall that \(\delta < \frac{1}{8}\)).

    Fix such a DP restriction \(\tau\), and denote by $B$ the set of faces in $\mathcal{V}_\tau$ on which $\mathcal{F}$ is $\eta/4$-close to the DP function $g_\tau$ promised by \pref{thm:local-agreement} (if there is more than one, take any such function).
    Assume toward contradiction that $\mathcal{F}$ is not $(\eta,\delta/8)$-close to $g_\tau$. Let \(S = s' \cup s''\) and let $H$ denote the event that \(f_S \overset{\eta}{\approx} g_\tau\).\footnote{Formally here we extend $g_\tau$ to a function on $\X[1]$ by arbitrarily assigning values to any vertex in $\X[1] \setminus \X[1][t]$.} By assumption \(\prob{H} \leq \frac{\delta}{8}\), because \(S\) is drawn as in the distribution of \(\X[k]\). We define four events, outside which the test rejects:
    \begin{enumerate}
        \item $E_1$: The event that $s' \notin X_t$,
        \item $E_2$: The event that $s' \in \mathcal{V}_\tau \setminus B$,
        \item $E_3$: The event that $S \in H$,
        \item $E_4$: The event that $S \notin H$, and $f_{S}|_{s'} \overset{\eta/2}{\approx} g_\tau|_{s'}$.
    \end{enumerate}
    We first argue the test rejects if none of the \(E_i\) occurs. In particular, note by $\overline{E_1} \cap \overline{E_2}$, $s'$ must either be in $B$, or in $X_t \setminus \mathcal{V}_\tau$. In the latter case, the first query of the proxy Z-test rejects since $s' \in X_t$ but \(f_{t \cup s'} \ne f_{t \cup s}\). On the other hand, in the former case $s' \in B$ and therefore $f_{t \cup s'}|_{s'} \overset{\eta/4}{\approx} g_\tau|_{s'}$. However, by $\overline{E_3} \cap \overline{E_4}$, we have $S \notin H$ and, moreover, that the restriction to \(s'\) has disagreement at least $f_{S}|_{s'} \overset{\frac{\eta}{2}}{\napprox} g_\tau|_{s'}$. Thus there must be a vertex on which $f_{t \cup s'}$ and $f_{S}$ disagree and the second query rejects.

    Finally by a union bound it is enough to prove $\sum \Pr[E_i] \leq \delta/2$. The first event occurs with probability at most $\delta/8$ by $\lambda$-globality where \(\lambda < \frac{\delta}{8}\). The second event occurs with probability at most $\delta^2 \leq \delta/8$ by \pref{thm:local-agreement}. The third event occurs with probability at most $\delta/8$ by assumption. The final event occurs with probability at most $e^{-\Omega(\eta k)} \leq \delta/8$ by Chernoff, since $f_S$ $\eta$-disagrees with $g_\tau$ and $s'$ is a random subset, which completes the proof.    
\end{proof}
\subsubsection{Proof of the Local Agreement Theorem}
Let us begin by proving the first property in \pref{thm:local-agreement}.
\begin{claim} \label{claim:many-good-restrictions}
    Under the assumptions of \pref{thm:local-agreement}, \(\Prob[\tau]{\tau \text{ is good}} \geq \frac{\delta}{2}\).
\end{claim}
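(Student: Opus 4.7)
The plan is to translate both sides of the inequality into the same natural quantity on links, namely, for each $t \in \X[k/2]$ and each potential partial assignment $\sigma: t \to \Sigma$, the ``weight''
\[
p_\sigma(t) \;\coloneqq\; \Pr_{s \in \SC[X][k/2][t]}\bigl[\,f_{t \cup s}|_t = \sigma\,\bigr].
\]
With this notation the agreement of the V-test conditional on the intersection $t$ is exactly $\sum_\sigma p_\sigma(t)^2$ (since $s,s'$ are drawn i.i.d.\ from the link), and the probability that a freshly drawn restriction $\tau=(t,\sigma)$ is good equals $\mathbb{E}_t\bigl[\sum_\sigma p_\sigma(t)\,\mathbf{1}[p_\sigma(t)\geq \delta/2]\bigr]$, because $\tau$ is sampled by first choosing $t$ and then weighting each $\sigma$ by $p_\sigma(t)$.

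Given these two identifications, the claim reduces to a straightforward one-line split of the second moment. For every fixed $t$ one can decompose
\[
\sum_\sigma p_\sigma(t)^2 \;=\; \sum_{\sigma:\, p_\sigma(t) < \delta/2} p_\sigma(t)^2 \;+\; \sum_{\sigma:\, p_\sigma(t) \geq \delta/2} p_\sigma(t)^2.
\]
The first sum is bounded above by $(\delta/2)\sum_\sigma p_\sigma(t) = \delta/2$ using $p_\sigma(t)<\delta/2$ on that range, and the second sum is bounded above by $\sum_{\sigma:\, p_\sigma(t) \geq \delta/2} p_\sigma(t)$ using $p_\sigma(t)\leq 1$.

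Taking expectation in $t$ and invoking the hypothesis $Agree_0^V(\mathcal{F}) \geq \delta$ gives
\[
\delta \;\leq\; \mathbb{E}_t\Bigl[\sum_\sigma p_\sigma(t)^2\Bigr] \;\leq\; \tfrac{\delta}{2} + \Pr_\tau[\tau \text{ is good}],
\]
so rearranging yields $\Pr_\tau[\tau \text{ is good}] \geq \delta/2$. There is no real obstacle here; the only subtlety is correctly identifying the sampling distribution on restrictions as the ``first moment'' $\sum_\sigma p_\sigma(t)$ and the V-test acceptance probability as the ``second moment'' $\sum_\sigma p_\sigma(t)^2$, after which the claim is just the elementary observation that most of the $\ell_2$ mass of a probability distribution with $\ell_2^2 \geq \delta$ sits on atoms of weight $\geq \delta/2$. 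Note that this argument uses nothing about niceness or about the lower bound on $\delta$; those hypotheses will only be needed for the second (and substantially harder) part of \pref{thm:local-agreement}.
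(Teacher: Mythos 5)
Your proposal is correct and is essentially the same argument as the paper's, just expressed in slightly different notation: the paper conditions $Agree^V_0(\mathcal{F})$ on whether $\tau=(t,f_{t\cup s}|_t)$ is good and observes that the conditional acceptance probability is $<\delta/2$ on the not-good event, which is exactly your split of the second moment $\sum_\sigma p_\sigma(t)^2$ into the $p_\sigma(t)<\delta/2$ and $p_\sigma(t)\geq\delta/2$ ranges. Both routes bound $\delta \leq \delta/2 + \Pr_\tau[\tau \text{ is good}]$, and your framing via the distribution $p_\sigma(\cdot)$ is a clean way to present the same conditioning calculation.
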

\begin{proof}
    Let \(p = \Prob[\tau]{\tau \text{ is good}}\).
    Since \(Agree^V_0(\mathcal{F}) \geq \delta\) and we have
    \begin{align*}
    \delta &\leq Agree^V_0(\mathcal{F})\\
    &= p \cProb{s,t,s'}{f_{t \cup s} = f_{t \cup s'}}{(t,f|_{s \cup t}) \text{ is good}} + (1-p) \cProb{s,t,s'}{f_{t \cup s} = f_{t \cup s'}}{(t,f|_{s \cup t}) \text{ is not good}}.
    \end{align*}
    By definition \(\cProb{s,t,s'}{f_{t \cup s} = f_{t \cup s'}}{(t,f|_{s \cup t}) \text{ is not good}} \leq \frac{\delta}{2}\) so $\delta \leq p + \frac{\delta}{2}$ which implies the claim.
\end{proof}
The rest of this section is dedicated to proving the second property in \pref{thm:local-agreement}, which is more complicated. The proof has two main steps. First, we will need a variant of the notion of an \textit{excellent restriction}, that appeared in \cite{DinurG2008,ImpagliazzoKW2012,DinurL2017}. Roughly, an excellent restriction is a restriction \((t,\sigma)\) so that if \(s_1,s_2 \in \mathcal{V}_t\), the probability that \(f_{t \cup s_1}|_{s_1 \cap s_2} \approx f_{t \cup s_2}|_{s_1 \cap s_2}\) is \(1-\poly(\delta)\). The probability in which we choose \(s_1, s_2 \in X_t\) is stated below. We show most good restrictions are in fact excellent.

Second, we show that excellent restrictions are \(DP\). Here we use a variant of the ``smoothing'' technique introduced by \cite{DinurL2017} in order to get an ensemble of local functions \(\sett{\tilde{f}_{s}:s \to \Sigma}{s \in \SC[X][\frac{k}{2}][t]}\) that agree with high probability \(1-\poly(\delta)\) and use our $99\%$-tester (\pref{thm:agreement-99-percent}) to show the restriction is \(DP\).

Formally we require some additional notation to properly define these notions. First, denote the $\frac{k}{4}$-step down-up walk in the link of $t$ as $N_t$ and denote by \((s,t',s') \sim N_t\) the distribution where \(s,s'\) are chosen according to \(N_t\) and \(t'\) is the intermediate set chosen in the first down step (we note that \(t' \subseteq s \cap s'\) but equality doesn't necessarily hold). Second, let \(N_t^3\) denote three steps of \(N_t\). Let \((s_1,t_1,s_2,t_2,s_3,t_3,s_4) \sim N_t^3\) be the full walk sampled. Finally, let \((s_1,t_2,t',s_4) \sim N_t^3\) be such that \(s_1,t_2,s_4\) are the marginals as before, and \(t'=t_1\cap t_2\cap t_3\) the vertices that are in all sets in the sampled walk.

Fix some sufficiently small constant $c_1>0$. We can now define excellence.

\begin{definition}[Excellent Restriction]
    A good restriction $\tau=(t,\sigma)$ is called excellent if:
    \begin{enumerate}
        \item \(
                \Pr_{(s,t',s') \sim N_t}\left[(s,s' \in \mathcal{V}_\tau) \land \left(f_s|_{t'} \overset{\eta/1500}{\not\approx} f_{s'}|_{t'}\right)\right] \leq e^{-c_1\eta k}
            \) and,
        \item \(
                \Pr_{(s_1,t_2,t',s_4) \sim N_t^3}\left[(s,s' \in \mathcal{V}_\tau) \land \left(f_{s_1}|_{t'} \overset{\eta/1500}{\not\approx} f_{s_4}|_{t'}\right)\right] \leq e^{-c_1\eta k}.
            \)
    \end{enumerate}
    where we've abused notation and written $f_s$ as shorthand for $f_{s \cup t}$.
\end{definition}
We note that this definition is slightly cumbersome due to the fact that we don't measure the agreement of \(f_s,f_{s'}\) (respectively \(f_{s_1},f_{s_4}\)) on the intersection of \(s \cap s'\) (respectively \(s_1 \cap s_4\)), but instead on a fixed subset. This is for technical reasons and we encourage the readers to think of the case of \(N_t\) where \(t' = s \cap s'\) (and analogously for \(N_t^3\)).

For notational convenience, let $\mu \coloneqq e^{-c_1\eta k}$ and $\eta'=\eta/1500$. The following lemma (or variants thereof) is fairly standard \cite{impagliazzo2008uniform,ImpagliazzoKW2012,DinurL2017}:
\begin{lemma}[Good restrictions are excellent]\label{lem:excellence}
    There exists $c_2>0$ such that
        \[\Prob[\tau=(t,f_{s \cup t}|_t)]{\tau \text{ is good but not excellent}} \leq e^{-c_2\eta k}.\]
\end{lemma}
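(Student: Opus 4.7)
The plan is to reduce both excellence conditions to a Markov-type bound on a Chernoff estimate over a uniform partition of a $3k/4$-face. The two conditions share the same structure, so I focus on condition (1) and indicate the modifications for (2).

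First, for $\tau = (t, f_{t \cup s_0}|_t)$, let $p_\tau$ be the probability in the first excellence condition. Markov's inequality gives $\Pr_\tau[\tau \text{ good and } p_\tau > \mu] \leq \mu^{-1}\,\mathbb{E}_\tau[p_\tau]$. Since $s_0$ is independent of the $N_t$-walk, dropping the $s_0$-agreement requirement upper bounds
\[
\mathbb{E}_\tau[p_\tau] \leq \Pr_{t,\,(s,t',s') \sim N_t}\bigl[f_{t\cup s}|_t = f_{t\cup s'}|_t \;\land\; f_{t\cup s}|_{t'} \overset{\eta/1500}{\not\approx} f_{t\cup s'}|_{t'}\bigr].
\]

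Next, I reparameterize the joint distribution of $(t,t',s,s')$ as follows: sample $T := t \cup t' \in \X[3k/4]$, then $r, r' \in \SC[X][k/4][T]$ independently (setting $s = t' \cup r$ and $s' = t' \cup r'$), together with a \emph{uniform} partition of $T$ into $(t,t')$ of sizes $(k/2, k/4)$. The uniform-partition marginal is the standard fact that for a pure complex, the conditional measure on subfaces of a fixed face is uniform: a direct computation via the link-measure definition in \pref{sec:preliminaries} gives $\Pr[(t,t') \mid T] = \binom{3k/4}{k/2}^{-1}$, regardless of the top-level measure on $\X[k]$. Crucially, the $k$-faces $t \cup s = T \cup r$ and $t \cup s' = T \cup r'$ depend only on $(T, r, r')$ and not on the partition; hence the disagreement set $D := \{v \in T : f_{T \cup r}(v) \ne f_{T \cup r'}(v)\}$ is determined by $(T, r, r')$.

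The bad event now requires $D \cap t = \emptyset$ (equivalently $D \subseteq t'$) and $|D| \geq \eta k / 6000$. If $|D|$ is below this threshold, the event is impossible; otherwise uniformity of the partition gives
\[
\Pr[D \subseteq t'] = \frac{\binom{3k/4 - |D|}{k/2}}{\binom{3k/4}{k/2}} \leq \exp(-\Omega(|D|)) \leq \exp(-\Omega(\eta k)).
\]
Integrating over $(T, r, r')$ yields $\mathbb{E}_\tau[p_\tau] \leq \exp(-\Omega(\eta k))$; choosing $c_1$ smaller than the implicit constant and applying Markov gives condition (1) with some $c_2 > 0$. Condition (2) follows by the same recipe applied to the endpoint marginal $(t, \tau_1 := t \cup s_1, \tau_4 := t \cup s_4)$ of $N^3_t$: conditioning on $(\tau_1, \tau_4)$ makes $t$ uniform among $k/2$-subfaces of $\tau_1 \cap \tau_4$ (by the same subface-uniformity property), and the disagreement set of $f_{\tau_1}, f_{\tau_4}$ on $\tau_1 \cap \tau_4$ is determined by $(\tau_1, \tau_4)$; the requirement $|D \cap t'| \geq \eta k/6000$ forces $|D| \geq \eta k/6000$, and the Chernoff bound on the random placement of $t$ inside $\tau_1 \cap \tau_4$ gives the same estimate.

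The main obstacle I anticipate is verifying the subface-uniformity marginal for the $N^3_t$ case, namely that $(t, \tau_1, \tau_4)$ has uniform $t$ conditional on $(\tau_1, \tau_4)$, even though $(\tau_1, \tau_4)$ arises from a three-step walk with many intermediate faces. I expect this to follow from another two-sampling-order computation by summing over the intermediate $t_i, s_j$, but the bookkeeping is slightly more involved than in the one-step case; the Chernoff estimate itself is immediate once the uniform marginal is established.
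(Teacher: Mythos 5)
Your treatment of the first excellence condition ($N_t$) is correct and is in fact a cleaner, fully explicit version of the paper's intended argument; the paper proves only the $N_t^3$ case and says the $N_t$ case is ``analogous,'' and your reparametrization via $(T, r, r')$ with the disagreement set $D$ determined before the uniform partition is exactly the right way to fill in that sketch. The Markov step is also fine: bounding $\mathbb{E}_\tau[p_\tau]$ directly and applying Markov gives $\Pr[\text{good but not excellent}] \le \Pr[p_\tau > \mu]$ without needing to divide by $\Pr[\text{good}]$.

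The gap is in your condition (2) argument, and it is the one you flag yourself. The claim that $t$ is conditionally uniform among $\frac{k}{2}$-subfaces of $\tau_1 \cap \tau_4$ is \emph{false} for general weighted complexes, and the ``same subface-uniformity property'' does not give it. Subface-uniformity says the partition of a fixed face $T$ into subfaces is uniform; it does \emph{not} say that $t$ is uniform after conditioning on two $k$-faces $\tau_1, \tau_4$ that both contain $t$ but are produced by a weighted walk. Already in the one-step $N_t$ case a direct computation (marginalizing $t'$) gives
\[
\Pr[t,\tau_1,\tau_4] \;\propto\; \Pr_k[\tau_1]\,\Pr_k[\tau_4]\sum_{t' \subset (\tau_1 \cap \tau_4)\setminus t} \frac{1}{\Pr_{3k/4}[t \cup t']},
\]
and the sum depends on which $\frac{k}{2}$-subface $t \subset \tau_1 \cap \tau_4$ you chose unless all the $\frac{3k}{4}$-face weights inside $\tau_1 \cap \tau_4$ are equal (as in the complete complex). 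For $N_t^3$ the dependence is worse because the intermediate weights $\pi^t(s_2), \pi^t(s_3), \pi^t(t_1), \pi^t(t_2), \pi^t(t_3)$ enter as well. The paper's argument for the $N_t^3$ case conditions on the $\frac{3k}{4}$-face $T := t \cup t_2$ (the middle intersection), where the uniform-partition marginal \emph{is} the standard subface-uniformity fact; the remaining walk data is partition-dependent, and extra care is required there (one must control $|t_1 \cap t_2 \cap t_3|$ first, and then argue the disagreement set cannot be adaptively placed inside $t_2$ by the partition-dependent steps). Your sketch skips both of these. A second, smaller issue: in $N_t^3$ the face $t' = t_1 \cap t_2 \cap t_3$ has random size $\le k/4$, so the disagreement hypothesis $|D \cap t'| \ge \eta' |t'|$ does not by itself force $|D| \ge \eta k / 6000$; you need the high-probability event $|t'| = \Omega(k)$ first, as the paper does via a Chernoff bound, before the translation to a lower bound on $|D|$ is valid.
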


\begin{proof}
    We only prove that $N_t^3$ case, which is somewhat less standard. The $N_t$ proof is analogous. It is sufficient to show that over a random restriction \(\tau=(t,f_{s_0 \dunion t}|_{t})\):
    \begin{equation} \label{eq:average-excellency}
        \Pr_{\tau=(t,f_{s_0 \cup t}|_{t})}\left[\Pr_{(s_1,t_2,t',s_4) \sim N_t^3}\left[s_1,s_4 \in \mathcal{V}_t \land f_{s_1}|_{t'} \overset{\eta'}{\not\approx} f_{s_4}|_{t'}\right] >  e^{-c\eta k} \right] \leq e^{-\Omega(\eta k)}
    \end{equation}
    where \(s_1,t_2,t',s_4\) is chosen independently of \(s_0\). In particular if this holds then, by claim \pref{claim:many-good-restrictions}, conditioning on being good can increase the probability of this event to hold to at most $\frac{2e^{-\Omega(\eta k)}}{\delta} \leq e^{-c_2\eta k}$ for the right choice of constants.

    The key to \pref{eq:average-excellency} is to observe that this process can be sampled by the following equivalent method: sample $T \in \X[\frac{k}{2}+\frac{k}{4}]$, and randomly partition it into $T = t \dunion t_2$ so that \(|t| = \frac{k}{2}\) and \(|t_2|=\frac{k}{4}\). The faces $s_1$, $t_2$, $t_3$, and $s_4$ are now drawn independently from the "up-down-up" walk \(U_{\frac{k}{4},\frac{k}{2}} D_{\frac{k}{2},\frac{k}{4}} U_{\frac{k}{4},\frac{k}{2}}\) within $X_t$ starting from $t_2$.

    Observe that $|t'| \geq \frac{k}{32}$ with probability $e^{-\Omega(k)}$ by a standard Chernoff bound.
    Conditioning on this event, observe that if $s_1,s_4 \in \mathcal{V}_\tau$ and $f_{s_1}|_{t'} \overset{\eta'}{\not\approx} f_{s_4}|_{t'}$, then it must be the case that
    \begin{enumerate}
        \item $t_2$ has at least $\frac{\eta'k}{32}$ elements on which $f_{s_1} \neq f_{s_4}$
        \item $t$ has no elements on which $f_{s_1}|_{t} \neq f_{s_4}|_{t}$.
    \end{enumerate}
    However, $t$ and $t_2$ are a uniformly random partition of $T$, so such a split occurs with probability at most $e^{-\Omega(\eta k)}$ by Chernoff as desired.
\end{proof}

To proving excellent restrictions are DP, we'll use a variant of Dinur and Livni-Navon's `smoothing' operation that spreads the consistent strings $\mathcal{V}_\tau$ over the entire link. Instead of using the noise operator as in their work, we use the down-up walk which avoids a number of technical complications.
\begin{definition}[Smoothed Assignment]
    For every $\tau = (t,f_{t \cup s_0}|_t)$, define the smoothed assignment $\widetilde{\mathcal{F}}_\tau = \{\tilde{f}_s\}_{s \in X_t}$ as
    \[
    \tilde{f}_s(v) = \underset{(t',s') \sim N_t(s): s' \in \mathcal{V}_\tau ,t' \ni v}{\text{Plurality}}\{f_{s'}(v)\}.
    \]
    We break ties arbitrarily. If the plurality is not well defined (i.e.\ there are no such \(s' \in \mathcal{V}_\tau\) that contain \(v\)) we define it to be \(\tilde{f}_s(v)=\bot\).
\end{definition}
The main idea is to show that on excellent restrictions, $\widetilde{\mathcal{F}}_\tau$ is 1) highly consistent with the original family $\mathcal{F}$ on $\mathcal{V}_\tau$, and 2) has very high agreement on the \textit{entire} link. This reduces the problem to an (approximate) 99\%-regime test within the link of $t$, which we can solve via our tester from the previous section. More formally, the following lemmata suffice to prove \pref{thm:local-agreement}.
\begin{lemma}\label{lem:1-key-lemma}
Every excellent $\tau=(t,\sigma)$ satisfies:
\[
    \Pr_{s \sim X_t}\left[f_s \overset{10\eta'}{\not\approx} \tilde{f}_s ~\Bigg|~s \in \mathcal{V}_\tau\right] \leq \frac{\delta^2}{2},
\]
that is, $\mathcal{F}$ and $\widetilde{\mathcal{F}}$ are close on $\mathcal{V}_\tau$ (note \(f_s \overset{10\eta'}{\not\approx} \tilde{f}_s\) is over just the vertices of $s$, not $s \cup t$)
\end{lemma}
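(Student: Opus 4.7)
The plan is to exploit the plurality definition of $\tilde f_s$ together with the excellence of $\tau$, via a self-improving argument that converts the mild average-case bound directly available from excellence into the desired tail estimate $\delta^2/2$. Let $\mathrm{Bad}(s) = \{v \in s : \tilde f_s(v) \neq f_s(v)\}$ (treating $\tilde f_s(v) = \bot$ as bad), $B^* = \{s \in \mathcal{V}_\tau : |\mathrm{Bad}(s)|/|s| > 10\eta'\}$ with $\eta' = \eta/1500$, and write $\gamma(s) = \Pr_{(t',s') \sim N_t(s)}[s' \in \mathcal{V}_\tau]$ and $\beta(s,v) = \Pr_{(t',s') \sim N_t(s)}[s' \in \mathcal{V}_\tau \mid v \in t']$. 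The goal is $\Pr[B^*]/\Pr[\mathcal{V}_\tau] \leq \delta^2/2$. The key structural observation is a pigeonhole: if $v \in \mathrm{Bad}(s)$ with $\beta(s,v) > 0$, then at least half of the valid neighbors $s' \in \mathcal{V}_\tau$ reachable from $s$ through some $t' \ni v$ must disagree with $f_s(v)$, since otherwise $f_s(v)$ would itself be the plurality value.

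First I would carry out a typicality step using the sampling properties of the nice complex $X_t$ and its inner links $X_{t \cup v}$ (which, by local niceness, are also nice). These give that except on a negligible fraction of pairs $(s,v)$ with $s \in \mathcal{V}_\tau$ and $v \in s$, one has $\gamma(s) \geq \Pr[\mathcal{V}_\tau]/2 \geq \delta/4$ and $\beta(s,v) \geq \gamma(s)/2$. Let $B^*_{\mathrm{typ}} \subseteq B^*$ consist of those $s$ for which the atypical fraction of $v \in s$ (including the $\bot$ cases) is at most $5\eta'$; the remainder $B^* \setminus B^*_{\mathrm{typ}}$ contributes at most $\delta^2/4 \cdot \Pr[\mathcal{V}_\tau]$ by Markov applied to the sampling-failure rate.

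The core of the argument combines a plurality-based lower bound with an excellence-based upper bound on $\mathcal{E}_s := \Pr_{(t',s') \sim N_t(s),\,v \in t'}[s' \in \mathcal{V}_\tau \wedge f_s(v) \neq f_{s'}(v)]$. For $s \in B^*_{\mathrm{typ}}$ the pigeonhole plus typicality give $\mathcal{E}_s \geq 10\eta' \cdot (1/2) \cdot (\gamma(s)/2) = 5\eta' \gamma(s)/4$. Splitting via the excellence threshold, for every in-excellence triple $(s,t',s')$ the fraction of $v \in t'$ witnessing disagreement is at most $\eta'$, giving $\mathcal{E}_s \leq \eta' \gamma(s) + \mu_s^*$, where $\mu_s^* := \Pr_{N_t(s)}[s' \in \mathcal{V}_\tau \wedge f_s|_{t'} \overset{\eta'}{\not\approx} f_{s'}|_{t'}]$. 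Combining yields $\mu_s^* \geq \eta'\gamma(s)/4 \geq \eta'\delta/16$ for every $s \in B^*_{\mathrm{typ}}$, so averaging and invoking the first excellence property gives $\Pr[B^*_{\mathrm{typ}}] \cdot \eta'\delta/16 \leq \mathbb{E}_s[\mathbf{1}_{s \in \mathcal{V}_\tau} \mu_s^*] \leq \mu = e^{-c_1 \eta k}$. Choosing the excellence constant $c_1$ large enough compared to the hypothesis constant in $\delta \geq e^{-\Omega(\eta k)}$ ensures $\mu \leq \eta' \delta^4/128$, which yields $\Pr[B^*_{\mathrm{typ}}]/\Pr[\mathcal{V}_\tau] \leq \delta^2/4$, completing the bound.

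The main obstacle is that the in-excellence upper bound $\eta' \gamma(s)$ on $\mathcal{E}_s$ is uncomfortably close to the naive ``$|\mathrm{Bad}(s)| \geq \eta' |s|$'' lower bound; the multiplicative gap that forces $\mu_s^*$ to be non-trivial depends crucially on both the factor of $10$ in the definition of $B^*$ and the factor of $1/2$ from plurality, without which the argument collapses. A secondary subtlety is uniformly controlling the typicality conditions---concentration of $\gamma(s)$ around $\Pr[\mathcal{V}_\tau]$, of $\beta(s,v)$ for $v$ lying in the combinatorially adversarial set $\mathrm{Bad}(s)$, and of the intersection $|\mathrm{Bad}(s) \cap t'|$---all of which I expect to follow from the $c$-locally nice structure of $X$ via the sampling bounds of Section~\ref{sec:chernoff}.
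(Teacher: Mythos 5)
Your overall strategy is genuinely different in structure from the paper's: you run a global averaging argument over all bad $s$ simultaneously, bounding the aggregate $\mathbb{E}_s[\mathbf{1}_{s\in\mathcal{V}_\tau}\mu_s^*]$ against the excellence quantity $\mu$, whereas the paper fixes a single bad $s$, sets up a probabilistic experiment over $(v,t',s')$, and derives a contradiction from an inequality $\frac{1}{2}\leq\Pr[P]\leq\Pr[E_1]+\Pr[E_2]+\Pr[E_3]<\frac{1}{2}$. The plurality pigeonhole you isolate (at least half of the valid neighbors through $v$ disagree) is exactly the paper's $\Pr[P]\geq\frac12$, and the upper bound $\mathcal{E}_s\leq\eta'\gamma(s)+\mu_s^*$ is the same idea as the paper's handling of events $E_1$ and $E_3$. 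Both routes rest on the same three ingredients---plurality, excellence, and typicality---and your packaging is arguably cleaner, but there is a concrete gap in the typicality step.

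The problem is the claim that $\gamma(s)\geq\Pr[\mathcal{V}_\tau]/2$ holds except on a negligible fraction of $s$. This cannot be established with the available tools. The only machinery that gives a non-trivial lower-tail bound on $\gamma=UD_{k/2,k/4}\mathbf{1}_{\mathcal{V}_\tau}$ at exponentially small scale is reverse hypercontractivity (the paper's \pref{claim:most-vertices-in-L-behaive-like-expectation}), and when the threshold is comparable to $\Pr[\mathcal{V}_\tau]$ itself the RHC inequality collapses: writing $L^*=\{s:\gamma(s)<p/2\}$ with $p=\Pr[\mathcal{V}_\tau]$, the two-sided bound $\Pr[L^*]\cdot(p/2)\geq\Pr_{(s,s')}[s\in L^*,s'\in\mathcal{V}_\tau]\geq 2^{-q-1}\Pr[L^*]^q p^q$ rearranges to $\Pr[L^*]\leq 2^{q/(q-1)}p^{-1}$, which is $\geq 1$ and says nothing. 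The multiplicative sampling of $(X_t(k/4),X_t(k/2))$ is no help either---it is an $(\alpha,0.1,0.2)$-sampler for constant $\alpha$, and a Markov argument only gives $\Pr[\gamma(s)<p/2]\lesssim 0.3$, a constant. Since $B^*\setminus B^*_{\mathrm{typ}}$ as you have defined it contains the entire set $\{s:\gamma(s)<p/2\}$, your claim that it contributes only $\delta^2/4\cdot\Pr[\mathcal{V}_\tau]$ does not hold. The paper avoids this precisely by setting the threshold to a much smaller $\delta'=e^{-c_5\eta k}$ (exponentially below $\Pr[\mathcal{V}_\tau]$), which is what makes the RHC bound useful; your averaging argument still closes with that weaker typicality (you would get $\mu_s^*\geq\eta'\delta'/4$, and $\Pr[B^*_{\mathrm{typ}}]\leq 4\mu/(\eta'\delta')$ is still exponentially small if $c_1$ is chosen large enough relative to $c_5$), so the fix is to lower the threshold rather than to abandon the strategy.
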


\begin{lemma}\label{lem:2-key-lemma}
There exists $c_3>0$ such that every excellent $\tau=(t,\sigma)$ satisfies
\[
        \Pr_{(s,t',s') \sim N_t}\left[\tilde{f}_s|_{t'} \overset{10\eta'}{\not\approx} \tilde{f}_{s'}|_{t'} \right] \leq e^{-c_3\eta k},
\]
that is, $\mathcal{\widetilde{F}}$ is highly agreeing on $X_t$.
\end{lemma}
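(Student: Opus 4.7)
The plan is to transfer excellence condition~(2)---which controls $f$-disagreement along $N_t^3$-walks---into a tail bound on plurality disagreement along single-step $N_t$-walks. The bridge is the structural observation that sampling $(s_2,t_2,s_3)\sim N_t$ and then independently extending with $(t_1,s_1)\sim N_t(s_2)$ and $(t_3,s_4)\sim N_t(s_3)$ reproduces precisely the $N_t^3$-distribution appearing in \pref{lem:excellence}~(2), with $t'=t_1\cap t_2\cap t_3$. The pluralities $\tilde f_{s_2}(v),\tilde f_{s_3}(v)$ are (by definition) the respective pluralities of $f_{s_1}(v),f_{s_4}(v)$ under this joint distribution restricted to $s_1,s_4\in\mathcal V_\tau$, so excellence should force them to coincide on most $v\in t_2$.

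\textbf{Step 1 (plurality $\Rightarrow$ sample disagreement).} For fixed $(s_2,s_3,v)$ with $\tilde f_{s_2}(v)\neq\tilde f_{s_3}(v)$, I will prove
\[
\Pr\bigl[f_{s_1}(v)\neq f_{s_4}(v)\;\big|\;s_1,s_4\in\mathcal V_\tau,\;v\in t_1\cap t_3,\;s_2,s_3\bigr]\;\geq\;\tfrac14.
\]
This follows from a two-case split on the conditional distributions $p_\sigma=\Pr[f_{s_1}(v)=\sigma\mid\cdot]$, $q_\sigma=\Pr[f_{s_4}(v)=\sigma\mid\cdot]$ with respective pluralities $\sigma_1\neq\sigma_2$: if $\min(p_{\sigma_1},q_{\sigma_2})\geq\tfrac12$ then $\Pr[f_{s_1}(v)=\sigma_1,\,f_{s_4}(v)=\sigma_2]\geq\tfrac14$ already; otherwise WLOG $p_{\sigma_1}<\tfrac12$, so $\sum_\sigma p_\sigma^2\leq p_{\sigma_1}\leq\tfrac12$, Cauchy--Schwarz gives $\sum_\sigma p_\sigma q_\sigma\leq\sqrt{1/2}$, and $\Pr[f_{s_1}(v)\neq f_{s_4}(v)]\geq 1-1/\sqrt2>\tfrac14$. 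Boosting pointwise to fraction-wise: if the plurality-disagreement set $D\subseteq t_2$ has density $>10\eta'$, then standard sampling in the complete complexes on $s_2,s_3$ (for the random $k/4$-subsets $t_1,t_3$) forces $|D\cap t'|/|t'|\geq 5\eta'$ except on a small-probability event, so the conditional expectation of the sample-disagreement fraction on $t'$ given $s_1,s_4\in\mathcal V_\tau$ and $(s_2,s_3)$ in the plurality-bad event is $\geq\tfrac{5\eta'}{4}$; a reverse-Markov estimate then yields that with probability $\Omega(\eta)$ this fraction exceeds $\eta'$.

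\textbf{Step 2 (excellence and RHC).} Writing $A$ for the plurality-bad event on $(s_2,s_3)$, $C=\{s_1,s_4\in\mathcal V_\tau\}$, $B$ for sample disagreement on $t'$ exceeding $\eta'$, and $h(s)=\Pr[s_1\in\mathcal V_\tau\mid s_2=s]$, Step~1 yields $\mathbb E[\mathbf{1}_A\cdot\Pr[B\cap C\mid s_2,s_3]]\geq\Omega(\eta)\cdot\mathbb E[\mathbf{1}_A\cdot h(s_2)h(s_3)]$, while excellence~(2) bounds the left-hand side by $\mu=e^{-c_1\eta k}$. It remains to lower-bound $\mathbb E[\mathbf{1}_A\cdot h(s_2)h(s_3)]$ by $\Pr[A]\cdot(\delta/2)^{O(1)}$. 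Using $\mathbb E[h]\geq\delta/2\geq e^{-\Omega(\eta k)}$ and the fact that $X_t$ inherits $c$-local niceness, I apply reverse hypercontractivity for general functions (\pref{thm:intro-rhc-real}, derived from \pref{thm:indicator-reverse-hc} via \pref{thm:generalizing}) on the $N_t$-walk inside $X_t$, paired against the indicator $\mathbf{1}_A$, to obtain this bound. Combining gives $\Pr[A]\leq O(\mu/\eta)\cdot(\delta/2)^{-O(1)}$; since $c_1$ in \pref{lem:excellence} can be taken as large as we like (by strengthening the Chernoff estimate in its proof), choosing $c_1$ larger than $c_3$ plus the $\delta$-exponent times the RHC power makes the right-hand side at most $e^{-c_3\eta k}$, as required.

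\textbf{Main obstacle.} The most delicate step is the RHC lower bound on $\mathbb E[\mathbf{1}_A\cdot h(s_2)h(s_3)]$: the plurality-bad event $A$ can a priori concentrate on atypical $(s_2,s_3)$ where $h$ is small, so a bona fide two-function reverse-H\"older inequality---pairing the indicator $\mathbf{1}_A$ against the non-negative separable $h\otimes h$ on the $N_t$-walk---is needed rather than just the indicator form of \pref{thm:indicator-reverse-hc}. \pref{thm:intro-rhc-real} is precisely the right tool, but verifying that the separable structure $h(s_2)h(s_3)$ and the down-up operator induced by $N_t$ fit cleanly into its hypothesis, and that the constants propagate correctly through the $c$-local niceness of every relevant sub-link of $X$, is the main place where care is needed.
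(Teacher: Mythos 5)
Your Step 1 pointwise estimate (a collision bound when two independent variables have different modes) is fine, and morally matches the paper's ``$\Pr[P] \geq 1/2$'' claim, but Step 2 has a genuine gap. Reverse hypercontractivity as established in \pref{thm:intro-rhc-real} (or \pref{thm:indicator-reverse-hc}, \pref{thm:generalizing}) is a \emph{two-function} inequality: it lower bounds $\langle f, Dg\rangle$ for $f,g:\X[k]\to\RR_{\geq 0}$ in terms of $\norm{f}_q\norm{g}_q$. It says nothing about $\mathbb{E}_{(s_2,s_3)\sim N_t}\bigl[\one_A(s_2,s_3)\,h(s_2)h(s_3)\bigr]$ when $A$ is a set of \emph{pairs}: the object $D\cdot\one_A$ is a different (sub-stochastic, non-reversible) kernel, and no variant of the paper's RHC applies to it. Concretely, the lower bound $\mathbb{E}[\one_A h(s_2)h(s_3)] \geq \prob{A}\cdot (\delta/2)^{O(1)}$ that you need can simply fail: nothing prevents $A$ from being supported entirely on pairs $(s_2,s_3)$ with both coordinates in the lonely set $L=\{s : h(s)<\delta'\}$, in which case $\mathbb{E}[\one_A h h] < \prob{A}\,\delta'^2$, and $\delta'$ was chosen (in the proof of \pref{lem:1-key-lemma}) to satisfy $\delta' \leq \delta^4$, so this is strictly weaker than what you claim.

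The correct replacement for your Step 2 --- and what the paper actually does --- is not a pair-RHC but the $L$-decomposition from \pref{lem:1-key-lemma}. \pref{claim:most-vertices-in-L-behaive-like-expectation} (item~1), whose proof is the one place RHC genuinely enters, gives $\prob{L}\leq\delta'^2$. Then one splits: outside $(L\times\X[k/2])\cup(\X[k/2]\times L)$ we have $h(s_2)h(s_3)\geq\delta'^2$ pointwise, so your excellence inequality yields $\prob{A\cap (L^c\times L^c)} \leq O(\mu/(\eta\delta'^2))$, while the remaining mass is at most $2\prob{L}\leq\delta'^2$; both are $e^{-\Omega(\eta k)}$ with the right constants. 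The paper packages this as: define the bad set $B$ of triples via the conditional quantity $F(s,t',s')$, bound $\prob{B}\leq 80\mu/\delta'^2$ by Markov and excellence, union-bound with $\prob{L}$, and then run the contradiction argument only for triples outside $B$ with $s,s'\notin L$. Once you fix a good triple, the argument is \emph{deterministic} in $(s,t',s')$ and only probabilistic over the auxiliary walk, so no two-variable RHC is ever needed.

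A secondary gap: your ``boosting pointwise to fraction-wise'' step implicitly assumes that the conditioning on $s_1,s_4\in\mathcal V_\tau$ does not appreciably distort how $t_1,t_3$ intersect $D$. It can. The paper handles this by further removing the set $D'$ of vertices $v$ for which $\cProb{}{s_1\in\mathcal V_\tau}{v\in t_1}$ is too small (controlled by \pref{claim:most-vertices-in-L-behaive-like-expectation}, item~2); without excising $D'$ your claimed $\geq 5\eta'/4$ conditional expectation does not follow.
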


\begin{proof}[Proof of \pref{thm:local-agreement}]
    The first property is proved in \pref{claim:many-good-restrictions}. Toward the second, we will prove that every excellent \(\tau\) is a \(DP\)-restriction. Then by \pref{lem:excellence} we have that
    \[
    \cProb{\tau}{\tau \text{ is DP}}{\tau \text{ is good}} \geq \cProb{\tau}{\tau \text{ is excellent}}{\tau \text{ is good}} \geq 1-\delta^2
    \]
    where we've used the assumption that \(\delta \geq e^{-\Omega(\eta k)}\).
    Fix any excellent \(\tau = (t,\sigma)\). One can check these satisfy the conditions of \pref{thm:agreement-99-percent} with $\nu=1/2$. Together with \pref{lem:2-key-lemma} and \pref{claim:t-test-related-to-all-vertices-test} that translates between the agreement guarantee in \pref{lem:2-key-lemma} to that of \pref{thm:agreement-99-percent}, we have that there exists a global function $g:\X[0][t] \to \Sigma$ such that
    \[
        \Pr_{s \sim X_t}[\tilde{f}_s \overset{320\eta'}{\not\approx} g|_{s}] \leq e^{-c_4\eta k}
    \]
    where \(c_4\) is some constant depending on \(c_3\), \(\eta\) and the \(c\)-local niceness of \(X\). 
     We note that after conditioning over \(s \in \mathcal{V}_\tau\), we still have that this is at most \(\frac{2e^{-c_4\eta k}}{\delta} \leq \frac{\delta^2}{2}\), where this holds by assumption that \(\delta\) is large enough.
    
    Combining this with \pref{lem:1-key-lemma}, we then have
    \begin{align*}
        \cProb{s \sim X_t}{f_s \overset{\eta/4}{\not\approx} g(s)}{s \in \mathcal{V}_t}
        &\leq \Pr_{s \sim X_t}\left[f_s \overset{10\eta'}{\not\approx} \tilde{f}_s ~\Bigg|~s \in \mathcal{V}_\tau\right] + \cProb{s \sim X_t}{\tilde{f}_s \overset{320\eta'}{\not\approx} g(s)}{s \in \mathcal{V}_t} \\
        &\leq \delta^2
    \end{align*}
    as desired.
\end{proof}
It is left to prove the key lemmata. This is the main place we use reverse hypercontractivity (other than the application of \pref{thm:agreement-99-percent} in the proof of \pref{thm:local-agreement}). The proofs for both properties follow the strategy of \cite{DinurL2017} adapted to our setting.
\begin{proof}[Proof of {\pref{lem:1-key-lemma}}]~
We define two families of `bad' sets outside of which $f$ and $\tilde{f}$ approximately agree. First, we look at the set of `lonely' strings that don't sufficiently see $\mathcal{V}_\tau$.
\[
    L \coloneqq \left\{s \in X_t: \Pr_{s' \sim N_t(s)}[s' \in \mathcal{V}_\tau] \leq \delta'\right\}
\]
for $\delta' = e^{-c_5\eta k}$ for some $c_5>0$ sufficiently small. We will assume below  that \(\delta' \leq \delta^4\) (which is possible due to the assumptions on \(\delta\)).
We record the following properties of \(L\) that we prove at the end of the subsection.
\begin{claim}\label{claim:most-vertices-in-L-behaive-like-expectation}~
There exists a sufficiently small constant \(c_5 > 0\) such that the following holds for \(\delta' = e^{-c_5 \eta k}\).
\begin{enumerate}
    \item \(\Prob[s \in X_t]{L} \leq \frac{\delta'^2}{2}\).
    \item For any \(s \notin L\), the fraction of vertices $v \in s$ s.t.\ \(\cProb{(t',s') \sim N_t(s)}{s' \in \mathcal{V}_\tau}{t' \ni v} \leq \delta'/3\) is at most $\eta'$.
\end{enumerate}
\end{claim}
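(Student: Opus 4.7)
Both parts reduce to concentration statements, but of very different flavors. Part (1) bounds the measure of $s$'s that poorly mix into $\mathcal{V}_\tau$ under the walk $N_t$ in the link $X_t$, so I would prove it by reverse hypercontractivity. Part (2) is an \emph{internal} statement for a fixed $s$: once $s$ is pinned down, the pair $(t',v)$ (with $v\in s$ uniform and $t' \subseteq s$ a uniform $k/4$-set containing $v$) is just the inclusion distribution in the \emph{complete} complex on the $k/2$ vertices of $s$, and so the quantity $p_v$ is nothing but a lifted function on that complete complex. I would prove it by standard sampling on the complete complex.

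For part (1), local niceness of $X$ passes to $X_t$, which by \pref{cor:HDX-are-lus} is $\tau$-LUS for a constant $\tau > 0$ depending only on $c$. I apply \pref{thm:indicator-reverse-hc} with $\gamma = 1/2$ to the walk $N_t = U_{k/4,k/2}D_{k/2,k/4}$ and the sets $A = L$, $B = \mathcal{V}_\tau$. Goodness of $\tau$ plus the hypothesis $\delta \geq e^{-c_0\eta k}$ of \pref{thm:local-agreement} (with $c_0$ ours to tune) ensures that $\mu(\mathcal{V}_\tau) \geq \delta/2$ exceeds the threshold required for RHC. This yields a constant $q > 1$ (depending only on $c$) for which
\[
\mu(L)^q\mu(\mathcal{V}_\tau)^q \;\le\; \Prob[(s,s') \sim N_t]{s \in L,\, s' \in \mathcal{V}_\tau} \;\le\; \delta'\mu(L),
\]
where the right-hand inequality is the definition of $L$. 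Rearranging, $\mu(L) \le (\delta'/\mu(\mathcal{V}_\tau)^q)^{1/(q-1)}$. Substituting $\mu(\mathcal{V}_\tau) \ge \delta/2 \ge e^{-c_0\eta k}/2$ and choosing $c_5$ sufficiently large relative to $qc_0$ (while still much smaller than any polynomial in $\delta$), this is at most $\delta'^2/2$.

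For part (2), fix $s \notin L$ and let $q \colon \binom{s}{k/4} \to [0,1]$ be defined by $q(t') = \Prob[s' \supseteq t',\, s' \in \X[k/2][t]]{s' \in \mathcal{V}_\tau}$. Then $p_v = \Ex[t' \ni v,\,|t'|=k/4]{q(t')}$ is exactly $U_{1,k/4}q$ evaluated at $v$ in the complete complex $\Delta$ on the $k/2$ vertices of $s$, and its mean is $\Ex{q} = \Prob[(t',s') \sim N_t(s)]{s' \in \mathcal{V}_\tau} > \delta'$ since $s \notin L$. Because $\Delta$ is (trivially) nice, \pref{cor:hdx-is-mult-sampler} makes $(\Delta(1),\Delta(k/4))$ a multiplicative sampler with failure probability decaying like $\exp(-\Omega(\beta^2\delta_{\mathrm{mult}}^2\cdot k))$ for densities above some threshold. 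Applying this with multiplicative tolerance $\delta_{\mathrm{mult}} = 2/3$ and failure probability $\beta \le \eta'$, at most an $\eta'$-fraction of $v \in s$ satisfy $p_v < \mu/3 \le \delta'/3$, which is exactly item (2).

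\textbf{Main obstacle.} The delicate point is juggling constants: part (1) wants $c_5$ \emph{large} relative to $c_0$ (so $\delta'$ is a high enough power of $\delta$ for the RHC bound to collapse to $\delta'^2/2$), while part (2) wants $c_5$ \emph{small} relative to $\eta$ (so the complete-complex sampling error fits inside $\eta' = \eta/1500$). Both windows intersect only because the hidden constant $c_0$ in $\delta \ge e^{-\Omega(\eta k)}$ of \pref{thm:local-agreement} is a design choice of ours—taking $c_0$ sufficiently small in $\eta$ makes the two constraints simultaneously satisfiable, at the cost of making the sufficient lower bound on $\delta$ in \pref{thm:local-agreement} slightly larger.
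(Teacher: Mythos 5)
Your overall strategy matches the paper's exactly: reverse hypercontractivity for item (1), complete-complex sampling for item (2), with essentially the same constant-juggling in the crux. Two points need attention, however.

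In item (2) you invoke \pref{cor:hdx-is-mult-sampler} on the complete complex on the $k/2$ vertices of $s$, asserting it is ``trivially nice,'' but it is not: $c$-niceness (\pref{def:nice-complexes}) requires link expansion $2^{-\Omega(k)}$, whereas the complete complex on $\Theta(k)$ vertices has links whose second eigenvalue is $\Theta(1/k)$, so \pref{cor:hdx-is-mult-sampler} does not cover this case. The right tool is the direct Chernoff statement for the complete complex (\pref{claim:complete-complex-multiplicative-sampler}) composed with \pref{claim:opposite-sampler} to flip the bipartition — this is precisely what the paper does, and the resulting quantitative bound is the same. The fix is a routine substitution, but as written the reduction appeals to a theorem that does not apply. (Also, the chain ``$p_v < \mu/3 \le \delta'/3$'' has the last inequality backwards; since $\mu = \Ex{q} \ge \delta'$ when $s \notin L$, you want $p_v < \delta'/3 \le \mu/3$.)

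For item (1), you assert that choosing $c_5$ large relative to $qc_0$ drives $(\delta'/\mu(\mathcal{V}_\tau)^q)^{1/(q-1)}$ below $\delta'^2/2$. Substituting $\delta' = e^{-c_5 \eta k}$, $\delta \ge e^{-c_0\eta k}$ and comparing exponents shows the requirement is $c_5(3-2q) \ge qc_0$, which admits a positive solution in $c_5$ only when $q < 3/2$. The exponent $q$ produced by \pref{thm:indicator-reverse-hc} via \pref{lem:bootstrap-rhc} is of order $1 + \Omega(1/\tau)$ and then only grows under the composition in \pref{claim:from-one-constant-to-any-constant}, so it is far above $3/2$; the inequality you claim does not close. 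What the RHC bound actually yields is a tail of the form $\delta^{\Theta(1)}$ (the paper's own computation lands at $\le \delta^2/2$, not $\delta'^2/2$, and the ``we are done'' threshold $2\delta'^2$ in the paper also does not imply $\delta'^2/2$, so the stated form of the claim is already somewhat overclaimed in the source). You should either record the $q < 3/2$ constraint explicitly or re-target the proof to the achievable $\delta^{\Theta(1)}$ threshold and check that the downstream uses in \pref{lem:1-key-lemma} and \pref{lem:2-key-lemma} still go through with that weaker bound.
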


Second, we look at the family of sets within $\mathcal{V}_\tau$ with strong disagreement:
\[
    B \coloneqq \left\{ s \in \mathcal{V}_\tau: \Pr_{(t',s') \sim N_t(s)}\left[s' \in \mathcal{V}_\tau \land f_s|_{t'} \overset{\eta'}{\not\approx} f_{s'}|_{t'}\right] \geq \frac{\delta'}{30}\right\}.
\]
We claim that both $L$ and $B$ have measure at most $\delta^2/2$ within $\mathcal{V}_\tau$. The former follows from the first item of \pref{claim:most-vertices-in-L-behaive-like-expectation} since $\Pr[L|\mathcal{V}_\tau] \leq \frac{\delta'^2}{\delta} \leq \frac{\delta^2}{2}$. For the latter we have $\Pr[B] \leq \frac{30\mu}{\delta'} \leq \frac{\delta^2}{2}$ from Markov's inequality and excellence of $\tau$ (here we also need that the \(c_5\) in the definition of \(\delta'\) is sufficiently smaller than the \(c_1\) in the definition of \(\mu\)).

It is left to observe that $f_s \overset{10\eta'}{\approx} \tilde{f}_s$ for any face $s \in \mathcal{V}_\tau \setminus (L \cup B)$. Fix such an $s$, and define $D_0$ to be the set of vertices on which $f$ and $\tilde{f}$ disagree:
\[
    D_0 \coloneqq \left\{v \in s: f_s(v) \neq \tilde{f}_s(v) \right\}.
\]
We need to argue $|D_0| \leq 10\eta' |s|$. Assume toward contradiction otherwise. Let $D'$ denote the vertices in $s$ such that \(\cProb{(t',s') \sim N_t(s)}{s' \in \mathcal{V}_\tau}{t' \ni v} \leq \delta'/3\) and write $D=D_0 \setminus D'$. By \pref{claim:most-vertices-in-L-behaive-like-expectation}, we have $|D| \geq 9\eta'|s|$.

Consider now the following probabilistic experiment. Sample \(v \in D\) uniformly, then some \(t',s' \sim N_t(s)\) conditioned on \(v \in t'\) and \(s' \in \mathcal{V}_\tau\). We denote this distribution by \(\mathcal{D}\). Note that for every fixed \(v \in D\), this is exactly the distribution over which the plurality in the definition of \(\tilde{f}_s(v)\) was defined.

With this in mind consider the event \(P = \set{f_s(v)\ne f_{s'}(v)}\) and observe that \(\Prob[\mathcal{D}]{P} \geq \frac{1}{2}\) because for every fixed \(v \in D\), by construction \(f_s(v)\) disagrees with the plurality vote on \(v\). On the other hand, we show toward contradiction that \(P\) is contained in events whose total probability is strictly less than \(\frac{1}{2}\). Namely:
\begin{enumerate}
    \item \(E_1\): the event \(\set{f_s|_{t'} \overset{\eta'}{\napprox} f_{s'}|_{t'}}\)
    \item \(E_2\): the event \(|t' \cap D| < 6\eta' |t'|\)
    \item $E_3$: the event \(P \setminus (E_1 \cup E_2)\).
\end{enumerate}
Obviously \(P \subseteq E_1 \cup E_2 \cup E_3\) so it is enough to bound the probability of the three events, starting with \(E_3\). In this case we have that \(f_s,f_{s'}\) disagree on at most an \(\eta'\)-fraction of \(t'\), but \(|t' \cap D| \geq 6 \eta' |t'|\). Conditioned on $t'$, $v$ is drawn uniformly from $|D \cap t'|$, so the probability that \(f_s(v) \ne f_{s'}(v)\) is at most \(\frac{1}{6}\).

Moving on to \(E_2\), notice that if \(D\) is an \(9\eta'\)-fraction of \(s\), a random \(t' \subset s\) contains \emph{less than a} \(6 \eta'\)-fraction of \(D\) with probability \(\exp(-\Omega(\eta'^2 k))\) by Chernoff. While the distribution \(\mathcal{D}\) doesn't sample \(t'\) uniformly (since it conditions on \(v \in t'\) and \(s' \in \mathcal{V}_\tau\)), we do know for any fixed $v \in D$, \(v \in t'\) with probability at least \(\frac{1}{2}\), and because \(s \notin L\) and $v \notin D'$ the probability that \(s' \in \mathcal{V}_\tau\) is at least \(\frac{\delta'}{3}\) even after conditioning on $v$. Hence, even after conditioning on these events the probability that \(|t' \cap D| < 6 \eta'|t'|\)-fraction of \(D\) is at most \(2\delta^{-1}\exp(-\Omega(\eta'^2 k)) < \frac{1}{10}\) for large enough \(k\).

Finally, we bound the probability of \(E_1\) by similar reasoning.
\begin{align*}
    &\Prob[\mathcal{D}]{E_1} = \cProb{v \sim D, (t',s') \sim N_t(s)}{f_s|_{t'} \overset{\eta'}{\not\approx} f_{s'}|_{t'}}{s' \in \mathcal{V}_\tau, v \in t' } \\
    &\leq \frac{3}{\delta'}\cProb{v \sim D, s' \sim N_t(s)}{s' \in \mathcal{V}_\tau \land f_s|_{t'} \overset{\eta'}{\not\approx} f_{s'}|_{t'}}{v \in t' } \\
    &\overset{s \notin B}{\leq} \frac{3}{\delta'} \cdot \frac{\delta'}{30} \cdot \prob{v \in t'}^{-1} \leq \frac{1}{5}.
\end{align*}
Thus \(\frac{1}{2} \leq \prob{P} \leq \prob{E_1} + \prob{E_2} + \prob{E_3} < \frac{1}{2}\) and a contradiction is reached.
\end{proof}
\begin{proof}[Proof of \pref{lem:2-key-lemma}]
The proof is similar to the first property. Let \(L\) be as in \pref{lem:1-key-lemma}. This time we define a set of bad \emph{triples} as follows. For a set \((s,t',s') \sim N_t\) we consider the following quantity
\[F(s,t',s') \coloneqq \cProb{(s_1,t_1,s_2,t_2,s_3,t_3,s_4) \sim N_t^3}{s_1,s_4\in \mathcal{V}_\tau \land f_{s_1}|_{t''} \overset{\eta'}{\not\approx} f_{s_4}|_{t''}}{s_2=s,t_2=t',s_3=s'}\]
where \(t''=t_2 \cap t_3 \cap t_4\). We define \(B\) to be the set of triples with large $F$-value:
\[
    B \coloneqq \left\{(s,t',s'): F(s,t',s') \geq \frac{\delta'^2}{80}\right\},
\]
and keep $L$ as in the proof of \pref{lem:1-key-lemma}. By excellence of $\tau$ and Markov's inequality $\Pr_{(s,t',s') \sim N_t}[B] \leq \frac{80\mu}{\delta'^2} \leq \exp(-\Omega( \eta k))$. Since $s$ and $s'$ are both marginally distributed as $\SC[X][\frac{k}{2}][t]$, we also have by \pref{claim:most-vertices-in-L-behaive-like-expectation} and a union bound that either one of $s \in L$ or \(s' \in L\) with probability at most $\delta'^2 = e^{-2c_5\eta k}$. It is therefore enough to show $\tilde{f}_{s}|_{t'} \overset{10\eta'}{\approx} \tilde{f}_{s'}|_{t'}$ under the assumption that $(s,t',s') \notin B$ and $s,s' \notin L$.

Toward this end, fix $(s,t',s')$ and consider the set of disagreeing vertices on the smoothings at $s$ and $s'$:
\[
D_0 \coloneqq \left\{v \in t': \tilde{f}_{s}(v) \neq \tilde{f}_{s'}(v)\right\}.
\]
Assume for the sake of contradiction that $|D_0| \geq 10\eta'|t'|$. Similar to before, we actually consider \(D = D_0 \setminus D'\) where $D'$ is the set of vertices in $t'$ such that $\Pr_{(t_3,s_4) \sim N_t(s')}[s_4 \in \mathcal{V}_\tau | t_3 \ni v]$ (likewise for $s_1,t_1 \sim N_t(s)$). By \pref{claim:most-vertices-in-L-behaive-like-expectation}, both $s$ and $s'$ have at most $\eta'|s|$ such vertices so $|D| \geq 6\eta'|t'|$. 

We now consider the following probabilistic experiment. Sample \(v \in D\) uniformly and \((s_1,t_1,s_2,t_2,s_3,t_3,s_4)\) conditioned on \(s_2 = s, t_2 = t', s_3=s'\). We then condition this experiment on:
\begin{enumerate}
    \item \(v \in t''=t_1 \cap t_2 \cap t_3\).
    \item \(s_1,s_4 \in \mathcal{V}_\tau\).
\end{enumerate}
Note that for a fixed \(v \in D\), the marginals \((s_1,t_1,s_2=s)\) and \((s_3=s',t_3,s_4)\) are precisely the distributions that were used in the definition of \(\tilde{f}_{s}(v), \tilde{f}_{s'}(v)\) respectively. As before, define the event \(P = \set{f_{s_1}(v) \ne f_{s_4}(v)}\) and observe that \(\Prob[\mathcal{D}]{P} \geq \frac{1}{2}\) because for every \(v \in D\) the plurality labeling for $v$ over $N_t$ under our conditioned process is different at $s$ and $s'$.

We now define events whose union contains \(P\) which occur with total probability less than \(\frac{1}{2}\). These are
\begin{enumerate}
    \item \(E_1\): the event \(\{f_{s_1}|_{t''} \overset{\eta'}{\napprox} f_{s_4}|_{t''}\}\)
    \item \(E_2\): the event \(|t'' \cap D| \leq 5\eta'|t''|\).
    \item $E_3$: the event \(P \setminus (E_1 \cup E_2)\).
\end{enumerate}
By the same analysis as \pref{lem:1-key-lemma} we have \(\prob{E_3} < \frac{1}{5}\) and \(\prob{E_2} \leq \frac{1}{10}\) (we omit the repeated details). As for \(E_1\) we have
\begin{align*}
    \Prob[\mathcal{D}]{E_1} &= \cProb{v \in D,s_1,t_1,t_3,s_4}{f_{s_1}|_{t''} \overset{\eta'}{\not\approx} f_{s_4}|_{t''}}{s_1,s_4 \in \mathcal{V}_\tau, v \in t'' } \\
    &\leq \frac{9}{\delta'^2}\cProb{v \sim D, s' \sim N_t(s)}{s_1,s_4 \in \mathcal{V}_\tau \land f_{s_1}|_{t''} \overset{\eta'}{\not\approx} f_{s_4}|_{t''}}{v \in t''} \\
    &\overset{(s,t',s') \notin B}{\leq} \frac{9}{\delta'^2} \cdot \frac{\delta'^2}{80} \cdot \prob{v \in t'}^{-1} \leq \frac{1}{5}.
\end{align*}
We reach a similar contradiction as before.
\end{proof}

\begin{proof}[Proof of \pref{claim:most-vertices-in-L-behaive-like-expectation}]
Let us begin with the first item. Recall under the assumptions of \pref{thm:Z-test} $N_t$ is reverse hypercontractive for indicators. Namely let $c_6$ be the constant promised by \pref{thm:indicator-reverse-hc}, and observe that by assumption $\Prob[X_t]{\mathcal{V}_\tau} \geq \delta/2 \geq e^{-c_6\eta k}$. If  $\Prob[X_t]{L} \leq 2e^{-2c_5\eta k}$ we are done, so assume otherwise. Then by definition of $L$ and \pref{thm:indicator-reverse-hc} we have the chain of inequalities:
\[
\Pr[L]\delta' \geq \Pr_{(s,s') \sim N_t}[s \in L, s' \in \mathcal{V}_\tau] \geq 2^{-q-1}\Pr[L]^q\delta^q.
\]
where $q$ is the constant given in \pref{thm:indicator-reverse-hc}. Re-arranging gives $\Pr[L] \leq (2^{q+1}\delta'\delta^{-q})^{\frac{1}{q-1}} \leq \exp(-c_7 \eta k) \leq \frac{\delta^2}{2}$ for the appropriate choice of constant $c_7>0$. Thus if \(c_5\) is sufficiently small we get the first item.

\medskip

For the second item, let $D'$ denote the set of $v \in s$ such that $\Pr_{(t',s') \sim N_t(s)}[s' \in \mathcal{V}_\tau~|~t' \ni v] \leq \frac{\delta'}{3}$. Define \(g:\binom{s}{|t'|} \to [0,1]\) to be \(g(t_0)=\cProb{(t',s') \sim N_t(s)}{s' \in \mathcal{V}_\tau}{t'=t_0}\). Observe that \(\ex{g} \geq \delta'\) since by assumption \(s \notin L\). Moreover by construction we have \(v \in D'\) if and only if \(\Ex[t_0 \ni v]{g(t_0)} \leq \frac{1}{3}\ex{g(t_0)}\).

The problem is now reduced to sampling on the complete complex. In other words let \(L = \binom{s}{|t'|}\), \(R = s\) and \(v \sim t\) if \(v \in t\). By Chernoff for every \(\alpha\), this graph is an \((\alpha,\beta,\frac{1}{2})\)-multiplicative sampler for \(\beta = \exp(-\Omega(\alpha k))\). By \pref{claim:opposite-sampler}, this implies that the opposite graph is a \((\frac{6\beta}{\alpha},2\alpha,\frac{2}{3})\)-multiplicative sampler. We take \(\alpha = \eta'/2\), and the constant \(c_5 > 0\) defined above small enough so that \(\frac{6\beta}{\alpha} \leq \delta'\). As such we have for every function \(g:L \to [0,1]\) of expectation at least $\delta'$ the fraction of vertices \(v \in s\) such that \(\Ex[t' \subseteq s, v \in t']{g} <\frac{1}{3} \Ex[t' \subseteq s]{g(t')}\) is at most \(\eta'\) as desired.
\end{proof}

\subsection{Examples of Global Complexes}
We conclude the section with several examples of `global HDX' to which \pref{thm:Z-test} applies in a blackbox manner. We start with a few basic direct examples and then argue a vast array of complexes studied in the approximate sampling literature give rise to global HDX. These are:
\begin{enumerate}
    \item The complete complex.
    \item Erdos-Renyi Hypergraphs.
    \item Skeletons of the full linear matroid over \(\mathbb{F}_q^d\).
    \item Skeletons of \(\ell_\infty\)-independent complexes, including many classic spin systems.
    \item Faces complexes of any of the above.
\end{enumerate}

The first and simplest example of global complexes (other than the complete complex) come from the classical random model of Linial and Meshulam \cite{LinialM2006}.
\begin{definition}[Erdos-Renyi Hypergraphs]
    The Erdos-Renyi hypergraph $X \sim G_k(n,p)$ is a random \(k\)-\maximal simplicial complex whose \((k-1)\)-skeleton is complete, and such that every \(k\)-face is sampled into \(X\) with probability \(p\) independently.
\end{definition}
The second direct example we'll give is based on the full linear matroid over $\mathbb{F}_q^d$, whose faces consist of linearly independent vectors over $\mathbb{F}_q^d$:
\[
\X[i] = \sett{\set{v_1,v_2,\dots,v_i}}{{v_1,v_2,\dots,v_i} \text{ are independent}},
\]
endowed with the uniform distribution.
\begin{claim}[Global complexes]
The following families are global and HDX:
    \begin{enumerate}
        \item The $k$-\maximal complete complex on $n$ vertices for large enough $n$ is:
        \begin{itemize}
            \item (one-sided) $0$-local-spectral HDX
            \item $n^{-\Omega(k)}$-global
        \end{itemize}
        \item The Erdos-Renyi hypergraph $G_k(n,p)$ for large enough $n$ is w.h.p:
        \begin{itemize}
            \item $o_n(1)$-local-spectral HDX
            \item $(1-p+o_n(1))$-global
        \end{itemize}
        \item The $k$-skeleton of the full linear matroid over $\mathbb{F}_q^d$ for large enough $q,d$ is :
        \begin{itemize}
            \item (one-sided) $0$-local-spectral HDX
            \item $q^{-\Omega(d)}$-global
        \end{itemize}
    \end{enumerate}
\end{claim}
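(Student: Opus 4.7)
My plan is to dispatch the three families separately, in every case treating the HDX property first (which is either trivial or follows from existing results in the matroid/random-complex literature) and then establishing globality via an explicit computation of $\|S_t - \pi_{\frac{k}{2}}\|_{TV}$.

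For the \textbf{complete complex} $\Delta_n$, every link is itself a complete complex, and the bipartite graph between any two levels of a complete complex is complete bipartite, whose only non-zero normalized eigenvalue is $1$, giving the HDX property immediately. For globality, fix $t \in \X[k/2]$: $S_t$ is uniform on $\binom{[n]\setminus t}{k/2}$ while $\pi_{\frac{k}{2}}$ is uniform on $\binom{[n]}{k/2}$, and a direct calculation gives $\|S_t - \pi_{\frac{k}{2}}\|_{TV} = 1 - \binom{n-k/2}{k/2}/\binom{n}{k/2}$. Expanding the ratio as $\prod_{i=0}^{k/2-1}(1 - \tfrac{k/2}{n-i})$ bounds the right side by $O(k^2/n)$, which is at most $n^{-ck}$ for any fixed small $c$ once $n$ is large enough in $k$. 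For the \textbf{full linear matroid}, the HDX claim is standard: every link is (the matroid complex of) a quotient space $\mathbb{F}_q^d/\mathrm{span}(s)$, reducing the expansion of co-dimension-$2$ links to that of rank-$2$ matroid complexes, whose non-trivial eigenvalues tend to $0$ as $q \to \infty$ (cf.\ the Kaufman-Oppenheim framework or the Anari-Liu-Oveis Gharan analysis of matroid complexes). For globality, both $S_t$ and $\pi_{\frac{k}{2}}$ are uniform on their respective supports; the ratio of support sizes is given by a closed-form Gaussian-binomial computation and equals $1 - q^{-\Omega(d)}$ for $d \gg k$, yielding the desired TV bound.

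The \textbf{Erdos-Renyi} case $G_k(n,p)$ is the most delicate. For the HDX property, the link of any face $s$ of co-dimension $\geq 2$ is essentially a smaller Erdos-Renyi complex; standard matrix concentration (e.g., matrix Bernstein) combined with a union bound over the $O(n^{k-2})$ such faces shows that every link is an $o_n(1)$-spectral expander with high probability. For globality, condition on this typical event, so every $k/2$-face has degree within $(1 \pm o_n(1))$ of its expectation $p\binom{n-k/2}{k/2}$. Partitioning $\X[k/2]$ by whether $s'$ meets $t$ and whether $t \cup s' \in X(k)$, the TV distance decomposes into an $o_n(1)$ term from $s'$ meeting $t$, a $\tfrac{1-p}{2} + o_n(1)$ term from disjoint $s'$ with $t \cup s' \notin X(k)$ (where $S_t$ vanishes and $\pi_{\frac{k}{2}}$ is nearly uniform), and a matching $\tfrac{1-p}{2} + o_n(1)$ contribution from the swap-walk neighborhood of $t$ after accounting for the discrepancy in normalizations. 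Summing gives the $(1 - p + o_n(1))$ bound.

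The main obstacle is precisely this double use of the random complex in the Erdos-Renyi case: both the HDX condition and the globality estimate are high-probability statements, and some care is required to execute a single union bound that secures both conclusions simultaneously while ensuring all concentration errors stay $o_n(1)$. Otherwise, everything else reduces to bookkeeping with binomial and Gaussian-binomial coefficients.
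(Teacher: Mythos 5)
Your approach is essentially the paper's for all three families, and the key computations agree. A few remarks on where you differ and on one shared rough edge.

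For the complete complex and the linear matroid, your TV computation matches the paper's exactly (both $S_t$ and $\pi_{k/2}$ are uniform on their supports, so the TV distance is one minus the ratio of support sizes). The paper proves the matroid HDX property by citing the Anari--Liu--Oveis Gharan result that \emph{every} matroid is a one-sided $0$-local-spectral expander, whereas you reduce to rank-$2$ quotients and say the eigenvalues ``tend to $0$ as $q \to \infty$.'' The latter is strictly weaker than what the claim asserts and what you actually need: the rank-$2$ link of $\mathbb{F}_q^{d'}$ is a complete multipartite graph, and its second normalized eigenvalue is \emph{exactly} $0$ for every $q$, so you can and should state this sharply rather than asymptotically; otherwise you are proving a weaker statement than the claimed one-sided $0$-HDX. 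For Erdős--Rényi, the paper uses the trickling-down theorem (Theorem~\ref{thm:td}): top links are $G(n,p)$ graphs, which are $o_n(1)$-expanders whp, and lower links are connected whp, so expansion propagates down. Your proposal to instead hit every link with matrix Bernstein and union-bound over $O(n^{k-2})$ faces is plausible but heavier machinery; the paper's route is cleaner and avoids needing the concentration error to beat the union bound. Your three-way partition for the Erdős--Rényi TV estimate (intersecting $s'$, disjoint $s'$ with $t\cup s'\notin X(k)$, and the swap-walk neighborhood with renormalization discrepancy) is a more explicit bookkeeping of what the paper summarizes by a single Chernoff application, but arrives at the same $(1-p) + o_n(1)$.

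One caution shared by you and the paper: for the complete complex the computed TV distance is $\Theta(k^2/n)$, and your sentence ``which is at most $n^{-ck}$ for any fixed small $c$ once $n$ is large enough in $k$'' is only valid for $c < 1/k$ (so the exponent $ck < 1$ does not actually grow with $k$). This does not affect the structure of your argument---it is an imprecision in what ``$n^{-\Omega(k)}$-global'' can legitimately mean here---but you should not present $O(k^2/n) \leq n^{-ck}$ as holding for a $k$-independent constant $c$.
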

\begin{proof} We cover each case individually:
    \paragraph{Complete Complex} All links of the complete complex are complete and therefore \(0\)-one-sided local spectral expanders. For globalness, observe that the only $\frac{k}{2}$-sets which do not appear in the link of some $t$ are those that intersect it, which are an $1-\frac{\binom{n-k}{k}}{\binom{n}{k}} \leq n^{-\Omega(k)}$ fraction.

    \paragraph{Random Hypergraphs} It is well known for fixed $p$ that this complex is an $o_n(1)$-local-spectral expander with high probability so long as $p \gg \frac{k\log(n)}{n}$ (the top links are simply random \(G(n,p)\) graphs). Since all other links are at least connected with high probability, \pref{thm:td} implies the local-spectral bound. Toward globality, fix any $\frac{k}{2}$-set $t \subset \binom{[n]}{k/2}$. By Chernoff the probability that more than $(1-p+o_n(1))$-fraction of $\frac{k}{2}$-sets disjoint from $t$ fail to appear in $X$ is much less than $n^{-\Omega(k)}$ for large enough $n$. Thus union bounding over $t$, the worst-case total variation of $S_t$ from uniform over ${\binom{n}{\frac{k}{2}}}$ is at most $1-p+o_n(1)$ as desired.\footnote{Formally, distance here should be measured from $\pi_{\frac{k}{2}}$, but this is within $o_n(1)$ of uniform in TV with high probability.}

    \paragraph{Linear Matroid} All matroids (and therefore their skeletons) are $0$-local spectral expanders \cite{AnariLOV2019} (for this matroid specifically, it is folklore). Toward globality, fix $\frac{k}{2}$ linearly independent vectors $B_1=\{v_1,\ldots,v_{k/2}\}$. The set $B_2=\{w_1,\ldots,w_{k/2}\}$ only fails to appear in the link of $B_1$ if $B_1$ and $B_2$ are linearly dependent, which happens with probability roughly $\frac{q^{k+(k-1)d}}{q^{dk}} \leq q^{-\Omega(d)}$ (ignoring low-order terms). Since $\Pi_{k/2}$ is uniform over linearly independent sets this gives the result.
\end{proof}
The sampling literature is rife with examples of ``$\ell_\infty$-independent'' complexes (a term formally coined in \cite{kaufman2021scalar}), which are closely related to global complexes. We'll consider the variant implicit in \cite{chen2021optimal}. Let $\mu$ be a distribution over $[q]^{d}$, and observe $\mu$ induces a natural $d$-partite simplicial complex $X_\mu$ with
\[
\X[d][\mu] \coloneqq \{(1,a_1),\ldots,(d,a_d): (a_1,\ldots,a_d) \in Supp(\mu)\}
\]
and $\pi_X( (1,a_1),\ldots,(d,a_d)) = \mu(a_1,\ldots,a_d)$. Let $S \subset [d]$ and $z \in [q]^S$ be such that $\Pr_{x \sim \mu}[x_S=z] > 0$ (we call such configurations `feasible'). The $(S,z)$-\textit{influence matrix} $\Psi_{z \to S}$ has entries
    \[
    \Psi_{z \to S}((u,i),(v,j)) = 
    \begin{cases}
        \Pr_{x \sim \mu}[x_i=u | x_j=v, x_S=z] - \Pr_{x \sim \mu}[x_i=u|x_S=z] & \text{if $j \neq i$}\\
        0 & \text{if $j=i$}.
    \end{cases}
    \]
A complex is called $\ell_\infty$-independent if $\norm{\Psi}_\infty$ is bounded for all feasible configurations.
\begin{definition}[$\ell_\infty$-independence]
    Fix $q,d \in \mathbb{N}$ and let $\mu$ be a distribution over $[q]^{d}$. 
    $\mu$ is called $D$-$\ell_\infty$-independent if for all feasible $S \subset [d]$ and $z \in \{0,1\}^S$:
    \[
    \norm{\Psi_{S \to z}}_\infty \leq D.
    \]
\end{definition}
We remark that in this regime, one typically thinks of $q$ as fixed or small and the \maxsizity $d$ as going to infinity. We show any sufficiently low-dimensional skeleton of an $\ell_\infty$-independent complex is a global HDX:
\begin{proposition}\label{prop:ind-to-global}
    Let $\mu$ be a $D$-$\ell_\infty$-independent distribution over $[q]^{d}$ and $3 \leq k \leq d$. Then the $k$-skeleton $X_\mu^{\leq k}$ satisfies:
    \begin{enumerate}
        \item $X_\mu^{\leq k}$ is a $\frac{D+q}{d+3-k}$-two-sided local-spectral expander
        \item $X_\mu^{\leq k}$ is $\frac{k^2}{8}\cdot\frac{D+q}{d+3-k}$-global.
    \end{enumerate}
\end{proposition}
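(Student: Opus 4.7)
\textbf{Proof proposal for Proposition \ref{prop:ind-to-global}.} The plan is to derive both parts from the standard machinery translating $\ell_\infty$-independence into spectral bounds on the $1$-skeleton of links, and then boosting those bounds to globalness via a telescoping marginal-perturbation argument.

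For part (1), I would first observe that by the basic bound $\|\Psi_{S\to z}\|_{\mathrm{op}} \leq \|\Psi_{S \to z}\|_\infty \leq D$, the complex $X_\mu$ is $D$-spectrally independent in the sense that for every feasible restriction $(S,z)$ the non-trivial eigenvalues of the $1$-skeleton of the link are bounded in terms of $D$. Concretely, for a face $s \in \X[i][\mu]$ with $i \leq k-2$, the graph underlying $(X_\mu^{\leq k})_s$ coincides with the $1$-skeleton of the partite complex $(X_\mu)_s$, which has ambient \maxsizity $d-i \geq d-k+2$. Standard arguments (e.g.\ \cite{anari2021spectral} together with the partite correction that contributes the additive $q$ term from the `diagonal' eigenvalue block of a $(d-i)$-partite graph) give that all non-trivial eigenvalues lie in the interval $[-\frac{D+q}{d-i+1},\frac{D+q}{d-i+1}]$. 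Taking the worst case $i = k-2$ yields the claimed $\frac{D+q}{d-k+3}$ two-sided bound uniformly in every link of $X_\mu^{\leq k}$.

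For part (2), I would exploit the fact that $\ell_\infty$-independence is preserved under conditioning: for any feasible $(S,z)$ the restricted distribution $\mu_{S\to z}$ is again $D$-$\ell_\infty$-independent. The strategy is to reveal the $k/2$ vertices of a random swap-walk partner $s \sim S_t$ one at a time and control the TV-drift from $\pi_{k/2}$ via a hybrid/telescoping argument. At step $j$, having revealed $j$ coordinates of $s$ on top of the fixed $t$, the conditional distribution of the next coordinate differs from its marginal under $\pi_{k/2}$ by at most the single-step influence bound $\frac{D+q}{d-k+3}$ delivered by part (1) applied to the current link. Summing these contributions via the chain-rule for TV distance and accounting for the fact that each of the $k/2$ revealed coordinates can be perturbed by any of the $\leq k/2$ prior reveals gives a total of at most $\frac{1}{2}\cdot (k/2)^2 \cdot \frac{D+q}{d-k+3} = \frac{k^2}{8}\cdot \frac{D+q}{d-k+3}$, matching the claim.

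The main obstacle I anticipate is the careful constant-tracking in part (2). In particular, converting the per-step operator-norm bound from $\ell_\infty$-independence into an actual TV bound on a single revealed coordinate, and then verifying that the hybrid argument telescopes cleanly without losing factors, is the delicate part. A clean execution will likely use the equivalence between $\ell_\infty$-independence of $\mu_{S\to z}$ and coordinate-wise TV stability of marginals under further conditioning, together with Pinsker-type or direct telescoping of TV distance along the partial reveal of $s$. Part (1) should be essentially bookkeeping given the existing literature, so the proof effort concentrates almost entirely on part (2).
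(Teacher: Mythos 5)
Your high-level plan is in the right neighborhood, but it misses the precise mechanism the paper uses, and the accounting in part (2) as written does not obviously close.

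For part (1), the paper does not pass directly from $\|\Psi_{S\to z}\|_\infty\le D$ to a spectral bound. Instead it introduces an intermediate notion, $(\lambda,\infty)$-local expansion, and proves (Lemma~\ref{lem:ind-to-local}) the quantitative bound $\|A_{x_S}-\Pi_{x_S}\|_\infty\le \frac{D+q}{d-|S|}$ by a direct calculation that also explains exactly where the additive $q$ comes from (it is the correction from the $\frac{1}{d-1}$ vs.\ $\frac{1}{d}$ normalization of the link adjacency operator, not a ``partite eigenvalue block''). Part (1) then follows because $A_\tau-\Pi_\tau$ is self-adjoint with respect to the stationary measure, so $\|\cdot\|_2\le\|\cdot\|_\infty$ by Riesz--Thorin/H\"older duality. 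Your proposed inequality $\|\Psi\|_{\mathrm{op}}\le\|\Psi\|_\infty$ is not immediate without invoking that self-adjointness, and your appeal to ``standard arguments'' hides exactly the computation that Lemma~\ref{lem:ind-to-local} performs.

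For part (2), the paper's proof of globalness goes through Lemma~\ref{lem:inf-local-to-swap}: a $\lambda$-$\ell_\infty$-local expander satisfies $\|S_{i,j}-\Pi_{i,j}\|_p\le ij\lambda$ for $p\in\{1,\infty\}$, proved by a double induction on $i+j$. The quadratic factor $ij$ arises from the recursion $\|S_{i,j}-\Pi_{i,j}\|_p\le \lambda j+\|S_{i-1,j}-\Pi_{i-1,j}\|_p$ after localizing to the link of one vertex and applying the triangle inequality, and the $1/2$ is then the standard conversion $d_{TV}(S_t,\pi_{k/2})=\frac12\|S_{k/2,k/2}-\Pi_{k/2,k/2}\|_\infty$. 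Your telescoping-TV sketch is in the same spirit (reveal coordinates one at a time, localize to links), but the bookkeeping as written has two problems. First, the source of the perturbation at step $j$ is the conditioning on all $k/2$ coordinates of $t$, not the ``prior reveals'' of $s$ that you cite, so the quadratic count needs to be reargued. Second, the per-step TV drift is a drift against a \emph{fixed} coordinate's conditional, whereas $\|A_\tau-\Pi_\tau\|_\infty$ controls the distribution of a \emph{uniformly mixed} other coordinate; unwinding that mixture to get a clean per-coordinate influence bound is extra work that the paper's $\ell_\infty$-norm induction sidesteps entirely. A further technical point you need but do not have is the H\"older duality $\|M\|_1=\|M^*\|_\infty$, which supplies the base case ($i=j=1$) and the ability to run the induction simultaneously in the $1$- and $\infty$-norms so the ``without loss of generality $i\ge j$'' step is legitimate. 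Without carrying both norms through the induction, the argument does not close. I would suggest you formulate and prove the intermediate statement $\|S_{i,j}-\Pi_{i,j}\|_\infty\le ij\lambda$ directly (as the paper does) rather than attempting a TV chain-rule argument, since the operator-norm induction is what supplies the exact constant $\frac{k^2}{8}$.
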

We remark that (tighter) spectral variants of both items above are well known (see e.g.\ \cite{anari2021spectral,alev2023sequential}). \pref{prop:ind-to-global} implies essentially every spin-system studied in the recent breakthrough line of work on approximate sampling through spectral independence admits optimal 1\% agreement testers (taking $k \leq \sqrt{\log(d)}$ skeletons).\footnote{Note while spectral independence is weaker than what we require, almost all known methods for spin systems actually bound $\ell_\infty$-independence.} See \cite[Section 3]{kaufman2021scalar} for an overview including distributions with the stochastic covering property, independent sets, various Ising/Potts models, list colorings, and more.

The proof of \pref{prop:ind-to-global} relies on an intermediate notion of \cite{Hopkins2024} called local-$\ell_\infty$-expansion.
\begin{definition}[$(\lambda,\infty)$-expansion]
    A simplicial complex $X$ is called a $(\lambda,\infty)$-local expander if for every $\tau \in X$ with $|\tau| \leq d-2$:
    \[
    \norm{A_\tau - \Pi_\tau}_\infty \leq \lambda
    \]
    where $A_\tau$ is the weighted adjacency matrix of $X_\tau$'s 1-skeleton, and $\Pi_\tau$ is its corresponding stationary operator.
\end{definition}
We require two lemmas regarding this notion. First, we note that up to dependence on the alphabet $q$ (typically thought of as constant in this regime) $\ell_\infty$-expansion is essentially equivalent to $\ell_\infty$-independence.
\begin{lemma}\label{lem:ind-to-local}
Let $\mu$ be a $D$-$\ell_\infty$-independent distribution over $[q]^{d}$. Then for every feasible $S \subset [d]$ and $x_S \in \{0,1\}^S$:
\[
\norm{A_{x_S}-\Pi_{x_S}}_\infty \leq \frac{D+q}{d-|S|}.
\]
\end{lemma}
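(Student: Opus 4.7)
} Fix a feasible configuration $(S, x_S)$ and write $m = d - |S|$. The link $X_{x_S}$ is a partite complex with parts indexed by $[d] \setminus S$, where part $i$ has vertices $\{(i, u) : u \in [q], \Pr_\mu[x_i = u, x_S] > 0\}$ and the top-level measure is the conditional distribution of $x_{[d]\setminus S}$ given $x_S$. The plan is to write $A_{x_S} - \Pi_{x_S}$ entry-wise in terms of the influence matrix $\Psi_{x_S \to S}$ (which is essentially built for this purpose) and then take the max-row-sum norm using the hypothesis $\norm{\Psi_{x_S \to S}}_\infty \leq D$.

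Concretely, I would first unpack the weighted 1-skeleton walk on the partite link: from a vertex $(j,v)$, sample a top face through $(j,v)$ and then a uniformly random vertex of that face (using the standard lazy convention on the diagonal part $i = j$). A short computation gives
\[
A_{x_S}((i,u),(j,v)) = \tfrac{1}{m}\Pr_\mu[x_i = u \mid x_j = v, x_S] \qquad (i \neq j),
\]
together with the diagonal term $\tfrac{1}{m}\delta_{u=v}$ when $i = j$. The stationary operator acts row-constantly, $\Pi_{x_S}((i,u),(j,v)) = \tfrac{1}{m}\Pr_\mu[x_i = u \mid x_S]$. Subtracting, the $i \neq j$ entries collapse exactly to $\tfrac{1}{m}\Psi_{x_S \to S}((i,u),(j,v))$, by definition of the influence matrix.

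To finish, fix any row $(j,v)$. The $i \neq j$ contributions sum to at most $\tfrac{1}{m}\sum_{(i,u):\, i \neq j}|\Psi_{x_S \to S}((i,u),(j,v))| \leq \tfrac{D}{m}$ by the $D$-$\ell_\infty$-independence hypothesis. The $i = j$ contributions sum to $\tfrac{1}{m}\sum_{u \in [q]}|\delta_{u=v} - \Pr_\mu[x_j = u \mid x_S]|$, which is bounded by $\tfrac{q}{m}$ trivially (and in fact by $\tfrac{2}{m}$, which is where the $+q$ slack in the statement comes from if one prefers a clean bound depending on alphabet size). Adding the two row-sum contributions yields $\norm{A_{x_S} - \Pi_{x_S}}_\infty \leq \tfrac{D+q}{m} = \tfrac{D+q}{d-|S|}$, as claimed.

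The only point requiring any care is aligning the convention for the partite 1-skeleton walk with the definition of $(\lambda,\infty)$-expansion from \cite{Hopkins2024}, so that $A_{x_S}$ and $\Pi_{x_S}$ factor cleanly through conditional marginals of $\mu$ and the $i = j$ diagonal is handled consistently. Once this bookkeeping is settled, the rest is definition-unwrapping and a single row-sum inequality.
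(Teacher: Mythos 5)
Your high-level plan --- identify the entries of $A_{x_S} - \Pi_{x_S}$ with (a rescaling of) the influence matrix $\Psi_{x_S \to S}$ and take a max-row-sum --- is the same one the paper uses; the paper just proves the $S = \emptyset$ case and then inherits the general case by working in links, which is equivalent to what you do. The gap is precisely in the step you deferred as ``bookkeeping'': you have the wrong model for $A_{x_S}$. You describe the walk as ``sample a top face through $(j,v)$ and then a uniformly random vertex of that face,'' which gives off-diagonal normalization $\frac{1}{m}$ and a $\frac{1}{m}$-mass diagonal. That is the lazy $m{\to}1$ down-up walk, not the weighted $1$-skeleton walk. The $1$-skeleton of $X_{x_S}$ is the graph $(\SC[X][1][x_S], \SC[X][2][x_S])$: a step samples a $2$-face through $(j,v)$ and moves to the \emph{other} endpoint, so $A_{x_S}$ has zero diagonal and off-diagonal normalization $\frac{1}{m-1}$.

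This matters in two places. First, the off-diagonal entries do \emph{not} collapse exactly to $\frac{1}{m}\Psi_{x_S \to S}$. For the correct operator they equal
\[
\frac{1}{m}\Psi_{x_S \to S}\bigl((i,u),(j,v)\bigr) + \frac{1}{m(m-1)}\Pr_\mu\bigl[x_i = u \mid x_j = v, x_S\bigr],
\]
and the second term --- coming from the $\frac{1}{m-1}$-versus-$\frac{1}{m}$ mismatch, not from any diagonal --- is where the $+q$ slack actually enters: its row sum is at most $\frac{1}{m(m-1)} \cdot (m-1)q = \frac{q}{m}$, since for each of the $m-1$ choices of $j$ the sum over at most $q$ values of $v$ is at most $q$. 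Second, the diagonal of the true $A_{x_S}$ is identically zero, so the $\frac{1}{m}\delta_{u=v}$ term to which you attribute the $+q$ slack is not there; the only same-part contribution is the $\Pi$ mass inside part $i$, whose row sum is $\frac{1}{m}$ (and the paper in fact silently drops this too). Your final number happens to match $\frac{D+q}{m}$, but it is obtained by bounding the $\infty$-norm of a different operator; if you wanted to pass from your lazy walk bound to the stated $A_{x_S}$ bound you would need the identity $A_{x_S} = \frac{m}{m-1}A_{\mathrm{lazy}} - \frac{1}{m-1}I$ and would pick up an extra $\frac{m}{m-1}$ factor and an $O(\frac{1}{m})$ term, landing at roughly $\frac{D+q+2}{m-1}$ rather than $\frac{D+q}{m}$. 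The cleaner fix is to do the $\frac{1}{m-1} = \frac{1}{m} + \frac{1}{m(m-1)}$ split directly, as the paper does.
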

\begin{proof}
    We prove the case where $S=\emptyset$. The general statement follows from applying this argument to the links of $X_\mu$. With this in mind, observe that by definition:
        \begin{align*}
    \norm{A_\emptyset-\Pi}_\infty =& \max_{(u,i)} \sum\limits_{(v,j) \in X_{(u,i)}} \left|\frac{1}{d-1}\Pr[x_i=u|x_v=j] - \frac{1}{d}\Pr[x_i=u]\right|\\
    &\leq \frac{1}{d}\max_{(u,i)} \left(\sum\limits_{(v,j) \in X_{(u,i)}} \left|\Pr[x_i=u|x_j=v] - \Pr[x_i=u]\right| + \frac{1}{d-1}\Pr[x_i=u|x_j=v]\right)\\
    &\leq\frac{q}{d}+ \frac{1}{d}\max_{(u,i)} \sum\limits_{(v,j) \in X_{(u,i)}} \left|\Pr[x_i=u|x_j=v] - \Pr[x_i=u]\right|\\
    &=\frac{q}{d}+ \frac{1}{d}\max_{(u,i)}\sum\limits_{(v,j) \in X_{(u,i)}}\left|\Pr[x_i=u|x_j=v] - \Pr[x_i=u]\right|\\
    &\leq \frac{D+q}{d}
    \end{align*}
\end{proof}
Second, we'll need the slightly more involved fact from \cite{Hopkins2024} that any $\ell_\infty$-local expander is global. We reproduce the proof here for completeness.
\begin{lemma}\label{lem:inf-local-to-swap}
    Let $X$ be a $d$-\maximal $\lambda$-$\ell_\infty$-local expander. Then $X$ is $\frac{d^2}{8}\lambda$-global.
\end{lemma}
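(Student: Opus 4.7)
The plan is to compare $S_t$ and $\pi_{d/2}$ by factoring both as the law of an ordered tuple $(v_1,\dots,v_{d/2})$ sampled one vertex at a time, and then applying the chain rule for total variation. Pick a uniformly random ordering on the underlying $d/2$-set. Under $S_t$, the characterization of the top-level measure of $X_t$ as ``draw a top face of the link then randomly order it'' shows that $v_i \mid v_{<i}$ has law $\pi_1^{X_{t \cup v_{<i}}}$, the vertex stationary of the link of $t \cup v_{<i}$. Under $\pi_{d/2}$, the same characterization applied to $X$ itself yields $v_i \mid v_{<i} \sim \pi_1^{X_{v_{<i}}}$. Passing from unordered to ordered laws can only increase TV, so the chain rule gives
\[
\|S_t - \pi_{d/2}\|_{TV} \;\leq\; \sum_{i=1}^{d/2} \mathbb{E}_{v_{<i} \sim S_t}\left[\left\|\pi_1^{X_{t \cup v_{<i}}} - \pi_1^{X_{v_{<i}}}\right\|_{TV}\right].
\]

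The heart of the proof is to bound each conditional TV term by $d\lambda/4$ via a second telescoping, this time peeling off the vertices of $t$ one at a time. Fix $i$ and $\tau = v_{<i}$, enumerate $t = \{u_1,\dots,u_{d/2}\}$, and set $\sigma_j = \{u_{j+1},\dots,u_{d/2}\} \cup \tau$, so $\sigma_0 = t \cup \tau$ and $\sigma_{d/2} = \tau$. The triangle inequality yields
\[
\left\|\pi_1^{X_{t \cup \tau}} - \pi_1^{X_\tau}\right\|_{TV} \;\leq\; \sum_{j=0}^{d/2-1} \left\|\pi_1^{X_{\sigma_j}} - \pi_1^{X_{\sigma_{j+1}}}\right\|_{TV}.
\]
The key identity is that removing one vertex from the conditioning corresponds exactly to reading off a row of a local adjacency operator: since $\sigma_j = \sigma_{j+1} \cup \{u_{j+1}\}$, the standard relation between link measures gives $\pi_1^{X_{\sigma_j}}(w) = A_{\sigma_{j+1}}(u_{j+1}, w)$. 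Hence each telescoping term equals $\tfrac{1}{2}\|A_{\sigma_{j+1}}(u_{j+1}, \cdot) - \pi_1^{X_{\sigma_{j+1}}}\|_1$, which is at most $\lambda/2$ by the $\ell_\infty$-expansion hypothesis (applicable since $|\sigma_{j+1}| = d/2 + i - j - 2 \leq d-2$ throughout the range). Summing the $d/2$ single-vertex removals gives a $d\lambda/4$ bound per conditional TV, and then summing the $d/2$ outer terms from the chain rule gives $\|S_t - \pi_{d/2}\|_{TV} \leq (d/2) \cdot d\lambda/4 = d^2\lambda/8$.

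The main thing to be careful about is verifying the two sequential factorizations, particularly that the conditional law $v_i \mid v_{<i}$ under $\pi_{d/2}$ really is $\pi_1^{X_{v_{<i}}}$ (including the fact that this distribution is supported on vertices $w$ with $v_{<i} \cup \{w\}$ a face of $X$, which is automatic from the link's construction). A secondary subtlety is that $\pi_{d/2}$ places mass on $t'$ intersecting $t$ or not extending to a $d$-face with $t$; this mass is absorbed by the chain-rule bound because the outer expectation is taken under $S_t$, so all conditioning sets $v_{<i}$ are automatically disjoint from $t$ and extendable, keeping every link in the telescoping non-degenerate. No obstacle beyond this bookkeeping is anticipated, since the hard work is contained in the already-given $\ell_\infty$-expansion assumption.
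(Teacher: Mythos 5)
Your proof is correct and arrives at the same $\frac{d^2}{8}\lambda$ bound, but it takes a genuinely different route from the paper's. The paper proves the stronger inductive statement $\|S^\tau_{i,j} - \Pi^\tau_{i,j}\|_1, \|S^\tau_{i,j} - \Pi^\tau_{i,j}\|_\infty \leq ij\lambda$ for all links $\tau$ and all feasible $i,j$, arguing operator-theoretically: it repeatedly invokes H\"{o}lder duality (so that $\|M\|_1 = \|M^*\|_\infty$ can move the peeling between the two sides of the swap walk) and a localization decomposition of the swap operator into its action inside links of $(i-1)$-faces, closing the induction at the base case $S_{1,1} = A_\tau$. Your argument instead works purely probabilistically with the TV distance that directly defines globality: the chain rule for TV over the ordered tuple $(v_1,\dots,v_{d/2})$ peels off one coordinate of the sampled half-face at a time (with conditioning under $S_t$, which correctly keeps all arising faces valid and extendable), and the inner telescoping over $\sigma_0 \supset \sigma_1 \supset \cdots \supset \sigma_{d/2}$ peels off the vertices of $t$ one at a time. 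Each telescoping increment is exactly the TV distance between a row $A_{\sigma_{j+1}}(u_{j+1},\cdot)$ of a local adjacency operator and its stationary distribution $\pi_1^{X_{\sigma_{j+1}}}$, which the $\ell_\infty$-expansion hypothesis bounds by $\lambda/2$ (the operator $\infty$-norm being exactly twice the worst-case row TV). You correctly verify the dimension constraint $|\sigma_{j+1}| = d/2 + i - j - 2 \leq d-2$ throughout. The accounting $\frac{d}{2} \cdot \frac{d}{2} \cdot \frac{\lambda}{2}$ matches. Your route is cleaner in that it targets the stated conclusion directly and sidesteps both H\"{o}lder duality and the operator-norm framework; what it gives up is the paper's more general uniform bound on $\|S^\tau_{i,j} - \Pi^\tau_{i,j}\|$ for arbitrary $(i,j)$ and arbitrary links, which may be independently useful. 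One minor cosmetic point: passing from unordered to uniformly ordered laws actually preserves TV exactly (the map is injective on supports), so the $\leq$ you invoke is in fact an equality, though the inequality is all you need.
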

\begin{proof}
    Recall that it is sufficient to bound the infinity norm:
    \[
    \norm{S_{\frac{d}{2},\frac{d}{2}} - \Pi_{\frac{d}{2},\frac{d}{2}}}_\infty \leq \frac{d^2}{4}\lambda.
    \]
    We prove a slightly stronger statement by induction. For any $\tau \in X$ and feasible $i,j$ both the $\infty$-norm and $1$-norm are bounded by:
    \[
    \norm{S^\tau_{i,j} - \Pi^\tau_{i,j}}_1, \norm{S_{i,j} - \Pi_{i,j}}_\infty \leq ij\lambda.
    \]
    We induct on $i+j$. For the base case $i=j=1$, first observe the $\infty$-norm is exactly the $\lambda$-$\ell_\infty$-expansion of the $1$-skeleton. A classical consequence of H\"{o}lder duality states that for any H\"{o}lder conjugates $(p,q)$ and operator $M$
    \[
    \norm{M}_p = \norm{M^*}_q,
    \]
    where $M^*$ is $M$'s adjoint. Since $S_{1,1} - \Pi_{1,1}$ is self-adjoint we therefore have
    \[
    \norm{S^\tau_{0,0}-\Pi^\tau_{0,0}}_1 = \norm{S^\tau_{0,0}-\Pi^\tau_{0,0}}_\infty \leq \lambda
    \]
    as desired.
    
    Assume now by induction that, for some fixed $i+j>0$, all $i'+j'<i+j$ and $\tau \in X$ satisfy
    \[
    \norm{S^{\tau}_{i',j'} - \Pi^{\tau}_{i',j'}}_\infty \leq i'j'\lambda.
    \]
    Let $p \in \{1,\infty\}$ and let $\bar{p}$ denote the complement of $p$. We first argue we may assume without loss of generality that $i>1$. This is again by H\"{o}lder duality since
    \[
    \norm{S_{i,j}-\Pi_{i,j}}_p=\norm{S_{j,i}-\Pi_{j,i}}_{\bar{p}}.
    \]
    Thus we are done if we show the result for both $p \in \{1,\infty\}$ just for the case $i\geq j$ (and thus $i>1$).

    Toward this end, fix any $f: \X[j] \to \R$. As is typically the case, the idea is to localize the input by first drawing an $(i-1)$-face, then a vertex from its link:
    \begin{align*}
        \norm{S_{i,j}f - \Pi_{i,j}f}_p = \Bigg|\Bigg|\norm{(S_{i,j}f)|_{s} - (\Pi_{i,j}f)|_{s}}_{p,v \in \SC[X][1][s]}\Bigg|\Bigg|_{p,s \in \X[i-1]}
    \end{align*}
    where for any $g \in \X[i] \to \R$, $g|_{s}(v)=g(s \cup v)$ denotes the localization of $g$ to the link of $s$. For concreteness, we write out the above explicitly for the $p=1$ case in slightly less compact notation. The $p=\infty$ case is the same replacing expectations with absolute maxima:
    \begin{align*}
        \norm{S_{i,j}f - \Pi_{i,j}f}_p &= \underset{t \in \X[i]}{\mathbb{E}}[|S_{i,j}f(t)-\Pi_{i,j}f(t)|]\\
        &=\underset{{s \in \X[i-1]}}{\mathbb{E}}\Bigg[\underset{{v \in \SC[X][1][s]}}{\mathbb{E}}[|S_{i,j}f(s \cup v)-\Pi_{i,j}f(s \cup v)|]\Bigg]\\
        &=\underset{s \in \X[i-1]}{\mathbb{E}}\Bigg[\underset{v \in \SC[X][1][s]}{\mathbb{E}}[|S_{i,j}f|_s(v)-\Pi_{i,j}f|_s(v)|]\Bigg]\\
        &=\mathbb{E}_{s \in \X[i-1]}\Bigg[\mathbb{E}_{v \in \SC[X][1][s]}[|S_{i,j}f|_s(v)-\Pi_{i,j}f|_s(v)|]\Bigg]\\
        &=\mathbb{E}_{s \in \X[i-1]}\Bigg[\Bigg|\mathbb{E}_{v \in \SC[X][1][s]}[|S_{i,j}f|_s(v)-\Pi_{i,j}f|_s(v)|]\Bigg|\Bigg]\\
        &=\Bigg|\Bigg|\norm{(S_{i,j}f)|_{s} - (\Pi_{i,j}f)|_{s}}_{p,v \in \SC[X][1][s]}\Bigg|\Bigg|_{p,s \in \X[i-1]}.
    \end{align*}
    We stick to the more compact norm notation for the remainder of the proof. The trick is now to observe that as a function of $\SC[X][1][s]$, $S_{i,j}f|_s$ is exactly $S_{1,j}^sf^s$, where $f^s: \SC[X][j][s] \to \R$ is the restriction $f^s(\tau)=f(\tau)$. Then by adding and subtracting the corresponding local stationary operator we have:
    \begin{align*}
         \norm{S_{i,j}f - \Pi_{i,j}f}_p &= \Bigg|\Bigg|\norm{S^s_{1,j}f^s - \Pi^s_{1,j}f^s + \Pi^s_{1,j}f^s - (\Pi_{i,j}f)|_{s}}_{p,v \in \SC[X][1][s]}\Bigg|\Bigg|_{p,s \in \X[i-1]}\\
         &\leq \Bigg|\Bigg|\norm{S^s_{1,j}f^s - \Pi^s_{1,j}f^s}_{p,v \in \SC[X][1][s]}\Bigg|\Bigg|_{p,s \in \X[i-1]} + \Bigg|\Bigg|\norm{\Pi^s_{1,j}f^s - (\Pi_{i,j}f)|_{s}}_{p,v \in \SC[X][1][s]}\Bigg|\Bigg|_{p,s \in \X[i-1]}
    \end{align*}
    by the triangle inequality. The first term is now bounded by the inductive hypothesis applied in the link of $s$:
    \[
    \Bigg|\Bigg|\norm{S^s_{1,j}f^s - \Pi^s_{1,j}f^s}_{p,v \in \SC[X][1][s]}\Bigg|\Bigg|_{p,s \in \X[i-1]} \leq \Bigg|\Bigg|j\lambda\norm{f^s}_{p,\SC[X][j][s]}\Bigg|\Bigg|_{p,s \in \X[i-1]} = j\norm{f}_{p,\X[j]}
    \]
    Toward analyzing the second term, observe that $\Pi^s_{1,j}f^s = S_{i-1,j}f(s)$ and $(\Pi_{i,j}f)|_s(v)=\Pi_{i-1,j}f(s)$ so:
    \[
    \Bigg|\Bigg|\norm{\Pi^s_{1,j}f^s - (\Pi_{i,j}f)|_{s}}_{p,v \in \SC[X][0][s]}\Bigg|\Bigg|_{p,s \in \X[i-1]} = \norm{S_{i-1,j}f - \Pi_{i-1,j}f}_{p,s \in \X[i-1]} \leq (i-1)j\lambda \norm{f}_{p,\X[j]}
    \]
    by the inductive hypothesis. Altogether this gives
    \[
    \norm{S_{i,j}f - \Pi_{i,j}f}_p \leq \lambda j + \lambda (i-1)j = ij\lambda
    \]
    as desired.
\end{proof}
The proof of \pref{prop:ind-to-global} is now essentially immediate.
\begin{proof}[Proof of \pref{prop:ind-to-global}]
    The first fact is immediate from \pref{lem:ind-to-local} and the fact that $\norm{A_\tau-\Pi_\tau}_2 \leq \norm{A_\tau - \Pi_\tau}_\infty$. The second fact is an immediate consequence of combining \pref{lem:ind-to-local} and \pref{lem:inf-local-to-swap}.
\end{proof}

\bigskip

We conclude the subsection by the observation that if a complex \(X\) is global, then so is its faces complex (\pref{def:faces-complex}). Since the faces complex also inherits the two-sided expansion of the original complex up to factors in \maxsizity (c.f.\ the proof of \pref{thm:hdx-is-sampler}), this shows that one can also take as examples faces complexes of any of the examples above.
\begin{claim} \label{claim:globalness-of-faces-complex}
    Let \(X\) be a \(k\)-\maximal and \(\lambda\)-global complex. Let \(\ell\) be such that \(\ell\) divides \(k\), then the \(\ell\)-faces complex \(F^\ell_X\) is \(\lambda\)-global.
\end{claim}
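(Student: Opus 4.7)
The plan is to show that both the swap walk step and the stationary distribution on $F^\ell_X$ at the relevant level decompose through $X$ via the same ``random partition'' kernel, so that the TV distance on $F^\ell_X$ is controlled by the TV distance on $X$. Write $m = k/\ell$. A vertex $t \in F^\ell_X(m/2)$ is a partition of a $(k/2)$-face $T := \bigcup t$ of $X$ into $m/2$ disjoint $\ell$-sets, and the weight of any $\tau \in F^\ell_X(m)$ is proportional to the weight in $X$ of $\bigcup \tau \in X(k)$.

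First, I would unravel the swap walk on $F^\ell_X$. Starting from $t$, the walk samples $\tau \supseteq t$ in $F^\ell_X(m)$ and outputs $t' = \tau \setminus t$. Such a $\tau$ is uniquely described by a pair $(T',P')$ where $T' \in X(k/2)$ is disjoint from $T$ with $T \cup T' \in X(k)$, and $P'$ is a partition of $T'$ into $m/2$ $\ell$-sets. A direct weight computation shows that, for any fixed $T'$, all partitions $P'$ of $T'$ receive equal conditional weight, and the marginal on $T'$ is exactly the swap-walk distribution $S^X_{k/2,k/2}(T)$. Thus sampling $t' \sim S^{F^\ell_X}_{m/2,m/2}(t)$ is the same as sampling $T' \sim S^X_{k/2,k/2}(T)$ and then partitioning $T'$ uniformly at random into $\ell$-sets. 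The same computation applied to the induced measure $\pi_{m/2}^{F^\ell_X}$ shows it decomposes as: sample $T' \sim \pi_{k/2}^X$, then uniformly partition.

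With both distributions expressed as the image of the Markov kernel $K : X(k/2) \to \Delta(F^\ell_X(m/2))$ that sends $T'$ to the uniform distribution over partitions of $T'$, the data processing inequality gives
\[
\|S^{F^\ell_X}_t - \pi_{m/2}^{F^\ell_X}\|_{TV} = \|K \, S^X_{k/2,k/2}(T) - K \, \pi_{k/2}^X\|_{TV} \leq \|S^X_{k/2,k/2}(T) - \pi_{k/2}^X\|_{TV} \leq \lambda,
\]
where the last inequality is $\lambda$-globalness of $X$. Since $t$ was arbitrary, $F^\ell_X$ is $\lambda$-global.

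The only subtle step is the weight bookkeeping in the first paragraph: one needs to verify that the weights on $F^\ell_X$ (which count partitions) do not introduce a spurious $T'$-dependent factor that would break the clean decomposition. This is where I would spend the most care, but it follows from the fact that the number of partitions of a fixed $(k/2)$-set into $m/2$ $\ell$-sets depends only on $k$ and $\ell$, not on $T'$ itself, so the combinatorial factors cancel between numerator and normalization in the conditional weight expressions.
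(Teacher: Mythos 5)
Your proposal is correct and takes essentially the same approach as the paper: both proofs observe that the swap-walk distribution $S_t$ on $F^\ell_X$ and the stationary distribution $\Pi$ on $F^\ell_X$ each factor as "sample a $(k/2)$-face of $X$, then uniformly partition it into $\ell$-sets," and both then transfer the TV bound through this common factorization. The paper does the final step by a direct term-grouping computation (and in fact observes that the TV distance is \emph{equal}, not merely $\leq$), while you phrase it as the data processing inequality for the partitioning Markov kernel $K$; these are the same argument formalized differently, and your version is if anything slightly cleaner.
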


\begin{proof}
    Let \(m=\frac{k}{\ell}\) for notational convenience. Fix \(t = \set{s_1,s_2,\dots,s_{\frac{m}{2}}} \in F^\ell_{X}(\frac{m}{2})\). Observe that the set of neighbors of \(t\) with respect to the swap walk in \(F^\ell X\) are exactly those \(t'\) such that \(\cup t\) and \(\cup t'\) are neighbors in the swap walk on \(\X[\frac{k}{2}]\) (here \(\cup t = s_1 \cup s_2\cup \dots \cup s_{\frac{m}{2}}\) and similarly for \(t'\)). Moreover, one samples a neighbor of \(t\) by sampling a random neighbor \(\tau\) of \(\cup t\), and then sampling a uniform at random partition of \(\tau\) to \(t' = \set{s_1',s_2',\dots,s_{\frac{m}{2}}}\) (the partition is chosen independent of \(t\)).

    Thus one observes that the \(TV\)-distance between \(S_t\) and \(\Pi\) can be written as
    \begin{align*}
        \frac{1}{2}\sum_{t' \in F^\ell \X[\frac{m}{2}]} |\Prob[S_t(F^\ell X)]{t'} - \Prob[\Pi]{t'}| &= \frac{1}{2}\sum_{\tau \in \X[\frac{k}{2}]} \sum_{t' : \cup t' = \tau} |\Prob[S_t(F^\ell X)]{t'} - \Prob[\Pi]{t'}| \\
        &= \frac{1}{2}\sum_{\tau \in \X[\frac{k}{2}]} \sum_{t' : \cup t' = \tau} p|\Prob[S_{\cup t}(X)]{\tau} - \Prob[\Pi(X)]{\tau}|
    \end{align*}
    where \(p\) is one over the number of partitions of \(\tau\) to \(t'\). Obviously this is equal to
    \[
    \frac{1}{2}\sum_{\tau \in \X[\frac{k}{2}]} \Prob[S_{\cup t}(X)]{\tau} - \Prob[\Pi(X)]{\tau} =d_{TV}(S_{\cup t}(X),\Pi(X)) \leq \lambda
    \]
    as desired.
\end{proof}

\section{Analytic, Geometric, and Combinatorial Applications}\label{sec:an-comb-apps}
In this section, we cover several brief applications of sampling and reverse hypercontractivity in well-studied combinatorial and analytic settings.
\subsection{New Double Samplers} \label{sec:double-samplers}
Double samplers are a strengthened notion of sampling introduced in \cite{DinurK2017,DinurHKLT2018} that consist of two ``stacked'' samplers with additional local sampling properties under `closure' of the top layer. These interesting objects have powerful applications in agreement testing \cite{DinurK2017} and list-decoding \cite{DinurHKLT2018}, and have even seen algorithmic use in the construction of space-efficient data structures for the heavy hitter problem \cite{DoronW2022}.
\begin{definition}[Double sampler, \cite{DinurHKLT2018}]
A double sampler consists of a triple \((V_2,V_1,V_0)\), where \(V_0\) is the ground set, \(V_1\) is a collection of \(i\)-subsets of \(V_0\) and \(V_2\) is a collection of \(k\)-subsets of \(V_0\), where \(k>i \in \mathbb{N}\). We say that \((V_2,V_1,V_0)\) is an \((\varepsilon,\beta,\varepsilon_0,\beta_0)\)-\emph{double sampler} if
\begin{itemize}
	\item The inclusion graphs on \((V_2,V_1)\) and \((V_2,V_0)\) are \((\varepsilon,\beta)\)-additive samplers and the inclusion graph on \((V_1,V_0)\) is an \((\varepsilon+\varepsilon_0,\beta+\beta_0)\)-additive sampler\footnote{The weights over edges in \((V_i,V_j)\) are the marginals of the probabilistic experiment where we first choose \(T \in V_2\) (according to some given weight distribution $\Pi$), and then choose \(v \in V_0\) and \(S \in V_1\) such that \(v \in S \subseteq T\), uniformly at random over all such pairs \(S,v\).} (recall the inclusion graph is defined by connecting two subsets by an edge if one contains the other).
	
	\item For every \(T\in V_2\), let \(V_1(T)= \{ S\in V_1\;:\; S\subset T\}\) be the sets in \(V_1\) that are contained in \(T\). Let \(G_{|T}\) be the bipartite inclusion graph connecting elements in \(T\) (viewed as elements in the ground set \(V_0\)) to subsets in \(V_1(T)\). We require that for every \(T\in V_2\), the graph \(G_{|T}\) is an \((\beta_0,\varepsilon_0)\)-additive sampler.
\end{itemize}
\end{definition}
Double samplers are only known to arise from high dimensional expanders. In their original work, \cite{DinurK2017,DinurHKLT2018} use the Ramanujan complexes \cite{LubotzkySV2005b} to construct explicit double samplers for all $(\varepsilon,\beta,\varepsilon_0,\beta_0)$:
\begin{theorem}[{\cite[Theorem 2.11]{DinurHKLT2018}}]\label{thm:original-double-samplers}
    For every $(\varepsilon,\beta,\varepsilon_0,\beta_0)>0$, there exists an explicit infinite family of double samplers $\{(V_0^{(n)},V_1^{(n)},V_2^{(n)})\}$ such that for all $n$:
    \[
    \frac{|V_1^{(n)}|}{|V_0^{(n)}|}, \frac{|V_2^{(n)}|}{|V_0^{(n)}|} = \exp\left(\poly\left(\frac{1}{\beta},\frac{1}{\beta_0},\frac{1}{\varepsilon},\frac{1}{\varepsilon_0}\right)\right).
    \]
\end{theorem}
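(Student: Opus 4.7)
The plan is to realize the triple $(V_2,V_1,V_0)$ as $(\X[k],\X[i],\X[1])$ for a sufficiently strong two-sided Ramanujan complex $X$ with appropriately chosen $i<k$. Then each of the three required samplers is a direct consequence of either \pref{thm:hdx-is-sampler} or standard Chernoff--Hoeffding applied inside a single top face.

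First, I would choose parameters. Pick $i = \Theta(\varepsilon_0^{-2}\log \beta_0^{-1})$ so that the inclusion graph between $i$-faces and vertices of the complete $k$-\maximal complex on any $T \in \X[k]$ is an $(\varepsilon_0,\beta_0)$-additive sampler (this is just Chernoff--Hoeffding on $\Delta_k$). Then pick $k = \Theta(i\cdot \varepsilon^{-2}\log \beta^{-1}) = \poly(1/\varepsilon,1/\beta,1/\varepsilon_0,1/\beta_0)$ large enough that \pref{thm:hdx-is-sampler} yields the desired $(\varepsilon,\beta)$-sampling between levels $k$ and $i$, and between $k$ and $1$, on any $c$-nice complex. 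Next, instantiate $X$ as a Ramanujan complex \cite{LubotzkySV2005a} of ambient dimension $d\geq k$ built over a prime power $q$ large enough that all links are $2^{-\Omega(d)}$-two-sided spectral expanders; by definition the $k$-skeleton of such an $X$ is then $c$-nice in the sense of \pref{def:nice-complexes}.

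With these choices the three sampler conditions follow almost immediately. Conditions $(V_2,V_0)$ and $(V_2,V_1)$ are direct applications of \pref{thm:hdx-is-sampler}, which gives additive sampling error $\exp(-\Omega(\varepsilon^2 k/i))\leq \beta$. Condition $(V_1,V_0)$ reduces to the same statement at a different level pair, yielding at worst an $(\varepsilon+\varepsilon_0,\beta+\beta_0)$-sampler (alternatively, one can derive it by composing the two samplers above and absorbing the extra slack into $\varepsilon_0,\beta_0$). For the local condition, fix any $T \in \X[k]$. By definition the graph $G_{|T}$ is isomorphic to the inclusion graph between $\binom{T}{i}$ and $T$ viewed as faces of the complete $k$-\maximal complex on $T$; our choice of $i$ was tailored so that this graph is an $(\varepsilon_0,\beta_0)$-sampler by Chernoff--Hoeffding. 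The minor technicality that the edge weights in the double sampler are induced by first sampling $T \in V_2$ matches precisely the weights in $X$'s inclusion graphs, so no re-weighting is required.

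Finally, I would bound the overhead. The Ramanujan complex satisfies $|V_j|/|V_0| \leq q^{O(d^2)}$ for $j \in \{1,2\}$, and forcing top-link expansion $2^{-\Omega(d)}$ requires $q \geq 2^{\Omega(d)}$. Since $d$ can be taken polynomial in $1/\varepsilon,1/\beta,1/\varepsilon_0,1/\beta_0$, both ratios are at most $\exp(\poly(1/\varepsilon,1/\beta,1/\varepsilon_0,1/\beta_0))$, as claimed. The main obstacle is quantitative: ensuring that the Ramanujan complex construction at dimension $d$ over $\mathbb{F}_q$ genuinely delivers the required $2^{-\Omega(d)}$-two-sided link expansion uniformly at all relevant levels, so that \pref{thm:hdx-is-sampler} applies at every scale we invoke it. A secondary but routine point is to verify that the implicit marginal weights on $(V_1,V_0)$ and $(V_2,V_i)$ induced by the experiment in the definition of a double sampler coincide with the natural $\pi_i$ weights on the complex; this is immediate from the symmetry of the weighted inclusion structure but should be recorded explicitly. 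Notably, this argument in fact yields the much sharper quasi-polynomial overhead of \pref{thm:double-samplers} rather than the polynomial-in-the-exponent bound of \pref{thm:original-double-samplers}, so the latter is a weaker corollary.
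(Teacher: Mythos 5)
This statement is imported from \cite{DinurHKLT2018} as a black box and is not proved in the paper, so there is no in-paper proof to compare against. Evaluating your argument on its own merits: it is sound in structure, but, as you yourself flag at the end, it is not the DHKLT argument — you invoke \pref{thm:hdx-is-sampler}, which is this paper's new main result and was unavailable to DHKLT, and in doing so you in fact reprove the paper's own strictly stronger \pref{thm:double-samplers}. The genuine DHKLT proof follows the same architecture (Ramanujan top level, complete-complex closure graphs) but has only the Chebyshev-type inclusion sampling of \cite{DinurK2017} at its disposal, namely $\beta \gtrsim i/(\varepsilon^2 k)$. That forces $k/i = \Omega(1/(\varepsilon^2\beta))$, which is polynomial — not polylogarithmic — in $1/\beta$; together with $i = \Theta(\log(1/\beta_0)/\varepsilon_0^2)$ this gives $k = \poly(1/\varepsilon,1/\beta,1/\varepsilon_0,1/\beta_0)$, and the $\lambda^{-O(d^2)}$ degree of the Ramanujan complex then yields overhead exactly $\exp(\poly(\cdot))$, matching the stated bound with no slack. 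Your Chernoff-based choice $k/i = O(\log(1/\beta)/\varepsilon^2)$ is what collapses the dependence to polylogarithmic in $\beta,\beta_0$ and gives the quasipolynomial improvement.

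One factual correction: the LSV/Ramanujan complexes of \cite{LubotzkySV2005a} are \emph{one-sided partite} local-spectral expanders, not two-sided, as the paper itself records in \pref{thm:coset-complexes}. They still satisfy \pref{def:nice-complexes} via the partite branch, but then the partite case of \pref{thm:hdx-is-sampler} applies, which carries the weaker $\varepsilon^8$ (rather than $\varepsilon^2$) dependence in the exponent; this is precisely why the paper's \pref{thm:double-samplers} shows $\varepsilon^{16}$ in one branch of its overhead bound. Your argument survives this correction since the dependence remains polynomial in $1/\varepsilon$, but the claim that the links are $2^{-\Omega(d)}$-\emph{two-sided} spectral expanders should be amended.
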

One of the main open questions posed in \cite{DinurHKLT2018} is to settle the overhead of double samplers. In particular, while it is known that exponential dependence on $\varepsilon_i$ is necessary (even for standard samplers), it is plausible the dependence on the `failure probability' $\beta_i$ could be improved, perhaps even to match the corresponding polynomial dependence of optimal standard samplers. Since typical related applications (e.g.\ for low soundness agreement tests and PCPs \cite{ImpagliazzoKW2012,DinurL2017}) take $\varepsilon$ to be constant (or at most some polylog in $\beta$), this latter dependence is where the crux of the problem lies. 

To the best of our knowledge, there have been no lower bounds or improvements over \pref{thm:original-double-samplers} since \cite{DinurHKLT2018}'s original work. Leveraging our concentration bounds and their corresponding optimality (see \pref{sec:optimality}) we take a significant step toward resolving this problem: quasi-polynomial size double samplers exist and, under reasonable assumptions on the underlying complex, are the best possible.
\begin{theorem}[Quasi-Polynomial Double Samplers]\label{thm:double-samplers}
    for every $(\varepsilon,\beta,\varepsilon_0,\beta_0)>0$, there exists an explicit family of ($\beta,\varepsilon$),$(\beta_0,\varepsilon_0)$ double samplers $\{(V^{(n)}_2,V^{(n)}_1,V^{(n)}_0)\}$ of size at most:
\[
\frac{|V^{(n)}_1|}{|V^{(n)}_0|}, \frac{|V^{(n)}_2|}{|V^{(n)}_0|} \leq \exp\left(\widetilde{O}\left(\min\left\{\frac{\log^4\frac{1}{\beta}\log^4\frac{1}{\beta_0}}{\varepsilon^{8}\varepsilon_0^{4}},\frac{\log^3\frac{1}{\beta}\log^3\frac{1}{\beta_0}}{\varepsilon^{16}\varepsilon_0^{4}} \right\}\right)\right).
\]
Moreover, for $\varepsilon \in (0,0.01)$, if the underlying complex family $\{X_n\}$ is $\frac{\varepsilon}{3}$-hitting, vertex-uniform, and non-contracting,\footnote{Here we mean 1) $\pi_0(v)=|\X[1]|^{-1}$ for all $v \in \X[1]$, and 2) $|\X[i]|\geq |\X[i-1]|$ for all $2 \leq i \leq d$.} this is optimal up to polynomial factors in the exponent:
    \[
    \frac{|V_2^{(n)}|}{|V_0^{(n)}|} \geq \exp\left(\Omega\left(\frac{\log(\frac{1}{\beta})\log(\frac{1}{\beta_0})}{\varepsilon^2\varepsilon_0^2}\right)\right)
    \]
\end{theorem}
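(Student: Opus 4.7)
The proof splits cleanly into an upper bound construction and a matching lower bound, which I would tackle separately.

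For the upper bound, the plan is to instantiate the triple as three consecutive levels of an explicit bounded-degree nice HDX: set $V_0 = \X[1]$, $V_1 = \X[i]$, $V_2 = \X[k]$ for an appropriate $k$-\maxsize complex $X$. The three global sampling conditions on $(V_2,V_1)$, $(V_2,V_0)$, and $(V_1,V_0)$ are then immediate from \pref{thm:hdx-is-sampler}, which provides inclusion sampling error $\exp(-\Omega(\varepsilon^2 k/i))$ in the two-sided regime and $\exp(-\Omega(\varepsilon^8 k/i))$ in the partite regime. The local condition---that each $G_{|T}$ is an $(\varepsilon_0,\beta_0)$-sampler---follows for free from Chernoff-Hoeffding on the complete complex, since restricting $X$ to any $T \in \X[k]$ yields the complete complex on $T$ with $V_1(T) = \binom{T}{i}$. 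The parameters $k$ and $i$ are chosen to simultaneously satisfy all four sampling requirements, and the overhead $|V_2|/|V_0|$ is then read off from known degree bounds on the HDX family. The two stated bounds correspond to two different families---a partite HDX construction (giving the first bound with better $\varepsilon$-dependence but a $\log^4$ factor) and a two-sided HDX construction (giving the sharper $\log^3$ factor at the cost of worse $\varepsilon$-dependence); taking the better of the two yields the $\min$ in the statement.

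For the lower bound, the plan is to combine the inclusion-sampling optimality of \pref{thm:intro-lower-bound} with classical CEG-style degree bounds, leveraging the hitting, vertex-uniform, and non-contracting assumptions. First, $(V_2, V_0)$ being an $(\varepsilon, \beta)$-sampler forces $k \geq \Omega(\log(1/\beta)/\varepsilon^2)$ via the CEG degree lower bound, and under vertex-uniformity a double-counting argument further lower-bounds the average degree of vertices in $V_0$ inside the inclusion graph, thus lower-bounding $|V_2|/|V_0|$. Second, the local condition on each $G_{|T}$ within the complete complex, combined with the hitting-set property and the lifting technique from the proof of \pref{thm:intro-lower-bound} (where a small counter-example at level $1$ is promoted to a counter-example at level $i$), forces $k/i \geq \Omega(\log(1/\beta_0)/\varepsilon_0^2)$ and correspondingly lower-bounds $|V_1(T)|$ relative to $\binom{k}{i}$. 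Third, these constraints multiply through the chain $|V_2|/|V_0| \geq (|V_2|/|V_1|)\cdot(|V_1|/|V_0|)$, where the non-contracting assumption ensures neither ratio is trivial, producing the claimed $\exp(\Omega(\log(1/\beta)\log(1/\beta_0)/(\varepsilon^2\varepsilon_0^2)))$ lower bound.

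The main obstacle I anticipate is in the lower bound, specifically converting the several sampling constraints---which a priori bound individual quantities like $k$, $k/i$, and $|V_1(T)|$---into a clean multiplicative bound on $|V_2|/|V_0|$ of the required form. The difficulty is twofold: the global and local sampling requirements interact nonlinearly through the $\binom{k}{i}$ factor, and the hitting-set property must be used delicately to transport counter-examples across levels without losing polynomial factors. As hinted in \pref{sec:intro-optimality}, a key technical ingredient will be extending the CEG degree lower bound to sets of arbitrary (not just vanishing) density, since the natural lifted counter-example sets in the $(V_2, V_1)$ graph may have constant rather than vanishing measure, and this extension is what ultimately yields the product $\log(1/\beta)\log(1/\beta_0)$ in the exponent rather than a mere sum.
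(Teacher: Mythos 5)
Your upper-bound sketch is close to the paper's: both build the triple from three levels of an explicit bounded-degree HDX, get the global conditions from \pref{thm:hdx-is-sampler} (or the exponential TD bound), get the local condition for free from negative correlation in the complete complex, and read off overhead from the degree of the construction. One small misattribution: both constructions the paper uses are LSV complexes, i.e., \emph{partite one-sided} HDX; the two terms in the $\min$ come from instantiating at different spectral parameters ($\lambda=\frac{1}{2d}$ with the exponential concentration of \pref{cor:TD-exponential} versus $\lambda=2^{-d}$ with the partite bound in \pref{thm:hdx-is-sampler}), not from a partite family versus a two-sided family.

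The lower bound is where you have a genuine gap, and it is the crux of the theorem. You derive $k \geq \Omega(\log(1/\beta)/\varepsilon^2)$ from the $(V_2,V_0)$ sampler via CEG, and then claim the local condition on $G_{|T}$ forces $k/i \geq \Omega(\log(1/\beta_0)/\varepsilon_0^2)$. The second claim is wrong: $G_{|T}$ is the inclusion graph between $T$ (as vertices) and $\binom{T}{i}$, so each $i$-set has degree $i$, and CEG applied here forces $i \geq \Omega(\log(1/\beta_0)/\varepsilon_0^2)$, not $k/i$. More importantly, even if you repair that, a lower bound on $k$ and a lower bound on $i$ do not multiply; you would only get $k \geq \max\{\cdot,\cdot\}$, which is polynomial in the parameters, not the product $\log(1/\beta)\log(1/\beta_0)$. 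The missing idea is to apply the inclusion-sampling optimality (\pref{cor:sampling-for-hitters}, which is where the hitting-set hypothesis is used, and which in turn rests on the level-lifting argument of \pref{thm:optimal-sampling}) to the \emph{middle} pair $(V_2,V_1)=(\X[k],\X[i])$: that gives $k/i \geq \Omega(\log(1/\beta)/\varepsilon^2)$, i.e.\ a constraint on the \emph{ratio}. Multiplying $k/i$ by $i$ then gives $k \geq \Omega(\log(1/\beta)\log(1/\beta_0)/(\varepsilon^2\varepsilon_0^2))$, after which the non-contracting hypothesis supplies $|V_2|/|V_0| \geq 2^{\Omega(k)}$. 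Your ``chain'' $|V_2|/|V_0| \geq (|V_2|/|V_1|)(|V_1|/|V_0|)$ is a tautology about set sizes and does not substitute for the $k = (k/i)\cdot i$ multiplication of the two constraints; without the ratio bound on $(V_2,V_1)$ you cannot reach the product in the exponent.
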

Note the above does not fully rule out the existence of simplicial poly-size double samplers. While the assumption on regularity in the lower bound can be easily relaxed, it is unclear to what extent the assumption on hitting-set is an artifact of our proof technique (see \pref{sec:optimality} for details), though it is certainly a natural property for such objects to satisfy. Moreover, our bound strongly relies on the assumption that $|V_2| \geq \exp(d)|V_0|$ (implied by $X$ being non-contracting), which is not generically true. Nevertheless, high dimensional expanders exhibit both good hitting set behavior (see \pref{prop:hitting-set}) and at least exponential degree (see \pref{prop:deg-spectral}), so it is at least accurate to say double samplers arising from HDX are at best quasi-polynomial in $\beta^{-1}$.

The proof of \pref{thm:double-samplers} relies on two further results. The first, toward our upper bound, is Lubotzky, Samuels, and Vishne's \cite{LubotzkySV2005a} classical construction of Ramanujan complexes.
\begin{theorem}[LSV-Complexes {\cite{LubotzkySV2005a}}]\label{thm:coset-complexes}
    For any $\lambda>0$ and $d \in \mathbb{N}$, there exists an explicit infinite family of $d$-partite one-sided $\lambda$-local-spectral expanders $\{X^{(n)},\Pi^{(n)}\}$ with degree $\text{deg}(X) \leq \lambda^{-O(d^2)}$. Moreover, for every $d \geq k > \ell \geq 0$, $\max\{\pi_k\} \leq o_n(1)$.
\end{theorem}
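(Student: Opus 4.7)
The plan is to realize $X^{(n)}$ as a quotient of the affine Bruhat--Tits building of type $\tilde{A}_{d-1}$ by a type-preserving arithmetic lattice. First, I would fix a non-archimedean local field $F$ (e.g.\ $F = \mathbb{F}_q((t))$) whose residue field has size $q$, chosen so that $O(1/\sqrt{q}) \leq \lambda$. The affine building $\tilde{B}$ of $\mathrm{PGL}_d(F)$ is naturally a $d$-partite simplicial complex whose vertices are homothety classes of $\mathcal{O}_F$-lattices in $F^d$ and whose top chambers correspond to periodic descending chains of such lattices; the type of a vertex lies in $\mathbb{Z}/d\mathbb{Z}$ via the valuation of the lattice determinant.

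The second step is to verify local spectral expansion of $\tilde{B}$. Every link of a codimension-$2$ face in $\tilde{B}$ is isomorphic to the incidence graph of a projective plane over $\mathbb{F}_q$ (the spherical building of type $A_2$), whose non-trivial eigenvalues are bounded by $O(1/\sqrt{q})$ via the classical Feit--Higman eigenvalue computation. By Oppenheim's trickling-down theorem (\pref{thm:td}) applied link-by-link, this propagates to $\lambda$-one-sided local-spectral expansion on all higher-dimensional links as long as $q$ is taken larger than a fixed polynomial in $d$. The degree bound is essentially a counting exercise inside a single apartment: any vertex of $\tilde{B}$ is contained in at most $q^{O(d^2)}$ chambers (the number of complete flags in $\mathbb{F}_q^d$), and since $q$ can be taken $O(\lambda^{-2})$, this yields the claimed bound $\mathrm{deg}(X) \leq \lambda^{-O(d^2)}$.

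The finite family $\{X^{(n)}\}$ would then be produced as a tower of quotients $X^{(n)} := \Gamma_n \backslash \tilde{B}$, where $\Gamma_n \leq \mathrm{PGL}_d(F)$ is a descending tower of cocompact, torsion-free, type-preserving congruence subgroups. The existence (and effectiveness) of such lattices follows from the Borel--Harder construction using division algebras of degree $d$ over a global field split at a single place, together with restriction to a type-preserving subgroup of index dividing $d$. Because each $\Gamma_n$ acts freely and type-preservingly, the quotient $X^{(n)}$ inherits the link structure of $\tilde{B}$ identically, preserving the $\lambda$-local-spectral expansion, the $d$-partite coloring, and the degree bound.

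The main thing left is the measure bound $\max \pi_k^{(n)} = o_n(1)$. Since $\Gamma_n$ acts freely and transitively on chambers within a fundamental domain, the uniform distribution $\pi_d^{(n)}$ on $\X[d][n]$ pushes forward (by the formula in \pref{sec:preliminaries}) to give weights on lower faces proportional to their upward degree in $\tilde{B}$; crucially, the upward degree of any $k$-face in $\tilde{B}$ depends only on $d$ and $q$, not on $n$. Hence $\max_\sigma \pi_k^{(n)}(\sigma) = O_{d,q}(1) / |\X[d][n]|$, which tends to zero as the lattice index $[\mathrm{PGL}_d(\mathcal{O}_F) : \Gamma_n] \to \infty$. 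The only real obstacle is a careful bookkeeping of the arithmetic lattice construction to simultaneously guarantee cocompactness, torsion-freeness, and type-preservation while controlling $q$ in terms of $\lambda$; this is worked out in detail in \cite{LubotzkySV2005a} and we would cite it directly.
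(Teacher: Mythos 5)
The paper does not actually prove \pref{thm:coset-complexes}; it cites \cite{LubotzkySV2005a} directly, so there is no paper proof to compare against. Your reconstruction of the LSV argument is essentially faithful: you correctly identify the affine $\tilde{A}_{d-1}$ building, the Feit--Higman link eigenvalue bound, Garland/trickling-down propagation, arithmetic lattices from division algebras, and the reason $\max \pi_k^{(n)} = o_n(1)$ (bounded upward degree in $\tilde{B}$ divided by a growing number of chambers).

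One small imprecision worth flagging: for $d \geq 4$, not every codimension-$2$ link of $\tilde{B}$ is the incidence graph of a projective plane. The Coxeter diagram of $\tilde{A}_{d-1}$ is a $d$-cycle, so a codimension-$2$ face whose two missing types are \emph{non-adjacent} in the cycle has residue of type $A_1 \times A_1$, giving a complete bipartite graph $K_{q+1,q+1}$ rather than a generalized triangle. This does not harm the argument (those links have second eigenvalue $0$, so they are even better expanders), but the blanket claim "isomorphic to the incidence graph of a projective plane" holds as stated only for $d = 3$. You should also be a bit more careful with the trickling-down bookkeeping: propagating from codimension-$2$ down $d$ levels requires $q$ roughly $\mathrm{poly}(d)/\lambda^2$ rather than just $O(\lambda^{-2})$, though the resulting $\mathrm{poly}(d)^{O(d^2)}$ factor is still absorbed into the $\lambda^{-O(d^2)}$ bound for the parameter regimes $\lambda \lesssim 1/d$ relevant to this paper.
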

The second, toward the lower bound, is the following optimality result we prove in \pref{sec:optimality}.
\begin{corollary}[Corollary of {\pref{thm:optimal-sampling}}]\label{cor:sampling-for-hitters}
    For any $\varepsilon \in (0,0.01)$, let $X$ be a $d$-\maximalpunc, $\frac{\varepsilon}{3}$-hitting vertex-uniform complex and $i \leq k \leq d$. Then there exists a set $A \subset \X[i]$ such that:
    \[
    \Pr_{s \in \X[k]}[|U_{i,k}1_A(s) - \mu_A| > \varepsilon] \geq \exp\left(O\left(\varepsilon^2\frac{k}{i}\right)\right).
    \]
\end{corollary}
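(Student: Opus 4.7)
The plan is to derive the corollary directly from \pref{thm:optimal-sampling}, the formal version of the informal dichotomy \pref{thm:intro-lower-bound}. That theorem asserts that for any $d$-\maximal complex either (i) $(\X[k],\X[i])$ is not an $(\varepsilon,\exp(-\Omega(\varepsilon^2 k/i)))$-sampler, or (ii) it is not a $(\tfrac{1}{10i},\tfrac{\varepsilon}{6i})$-sampler. Failure of (i) produces exactly the claimed set $A$, so the whole job is to rule out or upgrade case (ii) under the hitting and vertex-uniform hypotheses.

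I would carry this out via the Canetti--Even--Goldreich~\cite{CanettiEG1995} counting approach sketched in Section~2.4, but applied \emph{directly} at level $i$ rather than through the naive lift from level $1$. Concretely, I would choose a random $A \subseteq \X[i]$ by including each $i$-face independently with a target density $\mu = \mu(\varepsilon,i)$ and argue via a first-moment calculation that, with positive probability, the number of $k$-faces $s$ with $|U_{i,k}\mathbf{1}_A(s)-\mu(A)|>\varepsilon$ is at least $\exp(-\Omega(\varepsilon^2 k/i))\cdot|\X[k]|$. The two hypotheses enter twice each: vertex-uniformity ensures that the inclusion graph $(\X[k],\X[i])$ has degree-balanced weights on its right side, which is exactly the structure required for the classical CEG degree-based counting; the $\tfrac{\varepsilon}{3}$-hitting property ensures both that $\mu(A)$ concentrates additively around the target $\mu$ (since hitting controls the probability a random $i$-face lands in $\X[1]\setminus B$ for any $B\subset\X[1]$), and that inside a random $k$-face $s$ the empirical value $U_{i,k}\mathbf{1}_A(s)$ behaves like the average of $\Theta(k/i)$ near-independent draws (again because hitting forces the natural sub-partition of $s$ into $\lfloor k/i\rfloor$ disjoint $i$-faces to be close in measure to truly independent $i$-faces).

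The main obstacle is preserving the $\varepsilon^2 k/i$ exponent rather than degrading to $\varepsilon^2 k/i^2$. If one produced a counterexample $B\subset\X[1]$ by CEG at level $1$ (failure probability $\exp(-\Omega(\varepsilon'^2 k))$) and lifted via $A:=\{t\in\X[i]:t\cap B\neq\emptyset\}$, the derivative of $x\mapsto 1-(1-x)^i$ in the relevant regime forces $\varepsilon'\asymp\varepsilon/i$, giving only $\exp(-\Omega(\varepsilon^2 k/i^2))$. The fix is exactly what hitting buys: one may run CEG \emph{at level $i$} on the balanced graph $(\X[k],\X[i])$ with effective sample size $\Theta(k/i)$ per $k$-face, matching the Chernoff rate from the upper-bound side of \pref{thm:hdx-is-sampler}. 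Once the additive $\varepsilon/3$ slack from the hitting hypothesis is absorbed into the $O$-constants (permissible because $\varepsilon<0.01$ is an absolute constant and the conclusion is stated in big-$O$), the CEG first-moment argument yields the claimed $\exp(-\Omega(\varepsilon^2 k/i))$ lower bound on the failure probability, completing the proof.
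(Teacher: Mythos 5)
Your instinct to derive the corollary from \pref{thm:optimal-sampling} is right, and in fact that is all the paper does: $\frac{\varepsilon}{3}$-hitting implies $(i,\frac{\varepsilon}{3})$-hitting for every $i$, vertex-uniformity matches the theorem's regularity hypothesis, so the theorem applies directly with, say, $\mu = \Theta(1)$ and the corollary is immediate. There is no dichotomy to resolve --- you are citing the informal introductory version (\pref{thm:intro-lower-bound}); the formal \pref{thm:optimal-sampling} simply \emph{assumes} the hitting property, which is exactly what this corollary supplies.

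The larger issue is that the fall-back argument you sketch would not actually prove the bound. Choosing $A \subseteq \X[i]$ by including each $i$-face \emph{independently} with density $\mu$ and running the CEG first-moment calculation at level $i$ gives a lower bound governed by the right-degree of the inclusion graph $(\X[k],\X[i])$, which is $\binom{k}{i}$, not $k/i$. For a fixed $k$-face $s$, the quantity $U_{i,k}\mathbf{1}_A(s)$ is then an average of $\binom{k}{i}$ i.i.d. Bernoulli($\mu$) variables, so the CEG lower bound on the mis-sampling probability is only $\exp\!\bigl(-\Theta(\varepsilon^2\binom{k}{i}/\mu)\bigr)$, exponentially smaller than the claimed $\exp(-\Theta(\varepsilon^2 k/i))$. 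The paper explicitly flags this failure mode in the remark at the end of Section 9.1. The assertion that hitting gives ``effective sample size $\Theta(k/i)$'' inside a CEG counting argument at level $i$ is not substantiated, and I do not see how it could be: hitting constrains the structure of $X$ but does nothing to change the degree of the inclusion graph or to decorrelate the $\binom{k}{i}$ indicators once $A$ is chosen independently at level $i$.

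You also misidentify the fix for the $\varepsilon^2 k/i^2$ degradation in the level-$1$ lift. You correctly observe that lifting $B \subset \X[1]$ to $A = N_i(B)$ converts an $\varepsilon$-deviation at level $i$ into roughly an $\varepsilon/i$-deviation at level $1$. But the paper does not abandon the level-$1$ argument for a level-$i$ one; it instead takes $B$ to have \emph{small density} $\Pr[B] \approx \mu/i$. The reverse Chernoff rate for small sets scales as $\exp(-\varepsilon'^2 k / \mu')$, so with $\varepsilon' \asymp \varepsilon/i$ and $\mu' \asymp \mu/i$ the two factors of $1/i$ cancel to $1/i$, recovering $\exp(-O(\varepsilon^2 k/(\mu i)))$. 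The hitting hypothesis then enters only to guarantee that $E = N_i(B)$ has measure $\Theta(\mu)$ (neither much smaller nor much larger than $i\Pr[B]$), which is what lets one translate the level-$1$ deviation into the stated level-$i$ deviation. Your write-up has the ingredients in the wrong places.
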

We are now ready to prove \pref{thm:double-samplers}.
\begin{proof}[Proof of \pref{thm:double-samplers}]
We split into the upper and lower bounds.
    \paragraph{Upper Bound} We prove the two bounds within the minimum separately. For the first, we use $d$-dimensional LSV-complexes with $\lambda=\frac{1}{2d}$ and $d=O\left(\frac{\log^2(1/\beta)\log^2(1/\beta_0)}{\varepsilon^4\varepsilon_0^2}\right)$ for top layer, and set $i=\Theta(\frac{\log(1/\beta_0)}{\varepsilon_0^2})$ for the middle layer. By \pref{cor:TD-exponential}, $(V_2,V_1)$ is an $(\varepsilon,\beta)$-sampler for the right choice of constants. Moreover, every closure graph is simply the $(\frac{\log(1/\beta_0)}{\varepsilon_0^2})$-\maximal complex on $k$ vertices, which is negatively correlated and therefore also a $(\varepsilon_0,\beta_0)$-sampler by standard Chernoff-Hoeffding for negatively correlated variables. The overhead bound is then immediate from the degree of LSV complexes in \pref{thm:coset-complexes}.

    For the latter, we will use $d$-dimensional LSV-complexes with $\lambda=2^{-d}$ and $d=O\left(\frac{\log(1/\beta)\log(1/\beta_0)}{\varepsilon^8\varepsilon_0^2}\right)$ for top layer, and set $i=\Theta(\frac{\log(1/\beta_0)}{\varepsilon_0^2})$ for the middle layer. The result then follows from \pref{thm:hdx-is-sampler} and the above.
    
    \paragraph{Lower Bound} Recall that by definition $V_0^{(n)}=\X[1][n]=[n]$, $V_1^{(n)}=\X[i][n]$, and $V_2^{(n)}=X_n(k)$ for some family of simplicial complexes $\{X_n\}$ and set sizes $1 < i < k$. By \pref{cor:sampling-for-hitters}, for any $\varepsilon \in (0,0.01)$, $(\X[k],\X[i])$ is at best an $(\varepsilon,e^{O(\varepsilon^2\frac{k}{i})})$-sampler, therefore any such double sampler requires:
    \[
    \frac{k}{i} \geq \Omega\left(\frac{\log(1/\beta)}{\varepsilon^2}\right).
    \]
    On the other hand, it is also required for every $T \in \X[k]$ that the induced graphs $G|_T=(V_2|_T,V_1|_T)$ are $(\varepsilon_0,\beta_0)$-samplers. By standard sampling lower bounds \cite{CanettiEG1995}, this graph is at best a $(\varepsilon_0,e^{O(\varepsilon_0^2i)})$-sampler, forcing
    \[
    i \geq \Omega\left(\frac{\log(1/\beta_0)}{\varepsilon_0^2}\right).
    \]
    Combining these bounds we have that for any valid choice of $i$ the top \maxsizity $k$ of an $(\varepsilon,\beta,\varepsilon_0,\beta_0)$-double sampler is at least
    \[
    k \geq \Omega\left(\frac{\log(\frac{1}{\beta})\log(\frac{1}{\beta_0})}{\varepsilon^2\varepsilon_0^2}\right).
    \]
    It is therefore enough to argue that the overhead is at least exponential in the \maxsizity $k$:
    \[
    \frac{|V_2|}{|V_0|} \geq 2^{\Omega(k)}.
    \]
    This is true for any non-contracting complex. In particular, since $|\X[k]| \geq |\X[k/2]|$, the average degree of every $(k/2)$-face (and hence every vertex) is at least $2^{\Omega(k)}$.
\end{proof}
We remark that it may also be possible to construct double samplers with smaller overhead by removing the \textit{simplicial} requirement from the definition (see e.g.\ the relaxed notion in \cite{DoronW2022}). It is likely, for instance, that `Grassmannian HDX' \cite{kaufman2021garland,golowich2023grassmannian,dinur2023new} based on subspace structure lead to double-samplers with different parameters-overhead trade-offs.

\subsection{A Degree Lower Bound for HDX}\label{sec:degree}
One of the most classical results in the study of expander graphs is the Alon-Boppana Theorem \cite{nilli1991second}, roughly stating that any family of bounded-degree $\lambda$-expanders must have degree at least $\frac{2}{\lambda^2}$. In higher dimensions, the situation is less understood. Not only should degree scale in some way with the local expansion $\lambda$, but dependence on \textit{dimension} becomes a critical parameter in application. In this section, we give (to our knowledge) the first systematic study of the degree of high dimensional expanders. In particular, we prove that various families of hyper-regular HDX require super-exponential degree and prove a threshold-phenomenon exists at the TD-barrier: there exist (reasonably balanced) constructions of $1$-TD complexes achieving exponential degree at every level, while any (reasonably balanced) $\lambda$-TD complex must have degree $2^{-\Omega(k^2)}$ for any level $k \leq \sqrt{d}$. 

Before applying our stronger concentration bounds, we first look at what one can infer about the degree of HDX directly from spectral methods. The following proposition, though not to our knowledge appearing anywhere in the literature, is the result of applying elementary spectral methods to the problem:
\begin{proposition}[Spectral Lower Bound]\label{prop:deg-spectral}
    Let $X$ be a $d$-\maximal $\{\lambda_i\}$-two-sided local-spectral expander. Then for every $k < d$, the degree of $X$'s $k$-skeleton is at least:
    \[
    \deg^{(k)}(X) \geq \frac{1}{(k-1)!}\prod\limits_{i=1}^{k-1} \frac{1}{\lambda_i}
    \]
\end{proposition}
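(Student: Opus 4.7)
The plan is to prove this directly via a greedy counting argument, relying on a single key lemma: for any $\lambda$-two-sided spectral expander graph $G = (V,E,\pi)$ with no self-loops, one has $|V| \geq 1/\lambda$. I would establish this first by applying the standard two-sided expander bound \eqref{eq:basic-expanders} with $f = g = \mathbb{1}_v - \pi(v)\mathbb{1}$. Since the graph has no self-loops, $\iprod{\mathbb{1}_v, A\mathbb{1}_v} = 0$, which upon expansion gives $\pi(v)^2 \leq \lambda \pi(v)(1-\pi(v))$, hence $\pi(v) \leq \lambda/(1+\lambda) \leq \lambda$ for every vertex. Combined with $\sum_v \pi(v) = 1$, this yields $|V| \geq 1/\lambda$. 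Since $X$ is a $\{\lambda_i\}$-two-sided HDX, applying this to the $1$-skeleton of the link of any $(i-1)$-face shows that such a link has at least $1/\lambda_i$ vertices.

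Next I would fix an arbitrary vertex $v \in \X[1]$ and count the $k$-faces containing $v$, which equals $|\SC[X][k-1][v]|$, the number of $(k-1)$-faces in the link $X_v$. Switching to ordered tuples:
\[
(k-1)! \cdot |\SC[X][k-1][v]| = \Bigl|\bigl\{(w_1,\dots,w_{k-1}) \text{ distinct} : \{v,w_1,\dots,w_{k-1}\} \in \X[k]\bigr\}\Bigr|.
\]
Such tuples can be built greedily: having chosen $w_1,\dots,w_{i-1}$, the next coordinate $w_i$ ranges over $\SC[X][1][\{v,w_1,\dots,w_{i-1}\}]$, the vertex set of the $1$-skeleton of a codimension-$(k-i)$ link. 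By the key lemma this has at least $1/\lambda_i$ elements (for the appropriate level index), so the total count is at least $\prod_{i=1}^{k-1} 1/\lambda_i$. Dividing by $(k-1)!$ and taking the max over $v$ gives
\[
\deg^{(k)}(X) \geq \max_v |\SC[X][k-1][v]| \geq \frac{1}{(k-1)!}\prod_{i=1}^{k-1}\frac{1}{\lambda_i}.
\]
The condition $k < d$ in the statement is exactly what is needed to ensure that every link encountered in the chain has codimension $\geq 2$, so the HDX assumption applies.

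The proof is short and the main subtlety is really conceptual rather than technical: the bound is actually quite weak, both because the $1/\lambda$ vertex lower bound (as opposed to the $1/\lambda^2$ one would get from a true Alon–Boppana-type argument on a regular graph) is lossy, and because the greedy count loses nothing about the \emph{interaction} between the different link-level expansions. The main obstacle is therefore not proving the result but realizing that, unlike for graphs, purely spectral methods appear insufficient to match the conjectured optimum $\lambda^{\tilde{O}(d^2)}$ degree lower bound. This is precisely the motivation to pass, in the later subsections, to the stronger sampling/concentration machinery of \pref{thm:hdx-is-sampler} combined with the degree lower bounds of \cite{CanettiEG1995}, which does yield super-exponential bounds for sufficiently balanced complexes.
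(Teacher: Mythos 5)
Your proof is correct and follows exactly the paper's approach: a greedy tuple-count $\deg^{(k)}(v) \geq \frac{1}{(k-1)!}\prod_{i=1}^{k-1}\min_{t\in\X[i]}|\X[1][t]|$ combined with the fact that a two-sided $\lambda$-expander has at least $1/\lambda$ vertices. The only difference is that you supply a self-contained proof of that vertex lower bound via the expander mixing lemma applied to $\mathbb{1}_v$, whereas the paper cites it as Lemma A.7 of \cite{gaitonde2022eigenstripping}.
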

\begin{proof}
    Observe the $k$-th degree of a vertex $v \in \X[1]$ can be bounded by
    \[
    \frac{1}{(k-1)!}\prod_{i=1}^{k-1} \min_{t \in \X[i]}\{|\X[1][t]|\}
    \]
     simply by counting the number of ways a face $\{v_2,\ldots,v_{k-1}\} \in \SC[X][k][v_1]$ can be sampled by choosing $v_2 \sim \SC[X][1][v_1]$, $v_2 \in \X[1][{\{v_1,v_2\}}]$ and correcting for the $\frac{1}{(k-1)!}$ possible orderings of the face. The result now follows from the elementary fact (see e.g.\ \cite[Lemma A.7]{gaitonde2022eigenstripping}) that any two-sided $\lambda$-expander has at least $\frac{1}{\lambda}$ vertices.
\end{proof}
We remark that one might reasonably think to use Alon-Boppana \cite{nilli1991second} (the classic `optimal' trade-off between expansion and degree) to prove a version of the above for one-sided expansion. This does not work for high dimensional expanders, since their links may have low diameter.

\pref{prop:deg-spectral} implies that any (hyper-regular) $o(\frac{1}{d})$-two-sided HDX must have super-exponential degree but is essentially trivial for complexes near the TD-barrier, under spectral independence, or even for arbitrarily strong partite one-sided HDX. In these cases, even if one is willing to look e.g.\ at polynomial size cutoffs (as is quite common in the literature) \pref{prop:deg-spectral} only gives a slightly super-exponential lower bound:
\[
\deg^{(k)}(X) \geq \exp(\Omega(k\log(k))).
\]
Using concentration, we improve this bound to strongly super-exponential. We start with the partite case.
\begin{theorem}\label{thm:hdx-degree-partite}
    Fix $c \in [0,1]$ and let $X$ be a $d$-\maximal hyper-regular partite $2^{-\Omega(d^c)}$-HDX. Then either:
        \begin{enumerate}
        \item $X$'s 1-skeleton is dense:\footnote{While this condition may seem strange, note it is necessary to handle dense settings. Take, e.g., $\{0,1\}^n$, which is a hyper-regular $0$-local-spectral expander whose $\sqrt{n}$-skeleton has degree $\exp(\sqrt{n}\log(n))$.}
        \[
        \deg^1(X) \geq \frac{|\X[1]|}{2}
        \]
        \item $X$ has super-exponential degree:
        \[
        \deg^{(d)}(X) \geq \exp(\Omega(\max\{d^{2c},d\}))
        \]
    \end{enumerate}
\end{theorem}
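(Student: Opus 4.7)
The plan is to split into the two stated cases and focus on Case 2, where the 1-skeleton is not dense. Since $X$ is hyper-regular, every vertex has the same 1-skeleton degree $\deg^1 < |\X[1]|/2$ in Case 2, so for any fixed vertex $v$ the non-neighbor set $A_v = \X[1] \setminus (N(v) \cup \{v\})$ has relative size at least $1/2$. Every top-face $s$ containing $v$ completely misses $A_v$, and therefore $\varepsilon$-mis-samples $A_v$ for any $\varepsilon < 1/2$. Applying the partite case of \pref{thm:hdx-is-sampler} (or its generalization in the remark after \pref{def:nice-complexes} for $c < 1$), the inclusion graph $(\X[d], \X[1])$ is an $(\varepsilon, \beta)$-function sampler with $\beta$ exponentially small in $d^c$, which forces $\deg^{(d)}(v)/|\X[d]| \leq \beta$. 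Combined with the hyper-regular identity $\deg^{(d)} = d|\X[d]|/|\X[1]|$ this yields the key bound $|\X[1]| \geq d/\beta$.

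Next, I would invoke the classical degree lower bound of Canetti, Even, and Goldreich \cite{CanettiEG1995}, which states that any $(\varepsilon,\beta)$-sampler graph satisfies $|L|/|R| \gtrsim 1/\beta$ (up to $\poly(1/\varepsilon)$ factors). Applied to $(\X[d], \X[1])$, this gives $|\X[d]|/|\X[1]| \geq \exp(\Omega(d^c))$, and the hyper-regular degree formula then yields the baseline $\deg^{(d)}(X) \geq \exp(\Omega(d^c))$. To boost this to $\exp(\Omega(d))$ for the $c \leq 1/2$ regime, I would pass to a larger skeleton --- specifically, considering $k$-skeletons for $k \leq d^{(1+c)/2}$ and applying \pref{thm:split-concentrate} within the skeleton --- whose sampling parameter becomes $\exp(-\Omega(d))$ at the top level by interpolating the expansion rate.

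For the quadratic $\exp(\Omega(d^{2c}))$ exponent, the plan is to iterate the Case 2 argument recursively through links via strong induction on \maxsizity. Each link $X_v$ is again a hyper-regular $(d-1)$-partite $2^{-\Omega((d-1)^c)}$-HDX, so either $X_v$ is itself in Case 2 (and the inductive hypothesis applies), or its 1-skeleton is dense. In the former case, induction gives the top-level degree of $X_v$ at least $\exp(\Omega((d-1)^{2c}))$, which equals the codegree $\text{codeg}(u,v)$ for any $u \in X_v(1)$; combined with the hyper-regular identity $\deg^{(d)}(v) \cdot (d-1) = |X_v(1)| \cdot \text{codeg}(u,v)$ and the trivial bound $|X_v(1)| \geq d-1$, this yields $\deg^{(d)}(v) \geq \exp(\Omega(d^{2c}))$. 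If instead all links at all levels are dense (the complement scenario), then the iterated density bound $|X_s(1)| \geq |\X[1]|/2^{|s|}$ combined with the Case 2 estimate $|\X[1]| \geq d/\beta = \exp(\Omega(d^c))$ forces each part $X[j]$ to have size $\exp(\Omega(d^c))$, and a direct counting argument on partite top-faces gives approximately $\prod_j |X[j]| \geq \exp(\Omega(d \cdot d^c)) = \exp(\Omega(d^{1+c}))$ top-faces containing $v$, which is at least $\exp(\Omega(d^{2c}))$ for $c \leq 1$.

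The hard part will be closing out this recursion cleanly, especially when the case analysis at different link-levels oscillates between Case 1 and Case 2, and carefully tracking the sampling parameters through the induction (links inherit expansion, but the effective \maxsizity decreases each step). A related subtlety is that for $c < 1$, only the relaxed version of \pref{thm:hdx-is-sampler} alluded to in the remark after \pref{def:nice-complexes} applies, which produces slightly weaker sampling than the two-sided case, but should still suffice through the induction with appropriate bookkeeping of constants.
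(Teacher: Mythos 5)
Your approach differs from the paper's in a fundamental way, and the divergence creates at least one real gap. The paper's key tool is \pref{lem:link-sampler}, which shows that if both the inclusion graph $(\X[d],\X[1])$ and all vertex-link inclusion graphs $(\SC[X][d][v],\SC[X][1][v])$ are samplers, then the 1-\emph{skeleton} of $X$ is itself a sampler graph. It then feeds this through \pref{cor:sample-degree-bound} (CEG applied to the 1-skeleton, where $L=R=\X[1]$) to get a clean, \emph{unified} lower bound $\deg^{(1)}(X_t) \geq \min\{|\X[1][t]|/2,\exp(c'k')\}$. The $\min$ is crucial: it collapses your ``Case 1 vs.\ Case 2'' branching into a single inequality that composes through the links. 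The inductive step then simply observes that $|\X[1][t]|$ is the minimum 1-skeleton degree over $(|t|-1)$-links, so the $\min$-bound recursion never branches. You identified the ``oscillation between cases at different link levels'' as the hard part of your plan; the paper avoids this difficulty entirely with the $\min$-formulation, and you lack the tool (the 1-skeleton sampler lemma) needed to get it.

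Beyond the structural difference, your Step 2 has a concrete error. CEG Theorem 2 as stated in the paper is $t|R| \geq |L|(1-2\varepsilon)/(2\delta)$, applied meaningfully in the paper only to the 1-skeleton where $L=R$ and $t$ is the graph degree. Applied to the inclusion graph $(\X[d],\X[1])$ with $L=\X[d]$, $R=\X[1]$ and $t=d$ (the left degree, since each $d$-face has $d$ vertices), it gives $|\X[d]|/|\X[1]| \leq 2d\delta/(1-2\varepsilon)$ — an \emph{upper} bound, the opposite of what you claimed. Your statement ``CEG says $|L|/|R| \gtrsim 1/\beta$'' is not a consequence of CEG on the inclusion graph, and plugging the correct CEG inequality together with the hyper-regular double-count $\deg^{(d)}|\X[1]| = d|\X[d]|$ actually yields a vacuous or contradictory constraint ($\delta \gtrsim 1/d$). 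The step you wanted needs CEG applied to the 1-skeleton sampler, which is exactly what \pref{lem:link-sampler} provides.

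Two smaller issues: (i) your baseline mis-sampling observation ($|\X[1]| \geq d/\beta$) is fine and is in the spirit of the paper, but it lower bounds $|\X[1]|$, not any degree, and still needs the product-through-links machinery to become a degree bound; (ii) your plan to invoke \pref{thm:split-concentrate} to boost the $c < 1/2$ regime is a mismatch — that theorem is stated for two-sided local-spectral expanders, whereas the relevant sampling theorem in the partite case is \pref{thm:sampling-like-complete}, which is what the paper actually uses (via the choice $k'=\max\{\sqrt{d},d^c\}$).
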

Second we prove an analogous bound for skeletons of TD and SI-complexes.
\begin{theorem}\label{thm:hdx-degree-TD}
Let $\lambda,c \in [0,1)$, $\eta>0$, and $d \in \mathbb{N}$. If $X$ is a $d$-\maximal hyper-regular $\lambda$-TD or $\eta$-SI complex then for any $k \leq d^c$:
    \begin{enumerate}
        \item $X$'s 1-skeleton is dense:
        \[
        \deg^1(X) \geq \frac{|\X[1]|}{2}
        \]
        \item $X$'s $k$-skeleton has super-exponential degree:
        \[
        \deg^{(k)}(X) \geq \exp(\Omega_{\lambda,\eta}(\min\{k^{2},k^{1/c}\}))
        \]
    \end{enumerate}
\end{theorem}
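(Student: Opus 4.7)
The proof mirrors the partite case \pref{thm:hdx-degree-partite}: perform a dichotomy on whether the $1$-skeleton is dense, and in the sparse case combine concentration with the classical sampler-based degree lower bounds of \cite{CanettiEG1995}. If $\deg^1(X) \geq |\X[1]|/2$, the first alternative of the theorem holds and we are done; otherwise no single vertex dominates the $1$-skeleton and, together with hyper-regularity of $X$, this yields the spread condition (``vertex-uniform and $\Theta(\varepsilon)$-hitting'' in the sense of \pref{cor:sampling-for-hitters}) that the CEG degree lower bound requires.

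Next, split the analysis in $k$ into two regimes corresponding to the two terms in the $\min$. When $k \leq \sqrt{d}$, the $k$-skeleton is $c$-locally nice by \pref{claim:td-si-are-nice}, so \pref{thm:hdx-is-sampler} gives that $(\X[k], \X[1])$ is an $(\varepsilon, \exp(-\Omega_{\lambda,\eta}(\varepsilon^2 k)))$-sampler, and more generally that $(\X[k_{j+1}], \X[k_j])$ is an $(\varepsilon, \exp(-\Omega(\varepsilon^2 k_{j+1}/k_j)))$-sampler for any $k_j < k_{j+1} \leq k$. When $k > \sqrt{d}$ the $k$-skeleton is no longer automatically nice, and we instead appeal to the exponential Lipschitz concentration of \pref{cor:TD-exponential} (respectively \pref{cor:SI-exponential}) in the ambient $d$-\maximal complex $Y$: the lift $U_{1,d} f$ of any $f: \Y[1] \to [0,1]$ is $1/d$-Lipschitz, which makes $(\Y[d], \Y[1])$ an $(\varepsilon, \exp(-\Omega_{\lambda,\eta}(\varepsilon \sqrt{d})))$-sampler, and \pref{lem:raising} pulls this concentration down to the $k$-skeleton.

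Finally, convert the sampler guarantees into degree lower bounds via hyper-regularity, which gives $\deg^k(X) = k\,|\X[k]|/|\X[1]|$, so it suffices to lower bound $|\X[k]|/|\X[1]|$. The CEG bound applied directly to $(\X[k], \X[1])$ yields $|\X[k]|/|\X[1]| \geq \exp(\Omega(\log(1/\beta)))$, producing only $\exp(\Omega(\varepsilon^2 k))$ in regime~A and $\exp(\Omega(\varepsilon \sqrt{d}))$ in regime~B---one exponent short of the claimed $\exp(\Omega(k^2))$ and $\exp(\Omega(k^{1/c}))$. Recover the missing factor by iterating CEG along a geometric ladder $1 = k_0 < k_1 < \cdots < k_r = k$, using that each $(\X[k_{j+1}], \X[k_j])$ is itself a sampler by \pref{thm:hdx-is-sampler} and composing the per-layer ratios to obtain $|\X[k]|/|\X[1]| \geq \exp(\Omega((\log(1/\beta))^2))$ at the appropriate scale. \textbf{The main obstacle} is precisely this iteration: the ``non-dense''/hitting-set hypothesis must propagate to each intermediate level $\X[k_j]$ so that CEG continues to apply, which is exactly where hyper-regularity of $X$ is essential, as it enforces uniformity of link weights across the ladder. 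The SI case is handled identically, with the $\eta$-dependent constants of \pref{cor:SI-exponential} absorbed into the implied constants $\Omega_{\lambda,\eta}$.
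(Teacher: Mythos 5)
There is a genuine gap. Your key step---replacing the paper's product-of-link-degrees decomposition with the identity $\deg^{(k)}(X) = k\,|\X[k]|/|\X[1]|$ and then lower bounding $|\X[k]|/|\X[1]|$ by iterating CEG along a geometric ladder $1=k_0 < k_1 < \cdots < k_r = k$---does not recover the missing exponent. The rung graph $(\X[k_{j+1}],\X[k_j])$ is roughly an $(\varepsilon, \exp(-\Omega(\varepsilon^2 k_{j+1}/k_j)))$-sampler, so on a geometric ladder (constant ratio $k_{j+1}/k_j$) the per-rung failure probability $\delta_j$ is a \emph{constant}, and CEG per rung gives only a constant-factor lower bound. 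Composing $O(\log k)$ such rungs yields at most $\mathrm{poly}(k)$, not $\exp(\Omega(k^2))$. Even if one tries other ladder schedules, a computation shows $\sum_j k_{j+1}/k_j$ (the quantity that ends up in the exponent) is maximized at $\Theta(k)$ by a single rung, so this route is capped at the same $\exp(\Omega(k))$ you already obtained directly---it cannot square the exponent. Your claimed $|\X[k]|/|\X[1]| \geq \exp(\Omega(\log^2(1/\beta)))$ is asserted but not achieved by the ladder.

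The paper's proof of \pref{thm:hdx-degree-TD} avoids this entirely: it uses the chain formula $\deg^{(k)}(v_0) \geq \frac{1}{(k-1)!}\prod_{i=1}^{k-1}\min_{t\in \X[i]}\deg^{(1)}(X_t)$ and then lower bounds each \emph{link} degree $\deg^{(1)}(X_t)$ separately via \pref{cor:sample-degree-bound}, which in turn relies on \pref{lem:link-sampler}. That lemma is the load-bearing step and it does not appear in your argument: it transfers $d$-vs-$1$ sampling of $X_t$ \emph{and its vertex links} into $1$-vs-$1$ sampling of $X_t$'s one-skeleton, so CEG can be applied to the skeleton of each link. Crucially, the link $X_t$ at level $i$ is still $(d-i)$-\maximalpunc, so it inherits the full-depth sampling error $\exp(-\Omega(\sqrt{d}))$ (via \pref{cor:TD-exponential}); each link therefore contributes a factor $\exp(\Omega(\sqrt{d}))$ (modulo the $1/2^i$ correction), and the product over the $k-1$ levels gives $\exp(\Omega(k\sqrt{d}))$, which is $\geq \exp(\Omega(k^2))$ for $k < \sqrt{d}$ and $\exp(\Omega(d)) = \exp(\Omega(k^{1/c}))$ when $k\geq \sqrt{d}$ (the product is capped at $O(\sqrt{d})$ useful levels). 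Working with the ratios $|\X[k_{j+1}]|/|\X[k_j]|$ discards exactly the ability to re-use the ambient depth $d$ at every level. As a minor point, your invocation of \pref{cor:sampling-for-hitters} is out of place: that corollary is a \emph{lower bound} on sampling error (used for double-sampler overhead lower bounds), whereas the degree argument here needs an \emph{upper bound} on sampling error (i.e., that $X$ and its links are good samplers) to feed CEG.
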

The latter setting above is tight in the sense that when $\lambda=1$, that is \textit{at} the TD-barrier, there exist (reasonably balanced) complexes where $\deg^{(k)}(X)=\exp(O(k))$ for all $k\leq d$. On the other hand, beyond the TD barrier the best known degree upper bound for a hyper-regular (or even reasonably balanced) HDX is $\exp(\exp(d))$ \cite{FriedgutI2020}, leaving a substantial gap with the above. Nevertheless, \pref{thm:hdx-degree-TD} is the first to exhibit quantitative threshold behavior for degree at the TD-barrier.

\pref{thm:hdx-degree-partite} and \pref{thm:hdx-degree-TD} leave open two critical questions:
\begin{enumerate}
    \item Can we prove super-exponential bounds for the \textit{top} level of a $\lambda$-TD/$\eta$-SI complex?
    \item Can the assumption of hyper-regularity be dropped?
\end{enumerate}
As we will see in the proof of \pref{thm:hdx-degree-TD}, the first question would be resolved if one could give \textit{any} asymptotic improvement over exponential concentration for weak HDX. In other words, any bound of the form:
\[
\Pr_{s \in \X[d]}\left[\frac{1}{d}|s \cap A| - \mu > \frac{1}{2}\right] \leq \exp(\omega(\sqrt{d}))
\]
for $X$ and its links would imply a super-exponential degree lower bound. If, for instance, we have a bound of \(\exp(-\Omega(d))\) holds, this would imply a degree lower bound of $\exp(\Omega(d^2))$ on the top level of $X$.

Regarding the second question, our proof technique extends to any complex whose links are reasonably `balanced' in the sense that there should not be a $2^{-\Omega(\sqrt{d})}$ fraction of vertices making up an $\Omega(1)$ fraction of the mass. Unfortunately, many classical HDX constructions, e.g.\ the Ramanujan complexes of \cite{LubotzkySV2005a}, actually are highly unbalanced in this sense. The links of these constructions essentially approximate an `unbalanced product': they are partite complexes which are regular on each individual part, but many of the parts are vanishingly small and therefore make up a significant portion of the mass. While known constructions of this type all have minimum part size at least exponential in $d$ (and therefore result in degree $\exp(d^2)$), it is unclear how to rule out a construction whose links look, e.g., like $\{0,1\}^{d} \times [2^d]$. Such a complex could potentially satisfy optimal inclusion sampling while maintaining exponential degree. 

We leave the possible extension of our lower bounds to the top level of unbalanced complexes (or conversely the construction of highly locally unbalanced exponential-size HDX) as one of the main open problems suggested by this work:

\begin{question}[Lower Bounds for Locally Unbalanced HDX]\label{question:degree}
    Let $\{X_n\}$ be an infinite family of bounded-degree HDX that are either $\lambda$-TD or $\eta$-SI. Is the degree of every (sufficiently large) $X_n$ super-exponential:
    \[
    \text{deg}(X_n) \geq \exp(\omega(d))?
    \]
\end{question}
\pref{question:degree} is critical from the standpoint of applying HDX in other areas of theoretical computer science where degree dependence corresponds to the `overhead' incurred by using the complex as a gadget. For instance, the lower bound in \pref{thm:double-samplers} can be achieved by HDX essentially if and only if \pref{question:degree} is false. Similarly, a variant of \pref{question:degree} where $d$ may depend on $n$ is critical to the study of agreement testers whose soundness scale with the \emph{number of vertices} of the system. Constructing polynomial size testers with inverse polynomial soundness is the combinatorial core of the so-called \emph{sliding scale conjecture} in PCP theory. If the above lower bound holds, HDX cannot give such objects: $d$ must be taken to be $o(\log(n))$ to maintain polynomial size, and the resulting soundness of any HDX-based poly-size agreement tester would be at best $2^{-o(\log(n))}$, missing the inverse polynomial mark. 

We now move to the proofs of \pref{thm:hdx-degree-partite} and \pref{thm:hdx-degree-TD} which are inspired by our \textit{localization} technique for proving concentration that samples \(s= \set{v_1,v_1,\dots,v_d}\) by iteratively sampling \(v_0 \in X\), then \(v_2 \in X_{v_1}\), \(v_3 \in X_{\set{v_1, v_2}}\)... and analyzing the local concentration at each step. Upon further inspection, this technique yields the fact that the $1$-skeleton of a complex with good sampling in $0$ and $1$-links must itself be an excellent sampler graph.
\begin{lemma} \label{lem:link-sampler}
    Fix \(\varepsilon,\delta_1,\delta_2 > 0\). Let \(X\) be a \(d\)-\maximal simplicial complex such that:
    \begin{enumerate}
        \item $(\X[d],\X[1])$ is a \((\frac{\varepsilon}{2}, \delta_1)\)-additive sampler.
        \item For every \(v \in \X[1]\), \((\SC[X][d][v],\SC[X][1][v])\) is an \((\frac{\varepsilon}{2},\delta_2)\)-additive sampler.
    \end{enumerate}
    Then the partite double cover of $X$'s $1$-skeleton is a \((\varepsilon,\frac{2\delta_1}{1-\delta_2})\)-sampler. 
\end{lemma}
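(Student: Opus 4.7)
The plan is to reduce sampling on $X$'s $1$-skeleton directly to the two given sampling hypotheses, exploiting the fact that a pair $(u, s)$ with $u \in s \in \X[d]$ can equivalently be sampled either by drawing $s \in \X[d]$ and then $u$ uniformly from $s$, or by drawing $u \in \X[1]$ and then $s \in \SC[X][d][u]$ from the link. Fix $A \subseteq \X[1]$ and let $f = \one_A$; write $\mu_u = \Ex[v \in \SC[X][1][u]]{f(v)}$, which is precisely the neighborhood average of $u$ in the $1$-skeleton. Introduce the ``global'' and ``local'' averages $\mathrm{avg}(s) = \Ex[v \in s]{f(v)}$ and $\mathrm{avg}_u(s) = \Ex[v \in s \setminus \{u\}]{f(v)}$; a short calculation shows these differ by at most $1/d$, an error to be absorbed into the constants.

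Assumption (1) applied to $f$ forces the set $B_1 = \sett{s \in \X[d]}{\abs{\mathrm{avg}(s) - \Pr[A]} > \varepsilon/2}$ to have measure at most $\delta_1$. Assumption (2) applied to the restriction of $f$ to each link forces, for every $u$, the set $B_2(u) = \sett{s \in \SC[X][d][u]}{\abs{\mathrm{avg}_u(s) - \mu_u} > \varepsilon/2}$ to have conditional mass at most $\delta_2$. The key step is a triangle inequality: if $u$ satisfies the one-sided deviation $\mu_u - \Pr[A] > \varepsilon$ and $s \in \SC[X][d][u] \setminus B_2(u)$, then (up to the $1/d$ slack)
\[
\mathrm{avg}(s) - \Pr[A] \geq \bigl(\mu_u - \Pr[A]\bigr) - \abs{\mathrm{avg}_u(s) - \mu_u} > \varepsilon/2,
\]
forcing $s \in B_1$. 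Hence for every such one-sided-bad $u$, at least a $(1-\delta_2)$-fraction of $s \in \SC[X][d][u]$ lies in $B_1$.

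Double counting in the joint distribution on $(u, s)$ then yields
\[
(1-\delta_2)\Pr_u[\mu_u - \Pr[A] > \varepsilon] \leq \Pr_{u,s}[\mu_u - \Pr[A] > \varepsilon,\ s \in B_1] \leq \Pr[B_1] \leq \delta_1,
\]
so each one-sided tail of the neighborhood deviation is at most $\delta_1/(1-\delta_2)$. Running the symmetric argument for $\mu_u - \Pr[A] < -\varepsilon$ and union bounding the two tails produces the claimed $\frac{2\delta_1}{1-\delta_2}$, and transferring to the partite double cover is purely cosmetic since its edge distribution matches that of the $1$-skeleton on each side.

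I expect the substantive content to lie entirely in the triangle-inequality and double-counting step above; the main technical nuisance, rather than a genuine obstacle, is tracking the $1/d$ gap between $\mathrm{avg}(s)$ and $\mathrm{avg}_u(s)$ so that the intermediate $\varepsilon/2$ thresholds still apply. This is handled either by tightening the thresholds slightly (absorbing the cost into constants) or by working in the regime where $d$ is large enough that $1/d \ll \varepsilon$.
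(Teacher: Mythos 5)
Your proof is correct and follows essentially the same route as the paper's: define the set of over-sampling vertices and the set of bad $d$-faces, observe via the triangle inequality that a vertex over-sampling by $\varepsilon$ forces all but a $\delta_2$-fraction of its containing $d$-faces to deviate by $\varepsilon/2$, and then double count to obtain $(1-\delta_2)\Pr[B] \leq \delta_1$, concluding by symmetry for the other tail. You are actually slightly more careful than the paper in explicitly flagging the $1/d$ discrepancy between the within-$s$ average and the within-$(s\setminus\{u\})$ average, which the paper silently elides.
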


\begin{proof}
    Fix \(A \subseteq \X[1]\). At a high level, the proof follows from observing that if the base graph is a poor sampler, then after drawing the first vertex $v_1$ of our $d$-set $t=\{v_1,\ldots,v_{d}\}$, there is a non-trivial chance that the neighborhood of $v_1$ (i.e.\ $\SC[X][1][v_1]$) `sees' $A$ in the wrong proportion. However, since the remainder of the face $\{v_2,\ldots, v_{d}\} \in \SC[X][d][v_1]$, if $(\SC[X][d][v_1],\SC[X][1][v_1])$ is indeed a good sampler it will maintain this incorrect proportion from the first step and violate sampling of $A$ in $(\X[d],\X[1])$.

    We now formalize this argument. Given our assumptions we'd like to show:
    \[
    \Prob[v \in {\X[1]}]{\Prob[u \in {\SC[X][1][v]}]{A} > \prob{A} + \varepsilon} \leq \frac{\delta_1}{1-\delta_2}.
    \]
    The lower tail follows from applying this on the complement set. Toward this end, denote by $B$ the set of `bad' vertices in $A$ that over-sample the base graph:
    \[
    B=\sett{v \in \X[1]}{\Prob[u \in {\SC[X][1][v]}]{A} > \prob{A} + \varepsilon}.
    \]
    We need to show $\Pr[B] \leq \frac{\delta_1}{1-\delta_2}$. We bound $\Pr[B]$ by relating it to the set of bad $d$-faces:
    \[
    C = \sett{t \in \X[d]}{\abs{\Prob[u \in t]{A} - \prob{A}} > \frac{\varepsilon}{2}},
    \] 
    which has measure at most $\Pr[C] \leq \delta_1$ by Condition (2). The key to relate $B$ and $C$ is to observe we can sample a random $d$-face $t$ by first sampling a vertex $v \in \X[1]$, then sampling $t$ from its link:
    \[
    \prob{C} = \Ex[t \in {\X[d]}]{\one_C(t)} = \Ex[v \in {\X[1]}]{\Ex[t \in {\SC[X][d][v]}]{\one_C(t)}} \geq \prob{B}\Ex[v \in {\X[1]}]{\Ex[t \in {\SC[X][d][v]}]{\one_C(t)}~\Bigg|~v \in B}.
    \]
    When \(v \in B\), we have \(\Prob[u \in {\SC[X][1][v]}]{A} \geq \prob{A} + \varepsilon\) by assumption, but then by Condition (1)
    \[
    \Ex[v \in {\X[1]}]{\Ex[t \in {\SC[X][d][v]}]{\one_C(t)}~\Bigg|~v \in B} \geq \Prob[t \in {\SC[X][d][v]}]{\cProb{}{A}{t} > \Prob[u \in {\SC[X][1][v]}]{A} - \frac{\varepsilon}{2}} \geq 1-\delta_2.
    \]
Putting everything together we have
    \[
    (1-\delta_2) \prob{C} \leq \prob{B} \leq \delta_1.
    \]
    and \(\prob{C} \leq \frac{\delta_1}{1-\delta_2}\) as desired.
\end{proof}
Combining this fact with standard degree lower bounds for sampler graphs \cite{CanettiEG1995} gives the following degree lower bound for the 1-skeletons of such complexes: 

\begin{corollary} \label{cor:sample-degree-bound}
    Let \(X\) be a $d$-\maximal simplicial complex satisfying:
    \begin{enumerate}
        \item $(\X[d],\X[1])$ is a \((\frac{1}{10},\delta_1)\)-sampler
        \item For all $v \in \X[1]$, $(\SC[X][d][v],\SC[X][1][v])$ is a \((\frac{1}{10},\delta_2)\)-sampler.
    \end{enumerate}
    Then $X$'s $1$-skeleton has high max-degree:
    \[
    \exists v\in \X[1]: \deg^{(1)}(v) \geq \min\left\{\frac{|\X[1]|}{2},\frac{2(1-\delta_2)}{5\delta_1}\right\}
    \]
\end{corollary}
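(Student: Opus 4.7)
The proof is a two-step composition of \pref{lem:link-sampler} with a standard sampler-degree lower bound from \cite{CanettiEG1995}. First, I will apply \pref{lem:link-sampler} with \(\varepsilon = 1/5\). Since the two hypotheses supply sampling error \(\varepsilon/2 = 1/10\) both at the top level and inside every vertex link, the lemma's conclusion yields that the partite double cover \(\widetilde{G}\) of \(X\)'s \(1\)-skeleton is a \(\left(\tfrac{1}{5},\, \tfrac{2\delta_1}{1-\delta_2}\right)\)-additive sampler. Because the partite double cover preserves the degree of every vertex, the problem is thereby reduced to exhibiting a vertex of \(\widetilde{G}\) of the claimed degree as an \(L\)-vertex of the sampler.

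Second, I will invoke a classical sampler-degree lower bound in the spirit of \cite{CanettiEG1995}: any \((\varepsilon, \beta)\)-additive sampler graph on uniform, equal-sized sides either contains a vertex of degree at least half the ground set, or a vertex of degree at least \(\Omega(\varepsilon/\beta)\). The mechanism is the usual adversarial-set argument: assuming for contradiction that every vertex has degree strictly below both \(|\X[1]|/2\) and \(\tfrac{2(1-\delta_2)}{5\delta_1}\), I would take a maximum-degree vertex \(v^*\) and test the sampler against \(A = N(v^*)\); then \(v^*\) itself mis-samples \(A\) by nearly \(1\), and a double-counting argument ranging over the analogous adversarial sets for \emph{all} potential witnesses, together with the \(\beta\)-fraction sampler guarantee, forces the maximum degree to be at least \(\Omega(\varepsilon/\beta)\). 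Substituting \(\varepsilon = 1/5\) and \(\beta = \tfrac{2\delta_1}{1-\delta_2}\) recovers exactly the stated bound \(\tfrac{2(1-\delta_2)}{5\delta_1}\).

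The step that requires the most care is the double counting in the sampler-degree bound itself, specifically in obtaining the \(1/\beta\) dependence rather than the weaker \(\log(1/\beta)\) that a naive Chernoff-style counting would give. I expect this to be handled by carefully tailoring the family of adversarial test sets to the symmetric structure of the partite double cover (so that every low-degree vertex is witnessed as ``bad'' for a set, and the aggregate bad-pair count exceeds \(\beta n^2\)). The remaining bookkeeping — chasing the constants \(\varepsilon = 1/5\) and \(\beta = \tfrac{2\delta_1}{1-\delta_2}\) through the adversarial counting — is routine.
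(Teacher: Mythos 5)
Your proposal takes essentially the same route as the paper: apply \pref{lem:link-sampler} with \(\varepsilon = 1/5\) to conclude the (partite double cover of the) \(1\)-skeleton is a \(\bigl(\tfrac{1}{5}, \tfrac{2\delta_1}{1-\delta_2}\bigr)\)-sampler, then invoke a Canetti--Even--Goldreich--style degree lower bound. The paper does precisely this, but it simply \emph{cites} \cite[Theorem 2]{CanettiEG1995}, which states that an \((\varepsilon,\delta)\)-sampler \(G=(L,R,E)\) whose max degree \(t\) satisfies \(t \le |L|/2\) obeys \(t|R| \ge \frac{|L|(1-2\varepsilon)}{2\delta}\), and then specializes to \(L = R = \X[1]\) so that \(t \ge \frac{1-2\varepsilon}{2\delta}\). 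There is no need to re-derive this.

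Where your sketch is off is in the purported mechanism for that degree bound. Taking \(A = N(v^\ast)\) for a single max-degree vertex \(v^\ast\) only exhibits \emph{one} mis-sampling witness, and a single bad vertex merely shows \(\delta \gtrsim 1/|R|\), which is far too weak and not the form of the CEG bound. The actual CEG argument is a global edge-counting argument (essentially: a good sampler must have many edges, so that most left-set densities can be approximated by most right-vertex neighborhoods), and it does not factor through a privileged \(v^\ast\) or the neighborhood-sets \(N(v)\). Your remark contrasting a "\(1/\beta\) dependence" with a "\(\log(1/\beta)\) from Chernoff" is also misdirected: Chernoff-type \emph{upper-bound constructions} give degree \(\sim \log(1/\beta)/\varepsilon^2\), and the CEG \emph{lower bound} on the edge count only translates to a degree bound \(\Omega(1/\delta)\) here because \(|L|=|R|\); these are not competing estimates of the same quantity. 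None of this affects the structure of the proof, since citing \cite[Theorem 2]{CanettiEG1995} directly (as the paper does) is cleaner, but if you want to re-derive it the adversarial sets must range over random subsets of \(L\) rather than neighborhoods of high-degree right vertices.
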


\begin{proof}
    By \pref{lem:link-sampler} the underlying graph of \(X\) is a \((\frac{1}{5},\delta = \frac{2\delta_1}{1-\delta_2})\)-sampler. \cite[Theorem 2]{CanettiEG1995} states the following relation on a graph \(G=(L,R,E)\) that is an \((\varepsilon,\delta)\)-sampler of max degree \(t \leq \frac{|\X[1]|}{2}\).
    \[t|R| \geq \frac{|L|(1-2\varepsilon)}{2\delta}.\]
    Here \(L=R=\X[1]\) and this inequality gives \(t \geq \frac{0.8}{2 \delta} = \frac{2(1-\delta_2)}{5\delta_1}\).
\end{proof}
Notice this bound is substantially stronger than what can be inferred from spectral expansion alone. For instance, when $X$ is a \(\lambda = \poly(d^{-1})\)-HDX, Alon-Boppana only implies a degree lower bound of $\Omega(\lambda^{-2})=\poly(d^{-1})$, whereas combining the above with \pref{cor:TD-exponential} gives degree at least $\exp(-\sqrt{d})$. Given this fact, \pref{thm:hdx-degree-partite} and \pref{thm:hdx-degree-TD} follows by recursive application of \pref{cor:sample-degree-bound} on each link of the HDX.

\begin{proof}[Proof of \pref{thm:hdx-degree-partite} and \pref{thm:hdx-degree-TD}]
    Similar to the spectral argument, since $X$ is hyper-regular the $k$th degree of any vertex $v_0 \in \X[1]$ can be bounded by:
    \[
    \deg^{(k)}(v_0) \geq \frac{1}{(k-1)!}\prod\limits_{i=1}^{k-1}\min_{t \in \X[i]}\left\{\deg^{(1)}(X_t)\right\}.
    \]
    It is therefore sufficient to bound the degree of the 1-skeleton of the links of $X$. We split the analysis into the partite and TD/SI cases.
    \paragraph{Partite Case}
    Let $k' = \max\{\sqrt{d},d^c\}$. By \pref{thm:sampling-like-complete}, there exists a universal constant $c'>0$ such that for every $t \in X^{\leq \frac{d}{2}}$, the link $X_t$ satisfies:
    \begin{enumerate}
        \item $(\X[d-|t|][t],\X[1][t])$ is a $(\frac{1}{10}, \exp(-c'k'))$-sampler
        \item For all \(v \in \SC[X][1][t]\), $(\SC[X][d-|t|-1][t \cup \set{v}],\X[1][t \cup \set{v}])$ is a $(\frac{1}{10},  \exp(-c'k'))$-sampler.
    \end{enumerate}
     Setting $t \in \X[1]$, \pref{cor:sample-degree-bound} implies
    \[
    \deg^{(1)}(X) \geq \min\left\{ \frac{|\X[1]|}{2},\exp(-c'k')\right\}.
    \]
    Assume $\exp(-c'k') \leq \frac{|\X[1]|}{2}$ (else Condition (1) is satisfied and we are done). Under this assumption, we argue by induction that the max degree of any $i$-link is at least:
    \begin{equation}\label{eq:deg-bound}
        \forall i \leq \frac{d}{2}: \; \; \min_{t \in \X[i]}\{\deg^{(1)}(X_t)\} \geq \frac{1}{2^i}\exp(c'k').
    \end{equation}
    We remark the restriction on $i$ is to enforce that links of this level are still \((\frac{1}{10},\exp(-c'k'))\)-samplers.
    Plugging this back into our bound on degree gives $\exp(-\Omega(k'^2)) \geq \exp(-\Omega(\max\{d,d^{2c}\}))$ as desired. 
    
    To prove \pref{eq:deg-bound}, observe the base case $i=1$ is given by assumption. The inductive step simply follows from observing that the minimum number of vertices in any $i$-link is exactly the minimum degree of the 1-skeleton of any $(i-1)$-link. Thus applying \pref{cor:sample-degree-bound}, we have:
    \[
    \forall t \in \X[i]: \deg^{(1)}(X_t) \geq \min \left\{ \frac{|\X[1][t]|}{2},\exp(-c'k')\right\} \geq \frac{1}{2^i}\exp(c'k').
    \]
    as desired.
    \paragraph{TD/SI Case} When $X$ is $\lambda$-TD or $\eta$-SI, we'd like to run the same argument as above but run into a slight problem. We can still bound the $k$th degree of a vertex $v$ by:
    \[
    \deg^{(k)}(v) \geq \frac{1}{(k-1)!}\prod\limits_{i=1}^{k-1}\min_{t \in \X[i]}\left\{\deg^{(1)}(X_t)\right\},
    \]
    but because $X$ satisfies only exponential concentration, \pref{eq:deg-bound} becomes
    \begin{equation*}\label{eq:deg-bound2}
    \forall i \leq \frac{d}{2}: \min_{t \in \X[i]}\{\deg^{(1)}(X_t)\} \geq \frac{1}{2^i}\exp(c'\sqrt{d}).
    \end{equation*}
    This bound is now only non-trivial up to $i \leq O(\sqrt{d})$ levels of $X$, stopping us from proving super-exponential concentration for $X$'s top level. Nevertheless, when $k=d^c$, plugging the above into the degree calculation gives $\exp(d)=\exp(k^{1/c})$ whenever $c \geq 1/2$, and $\exp(k\sqrt{d}) \geq \exp(k^2)$ whenever $c < 1/2$.
\end{proof}

\subsection{Geometric Overlap}
The geometric overlap property is a classical notion of high dimensional expansion (see e.g.\ \cite{lubotzky2018high}) which promises every embedding of the complex has a point in space hit by at least a constant fraction of the faces.
\begin{definition}
    Let \(X\) be a \(d\)-\maximal simplicial complex and let \(k \leq d-1\). Let \(c > 0\). We say that \(X\) has \((k,c)\)-geometric overlap if for every \(\rho:\X[1] \to \mathbb{R}^k\), there exists a point \(q_0 \in \mathbb{R}^k\) such that
    \[\Prob[s \in {\X[d]}]{q_0 \in conv(\rho(s))} \geq c.\]
\end{definition}
Here \(conv(\rho(s))\) is the convex hull of the points of \(\rho(s) = \sett{\rho(v)}{v \in s}\). While we are not aware of any applications of geometric overlap in computer science, it is a popular topic in geometry and topology and related ideas (e.g.\ an overlap variant of the Borsuk-Ulam theorem) have seen recent use in the study of algorithmic stability \cite{chase2023local}.

Boros and F{\"u}redi \cite{BorosF1984} showed that the $2$-dimensional complete complex has geometric overlap, a result which was later generalized to all dimensions by B\'{a}r\'{a}ny \cite{Barany1982}. The notion was extended to general simplicial complexes by Gromov \cite{Gromov2010}, who asked whether there are bounded degree complexes that have this property. Followup work by Fox, Gromov, Lafforgue, Naor, and Pach \cite{FoxGLNP2012} gave an affirmative answer to this question both via random construction and through \cite{LubotzkySV2005b}'s Ramanujan complexes. Finally, a series of works \cite{ParzanchevskiRT2016,Evra2017,oppenheim2018local} extended their ideas to general spectral high dimensional expanders by leveraging high dimensional expander-mixing properties. We note that Gromov also defined a closely related (but stricter) notion of topological overlap \cite{Gromov2010}, which was later studied by \cite{KaufmanKL2014, EvraK2016} among others. We focus only on the geometric setting here.

While prior works have succeeded in showing optimal geometric overlap for \textit{random} bounded-degree complexes \cite[Theorem 1.7]{FoxGLNP2012}, it has been open since Gromov's work to construct such a family \textit{explicitly}. In this subsection, building on the approach of these prior works, we show sufficiently strong high dimensional expanders have optimal overlap, resolving this problem.

To state our result, it is useful to first understand the overlap properties of the complete complex. Let \(c_{k,n}\) be the largest constant so that $\SC[\Delta][k+1][n]$ has \((k,c_{k,n})\)-geometric overlap, and let \(c(k) = \lim_{n \to \infty} c_{k,n}\). By \cite{Barany1982} this limit exists and is positive. Similarly, let \(c_{d,k,n}^p\) be the largest constant so that the complete \(d\)-partite complex with \(n\)-vertices on every side has \((k,c_{d,k,n}^p)\)-geometric overlap, and let \(c_{d,k}^p = \lim_{n \to \infty} c_{d,k,n}^p\). Existence and positivity of this limit follows from \cite{Pach1998}. The former constant, $\SC[c][k]$, is known to be the best possible for bounded degree simplicial complexes \cite[Theorem 1.6]{FoxGLNP2012}.

We show that every sufficiently strong (partite) high dimensional expanders has geometric overlap nearly matching the (partite) complete complex:
\begin{theorem} \label{thm:hdx-geometric-overlap}
        For every $\varepsilon>0$ and $k<d \in \mathbb{N}$, there exist $\lambda,\lambda_p>0$, and $n_0,n_{0,p} \in \mathbb{N}$ such that
    \begin{enumerate}
        \item \textbf{Two-sided case:} If $X$ is \(d\)-\maximal $\lambda$-two-sided high dimensional expander on at least \(n_0\) vertices so that the measure on the vertices is uniform, then $X$ has \((k,c_k-\varepsilon)\)-geometric overlap.
        \item \textbf{Partite case:} If $X$ is $d$-partite $\lambda_p$-one-sided HDX on at least \(n_{0,p}\) vertices so that the measure on the vertices is uniform, then $X$ has \((k,c^p_{d,k}-\varepsilon)\)-geometric overlap.
    \end{enumerate}
\end{theorem}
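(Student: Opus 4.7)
The strategy is to transfer the classical colorful B\'{a}r\'{a}ny--Pach first-selection theorem from the complete partite complex to $X$ itself. For the two-sided case I would first pass to the partification $P(X)$, a $d$-partite HDX whose sides are identified with $X(1)$ and whose top-face measure symmetrizes $\pi_{X(d)}$ over orderings; since geometric overlap is invariant under reordering, $X$ and $P(X)$ have the same $(k,c)$-overlap constants, and since $c^p_{d,k}\geq c_k$ for $d\geq k+1$ (by applying Carath\'{e}odory inside the complete $d$-partite complex), the two-sided target $c_k-\varepsilon$ reduces to the partite target $c^p_{d,k}-\varepsilon$.

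For the partite case, given $\rho\colon X(1)\to\mathbb{R}^k$, I would first apply the quantitative Pach theorem to the complete $d$-partite complex on $X[1],\ldots,X[d]$: for $|X[i]|\geq n_{0,p}(k,d,\varepsilon)$, there is a point $q_0\in\mathbb{R}^k$ with
\[
\Pr_{\bar{v}\sim\prod_i\pi_{X[i]}}\bigl[q_0\in\operatorname{conv}(\rho(\bar{v}))\bigr]\geq c^p_{d,k}-\varepsilon/3.
\]
The main work is to transfer this overlap from the independent distribution to $\pi_{X(d)}$. Because the overlap event $E=\{q_0\in\operatorname{conv}(\rho(\bar{v}))\}$ is a continuous geometric condition rather than coordinatewise Boolean, \pref{lem:total-variation} does not apply directly. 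I would instead partition $\mathbb{R}^k$ into $N=N(k,\varepsilon)$ cells $\{B_y\}_{y\in[N]}$ of small diameter (with a random translation of the grid to avoid vertices on boundaries) and set $f_i\colon X[i]\to[N]$ by $f_i(v)=y$ iff $\rho(v)\in B_y$. For $N$ large enough, $E$ agrees with $\bar{f}^{-1}(B)$ for some deterministic $B\subseteq[N]^d$ up to measure $\varepsilon/6$ under either distribution. By \pref{claim:hdx-to-product}, $X$ is a $\lambda_0$-product with $\lambda_0\leq\lambda_p/(1-\lambda_p)$, and \pref{prop:hitting} applied to the product of coordinatewise indicators $\prod_i\mathbf{1}_{f_i=y_i}$ gives
\[
\Pr_X[\bar{f}=\bar{y}]\geq(1-\eta)^d\Pr_{\mathrm{ind}}[\bar{f}=\bar{y}]
\]
for each cell combination $\bar{y}\in B$ whose coordinate marginals are all at least a threshold $\alpha$. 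Summing over such $\bar{y}$ and absorbing the discarded low-mass combinations (total mass $\leq Nd\alpha$) and the boundary error ($\leq \varepsilon/6$) into $\varepsilon$ yields $\Pr_X[E]\geq c^p_{d,k}-\varepsilon$.

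The main obstacle is balancing cell fineness $N$ against the required expansion. Small cells are needed so that the convex-hull event is well-approximated by cell coordinates, but \pref{prop:hitting} imposes $\lambda_p\lesssim\eta^{3/2}\alpha^2 d^{-3/2}$ with $\alpha\approx\varepsilon/(Nd)$, forcing $\lambda_p$ to be polynomially small in $(N,d)$. Since the theorem permits $\lambda_p$ to be chosen after $(k,d,\varepsilon)$ is fixed, the tension is resolvable but requires careful tracking. A secondary technicality is that an adversarial embedding $\rho$ could concentrate many vertices of some $X[i]$ into a single cell; a uniformly random translation of the grid ensures that only an $\varepsilon/12$ fraction of the independent overlap mass lies in cells with marginal below $\alpha$, with the remainder absorbed into the error budget.
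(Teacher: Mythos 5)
Your high-level plan (apply \pref{prop:hitting} part-by-part to transfer overlap from an independent reference distribution to $X$) is the right idea, and is exactly what the paper does. The gap is in the discretization step. You claim that for $N=N(k,\varepsilon)$, a random-translate grid in $\mathbb{R}^k$ yields an approximation of the overlap event $E=\{q_0\in\operatorname{conv}(\rho(\bar v))\}$ by a cell-determined set, up to measure $\varepsilon/6$ under both distributions. This is not true with $N$ independent of $n$ and $\rho$: the error comes from \emph{mixed} cell-tuples, i.e.\ tuples of cells for which some rainbow simplices contain $q_0$ and others do not, and an adversarial embedding can place $\Omega(1)$ of the mass on tuples whose simplex boundary passes arbitrarily close to $q_0$. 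A random translation of the grid eliminates the degenerate case of vertices landing exactly on cell boundaries, but does nothing about simplex boundaries near $q_0$, which is where the homogeneity failure lives. Making the grid fine enough to resolve all mixed tuples requires fineness depending on the embedding and hence on $n$, which destroys the required quantifier order (in the theorem $\lambda$ and $n_0$ are fixed before $\rho$ and $n$).

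The fix, and the key ingredient in the paper's proof that your argument is missing, is \pref{thm:discrete-structure} (Fox--Gromov--Lafforgue--Naor--Pach, restated as Corollary 1.9 in their paper): for any finite point set $P\subset\mathbb{R}^k$ and any $q_0$, there is an \emph{equipartition of the point set} (not a spatial grid) into $K'\geq K(\varepsilon,k)$ parts such that all but an $\varepsilon$-fraction of $(k+1)$-tuples of parts are homogeneous. The parts are data-dependent subsets of $P$, may be spread out over $\mathbb{R}^k$, and their number is bounded independently of $|P|$ and of the geometry. This is precisely what lets $\lambda$ be fixed before $n$. Once you have this equipartition, your second-stage argument via \pref{prop:hitting} applied to the part indicators goes through essentially as you describe (and matches the paper's computation in \eqref{eq:faces-in-complete-complex}--\eqref{eq:faces-touching-q0}). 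One additional remark: the paper does not reduce the two-sided case to the partite case via partitification and a claim $c^p_{d,k}\geq c_k$ (a claim whose direction is unclear, since the partite complex has fewer faces); it treats the two cases in parallel, using the full complete complex $\Delta_n(k+1)$ as the reference in the two-sided case and the complete $d$-partite complex in the partite case, with \pref{prop:hitting} applied directly in each.
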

We comment that this theorem applies to many of the complexes constructed in \cite{LubotzkySV2005b} (for the non-partite setting we need to take lower \maximal skeletons).

The proof of \pref{thm:hdx-geometric-overlap} combines our concentration bounds with tools developed in \cite{FoxGLNP2012}. We have not tried to find the optimal dependence on \(\lambda\) or on the degree of the complex. Finally before moving to the proof, we remark that the result really only relies on the weaker `high-dimensional expander-mixing' type inequality which actually holds for the more general family of splitting-trees such as expander walks (see \pref{sec:splitting}). Thus it is possible to obtain explicit complexes with geometric overlap approaching that of the complete (partite) complex through this more general family as well, though we omit the details.

\paragraph{Homogeneous tuples}
The core of \cite{FoxGLNP2012}'s proof of geometric overlap is the following partitioning theorem. We follow their definitions and presentation precisely. A tuple of subsets $S_1,\ldots,S_{k+1} \subseteq \mathbb{R}^k$ is said to be \emph{homogeneous} with respect to a point $q \in \mathbb{R}^k$ if either:
\begin{enumerate}
    \item All simplices with one vertex in each of the sets $S_{1},\ldots,S_{k+1}$ contain $q$.
    \item None of these simplices contain $q$.
\end{enumerate}

An \emph{equipartition} of a finite set is a partition of the set into subsets whose sizes differ by at most one.

\begin{theorem}[{\cite[Corollary 1.9]{FoxGLNP2012}}]\label{thm:discrete-structure}
    Let $k$ be a positive integer and $\epsilon>0$. There exists a positive integer $K=K(\epsilon,k)\geq k+1$ such that for any $K' \geq K$ the following statement holds. For any finite set $P \subseteq \mathbb{R}^{k}$ and for any point $q \in \mathbb{R}^{k}$, there is an equipartition $P=P_1 \dunion \dots \dunion P_{K'}$ such that all but at most an $\epsilon$-fraction of the $(k+1)$-tuples $P_{i_1},\ldots,P_{i_{k+1}}$ are homogeneous with respect to $q$.
\end{theorem}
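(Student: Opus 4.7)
The plan is to derive the statement from the classical \emph{Same-Type Lemma} of B\'ar\'any and Valtr (a standard partition tool in discrete geometry), by iterating it to exhaust $P$ and then refining into the desired equipartition.

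First I would reduce to general position: by an arbitrarily small perturbation we may assume that no $k+1$ points of $P\cup\{q\}$ are affinely dependent. Whether $q\in\mathrm{conv}(p_1,\ldots,p_{k+1})$ is determined by the signs of the $(k+1)$ ``facet'' determinants $\det(p_{j_1}-q,\ldots,\widehat{p_{j_i}-q},\ldots,p_{j_{k+1}}-q)$, and these signs are stable under small perturbations; the general-position case thus implies the full statement.

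Next, I would invoke the Same-Type Lemma relativized to $q$: apply it to $P\cup\{q\}$ with $\{q\}$ as an extra singleton block. This produces, for any fixed $r$, pairwise disjoint $A_1,\ldots,A_r\subseteq P$ each of size $\ge c(k,r)\,|P|$, such that for every index tuple $i_1<\cdots<i_{k+1}$ the sign vector of all facet determinants relative to $q$ is constant over every transversal $p_{i_s}\in A_{i_s}$. That sign vector decides whether $q$ lies in $\mathrm{conv}(p_{i_1},\ldots,p_{i_{k+1}})$, so $(A_{i_1},\ldots,A_{i_{k+1}})$ is automatically homogeneous with respect to $q$. Applying the lemma repeatedly to the residual set $P\setminus(A_1\cup\cdots\cup A_r)$ for $O(\log(1/\varepsilon)/(cr))$ rounds exhausts all but an $\varepsilon/2$-fraction of $P$, with homogeneity guaranteed for $(k+1)$-tuples produced within a single round.

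To produce the equipartition into $K'$ parts, I would choose $K(\varepsilon,k)$ so that each $A_i$ above is much larger than $|P|/K'$, subdivide each $A_i$ arbitrarily into sub-blocks of size $\lfloor|P|/K'\rfloor$ or $\lceil|P|/K'\rceil$, and distribute residual points to complete the partition. Refining a homogeneous tuple preserves homogeneity, so a $(k+1)$-tuple of final blocks is safe whenever its members come from $k+1$ distinct $A_i$'s within a common round; bad tuples either pick two sub-blocks from the same parent $A_i$, absorb residual points, or mix rounds. The main obstacle --- and where the quantitative choice of $K$ lives --- is making the count of bad tuples at most $\varepsilon\binom{K'}{k+1}$. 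Collisions of the first type contribute $O\bigl(\binom{K'}{k+1}/r\bigr)$ (so taking $r\gg 1/\varepsilon$ suppresses them); residual sub-blocks contribute $O(\varepsilon\,(K')^{k+1})$; and cross-round tuples are controlled by arranging that most sub-blocks are produced in a single ``dominant'' round. Balancing these three error sources yields the required (albeit large) $K=K(\varepsilon,k)$.
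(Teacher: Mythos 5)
The paper does not prove this statement; it cites it as \cite[Corollary 1.9]{FoxGLNP2012} and uses it as a black box, so there is no in-paper proof to compare against. I will therefore assess your attempt on its own.

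Your general direction (Same-Type Lemma applied to $P\cup\{q\}$, then refine to an equipartition) is a sensible starting point, and your reduction to general position and your identification of the three error sources are fine. But there is a genuine gap in the handling of cross-round tuples, and I do not see how your three-way balance can close. To control collision tuples (two sub-blocks with the same parent) you need the number of Same-Type pieces $r$ per round to satisfy $r\gg 1/\varepsilon$. But the Same-Type constant $c(k,r)$ decays with $r$ (the known bounds are roughly $2^{-\Omega(r\log r)}$ in fixed dimension $k$), so one round covers only a $c(k,r)$-fraction of $P$, hence only a $c(k,r)$-fraction of the final sub-blocks come from round one. Round one is therefore very far from ``dominant'' exactly in the regime where you need $r$ large, and the overwhelming majority of $(k{+}1)$-tuples of sub-blocks mix rounds, with no homogeneity guarantee. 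The requirement $r\gg 1/\varepsilon$ and the requirement ``most sub-blocks lie in a single round'' pull in opposite directions and cannot both hold; ``balancing these three error sources'' as you describe it does not resolve this conflict.

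To make an iterated Same-Type argument work one would need a genuine energy-increment or refinement structure in which homogeneity is propagated \emph{across} rounds (e.g.\ by recursively applying the Same-Type Lemma inside each surviving cell and tracking a potential function), rather than simply recursing on the untouched residual. This is closer to what actually happens in the proof of the FGLNP regularity theorem, which is driven by the semi-algebraic cutting lemma (Chazelle-style cell decompositions) and an averaging/energy argument, not by a single-pass iterated Same-Type Lemma. As written, the step ``arranging that most sub-blocks are produced in a single dominant round'' is not justified and appears to be false for the parameter regime you need.
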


Our proof adapts the idea in \cite[Theorem 1.7]{FoxGLNP2012}.
\begin{proof}[Proof of \pref{thm:hdx-geometric-overlap}]
We split the proof into the two-sided and partite cases.
\paragraph{Two-sided Case}
Let \(K = K(\varepsilon/16,k)\) be the constant promised in \pref{thm:discrete-structure} and \(K' \geq K\) be the smallest constant so that for any equipartition of the \(n\)-vertices to \(K'\) parts, the fraction of faces in the complete complex (over the same vertex set) that touch every part at most once is at least \(1-\frac{\varepsilon}{16}\).\footnote{One can verify as in \cite[Theorem 1.7]{FoxGLNP2012} that \(K' = \max \set{K,16(d+1)/\varepsilon}\) for large enough \(n\).} Let \(z_0 =\frac{1}{2K'}\) and let \(n_0\) be such that for any \(n \geq n_0\) and sets \(A_1,A_2,\dots,A_{k+1} \subset [n]\) of measure at least \(\frac{1}{K'}-z_0\) it holds that 
\[
    \Prob[\set{v_1,v_2,\dots,v_{k+1}} \in {\SC[\Delta][k+1][n]}]{\forall j=1,\dots,k+1:~ v_j \in A_j } \geq \prod_{j=1}^{k+1} \prob{A_j} - \frac{\varepsilon}{16}.
\]

Let \(X\) be as in the statement and \(\rho:\X[1] \to \mathbb{R}^k\) an embedding. There is a point \(q_0 \in \R^k\) so that \(q_0\) is contained in \(c_k-\frac{\varepsilon}{16}\) of the faces of $\SC[\Delta][k+1][n]$ embedded via $\rho$. 

Let \(P_1,P_2,\dots,P_{K'}\) be the partition we obtain from \pref{thm:discrete-structure} with respect to \(\rho\) and the complete complex. Let \(H = \sett{\set{P_{i_1},P_{i_2},\dots,P_{i_{k+1}}}}{\forall s \in E(P_{i_1},P_{i_2},\dots,P_{i_{k+1}}), \; q_0 \in conv(\rho(s))}\). Here \(E(P_{i_1},P_{i_2},\dots,P_{i_{k+1}})\) are the faces \(s \in \X[k+1]\) such that for every \(j=1,\dots,k+1\), \(\abs{s \cap P_{i_j}} = 1\). Recall that at most an \(\frac{\varepsilon}{16}\)-fraction of the tuples are not homogeneous and that at most \(\frac{\varepsilon}{16}\) of the faces in $\SC[\Delta][k+1][n]$ touch a part in this partition more than once. 
Thus,
\begin{equation}\label{eq:faces-in-complete-complex}
    \sum_{(P_{i_j})_{j=1}^{k+1} \in H}\prod_{j=1}^{k+1} \prob{P_{i_j}} \geq c_k - \frac{3\varepsilon}{16}.
\end{equation}

On the other hand,
\begin{equation} \label{eq:faces-touching-q0}
    \Prob[s \in {\X[k+1]}]{q_0 \in conv(\rho(s))} \geq \sum_{\set{P_{i_1},P_{i_1},\dots,P_{i_{k+1}}} \in H}{\Prob[s \in {\X[k+1]}]{E(P_{i_1},P_{i_2},\dots,P_{i_{k+1}})}}.
\end{equation}
By \pref{prop:hitting} (applied for indicators of the corresponding functions) for small enough \(\lambda=\lambda(K',\varepsilon,z_0)\), for every \(k+1\) parts \(P_{i_1},P_{i_2},\dots,P_{i_{k+1}}\) of size at least \(\frac{1}{K'}-z_0\),
    \[
        \Prob[\set{v_1,v_2,\dots,v_{k+1}} \in {\X[k+1]}]{\forall j=1,\dots,k+1 \; v_j \in P_{i_j}} \geq (1-\frac{\varepsilon}{2})\prod_{j=1}^{k+1} \prob{P_{i_j}}.
    \]
    Combining \eqref{eq:faces-in-complete-complex} and \eqref{eq:faces-touching-q0} gives
    \(\Prob[s \in {\X[k+1]}]{q_0 \in \rho(s)} \geq (1-\frac{\varepsilon}{2})(c_k-\frac{\varepsilon}{4}) \geq c_k-\varepsilon\) as desired.

\paragraph{Partite Case} The argument is similar. Let \(K = K(\varepsilon/16,k)\) be the constant promised in \pref{thm:discrete-structure}. Let \(K' \geq K\) be the smallest constant so that for any equipartition over \(K'\) parts, the fraction of faces in the complete partite complex that touch every part at most once is at least \(1-\frac{\varepsilon}{16}\).

Let \(z = \frac{1}{2K'}\) be as before. As before we first identify the points of \(X\) with the complete partite complex with the same number of vertices in every side. Then we find a point \(q_0\) such that \(c_{d,k}^{p}-\frac{\varepsilon}{16}\) of the faces of the complete partite complex contain \(q_0\). We find an equipartition \(P_1,P_2,\dots,P_{K'}\) of \(\rho(\X[1])\) to \(K'\) parts such that \(1-\frac{\varepsilon}{16}\) of the parts are homogeneous with respect to \(q_0\).

We partition further every \(P_i\) according to the colors of the vertices to \(P_{i}^j\) for \(j=1,2,\dots,d\). Then we remove from the partition every subset \(P_i^j\) whose relative size is smaller than \(\frac{\varepsilon}{16 K' \cdot d^2}\). We note that by taking only the remaining sets in the partition the fraction of vertices we removed from every color is no more than \(\frac{\varepsilon}{8}\) vertices in any side. This is because for every \(P_i\), colors \(j\) that have that \(\prob{P_i^j} \leq \frac{\varepsilon}{16 K' \cdot d^2}\) can account for no more than \(\frac{\varepsilon}{16 K' \cdot d}\) out of at least \(\frac{1}{K'}-z = \frac{1}{2K'}\). Thus we have not removed more than \(\frac{\varepsilon}{8d}\) vertices in total, or \(\frac{\varepsilon}{8}\)-fraction out of each color.

We now have subsets \(P_i^j\) that such that \(\frac{\varepsilon}{16 K' d^2} \leq \prob{P_i^j} \leq \frac{3}{2K'}\). The proof now follows as in the two-sided case.
\end{proof}

\subsection{Separating MLSI from Reverse Hypercontractivity}
As discussed in \pref{sec:intro}, prior techniques for establishing reverse hypercontractivity relied on tensorization or bounded (Modified) Log-Sobolev constant (MLSI). To our knowledge, all known reverse-hypercontractive objects prior to this work also have bounded MLSI. In this section, we give the first separation between these analytic notions in the case where the leading hypercontractive constant $C$ may be greater than $1$ (or, alternatively, for the indicator variant). To start, recall the definition of MLSI.

\begin{definition}[Modified Log-Sobolev Inequality]
    Let $M$ be a reversible markov chain with stationary distribution $\pi_s$. The modified log-sobolev constant of $M$ is: 
    \[
    \rho_{\text{\tiny{MLSI}}} = \inf_{f > 0}\frac{\langle f, (I-M)\log f\rangle}{2\text{Ent}_\pi(f^2)}
    \]
\end{definition}
MLSI is classically used to bound the \textit{mixing time} of its associated Markov chain,
\[
T_{\text{mix}}(M,\varepsilon) = \min_{t} \max_{\pi_{init}} \norm{\pi_{init}^TM^t - \pi_s}_{TV} \leq \varepsilon,
\]
due to the following upper bound:
\begin{fact}\label{fact:MLSI}
    The mixing time of a reversible Markov chain $M$ with stationary distribution $\pi$ is at most:
    \[
    T_{\text{mix}}(M,\varepsilon) \leq \frac{1}{\rho_{\text{\tiny{MLSI}}}}\log\left( \frac{\log\pi_*^{-1}}{\varepsilon}\right),
    \]
    where $\pi_*$ is the minimum stationary probability.
\end{fact}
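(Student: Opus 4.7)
My plan is to establish the bound via the classical ``entropy decay'' argument, adapting it to the discrete-time MLSI definition used in the paper. Let $\mu_t$ denote the law of the chain at time $t$ started from an arbitrary initial distribution, and let $f_t = d\mu_t/d\pi$ be its density with respect to $\pi$. I would work with the entropy functional $\mathrm{Ent}_\pi(f_t) = \mathbb{E}_\pi[f_t \log f_t]$, and first show that this quantity contracts geometrically along one step of the chain at a rate governed by $\rho_{\text{\tiny{MLSI}}}$. Concretely, noting that $f_{t+1} = M^* f_t$ (with $M^*$ the time-reversal, which equals $M$ by reversibility), a short computation using Jensen's inequality together with the MLSI (applied to $\sqrt{f_t}$ after the standard change of variables that converts the $\mathrm{Ent}_\pi(f^2)$ form into an $\mathrm{Ent}_\pi(f)$ form) yields $\mathrm{Ent}_\pi(f_{t+1}) \leq (1 - 2\rho_{\text{\tiny{MLSI}}})\,\mathrm{Ent}_\pi(f_t)$, so that $\mathrm{Ent}_\pi(f_t) \leq e^{-2\rho_{\text{\tiny{MLSI}}} t}\,\mathrm{Ent}_\pi(f_0)$.

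Next I would bound the initial entropy. Since for any probability density $f_0$ one has $\mathrm{Ent}_\pi(f_0) = \mathrm{KL}(\mu_0 \,\|\, \pi) \leq \log(1/\pi_*)$ (attained in the worst case when $\mu_0$ is concentrated on the minimum-mass atom), the contraction estimate gives $\mathrm{KL}(\mu_t \,\|\, \pi) \leq e^{-2\rho_{\text{\tiny{MLSI}}} t}\,\log(1/\pi_*)$. Finally I would invoke Pinsker's inequality $\|\mu_t - \pi\|_{TV}^2 \leq \tfrac{1}{2}\,\mathrm{KL}(\mu_t \,\|\, \pi)$ and solve for the smallest $t$ such that the right-hand side is at most $\varepsilon^2$; this yields exactly the claimed bound
\[
T_{\mathrm{mix}}(M,\varepsilon) \leq \frac{1}{\rho_{\text{\tiny{MLSI}}}} \log\!\left(\frac{\log \pi_*^{-1}}{\varepsilon}\right),
\]
after absorbing the factor-of-$2$ losses into the logarithm (which only changes constants inside the $\log$ and is usually hidden in a $\tilde{O}$ or acceptable since the statement is stated as an inequality with this form).

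The main subtlety I anticipate lies in the first step, translating the particular normalization of MLSI used in the paper (the $\mathrm{Ent}_\pi(f^2)$ version of the Dirichlet quotient) into a clean per-step entropy decay. The cleanest way is probably to pass to continuous time via the lazy version $P_t = e^{-t(I-M)}$, derive $\tfrac{d}{dt}\mathrm{Ent}_\pi(P_t f_0) \leq -2\rho_{\text{\tiny{MLSI}}}\,\mathrm{Ent}_\pi(P_t f_0)$ directly from the definition (this is essentially a one-line computation once one identifies $\langle f, (I-M)\log f\rangle$ as the derivative of entropy), and then discretize. Since this is a well-known classical fact (going back to work of Diaconis--Saloff-Coste and Bobkov--Tetali), I would not dwell on the routine calculations and would instead cite the standard reference, using the proof sketch only to make the dependence of the constants transparent.
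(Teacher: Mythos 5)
The paper states this as a \emph{Fact} without proof, implicitly citing the classical literature (Diaconis--Saloff-Coste, Bobkov--Tetali), so there is no in-paper argument to compare against. Your sketch reproduces the standard route — geometric entropy decay, the $\mathrm{Ent}_\pi(f_0)\leq\log(1/\pi_*)$ bound on initial entropy, and Pinsker — which is indeed the proof one would find in the references, and you rightly flag both the continuous-time passage (the Dirichlet form $\langle f,(I-M)\log f\rangle$ is the time-derivative of entropy under the heat flow, so the continuous-time version is a one-liner) and the paper's unusual $\mathrm{Ent}_\pi(f^2)$ normalization as the only nontrivial bookkeeping.

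The one place you are too casual is the final ``absorbing factor-of-2 losses'' step. Running the argument cleanly gives $T_{\mathrm{mix}}(\varepsilon)\lesssim\frac{1}{\rho}\log\bigl(\tfrac{\log\pi_*^{-1}}{2\varepsilon^2}\bigr)$, and the discrepancy between $\log(\cdot/\varepsilon^2)$ and the stated $\log(\cdot/\varepsilon)$ is not a constant inside the logarithm — it is an additional additive term of $\frac{1}{\rho}\log(1/\varepsilon)$, which is unbounded as $\varepsilon\to 0$. So the Fact as literally printed does \emph{not} follow from your derivation for small $\varepsilon$; the paper's statement appears to be slightly loose (and indeed it is only invoked at $\varepsilon=1/2$ in Corollary~\ref{cor:MLSI}, where the discrepancy is harmless). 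To be precise you should either state the $\varepsilon^2$ version or explicitly restrict to $\varepsilon=\Omega(1)$. Similarly, whether the contraction rate is $\rho$ or $2\rho$ depends on exactly how the $f$ vs.\ $f^2$ in the paper's definition of $\rho_{\text{MLSI}}$ is resolved (it looks like a hybrid of the LSI and MLSI conventions, possibly a typo for $\mathrm{Ent}_\pi(f)$); worth pinning down rather than deferring.
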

High dimensional expanders (at least in the bounded degree setting), are \textit{slow-mixing}, so they cannot have small MLSI constant. In particular, applying this fact to the Ramanujan complexes gives an infinite family of reverse-hypercontractive operators with at least inverse logarithmic dependence on the domain size.
\begin{corollary}\label{cor:MLSI}
    There exist constants $\ell>1$ and $C>0$ and an infinite family $\{T_\rho^{(n)}\}$ of $(\frac{1}{\ell},\frac{\ell}{1-\ell},C)$-reverse hypercontractive operators with at most inverse logarithmic MLSI:
    \[
    \rho_{MLSI} \leq C_2\frac{\log^3\log(\pi_*)}{\log \pi_*}
    \]
\end{corollary}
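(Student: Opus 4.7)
\textbf{Proof proposal for \pref{cor:MLSI}.} The plan is to instantiate the family $\{T_\rho^{(n)}\}$ as the noise operators at any fixed $\rho \in (0,1)$ acting on the $d$-partite LSV Ramanujan complexes $X^{(n)}$ from \pref{thm:coset-complexes}. Fix $d$ to be an absolute constant large enough to invoke \pref{thm:intro-rhc-real}, and take the local-expansion parameter $\lambda = 2^{-cd}$ so that each $X^{(n)}$ is $c$-nice by item 2 of \pref{def:nice-complexes}. Since links of a $\lambda$-partite HDX are again $\lambda$-partite HDX, every $X^{(n)}$ is in fact $c$-locally nice, and as $n \to \infty$ we obtain an infinite family whose maximum degree $D = \lambda^{-O(d^2)}$ is a fixed constant in $n$.

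Reverse hypercontractivity is then immediate: \pref{thm:intro-rhc-real} outputs constants $(C, \ell)$ depending only on $(c, \rho)$, both of which we have fixed, so the same $(C, \ell)$ work uniformly across the family. This delivers the first half of the statement.

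For the MLSI bound the strategy is to combine \pref{fact:MLSI} with a slow-mixing lower bound. The key observation is that $T_\rho$ is \emph{local}: one step from a point-mass $\delta_s$ at $s \in X^{(n)}(d)$ is supported on $d$-faces sharing at least one vertex with $s$, of which there are at most a constant $D' = D'(d,D)$ by bounded degree. Iterating, $|\supp(T_\rho^t \delta_s)| \leq (D')^t$, so for $\|T_\rho^t \delta_s - \pi_d\|_{TV} < 1/4$ the support must carry at least $3/4$ of the stationary mass, forcing $(D')^t \geq \Omega(\pi_*^{-1})$, i.e.\ $T_{\text{mix}}(T_\rho^{(n)}, 1/4) \geq \Omega(\log \pi_*^{-1})$. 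Substituting this into \pref{fact:MLSI} with $\varepsilon = 1/4$ yields
\[
\rho_{\text{MLSI}} \;\leq\; \frac{\log(\log \pi_*^{-1}/\varepsilon)}{T_{\text{mix}}} \;\leq\; O\!\left(\frac{\log\log \pi_*^{-1}}{\log \pi_*^{-1}}\right),
\]
which is in particular at most $C_2 (\log\log \pi_*^{-1})^3/\log \pi_*^{-1}$ for any sufficiently large $C_2$.

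The main technical point to check is that the minimum stationary probability satisfies $\log \pi_*^{-1} = \Omega(\log n)$; \pref{thm:coset-complexes} only advertises $\max \pi_k = o_n(1)$. The cleanest fix is to note that the LSV construction comes from a transitive group action on top-level faces, so $\pi_d$ is essentially uniform and $\pi_* = \Theta(1/|X^{(n)}(d)|)$. If one prefers to avoid invoking construction details, the same conclusion follows from a diameter-based mixing-time lower bound on the bounded-degree $1$-skeleton of $X^{(n)}$, which has diameter $\Omega(\log n)$ by a Moore-bound argument and hence also forces $T_{\text{mix}} = \Omega(\log n)$ for any chain whose one-step transitions only move to combinatorially nearby faces.
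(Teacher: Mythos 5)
Your approach has a genuine gap: the noise operator $T_\rho$ is \emph{not} local. At every step, with probability $(1-\rho)^d$ the operator samples $\ell=0$ in the definition of $T_{k,\rho}$, resamples $\tau=\emptyset$, and then draws $\sigma'$ from the full stationary distribution over $\X[d]$. Consequently $T_\rho\,\delta_s$ has full support on $\X[d]$ already after one step, and the claim ``$|\supp(T_\rho^t\delta_s)|\leq (D')^t$'' is false. The diameter-based variant suffers from exactly the same problem: $T_\rho$ does move to combinatorially distant faces with constant probability. In fact, for fixed $d$ the ``global jump'' has probability $(1-\rho)^d=\Omega(1)$ per step, so the chain mixes in $O(1)$ steps and $\rho_{\mathrm{MLSI}}$ is bounded away from zero --- the opposite of what you want. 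A warning sign is that your argument would prove the \emph{stronger} bound $O(\log\log\pi_*^{-1}/\log\pi_*^{-1})$, which the corollary does not claim.

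The paper avoids this by taking the \maxsizity to grow: $k=k(n)=\log\log n$ and $\lambda=2^{-k}$. This pushes the weight of the stationary component of $T_{1/2}$ down to $2^{-k}=1/\log n$, so one can decompose $T_{1/2}=T_{1/2}^+ + 2^{-k}\Pi$ and argue that for $t\lesssim 2^k/4$ the stationary contamination of $\pi_{\text{init}}T_{1/2}^t$ is still less than $1/4$ per coordinate, while for $t$ with $\deg(T_{1/2}^+)^t \leq n/4$ the support is still too small to be near-stationary. Since the degree of the LSV complex at level $k$ is $2^{\Theta(k^3)}$, this gives $T_{\text{mix}}\gtrsim \min\{2^k,\log n/k^3\}=\Omega(\log n/(\log\log n)^3)$, which is where the $\log^3\log$ factor in the bound comes from. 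To repair your argument you would need to take $d$ growing with $n$ (as the paper does) and explicitly track the stationary component; the fixed-$d$ version cannot work.
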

\begin{proof}
    Set $\rho=\frac{1}{2}$, let $c$ be as in \pref{thm:reverse}, and let $\{X^{(n)}\}$ be the family of $k=k(n)$-\maximal LSV complexes promised in \pref{thm:coset-complexes} for $k=\log\log n$ and $\lambda=2^{-k}$, and recall that
    \begin{enumerate}
        \item The degree of every vertex is at most $2^{C_2k^3}$ for some universal constant $C_2>0$
        \item The stationary distribution of $T_{1/2}$ on $X$ is uniform
    \end{enumerate}
    By \pref{thm:reverse}, there exist (universal) constants $C,\ell$ such that the noise operator $T_{1/2}$ is $(\frac{1}{\ell},\frac{\ell}{1-\ell},C)$-reverse hypercontractive for every complex in the family. On the other hand, we will argue that:
    \[
    T_{\text{mix}}\left(T_{\rho},\frac{1}{2}\right) \geq \min \left\{\frac{2^k}{4},\frac{C'_2\log(n)}{k^3}\right\},
    \]
    which combined with \pref{fact:MLSI} completes the result.

    Define the $\varepsilon$-support of a distribution
    \[
\text{supp}^{\geq}_\varepsilon(\pi) \coloneqq \{x : \pi(x) \geq \varepsilon \}.
    \]
    Setting $\varepsilon=\frac{1}{4n}$, observe that any distribution $\pi$ with $|\text{supp}^{\geq}_{\varepsilon}(\pi)| < \frac{n}{4}$ has $\norm{\pi-\pi_s}_{TV} \geq \frac{1}{2}$. We will argue this holds for $\pi_{\text{init}}T_{\frac{1}{2}}^t $ whenever $\pi_{\text{init}}$ is any indicator and $t$ is sufficiently small.

    To see this, we separate out the bounded-degree and stationary components of $T_{\frac{1}{2}}$ as:
    \[
    T_{\frac{1}{2}}^+ \coloneqq T_{\frac{1}{2}} - 2^{-k}\Pi
    \]
    where $\Pi$ is the stationary operator, and note that $T_{\frac{1}{2}} = T_{\frac{1}{2}}^+ + 2^{-k}\Pi$. Observe that after $t$ steps, we can write any initial distribution $\pi$ as:
    \[
    \pi^T T_{\frac{1}{2}}^t = \pi^T (T^+_{\frac{1}{2}}+2^{-k}\Pi)^t = \pi^T(T^+_{\frac{1}{2}})^t + (1-(1-2^{-k})^t)\pi_s
    \]
    For large enough $k$, as long as $t \leq 2^{-k/4}$, the latter term contributes strictly less than $\frac{1}{4}$ to each coordinate. On the other hand for $\pi$ an indicator, the first term contributes positive weight to at most $deg(T_{\frac{1}{2}^+})^t$ coordinates, so setting $t$ such that $deg(T_{\frac{1}{2}^+})^t \leq \frac{n}{4}$ completes the result.
\end{proof}

\subsection{It Ain't Over Till It's Over}
Friedgut and Kalai's `It Ain't Over Till It's Over' Theorem is a classical result in social choice which states that even after taking a significant random restriction, any balanced function still has some uncertainty with high probability (in other words, it is a tail bound on the random restriction operator). The precursor to this Theorem was a version for the noise operator proved in \cite[Appendix C]{MosselOO2005}. We prove the result holds for any LUS, which implies the same for any $c$-nice complex.
\begin{theorem}\label{thm:tail-bounds}
    For any $\rho \in (0,1)$ and $\tau > 0$, let $X$ be a $d$-\maximal $\tau$-LUS for $d$ sufficiently large. Then there exists a constant $q=q(\rho,\tau)$ such that for any $\delta>0$ and $f:\X[d] \to [0,1]$ with density $\mu=\mathbb{E}[f]$:
    \[
    \Pr[T_\rho f > 1-\delta] \leq c_{\mu,1}\delta^{q}
    \]
    where $c_{\mu,1} = \left(\frac{8(1+\mu)^{q+1}}{(1-\mu)^{q+2}}\right)^{\frac{1}{q}}$, and
    \[
    \Pr[T_\rho f < \delta] \leq c_{\mu,2}\delta^{q}
    \]
    where $c_{\mu,2} = \left(\frac{8(2-\mu)^{q+1}}{\mu^{q+2}}\right)^{\frac{1}{q}}$.
\end{theorem}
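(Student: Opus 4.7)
My plan is to reduce the tail bound on $T_\rho f$ to the indicator form of reverse hypercontractivity for $T_\rho$, applied to two appropriately chosen level sets, in the spirit of the standard ``It Ain't Over Till It's Over'' argument on the cube. I focus on the upper tail; the lower tail will follow by applying the same argument to $1-f$, which has mean $1-\mu$.

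Let $A = \{x \in \X[d] : T_\rho f(x) > 1-\delta\}$ be the set whose measure I want to bound, and introduce the companion level set $B = \{y \in \X[d] : f(y) \leq (1+\mu)/2\}$, i.e.\ the set where $1 - f(y) \geq (1-\mu)/2$. A one-line application of Markov's inequality to $1-f$ (whose mean is $1-\mu$) gives the constant lower bound $\Pr[B] \geq (1-\mu)/(1+\mu)$. I bound the $T_\rho$-correlated mass of $A \times B$ in two ways. For the \emph{upper bound}, every $x \in A$ satisfies $\E_{y \sim T_\rho(x)}[1-f(y)] < \delta$, and since $1-f \geq (1-\mu)/2$ on $B$, Markov inside the conditional distribution $T_\rho(x)$ yields $\Pr_{y \sim T_\rho(x)}[y \in B] \leq 2\delta/(1-\mu)$. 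Integrating over $x \in A$ gives
\[
\Pr_{(s,s') \sim T_\rho}[s \in A,\, s' \in B] \;\leq\; \frac{2\delta}{1-\mu}\,\Pr[A].
\]
For the \emph{lower bound}, I invoke the indicator form of reverse hypercontractivity for $T_\rho$ on $\tau$-LUS complexes. This is not quite \pref{thm:indicator-reverse-hc} as stated (which is for the down-up walk), but is obtained by decomposing $T_\rho = \sum_{\ell=0}^{d} \binom{d}{\ell}\rho^\ell(1-\rho)^{d-\ell} UD_{d,\ell}$ and restricting to the $\ell \leq \gamma d$ terms exactly as inside the proof of \pref{thm:reverse}. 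This yields $\Pr[s \in A,\, s' \in B] \geq C' \Pr[A]^\ell \Pr[B]^\ell$ for some $\ell = \ell(\rho,\tau) > 1$ and $C' = C'(\rho,\tau) > 0$, valid whenever both sets have measure at least $\exp(-cd)$.

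Combining the two inequalities and substituting the lower bound on $\Pr[B]$ produces $\Pr[A]^{\ell-1} \leq c_\mu'\, \delta$ for an explicit constant $c_\mu'$ depending on $\mu, C', \ell$; taking $(\ell-1)$-th roots gives $\Pr[A] \leq c_{\mu,1}\delta^q$ with $q = 1/(\ell-1)$. To match the exact constant $c_{\mu,1}$ in the statement, one optimizes the threshold in the definition of $B$, replacing $(1-\mu)/2$ with a carefully chosen $\delta' = \delta'(\mu,q)$; this affects only the constant, not the exponent. The lower-tail bound is entirely symmetric: let $A' = \{T_\rho f < \delta\}$ and $B' = \{f \geq \mu/2\}$, so that Markov applied directly to $f$ gives $\Pr[B'] \geq \mu/(2-\mu)$, and the same combination of an RHC lower bound with a Markov-type upper bound on $\Pr[s \in A',\, s' \in B']$ yields $\Pr[A'] \leq c_{\mu,2}\delta^q$.

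The main technical nuisance (and what I expect to be the only real obstacle) is the regime where $\Pr[A]$ falls below the $\exp(-cd)$ threshold required by \pref{thm:indicator-reverse-hc}: in that case the two-way bound comparison is not available. However, for $d$ sufficiently large, $\Pr[A] \leq \exp(-cd)$ automatically satisfies $\Pr[A] \leq c_{\mu,1}\delta^q$ since the right-hand side does not depend on $d$, so this regime is handled trivially. A minor bookkeeping point is that the exponent $\ell$ must absorb both the main reverse-hypercontractive exponent from \pref{thm:indicator-reverse-hc} and the ``all-vertices-resampled'' exponent $\ell''+1$ used inside the proof of \pref{thm:reverse}; taking the maximum suffices and only enlarges the constants $c_{\mu,1}, c_{\mu,2}$.
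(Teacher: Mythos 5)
Your proposal follows the same route as the paper's proof: you compare, for $A=\{T_\rho f > 1-\delta\}$ and the threshold set $B=\{f \le (1+\mu)/2\}$, a Markov-type upper bound on the $T_\rho$-correlated mass of $A\times B$ against the indicator reverse hypercontractive lower bound for $T_\rho$, and conclude by rearranging; the lower tail is obtained from $1-f$ (your re-derivation with $B'$ is equivalent). The threshold $(1+\mu)/2$, the two Markov steps, and the use of the $T_\rho$ version of indicator RHC derived inside the proof of \pref{thm:reverse} (rather than \pref{thm:indicator-reverse-hc} directly) all match what the paper does.

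The one slip is the claim that the regime $\Pr[A] \le \exp(-cd)$ is ``handled trivially'' because $\exp(-cd) \le c_{\mu,1}\delta^q$ for $d$ large. That is false: the statement quantifies over \emph{all} $\delta>0$ with $d$ fixed (sufficiently large as a function of $\rho,\tau$ only), so for $\delta$ small enough the right-hand side is smaller than $\exp(-cd)$. The correct resolution is the one you mention afterwards as a ``minor bookkeeping point'': the $T_\rho$ indicator RHC inherited from the proof of \pref{thm:reverse} already has no size restriction, because the small-mass case is absorbed by the all-vertices-resampled exponent $\ell''+1$, so taking $\ell=\max\{q_{\text{down-up}},\ell''+1\}$ removes the threshold entirely. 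With that as the actual argument (and the incorrect trivially-holds paragraph dropped), your proof is correct and matches the paper's.
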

\begin{proof}
We prove the former. The latter follows from considering $1-f$. Let $T$ of the set of elements whose neighborhoods are dense in $f$:
    \[
    T \coloneqq \{ x : T_\rho f(x) \geq 1-\delta\},
    \]
    and consider the indicator cut-off $h=\mathbf{1}_{f \leq \frac{1+\mu}{2}}$. We will argue that on the one hand the correlation of $T$ and $h$ cannot be large, since when $x \in T$, $y\sim_\rho x$ must be mostly above the cutoff. On the other hand, the correlation is at least some power of $\mu(T)$ by reverse hypercontractivity, so $\mu(T)$ must be small.

    Formally, observe that for any $x \in T$, Markov's inequality gives that $T_\rho h(x) \leq \frac{2\delta}{1-\mu}$ so
    \[
    \mathbb{E}[1_T T_\rho h] \leq \mu(T)\delta \frac{2}{1-\mu}.
    \]
    On the other hand, we also have by Markov that $\mathbb{E}[h] \geq \frac{1-\mu}{1+\mu}$, so reverse hypercontractivity for indicators (\pref{thm:indicator-reverse-hc}) implies there exists some $\ell$ such that:
    \[
    \mathbb{E}[1_TT_\rho h] \geq \frac{1}{4}\left(\mu(T) \frac{1-\mu}{1+\mu}\right)^\ell
    \]
    Combining the equations and setting $q=\ell-1$ gives the desired result.
\end{proof}
We note we have made no attempt to optimize the constants, and it may be possible to improve dependence on $\mu$. \cite{MosselOO2005} combine a variant of \pref{thm:tail-bounds} with the invariance principle to prove the classical form of the conjecture with respect to random restrictions. It is not clear what the correct form of this conjecture should be for HDX, or even for product spaces.\footnote{We remark it should be possible to give a version of the result for suitably `global' functions via tools of \cite{keevash2021forbidden,GurLL2022,BafnaHKL2022}, but it is not clear such a notion would have significant qualitative meaning} We leave this as an open question.

\subsection{A Frankl–R\"{o}dl Theorem}
The Frankl–R\"{o}dl Theorem \cite{frankl1987forbidden} is a powerful result from extremal combinatorics that states that the independence number of the graph whose vertices are $\{0,1\}^n$ and whose edges are given by fixed weight intersection $k$ (typically for some $k=\Theta(n)$) is at most $\exp(-\Omega(k))|V|$. Benabbas, Hatami, and Magen \cite{Benabbas2012} proved a variant of this Theorem via reverse hypercontractivity on the cube, later leading to several applications in hardness of approximation \cite{KauersOTZ2014}. Their method, which also bounds classical properties such as the chromatic and dominating set numbers, is based on the following claim:
\begin{claim} \label{claim:ind-set}
    Let \(p_0,q_0 > 0\) and let \(G=(V,E)\) be a weighted graph with minimum vertex weight $q_0$ such that for every \(A,B \subset V\) of relative size at least $p_0$:
    \[
    \Prob[(s_1,s_2) \sim E]{s_1 \in A, s_2 \in B} > 0.
    \]
    Then the maximal independent set in \(G\) has size at most \(2p_0+q_0\), the minimal dominating set has relative size at least \(1-(2p_0+q_0)\), and the chromatic number is at least \(\frac{1}{2p_0+q_0}\).
\end{claim}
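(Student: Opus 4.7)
The plan is to reduce all three conclusions to the first — the bound on the maximum independent set — which is the only step invoking the mixing hypothesis. Suppose for contradiction that $I \subseteq V$ is independent with $\prob{I} > 2p_0 + q_0$. The aim is to partition $I$ into disjoint pieces $A, B \subseteq I$ each of relative mass at least $p_0$; the hypothesis then produces an edge with both endpoints inside $I$, contradicting independence.

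For the splitting step, I would order the vertices of $I$ arbitrarily and greedily add them to $A$ until $\prob{A} \geq p_0$ for the first time. Reading the weight parameter as a uniform upper bound $\prob{v} \leq q_0$ on the mass of any single vertex, the stopping rule forces $\prob{A} \leq p_0 + q_0$, so setting $B \coloneqq I \setminus A$ yields
\[
\prob{B} \;>\; (2p_0+q_0) - (p_0+q_0) \;=\; p_0.
\]
Both $A, B$ now meet the hypothesis and produce $\Pr_{(s_1,s_2)\sim E}[s_1 \in A, s_2 \in B] > 0$, i.e.\ an edge with both endpoints in $I$ — the desired contradiction.

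The chromatic-number bound follows immediately: any proper coloring partitions $V$ into independent classes, each of $\prob{\cdot}$-mass at most $2p_0+q_0$, so at least $1/(2p_0+q_0)$ colors are required. The dominating-set bound is the complementary reading of the same inequality — the complement of a maximum independent set is a minimal covering of all edges of $G$, so its mass is at least $1-(2p_0+q_0)$, which gives the stated lower bound on the minimal dominating set. The only genuine obstacle in the whole argument is the greedy partitioning step, which requires the per-vertex mass bound $\prob{v} \leq q_0$; beyond that, no input from the LUS / nice-complex theory enters the proof, and the strength of the claim in the HDX setting derives entirely from how small $p_0,q_0$ can be made via \pref{thm:indicator-reverse-hc}.
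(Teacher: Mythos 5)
Your proof matches the paper's almost exactly: reduce chromatic number and dominating set to the independent-set bound, and obtain the independent-set bound via the greedy splitting into two pieces of mass at least $p_0$. The argument is correct under your reading. Two small points are worth recording. First, you explicitly flag that the hypothesis ``minimum vertex weight $q_0$'' must really be a uniform \emph{upper} bound $\prob{v} \leq q_0$ on the weight of each vertex for the greedy stopping rule to force $\prob{A} \leq p_0 + q_0$ — this is almost surely a typo in the statement, and the paper's own parenthetical (``probability is between $p_0$ and $p_0+q_0$'') relies on the same reading. Good catch. Second, your treatment of the dominating-set bound passes through the fact that the complement of a maximum independent set is a vertex cover, whereas the paper asserts that ``the complement of any dominating set is an independent set''; both of these are really statements about vertex covers, not dominating sets (the complement of a dominating set need not be independent — a single vertex of a triangle is dominating, but the remaining two vertices form an edge), so neither argument literally establishes the dominating-set claim as worded. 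Since you reproduce exactly the logic the paper intends, this is not a gap attributable to your proof, but it is worth being aware that the step as written only bounds the vertex cover.
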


As a corollary we have that the down-up walks in any of the discussed LUS-complexes in \pref{sec:LUS} have maximal independent sets of relative size at most \(\exp(-\Omega(k))\), a chromatic number of at least \(\exp(\Omega(k))\) and minimal dominating sets of size at least \(1-\exp(-\Omega(k))\). As discussed in the introduction, we remark that these graphs are not quite the correct analog of Frankl–R\"{o}dl, since they have edges with intersection up to some fixed $\gamma k$ instead of $\gamma k$ exactly. Benabbas, Hatami and Magen \cite{Benabbas2012} handle this by analyzing the closeness of the exact intersection versus noise operator on the cube, using tools from Fourier analysis. It would be interesting to see if a variant of this result holds on HDX. If this could be shown generally, then our reverse hypercontractivity theorem would recover Frankl–R\"{o}dl up to constants in the exponent.

\begin{proof}[Proof of \pref{claim:ind-set}]
    We prove the statement for independent sets. The chromatic number statement follows since every coloring is a partition to independent sets, and the statement for dominating sets follows since the complement of any dominating set is an independent set. Let \(I \subseteq G\) be an independent set and assume toward contradiction that  \(\prob{I} \geq 2p_0+q_0\). We partition $I$ into \(I_1,I_2\) such that \(\prob{I_1},\prob{I_2} \geq p_0\) (we can do so by greedily adding vertices to \(I_1\) until its probability is between \(p_0\) and \(p_0+q_0\)). Then by assumption \(\Prob[(s_1,s_2) \sim E]{s_1 \in I_1, s_2 \in I_2} > 0\), which gives the desired contradiction.
\end{proof}

\begin{remark}
    Although the graphs we discussed in this paper have self loops. The claim above continues to hold after removing these self loops, as is evident from the proof.
\end{remark}

\section{Codes and Splitting Trees}\label{sec:codes}
Error correcting codes, and in particular the powerful notions of \textit{list-decoding} and \textit{local-testability}, are among the earliest successful applications of inclusion samplers \cite{impagliazzo2008uniform,impagliazzo2009approximate,ImpagliazzoKW2012} and HDX \cite{DinurHKLT2018,AlevJQST2020,jeronimo2021near,PanteleevK22,DinurELLM2022}. In this section we explore the implications of our tools and related ideas over the weaker family of `splitting trees' in this classical setting. Our main application is the first construction of constant rate codes over large alphabets which simultaneously have 1) near-optimal distance 2) list-decodability and 3) local-testability. At the end of the section we introduce a (conjectural) HDX-based approach toward `lossless' distance-amplification of LTCs that experiences no decay in soundness.
\subsection{Splitting Trees}\label{sec:splitting}
We start with a fairly substantial detour into the world of splitting trees, which are a weakening of high dimensional expanders introduced in \cite{AlevJT2019} which only requires certain patterns of swap walks to expand. We will prove a high-dimensional expander mixing lemma and some basic sampling properties for general splitting trees that will be useful for our applications to coding theory.
\subsubsection{Splitting Preliminaries}
Given a binary tree $T$, let $\mathcal{L}(T)$ denote $T$'s leaves and $\mathcal{I}(T)$ its internal nodes. We drop $T$ from the notation when clear from context. For every internal node $v \in \mathcal{I}$, denote the left and right children of $v$ by $\ell_u$ and $r_u$ respectively.
\begin{definition}[Ordered Binary Tree]
    A $d$-\maximal ordered binary tree is a pair $(T,\rho)$ where \(T\) is a rooted binary tree with \(d\) leaves, and \(\rho:T \to [d]\) is a labeling such that:
    \begin{enumerate}
        \item The label of every leaf \(u \in \mathcal{L}\) is \(\rho(u)=1\).
        \item For every interior \(u \in \mathcal{I}\), it holds that \(\rho(u) = \rho(\ell_u) + \rho(r_u)\).
    \end{enumerate}
We call such a $\rho$ well-ordered with respect to $T$.
\end{definition}
A complex is called splittable if it can be decomposed by an ordered tree where every non-leaf node corresponds to an expanding swap walk.
\begin{definition}[Splitting tree] \label{def:split-tree}
    Let \(\gamma > 0\). A $d$-\maximal \emph{\(\gamma\)-splitting tree} is a triple \((X,T,\rho)\) where \(X\) is a \(d\)-\maximal simplicial complex and \((T,\rho)\) is an ordered binary tree such that for every \(u \in \mathcal{I}(T)\):
    \[
\lambda(S_{\rho(\ell_u),\rho(r_u)}) \leq \gamma,
    \]
\end{definition}
We will also work with the corresponding partite notion of splitting trees.
\begin{definition}[Partite Ordered Binary Tree]
    A $d$-\maximal partite ordered binary tree is a pair $(T,\rho)$ where \(T\) is a rooted binary tree with \(d\) leaves, and \(\rho:T \to \mathbb{P}\{[d]\}\setminus \{\emptyset\}\) is a labeling such that:
    \begin{enumerate}
        \item \(\rho|_{\mathcal{L}}\) is a bijection to the singletons of \([d]\).
        \item For every interior \(u \in \mathcal{I}\), it holds that \(\rho(u) = \rho(\ell_u) \dunion \rho(r_u)\).
    \end{enumerate}
We remark that when convenient, we may use a general set $S$ of size $d+1$ as the domain of $\rho$ (generally corresponding to a labeling of the "parts" of an associated $d$-\maximal partite complex).
\end{definition}
\begin{definition}[Tuple splitting tree] \label{def:tuple-split-tree}
    Let \(\gamma > 0\). A $d$-\maximal \emph{\(\gamma\)-tuple splitting tree} is a triple \((X,T,\rho)\) where \(X\) is a \(d\)-\maximal simplicial complex and \((T,\rho)\) is an ordered partite binary tree such that for every non-leaf vertex \(u \in T\):
    \[
\lambda(S_{\rho(\ell_u),\rho(r_u)}) \leq \gamma
    \]
\end{definition}

Since all swap walks on two-sided HDX expand, these give rise to a basic family of splitting trees.

\begin{example}[\cite{AlevJT2019,DiksteinD2019,GurLL2022}] \label{ex:hdx-splitting-tree}
    Let \(\lambda \geq 0\) and let \(X\) be a \(d\)-\maximal two-sided \(\lambda\)-high dimensional expander. Then for any binary tree \(T\) and labeling \(\rho\) that satisfies the first and second item in \pref{def:split-tree}, \((X,T,\rho)\) is a \(d \lambda\)-splitting tree. 
\end{example}

An easy consequence of \cite{DiksteinD2019} is that partite one-sided HDX also give rise to tuple-splitting trees.
\begin{example}
        Let \(\lambda \geq 0\) and let \(X\) be a \(d\)-\maximal two-sided \(\lambda\)-high dimensional expander. Then for any binary tree \(T\) and labeling \(\rho\) that satisfy the first and second item in \pref{def:tuple-split-tree}, \((X,T,\rho)\) is a \(d^2 \lambda\)-tuple splitting tree. 
\end{example}

Finally, \cite[Corollary 9.18]{AlevJQST2020} also prove that walks on expander graphs are splittable.
Let us define the complex formally. Let \(G = (V,E)\) be a graph and \(k \in \NN\). Let \(W_G\) be the following \(k\)-partite simplicial complex
    \[\SC[W][1][G] = V \times [k],\]
    \[\SC[W][k][G] = \sett{\set{(v_1,1),(v_2,2),\ldots,(v_k,k)}}{(v_1,v_2,\ldots,v_k) \text{ is a walk in } G}.\]
We choose a face in \(\SC[W][k][G]\) by choosing a vertex \(v_1 \in V\) according to the stationary distribution over vertices in \(G\) and then taking a \((k-1)\)-length random walk on that vertex.

\begin{example}
    Let \(G\) be a \(\lambda\)-expander and \(T, \rho\) an ordered partite tree. If every \(u \in \mathcal{I}\) satisfies: 
    \[
    \max \rho(\ell_u) < \min \rho (r_u),
    \]
    then \((W_G,T,\rho)\) is a \(\lambda\)-splittable tree.
\end{example}

\subsubsection{Operations on splitting trees}
We will need several further operations on splitting trees for our analysis: a \textit{pruning} method that allow us to analyze higher dimensionalT faces, and a \textit{partitification} technique that allows us to reduce standard splitting trees to the tuple setting.

We start with the latter. Recall $X$'s partitification is the complex
\[
\SC[P][k] \coloneqq \sett{(s,\pi) \coloneqq \left\{(s_{\pi(1)},1),(s_{\pi(2)},2),\ldots,(s_{\pi(k)},k)\right\}}{s \in \X[k], \pi \in S_k},
\]
This notion extends naturally to splitting trees.
\begin{definition} \label{def:partitification}
    Let \((X,T,\rho)\) be a $d$-\maximal \(\lambda\)-splitting tree, and \(\rho':T \to [d]\) a well-ordered labeling such that \(|\rho'(u)|=\rho(u)\) for every \(u \in T\). The \emph{\(\rho'\)-partitification} of \((X,T,\rho)\) is defined to be the tuple splitting tree \((P,T,\rho')\).
\end{definition}
Any un-ordered splitting tree can be embed into a tuple splitting tree by taking any partitification.
\begin{claim} \label{claim:partitification}
    Let \((X,T,\rho)\) be a \(\lambda\)-splitting tree. Every \(\rho'\)-partitification of \((X,T,\rho)\) is a \(\lambda\)-tuple splitting tree.
\end{claim}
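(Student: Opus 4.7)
The plan is to reduce the spectral analysis of each colored swap walk $S_{\rho'(\ell_u), \rho'(r_u)}$ on the partitification $P(X)$ to the un-colored swap walk $S_{\rho(\ell_u), \rho(r_u)}$ on $X$, which already expands by the splitting assumption on $(X,T,\rho)$. The guiding observation is that the partitification merely records an \emph{ordering} of each face as extra data, and that colored swap walks on $P$ act trivially on this ordering.

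To set this up, first I would unpack the edge distribution of $S_{F_1, F_2}$ on $P$, where I write $F_1 = \rho'(\ell_u)$ and $F_2 = \rho'(r_u)$. Given $(s_2, \pi_2) \in P[F_2]$, a neighbor $(s_1, \pi_1) \in P[F_1]$ is obtained by drawing $s_1$ from the un-colored swap walk $S_{|F_2|, |F_1|}(X)$ starting at $s_2$ and then drawing a uniformly random bijection $\pi_1 : F_1 \to s_1$, independently of everything else. The required verification here is that the weights match up: the joint weight of the edge $\{(s_1,\pi_1),(s_2,\pi_2)\}$ is proportional to $\Pr_X[s_1 \cup s_2]$ divided by the number of orderings of $s_1 \cup s_2$, and marginalizing over $\pi_1$ recovers the transition kernel of $S_{|F_2|, |F_1|}$ exactly. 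This is immediate from the definition of $P$'s weights as $\frac{1}{k!}\Pr_X[s]$.

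Next I would compute the action of $S_{F_1, F_2}$ on a function $g : P[F_1] \to \mathbb{R}$ in terms of its symmetrization $\bar{g}(s_1) \coloneqq \mathbb{E}_{\pi_1}[g(s_1, \pi_1)]$, viewed as a function on $X(|F_1|)$. The previous step gives
\[
(S_{F_1, F_2} g)(s_2, \pi_2) \;=\; \mathbb{E}_{s_1}\,\bar{g}(s_1) \;=\; (S_{|F_1|, |F_2|}\, \bar{g})(s_2),
\]
which is in particular independent of $\pi_2$. Since $g \mapsto \bar{g}$ is an orthogonal projection between the natural $L^2$ spaces (the measure on $P[F]$ factors as the $X$-measure on $s$ times the uniform measure on orderings), one gets $\|\bar{g}\|_2 \leq \|g\|_2$ and $\mathbb{E}[\bar{g}] = \mathbb{E}[g]$, and similarly $\|S_{F_1,F_2}g\|_{L^2(P[F_2])} = \|S_{|F_1|,|F_2|}\bar{g}\|_{L^2(X(|F_2|))}$.

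Finally, for any $g \perp \mathbf{1}$ in $L^2(P[F_1])$, applying the splitting assumption on $(X,T,\rho)$ at node $u$ yields
\[
\|S_{F_1, F_2}\, g\|_2 \;=\; \|S_{|F_1|, |F_2|}\, \bar{g}\|_2 \;\leq\; \lambda\,\|\bar{g}\|_2 \;\leq\; \lambda\,\|g\|_2,
\]
which gives $\lambda\bigl(S_{\rho'(\ell_u),\rho'(r_u)}(P)\bigr) \leq \lambda\bigl(S_{\rho(\ell_u),\rho(r_u)}(X)\bigr) \leq \lambda$, completing the proof once applied at every internal node $u \in \mathcal{I}(T)$. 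I do not expect any substantive obstacle here: the partitification is designed precisely so that colored walks on $P$ correspond to un-colored walks on $X$ with an independent uniform ordering tacked on, so the main work is purely bookkeeping to confirm the measure compatibility stated above.
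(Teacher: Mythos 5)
Your proposal is correct and is essentially the same argument as the paper's, just fully worked out. The paper observes that $S_{\rho'(\ell_u),\rho'(r_u)}(P)$ is a blow-up of $S_{\rho(\ell_u),\rho(r_u)}(X)$ by complete bipartite graphs on the orderings and cites the standard fact that this preserves bipartite expansion; your operator-theoretic computation—showing the colored swap walk factors through the symmetrization projection $g \mapsto \bar g$, which is norm non-increasing and mean-preserving—is precisely a verification of that standard fact.
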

\begin{proof}
    It suffices to show for every internal node \(u \in \mathcal{I}\) with \(\rho'(\ell(u))=A\) and \(\rho'(r(u))=B\) that \(S_{A,B}(P)\) is a \(\lambda\)-bipartite expander. This follows from the observation that \(S_{A,B}(P)\) is isomorphic to the bipartite graph with left vertices \(L=\X[|A|] \times [|A|!]\), right vertices \(R=\X[|B|] \times [|B|!]\), and edges \((s,i) \to (t,j)\) for all $i \in A, j \in B$, and $(s,t)$ such that \(s \sim t\) in \(S_{\rho(\ell_u),\rho(r_u)}\).\footnote{Formally, the edge \((s,i) \to (t,j)\) also inherits the weight $\frac{\Pr(st)}{|A||B|}$ from \(S_{\rho(\ell_u),\rho(r_u)}\).} It is a standard fact that such a bipartite construction inherits its expansion from \(S_{\rho(\ell_u),\rho(r_u)}\), and is therefore a \(\lambda\)-bipartite expander. 
\end{proof}

Finally, we note that tuple splitting trees can be pruned to create new splitting trees. Let \((X,T,\rho)\) be a tuple splitting tree with \(k\) leaves. Let \(u \in \mathcal{I}\) be an internal node whose label is \(F= \rho(u)\). 
Let \(P=P(X,F)\) be the partite complex with vertices \(\SC[P][1] = \SC[X^{[k] \setminus F}][1] \dunion X[F]\) and top level faces \(t \dunion \set{s}\) such that \(t \dunion s \in \X[k]\), inheriting the corresponding densities. In other words, we replace the vertices of $X$ in \(\SC[X^F][1]\) with the set of faces \(X[F]\). 

Let \(T(u)\) be the sub-tree of \(T\) rooted by \(u\). Let \(T'=(T \setminus T(u)) \cup \set{u}\). Let \(\rho':T' \to ([k]\setminus F) \cup \set{F}\) be the labeling that replaces the subset $F$ with a single replacement index $f$:
\[ \rho'(v) = \begin{cases} (\rho(v)\setminus F) \cup \set{f} & F \subseteq \rho(v) \\
\rho(v) & otherwise
\end{cases}.\]

\begin{definition}[Pruning Trees] \label{def:pruning}
    Let \((X,T,\rho)\) be a \(\lambda\)-tuple splitting tree with \(k\) leaves. Let \(u \in T\) be a non-leaf node whose label is \(F= \rho(u)\). The \(u\)-pruning of the tree is the triple \((P(X,F),T',\rho')\).
\end{definition}

\begin{claim} \label{claim:tree-pruning}
    Let \((X,T,\rho)\) be a \(\lambda\)-tuple splitting tree. Let \(u \in T\) be a non-leaf node. Then the \(u\)-pruning is also a \(\lambda\)-splitting tree. \(\qed\)
\end{claim}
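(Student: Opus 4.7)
The plan is to verify the defining condition of a \(\lambda\)-tuple splitting tree directly for \((P(X,F),T',\rho')\). Concretely, for every non-leaf node \(v \in T'\) I must exhibit that
\[
\lambda\bigl(S_{\rho'(\ell_v),\rho'(r_v)}(P(X,F))\bigr) \leq \lambda.
\]
Observe that the non-leaf nodes of \(T'\) are precisely those non-leaf nodes \(v\) of \(T\) that are not strict descendants of \(u\), since the subtree \(T(u)\setminus\{u\}\) has been removed and \(u\) itself is now a leaf of \(T'\) (so we owe no check at \(u\)). I would split these remaining nodes into two families and handle each by a direct ``distributional isomorphism'' between the relevant swap walks in \(P(X,F)\) and \(X\).

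First, suppose \(v\) is not an ancestor of \(u\). Then \(\rho(\ell_v),\rho(r_v)\) are both disjoint from \(F\), and \(\rho'\) agrees with \(\rho\) on them. The key observation is that top-level faces of \(P(X,F)\) are in canonical bijection with top-level faces of \(X\) via \((t,s) \mapsto t\cup s\); this bijection is measure preserving by the definition of the density on \(P(X,F)\). Since the colors \(\rho(\ell_v),\rho(r_v)\) live entirely outside \(F\), the partition of a top face of \(P(X,F)\) into its \(\rho(\ell_v)\)-colored and \(\rho(r_v)\)-colored sub-faces is identical (through the bijection) to the analogous partition in \(X\). Hence \(S_{\rho'(\ell_v),\rho'(r_v)}(P(X,F))\) is isomorphic as a weighted bipartite graph to \(S_{\rho(\ell_v),\rho(r_v)}(X)\), whose spectral norm is at most \(\lambda\) by hypothesis.

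Second, suppose \(v\) is a strict ancestor of \(u\); without loss of generality \(u\) lies in the subtree rooted at \(\ell_v\), so \(F \subseteq \rho(\ell_v)\) and \(\rho'(\ell_v) = (\rho(\ell_v)\setminus F)\cup\{f\}\), while \(\rho'(r_v) = \rho(r_v)\) is disjoint from \(F\). Here I would again use the bijection between top faces of \(P(X,F)\) and \(X\), but now note that a face of \(P(X,F)\) of color \((\rho(\ell_v)\setminus F)\cup\{f\}\) is a pair consisting of a tuple colored \(\rho(\ell_v)\setminus F\) together with a single ``bundled'' vertex of \(P\) (an element of \(X[F]\)). Unbundling the latter yields exactly a face of \(X\) colored \(\rho(\ell_v)\), and this correspondence is again measure preserving. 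Under the same bijection the swap walk in \(P(X,F)\) between colors \((\rho(\ell_v)\setminus F)\cup\{f\}\) and \(\rho(r_v)\) coincides with the swap walk in \(X\) between colors \(\rho(\ell_v)\) and \(\rho(r_v)\). Thus it inherits the bound \(\lambda\) from \((X,T,\rho)\).

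The main obstacle I anticipate is purely bookkeeping: setting up the bijection cleanly and verifying measure preservation (in particular that the induced density on \(P(X,F)\) really does push forward to \(\Pr_k\) on \(X\) under the map \((t,s)\mapsto t\cup s\)). Once that is in place, both cases reduce to ``the swap walk in the pruned object is identical, as a weighted graph, to a swap walk already assumed to expand,'' and the claim follows with no loss in the parameter \(\lambda\).
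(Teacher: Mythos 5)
Your proof is correct, and since the paper explicitly omits this proof as ``following directly from the definitions,'' you have supplied exactly the direct verification the authors had in mind: partition the non-leaf nodes of $T'$ into those incomparable with $u$ and the strict ancestors of $u$, and in both cases show the relevant colored swap walk on $P(X,F)$ is isomorphic as a weighted bipartite graph to one on $X$ (using that $\rho(\ell_v),\rho(r_v)$ are either both disjoint from $F$, or one contains $F$ and the other is disjoint from it, together with the measure-preserving bijection $t\dunion\{s\}\mapsto t\dunion s$ between top faces).
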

The proof follows directly from the definitions and we therefore omit it. 

Finally, it will be convenient to introduce notation for \textit{sequential pruning}, that is given a collection of disjoint color sets $\mathcal{F} = \{F_1,\ldots,F_m\}$, we denote the sequential pruning of $X$ by $\mathcal{F}$ as $P(X,\mathcal{F}) \coloneqq P(...P(X,F_1),F_2),F_m)$. Note that the resulting complex is independent of the order of $\mathcal{F}$, so $P(X,\mathcal{F})$ is well-defined. One can similarly define a sequentially pruned splitting tree as above. It is immediate from \pref{claim:tree-pruning} that these are also $\lambda$-tuple splitting.

\subsubsection{Hitting set and Bias amplification}\label{sec:hdeml}
In this section, we prove two basic `sampling-type' properties for splitting trees: hitting set and bias amplification. We start with the former. Recall a complex is $(\gamma,i)$-hitting if for any $A \subset \X[0]$:
    \[
    \Pr_{\sigma \in \X[i]}[\sigma \subset A] \leq \mu(A)^{i} + \gamma
    \]
\begin{proposition}[Hitting Set]\label{prop:hitting-set}
    Any depth $D$ $\lambda$-(tuple) splitting tree with $k$ leaves is $(3^D\lambda,k)$-hitting.
\end{proposition}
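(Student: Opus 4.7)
The plan is to induct on the tree, proving the stronger local bound
\[
p_u \;\coloneqq\; \Pr_{\sigma \in \X[\rho(u)]}[\sigma \subseteq A] \;\leq\; \prob{A}^{\rho(u)} + 3^{D_u}\lambda
\]
for every node $u$, where $D_u$ is the depth of the subtree of $T$ rooted at $u$ (in the tuple case $\rho(u)$ should be read as $|\rho(u)|$); evaluating this at the root yields the proposition. The base case ($u$ a leaf, $\rho(u)=1$) is immediate since $p_u = \prob{A}$ exactly.

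For the inductive step fix an internal node $u$ with children labels $\rho_L = \rho(\ell_u)$ and $\rho_R = \rho(r_u)$, and put $A_L = \{s \in \X[\rho_L] : s \subseteq A\}$, $A_R = \{t \in \X[\rho_R] : t \subseteq A\}$. A face $\sigma \in \X[\rho(u)]$ with $\sigma \subseteq A$ is exactly an edge of $S_{\rho_L, \rho_R}$ whose two endpoints lie in $A_L, A_R$, so $p_u = \langle 1_{A_L}, S_{\rho_L,\rho_R}\, 1_{A_R}\rangle$. The splitting-tree hypothesis gives $\lambda(S_{\rho_L,\rho_R}) \leq \lambda$, and the bipartite expander mixing lemma \eqref{eq:basic-bipartite-expanders} together with $\|1_{A_L}\|_2 \|1_{A_R}\|_2 = \sqrt{p_{\ell_u} p_{r_u}} \leq 1$ yields the clean recursion
\[
p_u \;\leq\; p_{\ell_u}\, p_{r_u} + \lambda.
\]

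The main step---and essentially the only one requiring care---is a linearization that avoids an undesirable cross-term. Writing $p_{\ell_u} = \prob{A}^{\rho_L} + x$ and $p_{r_u} = \prob{A}^{\rho_R} + y$ and using $p_{r_u}, \prob{A}^{\rho_L} \leq 1$,
\[
p_{\ell_u}\, p_{r_u} \;=\; \prob{A}^{\rho_L}\, p_{r_u} + x\, p_{r_u} \;\leq\; \prob{A}^{\rho(u)} + \prob{A}^{\rho_L} y + x \;\leq\; \prob{A}^{\rho(u)} + x + y.
\]
Combining with the recursion gives $p_u - \prob{A}^{\rho(u)} \leq x + y + \lambda$, and by induction $x \leq 3^{D_{\ell_u}}\lambda$ and $y \leq 3^{D_{r_u}}\lambda$, both at most $3^{D_u-1}\lambda$. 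Hence $p_u - \prob{A}^{\rho(u)} \leq 2\cdot 3^{D_u-1}\lambda + \lambda \leq 3^{D_u}\lambda$, closing the induction. I expect no real obstacles: the only subtlety worth flagging is that the naïve bound $p_{\ell_u} p_{r_u} \leq (\prob{A}^{\rho_L}+x)(\prob{A}^{\rho_R}+y)$ leaves a cross-term $xy$ of order $3^{2(D_u-1)}\lambda^2$ which need not fit inside $3^{D_u}\lambda$ unconditionally, so the affine linearization above is what preserves the constant $3$ across the recursion.
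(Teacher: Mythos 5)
Your inductive recursion $p_u \leq p_{\ell_u} p_{r_u} + \lambda$ via the bipartite expander mixing lemma is correct, and the linearization trick you flag (multiplying by $p_{r_u}\leq 1$ rather than expanding the product naively) is exactly the right way to keep the constant at $3$. For the \emph{non-tuple} splitting tree your argument is complete and works, and in fact gives a cleaner direct proof than the paper's (the paper derives the non-tuple case by partitifying and invoking the tuple case).

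However, there is a genuine gap in the \emph{tuple} case, which is the case the paper actually proves directly. Your base case asserts $p_u = \prob{A}$ at a leaf, but in the partite setting a leaf $u$ has label $\rho(u)=\{i\}$ and $X[\{i\}] = X[i]$ is the $i$-th part, so
\[
p_u \;=\; \Pr_{v \in X[i]}[\,v \in A\,] \;=\; \mu(A_i),
\]
the \emph{conditional} density of $A$ within that part. This can be anything in $[0,1]$ independently of the global $\prob{A}$: e.g.\ if $A = X[1]$ exactly then $\mu(A_1)=1$ while $\prob{A} = 1/k$. So the invariant $p_u \leq \prob{A}^{|\rho(u)|} + 3^{D_u}\lambda$ is false already at depth zero. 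The correct fix is the one the paper uses: carry the invariant
\[
p_u \;\leq\; \prod_{i\in\rho(u)}\mu(A_i) \;+\; 3^{D_u}\lambda
\]
through the recursion (your linearization trick applies unchanged), and only at the root convert via the AM--GM inequality
\[
\prod_{i=1}^k \mu(A_i) \;\leq\; \left(\frac{1}{k}\sum_{i=1}^k \mu(A_i)\right)^k \;=\; \prob{A}^{k}.
\]
This AM--GM step is precisely the piece your proof skips. Up to that change, your direct induction is essentially a special case of the paper's $\mathrm{HD}$-$\mathrm{EML}$ corollary \pref{cor:hdeml-indicators} (with all $f_i$ being indicators), and the rest of your calculation stands.
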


Splitting trees also exhibit strong bias amplification, used to build balanced error correcting codes.
\begin{proposition}[Bias-Amplification] \label{prop:bias-amp}
Let \((X,T,\rho)\) be a $\lambda$-tuple splitting tree for $\lambda < \frac{1}{16}$. For any $0<\varepsilon<\frac{1}{4}$ and family of mean $\varepsilon$ functions \(\{f_i:X[i] \to \{\pm 1\}\}_{i \in [k]}\), $\prod f_i$ is a $\{\pm 1\}$-valued function with bias at most:
\begin{equation}
    \abs{\Ex[a\in {\X[k]}]{\prod_{i=1}^k f_i(a_i)}} \leq \varepsilon^k+2\lambda
\end{equation}
\end{proposition}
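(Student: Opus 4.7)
The plan is to induct on the structure of $T$, propagating a bias estimate upwards using the swap-walk expansion at each internal node. For every node $u \in T$ define the $\pm 1$-valued function
\[
g_u : X[\rho(u)] \to \{\pm 1\}, \qquad g_u(a) \coloneqq \prod_{i \in \rho(u)} f_i(a_i),
\]
so that at the root $g_{\mathrm{root}} = \prod_{i=1}^k f_i$ and the quantity we want to bound is $|\mathbb{E}[g_{\mathrm{root}}]|$. The point is that at an internal node $u$ with children labels $A = \rho(\ell_u)$ and $B = \rho(r_u)$, a top face $a \in X[A \dunion B]$ is by definition sampled by choosing an edge $(a_A, a_B)$ of the colored swap walk $S_{A,B}$, so
\[
\mathbb{E}[g_u] \;=\; \mathbb{E}_{(a_A,a_B) \sim S_{A,B}}\bigl[g_{\ell_u}(a_A)\, g_{r_u}(a_B)\bigr] \;=\; \langle g_{\ell_u},\, S_{A,B}\, g_{r_u}\rangle.
\]
Since $(X,T,\rho)$ is a $\lambda$-tuple splitting tree, $S_{A,B}$ is a $\lambda$-bipartite expander, so the bipartite expander-mixing inequality \pref{eq:basic-bipartite-expanders} combined with $\|g_{\ell_u}\|_2,\|g_{r_u}\|_2 \leq 1$ (the $g_u$'s are $\pm 1$-valued) yields the recursion
\[
|\mathbb{E}[g_u]| \;\leq\; |\mathbb{E}[g_{\ell_u}]| \cdot |\mathbb{E}[g_{r_u}]| \;+\; \lambda.
\]

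The next step is to unroll this recursion to the target $\varepsilon^k + 2\lambda$ bound. I will prove by induction on subtree depth the claim $|\mathbb{E}[g_u]| \leq \varepsilon^{|\rho(u)|} + 2\lambda$ at every node $u$. At a leaf, $|\mathbb{E}[g_u]| = |\mathbb{E}[f_{\rho(u)}]| = \varepsilon$, so the claim is immediate. For an internal $u$, I split on how many of $\ell_u, r_u$ are leaves: if both are leaves the recursion gives $\varepsilon^2 + \lambda$; if exactly one is a leaf the recursion gives $\varepsilon^{|\rho(u)|} + 2\varepsilon\lambda + \lambda$, which is at most $\varepsilon^{|\rho(u)|}+2\lambda$ using $\varepsilon < 1/2$; and if both are internal then $|\rho(\ell_u)|,|\rho(r_u)| \geq 2$, so the induction produces an error on top of $\varepsilon^{|\rho(u)|}$ of
\[
2\lambda\bigl(\varepsilon^{|\rho(\ell_u)|}+\varepsilon^{|\rho(r_u)|}\bigr) + 4\lambda^2 + \lambda \;\leq\; 4\lambda \varepsilon^2 + 4\lambda^2 + \lambda,
\]
which is at most $2\lambda$ since $\varepsilon < 1/4$ forces $4\varepsilon^2 < 1/4$ and $\lambda < 1/16$ forces $4\lambda < 1/4$. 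Applying the claim at the root (where $|\rho(\mathrm{root})| = k$) gives the proposition.

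The only subtlety in executing the plan is justifying the measure-level identity $\mathbb{E}[g_u] = \langle g_{\ell_u}, S_{A,B}\, g_{r_u}\rangle$: one must verify that the distribution on $X[A] \times X[B]$ obtained by sampling $a \in X[A \dunion B]$ and splitting according to colors is exactly the edge distribution of $S_{A,B}$ used in \pref{def:tuple-split-tree}. This is built into the definition of the colored swap walk but is worth writing out once so that the weights match. Beyond this the proof is a routine two-case induction, and I do not anticipate a genuine obstacle. The mild loss is that the constants $1/4$ and $1/16$ are tuned to close the ``both children internal'' case cleanly with a $+2\lambda$ error; they could be relaxed at the cost of a larger multiplicative constant in front of $\lambda$.
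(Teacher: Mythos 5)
Your proof is correct and follows the paper's own approach exactly: induct on the tree, apply the bipartite expander-mixing bound $|\mathbb{E}[g_u] - \mathbb{E}[g_{\ell_u}]\mathbb{E}[g_{r_u}]| \leq \lambda$ at each internal node, and propagate the $\varepsilon^{|\rho(u)|}+2\lambda$ estimate upward to the root. Your explicit case analysis on which children are leaves is in fact slightly more careful than the paper's one-line inductive step, which bounds both children by $\varepsilon^{|\cdot|}+2\lambda$ uniformly and implicitly relies on the tighter leaf-case bound (with no $+2\lambda$ slack) to close the case where both children are leaves.
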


While \pref{prop:hitting-set} and \pref{prop:bias-amp} are elementary, to the authors' knowledge they do not appear in the literature. Prior works on HDX (e.g. \cite{DinurK2017,AlevJQST2020}) use weaker variants of these lemmas which require super-exponential overhead to achieve a given distance or bias, while the above only requires quasi-polynomial overhead. We discuss this in greater detail in \pref{sec:codes}.

These results are immediate corollaries of a functional variant of \cite{DiksteinD2019}'s ``high dimensional expander-mixing lemma'' (HD-EML) extended to splitting trees. We give the statement here and include a proof in the appendix for completeness along with the proofs of \pref{prop:hitting-set} and \pref{prop:bias-amp}.

\begin{theorem}[high dimensional expander Mixing Lemma]\label{thm:hdeml}
Let \((X,T,\rho)\) be a depth $D$ \(\lambda\)-tuple splitting tree with \(k\) leaves. Denote by \(d_i\) the depth of the leaf labeled \(i\). Then for any family of functions \(\{f_i:X[i] \to \RR\}_{i \in [k]}\):
\begin{equation}\label{eq:hdeml}
    \abs{\Ex[a \in {\X[k]}]{\prod_{i=1}^k f_i(a_i)}- \prod_{i=1}^k \Ex[{a_i \in X[i]}]{f_i(a_i)}} \leq 3^{D} \lambda \prod_{i=1}^k \norm{f_i}_{2^{d_i}}.
\end{equation}
If \((X,T,\rho)\) is instead a depth $D$ standard $\lambda$-splitting tree, we take \(\{f_i:\X[1] \to \RR\}_{i \in [k]}\) and have:
\begin{equation}\label{eq:hdeml-unordered}
    \abs{\Ex[a\in {\X[k]}, \pi \in S_k]{\prod_{i=1}^k f_i(a_{\pi(i)})}- \prod_{i=1}^k \Ex[a_i \in {\X[1]}]{f_i(a_i)}} \leq 3^{D} \lambda \prod_{i=1}^k \norm{f_i}_{2^{d_i}}.
\end{equation}
\end{theorem}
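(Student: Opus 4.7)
The plan is to prove the tuple (ordered) version \pref{eq:hdeml} by induction on the depth $D$ of the tree, then derive the unordered version \pref{eq:hdeml-unordered} as an immediate consequence of \pref{claim:partitification} applied to any partitification. The base case $D=0$ is a single leaf and the inequality is trivial. For the inductive step, let $u$ denote the root and write $L = \rho(\ell_u), R = \rho(r_u)$, so that $[k] = L \dunion R$. Group the functions into $F_L = \prod_{i \in L} f_i$ and $F_R = \prod_{i \in R} f_i$, viewed as functions on $X[L]$ and $X[R]$ respectively. Then $\Ex[{X[k]}]{\prod f_i} = \iprod{F_L, S_{L,R} F_R}$, and the bipartite expander-mixing bound \eqref{eq:basic-bipartite-expanders} applied to the root swap walk gives
\[
\abs{\iprod{F_L, S_{L,R} F_R} - \Ex{F_L}\Ex{F_R}} \leq \lambda \, \norm{F_L}_2 \norm{F_R}_2.
\]

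The main work goes into controlling the three resulting terms: the error $\lambda \norm{F_L}_2 \norm{F_R}_2$, and the ``decoupling'' error $|\Ex{F_L}\Ex{F_R} - \prod_i \Ex{f_i}|$. For the first, I would bound $\norm{F_L}_2^2 = \Ex[{X[L]}]{\prod_{i \in L} f_i^2}$ by applying the inductive hypothesis to the \emph{squared} functions $\{f_i^2\}_{i \in L}$ on the left subtree (which has depth $\leq D-1$, and in which leaf $i$ sits at depth $d_i - 1$). Using that $p$-norms are monotone in $p$ together with the identity $\norm{f_i^2}_{2^{d_i-1}} = \norm{f_i}_{2^{d_i}}^2$, one obtains $\norm{F_L}_2^2 \leq (1 + 3^{D-1}\lambda)\prod_{i \in L}\norm{f_i}_{2^{d_i}}^2$. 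Observing that the theorem is vacuous unless $3^D \lambda \leq 2$ (since $|\Ex{\prod f_i} - \prod\Ex{f_i}| \leq 2 \prod \norm{f_i}_\infty \leq 2 \prod \norm{f_i}_{2^{d_i}}$), we may assume $3^{D-1}\lambda \leq 1$ and hence $\norm{F_L}_2 \leq \sqrt{2}\prod_{i\in L}\norm{f_i}_{2^{d_i}}$; the analogous bound holds for $R$. For the decoupling error, a triangle inequality of the form $|ab - a'b'| \leq |a-a'|\,|b| + |a'|\,|b-b'|$ together with the inductive hypothesis (applied separately to $\{f_i\}_{i \in L}$ and $\{f_i\}_{i\in R}$) and the crude bound $|\Ex{f_i}| \leq \norm{f_i}_{2^{d_i}}$ reduces everything to a linear recurrence of the form
\[
\alpha_D \leq 2 + (\sqrt{2}+1)\,\alpha_{D-1}.
\]

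The main (and really the only) obstacle will be the bookkeeping to verify that this recurrence, together with the trivial base cases $\alpha_0 = 0$ and $\alpha_1 = 1$, fits under $\alpha_D \leq 3^D$. Since the asymptotic growth of the recurrence is $(\sqrt{2}+1)^D \approx 2.42^D$, the bound $3^D$ holds with room to spare for all $D \geq 1$; one checks directly $\alpha_1 \leq 3$ and $\alpha_2 = 3 + \sqrt{2} \leq 9$, and the inductive step $2 + (\sqrt{2}+1)3^{D-1} \leq 3^D$ reduces to $2 \leq (2-\sqrt{2})3^{D-1}$, valid for all $D \geq 3$. Finally, the unordered statement \eqref{eq:hdeml-unordered} follows immediately: by \pref{claim:partitification} any partitification of $(X,T,\rho)$ is a $\lambda$-tuple splitting tree of the same depth, and applying \eqref{eq:hdeml} inside the partitification recovers the averaged expectation $\Ex[a \in \X[k], \pi \in S_k]{\prod f_i(a_{\pi(i)})}$ with the same $L^{2^{d_i}}$-norm bounds.
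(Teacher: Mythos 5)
Your proposal follows the same inductive skeleton as the paper — split at the root, bound the root-level decoupling error by the bipartite expander-mixing inequality for $S_{L,R}$, then bound the remaining cross term $|\Ex{F_L}\Ex{F_R} - \prod_i\Ex{f_i}|$ by the product-rule triangle inequality and two applications of the inductive hypothesis, using $|\Ex{F_L}|\leq\norm{F_L}_2$ and $|E_{\bar{f}_R}|\leq\prod_{i\in R}\norm{f_i}_{2^{d_i}}$. Where you genuinely differ is the bound on $\norm{F_L}_2$: the paper obtains $\norm{F_L}_2\leq\prod_{i\in L}\norm{f_i}_{2^{d_i}}$ directly by iterated Cauchy–Schwarz along the left subtree (their Eq.\ \eqref{eq:HDEML-CS}), whereas you get it by applying the inductive hypothesis to the squared functions $\{f_i^2\}$. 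Your identity $\norm{f_i^2}_{2^{d_i-1}}=\norm{f_i}_{2^{d_i}}^2$ is correct and the trick is cute, but it costs you: you only get $\norm{F_L}_2^2\leq(1+3^{D-1}\lambda)\prod\norm{f_i}_{2^{d_i}}^2$, so you need an a priori bound on $\lambda$ to absorb the extra factor, which forces the side observation that the theorem is vacuous when $3^D\lambda>2$, and leaves you a messier recurrence $\alpha_D\leq 2+(\sqrt{2}+1)\alpha_{D-1}$ instead of the paper's clean $\alpha_D\leq 1+2\alpha_{D-1}$. The paper's argument needs no such case split.

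There is also a real gap in your vacuity justification: you wrote $2\prod\norm{f_i}_\infty\leq 2\prod\norm{f_i}_{2^{d_i}}$, but for functions on a probability space $p\mapsto\norm{f}_p$ is nondecreasing, so $\norm{f_i}_\infty\geq\norm{f_i}_{2^{d_i}}$ and the inequality runs the wrong way. The statement you want — $\Abs{\Ex{\prod f_i}-\prod\Ex{f_i}}\leq 2\prod\norm{f_i}_{2^{d_i}}$ unconditionally — is true, but the correct justification is Hölder's inequality with exponents $2^{d_i}$: since every internal node of a binary tree has two children, the leaf depths satisfy Kraft's equality $\sum_i 2^{-d_i}=1$, so $\Ex{\prod|f_i|}\leq\prod\norm{f_i}_{2^{d_i}}$, and separately $\Abs{\prod\Ex{f_i}}\leq\prod\norm{f_i}_1\leq\prod\norm{f_i}_{2^{d_i}}$ by monotonicity. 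Note, though, that this Hölder bound is precisely the content of the paper's iterated Cauchy–Schwarz along the tree — so once you supply the correct vacuity argument, you have already done the step your detour through squared functions was meant to avoid, and you might as well use the paper's direct bound on $\norm{F_L}_2$ and drop the case split entirely.
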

In the future, we state our results only for the partite setting unless there is a substantial difference in proof or parameters for the results. Otherwise, we collect analog statements for the unordered case at the end of \pref{app:HD-EML} for completeness.

We close the subsection with a few remarks. First it should be noted that for the special case of expander walks, better hitting set and bias amplification methods are known (indeed one can achieve polynomial overhead vs our quasipolynomial bound above). Nevertheless, it is useful to have these lemmas on structures like HDX that admit stronger properties like agreement tests. We remark it is also actually possible to use the high dimensional expander mixing lemma for splitting trees directly to prove a different variant of Chernoff-Hoeffding by reduction to the complete complex. However, this results in exponentially worse parameters than our arguments in \pref{sec:chernoff}, and we will not need the result in the following. 

\subsection{Coding Preliminaries}
A $q$-ary error correcting code $C$ over a finite size-$q$ alphabet $\Sigma$ is a subset $C \subset \Sigma^n$, where $n \in \mathbb{N}$ is called the \textit{block length}. We say a code is \textit{linear} if $\Sigma=\mathbb{F}_p$ for prime power $p$ and $C$ is a subspace, and \textit{$\mathbb{F}_p$-linear} if $\Sigma=\mathbb{F}_p^k$ for some $k \in \mathbb{N}$ and $C$ is a subspace of $(\mathbb{F}_p^k)^n$.

The \textit{rate} of a code $C$, which measures its overhead, is $r=\frac{\log_{|\Sigma|}(|C|)}{n}$, and its \textit{distance}, which captures the codes error correction capability, is the minimum normalized hamming distance between any two codewords
\[
d \coloneqq \min_{c_1,c_2 \in C}\left\{\frac{|\{i \in [n]: c_1(i) \neq c_2(i)\}|}{n}\right\}.
\]
We remark that for any $\mathbb{F}_p$-linear code, the distance is the minimum weight of a non-zero codeword. For ease of notation, we will call a $q$-ary code with rate $r$ and distance $d$ an $(r,d)_q$-code.

We will be interested in so-called ``good'' families of codes, which are infinite families with growing block length and constant rate and distance. We will focus in particular in this section on building large distance $\mathbb{F}_2$-linear codes over $\Sigma=\mathbb{F}_2^k$. In this regime there is a classical upper bound due to McEliece, Rodemich, Rumsey, and Welch \cite{mceliece1977new} stating that good $q$-ary codes have distance at most $1-\frac{1}{q}$, or more exactly:
\begin{theorem}[MRRW Bound {\cite{mceliece1977new}}]\label{thm:MRRW}
    For all $\varepsilon>0$ and $q \in \mathbb{N}$, any family of $q$-ary codes with distance $1-\frac{1}{q}-\varepsilon$ has rate at most:
    \[
    r \leq O\left(\varepsilon^2\log(1/\varepsilon)\right).
    \]
\end{theorem}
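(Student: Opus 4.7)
The plan is to invoke Delsarte's linear programming bound in the $q$-ary Hamming association scheme. First I would associate to any $q$-ary code $C \subseteq \Sigma^n$ of distance at least $d = (1-1/q-\varepsilon)n$ its distance distribution $B_i = |\{(x,y)\in C^2 : d_H(x,y)=i\}|/|C|$. The MacWilliams identities together with positivity of the dual code's distribution give that the transformed quantities $B'_k = \sum_{i=0}^n B_i \, K_k^{(q)}(i)$ are non-negative for every $k$, where $K_k^{(q)}$ is the $k$-th $q$-ary Krawtchouk polynomial. Combined with $B_0=1$ and $B_i=0$ for $1 \leq i < d$, these constraints form a linear program whose optimum upper bounds $|C|$.

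Second, I would exhibit a dual feasible polynomial. The classical MRRW choice is $f(x) = (K_a^{(q)}(x))^{2}/(x_0-x)$ for an appropriate degree $a$ such that the largest root $x_0$ of $K_a^{(q)}$ sits just below $d$; expanding $f$ in the Krawtchouk basis one checks the coefficients are non-negative and $f(i)\leq 0$ for $i\geq d$, so LP duality yields $|C| \leq f(0)/|f(d)|$. Parameterizing $a=\alpha n$, the asymptotic root location satisfies $x_0/n \to (1-1/q) - 2\sqrt{(q-1)}/q \cdot \sqrt{\alpha(1-\alpha)} + o(1)$, so setting $x_0/n = 1-1/q-\varepsilon$ forces $\alpha = \Theta(\varepsilon^2)$.

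Third, I would extract the rate bound. The asymptotics of $q$-ary Krawtchouk polynomials at $x=0$ yield
\[
\tfrac{1}{n}\log_q \bigl(f(0)/|f(d)|\bigr) \;\leq\; H_q(\alpha) + o(1),
\]
where $H_q(\alpha) = \alpha\log_q((q-1)/\alpha) + (1-\alpha)\log_q(1/(1-\alpha))$ is the $q$-ary entropy function. Plugging $\alpha = \Theta(\varepsilon^2)$ and expanding, the dominant term is $\alpha \log_q(1/\alpha) = \Theta(\varepsilon^2 \log(1/\varepsilon))$, giving the claimed rate bound $r \leq O(\varepsilon^2 \log(1/\varepsilon))$.

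The main obstacle is the precise asymptotic analysis of the $q$-ary Krawtchouk polynomials, especially locating the largest root $x_0$ and estimating $f(0)$ to sub-exponential accuracy uniformly in $\varepsilon$, where classical arguments go through a Christoffel--Darboux type identity and a saddle-point analysis. A cleaner alternative would be to invoke Samorodnitsky's Jacobi-polynomial/entropy reformulation of MRRW, which makes the $\varepsilon^2\log(1/\varepsilon)$ dependence visible directly; for the qualitative statement here, however, the classical LP-with-Krawtchouk approach above suffices.
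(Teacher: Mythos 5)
The paper does not prove \pref{thm:MRRW}; it cites \cite{mceliece1977new} and uses the bound as a black box, so there is no internal proof to compare against. Your outline is the standard derivation of the first MRRW linear-programming bound (in its $q$-ary form, due to Aaltonen): pass to the distance distribution, invoke Delsarte/MacWilliams positivity, and take the Christoffel--Darboux certificate $f(x) = (K_a^{(q)}(x))^2/(x_0-x)$ to extract $r \leq H_q(\alpha) + o(1)$ with $\alpha = \Theta(\varepsilon^2)$, which gives $O(\varepsilon^2\log(1/\varepsilon))$. This is a correct route. One small inaccuracy: the asymptotic location of the smallest root of $K_a^{(q)}$ for $a=\alpha n$ carries an additional linear term, roughly $x_0/n \to (1-\tfrac{1}{q}) - \tfrac{q-2}{q}\alpha - \tfrac{2\sqrt{q-1}}{q}\sqrt{\alpha(1-\alpha)}$, and you have dropped the $\tfrac{q-2}{q}\alpha$ piece. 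For $q=2$ your formula is exactly right, and for general $q$ the $\sqrt{\alpha}$ term dominates once $\alpha = \Theta(\varepsilon^2)$, so the final $\alpha = \Theta(\varepsilon^2)$ scaling and the stated rate bound are unaffected; but for a complete write-up you should carry the linear term. Your remark about Samorodnitsky's entropy reformulation is a reasonable alternative that exposes the $\log(1/\varepsilon)$ factor more directly.
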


Given a code $C$, a \textit{unique decoding algorithm} with radius $\gamma \in [0,1/2]$ is a (possibly randomized) algorithm $Dec: \Sigma^n \to C$ which given any $w \in \Sigma^n$ such that $dist(w,C) \leq \gamma$ outputs the uniquely closest $y \in C$ to $w$. Note that unique decoding is only possible up to radius $\lfloor \frac{nd-1}{2n} \rfloor \leq 1/2$, even with large alphabet, as there may be no unique closest codeword beyond this point.

\textit{List-decoding} is the natural extension of unique decoding beyond the $\frac{1}{2}$ barrier. A code is said to be list-decodable with radius $t$ if there is an algorithm which outputs a \textit{list} of all codewords within radius $t$ of the given word. We first give the combinatorial definition, which simply bounds the number of codewords around any fixed point.
\begin{definition}[Combinatorial List-decoding]
    For $\gamma \in (0,1)$ and $L \in \mathbb{N}$, a code $C$ is said to be $(\gamma,L)$ list-decodable if for every $z \in Y$, there are at most $L$ codewords $x \in C$ such that $dist(x,z) < \gamma$.
\end{definition}
In this section, we will be interested in \textit{efficient algorithms} for list-decoding, typically a much more challening task than the existential definition.
\begin{definition}[Efficient List-decoding]
    For $\delta>0$, $\gamma \in (0,1)$ and $L \in \mathbb{N}$, a family of codes $\{C_n\}$ is said to be efficiently $(\gamma,L)$ list-decodable with confidence $\delta$ if it is combinatorially $(\gamma,L)$ list-decodable and there exists a (randomized) polynomial time\footnote{Here we mean polynomial in the block length.} algorithm \textsc{Dec}$: \Sigma^n \to P(C)$ such that for every word $w \in \Sigma^n$:
    \[
    \textsc{Dec}(w) = \{c \in C: \text{dist}(c,w) < \gamma \}
    \]
    except with probability $\delta$.
\end{definition}

Another classical approach to handling the high error regime is \textit{local testability}. A code $C$ is said to be locally testable if there is a (randomized, non-adaptive) algorithm $\mathcal{T}_C$ which on input of a word $w \in \Sigma^n$, queries constantly many symbols of $w$ and rejects with high probability if it is far from the code. In practice, if one receives such a word, instead of trying to list decode they could simply request re-transmission. More formally, we consider the following standard notion of `strong' local testability:
\begin{definition}[Locally Testable Code]
    For any $s>0$ and $q \in \mathbb{N}$, a code $C$ of blocklength $n$ is said to be $(q,s)$-locally testable if there exists a function $\mathcal{T}_C: [n]^q \times \Sigma^q \to \{0,1\}$ and a distribution $\mathcal{D}$ over $[n]^q$ such that on any word $w \in \Sigma^n$, the pair $(\mathcal{T}_{C},\mathcal{D})$ (called the tester) satisfies:
    \begin{enumerate}
        \item \textbf{Completeness:} If $w \in C$, then $\Pr_{J\sim \mathcal{D}}[\mathcal{T}_C(J,w_J)=1]=1$
        \item \textbf{Soundness:} $\Pr_{J \sim \mathcal{D}}[\mathcal{T}_C(J,w_J)=0] \geq s \cdot \text{dist}(w,C)$
    \end{enumerate}
    We call $q$ the locality, and $s$ the soundness. 
\end{definition}
Typically we will simply think of $\mathcal{T}_C$ as a randomized algorithm and write $\mathcal{T}_C(w)$ to denote its application on the word $w$, dropping the distribution of the tester when clear from context. We will work with testers that have one additional constraint: their queries should be marginally uniform. 
\begin{definition}[Uniform LTC]
    $(\mathcal{T},\mathcal{D})$ is called \textit{uniform} if every marginal of $\mathcal{D}$ is uniform over $[n]$.
\end{definition}
We remark that focusing on uniform LTCs is not particularly restrictive---since codes do not typically have `preferred' coordinates, most natural constructions are uniform.

Finally, though not strictly necessary, we will assume throughout that all complexes in this section are ``homogenous'' subsets $X \subset [n]^k$, meaning that their marginal distribution over each part is uniform over $[n]$. We remark that in the context of high dimensional expanders, any partite `type-regular' complex can be homogenized in this fashion while maintaining expansion and bounded degree (see \cite{FriedgutI2020} for details). As most known constructions of HDX are type-regular, this is not a particularly restrictive assumption, and it is also known to hold for other classical splitting trees such as expander walks \cite{AlevJQST2020}.

\subsection{The ABNNR Construction}
In the previous section, we proved several `agreement theorems' on simplicial complexes. These well-studied tests are actually very closely related to both local testability and list decoding of a particular family of codes called \textit{direct product codes}, corresponding exactly to the set of `global functions' discussed in \pref{sec:agreement}. At a broader level, these are themselves a special case of a well-studied family of codes introduced in the seminal work Alon, Bruck, Naor, Naor and Roth \cite{alon1992construction} called the `ABNNR'-construction. We focus on the case of $\mathbb{F}_2$ for simplicity, though our results extend naturally to larger fields.
\begin{definition}[the ABNNR Construction]
    Given a right $k$-regular bipartite graph $G=(L,R,E)$ and a base code $C \subseteq \mathbb{F}_2^{L}$, the ``ABNNR-Encoding'' of $C$ with respect to $G$ is the code $\text{Im}(E_G(C))$ where $E_G$ is an ``encoding'' map $E_G: \mathbb{F}_2^{|L|} \to (\mathbb{F}_2^{k})^{|R|}$ defined by setting for each $c \in C$ and $v \in R$:
\[
E_G(c)_v = c|_{N(v)}.
\]
\end{definition}
In other words, each right vertex concatenates the symbols of its neighbors to create a new code over $R$. One can check that in \pref{sec:agreement}, the set of global functions is exactly the ABNNR encoding over the inclusion graph $(\X[j],\X[k])$. For a general $k$-\maximal complex $X$ and $j \leq i \leq k$, we write $E^{(j,i)}_X$ to mean the ABNNR-encoding on the vertex inclusion graph $(\X[i],\X[j])$. In the special case where $i=k$ and $j=1$, we write just $E_X$ to denote the ABNNR-encoding over the inclusion graph $([n],\X[k])$, where inclusion is viewed taking $\X[k] \subseteq [n]^k$ as `ordered tuples' of $[n]$, rather than as faces of a simplicial complex. This is a minor difference, but is a bit more convenient in the setting of amplifying base codes.

Recent years have seen a great deal of work on instantiating the ABNNR construction on high dimensional expanders and splitting trees \cite{DinurK2017,DinurHKLT2018,AlevJQST2020,jeronimo2021near,blanc2022new,Jeronimo2023list}, in particular in the context of efficient list-decoding. Building on tools from these works, we give the first family of large alphabet codes with near-optimal distance that are simultaneously locally testable and list-decodable.

\begin{theorem}\label{thm:codes}
    For all large enough $k \in \mathbb{N}$ and all $\varepsilon>0$, there exists an explicit family of $\mathbb{F}_2$-linear ABNNR-Codes that are simultaneously:
    \begin{enumerate}
    \item $(\varepsilon^{O(k)},1-2^{-k}-\varepsilon)_{2^k}$-codes
        \item $(O(1),\frac{O(1)}{k\log(1/\varepsilon)})$-locally testable.
        \item $(1-2^{-\Omega(k)},2^{O(k)})$-efficiently list-decodable with confidence $2^{-\Omega(n-k)}$ whenever $\varepsilon \leq 2^{-\Omega(k)}$.
    \end{enumerate}
\end{theorem}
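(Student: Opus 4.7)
The plan is to construct $C$ as an ABNNR encoding $E_X(C_0)$ of a suitable binary base code $C_0$ along the inclusion graph $(\X[1],\X[k])$ of an explicit splitting-tree-style complex. For the base code I would take an explicit $\mathbb{F}_2$-linear constant-rate constant-query LTC of distance $\tfrac12-\eta$ for $\eta = O(\varepsilon/k)$ (for example the concatenation of a Reed--Solomon code with a binary inner code with distance near $\tfrac12$, using a standard LTC inner code to preserve testability). For $X$ I would use either a $k$-skeleton of a $\lambda$-nice HDX or a $\lambda$-tuple splitting tree arising from walks on a $\lambda$-expander, with $\lambda = \varepsilon^{\Theta(1)}$ chosen small enough for the concentration and hitting-set bounds below to apply; for the expander-walk construction the degree, and hence rate overhead, is $\varepsilon^{-O(k)}$, matching the claimed rate of $\varepsilon^{O(k)}$.

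For the distance bound I would apply the hitting-set property (\pref{prop:hitting-set}) to the zero-set $A = c^{-1}(0)$ of an arbitrary nonzero $c \in C_0$. Since $\mu(A) \leq 1-\delta_0 \leq \tfrac12 + \eta$, one obtains
\[
\Pr_{s \in \X[k]}[s \subseteq A] \leq (1-\delta_0)^k + 3^D\lambda \leq 2^{-k} + \varepsilon.
\]
Because the ABNNR symbol at $s$ vanishes exactly when $s \subseteq A$, $\mathbb{F}_2$-linearity gives weight at least $1-2^{-k}-\varepsilon$ for every nonzero codeword, which is the claimed distance.

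For local testability I would combine \pref{thm:agreement-99-percent} with testability of $C_0$: the tester first runs the V-test on two $k$-faces drawn from the down-up walk, and on acceptance decodes a candidate global assignment from a local plurality and applies the base tester on a random sub-face. If the input is $\tau$-far from $C$, either the V-test rejects with probability at least $\tau^{O(1)}/(k\log \varepsilon^{-1})$ (by inverting the $99\%$-regime guarantee), or the plurality-decoded global function is $\Omega(\tau)$-far from $C_0$ and the base tester rejects with proportional probability; the $\tfrac{1}{k\log(1/\varepsilon)}$ soundness factor arises from the symbol-size blowup by $k$ and the logarithmic slack in converting agreement rejection to distance.

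For efficient list-decoding up to $1-2^{-\Omega(k)}$, I would follow the framework of \cite{AlevJQST2020,jeronimo2021near}: \pref{thm:hdx-is-sampler} and \pref{cor:hdx-is-mult-sampler} certify that $(\X[k],\X[1])$ is a strong multiplicative sampler deep into the tail, so the reduction from ABNNR list-decoding to base-code list-decoding yields a list of size $2^{O(k)}$ and confidence $2^{-\Omega(n-k)}$, provided $C_0$ is itself efficiently list-decodable at constant radius (which is available from standard constructions). The main obstacle will be a delicate quantitative balancing: the base-code distance deficit $\eta$, the hitting-set slack $\varepsilon$, the sampler parameters controlling the list-decoding radius, and the agreement-test soundness must all be matched simultaneously to hit the stated distance $1-2^{-k}-\varepsilon$ and list-decoding radius $1-2^{-\Omega(k)}$ while preserving rate $\varepsilon^{O(k)}$ and the claimed LTC soundness; each of these bounds is tight in our preceding results, so there is little slack to absorb losses across the three stages.
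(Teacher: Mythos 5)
The broad skeleton — hitting set for distance, agreement test for LTC, direct-sum-style sampler for list-decoding — matches the paper's intent, but there are two genuine gaps that would make this proposal fail as written.

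First, the list-decoding step cannot work with a base code that is merely ``distance near $\tfrac12$.'' The direct-sum list-decoder of \cite{jeronimo2021near,blanc2022new} that the reduction ultimately invokes (\pref{thm:direct-sum}) requires the base code $C_0$ to be \emph{$(1-2t)$-biased}, which is a strictly stronger condition: a linear code can have distance $\tfrac12-\eta$ while containing codewords of weight near $1$. A Reed--Solomon-concatenated inner code gives no bias guarantee. The paper gets around this with a three-step base-code pipeline (\pref{prop:base-code}): start from \cite{DinurELLM2022}'s $c^3$-LTC, amplify via expander walks, then concatenate with Hadamard. The Hadamard concatenation is the crucial move — since Hadamard is perfectly balanced, it converts the distance of the amplified code into \emph{bias} of the concatenated code, while its 3-query local decodability/testability preserves local testability with only a $\log(1/\varepsilon)$ loss in soundness. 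Your proposal has no analogue of this, and without it the list-decoding step has no valid base to reduce to. Moreover, the list-decoder requires the stronger notion of \emph{complete splittability} (\pref{def:complete-split}) for $X$, which an arbitrary splitting tree or HDX need not satisfy, though expander walks do.

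Second, your LTC argument invokes \pref{thm:agreement-99-percent} (the V-test), but that theorem gives a soundness guarantee with an additive $\exp(-\Omega(d))$ error and a power-law relation $\varepsilon'\le\varepsilon^{O(1)}$, which is too weak to yield the linear soundness required in the definition of an LTC. The paper's actual proof (\pref{thm:LTC-ABNNR}) uses a far more elementary two-query agreement test — pick a vertex $v$ and two independent $k$-faces $w,w'$ through it and check $f_w(v)=f_{w'}(v)$ — whose linear soundness comes directly from \cite{goldenberg2019toward}'s Lemma 5.2. That linear soundness, combined with the $1/k$ loss from the factor-$k$ distance amplification and the $\log(1/\varepsilon)$ loss from the Hadamard concatenation in the base code, is exactly what produces the claimed $\Theta\bigl(\frac{1}{k\log(1/\varepsilon)}\bigr)$ soundness; it does not come from ``inverting the 99\%-regime guarantee,'' which would give only polynomial rather than linear rejection probability.
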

We remark that very recently, Jeronimo, Srivastava, and Tulsiani \cite{Jeronimo2023list} gave a direct list-decoding algorithm for the LTCs of \cite{DinurELLM2022} (which have some small constant distance $d \ll \frac{1}{2}$). To our knowledge, these are the only other examples of codes that are simultaneously $c^3$-LTC and efficiently list-decodable.

The remainder if this section is split into three parts focused respectively on distance amplification, local-testability, and list-decoding of the ABNNR construction, and a short fourth section devoted to a conjectural approach to removing the soundness decay in \pref{thm:codes}. Each section is independent, containing both the corresponding component of \pref{thm:codes} as well as more general results within the framework. To facilitate this structure we first give a brief proof overview of \pref{thm:codes} that doubles as a roadmap for the section.

\paragraph{Proof Overview} 
The high level idea behind \pref{thm:codes} is simple: starting with a binary locally testable `base code' $C$ with distance near $\frac{1}{2}$, we will argue that the ABNNR-encoding of $C$ on any $k$-partite splitting tree\footnote{Formally for list-decoding actually require a slightly stronger notion called `complete' splittability, c.f.\ \pref{def:complete-split}.} has distance near $1-2^{-k}$ (\pref{cor:dist-amp}), is list-decodable (\pref{thm:list-decode-ABNNR}), and maintains local testability (\pref{thm:LTC-ABNNR}). Instantiating this framework on \textit{expander walks} (the sparsest known family of splitting trees) gives codes with the claimed parameters. We give a brief description of each step.

\textit{Distance amplification} on splitting trees is fairly elementary, and follows from the hitting-set lemma. This is a standard argument: one observes that the two encoded symbols $E(x),E(x')$ differ in any face which hits the vertex set $1_{x \neq x'}$. Since by assumption the base code has distance roughly $1/2$, the hitting set lemma promises this set is hit except with roughly $2^{-d}$ probability (see \pref{cor:dist-amp}). \pref{thm:codes} actually uses a slightly stronger distance amplification lemma for expander-walks and is given in \pref{cor:LTCs}. The weaker amplification lemma (in particular its bias variant) appears later in the proof as a sub-component of the list-decoding algorithm. 

\textit{List-Decoding} the ABNNR construction, despite being heavily studied on HDX and splitting trees, is surprisingly tricky. Roughly speaking there are two main issues. First, known list-decoding algorithms for this setting are actually either for an $\mathbb{F}_2$-valued variant of ABNNR called the \textit{direct sum} construction \cite{AlevJQST2020,jeronimo2021near,blanc2022new}, or only hold on \textit{two-sided} HDX \cite{AlevJQST2020,DinurHKLT2018} (which have exponentially worse rate). To handle this, we build a reduction from list-decoding the ABNNR-encoding on $X$ to list-decoding the direct sum encodings of its projected sub-complexes $X^F$. The second issue with this strategy is that known list-decoding algorithms for direct sum \cite{jeronimo2021near,blanc2022new} actually requires the base code to be \textit{$\varepsilon$-biased}, a stronger condition than simply having distance near $\frac{1}{2}$. We will discuss how to construct such base codes shortly. We instantiate this framework in \pref{cor:decoding-expanders} to prove the list-decoding component of \pref{thm:codes}.

\textit{Local Testability} of the ABNNR construction, to our knowledge, has not been studied outside the special case of agreement testing. Our test follows a very simple strategy. We first check whether the given word $f$ is a direct product (in the sense of \pref{sec:agreement}) via a simple 2-query agreement test: pick a random vertex $v \in [n]$, and independently $s,s' \supset v$, and check whether $f_s$ and $f_{s'}$ match on $v$. If this test passes, we then simulate the tester on the `direct product decoding' of $f$. That is, for every location the base tester queries, we pick a random face including that vertex and feed the tester its value in the word $f$. We prove this procedure is sound in \pref{thm:LTC-ABNNR}, and instantiate the process in \pref{cor:LTCs} to prove the local testability component of \pref{thm:codes}.

\textit{The Base Code} for our construction, as discussed above, must be an $\varepsilon$-biased $c^3$-LTC. $c^3$-LTCs with distance near $\frac{1}{2}$ are now known due to the breakthrough works of \cite{DinurELLM2022,PanteleevK22} combined with LTC distance-amplification techniques of \cite{kopparty2017high}, but the latter amplification causes several core issues that make the tecnique largely useless for our setting: 1) the resulting codes make $\poly(1/\varepsilon)$ queries, 2) may not be $\varepsilon$-biased, and 3) are heavily randomized. We remedy these issues via the ABNNR-construction itself. In particular, starting with the base LTC of \cite{DinurELLM2022}, we encode it into $\log(1/\varepsilon)$-\maximal walks on an expander graph using ABNNR, and then concatenate with the Hadamard code. This code is $\varepsilon$-biased roughly because concatenating with the perfectly balanced Hadamard code converts the good distance of the ABNNR-encoding into good bias, and the resulting code is locally testable using the Hadamard's classical local decoding and local testing algorithms. We formalize this step in \pref{prop:base-code}
\\
\\
\noindent Finally we summarize the construction and proof below:
\\
\\
\fbox{\parbox{\textwidth}{
\vspace{.1cm}
\underline{Code underlying \pref{thm:codes}}:
\begin{enumerate}
    \item Start with the base $c^3$-LTC $C$ of \cite{DinurELLM2022}
    \item Amplify $C$'s distance via expander walks
    \item Reduce alphabet and bias by concatenating with Hadamard
    \item Re-amplify the resulting code via a second set of expander walks.
\end{enumerate}}}

\begin{proof}[Proof of \pref{thm:codes}]
    Bias, rate, and local-testability are proven in \pref{cor:LTCs}. List-decodability is proved in \pref{cor:decoding-expanders}.
\end{proof}

\subsection{Distance Amplification}
The ABNNR construction is one of the most classical distance amplification methods in coding theory. The standard result to this effect is that instantiating the framework on a (bounded degree) additive sampler amplifies distance while maintaining rate up to a constant factor.
\begin{lemma}[{\cite{alon1992construction}}]
    Let $G=(L,R,E)$ be a right $k$-regular bipartite graph such that $(R,L,E)$ is an $(\alpha,\beta)$-sampler. If $C \subset \mathbb{F}_2^{|L|}$ is a code of distance $\alpha$ and rate $r$, then $E_G(C)$ is a $(r\frac{|L|}{|R|},1-\beta)_{2^k}$-code.
\end{lemma}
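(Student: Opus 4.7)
The plan is to follow the classical Alon--Bruck--Naor--Naor--Roth argument, translating distance of the base code into a hitting-set statement for the disagreement set, and then invoking the sampler hypothesis on the flipped graph $(R,L,E)$ to bound this set's complement.

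First I would verify the rate claim. Since $G$ is (essentially) a sampler on both sides, every $u \in L$ has at least one neighbor in $R$, so the encoding map $E_G : \mathbb{F}_2^{|L|} \to (\mathbb{F}_2^{k})^{|R|}$ is injective on $C$: any two distinct words disagree in some coordinate $u$, and any $v \in R$ adjacent to $u$ witnesses $E_G(c)_v \neq E_G(c')_v$. Hence $|E_G(C)| = |C|$, and the rate computation follows from plugging $|C| = 2^{r |L|}$ into $\log_{|\Sigma|}|E_G(C)|/|R|$ over the new alphabet $\Sigma = \mathbb{F}_2^k$ and block length $|R|$.

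The substantive part is the distance claim. Fix two distinct codewords $c \neq c' \in C$ and let
\[
A \;=\; \{u \in L \;:\; c_u \neq c'_u\} \;\subseteq\; L.
\]
By the distance assumption on $C$, $\mu_L(A) \geq \alpha$. For a vertex $v \in R$, the encoded symbols $E_G(c)_v = c|_{N(v)}$ and $E_G(c')_v = c'|_{N(v)}$ differ if and only if $N(v) \cap A \neq \emptyset$, equivalently $\Pr_{u \sim v}[u \in A] > 0$. Therefore
\[
\mathrm{dist}(E_G(c), E_G(c')) \;=\; \Pr_{v \in R}\bigl[\,N(v) \cap A \neq \emptyset\,\bigr] \;=\; 1 - \Pr_{v \in R}\bigl[\,\Pr_{u \sim v}[u \in A] = 0\,\bigr].
\]
Any $v \in R$ for which $N(v) \cap A = \emptyset$ satisfies $\bigl|\Pr_{u \sim v}[u \in A] - \mu_L(A)\bigr| = \mu_L(A) \geq \alpha$, so such a $v$ is exactly a ``bad'' vertex for the sampler $(R,L,E)$ applied to the set $A$. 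By the $(\alpha,\beta)$-additive sampler hypothesis, at most a $\beta$-fraction of $v \in R$ are bad, yielding $\mathrm{dist}(E_G(c), E_G(c')) \geq 1 - \beta$ as required. Minimizing over pairs gives the code distance.

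There is no genuine obstacle here; the argument is completely standard and the main subtlety is only bookkeeping (ensuring $\mu_L(A) \geq \alpha$ triggers the sampler's deviation event, which is why the sampler parameter is chosen to match the base-code distance). If one wanted strictness in the deviation condition, one could either assume the base distance is strictly greater than $\alpha$, or note that the sampler definition can be applied with the weak inequality since $\Pr[u \sim v \in A] = 0$ is an equality case and by slight perturbation one can absorb the boundary into $\beta$.
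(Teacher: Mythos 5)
Your proof is correct and gives the standard ABNNR argument: the distance claim reduces to a hitting-set statement for the disagreement set $A$, and the sampler hypothesis on $(R,L,E)$ applied to $A$ bounds the fraction of $v\in R$ whose neighborhood misses $A$ entirely. The paper states this lemma as a citation to \cite{alon1992construction} without proof, but your argument is the same one the paper itself runs in \pref{cor:hdx-amp} and \pref{cor:dist-amp}, so the approach is well aligned.

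Two small bookkeeping points. On the rate, you claim it ``follows from plugging in,'' but if you actually substitute $|C|=2^{r|L|}$ into the paper's definition $\log_{|\Sigma|}|E_G(C)|/|R|$ with $\Sigma=\mathbb{F}_2^k$ you obtain $\frac{r|L|}{k|R|}$, which is a factor of $k$ smaller than the stated $r\frac{|L|}{|R|}$; this discrepancy is already present in the paper's lemma statement (a harmless convention mismatch---e.g.\ measuring rate in message bits per codeword symbol without normalizing by alphabet size), but since you appeal to the paper's own $\log_{|\Sigma|}$ definition, you should either note the mismatch or not claim the formula simply ``follows.'' On the strict inequality, you are right that a $v$ with $N(v)\cap A=\emptyset$ gives deviation exactly $\mu(A)$, and if $\mu(A)=\alpha$ this only meets the sampler threshold with equality; ``slight perturbation'' is not a precise fix here, and the cleaner remark is that many sources define samplers with ``$\geq\varepsilon$'' rather than ``$>\varepsilon$,'' under which the boundary case is automatic, or equivalently one should phrase the hypothesis as $C$ having distance strictly greater than $\alpha$. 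Neither point affects the substance of the argument.
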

This observation was used in \cite{DinurK2017,DinurHKLT2018} to give distance amplification for codes defined on HDX. Our sampling lemmas imply an exponential improvement in the rate-distance trade-off of such codes. We choose a fairly arbitrary setting of our parameters for simplicity, but any setup with good sampling works:
\begin{corollary}[Distance Amplification on HDX]\label{cor:hdx-amp}
    For any $\alpha,r,c>0$, let $C$ be a base code of distance $\alpha$ and rate $r$. Let $X$ be a $d$-\maximal $c$-nice complex. Then for any $i<j \leq d$ and $G=(\X[i],\X[j])$ the ABNNR code $E^{i,j}_X(C)$ is an $(r',d')_{|\Sigma|}$-code for
    \[
    r'=r\frac{|\X[i]|}{|\X[j]|}, \quad \quad d'=1 - \frac{1}{\alpha}\exp\left(\Omega_c\left(\alpha^8\frac{j}{i}\right)\right), \quad \quad |\Sigma| = 2^{\binom{j}{i}}.
    \]
\end{corollary}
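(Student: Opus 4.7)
The plan is to obtain this as a direct consequence of the ABNNR amplification machinery combined with our main sampling theorem (Theorem~\ref{thm:hdx-is-sampler}). The rate claim is just bookkeeping: since the distance of $E^{i,j}_X(C)$ will be positive, the encoding map is injective, so $|E^{i,j}_X(C)|=|C|$, and the blocklength is $|\X[j]|$, yielding $r' = r\, |\X[i]|/|\X[j]|$ under the bit-per-symbol normalization used by $C$. The alphabet size is immediate from the definition of the ABNNR construction: each $t\in\X[j]$ concatenates the symbols assigned to its $\binom{j}{i}$ sub-$i$-faces, giving $|\Sigma| = 2^{\binom{j}{i}}$.

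The only real content is the distance bound, which I would derive as follows. Fix two distinct codewords $c,c'\in C$ and let
\[
A \;=\; \{s\in\X[i] : c(s)\neq c'(s)\}, \qquad \mu(A) \geq \alpha,
\]
by the distance hypothesis on $C$. By construction of the ABNNR encoding, $E^{i,j}_X(c)$ and $E^{i,j}_X(c')$ agree at a $j$-face $t\in\X[j]$ if and only if every $i$-subface of $t$ lies outside $A$, i.e.\ $\mathbb{E}_{s\subset t}[1_A(s)]=0$. So the relative distance of $E^{i,j}_X(c)$ and $E^{i,j}_X(c')$ equals
\[
1-\Pr_{t\in\X[j]}\!\left[\mathbb{E}_{s\subset t}[1_A(s)] = 0\right].
\]
Applying Theorem~\ref{thm:hdx-is-sampler} to the function $f=1_A$ with deviation parameter $\varepsilon=\alpha$, the inclusion graph $(\X[j],\X[i])$ is an $(\alpha,\beta)$-function sampler with
\[
\beta \leq \tfrac{c_1}{\alpha}\exp\!\bigl(-\Omega_c(\alpha^{8}\,j/i)\bigr).
\]
Since $\mathbb{E}[1_A]\geq\alpha$, the event $\mathbb{E}_{s\subset t}[1_A(s)]=0$ is contained in the event $|\mathbb{E}_{s\subset t}[1_A]-\mu(A)|\geq\alpha$, and is therefore bounded by $\beta$. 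Minimizing over the pair $(c,c')$ gives the claimed distance $d'=1-\tfrac{1}{\alpha}\exp(-\Omega_c(\alpha^{8}j/i))$ (the constant $c_1$ absorbed into the $\Omega_c$).

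There is no real obstacle here: the statement is a near-immediate corollary of Theorem~\ref{thm:hdx-is-sampler} and the standard ABNNR lemma recalled just before the corollary. The only point that requires a small comment is that Theorem~\ref{thm:hdx-is-sampler} is stated for functions on $\X[i]$ sampled via random $\X[k]$ faces, which is exactly the regime we need after renaming $k\leftrightarrow j$; in particular no flipping of sides or multiplicative-to-additive conversion is required, so we can invoke it as a black box without passing through Corollary~\ref{cor:hdx-is-mult-sampler}. One mild subtlety worth flagging is that the weighting on $\X[i]$ used by the base code should match the induced measure $\pi_i$ from $X$ for the inequality $\mu(A)\geq\alpha$ to follow directly from ``distance $\alpha$''; for homogeneous complexes (as assumed throughout this section) these agree, so no adjustment is needed.
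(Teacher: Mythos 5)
Your proof is correct and matches the approach the paper intends: the paper never spells out the argument, only stating the ABNNR lemma of~\cite{alon1992construction} (for bipartite samplers) immediately before the corollary and citing Theorem~\ref{thm:hdx-is-sampler} for the sampling bound, and you reconstruct exactly this chain. You re-derive the distance bound from scratch rather than citing the ABNNR lemma as a black box, but the content is the same: the encoding symbols at $t,t'$ disagree iff the face contains an element of the discrepancy set $A$, and $(\X[j],\X[i])$ being an $(\alpha,\beta)$-function sampler bounds the fraction of $j$-faces that see zero mass of a set with $\mu(A)\geq\alpha$. Your observation that no side-flipping or multiplicative-to-additive conversion is needed is correct: in the paper's convention, $(\X[j],\X[i])$ already samples sets on the $\X[i]$ side from random $j$-faces, which is the direction the ABNNR lemma requires. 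Two minor remarks: (1)~the paper's display has a typo — the distance is written with $\exp(\Omega_c(\cdot))$ when it clearly should be $\exp(-\Omega_c(\cdot))$ as in your proof; (2)~the subtlety you flag about the base code's Hamming metric versus the induced measure $\pi_i$ on $\X[i]$ is a genuine one that the paper's corollary also leaves implicit, so you are not introducing a gap, merely making an unstated hypothesis explicit.
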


Note that when $i=1$, \pref{cor:hdx-amp} has alphabet size $|\Sigma|=2^{j}$ and distance $1-\poly(|\Sigma|^{-1})$, optimal up to polynomial factors by \pref{thm:MRRW}. However, in this special case, it is actually possible achieve optimal amplification directly via the hitting set lemma (\pref{prop:hitting-set}).
\begin{corollary}[Distance Amplification for Splitting Trees]\label{cor:dist-amp}
    For any $\alpha,r>0$, let $C$ be a base code of distance $\alpha$ and rate $r$. If $X$ is a $k$-\maximal $\lambda$-(tuple)-splitting tree of depth $D$, then $E_X(C)$ is an $(r',d')_{|\Sigma|}$-code for
    \[
        r'= r\frac{n}{|\X[k]|}, \quad \quad d'= 1 - (1-\alpha)^k - 3^D\lambda, \quad \quad |\Sigma| = 2^k
    \]
    Moreover when $\alpha=\frac{1}{2}(1-\varepsilon)$, then $ d' \geq 1- \frac{1}{2^k}(1+\varepsilon)^k - 3^D\lambda$.
\end{corollary}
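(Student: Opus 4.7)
The plan is to extract both bounds from the hitting set lemma (Proposition~\ref{prop:hitting-set}), with essentially no additional work beyond that lemma. The rate computation is immediate from definitions: since $E_X$ is an injective $\mathbb{F}_2$-linear map (inherited linearity from the base code when present, and otherwise injectivity from the connectedness of the inclusion graph), $E_X(C)$ has $|C|$ codewords over $|X(k)|$ coordinates in the alphabet $\Sigma = \mathbb{F}_2^k$, so its rate under the convention used in the excerpt ($\log_2$ of codebook size over blocklength) equals $rn/|X(k)|$, as claimed.

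The distance is the content of the result, and this is where I would apply hitting set. Fix distinct codewords $c, c' \in C$ and let $D' = \{v \in [n] : c_v \neq c'_v\}$; by assumption $\mu(D') \geq \alpha$, using the homogeneity assumption (uniform vertex marginals over $[n]$) stated in the coding preliminaries. For any $s \in X(k)$, the symbols $E_X(c)_s = c|_s$ and $E_X(c')_s = c'|_s$ coincide precisely when $s$ is entirely contained in $A := [n] \setminus D'$, i.e.\ when $s$ fails to hit $D'$. Setting $\mu(A) \leq 1-\alpha$, Proposition~\ref{prop:hitting-set} says $X$ is $(3^D\lambda, k)$-hitting, so
\[
\Pr_{s \in X(k)}\bigl[E_X(c)_s = E_X(c')_s\bigr] \;=\; \Pr_{s \in X(k)}[s \subseteq A] \;\leq\; \mu(A)^k + 3^D\lambda \;\leq\; (1-\alpha)^k + 3^D\lambda.
\]
Hence the normalized Hamming distance between $E_X(c)$ and $E_X(c')$ is at least $1 - (1-\alpha)^k - 3^D\lambda$. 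Minimizing over pairs of codewords gives the stated distance $d'$.

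The moreover clause then follows by substitution: with $\alpha = \tfrac{1}{2}(1-\varepsilon)$ we have $1-\alpha = \tfrac{1}{2}(1+\varepsilon)$, so $(1-\alpha)^k = (1+\varepsilon)^k / 2^k$, giving $d' \geq 1 - (1+\varepsilon)^k/2^k - 3^D\lambda$. There is no real obstacle in this proof; all of the technical work has been concentrated into Proposition~\ref{prop:hitting-set}, which in turn follows from the splitting-tree HD-EML (Theorem~\ref{thm:hdeml}) applied with $f_i$ chosen as the indicator of $A$ on each part. The only thing to be mildly careful about is that the hitting-set lemma as stated treats $X(k)$ as a simplicial (or partite) top level, while the ABNNR encoding indexes coordinates by tuples in $[n]^k$; but since the hitting-set probability depends only on the underlying distribution of the top-level face and not on any ordering, the two interpretations agree, and the argument goes through for both flavors of splitting tree uniformly.
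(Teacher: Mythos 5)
Your proof is correct and takes essentially the same route as the paper: fix two distinct codewords, observe that their encoded symbols coincide at a face $s$ exactly when $s$ fails to hit the disagreement set, apply Proposition~\ref{prop:hitting-set}, and substitute $\alpha = \frac{1}{2}(1-\varepsilon)$ for the final clause. The remark about ordered tuples versus faces and homogeneity correctly reproduces the observation the paper relegates to a footnote (namely, that one applies the hitting-set lemma with each partite component $A_i$ set to the same indicator, which by homogeneity has the right measure).
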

\begin{proof}
    Rate and alphabet size are immediate from construction. Let $x,x' \in C$ be any two distinct non-zero codewords. We wish to bound the distance between their encodings $E_X(x)$ and $E_X(x')$, which is exactly the hitting probability of $1_{x \neq x'}$ in the inclusion graph $([n],\X[k])$. By the distance of $C$ we are promised that $\mu(1_{x = x'}) \leq 1-\alpha$, so \pref{prop:hitting-set} gives the desired result.\footnote{\pref{prop:hitting-set} is formally for $(\X[1],\X[k])$, but because $X$ is homogenous the bound carries over by simply setting each partite component of $A$ to be $A_i=1_{x=x'}$.} Plugging $\alpha=\frac{1}{2}(1-\varepsilon)$ into the resulting bound gives the final line.
\end{proof}

\subsection{Local Testability of ABNNR}\label{sec:LTC}
In this section, we prove the ABNNR encoding of a locally testable code is locally testable.
\begin{theorem}[Local Testability of ABNNR]\label{thm:LTC-ABNNR}
    For any uniform $(q,s)$-LTC of blocklength $n$ and $(k_L,k_R)$-regular bipartite graph $G=(L,R,E)$ on $|L|=n$ vertices, $E_G(C)$ has a uniform $(q+2,\frac{cs}{qk_R})$-local tester.
\end{theorem}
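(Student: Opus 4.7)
The plan is to build a two-part tester for $E_G(C)$: an agreement test that checks whether $w$ locally looks like the encoding $E_G(g)$ of some function $g: L \to \mathbb{F}_2$, together with a simulation of $\mathcal{T}_C$ that checks whether the implicit $g$ lies in $C$. Formally, on input $w \in (\mathbb{F}_2^{k_R})^R$ the tester will (1) sample $v \in L$ uniformly and two independent $s, s' \in R$ uniformly conditional on $v \in N(s) \cap N(s')$, rejecting if $w_s(v) \neq w_{s'}(v)$; and (2) sample $(v_1,\dots,v_q) \sim \mathcal{D}_C$ and, for each $i$, an independent uniform $s_i \ni v_i$ in $R$, rejecting if $\mathcal{T}_C$ rejects on the tuple $(w_{s_1}(v_1),\dots,w_{s_q}(v_q))$. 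This uses $q+2$ queries to $w$, and a short handshake calculation (using $k_L|L| = k_R|R|$ and uniformity of $\mathcal{T}_C$'s marginals) will verify that every query is marginally uniform on $R$.

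Completeness is immediate: if $w = E_G(c)$ for some $c \in C$, then $w_s(v) = c(v)$ whenever $v \in N(s)$, so the agreement check always accepts and the simulation reduces exactly to $\mathcal{T}_C$ on $c$. For soundness, set $\delta = \text{dist}(w, E_G(C))$ and introduce the plurality decoding $g(v) = \text{plurality}_{s \ni v}\{w_s(v)\}$, with local disagreement rates $\eta_v = \Pr_{s \ni v}[w_s(v) \neq g(v)]$ and average $\eta = \mathbb{E}_v[\eta_v]$. The key quantitative facts to establish are: (i) an ``ABNNR Lipschitz'' bound $\text{dist}(E_G(g_1), E_G(g_2)) \leq k_R \cdot \text{dist}(g_1, g_2)$, since each disagreeing vertex in $L$ affects at most $k_L$ symbols in $R$ and $k_L|L|/|R| = k_R$; (ii) a parallel handshake bound $\text{dist}(w, E_G(g)) \leq k_R \eta$; and (iii) the agreement test rejects with probability $2\mathbb{E}_v[\eta_v(1-\eta_v)] \geq \eta$, using $\eta_v \leq 1/2$.

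Combining (i) and (ii) via the triangle inequality will give $\text{dist}(g, C) \geq \delta/k_R - \eta$, so the base tester on $g$ rejects with probability at least $s(\delta/k_R - \eta)$. A union bound over the $q$ simulated queries (each returning a value $\neq g(v_i)$ with probability $\eta_{v_i}$, averaging to $\eta$ by uniformity of $\mathcal{T}_C$) then shows the simulation rejects with probability at least $s\delta/k_R - (s+q)\eta$. The proof concludes with a case split on $\eta$: if $\eta \geq s\delta/(2(s+q)k_R)$, the agreement test alone gives rejection probability $\geq \eta \geq \Omega(s\delta/(qk_R))$; otherwise $\eta$ is small enough that the simulation gives rejection probability $\geq s\delta/(2k_R) \geq s\delta/(qk_R)$. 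In either case the combined test rejects with probability at least $cs\delta/(qk_R)$ for an absolute constant $c$.

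The main delicate step will be the $k_R$-factor bookkeeping when ``decoding'' from $w$ back to $g$ via the triangle inequality, which is ultimately what forces the final soundness to be $s/(qk_R)$ rather than $s/q$. The rest of the argument is a straightforward combination of double-counting for the handshake identities, a single union bound for the simulated queries, and a two-case analysis in $\eta$; I do not foresee any essential obstacles beyond carefully checking that the marginal uniformity of both sampling processes is preserved.
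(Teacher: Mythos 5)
Your proposal is correct and the tester is essentially identical to the paper's (a two-query consistency check $\mathcal{T}_G$ followed by a single simulated run of $\mathcal{T}_C$ on the plurality-decoded word). The difference is in how the soundness is extracted. The paper invokes \pref{lem:GC-sound} (Goldenberg--Chiesa) as a black box to convert agreement probability into distance from the plurality encoding, and then runs a short contradiction argument to establish that $g_{\text{maj}}$ is $\Omega(\delta/k_R)$-far from $C$. You instead work directly with the per-vertex disagreement rates $\eta_v = \Pr_{s \ni v}[w_s(v)\neq g(v)]$ and re-derive from first principles the two inequalities that the Goldenberg lemma packages: (ii) $\text{dist}(w, E_G(g)) \leq k_R\eta$ via a handshake and union bound over $N(s)$, and (iii) $\Pr[\mathcal{T}_G \text{ rejects}] = 2\mathbb{E}_v[\eta_v(1-\eta_v)] \geq \eta$ via $\eta_v \leq 1/2$. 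Combined with the Lipschitz bound (i) $\text{dist}(E_G(g), E_G(c)) \leq k_R\,\text{dist}(g,c)$ and the triangle inequality this gives $\text{dist}(g,C) \geq \delta/k_R - \eta$, and your two-case split in $\eta$ yields $\Omega(s\delta/(qk_R))$ rejection exactly as required. This is a modest gain in self-containment at the cost of a bit more bookkeeping, with the same final constants up to absolute factors. Two small points to verify when writing it up: the final case-split comparison $s\delta/(2k_R)\geq cs\delta/(qk_R)$ is only interesting when $q\geq 2$, which holds for any non-trivial LTC, and all of (i)--(iii) quietly use the $(k_L,k_R)$-bi-regularity so that ``uniform $s$ then uniform $v\in N(s)$'' equals ``uniform $v$ then uniform $s\in N(v)$''; you flag both, so nothing is missing.
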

Our proof relies on the following elementary agreement test. Given a word $f$, denote by $\mathcal{T}_G(f)$ the following process:
\begin{enumerate}
    \item Pick a uniformly random vertex $v \in L$
    \item Pick two uniformly random neighbors $w,w' \in N(v)$
    \item Accept if and only if $f_w(v) = f_{w'}(v)$
\end{enumerate}
In the agreement notation of the previous section, we'd write the word $f$ as an ensemble $\mathcal{F}=\{f_w: N(w) \to \mathbb{F}_2\}_{w \in R}$ and denote the acceptance probability of this test as $Agree_0(\mathcal{F})$. We will move between these two viewpoints interchangeably. $\mathcal{T}_G$ is clearly complete. Its soundness follows from a general result of \cite{goldenberg2019toward}:
\begin{lemma}[{\cite[Lemma 5.2]{goldenberg2019toward}}]\label{lem:GC-sound}
    There exists a universal constant $c \geq 1$ such that for all $\varepsilon>0$ and any ensemble $\mathcal{F}$:
    \[
    Agree_0(\mathcal{F})\geq 1 - \varepsilon \quad \implies \quad dist(F,g_{\text{maj}}) \leq ck_R\varepsilon
    \]
    where for $v \in L$ we define $g_{\text{maj}}(v) \coloneqq \text{plurality}_{w \in N(v)}\{f_w(v)\}$.\footnote{Ties can be broken arbitrarily.}
\end{lemma}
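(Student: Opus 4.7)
The plan is to prove the lemma by a direct per-vertex calculation relating the 2-query agreement probability to the plurality disagreement rate, then converting from per-vertex (bit-level) error to per-symbol (block-level) error in the ABNNR encoding via a union bound, which is what produces the factor of $k_R$.

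First I would set up notation. For each $v \in L$, let $p_v := \Pr_{w \in N(v)}[f_w(v) = 1]$ denote the fraction of $v$'s neighbors assigning label $1$ to $v$. By definition of the test, sampling $w, w' \in N(v)$ independently gives $\Pr_{w,w'}[f_w(v) \neq f_{w'}(v)] = 2 p_v(1-p_v)$, so the hypothesis $Agree_0(\mathcal{F}) \geq 1-\varepsilon$ rearranges to
\[
\mathbb{E}_{v \in L}\bigl[p_v(1-p_v)\bigr] \;\leq\; \varepsilon/2.
\]
Meanwhile, by definition of the plurality, $g_{\text{maj}}(v) = \mathbf{1}[p_v \geq 1/2]$, so the fraction of neighbors $w \in N(v)$ with $f_w(v) \neq g_{\text{maj}}(v)$ is exactly $\min(p_v, 1-p_v)$.

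The next step uses the elementary inequality $\min(p, 1-p) \leq 2\, p(1-p)$ for all $p \in [0,1]$ (verified by splitting into $p \leq 1/2$ and $p \geq 1/2$). Applied pointwise and averaged over $v$, this gives
\[
\mathbb{E}_{v \in L}\bigl[\min(p_v, 1-p_v)\bigr] \;\leq\; 2\,\mathbb{E}_v\bigl[p_v(1-p_v)\bigr] \;\leq\; \varepsilon.
\]
By biregularity of $G$ (or the assumption that marginals of the test distribution over edges match a uniform edge), the left-hand side equals $\Pr_{w \in R,\, v \in N(w)}[f_w(v) \neq g_{\text{maj}}(v)]$, i.e.\ the expected bit-level disagreement between $f_w$ and $g_{\text{maj}}|_{N(w)}$ over a random right vertex $w$.

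Finally, to pass from bit-level to symbol-level distance, observe that in the ABNNR code each symbol lives at a $w \in R$ and equals the string $f_w \in \mathbb{F}_2^{N(w)}$. If $f_w \neq g_{\text{maj}}|_{N(w)}$ then at least one of the $k_R$ coordinates $v \in N(w)$ has $f_w(v) \neq g_{\text{maj}}(v)$, so a union bound over $v \in N(w)$ gives
\[
dist(F,g_{\text{maj}}) \;=\; \Pr_{w \in R}\bigl[f_w \neq g_{\text{maj}}|_{N(w)}\bigr] \;\leq\; k_R \cdot \Pr_{w,\,v \in N(w)}[f_w(v) \neq g_{\text{maj}}(v)] \;\leq\; k_R\,\varepsilon,
\]
yielding the lemma with $c = 1$. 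There is no real obstacle here; the main subtlety (and the only place the factor $k_R$ enters) is the union bound in the last step, which accounts for the gap between the natural bit-level agreement captured by the test and the symbol-level distance measured in the ABNNR code $E_G(C)$. The key inequality $\min(p,1-p) \leq 2p(1-p)$ is what makes the constant $c$ universal rather than depending on the structure of $G$.
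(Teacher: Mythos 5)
Your proof is correct. The paper does not give a proof of this lemma---it cites \cite{goldenberg2019toward}---so there is no in-text argument to compare against, but your derivation is the standard one and the constant accounting is right: the factor of $2$ in $\min(p,1-p)\le 2p(1-p)$ is exactly cancelled by $\Pr_{w,w'}[f_w(v)\neq f_{w'}(v)]=2p_v(1-p_v)$, and the final passage from bit-level to block-level error via Markov (equivalently, union bound over the $k_R$ coordinates of $N(w)$) is what produces the factor of $k_R$.

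Two minor remarks. First, the distributional identity
\[
\mathbb{E}_{v\in L}\bigl[\min(p_v,1-p_v)\bigr]=\Pr_{w\in R,\,v\in N(w)}\bigl[f_w(v)\neq g_{\mathrm{maj}}(v)\bigr]
\]
requires exactly that the pair $(v,w)$ has the same law whether one samples $v$ first or $w$ first, which holds because $G$ is assumed $(k_L,k_R)$-biregular in the surrounding theorem; you note this, and it is the only structural assumption you use. Second, your inequality $\min(p,1-p)\le 2p(1-p)$ is specific to $\mathbb{F}_2$, which matches the paper's usage, but it is worth knowing that the same argument works for arbitrary finite alphabet $\Sigma$ with no extra constant: writing $p_{v,\sigma}$ for the fraction of $w\in N(v)$ voting $\sigma$, the plurality disagreement is $1-\max_\sigma p_{v,\sigma}$, the collision disagreement is $1-\sum_\sigma p_{v,\sigma}^2$, and $\sum_\sigma p_{v,\sigma}^2\le\max_\sigma p_{v,\sigma}$ gives $1-\max_\sigma p_{v,\sigma}\le 1-\sum_\sigma p_{v,\sigma}^2$ directly, after which the rest of your argument goes through verbatim.
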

With this in mind, we prove \pref{thm:LTC-ABNNR} via the following natural tester: given a word $f \in (\mathbb{F}_2^{k_R})^R$, run the DP-tester on $w$, and the base code tester on the plurality decoding of $f$ if it passes:
    \begin{enumerate}
        \item Run $\mathcal{T}_{G}$ on $f$
        \item For every $v \in L$ that $\mathcal{T}_C$ would query:
        \begin{itemize}
            \item Sample a random neighbor of $v$: $w \sim N(v)$
            \item Denote the `decoded' plurality as $g_{\text{dec}}(v) =f_w(v)$
        \end{itemize}
        \item Run $\mathcal{T}_C$ on $g_{\text{dec}}$
    \end{enumerate}
\begin{proof}[Proof of \pref{thm:LTC-ABNNR}]
    Query complexity and completeness are immediate from construction. The tester is uniform since the base code is uniform, and picking a uniformly random vertex $v \in L$ and random $w \in N(v)$ is uniformly distributed over $R$.
    
    Toward analyzing soundness, fix any $\delta \in [0,1]$ and $f$ some word at distance $dist(f,E_X(C)) = \delta$. We need to show the test \textit{rejects} on $f$ with probability at least $\delta c'\frac{s}{qk_R}$ for some $c'\geq 0$. Observe that we may assume $\mathcal{T}_{G}(f)$ passes with probability at least $1-\delta\frac{1}{4c}\frac{s}{q k_R}$, since otherwise Step (1) rejects with probability at least $\delta\frac{1}{4c}\frac{s}{q k_R}$ and we are done. We claim that conditioned on this event, the following two properties hold:
    \begin{enumerate}
        \item $g_{\text{dec}}$ matches $g_{\text{maj}}$ with high probability: \label{item:decode}
        \[
        \Pr_{S \sim \mathcal{T}_C}[\exists v \in S: g_{\text{dec}}(v) \neq g_{\text{maj}}(v)] \leq \frac{s}{4ck_R}\delta
        \]
        \item $g_{\text{maj}}$ is far from the base code:\label{item:maj-far}
        \[
        dist(g_{\text{maj}},C) \geq \frac{1}{2k_R}\delta.
        \]
    \end{enumerate}
    Soundness is then immediate: as long as the base tester $\mathcal{T}_C$ receives the `correct' symbols from $g_{\text{maj}}$, it rejects with probability $s\cdot dist(g_{\text{maj}}) \geq \frac{s}{2k_R}\delta$ by \pref{item:maj-far}. Since the former occurs with probability at least $\frac{s}{4k_R}\delta$ by \pref{item:decode}, the total rejection probability is at least $\frac{s}{2k_R}\delta-\frac{s}{4k_R}\delta=\frac{s}{4k_R}\delta$ as desired.

    It is left to prove the claimed properties.
    \paragraph{Proof of \pref{item:decode}}
        Observe that since the queries are marginally uniform, by a union bound it is enough to show that the decoding of a uniformly random vertex succeeds except with probability $\frac{s}{4cqk_R}\delta$. The proof is essentially a simpler variant of \pref{claim:vertex-closeness-to-G}. For a fixed $v \in L$ define the set of `bad' neighbors as those whose labeling disagrees with the plurality decoding as:
    \[
    B_v \coloneqq \{w \in N(v): f_w(v) \neq g_{\text{maj}}(v)\}.
    \]
The failure probability of our single-query decoding procedure is exactly $\underset{v \in L}{\mathbb{E}}[B_v]$, so it is enough to show
\[
\mathbb{E}[B_v] \leq \frac{s}{4cqk_R}\delta.
\]
This follows from the fact that the test $\mathcal{T}_G$ can be written as an expectation over complete local distributions. In particular, observe that we can decompose $\mathcal{T}_G$ as:
    \[
    Agree_0(f) = \Pr_{v\sim L, w,w' \sim N(v)}[f_w(v) = f_{w'}(v)] \geq 1-\frac{s}{4cqk_R}\delta.
    \]
    This implies that the expected probability that $f_w(v)\neq f_{w'}(v)$ in the local distribution is at most:
    \[
    \underset{v\sim L}{\mathbb{E}}\left[\Pr_{w,w' \sim N(v)}[f_w(v) \neq f_{w'}(v)]\right] \leq \frac{s}{4cqk_R}\delta.
    \]
    Since $w$ and $w'$ are independent neighbors of $v$, the inner term is at least
    \[
    \Pr_{w,w' \sim N(v)}[f_w(v) \neq f_{w'}(v)] = 2\mathbb{E}[B_x]\mathbb{E}[\overline{B_v}] \geq \mathbb{E}[B_v]
    \]
    since $\mathbb{E}[\overline{B_v}] \geq \frac{1}{2}$ by construction, which completes the proof.
    \paragraph{Proof of \pref{item:maj-far}} The proof follows from the fact that conditioned on Step (1) (the agreement test) passing, $E_X(g_{\text{maj}})$ is close to a DP-codeword. Thus if $g_{\text{maj}}$ itself is too close to $C$, it's encoding will be close to $E_G(C)$ violating our original assumption on the distance. Formally, assume for the sake of contradiction that $g_{\text{maj}}$ is not $\frac{\delta}{2k_R}$-far from $C$. Then there exists $x \in C$ such that $dist(g_{\text{maj}}, x) < \frac{\delta}{2k_R}$. Since $G$ is bi-regular, this further implies a bound on the distance of the \textit{encodings} of $g_{\text{maj}}$ and $x$:
    \begin{align*}
        dist(E^k_G(g_{\text{maj}}),E_G^k(x)) &= \Pr_{w \sim R}[E_G(g_{\text{maj}})(w) \neq E_G(x)(w)]\\
        &\leq k_R\Pr_{w \sim R, v \sim N(w)}[E_G(g_{\text{maj}})(w)_v \neq E_G(x)(w)_v]\\
        &=k_Rdist(g_{\text{maj}},x)\\
        & < \frac{\delta}{2}.
    \end{align*}
    where the second equality follows from the fact that since $G$ is bi-regular, sampling a uniformly random vertex $v \in R$ and a random $s \in N(v)$ is equidistributed with sampling a random $w \in L$. Finally closeness of the encoded words contradicts the fact that $f$ is $\delta$-far from $E_G(C)$, since $E_G(x)$ is a codeword and by the triangle inequality:
    \[
    dist(f, E_G(x)) \leq dist(f, E_G(g_{\text{maj}})) + dist( E_G(g_{\text{maj}}), E_G(x)) < \delta,
    \]
    where we have used the fact that since $Agree_0(f) \geq 1-\frac{s}{4cqk_R}$ by assumption, $f$ is $\frac{s}{4q}\delta$-close to $E_G(g_{\text{maj}})$ by \pref{lem:GC-sound}. 
    \end{proof}
We remark that it is likely possible to reduce the query complexity of the above argument to just $q+1$ by re-using the first query of $\mathcal{T}_C$ for the agreement test and performing a more careful conditioned analysis on the latter passing. Since $q$ for us is some small constant, the difference between $q+1$ and $q+2$ is negligible so we only give the simpler independent version above.

To prove \pref{thm:codes}, we need a good explicit family of binary linear LTCs with near optimal distance to amplify via \pref{thm:LTC-ABNNR}. In fact, as discussed in the proof overview we will need base codes with the stronger property of having small \textit{bias}:
\[
bias(C) \coloneqq \max_{c \in C\setminus\{0\}} \left\{\left|\mathbb{E}_{i \in [n]}[(-1)^{c_i}]\right|\right\}.
\]
Note that an $\varepsilon$-biased linear code always has distance at least $\frac{1-\varepsilon}{2}$ since for every $c \in C$, $\Abs{\Ex[i \in {[n]}]{(-1)^{c_i}}} = 1-2 \dist(c,0)$. Our base codes start with the breakthrough construction of $c^3$-LTCs by Dinur, Evra, Livne, Lubotzky, and Moses \cite{DinurELLM2022}:
\begin{theorem}[{\cite{DinurELLM2022}}]
    There exists an explicit infinite family of binary, linear, uniform LTCs with constant rate, distance, locality, and soundness, and a linear time unique decoder up to constant radius.
\end{theorem}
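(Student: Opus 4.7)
The plan is to build these codes from a Tanner-like construction on a two-dimensional expander, specifically a left-right Cayley complex of a finite group $H$ with two constant-size generating sets $A, B$. The vertices are elements of $H$, the horizontal and vertical edges are given by the $A$- and $B$-actions, and the squares are the $4$-tuples $\{h, ah, hb, ahb\}$. If both Cayley graphs $\mathrm{Cay}(H,A)$ and $\mathrm{Cay}(H,B)$ are chosen to be good spectral expanders (e.g.\ Ramanujan or LPS-type), the complex inherits strong link expansion. For the code itself, I would place $\mathbb{F}_2$ symbols on the squares and impose, at each vertex $h$, that the restriction of the word to the $|A| \times |B|$ grid of squares incident to $h$ lies in a fixed tensor product $C_A \otimes C_B$ of two small ``inner'' codes with large relative distance. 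This is the Dinur--Evra--Livne--Lubotzky--Mozes / Panteleev--Kalachev construction.

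The next step is to analyze distance, testability, and decoding. For constant distance, the plan is a local-to-global argument: any non-zero codeword has non-trivial support in a non-negligible fraction of vertex neighborhoods, and the two-sided expansion of the row/column Cayley graphs then forces, by a Sipser--Spielman style confusion-graph argument, a linear number of square violations --- hence linear support globally. For constant soundness, I would combine robustness of the tensor-product constraint $C_A \otimes C_B$ (from carefully chosen inner codes) with an ``agreement expansion'' step: a word at distance $\delta$ from the code must disagree with its best local correction on a $\Omega(\delta)$ fraction of vertex neighborhoods, each of which is then detected by a constant-query test that picks a random vertex and a random row/column of its local grid. For linear time unique decoding, an iterative small-set / bit-flipping algorithm suffices: at each step flip any symbol that appears in more violated local constraints than satisfied ones, and use expansion to argue that the number of violations decreases geometrically until the correct codeword is reached.

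The main obstacle will be the robust local testability, not the distance. Distance follows from a fairly clean expander/Tanner argument once the inner codes have good distance. What is far more delicate is showing that the local-to-global gap is \emph{linear} --- that the fraction of rejecting checks is proportional to the distance from the code, not some sub-constant function of it. This is exactly the breakthrough of \cite{DinurELLM2022,PanteleevK22} and requires (i) inner codes with a strong ``robust testability'' property (e.g.\ tensor codes of codes with large enough rate and distance so that the product is robust in the sense of Ben-Sasson--Sudan), and (ii) the square complex satisfying a small-set coboundary/cosystolic expansion property, which in turn needs the group $H$ and the generating sets $A,B$ to be chosen so that the bipartite ``double cover'' graphs have mixing and covering radius strong enough to prevent localized disagreement patterns. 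Getting these two properties to cooperate --- rather than just separately hold --- is the technical heart of the proof.
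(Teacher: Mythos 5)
This statement is cited as an external result of Dinur--Evra--Livne--Lubotzky--Mozes; the paper does not prove it, and its only added content is the remark that ``while \cite{DinurELLM2022} do not argue their testers are uniform, it is easy to check from their construction that this is the case.'' So there is no ``paper's own proof'' to compare against line by line. Your sketch is a faithful high-level account of the actual DELLM construction and analysis: the left-right Cayley complex over a group $H$ with constant-size generating sets $A,B$, squares carrying $\mathbb{F}_2$ symbols, local tensor constraints $C_A \otimes C_B$ at each vertex, distance via a Sipser--Spielman-style expander argument, linear soundness via robust testability of the tensor inner codes combined with a local-to-global agreement/cosystolic-expansion step, and linear-time decoding via iterative bit-flipping driven by expansion. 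You also correctly identify the technical heart as the linear soundness gap rather than the distance bound.

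The one aspect the paper explicitly flags that your sketch does not call out is uniformity of the tester distribution. Your tester (``pick a random vertex and a random row/column of its local grid'') does in fact induce a uniform marginal over squares because the Cayley square complex is vertex-transitive and every square lies in the same number of local grids, so the property holds, but since this is precisely the one thing the paper bothers to remark on, it is worth stating explicitly rather than leaving implicit. Everything else in your proposal matches the standard route to the cited theorem.
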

We remark that while \cite{DinurELLM2022} do not argue their testers are uniform, it is easy to check from their construction that this is the case. As discussed in the overview, it is possible to use the distance amplification techniques of \cite{kopparty2017high} to achieve $c^3$-LTCs with distance near $1/2$ (even hitting the GV bound), but these codes may not be $\varepsilon$-biased and incur a heavy $\poly(\varepsilon^{-1})$ cost in the query complexity. We will take a different amplification strategy that avoids these issues almost entirely at the cost of a small polynomial blowup in rate. The idea is to concatenate our amplification procedure in \pref{thm:LTC-ABNNR} with the \textit{Hadamard code}:
\[
Had:\mathbb{F}_2^k \to \mathbb{F}_2^{2^k}, \; \alpha \overset{Had}{\mapsto} \iprod{\alpha, \cdot}
\]
We denote the resulting code (the image of this map), by $C_{\text{Had}}$.

The Hadamard code comes with three properties critical to our construction. First, it is easy to check that the code is binary, linear, and perfectly balanced ($0$-biased). Second, it is a classical result that the Hadamard code is an excellent LTC:
\begin{theorem}[\cite{blum1990self}]
    The Hadamard code is a uniform $(3,1)$-LTC.
\end{theorem}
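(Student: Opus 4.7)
My plan is to prove the statement via the classical Blum–Luby–Rubinfeld (BLR) linearity test and a short Fourier-analytic soundness argument that yields the optimal constant $s=1$.

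The tester $\mathcal{T}_{\mathrm{Had}}$ samples $x,y \in \mathbb{F}_2^k$ independently and uniformly, queries the three coordinates $x, y, x+y$ of the received word $f \in \mathbb{F}_2^{2^k}$, and accepts iff $f(x)+f(y)=f(x+y)$. This is a $3$-query test, and by construction the three queries are marginally uniform on $\mathbb{F}_2^k$, so the tester is uniform. Completeness is immediate: if $f = \chi_\alpha$ for some $\alpha \in \mathbb{F}_2^k$, then $\chi_\alpha(x)+\chi_\alpha(y) = \langle \alpha, x+y\rangle = \chi_\alpha(x+y)$, so $f$ is accepted with probability $1$.

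For soundness, the plan is a one-line Fourier computation. Set $g(x) = (-1)^{f(x)}$ and write $\hat g(\alpha) = \mathbb{E}_x[g(x)\chi_\alpha(x)]$. A direct expansion gives
\[
\Pr[\mathcal{T}_{\mathrm{Had}} \text{ accepts}]
= \mathbb{E}_{x,y}\!\left[\tfrac{1+g(x)g(y)g(x+y)}{2}\right]
= \tfrac{1}{2} + \tfrac{1}{2}\sum_{\alpha \in \mathbb{F}_2^k}\hat g(\alpha)^3,
\]
using the standard identity $\mathbb{E}_{x,y}[g(x)g(y)g(x+y)] = \sum_\alpha \hat g(\alpha)^3$. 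On the distance side, $\Pr_x[f(x)\neq \chi_\alpha(x)] = \tfrac{1-\hat g(\alpha)}{2}$, so
\[
\mathrm{dist}(f,C_{\mathrm{Had}}) = \min_\alpha \tfrac{1-\hat g(\alpha)}{2} = \tfrac{1-M}{2}, \quad \text{where } M \coloneqq \max_\alpha \hat g(\alpha).
\]
Thus it suffices to show $\sum_\alpha \hat g(\alpha)^3 \leq M$. But for every $\alpha$, since $\hat g(\alpha) \leq M$ and $\hat g(\alpha)^2 \geq 0$, we have $\hat g(\alpha)^3 = \hat g(\alpha)\cdot \hat g(\alpha)^2 \leq M\cdot \hat g(\alpha)^2$, and summing with Parseval ($\sum_\alpha \hat g(\alpha)^2 = \mathbb{E}[g^2]=1$) yields $\sum_\alpha \hat g(\alpha)^3 \leq M$. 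Rearranging gives $\Pr[\mathcal{T}_{\mathrm{Had}} \text{ rejects}] \geq \tfrac{1-M}{2} = \mathrm{dist}(f,C_{\mathrm{Had}})$, which is exactly the $(3,1)$-LTC condition.

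There is no real obstacle here — the only subtlety is making sure the bound $\hat g(\alpha)^3 \leq M\,\hat g(\alpha)^2$ holds even when $\hat g(\alpha) < 0$ or $M < 0$, which it does because multiplying the inequality $\hat g(\alpha)\leq M$ by the nonnegative quantity $\hat g(\alpha)^2$ preserves the direction of the inequality regardless of sign. This gives soundness with constant $1$ uniformly over all $f$, finishing the proof.
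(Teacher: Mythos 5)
Your proof is correct and is the standard Fourier-analytic analysis of the Blum--Luby--Rubinfeld linearity test. The paper states this result as a citation of \cite{blum1990self} and does not include a proof, so there is nothing in the paper to compare against; your argument is simply a clean self-contained derivation. The one historical remark worth making is that the tight soundness constant $1$ in the $(3,1)$-LTC claim actually comes from the Fourier proof you give, which postdates the original BLR paper --- BLR's combinatorial analysis yields a weaker constant. So your proof is not only valid, it is the right one to supply to justify the exact constant asserted. The computation $\mathbb{E}_{x,y}[g(x)g(y)g(x+y)] = \sum_\alpha \hat g(\alpha)^3$, the identification $\mathrm{dist}(f,C_{\mathrm{Had}}) = \tfrac{1-M}{2}$ with $M = \max_\alpha \hat g(\alpha)$, and the pointwise bound $\hat g(\alpha)^3 \le M\,\hat g(\alpha)^2$ followed by Parseval are all correct, and you correctly flag that the pointwise bound is sign-safe because $\hat g(\alpha)^2 \ge 0$. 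Marginal uniformity of the three queries $x$, $y$, $x+y$ is also correct, so the tester is uniform as required.
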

Finally, it is well-known folklore (and an easy exercise) to show the Hadamard code is \textit{locally decodable}. We will not go into the formal details of locally decodable codes, but it suffices to say there is a randomized algorithm $Dec: \mathbb{F}_2^{2^k} \times [k] \to \mathbb{F}_2$ which given a word $y$ such that $d(y,C_{\text{Had}}) < \frac{1}{4}$ and a coordinate $i$, queries only $2$ bits of $y$ and outputs the correct decoding with probability proportional to its distance from the code:
\begin{equation}\label{eq:Had-decode}
\Pr[Dec(y,i) \neq x_i] \leq 2d(y,C_{\text{Had}})
\end{equation}
where $x=Had^{-1}(y)$ is the pre-image of the unique closest codeword to $y$.

Note that this is essentially the same property we used in \pref{thm:LTC-ABNNR} in decoding the ABNNR-encoding. We will show that concatenating with the Hadamard code similarly allows us to amplify the bias and distance of the base code while only incurring constant blow-up in query complexity and logarithmic loss in soundness.

\begin{proposition}[The Base Code]\label{prop:base-code}
    For every $\varepsilon>0$ there is an explicit infinite family of uniform, $\varepsilon$-biased, linear $(\poly(\varepsilon),\frac{1}{2}-\varepsilon)_2$-codes that are $(O(1),\log^{-1}(1/\varepsilon)))$-locally testable. Moreover, there exists $t<1$ such that these codes are uniquely decodable up to radius $t$ in time $n\exp(\poly(\varepsilon^{-1}))$.
\end{proposition}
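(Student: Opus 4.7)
The plan is a two-step construction: first amplify the distance of a binary $c^3$-LTC via ABNNR on expander walks, then concatenate with Hadamard to reduce the alphabet back to $\mathbb{F}_2$ while simultaneously producing small bias. Let $C_0$ be the $c^3$-LTC of \cite{DinurELLM2022} with constants $r_0, d_0, q_0, s_0$ for rate, distance, locality, and soundness. Instantiating the ABNNR construction on length-$k$ walks through a $\lambda$-expander $G$ with $k = \Theta(\log(1/\varepsilon))$ and $\lambda$ small enough that $3^{O(\log k)}\lambda < \varepsilon$, \pref{cor:dist-amp} (applied to the expander-walk splitting tree of \pref{sec:splitting}) produces an intermediate code $C_1$ over alphabet $\mathbb{F}_2^k$ with distance $1 - (1-d_0)^k - 3^D\lambda \geq 1 - 2\varepsilon$. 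Since $\deg(G)$ is a universal constant, the rate of $C_1$ is $r_0/\deg(G)^{k-1} = \poly(\varepsilon)$.

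Next, concatenate with the Hadamard map $\mathrm{Had}: \mathbb{F}_2^k \to \mathbb{F}_2^{2^k}$ to obtain a binary code $C_2$. The alphabet blow-up is $2^k = \poly(1/\varepsilon)$, so the rate stays $\poly(\varepsilon)$. For bias, use that $\mathrm{Had}(y)$ is perfectly balanced for $y \neq 0$ and identically zero for $y=0$: for any nonzero $c \in C_1$, the bias of its concatenation equals the fraction of zero outer symbols in $c$, which is at most $1 - d(C_1) \leq 2\varepsilon$. Since binary linear codes satisfy $d = (1 - \mathrm{bias})/2$, this gives distance $\geq 1/2 - \varepsilon$ (after absorbing constants).

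For local testability, run with probability $1/2$ the classical Hadamard tester on a uniformly chosen inner block ($O(1)$ queries, constant soundness), and otherwise run the ABNNR tester of \pref{thm:LTC-ABNNR} on $C_1$, where every query to a single outer-code bit is served by invoking the Hadamard local decoder \pref{eq:Had-decode} on the relevant inner block using $2$ queries. The total query complexity remains $O(1)$, and uniformity is inherited from uniformity of $C_0$'s tester, the ABNNR tester, and the uniform choice of inner block. Because \pref{thm:LTC-ABNNR} gives soundness $\Omega(s_0/k) = \Omega(1/\log(1/\varepsilon))$ for $C_1$, a standard concatenation argument---either the Hadamard test rejects because many blocks are far from $\mathrm{Had}$, or else block-wise Hadamard decoding yields an outer word at distance $O(\mathrm{dist}(y, C_2))$ from $C_1$ that the outer tester rejects---preserves the $1/\log(1/\varepsilon)$ soundness up to constants.

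For unique decoding, brute-force decode each inner block over Hadamard codewords in time $\exp(\poly(1/\varepsilon))$ per block, producing an outer word $y'$ close to some $C_1$ codeword. Then perform majority-vote over the expander walks through each base vertex to recover a noisy base-code word within the unique decoding radius $t_0 < d_0/2$ of $C_0$, using the Chernoff-Hoeffding bound for expander walks (a special case of the splitting-tree sampling in \pref{sec:hdeml}) to argue the error rate after voting is small; finally apply $C_0$'s linear-time decoder. The main obstacle is the LTC soundness analysis: concatenating an ABNNR tester with Hadamard via local decoding is standard in principle, but requires careful book-keeping because each of the $O(1)$ outer queries costs $O(1)$ Hadamard queries, and the distances from the inner and outer codes must be balanced so that each local-decode call returns the correct bit with high probability whenever the word is close to $C_2$. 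This is precisely the step where the $\log(1/\varepsilon)$ factor becomes intrinsic, matching the $1/k_R = 1/k$ loss in \pref{thm:LTC-ABNNR}.
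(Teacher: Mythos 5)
Your proposal follows essentially the same construction as the paper: apply \pref{thm:LTC-ABNNR} with length-$\Theta(\log(1/\varepsilon))$ expander walks to the $c^3$-LTC of \cite{DinurELLM2022} for distance amplification, concatenate with Hadamard to reduce the alphabet and acquire $\varepsilon$-bias (via the fact that every nonzero outer symbol contributes zero bias under Hadamard), then build the tester by combining the Hadamard tester with Hadamard-local-decode-then-outer-test, and decode by brute-forcing inner blocks and majority-voting before invoking the base decoder. The only superficial differences are that you present the tester as a $\frac12$--$\frac12$ mixture rather than running both components sequentially (equivalent up to a constant factor in soundness), and the identity ``$d = (1-\mathrm{bias})/2$'' should read ``$d \geq (1-\mathrm{bias})/2$'' for linear codes; neither affects the argument.
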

\begin{proof}
    We apply \pref{thm:LTC-ABNNR} to \cite{DinurELLM2022}'s LTCs using the complex $X=W_G$ generated by length $k=O(\log(1/\varepsilon))$ random walks on any constant degree, regular $\frac{1}{2}$-spectral expander $G$ (such expanders are known to exist for any $n$, see e.g.\ \cite{alon2021explicit}). This results in a $\poly(\varepsilon)$-rate code with distance at least $1-\frac{\varepsilon}{2}$ by the classical expander hitting set lemma \cite{hoory2006expander} which states that for any  set $T \subset \X[1]$ of measure $\rho$
    \begin{equation}\label{eq:expander-hitting-set}
    \Pr_{(s_1,\ldots,s_k) \in \X[k]}[|\{s_1,\ldots,s_k\} \cap T|=0 ] \leq \rho\left(\frac{1+\rho}{2}\right)^{k-1}.
    \end{equation}
    The argument is then as in \pref{cor:dist-amp}. Finally by \pref{thm:LTC-ABNNR} this encoding has a uniform $(c,\frac{c'}{\log(1/\varepsilon)})$-local tester for some universal constants $c,c'>0$. 

    The key step is now to concatenate this resulting family with the Hadamard code:
    \[
  \mathcal{C} \coloneqq Had(E_X(C)),
    \]
    where the Hadamard encoding is applied pointwise to each symbol in $E_X(C)$. This encoding maintains linearity and decreases the rate by a factor of at most $2^k=\poly(\varepsilon)$. To compute the bias, observe that since the Hadamard code is perfectly balanced, every non-zero symbol of a word $w \in E_X(C)$ contributes 0 to the bias so in the worst case we have
    \begin{align*}
        bias(\mathcal{C}) \leq \max_{c \in E_X(C)\setminus \{0\}}\{wt(c)\cdot 0 + (1-wt(c))\cdot 1\} \leq 1-d(E_X(C)) \leq \varepsilon,
    \end{align*}
    where $wt(c)$ denotes the normalized hamming weight (fraction of non-zero elements in $c$).
    
    We now argue that this concatenation has a uniform local tester. Write $q=c$ and $s=\frac{c'}{\log(1/\varepsilon)}$ and recall $E_X(C)$ has a uniform $(q,s)$-local tester (which we'll refer to as the ``base tester'') which requests only a single bit of each queried symbol. We'll denote these requests as pairs $(x,i) \in \X[k-1] \times [k]$. Our testing procedure follows a similar form as in \pref{thm:LTC-ABNNR}. Namely given a potential word $y$:
    \begin{enumerate}
        \item Draw a uniformly random $x \in X$, and run the Hadamard test on $y(x)$.
        \item For each $(x,i)$ requested by the base tester, run the Hadamard decoder on $y(x)$ at coordinate $i$.
        \item Run the base tester on the output
    \end{enumerate}
    The test is complete by construction, and makes at most $2q+3$ queries (two for each run of the local decoder, and three for the initial Hadamard test). Soundness follows along the same lines as \pref{thm:LTC-ABNNR}. We first argue that the word $y$ must be close to a cartesian product of Hadamard codewords (else Step 1 rejects). We then argue that the decoding succeeds with high probability, and the decoded word must be far from the base code (else $y$ itself would be close to $\mathcal{C})$, so the base tester rejects with good probability.

    Formally, let $y \in (\mathbb{F}_2^{2^k})^{X}$ satisfy $dist(y,\mathcal{C})=\tau$. We argue that the above procedure rejects with probability at least $\frac{s\tau}{4q}$. To see this, first observe we can assume the word $y$ is $1-\frac{s\tau}{4q}$ close from being a product of Hadamard codewords. Otherwise Step 1 rejects with probability at least $\frac{s\tau}{4q}$ since
    \[
    \Pr[\text{Step 1 Rejects}] \geq \mathbb{E}_x[dist(y(x),C_{Had})] = dist\left(y,\prod_X C_{Had}\right).
    \]
    Fix some $y_{\text{had}} \in \prod_X C_{Had}$ which minimizes the distance from $y$ and denote the set of `corrupted' coordinates of $y$ to be
    \[
    B \coloneqq \left\{ x \in X: dist\left(y(x),C_{\text{Had}}\right)\geq \frac{1}{4}\right\}.
    \]
    Note that the measure of $B$ is at most $\frac{s\tau}{q}$, since the closeness of $y$ to $\prod C_{\text{Had}}$ is violated otherwise.

    Define set of `valid' decodings of $y$ to be strings in $(\mathbb{F}_2^{k})^X$ whose encodings match $y_{\text{had}}$ on all uncorrupted coordinates:
    \[
    Y_{\text{dec}} \coloneqq \left\{ y_{\text{dec}} \in (\mathbb{F}_2^{k})^X: \forall x \in \bar{B}, Had(y_{\text{dec}}(x))=y_{\text{had}} \right\}.
    \]
    Note that when we \textit{re-encode} any $y_{\text{dec}} \in Y_{\text{dec}}$, it will be at distance at most $\frac{s\tau}{q}$ from $y$ since:
    \begin{align*}
    dist(Had(y_{\text{dec}}),y) &\leq dist(Had(y_{\text{dec}}),y_{\text{had}}) +  dist(Had(y_{\text{had}}),y)\\
    &\leq (\Pr[B]\cdot \frac{1}{2}+\Pr[\bar{B}]\cdot 0) + \frac{s\tau}{4q}\\
    &\leq \frac{s\tau}{q}.
    \end{align*}
    We now argue that Step 2 correctly decodes to some $y_{\text{dec}} \in Y_{\text{dec}}$ with high probability. Since decoding always `succeeds' on coordinates in $B$ (since any decoded value is accepted), the worst case is when $B=\emptyset$ and $y_{dec}$ is the pre-image of the unique closest word in every coordinate. In this case, local decodability of the Hadamard code \pref{eq:Had-decode} promises the probability any fixed query $(x,i)$ fails to decode to $y_{dec}(x)_i$ is at most
    \[
    \mathbb{E}_x[2dist(y(x),C_{Had})] \leq 2dist\left(y,\prod_X C_{\text{Had}}\right) \leq \frac{s\tau}{2q}.
    \]
    Union bounding over all $q$ queries, we decode to $y_{\text{dec}}$ except with probability $\frac{s\tau}{2}$.

    Finally, we argue any $y_{\text{dec}} \in Y_{\text{dec}}$ must be $\frac{3}{4}\tau$-far from $E_X(C)$, in which case the base tester rejects with probability at least $\frac{3}{4}s\tau - \frac{s\tau}{2} = \frac{s\tau}{4}$ by a union bound. This simply follows from the fact that if $y_{\text{dec}}$ were $\frac{3}{4}\tau$-close to $E_X(C)$, then there exists $w \in E_X(C)$ such that 
    \begin{align*}
    dist(Had(w),y) &\leq dist(Had(w),Had(y_{\text{dec}})) + dist(Had(y_{\text{dec}}),y)\\
    &\leq dist(w,y) + \frac{s\tau}{q}\\
    &\leq \tau
    \end{align*}
    where the second inequality comes from observing that the encodings of $w$ and $y_{\text{dec}}$ are the same on any symbol on which they agree, and can only improve in distance when they differ.\footnote{Technically we've also used that $q\geq 4$ and $s\leq 1$ here, which follow from the parameters of \cite{DinurELLM2022}.} 

    It is left to show our concatenated family has an efficient unique decoder. It is a classical fact that concatenating codes with efficient unique decoding at radii $t_{\text{inner}}$ and $t_{\text{outer}}$ gives a code with an efficient unique decoder at radius $t_{\text{inner}}t_{\text{outer}}$ simply by running the inner decoder on every inner symbol, then running the outer decoder on the result \cite{forney1965concatenated}. In particular decoding at this distance is successful since if a true word $m$ is corrupted in at most a $t_{\text{inner}}t_{\text{outer}}$ fraction of its coordinates, no more than a $t_{\text{outer}}$ fraction of inner codewords can have a $t_{\text{inner}}$ fraction of corruptions. Thus running the inner decoder results in the correct word except in at most a $t_{\text{outer}}$ fraction of outer coordinates, which is then successfully decoded by the outer decoder. 
    
    We can decode over each Hadamard symbol up to radius $t' < \frac{1}{4}$ by brute force in $\exp(\poly(\varepsilon^{-1}))$ time (and otherwise output some fixed word, say 0, beyond this radius). $E_X(C)$ can be efficiently decoded up to some constant radius $t$ by taking the majority vote over each coordinate, then running the base decoder on $C$ (see e.g. \cite{DinurHKLT2018}). This process takes time $n\exp(\poly(\varepsilon^{-1})) + O(n)$, for a total time of $n\exp(\poly(\varepsilon^{-1}))$ as desired.
\end{proof}
Finally, we argue these base codes can be amplified to large alphabet LTCs with near optimal distance.
\begin{corollary}[Large Distance c3-LTCs]\label{cor:LTCs}
    For all large enough $k \in \mathbb{N}$ and all $\varepsilon>0$, there exists an infinite family of $\mathbb{F}_2$-linear $(r, d)_{|\Sigma|}$-codes that are $(q,s)$-locally testable for:
    \[
    r=\varepsilon^{O(k)}, \quad d=1-2^{-k}-\varepsilon, \quad |\Sigma|=2^k, \quad q=O(1), \quad s=O\left(\frac{1}{k\log(\frac{1}{\varepsilon})}\right)
    \]    
\end{corollary}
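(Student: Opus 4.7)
The plan is to instantiate the abstract ABNNR machinery built in this section on walks over a sufficiently strong expander, applied to the $\varepsilon$-biased locally testable base code of \pref{prop:base-code}. Concretely, I would fix a base code $C_0$ from \pref{prop:base-code} with bias parameter $\varepsilon_0 := \varepsilon$---so $C_0$ is a uniform, $\mathbb{F}_2$-linear $(\poly(\varepsilon),\tfrac{1-\varepsilon}{2})_2$-LTC with query complexity $q_0 = O(1)$ and soundness $s_0 = \Omega(1/\log(1/\varepsilon))$. I would then take $G$ to be an explicit constant-degree $\lambda$-expander (for instance a Ramanujan graph) on $n$ vertices with spectral parameter $\lambda = \varepsilon/(2\cdot 3^{D})$ for $D = \lceil \log_2 k \rceil$, which by Alon--Boppana forces $\deg(G) = O(k^{O(1)}/\varepsilon^2)$. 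Let $X = W_G$ be the length-$k$ walk complex equipped with a balanced ordered partite binary tree $(T,\rho)$ that splits each internal node into two consecutive contiguous color ranges; by the expander-walk example in \pref{sec:splitting}, $(W_G, T, \rho)$ is a $\lambda$-tuple splitting tree of depth $D$. The target code is the ABNNR encoding $E_{W_G}(C_0)$.

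Next I would verify the four parameters. For the distance, \pref{cor:dist-amp} gives
\[
d' \;\geq\; 1 - 2^{-k}(1+\varepsilon)^k - 3^D \lambda \;\geq\; 1 - 2^{-k} - \varepsilon,
\]
using the bound $(1+\varepsilon)^k - 1 \leq 2^{k-1}\varepsilon$ (valid for all $k$ sufficiently large) together with the choice of $\lambda$. The alphabet is $\mathbb{F}_2^k = 2^k$ by construction, $\mathbb{F}_2$-linearity is preserved by the ABNNR map since it merely concatenates coordinate symbols, and the rate equals the base rate divided by the walk blow-up $\deg(G)^{k-1}$, yielding $\poly(\varepsilon) / (k^{O(1)}/\varepsilon^2)^{k-1} = \varepsilon^{O(k)}$. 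Finally, the inclusion graph $([n], W_G(k))$ is bi-regular with right degree $k_R = k$, so \pref{thm:LTC-ABNNR} produces a uniform $(q_0+2,\; \Omega(s_0/(q_0 k_R)))$-tester, i.e.\ an $(O(1),\; \Omega(1/(k\log(1/\varepsilon))))$-local tester.

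There are two routine subtleties I would check. The uniformity hypothesis required by \pref{thm:LTC-ABNNR} holds because each coordinate marginal of a length-$k$ random walk on a regular expander is uniform on $V(G)$, and $C_0$ itself is uniform by \pref{prop:base-code}. The rate bookkeeping hides a mild constraint tying $k$ to $\varepsilon$: the $k^{O(k)}$ factor appearing in $\deg(G)^{k-1}$ must be absorbed into $\varepsilon^{-O(k)}$, which is fine whenever $\log k = O(\log(1/\varepsilon))$---this is compatible with the corollary's hypothesis that $k$ is \emph{large enough}. The main conceptual content---that local testability survives alphabet amplification on a splitting tree---is already packaged into \pref{thm:LTC-ABNNR}, and distance amplification into \pref{cor:dist-amp}, so this corollary is essentially a clean composition of those two results with the base code of \pref{prop:base-code}; I anticipate no serious obstacle beyond the parameter chase sketched above.
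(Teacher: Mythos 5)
Your overall architecture is correct and matches the paper's: base code from \pref{prop:base-code}, amplify by encoding on length-$k$ expander walks, and invoke \pref{thm:LTC-ABNNR} for local testability. The uniformity and soundness bookkeeping are fine. But your distance-amplification step diverges from the paper's in a way that does not deliver the stated rate.

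You run the distance bound through \pref{cor:dist-amp}, which carries a $3^{D}\lambda$ error term with $D$ the depth of the splitting tree. For $W_G$ with $k$ leaves, $D = \Theta(\log k)$, so $3^D = k^{\Theta(1)}$, which forces you to set $\lambda = \varepsilon / (2\cdot 3^D) = \varepsilon / k^{\Theta(1)}$. A Ramanujan graph then has degree $\Theta(1/\lambda^2) = k^{\Theta(1)}/\varepsilon^2$, and the rate becomes $\poly(\varepsilon)\cdot\deg(G)^{-(k-1)} = (k/\varepsilon)^{-\Theta(k)}$, which is \emph{not} $\varepsilon^{O(k)}$ once $k$ grows past $\poly(1/\varepsilon)$. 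You notice this and propose to absorb the $k^{O(k)}$ factor under the ``large enough $k$'' hypothesis, but that reasoning is inverted: the spurious factor gets \emph{worse} as $k$ grows, and the corollary's quantifiers (``for all large enough $k$ and all $\varepsilon>0$'') demand a bound uniform in both. The paper avoids this entirely by applying the sharp expander hitting-set lemma (\pref{eq:expander-hitting-set}) directly to $W_G$ rather than passing through the generic splitting-tree hitting-set bound (\pref{prop:hitting-set}). This lets the paper fix the expander's spectral gap at $\varepsilon/2$ --- a $k$-independent constant --- so the degree is $O(\varepsilon^{-2})$ independent of $k$ and the rate is genuinely $\varepsilon^{O(k)}$. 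The fix to your proof is simply to replace \pref{cor:dist-amp} by the direct expander walk calculation: the probability a length-$k$ walk on a $\tfrac{\varepsilon}{2}$-expander stays inside a set of measure $\tfrac{1}{2}+\tfrac{\varepsilon}{4}$ is at most $\left(\tfrac{1}{2}+\tfrac{\varepsilon}{2}\right)^{k}$, giving distance $\geq 1 - 2^{-k} - \varepsilon$ without any depth-dependent loss.

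Two smaller points. First, Alon--Boppana is a \emph{lower} bound on degree given expansion; what you actually need is the existence of explicit graphs attaining that bound (Ramanujan constructions), which is what you implicitly use. Second, \pref{prop:base-code} gives distance $\tfrac{1}{2}-\varepsilon_0$, so your notation $\tfrac{1-\varepsilon}{2}$ corresponds to $\varepsilon_0 = \varepsilon/2$, not $\varepsilon_0 = \varepsilon$ as stated; the paper sets $\varepsilon_0 = \varepsilon/4$. These are harmless, but should be aligned when writing out the full parameter chase.
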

\begin{proof}
    We start with the explicit family of base codes given in \pref{prop:base-code} with $q=O(1)$ queries, distance $\frac{1}{2}-\frac{\varepsilon}{4}$, and rate $\poly(\varepsilon)$. For a given code $C$ in the family of blocklength $n$, let $X$ be the collection of length-$k$ walks on any explicit degree $\frac{\varepsilon}{2}$-expander of degree $O(\varepsilon^{-2})$ on $n$ vertices (such graphs exist for any large enough $n$ \cite{alon2021explicit}). The size of $\X[k-1]$ is $O(\frac{n}{\varepsilon^{2k}})$, and the expander hitting-set lemma (Equation \pref{eq:expander-hitting-set}) ensures the distance of $E_X(C)$ is at least
    \[
    1-\left(\frac{1}{2}+\frac{\varepsilon}{2}\right)^k \geq 1-2^{-k} - \varepsilon.
    \]
    Finally, since the base code's tester is uniform, local testability follows from \pref{thm:LTC-ABNNR}.
\end{proof}
It is worth comparing \pref{thm:LTC-ABNNR} to the distance amplification for LTCs of Kopparty, Meir, Ron-Zewi, and Saraf \cite{kopparty2017high}, who also analyzes local testability of distance-amplified codes via a related construction of Alon and Luby \cite{alon1996linear}. Their work largely focuses on achieving near-optimal \textit{rate-distance} trade-off: they achieve the singleton bound over large alphabet codes and the GV-bound over $\mathbb{F}_2$. In comparison \pref{thm:LTC-ABNNR} has worse rate than their construction, but makes up for this with exponentially better alphabet size and local testability. Quantitatively, to achieve distance $1-\beta$, \cite{kopparty2017high}'s codes require alphabet size $2^{\poly(1/\beta)}$ and lose $\poly(1/\beta)$ factors in the testing parameters, while ours have the optimal $\beta^{-1}$-dependence in the former and lose only \textit{logarithmically} in the latter.

\subsection{List-Decoding ABNNR}
In this section, we show that the codes in \pref{cor:LTCs} are efficiently list-decodable, completing the proof of \pref{thm:codes}. Efficient list-decoding of the ABNNR construction was first considered by Dinur, Harsha, Kaufman, Livni-Navon, and TaShma \cite{DinurHKLT2018}, who gave an algorithm for the construction instantiated on HDX. After \cite{DinurHKLT2018}'s work, Alev, Jeronimo, Quintana, Srivastava, and Tulsiani \cite{AlevJQST2020} showed it possible to efficiently list decode a closely related construction called a \textit{direct sum code} over some relaxed families of splitting trees (including HDX and expander walks). Using a reduction to direct sum, they also recover the parameters of \cite{DinurHKLT2018}'s ABBNR-codes on HDX as well. 

In this section, we give an efficient list-decoding algorithm for the ABNNR-construction on complexes satisfying a slightly strengthened variant of splittability that ensures the complex can be ``split'' at any coordinate. We refer to this notion as complete splittability.
\begin{definition}[Complete Splittability]\label{def:complete-split}
    A $d$-\maximal partite complex $X$ is called completely $\lambda$-splittable if for all $i \in [d]$:
    \[
    \lambda_2(S_{[1,i],[i+1,d]}) \leq \lambda
    \]
    where we recall $S_{[1,i],[i+1,d]}$ is the `swap walk' adjacency operator between $X[{[1,i]}]$ and $X[{[i+1,d]}]$.
\end{definition}
Complete splittability sits in-between tuple-splitting trees and partite HDX, and is closely related to notions of splittability defined in more recent works on direct-sum decoding \cite{jeronimo2021near,blanc2022new} (in particular it is equivalent to the definition introduced in the former, see \pref{app:split} for details). We can now state the main result of the subsection:
\begin{theorem}[List Decoding on Completely Splittable Complexes]\label{thm:list-decode-ABNNR}
    For any $t<1/2$, let $C_0$ be a binary code with blocklength $n$, $\text{bias}(C_0) \leq 1-2t$, and which is unique-decodable to radius $t$ in time $\mathcal{T}_0$. Let $X\subset [n]^k$ be a homogenous, completely $\lambda$-splittable complex, and $C=E_X(C_0)$ its ABNNR encoding. Let $d(C)=1-\varepsilon$ and $\tau>0$ be any value satisfying
    \[
    \tau \geq \max \left \{\sqrt{\varepsilon}, 4\sqrt{(1-2t)^{i} + 2\lambda},\sqrt{2^{21}\lambda k^3}, 8\left(1-t\right)^{i/2} \right \}
    \]
    for $i=\frac{k}{2}-\sqrt{k\log(\frac{2}{\tau})}$. There exists a randomized algorithm, which given $\tilde{y} \in (\mathbb{F}_2^k)^{X}$ recovers the list 
    \[
    \mathcal{L}_{\tau}(\tilde{y}) = \left\{y \in C: d(\tilde{y},y) \leq 1-\tau\right\}
    \]
with probability $1-\frac{\tau-\varepsilon}{\varepsilon(\tau^2-\varepsilon^2)}2^{-\Omega(n/t^2)}$ in time $\tilde{O}(\frac{\tau-\varepsilon}{\tau^2-\varepsilon^2}\exp(\log(1/t)\exp(k^3/\tau^2))(|X|+\mathcal{T}_0))$. Moreover, the number of such codewords is at most:
\[
|\mathcal{L}_{\tau}(\tilde{y})| \leq \frac{\tau-\varepsilon}{\tau^2-\varepsilon^2}.
\]
\end{theorem}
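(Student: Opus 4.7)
The plan is to reduce ABNNR list-decoding to list-decoding of a family of direct-sum codes on projected subcomplexes. For each nonempty $F \subseteq [k]$, define $\chi_F : (\mathbb{F}_2^k)^X \to \mathbb{F}_2^X$ by $\chi_F(y)_s = \bigoplus_{i \in F} y_{s,i}$, and observe that if $y = E_X(c)$ for $c \in C_0$, then $\chi_F(y)$ is exactly the direct-sum encoding of $c$ over the pruned complex $X^F$. By iterated pruning (\pref{claim:tree-pruning}), each $X^F$ is itself a $\lambda$-tuple splitting tree, so the high-dimensional expander-mixing lemma (\pref{thm:hdeml}) together with bias amplification (\pref{prop:bias-amp}) gives that the $\chi_F$-code has distance roughly $1 - ((1-2t)^{|F|} + O(\lambda))$, provided $C_0$ has bias at most $1-2t$.

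First I would establish the combinatorial list-size bound $|\mathcal{L}_\tau(\tilde{y})| \leq \frac{\tau - \varepsilon}{\tau^2 - \varepsilon^2}$ by a standard Johnson-type second-moment calculation: bound $\mathbb{E}_s[\sum_j \mathbf{1}[\tilde{y}_s = y_j(s)]]$ from below using the agreement assumption and $\mathbb{E}_s[(\sum_j \mathbf{1}[\tilde{y}_s = y_j(s)])^2]$ from above using the pairwise distance bound $\varepsilon$, then rearrange (this uses the first listed constraint $\tau \geq \sqrt{\varepsilon}$).

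For the algorithmic part, the recipe is: (i) sample a random $F \subseteq [k]$ of size $i = k/2 - \sqrt{k\log(2/\tau)}$; (ii) invoke an existing direct-sum list-decoder (\cite{AlevJQST2020,jeronimo2021near,blanc2022new}) on $\chi_F(\tilde{y})$ over the completely splittable complex $X^F$ with base code $C_0$, obtaining candidate messages $c$; (iii) re-encode each candidate via $E_X$ and retain only those whose encoded codeword agrees with $\tilde{y}$ on at least a $\tau$-fraction. The size $i$ is chosen so that the $\chi_F$-code distance is at least $1 - \tau^2/16$, guaranteed by $\tau \geq 4\sqrt{(1-2t)^i + 2\lambda}$, making the underlying direct-sum decoder applicable; meanwhile, for any fixed $y = E_X(c) \in \mathcal{L}_\tau(\tilde{y})$, a Chernoff-type calculation over random $F$ (the bound $\tau \geq \sqrt{2^{21}\lambda k^3}$ absorbs the expander-mixing error) ensures that with constant probability $\chi_F(y)$ agrees with $\chi_F(\tilde{y})$ on at least a $8(1-t)^{i/2}$-fraction, exceeding the direct-sum decoder's radius. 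Running $\tilde{O}(L)$ independent trials and unioning the outputs recovers $\mathcal{L}_\tau(\tilde{y})$, with the claimed failure probability $2^{-\Omega(n/t^2)}$ inherited from the base code's unique-decoder invocations inside each direct-sum call.

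The main obstacle will be verifying the precise parameter regime needed to invoke the existing direct-sum list-decoder on $X^F$: one must ensure that (a) pruning preserves complete splittability of $X^F$ with the same $\lambda$-bound required by the decoder's hypotheses, (b) the $(1-2t)$-bias of $C_0$ (rather than merely unique decodability to radius $t$) passes through to bound the $\chi_F$-code distance via bias amplification, and (c) the target agreement threshold on $\chi_F$ lies within the decoder's proven radius. The four lower bounds on $\tau$ in the statement correspond exactly to Johnson together with these three constraints. The run-time $\tilde{O}\bigl(\tfrac{\tau-\varepsilon}{\tau^2-\varepsilon^2}\exp(\log(1/t)\exp(k^3/\tau^2))(|X|+\mathcal{T}_0)\bigr)$ then tracks the list size times the direct-sum decoder's complexity times the per-call cost of $C_0$'s unique decoder.
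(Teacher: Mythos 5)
Your high-level reduction is the same as the paper's: express the ABNNR codeword's correlation with $\tilde{y}$ as an average over Fourier characters $\chi_F$, observe that $\chi_F(E_X(c)) = \text{dsum}_{X^F}(c)$ on the projected complex $X^F$, show that $X^F$ inherits complete splittability and that bias amplification bounds the direct-sum code's bias by $(1-2t)^{|F|}+2\lambda$, then invoke the direct-sum list decoder of \cite{jeronimo2021near,blanc2022new} and prune. The Johnson-bound list-size argument is also what the paper uses.

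However, there is a genuine gap in the algorithmic step. You propose to \emph{sample a single random $F$ of size $i$} and claim that, for any fixed $y \in \mathcal{L}_\tau(\tilde{y})$, the $\chi_F$-agreement $\mathbb{E}_s[\chi_F(\tilde{y}_s)\chi_F(y_s)]$ exceeds the decoder's threshold with \emph{constant} probability. That claim is unjustified, and in general false. What is true is only that
\[
\mathbb{E}_{F \sim \text{Unif}(\mathbb{F}_2^k)}\bigl[\mathbb{E}_s[\chi_F(\tilde{y}_s)\chi_F(y_s)]\bigr] = \Pr_s[\tilde{y}_s = y_s] \geq \tau,
\]
which by Markov only shows that a $\Theta(\tau)$-fraction of all $F$ have inner correlation $\geq \tau/2$; nothing prevents the good $F$ from being a vanishing fraction (e.g., $\tau$ can be $2^{-\Omega(k)}$ in the theorem's range). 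Combined with the Chernoff bound ruling out $|F| < i$, this gives \emph{existence} of at least one good $F$ of size $\geq i$, not a sampling guarantee. Consequently, your ``$\tilde O(L)$ independent trials'' would not suffice: you would need roughly $\tilde O(1/\tau)$ trials to hit a good $F$ with high probability, which for small $\tau$ is exponential and no cheaper than exhaustive search. The paper instead loops over \emph{all} $2^k$ subsets $F$ and relies only on the existence of a good one for each $y$; the $2^k$ factor is absorbed into the claimed runtime since it is dominated by $\exp(\exp(k^3/\tau^2))$ from each direct-sum call. A secondary, minor discrepancy: you invoke \pref{claim:tree-pruning} (a general tuple-splitting-tree fact) where the paper uses \pref{claim:project}, which is specifically about inheritance of \emph{complete} splittability under projection; one needs the latter to satisfy the direct-sum decoder's hypothesis as stated.
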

The proof of \pref{thm:list-decode-ABNNR} goes via reduction to direct sum decoding. The \textit{direct sum} encoding of a code $C$ on a complex $X$ simply concatenates the ABNNR-encoding with $XOR: \mathbb{F}_2^k \to \mathbb{F}_2$, that is $dsum_X(C) \coloneqq XOR(E_X(C))$ (where XOR applied pointwise, similar to in the Hadamard concatenation in the prior section). Most of the work in proving \pref{thm:list-decode-ABNNR} is already done via \cite{jeronimo2021near, blanc2022new}'s direct sum decoder for completely splittable complexes:
\begin{theorem}[List Decoding Direct Sum {\cite{jeronimo2021near,blanc2022new}}]\label{thm:direct-sum}
   For any $t<1/2$, let $C_0$ be a binary code with blocklength $n$, $\text{bias}(C_0) \leq 1-2t$, and which is unique-decodable to radius $t$ in time $\mathcal{T}_0$. Let $X\subset [n]^k$ be a homogenous, completely $\lambda$-splittable complex, and $C=\text{dsum}_X(C_0)$ be its direct sum encoding. Let $\varepsilon=bias(C)$ and $\beta>0$ be such that
    \[
    \beta \geq \max \left \{\sqrt{\varepsilon},\sqrt{2^{20}\lambda k^3}, 2\left(1-t\right)^{k/2} \right \}.
    \]
    There exists a randomized algorithm which given $\tilde{y} \in \mathbb{F}_2^X$ recovers the list 
    \[
    \mathcal{L}^{X,(sum)}_\beta(\tilde{y}) = \left\{x \in C_0: d(\tilde{y},dsum_X(x)) \leq \frac{1}{2}-\beta\right\}
    \]
    with probability $1-\frac{1}{\varepsilon}2^{-\Omega(n/t^2)}$ in time $\tilde{O}(\exp(\log(t)\exp(\frac{k^3}{\beta^2}))|X|+\mathcal{T}_0)$. Moreover, the list has at most
    \[
    |\mathcal{L}^{X,(sum)}_\beta(\tilde{y})| \leq \frac{1}{\varepsilon}
    \]
    codewords.
\end{theorem}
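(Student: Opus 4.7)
The plan is to establish the combinatorial list-size bound and the algorithmic recovery procedure separately, exploiting complete splittability of $X$ to drive a recursive reduction down to the base code $C_0$.

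For the combinatorial bound, I would use a Plotkin-style second-moment argument in the $\pm 1$ encoding. Suppose $y_1, \ldots, y_L \in C$ lie within distance $\frac{1}{2}-\beta$ of $\tilde{y}$. Then $\langle \tilde{y}, y_i\rangle \geq 2\beta$ for each $i$, while the bias hypothesis gives $|\langle y_i, y_j\rangle| \leq \varepsilon$ for $i \neq j$ (since $y_i \oplus y_j$ is itself a nonzero codeword of $C$, as $C = \text{dsum}_X(C_0)$ inherits linearity from $C_0$). Expanding $\bigl\|\sum_i y_i\bigr\|_2^2 \geq \bigl\langle \tilde{y}, \sum_i y_i\bigr\rangle^2 / \|\tilde{y}\|_2^2$ and rearranging yields $L(1 + (L-1)\varepsilon) \geq 4\beta^2 L^2$, hence $L \leq \frac{1-\varepsilon}{4\beta^2 - \varepsilon} \leq 1/\varepsilon$ once $\beta^2 \geq \varepsilon$. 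A small book-keeping sharpening handles the constant to match the stated $1/\varepsilon$ bound exactly.

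For the algorithm, I would recurse on complete splittability. Write $X_L = X[1,k/2]$ and $X_R = X[k/2+1,k]$ and note $\text{dsum}_X(x)(s_L, s_R) = \text{dsum}_{X_L}(x)(s_L) \oplus \text{dsum}_{X_R}(x)(s_R)$. The algorithm enumerates a cover of candidate ``left halves'' $a : X_L \to \mathbb{F}_2$: this cover is obtained by sampling a poly$(k/\beta)$-size seed from $X_L$, enumerating all $\exp(\exp(k^3/\beta^2))$ possible assignments to the seed, and for each assignment using the swap-walk between $X_L$ and $X_R$ (which is a $\lambda$-expander by complete splittability) to propagate the guess to a full candidate $a$ via majority voting from $\tilde{y}$. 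For each such $a$, lifting to $\hat{a}(s) := a(s_L)$ and setting $\tilde{y}' := \tilde{y} \oplus \hat{a}$ reduces the problem to list-decoding $\text{dsum}_{X_R}(C_0)$ on the word $\tilde{y}'$ with only slightly worse agreement. Recurse on $X_R$; the base case $k=1$ is simply $\text{dsum}_X = \text{id}$ plus unique decoding of $C_0$ to radius $t$, which by $\text{bias}(C_0) \leq 1-2t$ succeeds deterministically. Finally, filter each produced candidate by checking distance against $\tilde{y}$ to retain only the true list.

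The main obstacle is bookkeeping of error propagation through $\log k$ levels of recursion. Each level loses a factor proportional to $\sqrt{\lambda k^3}$ in effective agreement (via a high-dimensional expander mixing step analogous to \pref{thm:hdeml} applied to the $X_L,X_R$ bipartition), and the seed-size enumeration loses an $\exp(\exp(k^3/\beta^2))$ factor in runtime; together these dictate the stated lower bounds $\beta \geq \sqrt{2^{20}\lambda k^3}$ and $\beta \geq 2(1-t)^{k/2}$ (the latter ensuring the base-case unique decoder receives input within its decoding radius with high probability). The randomized failure probability $\frac{1}{\varepsilon}2^{-\Omega(n/t^2)}$ comes from concentration of the propagation-majority step over the random seed, union-bounded over the at most $1/\varepsilon$ codewords in the list. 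The delicate point is showing that at each recursive level the true candidate survives the cover-enumeration with probability close to $1$; this uses that the swap walk on the split is $\lambda$-expanding combined with the $1/\varepsilon$ bound on the list size at the parent level to keep the recursion from blowing up multiplicatively.
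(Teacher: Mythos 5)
This theorem is not proved in the paper: it is imported from \cite{jeronimo2021near,blanc2022new}, and the only comment made in the text is that the cited works have a slightly stricter unique-decoding radius requirement, which ``one can easily modify parameters in their proof to achieve.'' So there is no internal proof to compare your argument against; the comparison must be against the actual technique in those references.

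Your combinatorial list-size bound is a correct, standard Johnson-bound derivation. Writing codewords and the received word in $\pm1$ form, each list element correlates at least $2\beta$ with $\tilde y$, distinct list elements correlate at most $\varepsilon$ with one another (this uses linearity of $C_0$ plus the observation that bias strictly less than $1$ forces $\text{dsum}_X$ to be injective on $C_0$, a point you implicitly use and which does hold), and Cauchy--Schwarz on $\sum_i y_i$ gives $4\beta^2 L^2 \leq L + L(L-1)\varepsilon$, hence $L \leq (1-\varepsilon)/(4\beta^2-\varepsilon) \leq 1/\varepsilon$ when $\beta^2 \geq \varepsilon$. That part is fine.

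The algorithmic half is where your route genuinely diverges and where the gap is. The cited algorithms do not do seed-enumeration plus majority-vote propagation down a binary splitting of $X$. They reformulate list decoding as maximizing $\langle \tilde y, \chi_X(x)\rangle$ over $x\in\{\pm1\}^n$, run an SOS relaxation of this polynomial optimization, and use complete splittability to prove a \emph{weak regularity decomposition}: the SOS solution can be conditioned on a bounded number of entries (this is where $\exp(\exp(k^3/\beta^2))$ arises) so that rounding produces a word within unique-decoding radius of some $x\in C_0$. Splittability is used to bound the rank/complexity of the regularity decomposition, not to set up a left/right recursion. Your sketch, by contrast, is closer in spirit to the covering-code / advice-propagation decoders used for direct \emph{product} codes (DHKLT-style). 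Two concrete issues make it incomplete as stated. First, the propagation step is circular: to fill in $a$ on $X_L$ by majority voting from $\tilde y$ you need the right-half values $b(s_R)$, but to estimate $b(s_R)$ from the seed you need the seed values of $a$ to be essentially correct \emph{and} you need to control how the per-$s_R$ error accumulates before you vote on $a(s_L')$; this double-voting step requires a sampler-type concentration argument you have not supplied, and the swap-walk being a spectral expander is necessary but not sufficient without quantitative bookkeeping. Second, across $\log k$ levels of recursion the candidate-set size multiplies, and your own accounting ($\exp(\exp(k^3/\beta^2))$ per level, $\log k$ levels) overshoots the stated runtime unless the per-level enumeration shrinks as the ambient $k$ halves, which you do not argue. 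So: different (and more elementary in conception) route, correct combinatorial half, but the algorithmic half does not yet constitute a proof of the stated parameters and does not reconstruct the cited argument.
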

We remark that the requirement on the unique decoder radius in \cite{jeronimo2021near,blanc2022new} is slightly more strict as stated, but one can easily modify parameters in their proof to achieve the above. We now prove \pref{thm:list-decode-ABNNR} by reduction to \pref{thm:direct-sum}.
\begin{proof}[Proof of \pref{thm:list-decode-ABNNR}]
We appeal to a simple variant of the reduction of \cite{AlevJQST2020}. Given a potential word $\tilde{y} \in (\mathbb{F}_2^k)^X$ at distance $d(\tilde{y},C)=1-\tau$, we claim the following process is an efficient list-decoder.
\begin{enumerate}
    \item For every $F \subseteq [d]$, apply \pref{thm:direct-sum} to $XOR(\tilde{y}|_F)$  on $dsum_{X^F}(C)$ with $\beta=\tau/4$ to get:
    \[
\mathcal{L}^{pot}_{\beta}(\tilde{y}) \coloneqq\bigcup_{F \subseteq [d+1]} \mathcal{L}^{X^F,(sum)}_{\beta}(\tilde{y})
    \]
    \item Prune any decodings that are far from $\tilde{y}$: 
    \[
    \mathcal{L}_{\tau}(\tilde{y}) \coloneqq \left\{E_X(x): x \in \mathcal{L}_{\beta}^{pot}(\tilde{y}),~ d(E_X(x),\tilde{y}) \leq 1-\tau\right \}.
    \]
\end{enumerate}
To prove this procedure indeed gives a valid list-decoding we will use the fact that any projection of a splittable complex, in particular each $X^F$ above, is also splittable.
\begin{claim}\label{claim:project}
    Let $X \subset [n]^{d}$ be a homogeneous, completely $\lambda$-splittable complex. Then for any $F \subseteq [d]$, $X^F$ is homogeneous and completely $\lambda$-splittable.
\end{claim}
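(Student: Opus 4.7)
The plan is to reduce both properties of $X^F$ directly to the corresponding properties of $X$ via a factorization of the swap walks on $X^F$ through swap walks on $X$. Homogeneity is immediate: for any $i \in F$, the marginal of the induced distribution on $X^F[i]$ equals the marginal of the distribution on $X[i]$, which is uniform on $[n]$ by homogeneity of $X$. The substantive step is verifying complete splittability. Since $F$ inherits its ordering from $[d]$, every prefix-suffix partition of $F$ takes the form $A = F \cap [1, i]$ and $B = F \cap [i+1, d]$ for some $i \in [d-1]$, so it suffices to bound $\lambda_2(S_{A,B}(X^F))$ for each such pair.

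Toward this, I would define the ``lift'' map $J_A \colon \ell_2(X^F[A]) \to \ell_2(X[[1,i]])$ by $(J_A f)(s) = f(s|_A)$, and analogously $J_B \colon \ell_2(X^F[B]) \to \ell_2(X[[i+1,d]])$. A short measure-theoretic check (using that the distribution on $X^F$ is defined as the projection of the distribution on $X$) shows that $J_A$ is an isometry, preserves inner products, and fixes the constant direction, i.e.\ $J_A \mathbf{1}_A = \mathbf{1}_{[1,i]}$. Its adjoint $J_A^*$ is then the ``down'' operator $(J_A^* g)(s_A) = \mathbb{E}[g(s) \mid s|_A = s_A]$, and analogously for $J_B^*$. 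Comparing the two ways of sampling a swap edge on $A \dunion B$ in $X^F$ (directly, versus first sampling an edge $(t, s) \sim S_{[1,i],[i+1,d]}(X)$ and then projecting onto $(t|_A, s|_B)$) yields the operator identity
\[
    S_{A,B}(X^F) \;=\; J_B^* \, S_{[1,i],[i+1,d]}(X) \, J_A.
\]

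Given this factorization, the eigenvalue bound is routine. For $f \in \ell_2(X^F[A])$ orthogonal to the constants, the fact that $J_A$ preserves inner products and fixes the constant direction implies $J_A f$ is orthogonal to the constants in $\ell_2(X[[1,i]])$, so by complete splittability of $X$ we have $\|S_{[1,i],[i+1,d]}(X) \, J_A f\|_2 \leq \lambda \|J_A f\|_2 = \lambda \|f\|_2$. Since $J_B^*$ has operator norm at most $1$ (being the adjoint of an isometry), this yields $\|S_{A,B}(X^F) f\|_2 \leq \lambda \|f\|_2$, completing the bound. I do not expect a serious obstacle here: the only delicate point is the measure-theoretic consistency needed for the isometry property of $J_A$ and for the factorization identity, and both follow directly from the definition of the induced distribution on $X^F$ as the projection of the distribution on $X$.
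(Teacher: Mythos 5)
Your proof is correct and follows essentially the same route as the paper: the operator identity $S_{A,B}(X^F) = J_B^* \, S_{[1,i],[i+1,d]}(X) \, J_A$ together with the isometry/contraction properties of $J_A, J_B^*$ is precisely the content of the paper's Lemma~\ref{lem:swap-project} (Swap Walk Inheritance), which is proved there via Jensen's inequality applied to the same lift map and then invoked as a black box in the proof of this claim. Your version simply inlines that lemma's argument rather than quoting it.
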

The proof follows from the inheritance of expansion under projection, and is defered to \pref{app:split}.

We now prove the output list $\mathcal{L}_{\tau}$ is exactly the set of codewords at distance at most $1-\tau$ from $C$. Recall that our base code is assumed to be at most $(1-2t)$-biased. By bias amplification lemma for splitting trees (\pref{prop:bias-amp}) and \pref{claim:project}, we have that for any $X^F$, the bias of $dsum_{X^F}(C)$ is at most $(1-2t)^{|F|} + 2\lambda$. As such, our choice of $\beta=\tau/4$ satisfies the conditions of \pref{thm:direct-sum} whenever $i \leq |F| \leq k$ (for $i$ as in the Theorem statement). Let $y \in E_X(C)$ be a codeword at distance at most $1-\tau$ from $\tilde{y}$, then we have
    \[
    \Pr[\tilde{y}_s=y_s] = \mathbb{E}_{s}\left[\mathbb{E}_{F \in \mathbb{F}_2^k}[\chi_F(\tilde{y}_s+y_s)]\right] \geq \tau,
    \]
where $\chi_F(y)=(-1)^{\langle F,y\rangle}$ are the standard characters. By averaging and Chernoff, there must exist $|F| \geq \frac{k}{2}-\sqrt{k\log(2/\tau))}$ such that
\[
\mathbb{E}_{s}\left[\chi_F(\tilde{y}_s)\chi_F(y_s)\right] \geq \tau/2,
\]
which is equivalent to the statement that the distance of $\tilde{y}|_{F}$ from $y|_F$ is at most $\frac{1}{2}-\frac{\tau}{4}=\frac{1}{2}-\beta$ in the direct sum encoding on $X^F$. Since $X^F$ is homogeneous and completely $\lambda$-splittable, by \pref{thm:direct-sum} we have that the decoding of $y$ appears in $\mathcal{L}_{pot}(\tilde{y})$ except with probability $\frac{1}{\varepsilon}\exp(n/t^2)$. By the Johnson bound (see e.g.\ \cite{DinurHKLT2018}[Theorem 2.15]), there can be at most $\frac{\tau-\varepsilon}{\tau^2-\varepsilon^2}$ such codewords $y$ close to $\tilde{y}$, so union bounding over these gives that all valid codewords are in the list with the desired probability. Finally, all extraneous decodings are removed in the second step, which completes correctness of the procedure. 

Running time follows immediately from the fact that we have run the direct sum decoder at most $2^k$ times, and the resulting potential list has at most $\exp(k)$ potential codewords by the Johnson bound and our choice of $\beta=\tau/4$, each of which can be pruned in $O(k|X|)$ time.
\end{proof}

Finally, we instantiate this result on expander walks to complete the proof of \pref{thm:codes}.
\begin{corollary}\label{cor:decoding-expanders}
    The family of codes in \pref{cor:LTCs} is $(1-2^{-\Omega(k)}, \frac{1}{2^{O(k)}})$ efficiently list-decodable with confidence $2^{-\Omega(n-k)}$.
\end{corollary}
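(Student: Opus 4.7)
The plan is to apply \pref{thm:list-decode-ABNNR} directly to the family from \pref{cor:LTCs}, instantiated with $\varepsilon = 2^{-ck}$ for a small constant $c > 0$. Recall the construction: the base code $C_0$ is the biased LTC of \pref{prop:base-code} (bias $\varepsilon$, distance $\tfrac{1}{2} - \varepsilon/4$, uniquely decodable to some fixed constant radius $t$ bounded strictly below $1/4$), and the amplifying complex $X = W_G$ consists of length-$k$ walks on a regular expander $G$ with spectral gap at most $\varepsilon/2$. The resulting ABNNR code $E_X(C_0)$ has alphabet $\mathbb{F}_2^k$ and distance $1 - 2^{-k} - \varepsilon$.

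First I would verify the hypotheses of \pref{thm:list-decode-ABNNR}. Homogeneity of $X$ is immediate from the regularity of $G$. Complete $\lambda$-splittability of $W_G$ with $\lambda = \varepsilon/2$ follows from the walk-complex example in \pref{sec:splitting} applied to any partite tree whose splits respect the time-ordering, since each swap walk $S_{[1,i],[i+1,k]}$ then reduces to a single step of $G$. The bias constraint $\mathrm{bias}(C_0) \leq 1 - 2t$ holds since $t < 1/4$ is a fixed constant while $\varepsilon \to 0$, and the theorem's distance parameter $\varepsilon_{\mathrm{thm}} := 1 - d(E_X(C_0))$ equals $2^{-k} + \varepsilon = 2^{-\Omega(k)}$.

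Next I would set the list-decoding target $\tau = 2^{-c'k}$ for a sufficiently small $c' \in (0, c/2)$ and verify the four conditions on $\tau$ appearing in the theorem: $\sqrt{\varepsilon_{\mathrm{thm}}} = 2^{-\Omega(k)}$; $\sqrt{2^{21}\lambda k^3} = O(k^{3/2}\sqrt{\varepsilon}) = 2^{-\Omega(k)}$; $8(1-t)^{i/2} = 2^{-\Omega(k)}$, since $i = k/2 - \sqrt{k \log(2/\tau)} = \Theta(k)$ and $t$ is a positive constant; and $4\sqrt{(1-2t)^i + 2\lambda} = 2^{-\Omega(k)}$, since both summands decay exponentially in $k$. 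All four can be dominated by $\tau$ simultaneously by choosing the constants $c, c'$ in the right order; this is the only real bookkeeping in the proof.

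Finally, the conclusion of \pref{thm:list-decode-ABNNR} delivers exactly the promised parameters: a list of size at most $(\tau + \varepsilon_{\mathrm{thm}})^{-1} = 2^{O(k)}$ containing all codewords within normalized distance $1 - \tau = 1 - 2^{-\Omega(k)}$, recovered with failure probability $\frac{1}{\varepsilon_{\mathrm{thm}}(\tau + \varepsilon_{\mathrm{thm}})} \cdot 2^{-\Omega(n/t^2)} = 2^{-\Omega(n - k)}$, in time polynomial in the blocklength $|X|$ once $k$ (and hence $t$) is fixed. The base decoder of \pref{prop:base-code} contributes $n \cdot \exp(\poly(2^{ck}))$, a constant for fixed $k$, and the doubly-exponential $\exp(\exp(k^3/\tau^2))$ factor from the theorem is likewise a $k$-dependent constant. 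No deeper obstacle appears; the argument reduces to a parameter check against \pref{thm:list-decode-ABNNR}.
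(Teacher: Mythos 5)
Your proof is correct and follows essentially the same approach as the paper's: apply \pref{thm:list-decode-ABNNR} to the construction of \pref{cor:LTCs} with $\varepsilon = 2^{-\Theta(k)}$, and verify that one can take $\tau = 2^{-\Theta(k)}$ satisfying all the stated conditions. The paper's version is terser — it cites \cite{jeronimo2021near} directly for homogeneity and complete $\lambda$-splittability of expander walks, and asserts the parameter check rather than carrying it out — whereas you make the bookkeeping explicit, which is a useful addition but not a different argument.
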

\begin{proof}
    Since \cite{jeronimo2021near} prove that walks on an $\lambda$-expander are homogeneous and completely $\lambda$-splittable, it is enough to ensure the parameter $\tau$ in \pref{thm:list-decode-ABNNR} can be set to $2^{-\Omega(k)}$. Recalling that the base code of this construction is $\varepsilon$-biased and uniquely decodable in linear time up to constant radius, and that the encoding has distance at least $1-2^{-k}-\varepsilon$, as long as $\varepsilon \leq \exp(-\Omega(k))$ one can easily check it is possible to take $\tau=\exp(-\Omega(k))$ and $\frac{\tau-\varepsilon}{\tau^2-\varepsilon^2} = 2^{-O(k)}$ satisfies all conditions of \pref{thm:list-decode-ABNNR}.
\end{proof}

\subsection{Toward Lossless Amplification from HDX?}\label{sec:parallel}
Our expander-based construction experiences polylogarithmic decay in soundness with the alphabet size. This is exponentially better than prior methods, but it is plausible the decay could be avoided entirely. In this section, we propose an approach toward such `lossless' amplification via the ABNNR-Encoding on an HDX.

Instantiating \pref{thm:LTC-ABNNR} directly on an HDX does actually slightly improve soundness over \pref{thm:codes}, but the loss of $\frac{1}{k}$ in the dimension of the complex is inherent to our approach, no matter the instantiation. The reason is that our method only simulates the base tester a single time. As a result, the rejection probability will only scale with the distance of the decoded word from the \textit{base} code. On the other hand, because $X$ amplifies this distance by a factor of $k$, this results in an unavoidable loss in soundness.

There is a natural approach to fixing this issue: simulate the base tester ``\textit{$k$ times in parallel},'' analogous to Raz's parallel repetition theorem \cite{raz1995parallel} and its de-randomized variants in agreement testing \cite{ImpagliazzoKW2012,dinur2011derandomized,dikstein2023agreement, bafna2023characterizing}. Indeed on the complete complex, this approach can actually be carried out successfully:
\begin{theorem}[Parallel LTC Amplification]\label{thm:parallel}
    Let $C$ be a uniformly $(q,s)$-locally testable $(r,\frac{1-\varepsilon}{2})_2$-code of blocklength $n$ and $X=\binom{[n]}{k}$. Then $E_X(C)$ is a uniformly $(q',s')$-locally testable $(r',d')_{|\Sigma|}$-code for:
    \[
    r'=\frac{r}{n^{k-1}}, \quad d'=1-\frac{1}{2^k}\left(1+\varepsilon\right)^k - o_n(1), \quad |\Sigma|=2^k, \quad q'=q+2, \quad s' = cs
    \]
    where $c>1$ is some \textit{universal} constant (independent of $k$).
\end{theorem}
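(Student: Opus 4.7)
The rate, distance, and alphabet size follow directly from \pref{cor:dist-amp}. Viewing $X = \binom{[n]}{k}$ as a splitting tree of depth $D = \lceil \log_2 k \rceil$ via balanced binary partitioning of the coordinates, every internal swap walk has expansion parameter $o_n(1)$, since each link is itself a complete complex on at least $n-k$ vertices. Plugging $\alpha = (1-\varepsilon)/2$ into \pref{cor:dist-amp} yields $d' \geq 1 - (1+\varepsilon)^k/2^k - 3^{\log_2 k} \cdot o_n(1) = 1 - (1+\varepsilon)^k/2^k - o_n(1)$ for fixed $k$, and the rate and alphabet are immediate from $|\X[k]| = \binom{n}{k}$ and $|\Sigma| = 2^k$.

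For the local testability, the proposed test uses $q+2$ queries: two for a single-vertex direct-product check on the complete complex (pick $v \in [n]$ and $s,s' \ni v$ independently in $X$, check $f_s(v) = f_{s'}(v)$), and $q$ for the base tester (for each coordinate $v_i$ that $\mathcal{T}_C$ would query, pick $s_i \ni v_i$ uniformly, and run $\mathcal{T}_C$ on the decoded symbols $f_{s_i}(v_i)$). Completeness and marginal-uniformity are immediate from uniformity of $\mathcal{T}_C$.

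The delicate part is the soundness. Suppose $f$ is $\delta$-far from $E_X(C)$. If the V-test rejects with probability $\Omega(s\delta)$ we are done; otherwise, $f$ is $O(s\delta)$-close to a direct product $E_X(g_{\mathrm{maj}})$. At this point the naive route used in \pref{thm:LTC-ABNNR} proceeds via $dist(g_{\mathrm{maj}}, C) \geq \delta/(2k)$ and then rejects with probability $s \cdot \delta/(2k)$, incurring a $1/k$ loss that we must avoid. The central observation is that on the complete complex one can apply the strong agreement theorem of \cite{DinurS2014}, which produces the direct product decoding $g_{\mathrm{maj}}$ with $k$-independent parameters (in contrast to \pref{lem:GC-sound}).

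The main obstacle, and the step requiring the most care, is converting this stronger agreement into $k$-independent soundness of the full tester. The idea is that a single face query reveals $k$ symbols of the decoded word at once, so each of the $q$ base-tester queries actually provides $k$ candidate answers drawn from the marginal distribution of $\mathcal{T}_C$. By uniformity of $\mathcal{T}_C$'s query distribution on $[n]$, one can pair up symbols \emph{across} the $q+2$ queried faces to simulate $\Theta(k)$ approximately independent runs of $\mathcal{T}_C$ on $g_{\mathrm{maj}}$ using the $O(1)$ actual queries to $f$. Since each simulated run rejects $g_{\mathrm{maj}}$ with probability $\Omega(s \cdot \delta/k)$, the overall rejection probability is amplified to $\Omega(s\delta)$ by a union-bound / parallel-repetition argument, giving the claimed $s' = cs$ with $c$ independent of $k$. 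Making this amortization rigorous -- in particular, arguing that the correlated $q$-tuples we obtain from multiple faces still faithfully simulate the distribution of $\mathcal{T}_C$, and handling the small fraction of decoded symbols where $f_{s_i}(v_i) \neq g_{\mathrm{maj}}(v_i)$ -- is the main technical content of the proof.
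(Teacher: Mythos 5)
Your rate, distance, and alphabet computations are correct and match the paper's (the paper invokes near-independence of $\binom{[n]}{k}$ directly; your route through \pref{cor:dist-amp} with $\lambda = o_n(1)$ is fine). Your test has the right query count and the right first two queries (the single-vertex V-test plus Dinur--Steurer), and your identification of the key obstacle — avoiding the $1/k$ soundness loss — is exactly right.

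However, the amortization step as you describe it has a genuine gap, and the fix is a different construction of the $q$ faces. You propose to sample one run of $\mathcal{T}_C$ to get positions $(v_1,\ldots,v_q)$, then independently draw $s_j \ni v_j$ uniformly, and afterwards ``pair up symbols across the $q$ queried faces to simulate $\Theta(k)$ approximately independent runs.'' This does not work: once $s_1,\ldots,s_q$ are drawn, the only $q$-tuples $(u_1,\ldots,u_q)$ with $u_j \in s_j$ available to you are distributed as an (essentially) uniform product over $s_1 \times \cdots \times s_q$, \emph{not} according to $\mathcal{T}_C$'s query distribution $\mathcal{D}$. Uniform marginals of $\mathcal{D}$ on $[n]$ say nothing about the joint structure of $\mathcal{D}$, so feeding such tuples to $\mathcal{T}_C$ gives no soundness guarantee. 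There is no post-hoc pairing that recovers $k$ faithful runs from faces chosen this way.

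The paper instead samples the $k$ runs \emph{first} and constructs the faces from them. Draw $k$ independent samples $(x_1^{(i)},\ldots,x_q^{(i)}) \sim \mathcal{D}$ for $i\in[k]$, and set $s_j := \{x_j^{(1)},\ldots,x_j^{(k)}\}$. Since $\mathcal{T}_C$ is uniform, each $s_j$ is an i.i.d.\ uniform $k$-tuple, hence $o_n(1)$-close to uniform over $\binom{[n]}{k}$ and valid input to the Dinur--Steurer agreement theorem; and by construction, reading off the $i$-th symbol of each queried face gives the exact query tuple of the $i$-th base run. After conditioning on the agreement test passing (so $g_{\text{dec}}$ matches $g_{\text{maj}}$ on all queried points except with probability $O(s\delta)$) and on $g_{\text{maj}}$ being $\frac{\delta}{2k}$-far from $C$, the $k$ independent runs each reject with probability $\geq \frac{s\delta}{2k}$, so at least one rejects with probability $1-(1-\frac{s\delta}{2k})^k = \Omega(s\delta)$ — this is where the factor of $k$ is recovered. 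The rest of your argument (closeness to a direct product via the first two queries, and the distance bound on $g_{\text{maj}}$) is correct and matches the paper.
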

\begin{proof}
    Rate and alphabet are immediate from construction, and distance follows from (near)-independence of $\binom{[n]}{k}$. For local testability, we rely on the following agreement theorem of Dinur and Steurer:
\begin{claim}[{\cite{DinurS2014}}]\label{claim:DS}
    There exists a two query agreement test on $\binom{[n]}{k}$ and a constant $c\geq 1$ such that\footnote{We remark that formally Dinur and Steurer consider $[n]^k$, but their techniques transfer to this setting (formally, see e.g.\ \cite{DiksteinD2019} for the quoted result).}
    \[
    Agree_0(\mathcal{F}) \geq 1-\varepsilon \quad \implies \quad dist(\mathcal{F},g_{\text{maj}}) \leq c\varepsilon
    \]
\end{claim}
    With this in hand the proof is similar to \pref{thm:LTC-ABNNR}. For $k$ independent runs of the base tester $\mathcal{T}_C$, denote the requested symbols by $\{x^{(i)}_{j}\}_{j \in [q],i \in [k]}$ and write $s_j \coloneqq \{x_j^{(1)},\ldots,x_j^{(k)}\} \in \binom{[n]}{k}$. Given a word $f$, our tester performs the following two-step procedure:
    \begin{enumerate}
        \item Run the Dinur-Steurer test on $f$
        \item For every $i \in [q]$:
        \begin{itemize}
            \item For each $j \in [k]$, denote the `decoded' plurality function at $x^{(i)}_j$ as $g_{\text{dec}}(x_{j}^{(i)})=f_{s_j}(x_{j}^{(i)})$
            \item Run $\mathcal{T}_C$ on $g_{\text{dec}}(x^{(i)}_1),\ldots,g_{\text{dec}}(x^{(i)}_q)$
        \end{itemize}
    \end{enumerate}
    The tester is complete and makes $q+2$ queries by construction. The proof of soundness is very similar to \pref{thm:LTC-ABNNR}, so we give a shortened explanation here. Assuming $dist(f,E_X(C)) = \delta$, we need to show our test rejects with probability $\Omega(\delta)$. Assume Step (1) passes with probability at least $1-\frac{s\delta}{8cq}$ (else we are done). Since the base tester is marginally uniform and the $k$ runs are independent, each face $s_j \in X$ is marginally uniform, and therefore by \pref{claim:DS} exactly matches plurality except with probability $\frac{s\delta}{8q}$. Union bounding over the $q$ faces gives that $g_{\text{dec}}$ matches $g_{\text{maj}}$ on all queried points except with probability $\frac{s}{8}\delta$. By the same argument as in \pref{thm:LTC-ABNNR}, $g_{\text{maj}}$ must be at least $\frac{\delta}{2k}$-far from $C$, so conditioned on receiving the correct values, at least one of the base testers rejects with probability $1-(1-\frac{s\delta}{2k})^k \geq \frac{s\delta}{4}$ for a total rejection probability of $\frac{s\delta}{4}-\frac{s\delta}{8}=\frac{s\delta}{8}$ as desired.
\end{proof}
The issue with \pref{thm:parallel} is that the resulting code has very poor rate. It is natural to conjecture that, as in the single query variant, this can be resolved by replacing $\binom{[n]}{k}$ with a high dimensional expander. Indeed it is known that the agreement testing step (the analog of \pref{claim:DS}) works in this setting \cite{DinurK2017,DiksteinD2019,kaufman2020local}. Note that this is \textit{not} the case for other splitting trees such as expander walks, which would lose a factor of $k$. The challenge then mainly lies in extending the parallel simulation. Below, we discuss the current challenges toward porting this second half of the proof to HDX.

The first difficulty in parallel simulation is that the HDX and base code must `match' in the sense that there is a natural distribution over $q$-tuples of faces that corresponds to $k$ valid $q$-query tests for the base code. Recently, Golowich \cite{golowich2023grassmannian} showed 
how to construct HDX from certain (modified) locally testable codes, and Dinur, Liu, and Zhang \cite{dinur2023new} built LTCs (of sub-constant rate) that sit on the triangles of an HDX. It is not clear a priori how to fit either of these constructions into our framework, but it is plausible they or modifications thereof do.

The second difficulty in simulating the parallel tester is that the $k$ runs of the base test will be inherently correlated. There is a long history in the HDX literature of handling this type of behavior (e.g.\ \cite{DinurK2017,DiksteinD2019,kaufman2020local,KaufmanG2023,dikstein2023agreement,bafna2023characterizing}), including the agreement tests we discuss in \pref{sec:agreement}. Very recently, building upon tools in this work, \cite{dikstein2023agreement,bafna2023characterizing} prove that certain strong families of high dimensional expanders admit $1\%$-regime agreement tests, which are closely related to parallel repetition \cite{ImpagliazzoKW2012,dinur2011derandomized}. As a start, it would be interesting to extend \pref{thm:parallel} even to dense complexes known to admit such tests, such as the Grassmann \cite{ImpagliazzoKW2012} or spherical buildings \cite{dikstein2023agreement}. To summarize, we give the following informal conjecture:
\begin{conjecture}[Parallel LTC Amplification]
    There exists a family of high dimensional expanders and locally testable codes which admit a ``parallel'' variant of \pref{thm:LTC-ABNNR}, that is:
    \begin{enumerate}
        \item There is a $q$-step random walk or distribution on $X$ generating $k$ valid `parallel' tests
        \item The resulting parallel tester has no (asymptotic) decay in soundness.
    \end{enumerate}
\end{conjecture}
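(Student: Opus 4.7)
The plan is to follow the two-step template of the complete-complex proof of \pref{thm:parallel}: first run an HDX agreement test on the received word $f$, then use the resulting plurality decoding $g_{\text{maj}}$ to simulate the base tester $\mathcal{T}_C$ in parallel $k$ times. For this to work I need an HDX $X$ and a base LTC $C$ that ``fit'' in the sense that a single natural distribution over $q$-tuples of top-level faces $(s_1,\ldots,s_q) \in \X[k]^q$ simultaneously (i)~marginally distributes each coordinate $x_j^{(i)} \in s_j$ according to the base tester's query distribution, and (ii)~produces $k$ honest, marginally correct runs of $\mathcal{T}_C$ when read ``column-wise.'' A natural candidate is to instantiate $C$ as a Dinur--Liu--Zhang style code sitting on small faces of $X$, or alternatively to start from Golowich's HDX-from-LTC construction and then choose a $q$-step walk on $X$ that respects the tester structure.

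Granting this matching ingredient, I would then run the HDX $99\%$ agreement test from \pref{thm:agreement-99-percent} (or the high-dimensional variants of \cite{DinurK2017,DiksteinD2019,kaufman2020local}) as Step 1. Conditioned on Step 1 accepting with probability $1 - \frac{s\delta}{O(q)}$, this gives a global function $g_{\text{maj}} \colon \X[1] \to \Sigma$ such that $\mathrm{dist}(f, E_X(g_{\text{maj}})) \leq \frac{s\delta}{O(1)}$, \emph{without} any loss in $k$ (this is the crucial difference from the sampler-based bound used in \pref{thm:LTC-ABNNR}). Step 2 then samples $(s_1,\ldots,s_q)$ from the matching distribution, reads off $g_{\text{dec}}(x_j^{(i)}) = f_{s_j}(x_j^{(i)})$, and feeds these into $k$ parallel copies of $\mathcal{T}_C$. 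Marginal correctness of each query, together with a union bound over $q$ queries, shows $g_{\text{dec}}$ agrees with $g_{\text{maj}}$ on all queried points except with probability $O(\frac{s\delta}{q})$. Arguing exactly as in \pref{thm:parallel}, $g_{\text{maj}}$ must be $\Omega(\delta/k)$-far from $C$, and since we run $k$ independent testers the probability at least one rejects is $1 - (1 - \Omega(s\delta/k))^k = \Omega(s\delta)$, giving soundness with no $k$-dependence.

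The hard part will be arranging genuine independence, or at least near-independence, between the $k$ parallel runs. On the complete complex this was automatic from the $o_n(1)$-closeness to a product distribution; on a sparse HDX $X$ the $k$ faces drawn from a natural walk are heavily correlated, and the product bound $1-(1-p)^k$ used above can collapse completely. Two possible attacks: (a)~use $\ell_\infty$-expansion or $\lambda$-globality (as in \pref{thm:Z-test}) in the \textit{links} to replace the $k$-wise product computation with a multiplicative reverse-hypercontractive lower bound via \pref{thm:intro-rhc-real}, which would give rejection probability at least $(s\delta/k)^{O(1)}$ and, after more careful accounting (splitting into levels where plurality is correct vs.\ where it fails), hopefully the desired $\Omega(s\delta)$; or (b)~use the topological/$1\%$-regime agreement machinery of \cite{GotlibK2022,dikstein2023agreement,bafna2023characterizing,BafnaM2024,DiksteinDL2024} to directly control the joint rejection probability of correlated parallel tests.

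A secondary obstacle is constructing the base code itself: it must be locally testable, have bias near $1/2$, and its tester's query distribution must embed into a natural walk on $X$. Here I would try to adapt the concatenation-with-Hadamard strategy of \pref{prop:base-code} on top of a DLZ-style or Grassmannian-HDX-style code so that the base test queries are already supported on small faces of $X$; then the parallel $q$-step walk on $X$ needed in (i) above is just a natural HDX random walk rather than an ad hoc construction. If both this compatibility and the correlation-handling step go through for a spectral HDX with strong enough expansion (e.g.\ $2^{-\Omega(k)}$-two-sided or $\lambda$-global), the conjecture follows with constant-factor universal soundness.
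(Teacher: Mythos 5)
This statement is explicitly labeled a \emph{conjecture} in the paper, and the paper does \emph{not} prove it. The surrounding discussion in the paper lays out a roadmap toward the conjecture and catalogues the obstacles, but stops short of a proof. Your ``proof proposal'' is therefore being compared against a non-existent proof; what you have written is a plan, not a proof, and you say as much yourself when you describe the ``hard part'' and ``secondary obstacle.''

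That said, as a roadmap your proposal tracks the paper's own discussion essentially verbatim. You correctly identify the two core difficulties the paper raises: (a) the need for the HDX and the base LTC to ``match,'' so that a natural $q$-step distribution on $X$ generates $k$ marginally correct parallel runs of $\mathcal{T}_C$ (the paper points to Golowich's HDX-from-LTC construction and the Dinur--Liu--Zhang codes on triangles of an HDX as the most plausible ingredients, exactly as you do); and (b) the correlation among the $k$ parallel runs, where on the complete complex the near-product structure makes the bound $1-(1-p)^k$ trivially available, but on a sparse HDX this collapses. Your suggested remedies for (b) --- reverse hypercontractivity in links, $\lambda$-globality / $\ell_\infty$-expansion, or the topological $1\%$-agreement machinery of \cite{GotlibK2022,dikstein2023agreement,bafna2023characterizing,BafnaM2024,DiksteinDL2024} --- are also the ones the paper floats. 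So your proposal is a faithful restatement of the paper's open-problem discussion, not a new proof route.

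If your goal is to actually make progress on the conjecture rather than restate it, the genuine gap in your plan is step (b): as written, the reverse-hypercontractivity route would only give a rejection probability of the form $(s\delta/k)^{O(1)}$, which is a polynomial \emph{loss} in $k$, worse than the $1/k$ loss of \pref{thm:LTC-ABNNR}, not the lossless $\Omega(s\delta)$ the conjecture asks for. You acknowledge hoping a ``more careful accounting'' converts this into $\Omega(s\delta)$, but that accounting is exactly the missing content of the conjecture. A concrete intermediate target, suggested implicitly by the paper, would be to prove the parallel tester with constant soundness on a \emph{dense} complex known to admit $1\%$-agreement tests (e.g.\ the Grassmann or a spherical building) before attacking sparse HDX; that would at least decouple obstacle (a) from obstacle (b).
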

\section{Concentration Lower Bounds}\label{sec:lower-bounds}
In this section we prove our results are essentially optimal with respect to their dependence on the sampling parameters and quantitative expansion of the complex. We further show one cannot hope to achieve our more involved applications such as reverse hypercontractivity using classical samplers such as expander walks.

\subsection{Optimality for Inclusion Samplers}\label{sec:optimality}
In \pref{sec:inclusion}, we proved that the if \(X\) is a high dimensional expander, then the inclusion graphs \((\X[k],\X[i])\) are roughly \((\varepsilon,\exp(-\varepsilon^2\frac{k}{i}))\)-additive samplers. Inspecting the proof of \pref{thm:hdx-is-sampler}, it is easy to show that for any set $A$ of density $\mu$ and strong enough local-spectral expansion $\lambda \ll \mu$, one can give the following improved bound:\footnote{Though formally when \(i = \Omega(\frac{k}{\log k})\) one needs to assume \(\varepsilon\) is a large enough to subsume the $\frac{1}{\varepsilon}$ factor in front of the exponent.}
\[
\Prob[s \in {\X[k]}]{\cProb{r \in \X[i]}{A}{r \subseteq s} \geq \prob{A} + \varepsilon }\leq \exp\left(-\Omega\left(\frac{\varepsilon^2}{\prob{A}}\frac{k}{i}\right)\right).
\]

In this section we prove that for most \(i \leq k\), this is essentially optimal. Recall that a complex \(X\) is called \((\gamma,i)\)-hitting if for any \(A \subseteq \X[1]\), \(\Prob[\sigma \in {\X[i]}]{\sigma \subseteq A} \leq \prob{A}^i + \gamma\). We say that \(X\) is \((\gamma,i)\)-hitting for sets of size \(\mu\) if \(\Prob[\sigma \in {\X[i]}]{\sigma \subseteq A} \leq \prob{A}^i + \gamma\) holds for any \(A \subseteq \X[1]\) of size \(\prob{A} = \mu\).

\begin{theorem}\label{thm:optimal-sampling}
    Let \(\varepsilon  \in (0,0.1)\). Let \(i<k\) be two integers satisfying \(\frac{k}{i} \geq 3\varepsilon^{-2}\) and let \(n \geq ke^k\).
    Let \(X\) be a \(k\)-\maxsize simplicial complex such that every vertex in \(X\) has the same probability \(\frac{1}{n}\). If \(X\) is \((i,\frac{\varepsilon}{3})\)-hitting, then \((X(k),X(i))\) is not an \((\varepsilon,\beta)\)-additive sampler for \(\beta < \exp(-O(\varepsilon^2 \frac{k}{i}))\). 

    In particular, for any \(\mu \in (0, 0.2)\), if \(X\) is \((i,\frac{\min(\varepsilon,\mu)}{3})\)-hitting for sets of size \(\frac{\mu}{i}\) there exists  \(E \subseteq \X[i]\) with \(\frac{1}{2}\mu \leq \prob{E} \leq \mu\) such that
        \(
        \Prob[s \in {\X[k]}]{\cProb{r \in \X[i]}{E}{r\subseteq s} > \prob{E} + \varepsilon} \geq \exp(-O(\frac{\varepsilon^2}{\mu} \frac{k}{i})).
        \)
\end{theorem}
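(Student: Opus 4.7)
The plan follows the two-step reduction outlined in \pref{sec:intro-optimality}: first establish a strengthened Canetti-Even-Goldreich (CEG) lower bound at an arbitrary target density, producing a bad set $A$ at the vertex level for $(\X[k], \X[1])$, and then ``lift'' $A$ to a bad set $E \subset \X[i]$ by the hit construction $E = \{t \in \X[i] : t \cap A \neq \emptyset\}$. The first assertion of the theorem follows from the moreover clause by taking $\mu$ to be an absolute constant (e.g.\ $\mu = \Theta(1)$), so I focus on the moreover statement below.

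\textbf{Step 1 (CEG at density $\mu'$).} The first step is to prove that if $(\X[k], \X[1])$ is an $(\varepsilon', \beta)$-additive sampler for all sets of measure $\mu'$, then $\beta \geq \exp(-O((\varepsilon')^2 k / \mu'))$. This adapts the probabilistic argument of \cite{CanettiEG1995}: draw $A \subseteq \X[1]$ by independent Bernoulli($\mu'$) selections (feasible because every vertex has the same probability $1/n$ and $n \geq ke^k$ provides enough ``room'' for this sampling to behave nearly uniformly on sets of size $\mu' n$). For any fixed $s \in \X[k]$, binomial anti-concentration gives $\Prob[A]{|A \cap s|/k > \mu' + \varepsilon'} \geq \exp(-O((\varepsilon')^2 k / \mu'))$, and averaging over $A$ produces a fixed $A$ whose fraction of deviating $s$ is at least this bound.

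\textbf{Step 2 (Lifting).} Apply Step 1 with $\mu' = \mu/i$ and $\varepsilon' = 2\varepsilon/i$ to get $A \subseteq \X[1]$ with measure $\mu/i$ such that $\Prob[s \in {\X[k]}]{|A \cap s|/k > (\mu + 2\varepsilon)/i} \geq \exp(-O(\varepsilon^2 k/(i\mu)))$, and set $E = \{t \in \X[i] : t \cap A \neq \emptyset\}$. The upper bound $\prob{E} \leq i\cdot \mu/i = \mu$ is a union bound using vertex-uniform marginals. The lower bound $\prob{E} \geq \mu/2$ follows from the $(\varepsilon/3, i)$-hitting property applied to $A^c$: $\Prob[t]{t \cap A = \emptyset} = \Prob[t]{t \subseteq A^c} \leq (1 - \mu/i)^i + \varepsilon/3 \leq 1 - \mu/2$, using $\mu \leq 0.2$ and $\varepsilon \leq 0.1$.

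\textbf{Step 3 (Transfer of oversampling).} For any $s \in \X[k]$ with $|A \cap s|/k > (\mu + 2\varepsilon)/i$, negative correlation on the complete complex over $s$'s vertices yields
\[
\Prob[t \subset s]{t \in E} = 1 - \binom{k - |A \cap s|}{i}\bigg/\binom{k}{i} \;\geq\; 1 - \left(1 - \frac{\mu + 2\varepsilon}{i}\right)^i \;\geq\; \mu + \varepsilon,
\]
where the last inequality follows from a Taylor expansion of $(1-x)^i$, using that $\mu, \varepsilon$ are small and $k/i \geq 3\varepsilon^{-2}$ absorbs the quadratic correction terms. Combined with $\prob{E} \leq \mu$, every such $s$ oversamples $E$ by at least $\varepsilon$, so $E$ is an $(\varepsilon, \beta)$-counter-example in $(\X[k], \X[i])$ with $\beta \geq \exp(-O(\varepsilon^2 k/(i\mu)))$, as desired.

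\textbf{Main obstacle.} The principal difficulty is Step 1: the standard CEG bound is tight at $\mu' = 1/2$, but obtaining the correct $1/\mu'$ factor in the exponent requires an \emph{anti}-concentration (lower) tail bound for the binomial rather than the more familiar Chernoff upper tail, and the argument must be set up so the Bernoulli-sampled $A$ effectively approximates a random set of exactly density $\mu'$, which is where the $n \geq ke^k$ hypothesis enters. Beyond that, Step 3 requires a careful choice of the ratio between $\varepsilon'$ and $\varepsilon$ together with a quantitative Taylor estimate; this is where the assumption $k/i \geq 3\varepsilon^{-2}$ is used to control the second-order error terms.
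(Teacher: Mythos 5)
Your three-step reduction mirrors the paper's approach closely: establish a reverse Chernoff lower bound at the rescaled density $\mu/i$, lift a bad vertex set $A$ to $E = N_i(A) \subset \X[i]$, and transfer over-sampling of $A$ to over-sampling of $E$. That is the right structure, and your Steps 2 and 3 are the paper's Steps 2 and 3 in all but the numerical constants. However there are two issues worth flagging.

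First, in Step 1 you propose Bernoulli$(\mu')$ sampling of $A$; the paper instead draws $A$ uniformly over sets of exact measure $\mu'$ and reduces (via a permutation of $[n]$ and right-regularity) to $\pi(\Delta_n,k,1,\varepsilon',\mu')$, for which it cites a reverse-Chernoff lemma of Klein--Neumann. The two strategies are morally the same, and your identification of the main obstacle (needing an anti-concentration lower tail) is correct. But your averaging step ``produces a fixed $A$ whose fraction of deviating $s$ is at least this bound'' does not automatically produce an $A$ with $\prob{A} = \mu'$, and you need exactly that for Step 2 to give $\frac{1}{2}\mu \leq \prob{E} \leq \mu$. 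To repair this you would have to condition the Bernoulli sample on its realized size being close to $\mu' n$ (where $n \geq ke^k$ helps), and control how that conditioning degrades the anti-concentration bound. The paper's choice of distribution sidesteps this entirely.

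Second, and more seriously, the constant in Step 3 is wrong, and the justification you give for absorbing the error is a misattribution. Taking $\varepsilon' = 2\varepsilon/i$, you need $1 - \bigl(1-\frac{\mu+2\varepsilon}{i}\bigr)^i \geq \mu + \varepsilon$, i.e.\ a gap of $\varepsilon$ above $\prob{E}\leq\mu$. Expanding, $1-(1-x)^i \geq ix - \frac{1}{2}(ix)^2$ with $ix = \mu+2\varepsilon$, so the left side is only $\geq \mu + 2\varepsilon - \frac{1}{2}(\mu+2\varepsilon)^2$. For $\mu$ near $0.2$ and $\varepsilon$ small (say $\varepsilon=10^{-3}$), the quadratic term $\frac{1}{2}(\mu+2\varepsilon)^2 \approx 0.02$ swamps the extra $\varepsilon$; the inequality fails. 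Crucially, this quadratic correction depends only on $\mu$ and $\varepsilon$, not on $k/i$, so the hypothesis $k/i \geq 3\varepsilon^{-2}$ cannot ``absorb'' it --- that constraint is used elsewhere, to ensure the reverse-Chernoff bound in Step 1 applies ($\frac{(\varepsilon')^2}{\mu'}k \geq 3$ in the paper's notation). The paper uses $\varepsilon' = 10\varepsilon/i$ precisely to leave a $2\varepsilon$-gap above $1-(1-\prob{A})^i$ before comparing with $\prob{E}+\varepsilon$; you should adopt a comparably generous constant and drop the appeal to $k/i \geq 3\varepsilon^{-2}$ for this step. The final bound still has the form $\exp(-O(\frac{\varepsilon^2}{\mu}\frac{k}{i}))$ because the larger constant is absorbed into the $O(\cdot)$.
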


While the assumption that \(X\) is \((i,\frac{\varepsilon}{3})\)-hitting is not completely general, it follows from \((\X[k],\X[i])\) being a modest sampler. We prove this at the end of the subsection.
        \begin{claim} \label{claim:sampling-implies-hitting}
            Let \((\X[k],\X[i])\) be an \((\frac{1}{10i},\frac{\varepsilon}{6i})\)-additive sampler. Then \(X\) is a \((i,\frac{\varepsilon}{3})\)-hitting for all sets of size at least \(\frac{0.2}{i}\). 
        \end{claim}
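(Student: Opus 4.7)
The plan is to prove the contrapositive. Fix $A \subseteq \X[1]$ with $\mu \coloneqq \Pr[A] \geq 0.2/i$ and suppose for contradiction that $\Pr_{\sigma \in \X[i]}[\sigma \subseteq A] > \mu^i + \varepsilon/3$. Let $B \coloneqq \{\sigma \in \X[i] : \sigma \subseteq A\}$, so $\Pr[B] > \mu^i + \varepsilon/3$, and write $q_s \coloneqq |A \cap s|/k$ for the density of $A$ in a top face $s \in \X[k]$. Since the vertex measure is uniform, $\mathbb{E}_s[q_s] = \mu$.

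I would apply the $\bigl(\tfrac{1}{10i}, \tfrac{\varepsilon}{6i}\bigr)$-additive sampling hypothesis directly to the set $B$: for at least a $(1 - \tfrac{\varepsilon}{6i})$-fraction of $s \in \X[k]$,
\[
\binom{|A \cap s|}{i}\Big/\binom{k}{i} \;\geq\; \Pr[B] - \tfrac{1}{10i} \;>\; \mu^i + \tfrac{\varepsilon}{3} - \tfrac{1}{10i}.
\]
Combined with the elementary bound $\binom{|A \cap s|}{i}/\binom{k}{i} \leq q_s^i$ and the observation that the right-hand side is positive once $\varepsilon/3 > 1/(10i)$, taking $i$-th roots yields the pointwise lower bound $q_s > \bigl(\mu^i + \tfrac{\varepsilon}{3} - \tfrac{1}{10i}\bigr)^{1/i}$ on the good set. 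Averaging this against the good set and using $q_s \geq 0$ elsewhere gives
\[
\mu \;=\; \mathbb{E}_s[q_s] \;\geq\; \Bigl(1 - \tfrac{\varepsilon}{6i}\Bigr)\bigl(\mu^i + \tfrac{\varepsilon}{3} - \tfrac{1}{10i}\bigr)^{1/i},
\]
and raising to the $i$-th power (using $(1 - \varepsilon/(6i))^{-i} \leq e^{\varepsilon/6}$) rearranges to $\mu^i(e^{\varepsilon/6} - 1) \geq \varepsilon/3 - 1/(10i)$. Since $\mu \leq 1$ and $e^{\varepsilon/6} - 1 \lesssim \varepsilon/6$, this is violated in the target regime $i \gtrsim 1/\varepsilon$, yielding the contradiction.

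The main obstacle is controlling the final inequality uniformly in $\mu$, especially when $\mu$ is close to $1$. Two edge cases need separate handling: when $\mu^i + \varepsilon/3 \geq 1$ the hitting property holds vacuously since $\Pr[B] \leq 1$, and when $\varepsilon/3$ is only marginally above $1/(10i)$ the $i$-th root step is unstable. I would handle the latter by bounding $\mathbb{E}[q_s^i]$ directly via the convex envelope of distributions on the extremal values $\{\mu \pm \tfrac{1}{10i}, 0, 1\}$ consistent with the mean identity $\mathbb{E}[q_s] = \mu$, exploiting the key fact that any mass placed at the upper endpoint $\mu + \tfrac{1}{10i}$ must be compensated by mass at lower values rather than concentrated at $1$.
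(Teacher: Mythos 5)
Your proposal attacks the literal statement by applying the additive sampler to the set $B=\{\sigma:\sigma\subseteq A\}$, dominating $\Pr[\sigma\subseteq A\mid s]\le q_s^i$, taking $i$-th roots, and comparing against $\E_s[q_s]=\mu$. This is a genuinely different route from the paper's. The paper instead applies the sampler to the \emph{lifted} function $f(\sigma)=|\sigma\cap A|/i$ on $\X[i]$, observing that $\E_{\sigma\subseteq t}[f(\sigma)]$ is exactly the vertex density $q_t=|A\cap t|/k$; the sampler then controls $q_t$ directly for all but an $\varepsilon/(6i)$-fraction of $t\in\X[k]$. It finishes by using negative correlation of the complete complex on the $k$ vertices of a fixed $t$ to get the exact bound $\Pr[\sigma\cap A=\emptyset\mid t]\le(1-q_t)^i$, and then integrating over $t$. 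The crucial distinction is which object the sampler controls: the paper controls the first-level quantity $q_t$ and only afterwards raises to the $i$-th power inside the (negatively correlated) complete complex on $t$, which is essentially tight; your route controls the $i$-th-level quantity $\Pr[\sigma\subseteq A\mid s]$ and then must take an $i$-th root, which is lossy.

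That loss is exactly where your argument has a real gap, and you already see it: the derived inequality $\mu^i(e^{\varepsilon/6}-1)\ge \varepsilon/3-1/(10i)$ only yields a contradiction once $1/(10i)$ is comfortably below $\varepsilon/6$ and $\mu$ is bounded away from $1$, i.e.\ roughly $i\gtrsim 1/\varepsilon$ --- a constraint the claim does not assume. The repair you sketch (bounding $\E[q_s^i]$ by a convex-envelope argument on the support $\{\mu\pm 1/(10i),0,1\}$) is not carried out, and it quietly switches what the sampler is applied to: concentrating the distribution of $q_s$ around $\mu$ requires applying the sampler to the lifted function $f$, not to $B$ --- that is, adopting the paper's approach rather than patching yours. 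Moreover, even granting that step, the deviation term $(\mu+1/(10i))^i-\mu^i$ is a constant independent of $\varepsilon$ when $\mu=\Theta(1)$ and $i$ is a fixed constant, so it is not evident the envelope bound closes the gap. (A minor additional nit: $(1-x)^{-1}\ge e^x$ for $x>0$, so the stated bound $(1-\varepsilon/(6i))^{-i}\le e^{\varepsilon/6}$ has the inequality backwards; replace it by, say, $\le e^{\varepsilon/3}$ for small $\varepsilon$.) As written, the proposal establishes the claim only in a sub-regime of its parameters.
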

This results in the variant of the bound stated in the introduction.
\begin{corollary}
        Let \(\varepsilon  \in (0,0.1)\). Let \(k\) and \(n\) be large enough (as a function of \(\varepsilon\)).
        Let \(X\) be a \(k\)-\maximal simplicial complex such that every vertex in \(X\) has the same probability \(\frac{1}{n}\). If \((\X[k],\X[i])\) is a \((\frac{1}{10i},\frac{\varepsilon}{6i})\)-additive sampler, then \((\X[k],\X[i])\) is not a \((\varepsilon,\beta)\)-additive sampler for any \(\beta < \exp(-O(\varepsilon^2 \frac{k}{i}))\).
\end{corollary}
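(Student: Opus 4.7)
The corollary is essentially a direct composition of \pref{thm:optimal-sampling} with \pref{claim:sampling-implies-hitting}, so the plan is quite short. I would first invoke \pref{claim:sampling-implies-hitting} on the hypothesis: since $(\X[k],\X[i])$ is an $(\tfrac{1}{10i},\tfrac{\varepsilon}{6i})$-additive sampler, we immediately obtain that $X$ is $(i,\tfrac{\varepsilon}{3})$-hitting for every set $A \subseteq \X[1]$ whose density is at least $\tfrac{0.2}{i}$. This converts the sampling hypothesis, which is about the inclusion graph between top and intermediate faces, into a purely hitting-set statement on the vertex level, which is the form needed to apply the lower bound theorem.

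Next I would apply the ``In particular'' clause of \pref{thm:optimal-sampling} with a fixed absolute constant $\mu$ chosen inside the allowed range (essentially $\mu$ a little below $0.2$, possibly requiring a trivial adjustment of the constant $0.2$ in the claim to $0.2-o(1)$ or the analogous adjustment in the theorem so that $\mu/i$ lands in the regime where \pref{claim:sampling-implies-hitting} supplies hitting). Since $\varepsilon<0.1<\mu$ we have $\min(\varepsilon,\mu)=\varepsilon$, and the hitting hypothesis needed by the theorem is exactly what the claim provides. The conclusion gives a set $E\subseteq \X[i]$ with $\tfrac{\mu}{2}\le \prob{E}\le \mu$ satisfying
\[
\Prob[s\in\X[k]]{\cProb{r\in\X[i]}{E}{r\subseteq s} > \prob{E}+\varepsilon} \geq \exp\bigl(-O(\tfrac{\varepsilon^2}{\mu}\tfrac{k}{i})\bigr).
\]
Because $\mu$ is a universal constant, the exponent simplifies to $-O(\varepsilon^2\tfrac{k}{i})$, and the very existence of such an $E$ witnesses that $(\X[k],\X[i])$ fails to be an $(\varepsilon,\beta)$-additive sampler for any $\beta$ below this threshold.

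The main obstacle is not the overall structure, which is just a composition, but the matching of the numerical constants between the two cited results. The claim guarantees hitting only for sets of density at least $0.2/i$, while the theorem's ``in particular'' clause formally requires hitting for sets of density $\mu/i$ with $\mu\in(0,0.2)$. In practice this is handled by inspecting the proofs and loosening one of the two constants by an arbitrarily small amount (or by letting $\mu$ approach $0.2$ from below), which only affects the hidden constant in the $O(\cdot)$ in the exponent and thus is absorbed into the final bound. Once this minor bookkeeping is done, the corollary follows with no additional work.
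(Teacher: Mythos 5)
Your proposal matches the paper's (unwritten) proof exactly: the corollary follows by composing \pref{claim:sampling-implies-hitting} with the ``in particular'' clause of \pref{thm:optimal-sampling}, taking $\mu$ to be a fixed absolute constant below $0.2$ that dominates $\varepsilon$. Your observation about the constant mismatch is also correct and points at a real wrinkle in the paper: \pref{claim:sampling-implies-hitting} is stated for sets of density ``at least $\tfrac{0.2}{i}$,'' while \pref{thm:optimal-sampling} strictly requires $\mu < 0.2$, so the two ranges as printed do not overlap. Inspecting the proof of \pref{claim:sampling-implies-hitting}, though, one sees it is parameterized by an arbitrary $\mu$ (the set $A$ has density $\mu/i$ throughout) and the only numerical step, the bound $\cProb{}{A}{t} \le \tfrac{\mu+0.1}{i}$ on $t \notin B$, imposes no real upper bound near $0.2$; the ``at least'' there appears to be a typo for ``at most,'' after which your composition (with, say, $\mu = 0.1$, so that $\min(\varepsilon,\mu)=\varepsilon$ and the $\tfrac{1}{\mu}$ factor is absorbed into the $O(\cdot)$) goes through cleanly. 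So your structural argument is right, and your diagnosis of the bookkeeping gap is the accurate one to raise; the fix is as you describe, just clearer to frame as a sign-of-inequality typo in the claim's statement rather than an $o(1)$ adjustment.
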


The assumption that the vertices all have the same probability is also not general, see the discussion in the end of this subsection.

\subsubsection{Proof of \pref{thm:optimal-sampling}}
For the rest of the section we restrict our attention to ``one-sided'' additive sampling, that is:
\[
\Prob[s]{ \cProb{v \sim s}{A}{s} > \prob{A} + \varepsilon}.
\]
This is no larger than the probability that \(\abs{\cProb{v \sim s}{A}{s} - \prob{A}} < \varepsilon\) and is more convenient to handle.

Let us introduce some notation, similar to the notation in \pref{sec:conc-partite} (but for one-sided sampling in different levels of the complex). For a \(k\)-\maximal simplicial complex, \(i \leq k\) and constants \(\mu,\varepsilon > 0\) let
\[
    \pi(X,k,i,\mu,\varepsilon) = \max \sett{\Prob[s \in {\X[k]}]{\cProb{r \in \X[i]}{A}{r \subseteq s} > \prob{A} + \varepsilon}}{A \subseteq \X[i], \prob{A} = \mu}.
\]
That is the best one-sided sampling bound for when restricting to sets of relative size \(\mu\).\footnote{Here let us assume for simplicity that there exists sets of this probability. We note that this assumption is justified because as \(n\) grows larger the possible sizes of sets, \(\set{\frac{k}{n}}_{k=0}^n\), becomes dense in \([0,1]\). The use of this claim later on is to find a set of size \(\approx \mu\) anyway, so we neglect dealing with this minor point.} 

For general sampler graphs \(G=(L,R,E)\) we use similar notation:
\[\pi(G,\varepsilon,\mu) = \max \sett{\Prob[v \in R]{\cProb{u \in L}{A}{v \sim u} > \prob{A} + \varepsilon}}{A \subseteq L, \prob{A} = \mu}.\]

We also denote by \(\Delta_{n}\) the complete complex on \(n\) vertices.

\begin{proof}[Proof of \pref{thm:optimal-sampling}]
    We start by stating a general bound on \emph{all} samplers (not just inclusion samplers), via reduction to the sampling properties of the complete complex.
    \begin{claim} \label{claim:ceg-for-small-sets}
        Let \(k,n >0\). Let \(G\) be a bipartite graph such that \(|L|=n\) and such that it is \(k\)-right regular. Then
        \[\pi(G,\varepsilon,\mu) \geq \pi(\Delta_n,k,1,\varepsilon,\mu).\]
    \end{claim}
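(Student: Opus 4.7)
The plan is to prove \pref{claim:ceg-for-small-sets} by a standard probabilistic/averaging argument in the spirit of Canetti–Even–Goldreich \cite{CanettiEG1995}. The key idea is that while the graph $G$ may have essentially arbitrary structure on the right side, each right vertex $v$ is incident to exactly $k$ left vertices, so from the perspective of a \emph{uniformly random} set $A\subseteq L$ of size $\mu n$, the event ``$v$ over-samples $A$'' is distributionally identical to ``a uniformly random $k$-subset of $[n]$ over-samples a fixed $\mu n$-subset.'' Consequently, the expected sampling error is \emph{exactly} $\pi(\Delta_n,k,1,\varepsilon,\mu)$, and by averaging, some particular $A$ must witness at least this error in $G$.

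More concretely, I would first assume $\mu n\in\NN$ for clarity (otherwise replace $\mu$ by the nearest attainable value; this only affects parameters by the minimal granularity and does not change the argument). Draw $A\subseteq L$ uniformly at random among subsets of size $\mu n$, and fix any $v\in R$ with neighborhood $N(v)\subseteq L$, $|N(v)|=k$. Since $A$ is uniform of size $\mu n$ and $N(v)$ is a fixed set of size $k$ in an ambient ground set of size $n$, the distribution of $|A\cap N(v)|$ is hypergeometric with parameters $(n,\mu n,k)$. By symmetry of the hypergeometric distribution under exchanging the roles of the two subsets, this is the same as the distribution of $|s\cap A_0|$ for a uniformly random $s\in\Delta_n(k)$ and any fixed $A_0\subseteq[n]$ of size $\mu n$. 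Thus for every fixed $v$,
\[
\Pr_A\!\left[\frac{|A\cap N(v)|}{k} > \mu+\varepsilon\right] \;=\; \Pr_{s\in\Delta_n(k)}\!\left[\frac{|s\cap A_0|}{k} > \mu+\varepsilon\right] \;=\; \pi(\Delta_n,k,1,\varepsilon,\mu),
\]
where the second equality uses vertex-transitivity of $\Delta_n$, which makes the right-hand probability independent of the choice of $A_0$.

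Next I would swap the order of averaging over $A$ and $v$. Let $p(A)\coloneqq \Pr_{v\in R}\!\left[|A\cap N(v)|/k > \mu+\varepsilon\right]$ denote the (one-sided) sampling failure of the specific set $A$ in $G$. Then
\[
\EE_A[p(A)] \;=\; \EE_{v\in R}\!\left[\Pr_A\!\left[\frac{|A\cap N(v)|}{k}>\mu+\varepsilon\right]\right] \;=\; \pi(\Delta_n,k,1,\varepsilon,\mu),
\]
using Fubini and the per-vertex identity above (the distribution on $R$ need not be uniform — the computation only requires the inner probability to be the same constant for every $v$). Hence there exists a specific realization $A^\ast$ of size $\mu n$ with $p(A^\ast)\geq \pi(\Delta_n,k,1,\varepsilon,\mu)$, which by definition gives $\pi(G,\varepsilon,\mu)\geq \pi(\Delta_n,k,1,\varepsilon,\mu)$.

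The argument is essentially routine once one notices the symmetry, so I do not expect a real obstacle; the only minor care needed is to handle the case when $\mu n$ is not integer (which costs at most an $O(1/n)$ perturbation in $\mu$, irrelevant for the subsequent application since $n\geq k e^k$) and to confirm that the inner probability is genuinely independent of $v$ so that the Fubini step yields equality rather than just an inequality. Both are immediate.
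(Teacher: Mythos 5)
Your proof is correct and takes essentially the same averaging route as the paper: choose a uniformly random $\mu n$-subset $A$, use right-regularity to argue that the per-vertex failure probability is a constant equal to $\pi(\Delta_n,k,1,\varepsilon,\mu)$ (you via hypergeometric symmetry, the paper via an equivalent random-permutation reparametrization), then swap the order of averaging and conclude some $A^\ast$ achieves at least the mean. The only cosmetic difference is that the paper makes explicit the reliance on uniform vertex weights on $L$, which you invoke implicitly when calling $A$ a ``uniformly random set of size $\mu n$.''
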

    We defer the proof, and first argue this implies \pref{thm:optimal-sampling}.

    Fix \(\mu\) as in the theorem statement. By \pref{claim:ceg-for-small-sets} there exists \(A \subseteq \X[1]\) of measure \(\prob{A}=\frac{\mu}{i}\) such that 
    \[
    \Prob[s \in {\X[k]}]{\cProb{}{A}{s} > \prob{A} + \frac{10\varepsilon}{i}} \geq \pi(\Delta_{n},k,1,\frac{\mu}{i},\frac{10\varepsilon}{i}).
    \] 
    Let \(E = N_i(A)\) be the set of all \(i\)-faces that hit \(A\) and let \(\mu' = \prob{E}\). We claim that
    \[
    \mu = i\Pr[A] \geq \Pr[E] \geq \frac{i}{2}\Pr[A] = \frac{\mu}{2}
    \]
    The lower bound follows from the hitting property, namely:
    \[
    \prob{E} \geq 1-(1-\frac{\mu}{i})^{i}-\frac{\mu}{3} \geq \frac{\mu}{2} = \frac{i}{2}\Pr[A]
    \]
    since \(\mu < 0.2\).

    The upper bound follows because the only way to sample a vertex in \(A\) is to sample an \(i\)-face in \(E\), then subsample a vertex in \(A\) in that \(i\)-face. The probability of doing so (conditioned on landing in $E$) is at least \(\frac{1}{i}\).

    The proof will follow if we can show that
    \begin{equation} \label{eq:comparing-to-complete}
        \Prob[s \in {\X[k]}]{\cProb{r \in \X[i]}{E}{r \subseteq s} > \prob{E} + \varepsilon} \geq \pi(\Delta_{n},k,1,\frac{\mu}{i},\frac{10\varepsilon}{i})
    \end{equation}
    since Chernoff's bound is tight on the complete complex.
    \begin{lemma}[Reverse Chernoff, e.g.\ {\cite[Lemma 5.2]{KleinN1999}}]
        Let \(\varepsilon, \mu\) and \(k\) be such that \(\frac{\varepsilon'^2}{\mu'} k \geq 3\), and let \(n \geq ke^{k}\), then \(\pi(\Delta_n,k,1,\mu',\varepsilon') \geq \exp(-10\frac{\varepsilon'^2}{\mu'} k)\).\footnote{Technically the lemma in \cite{KleinN1999} is for sampling with replacement, instead of sampling without replacement (which is the complete complex case), but for  \(n \geq ke^{k}\) the two are close enough in \(TV\)-distance so we get essentially the same theorem.}
    \end{lemma}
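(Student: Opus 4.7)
The lemma is a reverse Chernoff--Hoeffding bound for sampling without replacement, and I would prove it via the standard ``single-term'' approach. By vertex-transitivity of the complete complex, the probability defining $\pi(\Delta_n, k, 1, \mu', \varepsilon')$ is the same for every set $A \subseteq [n]$ of density $\mu'$, so fix any such $A$ and let $X := |s \cap A|$, which is hypergeometric with parameters $(n, \mu' n, k)$. It suffices to lower-bound $\Pr[X \geq k(\mu' + \varepsilon')]$, and for this I would simply retain the single term $\Pr[X = j]$ with $j = \lceil k(\mu' + \varepsilon')\rceil$.

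The main computation is to estimate
\[
\Pr[X = j] = \frac{\binom{\mu' n}{j}\binom{(1-\mu')n}{k-j}}{\binom{n}{k}}
\]
via Stirling's formula. The hypothesis $n \geq ke^k$ is comfortably sufficient to ensure that each falling factorial appearing inside the binomials is well-approximated by the corresponding power: expanding each ratio like $\binom{\mu' n}{j}/\binom{n}{j}$ as a telescoping product yields
\[
\Pr[X = j] = \left(1 + e^{-\Omega(k)}\right) \cdot \binom{k}{j}\,\mu'^{j}(1-\mu')^{k-j},
\]
reducing the hypergeometric estimate to a binomial one. A standard Stirling-based lower bound on $\binom{k}{j}$ then gives $\binom{k}{j}\mu'^{j}(1-\mu')^{k-j} \geq \Omega(k^{-1/2}) \exp\bigl(-k \cdot D(j/k \,\|\, \mu')\bigr)$, where $D$ is the binary Kullback--Leibler divergence.

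To finish, I would control $D((\mu'+\varepsilon') \,\|\, \mu') \leq C \varepsilon'^{2}/\mu'$ for an absolute constant $C$ using routine Taylor estimates of KL divergence; the bound is cleanest when $\mu' + \varepsilon'$ is bounded away from $1$, with the complementary regime handled by applying the same argument to $[n]\setminus A$ (which swaps $\mu'$ with $1 - \mu'$ and flips the upper and lower tails). Combining everything gives $\Pr[X = j] \geq \Omega(k^{-1/2}) \exp\bigl(-C \varepsilon'^{2} k/\mu'\bigr)$, and the hypothesis $\varepsilon'^{2}k/\mu' \geq 3$ is used precisely to absorb the $k^{-1/2}$ prefactor into the exponent's constant, producing the stated bound $\exp(-10\varepsilon'^{2} k/\mu')$.

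The main obstacle is simply bookkeeping constants to land on the stated ``$10$''; obtaining \emph{some} bound of the form $\exp(-O(\varepsilon'^{2}k/\mu'))$ is essentially textbook, but matching a specific constant requires careful treatment of the Stirling remainder, the KL estimate, and the absorption of the $k^{-1/2}$ factor. A secondary technical point is that the hypergeometric-to-binomial reduction implicitly requires $\mu' n$ to be large relative to $k$ (otherwise no set of density exactly $\mu'$ even exists); in the relevant parameter regime of \pref{thm:optimal-sampling} this is comfortably supplied by $n \geq ke^k$ together with the implicit assumption that $\mu'$ is not smaller than $\Theta(1/n)$.
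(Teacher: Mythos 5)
The paper does not prove this lemma; it cites \cite[Lemma 5.2]{KleinN1999} (a reverse Chernoff tail bound for i.i.d.\ Bernoulli sampling) and transfers the result to the without-replacement setting via the footnote appealing to total-variation closeness of the hypergeometric and binomial distributions when $n \geq ke^k$. Your proposal is a direct proof of the hypergeometric version, which is a genuinely different and more self-contained route; it also sidesteps the fact that the TV transfer in the footnote needs the target probability to dominate the TV distance $\approx k/n$, which requires a separate regime check.

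However, your proposal has a gap as written. Lower bounding the tail by the single term $\Pr[X = j]$ loses a Stirling prefactor of order $1/\sqrt{k(\mu'+\varepsilon')(1-\mu'-\varepsilon')}$, and you propose to absorb this $\Theta(\sqrt{k})$-scale loss into the exponent using only $\varepsilon'^2 k/\mu' \geq 3$. But that hypothesis places no upper bound on $k$: writing $\eta = \varepsilon'^2 k/\mu'$, the absorption needs roughly $(10 - C_{\mathrm{KL}})\eta \geq \tfrac12\log k + O(1)$, which fails for $\eta$ near the threshold $3$ once $k$ is large enough (if astronomically so). The standard repair, and almost certainly what \cite{KleinN1999} actually do, is to sum the pmf over a window of $\Theta(\sqrt{k\mu'(1-\mu')})$ consecutive values just above $k(\mu'+\varepsilon')$; the window width cancels the Stirling prefactor and yields a uniform $\exp(-O(\varepsilon'^2 k/\mu'))$ bound with no logarithmic loss. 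Separately, your claim that the hypergeometric-to-binomial correction ratio is $1 + e^{-\Omega(k)}$ is too strong: the factor $(\mu'n)_j/(\mu'n)^j$ is close to $1$ only when $\mu'n \gg j^2$, which $n \geq ke^k$ alone does not force when $\mu'$ is as small as $O(k/n)$. This second issue does not actually break the argument, because in the degraded regime $\varepsilon'^2 k/\mu'$ is exponentially large and the target bound is met trivially, but the imprecision should be flagged rather than swept into the prefactor.
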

    Applying this lemma with \(\mu' \coloneqq \frac{\mu}{i}\) and \(\varepsilon' \coloneqq \frac{10\varepsilon}{i}\) and using the fact that \(\frac{\varepsilon'^2}{\mu'}k \geq 3\) by assumption, implies 
    \[
        \Prob[s \in {\X[k]}]{\cProb{r \in \X[i]}{E}{r \subseteq s} > \prob{E} + \varepsilon} \geq \pi(\Delta_{n},k,1,\frac{\mu}{i},\frac{10\varepsilon}{i}) \geq \exp(-O(\frac{\varepsilon^2}{\mu} \frac{k}{i})).
    \]

    Toward proving \eqref{eq:comparing-to-complete} observe that for a fixed $s \in \X[k]$, the probability of \(E\) inside $s$ depends only on the probability of \(A\) inside \(s\). Assume that $\Pr[A|s]-\Pr[A] > \frac{10\varepsilon}{i}$. Then
    \begin{align*}
        \cProb{r \in \X[i]}{E}{r \subseteq s} &> 1-(1-\cProb{}{A}{s})^{i}\\
        &\geq 1-(1-\Pr[A]-\frac{10\varepsilon}{i})^{i}\\
        &\geq 1-(1-\Pr[A])^{i} + 2\varepsilon\\
        &\geq \Pr[E] + \varepsilon
    \end{align*}
    where the first inequality is by observing the choice of \(r \subseteq s\) subsamples $i$ random vertices (without replacement). The probability of hitting $A$ in the latter is then at least \(1-(1-\cProb{}{A}{s})^{i}\), the hitting probability with replacement. The third inequality is true when \(i \geq 1\) and \(\varepsilon < 0.1\).
    
    Combining this with our original guarantee on sampling $A$ itself we have
    \begin{align*}
      \pi(X,k,i,\mu',\varepsilon) &\geq \Prob[s \in {\X[k]}]{\cProb{r \in \X[i]}{E}{r \subseteq s} > \prob{E} + \varepsilon} \\
      &\geq \Prob[s \in {\X[k]}]{\cProb{}{A}{v \in S} \geq \prob{A} + \frac{10\varepsilon}{i}} \\
      &\geq \pi(\Delta_{n},k,1,\frac{\mu}{i},\frac{10\varepsilon}{i})
    \end{align*}
    as desired.
\end{proof}
It remains to prove \pref{claim:ceg-for-small-sets}. The proof is basically repeating the lower bound in \cite{CanettiEG1995}. The latter show any degree-$k$ \((\varepsilon,\delta)\)-additive sampler satisfies \(\delta \geq \exp(-\Omega(\varepsilon^2 k))\). Similar to our strategy above, their proof can be interpreted as showing $(\SC[\Delta][k][n],\SC[\Delta][1][n])$ containment graph is the optimal sampler in terms of right-degree, then reducing the general case to $\Delta_n$. We give a refinement of their argument for sets of specific size, though our definition of samplers is less general than theirs.
\begin{proof}[Proof of \pref{claim:ceg-for-small-sets}]
    Let \(D_\mu\) be the uniform distribution over all sets \(A \subseteq L\) of probability \(\mu\). Then
    \[
    \pi(G,\varepsilon,\mu) \geq \Prob[r \in R,A \sim D_\mu]{\cProb{v \in L}{A}{v \sim r} > \mu + \varepsilon}.
    \]
    Observe that since we randomize over \(A\) and $G$ is right regular, the choice of \(r \in R\) doesn't affect the inner probability. As such, we may fix any \(r_0 \in R\) and instead consider
    \[
    \Prob[A \sim D_\mu]{\cProb{v \in L}{A}{v \sim r_0} > \mu + \varepsilon},
    \]
    that is the probability (over \(A\)) that \(A \cap r_0\) contains \(\geq (\mu + \varepsilon)k\) elements.

    We claim this is equal to \(\pi(\Delta_{n},k,1,\varepsilon,\mu)\). To see this, assume without loss of generality that \(L=\set{1,2,\dots,n}\) and \(r_0=\set{1,2,\dots,k}\). Since we have assumed the vertex weights are uniform, \(A\) is a uniform sample of \(\mu n\) points out of \(L\). It is convenient to think of $A$ as being sampled via the following procedure:
    \begin{enumerate}
        \item Draw a random permutation \(\sigma \in S_n\)
        \item Set $A= \sigma(A_0)$, where $A_0=\set{1,2,\dots,\mu n}$
    \end{enumerate}
    Observe \(|r_0 \cap A| \geq (\mu + \varepsilon)k\) if and only if \(|\sigma^{-1}(r_0) \cap A_0| \geq (\mu + \varepsilon)k\). But note that the random variable \(\sigma^{-1}(r_0)\) is exactly a random choice of a set \(\SC[\Delta][k][n]\). Therefore, this is equal to
    \[
    \Prob[s \in {\SC[\Delta][k][n]}]{\cProb{v \in [n]}{A_0}{v \in s} > \mu + \varepsilon}.
    \]
    Since every set of vertices of equal measure has the same sampling in $\Delta_n$ by symmetry, this is exactly \(\pi(\Delta_{n},k,1,\varepsilon,\mu)\) as desired.
\end{proof}

\subsubsection{Sampling implies hitting}
We briefly prove \pref{claim:sampling-implies-hitting}.
\begin{proof}[Proof of \pref{claim:sampling-implies-hitting}]
    Let \(A \subset \X[1]\) be a set of size \(\frac{\mu}{i}\) and \(E \subseteq \X[i]\) its set of neighbors. We need to show 
    \[
    \prob{E} \geq 1- (1-\prob{A})^{i} - \frac{\varepsilon}{3}.
    \]
     Let \(f:\X[i] \to [0,1]\) be \(f(s) = \frac{|s \cap A|}{i}\) and observe that \(\Ex[s \in {\X[i]}]{f(s)} = \frac{\mu}{i}\). Let $B$ be the set of bad $k$-faces that over-samples $f$:
     \[
     B = \sett{t \in \X[k]}{\Ex[s \subseteq t]{f(s)} \geq \frac{\mu}{i} + \frac{1}{10i}},
     \]
     then by our sampling guarantee \(\prob{B} < \frac{\varepsilon}{6i}\). 

     Fix \(t \notin B\). Then it holds that \(\cProb{v \in \X[1]}{A}{v \in  t} \leq \frac{\mu+1}{i} \leq \frac{1.1}{i}\) because \(\cProb{v \in \X[1]}{A}{v \in  t} = \Ex[s \subseteq t]{f(s)}\). Moreover, by the hitting properties of the complete complex, we have that \(\cProb{}{E}{t} \geq 1-(1-\cProb{}{A}{t})^{i}\) (using the fact that inside the complete complex, the sampling is negatively correlated). Thus 
    \begin{align*}
        \prob{E} &= \Ex[t]{\cProb{}{E}{t}} \\
        &\geq \prob{\neg B}\Ex[t \notin B]{\cProb{}{E}{t}} \\
        &\geq (1-\frac{\varepsilon}{6})(1-\Ex[t \notin B]{(1-\cProb{}{A}{t})^{i}}) \\
        &\geq (1-\frac{\varepsilon}{6})(1-\Ex[t \notin B]{(1-\cProb{}{A}{t} - \frac{\varepsilon}{6i})^{i}}) \\
        &\geq (1-\frac{\varepsilon}{6})(1-(1-\prob{A})^{i}-\frac{\varepsilon}{6})\\
        &= 1-(1-\prob{A})^{i} -\frac{\varepsilon}{3}.
    \end{align*}
\end{proof}

\subsubsection{Discussion}
We conclude the section with a few remarks.
\begin{remark}~
    \begin{enumerate}
        \item \pref{thm:optimal-sampling} doesn't follow from the tight bounds depending on degree or on \(\frac{|\X[k]|}{|\X[i]|}\) \cite{CanettiEG1995}. The degree of a $k$-face in \((\X[k],\X[i])\) is \(\binom{k}{i}\), and the degree of a set in \(\X[i]\) can be arbitrarily large, as in the complete complex case. The lower bounds in \cite{CanettiEG1995}, imply a lower of \(\exp(-O(\binom{k}{i}))\) instead of \(\exp(-O(\frac{k}{i}))\). In addition, the ratio \(\frac{|\X[k]|}{|\X[i]|}\), which is another source for lower bounds in \cite{CanettiEG1995}, can be arbitrarily large. That being said, we will \emph{reduce} to the bound in \cite{CanettiEG1995} (only for small sets).
        \item While most high dimensional expanders can be symmetrized to satisfy the assumed vertex uniformity \cite{FriedgutI2020}, it is worth noting the condition can likely be relaxed to a bound on \(\max_{s \in \X[k]}\Prob[v \in {\X[1]}]{v\in s}\) with some additional effort. Further, some assumption of this sort is necessary. To see this, take the complete \(k\)-partite complex \(X\) where \(X[i]=\set{x_i}\) are singletons for all \(i <k\), and \(X[k] = \set{y_1,y_2,\dots,y_n}\). That is \(\X[k] = \sett{\set{x_0,x_1,\dots,x_{k-1}} \cup \set{y_j}}{j=1,2,\dots,n}\) are all possible faces between the different parts. For any \(f:\X[1]\to [0,1]\) we note that \(\ex{f} = \frac{\sum_{i=1}^{k-1}f(x_0)}{k} + \frac{1}{k}\Ex[{y_i \in \X[k]}]{f(y_i)}\). Thus for every \(s\), \(\abs{\Ex[v \subseteq s]{f(v)} - \ex{f}} \leq \frac{1}{k}\). In particular, for every \(\varepsilon > \frac{1}{k}\), the inclusion graph is an \((\varepsilon,0)\)-sampler.
        \item We note that a similar bound can be achieved by essentially the same proof assuming only \(\frac{\mu}{3}\)-hitting, albeit at the cost of worse dependence on $\varepsilon$. In particular, assuming only that $X$ is $\Theta(1)$-hitting, one can still prove a lower bound against $(\varepsilon,\exp(-O(\varepsilon\frac{k}{i})))$-sampling.
    \end{enumerate}
\end{remark}

\subsection{High Dimensional Expansion}\label{sec:td-lower}
Having shown that the concentration bounds of \pref{thm:hdx-is-sampler} cannot be quantitatively improved, we turn to understanding the extent to which our requirements on the \textit{expansion} of the underlying complex are necessary for strong concentration. We give two lower bounds to this effect. First, we argue that concentration is not implied by local-spectral expansion at or below the TD-barrier.
\begin{proposition}[Lower Bounds at the TD-Barrier]\label{prop:TD-barrier}
    For every $\beta<1$ and $d \in \mathbb{N}$, there exists a family of $1$-TD complexes $\{X_n\}$ such that $(\SC[X][d][n],\SC[X][1][n])$ is not a $(\frac{1}{2},\beta)$-additive sampler.
\end{proposition}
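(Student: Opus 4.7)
The plan is to exhibit, for each $\beta<1$ and $d$, an explicit family $\{X_n\}$ of $d$-\maximal $1$-TD complexes together with a ``bad'' subset $A_n \subset \SC[X][1][n]$ violating the sampler condition. The conceptual point is that Oppenheim's Trickling-Down Theorem (\pref{thm:td}) becomes vacuous exactly at $\lambda = \frac{1}{d-1}$: the denominator $1-(d-2)\lambda$ tends to zero, so the global spectral gap is unconstrained by local expansion. This leaves room for complexes whose codimension-$2$ links are perfectly $\frac{1}{d-1}$-expanding while their global structure is essentially disconnected.

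Concretely, I would use a ``near-disjoint union'' construction. Take two disjoint copies $\Delta^1_n,\Delta^2_n$ of the complete $d$-\maximal complex on $n$ vertices each, and glue them with a small, symmetric family of bridge top faces $\mathcal{B}_n$ supported on fixed ``bridge base'' subsets $B^1 \subseteq V(\Delta^1_n)$ and $B^2 \subseteq V(\Delta^2_n)$ of size $\ell=\ell(d,\beta)$ independent of $n$. Setting $X_n(d) = \Delta^1_n(d) \cup \Delta^2_n(d) \cup \mathcal{B}_n$ and closing downward, the number of bridge faces is $O(n^{d-1})=o\!\left(\binom{n}{d}\right)$. Taking $A_n = V(\Delta^1_n)$ (slightly perturbed so $\mu(A_n)\ne \tfrac12$), every top face in $\Delta^1_n(d) \cup \Delta^2_n(d)$ is monochromatic with respect to $A_n$, so $|U_{1,d}\one_{A_n}(s)-\mu(A_n)| > \tfrac12$ for all but an $o(1)$-fraction of top faces. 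Thus for $n$ sufficiently large the inclusion graph is not a $(\tfrac12,\beta)$-additive sampler.

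The main obstacle is verifying that \emph{every} codimension-$2$ link of $X_n$ is a $\frac{1}{d-1}$-one-sided spectral expander. Links entirely inside one $\Delta^i_n$ are complete graphs on $\Theta(n)$ vertices and hence trivially $0$-expanders, but links incident to bridge vertices are complete graphs with a small number of additional ``cross'' edges to the bridge base of the opposite side. A naive bridge produces degree-$1$ pendant vertices in some link, which forces $\lambda_2 \to 1$ (pendants ruin the spectral gap of near-complete graphs). To avoid this, I would take $\mathcal{B}_n$ to be the complete ``$(d-1,1)$-join'' on $B^1 \sqcup B^2$: every bridge face consists of $d-1$ vertices of one bridge base plus a single vertex of the other. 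With this symmetry, every codimension-$2$ link is either a complete graph or a complete graph on $\Theta(n)$ vertices augmented with a $K_{m,\ell}$-type perturbation of bounded total weight. A direct interlacing/weighted expander-mixing argument then shows $\lambda_2 = O(\ell/n) \leq \frac{1}{d-1}$ for $n$ large in terms of $d$ and $\ell$. Once local expansion is verified, the sampling lower bound follows by the counting argument above, completing the proposition.
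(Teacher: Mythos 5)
Your high-level strategy matches the paper's: both exploit that trickling-down gives no information at $\lambda=1$ by building a complex whose top faces split into two nearly disjoint halves, and taking $A$ to be (essentially) one of the two halves. The execution differs, though, and not in your favor. The paper simply feeds the graph ``two cliques plus one bridge edge'' into Golowich's product-complex machinery, which is already known to produce a connected complex with exactly $\frac{1}{d}$-expansion in every codimension-$2$ link; the entire verification of local expansion is outsourced to that cited result. You instead construct a complex by hand and must verify local expansion yourself, and it is precisely that verification step that your proposal does not actually carry through. The claim that ``a direct interlacing/weighted expander-mixing argument then shows $\lambda_2 = O(\ell/n)$'' has two concrete problems. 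First, eigenvalue interlacing controls eigenvalues of principal submatrices under vertex deletion; it does not directly bound the spectrum of a normalized adjacency matrix after attaching new edges, particularly when the attached structure contains vertices (those of $B^2$) whose \emph{entire} neighborhood sits inside the perturbation, making the graph highly degree-irregular. Second, the claimed order of magnitude is wrong: a direct computation in the simplest non-trivial case ($K_m$ with one extra vertex attached to a single designated vertex, i.e.\ $\ell'=\ell''=1$) gives $\lambda_2 = \Theta(1/\sqrt{m})$, not $O(1/m)$. The qualitative conclusion ($\lambda_2 \to 0$ as $n \to \infty$) is still correct, so your construction can probably be salvaged, but what you wrote papers over the only technically non-trivial part of the argument.

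Two further points. (i) You omitted the ``cross'' codimension-$2$ faces $s$ with one vertex in $B^2$ and the rest in $B^1$: for $d\geq 4$ the link $X_s$ is a complete graph on $\ell-(d-3)$ vertices of constant size, whose one-sided expansion is $\frac{1}{\ell-d+2}$. This already forces $\ell \geq 2d-3$ and requires a separate check; this is the kind of case bookkeeping the paper's citation to Golowich sidesteps entirely. (ii) On the $\mu(A)\ne\frac12$ perturbation: with $\mu(A)=\frac12+\delta$, only the top faces on one side satisfy $|U_{1,d}\mathbf{1}_A-\mu|>\frac12$, so you only rule out $\beta<\frac12$, not all $\beta<1$. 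What you actually want is to leave $\mu(A)=\frac12$ and note that monochromatic faces have $|U_{1,d}\mathbf{1}_A-\mu|=\frac12$, which violates the sampler condition for any $\varepsilon<\frac12$; the proposition's $\varepsilon=\frac12$ with strict inequality is a minor imprecision present in the paper's own proof as well, and your perturbation does not fix it.
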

Thus in this sense our results giving concentration bounds for $\lambda$-TD complexes for any $\lambda<1$ are tight. Note that by \pref{lem:raising}, the above also implies failure of sampling for these complexes for any (large enough) $k \leq d$ (this can also be shown directly by the same method). 

The proof of \pref{prop:TD-barrier} is based on the `product' complexes of Golowich \cite{Golowich2021}, whose construction we quickly describe. Given a weighted graph $G$ defines the product-complex $X_G$ as
\[
X_G(d) = \left\{ \{(v_1,s_1),\ldots,(v_d,s_d)\}: \{v_i\}_{i \in [d]} \in E, \{s_i\}_{i \in [d]} \in \binom{[n]}{d}\right\}.
\]
The measure of each face $\sigma = \{(v_1,s_1),\ldots,(v_d,s_d)\}$ corresponding to an edge $\{s,t\}$ is proportional to $w(s,t)\cdot f(j)$, where $f$ is some weighting function dependent only on the number $v_i=s$.

\begin{proof}[Proof of \pref{prop:TD-barrier}]
    Let $G = K^{(1)}_{n/2} \amalg K^{(1)}_{n/2} \amalg \{e\}$, that is two disjoint copies of $K_{n/2}$ with an additional edge $e$ passing between them, and take the weights of every edge to be uniform. \cite{Golowich2021} shows that there exists a weighting function $f$ such that $X_G$ is connected and has expansion $\frac{1}{d}$ in every $(d-2)$-link. By construction, the 1-skeleton of $X$ is a $(d+1)$-cover of $G$ with edges $(\{v_0,s_0\},\{v_1,s_1\})$ whenever $\{v_0,v_1\}$ is an edge in $E$, and $s_0 \neq s_1$. With this in mind, define the set $A$ to be the indicator of the first clique in each part:
    \[
    A \coloneqq \{ (v,s): v \in K^{(1)}_{n/2} \}
    \]
    $A$ clearly has measure $\frac{1}{2}$ by construction, while on the other hand any $\sigma \in \X[d]$ corresponding to an edge $e' \neq \{e\}$ is either entirely contained in or entirely misses $A$. Since all edges are evenly weighted, the probability that a $d$-face corresponds to $\{e\}$ is at most $O(n^{-2})$, which gives the desired bound for large enough $n$.
\end{proof}
The above proof extends immediately to a lower bound on sampling of $X_G$ under any weight function and any graph $G$ simply by taking $A$ to be any (nearly) balanced subset in $G$ with expansion (near) $\frac{1}{2}$. It is an easy exercise to show that choosing $A$ randomly suffices (and namely that such a set always exists).
\medskip

\indent Our second lower bound studies a slightly different regime. $1$-TD complexes typically have only constant local-spectral expansion at low level links, so \pref{prop:TD-barrier} does not, for instance, rule out showing $o_d(1)$-local-spectral HDX satisfy strong concentration. Using similar ideas to our degree lower bounds, we show at least some inverse polynomial local-spectral expansion is required to have strong concentration.
\begin{proposition}\label{prop:poly-lower-bound}
    For any $c < \frac{1}{2}$ and large enough $d \in \mathbb{N}$, there exists a family $\{X_n\}$ of $d^{-c}$-two-sided local spectral expanders such that for any $\beta<1$, $(\SC[X][d][n],\SC[X][1][n])$ is not a $(d^{-c},\beta)$-additive sampler.
\end{proposition}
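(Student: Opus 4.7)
The plan is to construct $X_n$ explicitly as a weighted two-cluster complex, in the spirit of \pref{prop:TD-barrier} but calibrated so the link spectra stay at the stronger level $d^{-c}$. Partition $[2n] = V_1 \dunion V_2$ with $|V_1|=|V_2|=n$, and let the top faces of $X_n$ be the union of
\begin{itemize}
\item \emph{internal} faces $s \in \binom{V_1}{d}\cup\binom{V_2}{d}$, each of weight $w_I$, and
\item \emph{bridge} faces $s$ with exactly one vertex in one cluster and $d-1$ in the other, each of weight $w_B$,
\end{itemize}
with the ratio fixed at $w_I/w_B := d^c$. Taking $A := V_1$ gives an immediate sampling counterexample: by symmetry $\mu(A) = 1/2$, while every top face $s$ has $|s\cap A|/d \in \{0,1/d,(d-1)/d,1\}$, so the additive deviation $\bigl||s\cap A|/d - 1/2\bigr|$ is always at least $1/2-1/d$, which exceeds $d^{-c}$ for all sufficiently large $d$. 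The mis-sampling probability is therefore $1$ and the inclusion graph fails to be a $(d^{-c}, \beta)$-sampler for any $\beta<1$.

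For the two-sided spectral bound, I would analyze each link $X_t$ with $|t|\le d-2$. The face-restriction forces the link's base graph into one of two simple shapes. When $t$ lies entirely in one cluster (i.e.\ $|t\cap V_1|\in\{0,d-2\}$), the base graph is a weighted clique on $V_i\setminus t$ with edge weight proportional to $w_I$, joined by a complete bipartite graph to $V_{3-i}$ with edge weight proportional to $w_B$, and \emph{no} edges on the opposite cluster. Decomposing into ``zero-mean on each side'' and ``block-constants'' subspaces identifies all non-trivial normalized eigenvalues: the former contributes values in $\{-w_I/D_{V_i},0\}$, of magnitude $O(1/n)$; the block-constants quotient yields $\mu_2 = -w_B/(w_I+w_B) = -1/(d^c+1)$. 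All non-trivial eigenvalues thus lie in $[-d^{-c},0]$, establishing the two-sided bound at this level. Intermediate values of $|t\cap V_1|$ yield even simpler base graphs---weighted cliques on a single cluster, since the face-restriction forbids the relevant bridges---which are trivially $(1/n)$-two-sided expanders.

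The main obstacle is extending this spectral analysis uniformly to links at every co-dimension $k \geq 2$, not only at the top level. For $|t|=d-k$, the link $X_t$ is a $k$-\maximal complex whose own top-face set is determined by intersecting the restriction $|s\cap V_1|\in\{0,1,d-1,d\}$ with faces containing $t$, and its underlying graph must inherit the same ``clique plus bipartite'' block structure (with appropriately re-weighted edges) for the eigenvalue calculation above to apply. Verifying this requires showing that the induced edge-weight ratio in each link is still $\Theta(d^c)$, which follows from the recursive structure of the face weights but demands careful combinatorial bookkeeping across all levels $k=2,\dots,d-1$; this is the technical heart of the proof.
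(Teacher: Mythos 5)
Your proposal has a fatal gap in the spectral verification: the construction cannot be a $d^{-c}$-two-sided local-spectral expander because the base graph of $X$ itself (the link of the empty face, which is also required to expand) has a nontrivial eigenvalue near $1-O(1/d)$, far above $d^{-c}$. Concretely, count the edge weights in $X$'s $1$-skeleton: a fixed within-cluster pair $\{u,w\}\subset V_1$ lies in $\binom{n-2}{d-2}$ internal faces (weight $w_I$) plus $n\binom{n-2}{d-3}$ bridge faces (weight $w_B$), whereas a cross pair $\{u,w\}$, $u\in V_1,\,w\in V_2$ lies in only $2\binom{n-1}{d-2}$ bridge faces. For $n\gg d$, the resulting per-edge-weight ratio (within vs.\ between) is
\[
\frac{w_I\binom{n-2}{d-2}+w_B\,n\binom{n-2}{d-3}}{2w_B\binom{n-1}{d-2}} \approx \frac{w_I/d + w_B}{2w_B/d} = \frac{w_I}{2w_B}+\frac{d}{2} \;\geq\; \frac{d}{2},
\]
\emph{regardless} of how you set $w_I/w_B$. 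This is structural: every top face contributes $\Theta(d^2)$ within-cluster pairs but only $O(d)$ cross-cluster pairs. The bisection eigenvector $\mathbf 1_{V_1}-\mathbf 1_{V_2}$ (which is orthogonal to constants by symmetry) then has eigenvalue $\approx \frac{r-1}{r+1}\approx 1-\frac{2}{d}$ for ratio $r\geq d/2$. So raising $w_I/w_B$ to fix the co-dimension-$2$ links only worsens this, and lowering it pushes the co-dimension-$2$ eigenvalue toward $-1$; there is no choice of $w_I/w_B$ that makes all links $d^{-c}$-expanders. You verified the bound only at co-dimension $2$, which is what a $\lambda$-TD complex requires — but \pref{def:HDX} requires every link of co-dimension $\geq 2$, including the full base graph, and your two-cluster structure inherently fails there.

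The paper sidesteps this entirely by not constructing the HDX: it cites \cite{golowich2023grassmannian} for an existing family of $d^{-c}$-two-sided HDX whose $1$-skeletons are Cayley graphs over $\mathbb{F}_2^n$ with expansion exactly $\Theta(d^{-c})$. A Cayley graph always has a $\{\pm 1\}$-valued eigenvector (a character), and the set $A=\{v:g_0(v)=1\}$ with $\mu(A)=1/2$ then satisfies $\Pr_{u\sim v}[A]-\mu(A)=\pm\lambda/2$ at \emph{every} vertex, so the $1$-skeleton is not a $(\lambda/2,\beta)$-sampler for any $\beta<1$. The key remaining step is \pref{lem:link-sampler} (the localization lemma), which shows that if both $(\X[d],\X[1])$ and every vertex link's inclusion graph were good samplers, then the $1$-skeleton would be too; contrapositively, the failure on the $1$-skeleton lifts to a failure of inclusion sampling somewhere in the family. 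This route is more robust precisely because it never tries to hand-build a complex that expands at every level — that's the hard part, and your construction illustrates why.
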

The proof of \pref{prop:poly-lower-bound} relies on another construction of Golowich \cite{golowich2023grassmannian} building inverse polynomial HDX from Cayley complexes.
\begin{theorem}[{\cite{golowich2023grassmannian}}]
    For every $c<\frac{1}{2}$ and large enough $d \in \mathbb{N}$, there exists a family of \(d\)-\maximalpunc, $2^{\Omega_d(\sqrt{\log(n)})}$-degree, \(\frac{1}{d^c}\)-two-sided local spectral expanders $\{X_n\}$ whose $1$-skeletons are Cayley graphs over $\mathbb{F}_2^n$ with two-sided expansion \(\lambda = \Theta(\frac{1}{d^c})\).
\end{theorem}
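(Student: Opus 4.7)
The plan is to follow Golowich's Grassmannian complex framework in \cite{golowich2023grassmannian}, whose key idea is to realize the complex as a Cayley-type object over $\mathbb{F}_2^n$ so that both the 1-skeleton and the links inherit expansion from a single underlying $\varepsilon$-biased set. First I would fix an explicit $\varepsilon$-biased multiset $S \subset \mathbb{F}_2^n$ with $\varepsilon = \Theta(d^{-c})$, whose Cayley graph $\mathrm{Cay}(\mathbb{F}_2^n, S)$ is automatically a $\Theta(\varepsilon)$-two-sided expander by Fourier analysis on $\mathbb{F}_2^n$. This Cayley graph will be the 1-skeleton $X_n^{(1)}$.

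Next I would build the higher-dimensional faces as structured configurations inside $\mathbb{F}_2^n$ (e.g.\ tuples $\{v, v+s_1, v+s_2, \ldots, v+s_{d-1}\}$ where the $s_i \in S$ satisfy Grassmannian-style incidence constraints, so that the face set is invariant under translations by $\mathbb{F}_2^n$). The translation invariance guarantees that the 1-skeleton remains exactly $\mathrm{Cay}(\mathbb{F}_2^n, S)$, and it ensures that for every face $\tau$ of codimension $\geq 2$, the link $X_\tau$ is again a Cayley-type complex on a lower-rank space with a generating set inherited from $S$. The link expansion then reduces to a bias computation for the restricted generating set, yielding a $\lambda/d^c$-two-sided local-spectral expander for an appropriate normalization.

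For the degree bound, the degree of $\mathrm{Cay}(\mathbb{F}_2^n, S)$ equals $|S|$, so one needs an explicit $\varepsilon$-biased set of size $2^{O_d(\sqrt{\log n})}$ for $\varepsilon = \Theta(d^{-c})$ with $c<1/2$. This matches the Ta-Shma / small-bias regime achievable by iterated tensor or Rozenman--Wigderson style constructions; the constraint $c<1/2$ is precisely what lets these sparse constructions go through at the required bias.

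The main obstacle will be the joint verification that (i) the incidence structure defining the $d$-faces is rich enough that the links are connected and non-degenerate, (ii) each link's bias does not degrade by more than a $\mathrm{poly}(d)$ factor under restriction, and (iii) the two-sided, rather than merely one-sided, spectral bound survives all the way down to codimension-2 links. Items (ii) and (iii) are the technical heart: they require careful character-sum estimates on $\mathbb{F}_2^n$ that track how the Fourier spectrum of $1_S$ behaves under the projections and cosets induced by fixing faces, and this is where Golowich's choice of a Grassmannian-style face structure (rather than an arbitrary translation-invariant family) is essential.
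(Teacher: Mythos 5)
This theorem is a citation of Golowich's construction, not something the paper re-proves, so there is no ``paper's own proof'' to compare against. What I can assess is whether your sketch is a plausible reconstruction, and here a few things don't hold up.

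The highest-level outline --- put a translation-invariant face structure over $\mathbb{F}_2^n$, get the Cayley $1$-skeleton for free, and reduce link expansion to Fourier/bias estimates --- is the right shape. But your explanation of the degree bound is incorrect. You attribute the $2^{\Omega_d(\sqrt{\log n})}$ degree to ``an explicit $\varepsilon$-biased set of size $2^{O_d(\sqrt{\log n})}$ for $\varepsilon = \Theta(d^{-c})$.'' For fixed $d$ this $\varepsilon$ is a constant, and constant-bias sets over an $m$-dimensional $\mathbb{F}_2$-space (here $m \approx \log n$) have size $\Theta(m) = \Theta(\log n)$, not $2^{\sqrt{\log n}}$; neither Ta-Shma nor Rozenman--Wigderson changes that, since the $\Omega(m/(\varepsilon^2 \log(1/\varepsilon)))$ lower bound already forces a roughly linear-in-$m$ set. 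So the Cayley $1$-skeleton degree cannot equal $2^{\sqrt{\log n}}$ if the generating set is simply a small-bias set of your form. Also note the theorem states this as a \emph{lower} bound ($\Omega$, not $O$): it is telling you that the construction is \emph{not} bounded-degree, i.e.\ its top-level degree $\deg^{(d)}$ is necessarily at least this large. That quantity counts $d$-faces through a vertex and is governed by the combinatorics of the higher-dimensional face structure, which you've left unspecified (``Grassmannian-style incidence constraints''). This is precisely where the content of Golowich's construction lives.

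Relatedly, your explanation of the $c < 1/2$ restriction --- that it is ``precisely what lets these sparse constructions go through at the required bias'' --- doesn't stand up for the same reason: for any constant $c$, $\varepsilon = d^{-c}$ is constant, so small-bias set constructions impose no threshold at $c = 1/2$. The real source of the $c < 1/2$ barrier must be in how expansion degrades under restriction to links (the Fourier analysis on cosets that you flagged as item (ii)/(iii)), not in the availability of explicit small-bias sets. Your sketch correctly identifies where the technical difficulty is, but the parameter accounting you offer to explain both the degree and the $c<1/2$ constraint is not right and would not recover the stated bounds.
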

We can now use \pref{lem:link-sampler} and the Cayley structure of $X$'s 1-skeleton to prove \pref{prop:poly-lower-bound}.
\begin{proof}[Proof of \pref{prop:poly-lower-bound}]
Since the $1$-skeleton of $X$ is a Cayley graph over \(\mathbb{F}_2^n\), there is an eigenvector \(g_0:\mathbb{F}_2^n \to \set{\pm 1}\) whose eigenvalue is \(\lambda\). Let us consider the indicator \(g\) of the set \(A = \sett{v \in \mathbb{F}_2^n}{g_0(v) = 1}\) whose measure is \(\prob{A} = \frac{1}{2}\). This indicator is equal to \(g=\frac{1}{2}+\frac{1}{2}g_0\). We note that for every \(v\),
\[
\Prob[u \sim v]{A} = \Ex[u \sim v]{g(u)} = \frac{1}{2} +\frac{1}{2}\Ex[u \sim v]{g_0(u)}.
\]
The function \(g_0\) is an eigenvalue so \(\frac{1}{2}\Ex[u \sim v]{g_0(u)} = \frac{\lambda}{2} g_0(v) = \pm \frac{\lambda}{2}\). Thus, \(\Prob[u \sim v]{A} - \prob{A} = \pm \frac{\lambda}{2}\) so for every \(\varepsilon \leq \frac{\lambda}{2}\) and any \(\beta < 1\), the underlying graph \emph{is not} a \((\varepsilon,\beta)\)-sampler. By \pref{lem:link-sampler}, this implies that either the containment graph is not a \((2\varepsilon,\beta)\)-sampler, or the containment graph of some link of a vertex is not a \((2\varepsilon,\frac{1}{2})\)-sampler. Since the vertex links of $\{X_n\}$ also give an infinite family of \(\frac{1}{d^c}\)-two-sided local spectral expanders this implies the result.
\end{proof}
This counterexample poses the following question: For  \(\varepsilon\in [d^{-c},d^{-c/2}]\) what is the worst \(\lambda\) which still promises that the containment graph is a \((\varepsilon,\exp(-d^c))\)-sampler for some \(c>0\)?

\subsection{Expander-Walks}\label{sec:expander-lower}
Finally we take a step back from concentration itself and look at necessity of \textit{local} concentration for our applications (that is concentration of the links of $X$ and in particular the locally nice property). We argue this is at least in some sense necessary: classical samplers such as expander walks which behave poorly under restriction fail reverse hypercontractivity. Recall the expander walk complexes \(W_G\) defined in \pref{sec:splitting}.

\begin{proposition}\label{prop:splitting-lower}
    For any $0 < \gamma < 1$ and $\lambda >0$, there are infinitely many pairs $k,n \in \mathbb{N}$ with a corresponding $\lambda$-expanders $G=([n],E)$ and balanced subset $A \subset W_G(k)$ satisfying:
    \[
    \Prob[t,t' \sim UD_{k,\gamma k}]{t' \in \X[1] \setminus A \ve t \in A} \leq 2^{-\Omega_\gamma(k)}
    \]
\end{proposition}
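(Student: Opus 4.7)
The plan is to exhibit an explicit graph-theoretic obstruction coming from bipartiteness. For any $\lambda>0$ and $\gamma\in(0,1)$, I would take $G=(L\sqcup R,E)$ to be a balanced bipartite one-sided $\lambda$-spectral expander with $|L|=|R|=n/2$; for instance, $K_{n/2,n/2}$ is a one-sided $0$-expander for every even $n\ge 4$, and bipartite Ramanujan graphs provide sparser examples at any prescribed $\lambda$. Because $G$ is regular, the stationary distribution on $V(G)$ is uniform, so the set
\[
A=\{(v_1,\dots,v_k)\in W_G(k):v_1\in L\}
\]
has measure exactly $1/2$ under the weighting on $W_G(k)$, giving the required balanced subset.

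The next step is the standard parity observation: since $G$ is bipartite, any walk $(v_1,\dots,v_k)$ satisfies $v_1\in L$ if and only if ($v_i\in L$ and $i$ is odd) or ($v_i\in R$ and $i$ is even). In particular, the side of the very first vertex is determined by any single pair $(i,v_i)$ along the walk. I would then note that for $k\ge\lceil 1/\gamma\rceil$ we have $\lfloor\gamma k\rfloor\ge 1$, so every neighboring pair $(t,t')$ under $UD_{k,\lfloor\gamma k\rfloor}$ shares a common sub-face $s$ with $|s|\ge 1$, which pins at least one coordinate $(i^{*},v_{i^{*}})$. By the parity invariant, $t$ and $t'$ must then agree on whether $v_1\in L$, so either both lie in $A$ or both lie in its complement. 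Consequently
\[
\Pr_{t,t'\sim UD_{k,\lfloor\gamma k\rfloor}}\!\left[t\in A,\ t'\in W_G(k)\setminus A\right]=0\le 2^{-\Omega_\gamma(k)},
\]
and letting $k\to\infty$ produces the desired infinite family of pairs $(k,n)$.

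There is essentially no real obstacle in the argument: once one recognizes that any graph homomorphism $G\to H$ (here $G\to K_2$) produces an invariant of walks preserved by every down-up walk that retains at least one coordinate, the counterexample writes itself. The conceptual content is that expander walks, unlike locally nice complexes, can carry such global combinatorial invariants, and this is a fundamental obstruction to reverse hypercontractivity; the same template in fact shows that any expander admitting a non-trivial homomorphism to a fixed small graph will fail the inequality in the same way.
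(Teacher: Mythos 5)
Your parity argument is elegant and in fact gives a numerically stronger conclusion (the crossing probability is exactly $0$, not merely $2^{-\Omega_\gamma(k)}$). However, there is a real concern with the hypothesis: bipartite graphs have $-1$ in their spectrum, so any bipartite $G$ — whether $K_{n/2,n/2}$ or a bipartite Ramanujan graph — is at best a one-sided $\lambda$-expander and is never a two-sided $\lambda$-expander for $\lambda<1$. The paper's own proof instantiates the proposition on the (non-bipartite) Ramanujan graphs of Lubotzky--Phillips--Sarnak, which are genuine two-sided expanders, and the phrase ``$\lambda$-expanders'' in this part of the paper is being used in the unqualified (two-sided) sense standard for expander-walk samplers. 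If the two-sided reading is intended, your construction simply does not satisfy the hypothesis, and the cleanness of the bound $0$ is bought by dropping exactly the spectral condition the statement is supposed to respect.

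The paper instead uses \emph{high girth} rather than bipartiteness as the source of rigidity: with $G$ regular of girth $>k$, the set $A$ is chosen to be walks whose \emph{center} vertex lies in a fixed half of $[n]$. If the down-step retains any two positions $i<\lceil k/2\rceil< j$, the locally tree-like structure pins the center, so the walk can only cross between $A$ and its complement when the resampled set wipes out an entire half of the indices — an event of probability $2^{-\Omega_\gamma(k)}$. Conceptually the two arguments are cousins: your parity class is a walk invariant preserved whenever a single coordinate is retained, whereas the paper's ``center determined by straddling coordinates'' is a walk invariant preserved whenever two coordinates bracket the middle. Your invariant is stronger (hence the bound $0$) but is available only for bipartite graphs, which is incompatible with two-sided expansion; the paper's invariant is weaker (hence only $2^{-\Omega(k)}$) but is available for non-bipartite, genuinely two-sided expanders. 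If you want to salvage your approach under the two-sided reading, you would need to identify a walk invariant that survives in a non-bipartite graph, which is essentially what the girth argument provides.
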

\begin{proof}
    Take $k$ odd and $n$ even. It suffices to take any regular graph $G$ with girth$(G)>k$. Define $A$ to be the set of walks whose center vertex is in $\set{1,2,\dots,\frac{n}{2}}$. Notice that $A$ is indeed a balanced function, as every element is the center of the same number of random walks in a regular graph (see e.g.\ \cite{AlevJQST2020} for details).

    On the other hand, it is easy to see $A$ has very poor expansion due to the girth of $G$. In particular, the only way to leave $A$ (resp. $\X[1] \setminus A$) is if the walk re-samples all of $[k/2+1]$, or all of $[k] \setminus [k/2]$. If neither event occurs, there exist indices $i < \lceil\frac{k}{2}\rceil < j$ which were not re-sampled, and the girth of the graph ensures there is only one option for the $\lceil\frac{k}{2}\rceil$th element (namely its starting position). One can compute directly the probability either event occurs is at most $2^{-\Omega_\gamma(k)}$.

    To complete the proof, we need an infinite family of regular $\lambda$-expanders for any $\lambda>0$ with super-constant girth. Many such constructions exist in the literature, including the classical Ramanujan graphs of \cite{lubotzky1988ramanujan}.
\end{proof}
Note that for any fixed $\gamma,C,\ell>0$, taking $k$ in \pref{prop:splitting-lower} sufficiently large means we can always find subsets $A,B \subset \X[k]$ such that
\[
\Pr_{t,t' \sim UD_{k,\gamma k}}[t \in A \land t' \in B] \leq C\Pr[A]^\ell\Pr[B]^\ell,
\]
violating reverse hypercontractivity for all parameter settings. We remark that the failure of expander walks in agreement testing is well known, and can be derived e.g.\ from \cite{goldenberg2019toward}.

\section*{Acknowledgements}
We thank Irit Dinur, Tali Kaufman, and Shachar Lovett for many helpful discussions on high dimensional expanders, sampling, and agreement testing. We thank Swastik Kopparty and Madhur Tulsiani for helpful discussion of list-decoding and amplification techniques for locally testable codes. We thank Vedat Alev for helpful discussions on concentration of measure and for pointing out an error in the proof of the partite case of \pref{thm:hdx-is-sampler} in a preliminary version of this manuscript. We thank Noam Lifshitz for pointers to work applying reverse hypercontractivity in extremal combinatorics. We thank Alex Lubotzky for pointing out the connection between high dimensional expanders and geometric overlap. Finally the authors thank the Simons Institute for the Theory of Computing for graciously hosting them for part of the duration of this work, and MH would further like to thank Irit Dinur and the Weizmann Institute for the same.

\printbibliography

\appendix
\section{Concentration for the Complete Complex}\label{app:complete}
In this section we show near-optimal concentration bounds for the complete complex.
\begin{claim} \label{claim:complete-complex-multiplicative-sampler}
    Let \(X\) be the complete complex on \(n\) vertices, that is \(\X[k] = \binom{[n]}{k}\). Let \(\alpha, \delta > 0\). Then for any $\ell \leq k$, the containment graph \(G = (\X[k],\X[\ell])\) is an \((\alpha,\frac{4}{\alpha \delta} \exp(-\frac{\delta^2}{12} \alpha \lfloor \frac{k}{\ell} \rfloor),\delta)\)-multiplicative sampler and \((\varepsilon,\frac{2}{ \varepsilon} \exp(-\frac{\varepsilon^2}{8} \lfloor \frac{k}{\ell} \rfloor))\)-additive sampler for any $\varepsilon>0$.
\end{claim}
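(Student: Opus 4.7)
The strategy is a specialization of the partition-and-decorrelate argument already used in \pref{prop:faces-to-inclusion}, exploiting the fact that in the complete complex, the marginals generated by partitioning a random $k$-set into disjoint $\ell$-subsets are negatively associated, which is strong enough to give Chernoff/Hoeffding concentration without requiring spectral input.

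Concretely, set $m=\lfloor k/\ell \rfloor$ and define the distribution $D$ on $\X[\ell]^{m} \times \X[k]$ by first drawing an ordered $k$-tuple $(v_1,\ldots,v_k)$ uniformly without replacement from $[n]$, letting $s=\{v_1,\ldots,v_k\}$, and partitioning $s$ into $s_i = \{v_{(i-1)\ell+1},\ldots,v_{i\ell}\}$ for $i \in [m]$ (discarding the remaining $k-m\ell$ vertices). By symmetry of the complete complex, each pair $(s_i,s)$ is marginally distributed as an edge of the inclusion graph $(\X[k],\X[\ell])$, so $D$ satisfies the hypothesis of \pref{lem:weird-distribution-to-uniform}.

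The remaining ingredient is to show that for any $f:\X[\ell]\to[0,1]$ with mean $\mu$,
\[
\Pr_{D}\!\left[\left|\tfrac{1}{m}\sum_{i=1}^m f(s_i) - \mu\right| > \varepsilon\right] \leq 2\exp(-2\varepsilon^2 m),
\]
and the analogous multiplicative Chernoff bound
\[
\Pr_{D}\!\left[\left|\tfrac{1}{m}\sum_{i=1}^m f(s_i) - \mu\right| > \delta\mu\right] \leq 2\exp\!\left(-\tfrac{\delta^2 \mu m}{3}\right).
\]
This is where the main (mild) technical point sits. The variables $f(s_1),\ldots,f(s_m)$ are not independent, but they are \emph{negatively associated}: the vertex-indicators $\{\mathbf{1}[v \in s_i]\}_{v\in[n],\,i\in[m]}$ form a permutation distribution (hence are negatively associated by Joag-Dev--Proschan), and negative association is preserved under coordinate-wise monotone functions applied to disjoint index sets, so each $f(s_i)$, being a function of the indicator variables at the $\ell$ coordinates sitting in block $i$, inherits negative association across $i$. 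The Chernoff--Hoeffding bound applies verbatim to negatively associated $[0,1]$-valued summands, yielding both inequalities above.

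Plugging these two tail bounds into \pref{lem:weird-distribution-to-uniform} (applied to the upper and lower tails separately and union-bounded) then converts the concentration of $\frac{1}{m}\sum f(s_i)$ under $D$ into sampling guarantees on the inclusion graph $(\X[k],\X[\ell])$, giving a $(2\varepsilon,\frac{4}{\varepsilon}\exp(-2\varepsilon^2 m))$-additive sampler and a $(\alpha,\frac{4}{\alpha\delta}\exp(-\delta^2\alpha m/3),\delta)$-multiplicative sampler, respectively. Reparameterizing $\varepsilon \leftarrow \varepsilon/2$ (and absorbing constant factors) and using $m \geq \lfloor k/\ell\rfloor$ yields exactly the two bounds claimed in \pref{claim:complete-complex-multiplicative-sampler}. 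The whole argument is short; the only place a reader might pause is Step~2, which is why I would include one explicit sentence citing Joag-Dev--Proschan for the negative association rather than proving it from scratch.
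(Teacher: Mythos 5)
Your partition-and-decorrelate approach parallels \pref{prop:faces-to-inclusion}, but there is a genuine gap in the negative association step, and the paper's own proof sidesteps the issue by using a different coupling.

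The problem is your claim that $f(s_1),\ldots,f(s_m)$ are negatively associated. Negative association of the vertex–block indicators $\{\mathbf{1}[v \in s_i]\}$ is indeed a permutation-type result, and NA is preserved under \emph{monotone} functions of disjoint coordinate blocks. But $f:\X[\ell]\to[0,1]$ is arbitrary — viewed as a function of the indicators $\{\mathbf{1}[v\in s_i]\}_{v\in[n]}$ it is not monotone, so the preservation lemma does not apply. And in fact the conclusion is false: take $n=m\ell$, $\ell\geq 2$, and let $A=\{P_1,\ldots,P_m\}$ be a fixed partition of $[n]$ into $\ell$-sets. Then $\Pr[s_i\in A]=m/\binom{n}{\ell}$ for each $i$, while $\Pr[\bigwedge_i\{s_i\in A\}]=m!(\ell!)^m/n!$; for example with $m=3,\ell=2,n=6$ the joint probability is $1/15$ whereas the product of marginals is $(1/5)^3=1/125$, so the $\mathbf{1}[s_i\in A]$ are \emph{positively} correlated. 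A simpler version: with $n=2\ell$, $m=2$, and $A=\{s,\bar{s}\}$, the events $\{s_1\in A\}$ and $\{s_2\in A\}$ coincide, giving $\mathrm{Cov}=2/9>0$. This is exactly the obstruction the paper grapples with in \pref{thm:probability-to-deviate-from-expectation-in-complete-swap-walk} (the swap complex), where handling these correlations even for the complete complex requires $n\gg k$ and substantial extra work — if NA held for arbitrary $f$, none of that machinery would be needed.

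The paper's proof avoids the dependence entirely by using a \emph{different} coupled distribution $D$: it draws $s_1,\ldots,s_m$ i.i.d.\ uniformly from $\binom{[n]}{\ell}$ (allowing overlaps), and then draws $t\in\X[k]$ uniformly conditioned on $t\supseteq\bigcup_i s_i$. By the permutation symmetry of the complete complex each marginal $(s_i,t)$ is still exactly the inclusion-graph edge distribution, so \pref{lem:weird-distribution-to-uniform} applies, while the $f(s_i)$ are now \emph{literally} i.i.d.\ and the standard Chernoff/Hoeffding bounds apply with no further argument. To repair your proof, replace the disjoint-blocks distribution with this one; the remainder of your computation then goes through.
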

\begin{proof}
    Fix and \(A \subseteq \X[\ell]\) such that \(\prob{A} \geq \alpha\). We apply \pref{lem:weird-distribution-to-uniform} in the following distribution. The distribution we use samples \((s_1,s_2,\dots,s_m,t) \sim D\) such that \((s_1,s_2,\dots,s_m)\) are independent and \(t \in \X[k]\) is a uniform face conditioned on containing all \(s_i\). Obviously, \((s_i,t)\) is a uniform pair of \(\ell\) and \(k\) faces where \(s_i \subseteq t\). Moreover, by Chernoff's bound for independent sampling
    \[
    \Prob[(s_1,s_2,\dots,s_m,t) \sim D]{\abs{\frac{|s_i \in A|}{m} - \prob{A}} > \frac{\delta}{2} \prob{A}} \leq 2\exp(-\frac{\delta^2}{12}\alpha m).
    \]
    The claim follows from \pref{lem:weird-distribution-to-uniform}. The additive bound follows by the same argument applying Hoeffding's inequality.
\end{proof}
Note that we do not restrict the number of vertices $n$ in this claim. Since the \(k\)-skeleton of the complete complex is only an \(\frac{1}{n-k}\)-two sided spectral expander, this claim does not follow from \pref{thm:hdx-is-sampler}.
\section{Concentration for the Swap Complex}\label{app:swap-complex}
In this appendix we prove \pref{thm:probability-to-deviate-from-expectation-in-complete-swap-walk}, which we restate here for convenience.
\restatetheorem{thm:probability-to-deviate-from-expectation-in-complete-swap-walk} The main technical claim we prove is the following. We reparametrize \(k=im\).
\begin{claim} \label{claim:complete-swap-walk-hoeffding}
    Let \(\varepsilon > 0\), $i,m \in \Z_+$, \(n \geq \frac{1153 im \log \left (\frac{2}{\varepsilon} \right )}{\varepsilon^5}\), and denote $C=C_{i,i m,n}$. Then for any \(A \subseteq C(1)\):
        \[
        \Prob[t=\set{s_1,s_2,\ldots,s_m} \in C(m)]{\cProb{}{A}{t} \geq \Pr[A] + \varepsilon} \leq \exp \left (-\frac{1}{9}\varepsilon^2 m \right ).
        \]
    The same lower bound applies when replacing \(\geq \Pr[A] + \varepsilon\) with \(\leq \Pr[A] - \varepsilon\).
\end{claim}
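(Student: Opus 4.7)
The plan is to reveal the blocks $s_1, s_2, \ldots, s_m$ sequentially and run a Doob-style Azuma argument, with one extra ingredient to control the conditional means along the way. The key structural observation is that in the swap complex, after conditioning on $s_1, \ldots, s_{j-1}$, the block $s_j$ is uniformly distributed over the $i$-subsets of $R_{j-1} := [n] \setminus \bigcup_{k<j} s_k$; in particular the ``current'' ambient universe has size $n - (j-1)i$, and marginally (by symmetry of the sampling) $R_{j-1}$ itself is a uniformly random subset of $[n]$ of that size.

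First I would set $Z_j = \mathbf{1}[s_j \in A]$ and $\nu_j = \Pr[s_j \in A \mid s_1, \ldots, s_{j-1}]$. By the observation above, $\nu_j$ is exactly the density of $A$ among $i$-subsets of $R_{j-1}$, i.e.\ an inclusion-sampling statistic inside the complete complex on $R_{j-1}$. Then I would introduce the ``good'' event
\[
E \;=\; \bigl\{\, \nu_j \leq \mu + \varepsilon/2 \ \text{for all } 1 \leq j \leq m \,\bigr\}.
\]
Applying the additive-sampler bound for the complete complex (the last line of \pref{claim:complete-complex-multiplicative-sampler}) to the containment graph $\bigl(\binom{[n]}{n-(j-1)i},\binom{[n]}{i}\bigr)$, each term $\Pr[\nu_j > \mu+\varepsilon/2]$ is at most $(4/\varepsilon)\exp(-\varepsilon^2 n/(64i))$ once $n \geq 2mi$ (so the ambient set is still large throughout). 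A union bound over $j$ together with the hypothesis $n \geq 1153\,im\log(2/\varepsilon)/\varepsilon^{5}$ will comfortably give $\Pr[\neg E] \leq \tfrac{1}{2}\exp(-\varepsilon^2 m/9)$.

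Second, on $E$ I would apply Azuma--Hoeffding to the martingale
\[
M_j \;=\; \sum_{k=1}^{j}(Z_k - \nu_k),
\]
which has bounded increments $|M_j - M_{j-1}| \leq 1$. This yields $\Pr[M_m \geq m\varepsilon/2] \leq \exp(-m\varepsilon^2/8)$. Crucially, on $E$ one has $\sum_k \nu_k \leq m(\mu + \varepsilon/2)$, so the event $\sum_k Z_k \geq m(\mu+\varepsilon)$ forces $M_m \geq m\varepsilon/2$. Combining the two parts,
\[
\Pr\!\left[\tfrac{1}{m}\sum_k Z_k \geq \mu + \varepsilon\right] \;\leq\; \Pr[\neg E] + \Pr[M_m \geq m\varepsilon/2] \;\leq\; \exp(-\varepsilon^2 m/9).
\]
The lower-tail bound is obtained by applying the same argument to the complement set $A^c$ (which has mean $1-\mu$).

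The main obstacle I anticipate is making the union bound in the first step tight enough: although each $\nu_j$ individually concentrates well (because $R_{j-1}$ is of size almost $n$ and we sample relatively coarse $i$-subsets from it), the bound must beat $\exp(-\varepsilon^2 m/9)$ after summing over $m$ levels, which forces $n$ to carry both a factor roughly $im/\varepsilon^2$ and a $\log(m/\varepsilon)$ term. The hypothesis $n \geq 1153\,im\log(2/\varepsilon)/\varepsilon^{5}$ is more than sufficient for this (the $\varepsilon^{5}$ rather than $\varepsilon^{2}$ appears to be a loose but convenient choice absorbing the $\log m$ factor, and the step would in fact go through under a noticeably weaker condition on $n$).
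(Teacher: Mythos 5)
Your approach is genuinely different from the paper's, and it is essentially sound. The paper proves the claim via the Chernoff bound of Panconesi--Srinivasan (\pref{thm:Panco-Chernoff}), which requires an ``almost negative correlation'' condition $\Ex{\prod f_\ell^{n_\ell}} \leq \eta \prod \ex{f_\ell^{n_\ell}}$; they verify this by an induction on the size of the index set, where the inductive step reduces to multiplicative sampling on the complete complex $(\Delta_n(n-i|J|), \Delta_n(i))$. To close that induction they must separately handle the case $\Pr[A] < \varepsilon/2$ by enlarging $A$, which is where their final constant $1/9$ comes from. Your Doob-martingale argument shares the same central structural observation --- that after revealing $s_1,\ldots,s_{j-1}$ the residual universe $R_{j-1}$ is a uniformly random subset of $[n]$, so $\nu_j$ is exactly an inclusion-sampling statistic of the complete complex --- but assembles it differently: you bound the drift of the conditional means $\nu_j$ via a union bound and Azuma takes care of the fluctuations. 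Notably, this lets you avoid the paper's case split on $\Pr[A]$ entirely.

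The one place you should be more careful is the final accounting. As you have written it, Azuma with increments bounded by $1$ gives $\Pr[M_m \geq m\varepsilon/2] \leq \exp(-m\varepsilon^2/8)$, and you also budget $\Pr[\neg E] \leq \tfrac{1}{2}\exp(-m\varepsilon^2/9)$; but $\exp(-m\varepsilon^2/8) + \tfrac{1}{2}\exp(-m\varepsilon^2/9) \leq \exp(-m\varepsilon^2/9)$ is equivalent to $\exp(-m\varepsilon^2/72) \leq 1/2$, i.e., $m\varepsilon^2 \gtrsim 50$, which is not guaranteed. This is fixable: the increment $Z_j - \nu_j$ lies in an interval of \emph{length} $1$ (namely $[-\nu_j, 1-\nu_j]$), so the Hoeffding form of Azuma gives the sharper bound $\exp(-2t^2/m)$, hence $\exp(-m\varepsilon^2/2)$ at $t = m\varepsilon/2$; or one can simply rebalance the thresholds (e.g.\ use $\mu + \varepsilon/3$ in the definition of $E$ and deviation $2m\varepsilon/3$ for the martingale), absorbing the remaining slack into the generous hypothesis on $n$ exactly as you anticipated. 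The paper's Panconesi--Srinivasan route avoids this delicate splitting because it produces the tail bound in one shot (as $\eta\cdot\exp(-2\varepsilon^2 m)$ with $\eta = e^{\varepsilon^2 m}$), at the cost of the moment-bound induction and the small-set case analysis.
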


\pref{thm:probability-to-deviate-from-expectation-in-complete-swap-walk} directly follows from \pref{claim:complete-swap-walk-hoeffding} together with \pref{claim:sampler-for-functions}.

We will prove this claim by reducing to the following result of Panconesi and Srinivasan.
\begin{theorem}[{\cite[Theorem 3.2]{PanconesiS1997}}]\label{thm:Panco-Chernoff}
    Let \(f_1,f_2,\ldots,f_m\) be functions on a probability space with \(\mu = \sum_{\ell=1}^m \ex{f_i}\), and such that \(a_\ell \leq f_\ell \leq b_\ell\). Assume that for every \(n_1,n_2,\ldots,n_m \in \NN\)
    \[
        \Ex{\prod_{\ell=1}^m f_\ell^{n_\ell}} \leq \eta \prod_{\ell=1}^m \ex{f_\ell^{n_\ell}}.
    \]
    Then
    \[
        \prob{\sum_{\ell=1}^m f_\ell \geq \mu + t} \leq \eta \exp \left ( \frac{-2t^2}{\sum_{\ell=1}^m (b_\ell-a_\ell)^2} \right ).
    \]
\end{theorem}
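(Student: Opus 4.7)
\medskip

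My plan is to follow the standard Chernoff–Hoeffding moment generating function argument, leveraging the mixed-moment hypothesis in place of independence. For any $\lambda>0$, Markov's inequality applied to $e^{\lambda\sum f_\ell}$ gives
\[
\Pr\!\left[\sum_\ell f_\ell \geq \mu + t\right] \leq e^{-\lambda(\mu+t)}\,\mathbb{E}\!\left[\prod_{\ell=1}^m e^{\lambda f_\ell}\right].
\]
The goal is to show the joint MGF factors (up to the multiplicative slack $\eta$), after which the result follows by optimizing $\lambda$.

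The core step is to Taylor-expand each $e^{\lambda f_\ell} = \sum_{n_\ell \geq 0} \lambda^{n_\ell} f_\ell^{n_\ell}/n_\ell!$, multiply out, and exchange sum and expectation (justified by absolute convergence, since the $f_\ell$ are bounded) to get
\[
\mathbb{E}\!\left[\prod_{\ell} e^{\lambda f_\ell}\right] = \sum_{n_1,\ldots,n_m \geq 0} \frac{\lambda^{n_1+\cdots+n_m}}{n_1!\cdots n_m!}\,\mathbb{E}\!\left[\prod_{\ell} f_\ell^{n_\ell}\right].
\]
Applying the hypothesis termwise replaces each mixed moment by $\eta \prod_\ell \mathbb{E}[f_\ell^{n_\ell}]$, and refactoring the series gives $\mathbb{E}[\prod_\ell e^{\lambda f_\ell}] \leq \eta \prod_\ell \mathbb{E}[e^{\lambda f_\ell}]$. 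A standard Hoeffding MGF bound on each factor yields $\mathbb{E}[e^{\lambda(f_\ell-\mathbb{E} f_\ell)}] \leq \exp(\lambda^2(b_\ell-a_\ell)^2/8)$, so combining everything with $\mu = \sum_\ell \mathbb{E}[f_\ell]$ gives
\[
\Pr\!\left[\sum_\ell f_\ell \geq \mu + t\right] \leq \eta\, e^{-\lambda t}\exp\!\left(\tfrac{\lambda^2}{8}\sum_\ell (b_\ell-a_\ell)^2\right),
\]
and the choice $\lambda = 4t/\sum_\ell(b_\ell-a_\ell)^2$ delivers the stated bound.

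The main obstacle is the termwise application of the hypothesis to the Taylor series: the step $\mathbb{E}[\prod f_\ell^{n_\ell}] \leq \eta\prod\mathbb{E}[f_\ell^{n_\ell}]$ is only sign-consistent when each $f_\ell^{n_\ell}$ is almost surely non-negative, which in general requires $f_\ell \geq 0$. To handle possibly negative $f_\ell$, I would first reduce to the non-negative case by the translation $g_\ell \coloneqq f_\ell - a_\ell \in [0, b_\ell - a_\ell]$, observing that the Hoeffding conclusion is translation invariant (both sides only depend on $\sum_\ell(g_\ell - \mathbb{E} g_\ell)$ and on the widths $b_\ell - a_\ell$). One then verifies that the mixed-moment hypothesis on $\{f_\ell\}$ transfers to $\{g_\ell\}$ by expanding $f_\ell^{n_\ell} = (a_\ell + g_\ell)^{n_\ell}$ via the binomial theorem and summing with non-negative weights; this is the one place where the ``for all $\vec n$'' form of the hypothesis (rather than just $n_\ell\in\{0,1\}$) is genuinely needed.

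Two minor points require brief care but no difficulty. First, the interchange of sum and expectation is immediate from uniform boundedness of the $f_\ell$, since the Taylor series for $e^{\lambda f_\ell}$ converges absolutely and uniformly on $[a_\ell,b_\ell]$. Second, the per-coordinate Hoeffding MGF estimate $\mathbb{E}[e^{\lambda(g-\mathbb{E} g)}]\leq e^{\lambda^2 R^2/8}$ for a random variable $g\in[0,R]$ is a standard convexity argument (bound $e^{\lambda x}$ on $[0,R]$ by the chord and take logs), which I would simply invoke. With these pieces in place, the conclusion is a one-line optimization.
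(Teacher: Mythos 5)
The paper does not prove this statement---it is imported verbatim from \cite{PanconesiS1997}---so there is no internal proof to compare against. Your moment-generating-function argument is the standard route and is essentially the one used in the cited source, and your core steps (Markov on $e^{\lambda\sum f_\ell}$, Taylor expansion of $\prod_\ell e^{\lambda f_\ell}$, termwise application of the mixed-moment hypothesis, per-coordinate Hoeffding lemma, optimize $\lambda=4t/\sum_\ell(b_\ell-a_\ell)^2$) correctly deliver the stated bound.

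However, the ``translation to non-negative $f_\ell$'' step is both unnecessary and, as written, slightly confused. It is unnecessary because the termwise application of the hypothesis does not need $f_\ell\geq 0$: once you have the absolutely convergent expansion
\[
\mathbb{E}\!\left[\prod_\ell e^{\lambda f_\ell}\right] = \sum_{n_1,\ldots,n_m\geq 0}\frac{\lambda^{n_1+\cdots+n_m}}{n_1!\cdots n_m!}\,\mathbb{E}\!\left[\prod_\ell f_\ell^{n_\ell}\right],
\]
each coefficient $\lambda^{\sum n_\ell}/\prod n_\ell!$ is non-negative for $\lambda>0$, and multiplying each instance of the hypothesis $\mathbb{E}[\prod_\ell f_\ell^{n_\ell}]\leq\eta\prod_\ell\mathbb{E}[f_\ell^{n_\ell}]$ by a non-negative scalar and summing preserves the inequality, regardless of the signs of the individual mixed moments. (Absolute convergence of both sides, which justifies the sum-of-inequalities, follows from boundedness exactly as you note.) It is confused because expanding $f_\ell^{n_\ell}=(a_\ell+g_\ell)^{n_\ell}$ is the wrong direction for transferring the hypothesis from $\{f_\ell\}$ to $\{g_\ell\}$; for that you would need to express $g_\ell^{n_\ell}=(f_\ell-a_\ell)^{n_\ell}$ in terms of powers of $f_\ell$, and those binomial coefficients $\binom{n_\ell}{k}(-a_\ell)^{n_\ell-k}$ alternate in sign whenever $a_\ell>0$, so the ``sum with non-negative weights'' argument breaks in exactly the case the translation was supposed to handle. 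Dropping the reduction entirely gives a cleaner and fully correct proof.
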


\begin{proof}[Proof of \pref{claim:complete-swap-walk-hoeffding}]
    We prove the bound for \(\geq \prob{A} + \varepsilon\); the \(\leq \prob{A} - \varepsilon\) follows from the original inequality applied on the complement \(\SC[1] \setminus A\). We also assume without loss of generality that \(\prob{A} \geq \frac{\varepsilon}{2}\), and prove that 
    \begin{equation}\label{eq:swap-ind-Chern}
        \Prob[\set{s_1,s_2,\ldots,s_m} \in C(m)]{\frac{\abs{s_\ell \in A}}{m} \geq \mu + \varepsilon} \leq \exp \left (-\varepsilon^2 m \right ).
    \end{equation}
    This suffices since, by our assumption that $n \geq \frac{1153}{\varepsilon^5}$, for any \(\prob{A} \leq \frac{\varepsilon}{2}\) we may always find \(B \supseteq A\) with \(\prob{B} \in  \left ( \frac{\varepsilon}{2}, \frac{2\varepsilon}{3} \right )\) and deduce
    \[
        \Prob[\set{s_1,s_2,\ldots,s_m} \in C(m)]{\frac{\abs{s_\ell \in A}}{m} \geq \mu + \varepsilon} \leq \Prob[\set{s_1,s_2,\ldots,s_m} \in C(m)]{\frac{\abs{s_\ell \in B}}{m} \geq \prob{B} + \frac{\varepsilon}{3}} \leq \exp \left (-\frac{\varepsilon^2}{9} m \right )
    \]
    where we've applied \pref{eq:swap-ind-Chern} to the appropriately sized set \(B\). \medskip
    
    Intuitively, we wish to prove \pref{eq:swap-ind-Chern} by applying the theorem of \cite{PanconesiS1997} where each \(f_\ell:C(m)\to \set{0,1}\) is the indicator that \(s_\ell \in A\) (ignoring the rest of the coordinates), $t=\varepsilon m$, and $\eta=\exp(\varepsilon^2 m)$. Then since $a_\ell=0$ and $b_\ell=1$, we'd have
    \[
    \Pr\left[\frac{|s_\ell \in A|}{m} \geq \mu + \varepsilon\right] \leq \eta\exp(-2\varepsilon^2 m) = \exp(-\varepsilon^2 m)
    \]
    as desired.
    
    Unfortunately, since $C(m)$ is unordered, this is not quite well-defined. Instead, we will apply \pref{thm:Panco-Chernoff} as above to the \emph{partification} of $C(m)$, denoted $\widetilde{C}(m)$, the distribution over \emph{ordered} $m$-tuples $(s_1,\ldots,s_m)$ such that each $s_\ell \in \binom{[n]}{ i}$ and all $s_\ell$ are pairwise disjoint. Since the $\sum\limits_{\ell=1}^m f_\ell=|s_\ell \in A|$ is independent of the ordering of the face, proving \pref{eq:swap-ind-Chern} for the partification clearly implies the desired unordered bound as well.
    
    Moving back now to the statement of \pref{thm:Panco-Chernoff}, for any $I \subseteq [m]$, let $s_I \in A$ be shorthand for the event that $\forall \ell \in I, s_\ell \in A$. The expectations \(\Ex{\prod_{\ell=1}^m f_\ell^{n_\ell}}\) in our setting are then of the form \(\Prob[(s_1,s_2,\ldots,s_m) \in \widetilde{C}(m)]{s_I \in A}\), so we need to show $\forall I \subseteq [m]$:
        \[
            \Prob[(s_1,s_2,\ldots,s_m) \in \widetilde{C}(m)]{s_I \in A}\leq \exp(\varepsilon^2 m) \prob{A}^{|I|},
        \]
    We will prove the slightly stronger bound by induction on $|I|$:
    \[
        \Prob[(s_1,s_2,\ldots,s_m) \in \widetilde{C}(m)]{s_I \in A}\leq (1+\varepsilon^2)^{|I|} \prob{A}^{|I|}.
    \]
    The base case \(|I|=1\) clearly holds since the marginals $s_\ell \sim  \binom{[n] }{ i}$ are uniformly random. Assume then the result holds for all sets of size \(r\). Let \(I\) be a set of size $r+1$ and write \(I = J \dunion \set{\ell}\). 
    
    The idea is to split the analysis based on whether or not the marginal $s_J$ samples $A$ well. Toward this end, define the set of $s_J$ that over-sample $A$ as
    \[
    T_J \coloneqq \left\{s_J : \Pr_{s_I}[s_i \in A | s_J] \geq (1+\frac{1}{2}\varepsilon^2)\prob{A}\right\}.
    \]
    We can bound the probability that $s_I \in A$ by
    \begin{align*}
        \Prob[(s_1,s_2,\ldots,s_m) \in \widetilde{C}(m)]{ s_I \in A} &\leq  \Prob{T_J} + \Prob[(s_1,s_2,\ldots,s_m) \in \widetilde{C}(m)]{\overline{T_J} \ve s_I \in A}\\
        &\leq \prob{T_J} + (1+\frac{1}{2}\varepsilon^2)\prob{A}\cdot \prob{s_J \in A} & \text{(Def of $T_J$)}  \\
        &\leq \prob{T_J} + (1+\frac{1}{2}\varepsilon^2)(1+\varepsilon^2)^{|J|} \prob{A}^{|J|+1} & \text{(Ind.\ Hypothesis)}
    \end{align*}
    Thus it remains to argue that \(\prob{T_J} \leq \frac{\varepsilon^2}{2} (1+\varepsilon^2)^{|J|}\prob{A}^{|J|+1}\). In particular, by our assumption that $\Pr[A] \geq \frac{\varepsilon}{2}$, it suffices to show 
    \[
    \prob{T_J} \leq \frac{\varepsilon^{m+2}}{2^{m+1}}.
    \]
    We prove this by reducing analysis of $\prob{T_J}$ to the probability of mis-sampling $A$ on a higher level of the complete complex, an event we've already shown has a subgaussian tail in \pref{app:complete}.

    Toward this end, let $t= \bigcup_{j \in J} s_j$ denote the disjoint union of the $\ell$-sets in $s_J$, viewed as an element in $\binom{[n]}{ i|J|}$, and write $\bar{t}=[n] \setminus t$. Observe that the conditional distribution of $s_\ell$ given $s_J$ is simply uniform over $\binom{\bar{t}}{ i}$ --- in other words, it depends \textit{only} on the union $t$. As such, the event \(\set{s_j}_{j \in J} \in T_J\) also depends only on the union, namely \(\set{s_j}_{j \in J} \in T_J\) if and only if
    \[
    \cProb{s_\ell \cup s_J}{s_\ell \in A}{s_\ell \subseteq \overline{t}} > (1+\frac{1}{2}\varepsilon^2) \prob{A}.
    \]
    Since the distribution over $\bar{t}$ generated as above is uniform over $\binom{[n]}{n-i|J|}$, we may rewrite the condition as:
    \[
    \Pr[T_J] = \Prob[\overline{t}]{\cProb{s_\ell}{A}{s_\ell \subseteq \overline{t}} > (1+\frac{1}{2}\varepsilon^2) \prob{A}} .
    \]
    The righthand side is now exactly the probability of (multiplicatively) mis-sampling $A$ in the inclusion graph of the complete complex over \(n\) vertices $(\Delta_{n}(n-i|J|),\Delta_n(\ell))$. In \pref{claim:complete-complex-multiplicative-sampler}, we proved this graph is a \((\frac{\varepsilon}{2},\frac{16}{\varepsilon^3} \exp \left (-\frac{\varepsilon^5}{96}\lfloor \frac{|\overline{t}|}{i} \rfloor \right ) ,\frac{\varepsilon^2}{2})\)-multiplicative sampler. Since $\Pr[A] \geq \frac{\varepsilon}{2}$ by assumption, we therefore have
    \[
    \Pr[T_J] = \Prob[\overline{t}]{\cProb{s_\ell}{A}{s_\ell \subseteq \overline{t}} > (1+\frac{1}{2}\varepsilon^2) \prob{A}} \leq \frac{16}{\varepsilon^3} \exp \left (-\frac{\varepsilon^5}{192}\frac{|\overline{t}|}{i}  \right ).
    \]
    The size of \(\overline{t}\) is at least \(n-im \geq \frac{1152im  \log \left (\frac{2}{\varepsilon} \right )}{\varepsilon^5}\), so 
    \[
    \Prob{T_J} \leq \frac{16}{\varepsilon^3} \exp \left (6m \log \left ( \frac{\varepsilon}{2} \right )  \right ) \leq \frac{\varepsilon^{m+2}}{2^{m+1}}
    \]
    for all $m \geq 1$ as desired.
\end{proof}
\section{Outstanding Proofs on Samplers and Concentration} \label{app:sampler-proofs}
\subsection{Sampling}
\begin{claim}[{\pref{claim:opposite-sampler}} Restated]
    Let \(\beta,\delta > 0\), let \(\delta' > \delta\) and \(\alpha < \frac{\min \set{\delta,0.5}}{1+\delta}\). Then for every \((\alpha,\beta,\delta)\)-sampler \(G=(L,R,E)\), it holds that \(G_{op} \coloneqq (R,L,E)\) is a \((\frac{1-\alpha(1+\delta)}{\alpha(\delta' - \delta)}\beta, 2\alpha, \delta')\)-sampler.
\end{claim}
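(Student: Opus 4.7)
The plan is to fix a set $B \subseteq L$ with $\mu(B) \geq \beta' := \frac{1-\alpha(1+\delta)}{\alpha(\delta'-\delta)}\beta$ and bound separately the measure of the ``over-sampling'' set $A^+ := \{u \in R : \Pr_{v \sim u}[v \in B] > (1+\delta')\mu(B)\}$ and the ``under-sampling'' set $A^- := \{u \in R : \Pr_{v \sim u}[v \in B] < (1-\delta')\mu(B)\}$, aiming to show each has measure at most $\alpha$ so that the full mis-sampling set $A^+ \cup A^-$ has measure at most $2\alpha$.

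For $A^+$, I would double-count the edge density $e(A^+,B) := \Pr_{(u,v)\sim E}[u \in A^+, v \in B]$. The definition of $A^+$ immediately gives the lower bound $e(A^+,B) \geq (1+\delta')\mu(A^+)\mu(B)$. For the upper bound, assume $\mu(A^+) \geq \alpha$ (else we are done) and apply the sampler property of $G$ to $A^+$: the set $T^+ := \{v : \Pr_{u \sim v}[u \in A^+] > (1+\delta)\mu(A^+)\}$ has measure $\mu(T^+) \leq \beta$. Splitting $\mathbb{E}_{v \in B}[\Pr_{u \sim v}[u \in A^+]]$ into contributions from $B \setminus T^+$ (bounded by $(1+\delta)\mu(A^+)$) and $B \cap T^+$ (bounded by $1$) and rewriting the resulting convex combination gives the refined estimate
\[
e(A^+,B) \leq (1+\delta)\mu(A^+)\mu(B) + \beta\bigl(1-(1+\delta)\mu(A^+)\bigr).
\]
Combining the two bounds and rearranging yields $\mu(A^+)\bigl[(\delta'-\delta)\mu(B) + (1+\delta)\beta\bigr] \leq \beta$. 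Substituting $\mu(B) \geq \beta'$ makes the bracket collapse to exactly $\beta/\alpha$, so $\mu(A^+) \leq \alpha$.

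For $A^-$, the mirror argument applies: the lower-tail guarantee of the sampler gives $\Pr_{u \sim v}[u \in A^-] \geq (1-\delta)\mu(A^-)$ outside a $\beta$-measure set, while the defining property gives $e(A^-,B) \leq (1-\delta')\mu(A^-)\mu(B)$. Assuming $\mu(A^-) \geq \alpha$, these combine to force $\mu(B)(\delta'-\delta) \leq (1-\delta)\beta$. The hypothesis $\alpha < \min(\delta,\tfrac12)/(1+\delta)$ implies $(1-\alpha(1+\delta))/\alpha > 1-\delta$, hence $\beta' > (1-\delta)\beta/(\delta'-\delta)$, so $\mu(B) \geq \beta'$ contradicts $\mu(A^-) \geq \alpha$. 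A union bound then yields $\mu(A^+ \cup A^-) \leq 2\alpha$.

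The main technical delicacy is the refined upper bound $e(A^+,B) \leq (1+\delta)\mu(A^+)\mu(B) + \beta(1-(1+\delta)\mu(A^+))$: the factor $(1-(1+\delta)\mu(A^+))$ accounts for the fact that the trivial estimate $\Pr_{u \sim v}[u \in A^+] \leq 1$ on $T^+$ is only mildly wasteful when $(1+\delta)\mu(A^+)$ itself is close to $1$. This ``compensation'' is precisely what produces the $(1-\alpha(1+\delta))$ factor in the numerator of $\beta'$; omitting it would cost a multiplicative factor of $1/(1-\alpha(1+\delta))$ and only yield the weaker bound $\mu(A^+) \leq \alpha/(1-\alpha(1+\delta))$, too large to combine with $A^-$ to get $2\alpha$.
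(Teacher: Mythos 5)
Your proof is correct and is essentially the paper's argument. Like the paper, you split the bad set into over- and under-samplers, double-count the edge density between the bad set and the sampled set, lower-bound it by the defining property of the bad set and upper-bound it via the sampler guarantee of $G$ applied to the bad set, and use the same refined estimate (with the $1-(1+\delta)\mu(A^+)$ compensation factor) on the over-sampling side to obtain the $1-\alpha(1+\delta)$ numerator. The only differences are cosmetic: the paper argues by contradiction whereas you derive $\mu(A^+)\leq\alpha$ directly; and both you and the paper leave implicit that $(1+\delta)\mu(A^+)\leq 1$ (needed for the refined estimate to follow from the decomposition), which holds automatically since the lower bound $e(A^+,B)>(1+\delta')\mu(A^+)\mu(B)$ together with $e(A^+,B)\leq\mu(B)$ forces $(1+\delta')\mu(A^+)<1$.
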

\begin{proof}[Proof of \pref{claim:opposite-sampler}]
    Let \(A \subseteq L\) be such that \(\prob{A} \geq \frac{1-\alpha(1+\varepsilon)}{\alpha(\varepsilon' - \varepsilon)}\beta\). Let \[B_S = \sett{v \in R}{\cProb{u \in L}{u \in A}{u \sim v} < (1-\varepsilon')\prob{A}}\] and let \[B_B = \sett{v \in R}{\cProb{u \in L}{u \in A}{u \sim v} > (1+\varepsilon')\prob{A}}.\]
    Showing that \(\prob{B_B},\prob{B_S} \leq \alpha\) will prove the claim. Let us begin with \(B_S\). Assume toward contradiction that \(\prob{B_S} > \alpha\). We denote by \(\one_A:L \to \set{0,1}\) and \(\one_{B_S}:R \to \set{0,1}\) the indicators of \(A,B_S\) respectively. On the one hand
    \begin{equation} \label{eq:B_S-lower-bound}
        \Ex[uv \in E]{\one_A(u) \one_{B_S}(v)} < \prob{B_S} (1-\varepsilon') \prob{A},
    \end{equation}
    since the product of indicators is one iff \(v \in B_S\), in which case, there are at most a \((1-\varepsilon')\prob{A}\) fraction of \(u \in A\) adjacent to \(v\) by the definition of \(B_S\). On the other hand, by the sampling properties of \((L,R,E)\) there is at most a \(\beta\)-fraction of \(u \in L\) such that \(\cProb{v \in R}{v \in B_S}{v \sim u} < (1-\varepsilon) \prob{B_S}\). Hence 
    \begin{equation} \label{eq:B_S-upper-bound}
        \Ex[uv \in E]{\one_A(u) \one_{B_S}(v)} \geq \prob{B_S} (1-\varepsilon) (\prob{A} - \beta).\footnote{\(\prob{A}>\beta\) by assumption.}
    \end{equation}
Combining \eqref{eq:B_S-lower-bound} with \eqref{eq:B_S-upper-bound} yields
\[\prob{B_S}\prob{A}(\varepsilon' - \varepsilon) < \beta \prob{B_S}(1-\varepsilon) \]
or
\[ \prob{A} < \frac{1-\varepsilon}{\varepsilon' - \varepsilon} \beta \overset{(\alpha<\frac{\varepsilon}{1+\varepsilon})}{\leq} \frac{1-\alpha(1+\varepsilon)}{\alpha(\varepsilon' - \varepsilon)}\beta,\]
a contradiction to the lower bound on the size of \(A\).

Let us bound \(B_B\). Let \(\one_{B_B}:R \to \set{0,1}\) be the indicator of \(B_B\). Similar to \eqref{eq:B_S-lower-bound} we have that
    \begin{equation} \label{eq:B_B-upper-bound}
        \Ex[uv \in E]{\one_A(u) \one_{B_S}(v)} > \prob{B_B} (1+\varepsilon') \prob{A}.
    \end{equation}
By the sampling properties of \((L,R,E)\) there is at most a \(\beta\)-fraction of \(u \in L\) such that \(\cProb{v \in R}{v \in B_B}{v \sim u} > (1+\varepsilon) \prob{B_B}\). So similar to \eqref{eq:B_S-upper-bound} we have that
    \begin{equation} \label{eq:B_B-lower-bound}
        \Ex[uv \in E]{\one_A(u) \one_{B_S}(v)} \leq \beta + \prob{B_B} (1+\varepsilon) (\prob{A} - \beta).
    \end{equation}
Combining \eqref{eq:B_B-upper-bound} with \eqref{eq:B_B-lower-bound} yields
\[\prob{B_B}\prob{A}(\varepsilon - \varepsilon') < \beta \left (1 - \prob{B_B}(1+\varepsilon) \right ) \]
or
\[ \prob{A} < \frac{1- \prob{B_B}(1+\varepsilon)}{\prob{B_B}(\varepsilon' - \varepsilon)} \beta \leq \frac{1-\alpha(1+\varepsilon)}{\alpha(\varepsilon'-\varepsilon)}\beta,\]
where the last inequality is because the function \(x \mapsto \frac{1-(1+\varepsilon)x}{x(\varepsilon'-\varepsilon)}\beta\) is monotone increasing when \(x \in [0,\frac{0.5}{1+\varepsilon}]\) and \(\alpha\) is in this interval.
This is a contradiction to the lower bound on the size of \(A\).
\end{proof}
\begin{claim}[{\pref{claim:additive-multiplicative-sampler-equivalence}} Restated]
Let \(G=(L,R,E)\) be a bipartite graph.
\begin{enumerate}
    \item If \(G\) is a \((\beta,\delta)\)-additive sampler then \(G\) is a \((C \delta,\beta,\frac{1}{C})\)-multiplicative sampler for any \(C>1\).
    \item If \(G\) is a \((\alpha,\beta,\delta)\)-multiplicative sampler for \(\alpha \leq \frac{1}{2}\). Then \(G\) is a \((\beta,\delta)\)-additive sampler, where \(\delta = \max \set{\delta,(1+\delta)(\alpha+p)}\) and \(p = \max_{v \in R} \prob{v}\).
\end{enumerate}
\end{claim}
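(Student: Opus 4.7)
The plan is to prove the two directions separately: Part~1 follows from a direct containment of events, while Part~2 requires a small padding argument to handle sets below the multiplicative threshold.

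For Part~1, I would fix any $A \subseteq R$ with $\prob{A} \geq C\delta$. The multiplicative deviation $\frac{1}{C}\prob{A}$ is then at least $\delta$, so the ``multiplicatively bad'' set $\{v \in L : |\Prob[u \sim v]{u \in A} - \prob{A}| > \frac{1}{C}\prob{A}\}$ is contained in the additive bad set at threshold $\delta$, whose measure is at most $\beta$ by hypothesis. This immediately yields the $(C\delta,\beta,\tfrac{1}{C})$-multiplicative guarantee, with no further work.

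For Part~2, set $\varepsilon \coloneqq \max\{\delta,(1+\delta)(\alpha+p)\}$ and fix an arbitrary $A \subseteq R$. I would split on whether $\prob{A} \geq \alpha$ or $\prob{A} < \alpha$. In the first sub-case, since $\delta\prob{A} \leq \delta \leq \varepsilon$, the $\varepsilon$-additive bad set is contained in the $\delta\prob{A}$-multiplicative bad set, and the multiplicative hypothesis bounds the latter by $\beta$.

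The interesting sub-case is $\prob{A} < \alpha$, for which the multiplicative hypothesis says nothing directly about $A$ itself. The trick here will be to pad $A$: add vertices of $R$ greedily one at a time---each contributing measure at most $p$---until we obtain $A' \supseteq A$ with $\alpha \leq \prob{A'} \leq \alpha + p$. Applying the multiplicative hypothesis to $A'$ then yields $\Prob[u \sim v]{u \in A'} \leq (1+\delta)\prob{A'} \leq (1+\delta)(\alpha+p) \leq \varepsilon$ outside a set of measure $\beta$, and the monotonicity $\Prob[u \sim v]{u \in A} \leq \Prob[u \sim v]{u \in A'}$ hands us the additive upper tail. The lower tail is automatic, since $\prob{A} - \Prob[u \sim v]{u \in A} \leq \prob{A} < \alpha \leq \varepsilon$ holds deterministically. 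The only real obstacle is identifying the correct padding in the small-set regime; once that is in place, both parts reduce to elementary event-containment arguments, and the hypothesis $\alpha \leq \tfrac{1}{2}$ plays no essential role in the argument beyond keeping $\varepsilon$ bounded away from $1$ so that the additive conclusion remains non-trivial.
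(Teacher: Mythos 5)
Your proof is correct and follows essentially the same route as the paper: Part~1 via the trivial event containment, and Part~2 via a case split on $\prob{A} \geq \alpha$ together with a greedy padding argument for small sets. The only cosmetic difference is that the paper first reduces to $\prob{A} \leq \frac{1}{2}$ by passing to the complement before doing the case split, whereas you handle both cases directly (and correctly observe that the hypothesis $\alpha \leq \frac{1}{2}$ only serves to keep the resulting additive parameter from being vacuous).
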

\begin{proof}[Proof of \pref{claim:additive-multiplicative-sampler-equivalence}]
    The first item follows immediately from the definition of multiplicative samplers. If \(\prob{A} \geq C\varepsilon\) then \(\frac{1}{C} \prob{A} \geq \varepsilon\) so by the promise of additive sampling it holds that
    \[\Prob[v \in L]{\abs{\cProb{u \in R}{u \in A}{u \sim v}-\prob{A}} > \frac{1}{C}\prob{A} } \leq \Prob[v \in L]{\abs{\cProb{u \in R}{u \in A}{u \sim v}-\prob{A}} > \varepsilon } \leq \beta.\]

    We turn to the second item. Let \(A \subseteq R\). For every \(u \in L\) it holds that 
    \[\abs{\cProb{u \in R}{u \in A}{u \sim v}-\prob{A}} = \abs{1-\cProb{u \in R}{u \in A}{u \sim v}-(1-\prob{A})} = \abs{\cProb{u \in R}{u \in R \setminus A}{u \sim v}-\prob{R \setminus A}},\]
    so we assume that without loss of generality \(\prob{A} \leq \frac{1}{2}\). We need to show that 
    \begin{equation}
        \Prob[v \in L]{\abs{\cProb{u \in R}{u \in A}{u \sim v}-\prob{A}} > \delta} \leq \beta.
    \end{equation}
    
    If \(\prob{A} \geq \alpha\) then this holds from the multiplicative sampling guarantee. Hence it suffices to show that this inequality \emph{always holds} for sets of relative size smaller that \(\delta\).

    First we note that if \(u \in L\) is such that \(\cProb{u \in R}{u \in A}{u \sim v} < \prob{A}\) then 
    \(\abs{\cProb{u \in R}{u \in A}{u \sim v}-\prob{A}} \leq \prob{A} \leq \delta\)
    so 
    \begin{align*}
        \Prob[v \in L]{\abs{\cProb{u \in R}{u \in A}{u \sim v}-\prob{A}} > \delta} =
        \Prob[v \in L]{\cProb{u \in R}{u \in A}{u \sim v} >\delta + \prob{A}}.
    \end{align*}
    In this case we find a subset \(B \supseteq A\) such that \(\prob{B} \leq  \alpha + p\) (we can do so by adding vertices to \(A\) one-by-one which is where \(p\) comes into play). Then
    \begin{align*}
    \Prob[v \in L]{\cProb{u \in R}{u \in A}{u \sim v} > \delta  + \prob{A}} &\leq
    \Prob[v \in L]{\cProb{u \in R}{u \in B}{u \sim v} > \delta  + \prob{A}} \\
    &\leq \Prob[v \in L]{\cProb{u \in R}{u \in B}{u \sim v} > (1+\varepsilon)\prob{B}}\\
    &\leq \beta.
    \end{align*}
    The claim follows. 
\end{proof}
One can also remove the dependence on \(\delta\) in the second item.
\begin{claim}[{\pref{claim:improved-mult-to-add-sampler}} Restated]
    Let \(\beta, \alpha_0 > 0\). If for every \(\alpha > \alpha_0\) it holds that \(G\) is an \((\alpha,\beta,\frac{\alpha_0}{\sqrt{\alpha}})\)-multiplicative sampler, then \(G\) is a\((\beta,2(\alpha+p))\)-additive sampler where \(p = \max_{v \in R} \prob{v}\).
\end{claim}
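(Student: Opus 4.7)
The plan is to mirror the structure of the second item of \pref{claim:additive-multiplicative-sampler-equivalence}, splitting the argument on the density of the target set $A \subseteq R$, and using the extra $1/\sqrt{\alpha}$ decay in the multiplicative error to avoid incurring the $(1+\delta)$-factor slack that appears there. The hypothesis is that for every $\alpha > \alpha_0$, a vertex $v \in L$ fails to $\frac{\alpha_0}{\sqrt{\alpha}}$-multiplicatively sample a set of measure $\alpha$ with probability at most $\beta$. The conclusion asks for an additive error of $2(\alpha_0 + p)$ for \emph{every} set $A \subseteq R$.

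I would first dispatch the lower tail. When $\prob{A} < 2(\alpha_0 + p)$, the inequality $\cProb{u \in L}{A}{u \sim v} < \prob{A} - 2(\alpha_0+p)$ is vacuous since the left-hand side is non-negative. Otherwise $\prob{A} \geq \alpha_0$, and applying the multiplicative hypothesis at $\alpha = \prob{A}$ bounds the deviation (with probability $\geq 1 - \beta$) by $\frac{\alpha_0}{\sqrt{\prob{A}}}\prob{A} = \alpha_0\sqrt{\prob{A}} \leq \alpha_0 \leq 2(\alpha_0 + p)$, giving the lower tail in both regimes.

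For the upper tail, the easy case $\prob{A} \geq \alpha_0$ is handled by the same computation. The interesting case is $\prob{A} < \alpha_0$. Here I would greedily adjoin vertices of $R \setminus A$ to $A$ one at a time, each adding at most $p$ to the total measure, halting the first time the mass exceeds $\alpha_0$. This produces $B \supseteq A$ with $\prob{B} \in (\alpha_0, \alpha_0 + p]$. For any $v \in L$ with $\cProb{u}{A}{u \sim v} > \prob{A} + 2(\alpha_0 + p)$, monotonicity gives
\[
\cProb{u}{B}{u \sim v} \geq \cProb{u}{A}{u \sim v} > \prob{A} + 2(\alpha_0 + p) \geq \prob{B} + (\alpha_0 + p),
\]
using $\prob{B} - \prob{A} \leq \alpha_0 + p$. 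Invoking the multiplicative hypothesis at $\alpha = \prob{B}$, it remains to verify that $\alpha_0 + p \geq \alpha_0\sqrt{\prob{B}}$, which holds trivially since $\prob{B} \leq 1$. Thus the set of such $v$ has measure at most $\beta$.

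The main (minor) obstacle is bookkeeping the two thresholds $\alpha_0$ and $\alpha_0 + p$ consistently across the case analysis; no part of the argument is deep. The key leverage over \pref{claim:additive-multiplicative-sampler-equivalence} comes from the square-root shrinkage in the multiplicative error: at the threshold $\prob{B} \approx \alpha_0$, the multiplicative deviation is only $\alpha_0^{3/2} \ll \alpha_0 + p$, so the $+p$ slack needed to greedily hit the threshold is absorbed, yielding a pure $O(\alpha_0 + p)$ additive bound without the multiplicative $(1+\delta)$ inflation of the earlier claim.
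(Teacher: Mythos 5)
Your proof is correct and takes essentially the same route as the paper: the paper's proof simply declares "the proof is the same as the second item of \pref{claim:additive-multiplicative-sampler-equivalence}, except that for large sets the multiplicative error $\frac{\alpha_0}{\sqrt{\alpha}}\cdot\alpha = \alpha_0\sqrt{\alpha} \leq \alpha_0$ already gives the bound," which is precisely the two-case dichotomy (direct multiplicative bound for $\prob{A} \geq \alpha_0$, greedy augmentation $B \supseteq A$ with $\prob{B}\in(\alpha_0,\alpha_0+p]$ for smaller sets) that you spell out. Your write-up is more explicit about why the $+p$ slack from the greedy step is absorbed by the $\sqrt{\alpha}$ decay, but there is no substantive difference in method.
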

\begin{proof}[Proof of \pref{claim:improved-mult-to-add-sampler}]
    The proof is the same as the proof of the second item in  \pref{claim:additive-multiplicative-sampler-equivalence}. the only difference is that for \emph{large} sets, we still have that
     \begin{equation}
        \Prob[v \in L]{\abs{\cProb{u \in R}{u \in A}{u \sim v}-\prob{A}} > 2\alpha_0+p} \leq \beta
    \end{equation}
    because if \(\prob{A}=\alpha\) then \(\frac{\alpha_0}{\sqrt{\alpha}}\alpha \leq \alpha_0\).
\end{proof}
\restateclaim{claim:sampler-for-functions}

\begin{proof}[Proof of \pref{claim:sampler-for-functions}]
Fix \(f:L \to [0,1]\). We prove a \((2\beta, 2\varepsilon)\) upper tail bound. The lower bound is similar. Let \(A \sim \mathcal{P}(L)\) be a random subset where every vertex is inserted into \(A\) independently with probability \(p_v = f(v)\). Since $G$ is a sampler, for every possible outcome \(A\) we have \(\Prob[r \in R]{\Prob[v \sim r]{v \in A}  - \prob{A} > \varepsilon} < \beta\). Let \(\one(A,r)\) be the indicator for this event. In particular, we have that \(\Ex[A,r \sim R]{\one(A,r)} \leq \beta\). By Markov, the fraction of \(r_0 \in R\) such that \(\Ex[A]{\one(A,r_0)} > \frac{1}{2}\) is at most \(2\beta\). On the other hand, we will show that if
\[
\Ex[v \sim r_0]{f(v)} \geq \Ex[v \in L]{f(v)} + 2\varepsilon
\]
then \(\Ex[A]{\one(A,r_0)} > \frac{1}{2}\). Indeed, fix such an \(r_0\). Observe that if \(\Prob[v \in L]{A} < \Ex[v \in L]{f} + \frac{1}{2}\varepsilon\) and \(\Prob[v \sim r_0]{A} > \Ex[v \sim r_0]{f(v)} - \frac{1}{2}\varepsilon\), then 
\[
\Prob[v \sim r_0]{A} > \Ex[v \sim r_0]{f(v)} - \frac{1}{2}\varepsilon \geq \Ex[v \in L]{f(v)} + \frac{3}{2}\varepsilon \geq \prob{A} + \varepsilon.
\]
It is direct to check that \(\Ex[A]{\prob{A}}=\ex{f}\) and that \(\Ex[A]{\Prob[v \sim r_0]{A}} = \Ex[v \sim r_0]{f(v)}\). Hoeffding's inequality therefore bounds the probability that either \(\Prob[v \in L]{A} < \Ex[v \in L]{f} + \frac{1}{2}\varepsilon\) or \(\Prob[v \sim r_0]{A} > \Ex[v \sim r_0]{f(v)} - \frac{1}{2}\varepsilon\) by \(2\exp(-0.01\varepsilon^2 k) < \frac{1}{2}\) and \(\Ex[A]{\one(A,r_0)} > \frac{1}{2}\) follows as desired.
\end{proof}
Finally we prove the basic Chebyshev-type sampling bound from expansion.
\restateclaim{claim:chebyshev-expanders}
\begin{proof}[Proof of \pref{claim:chebyshev-expanders}]
Dinur and Kaufman observe the following bound
    \[
        \Pr_{u \sim R}[|Af(u) - \mu| > \varepsilon ] < \frac{\lambda^2 Var(f)}{\varepsilon^2}.
    \]
    Let \(f^\perp = f-\mu\). Then \(Af(v) - \mu = Af^{\perp}\). By Chebyshev's inequality,
    \(\prob{T} \leq \frac{\ex{(Af^{\perp})^2}}{\varepsilon^2}\)
    and by \(\lambda\)-expansion \(\ex{(Af^{\perp})^2} \leq \lambda^2 \ex{(f^\perp)^2} = \lambda^2 Var(f)\). The claim follows.
    Now by the above we can write
    \[
    \prob{Af(u) < \mu - \varepsilon} \leq \prob{\abs{Af(u)-\ex{f}} > \ex{f}-\mu + \varepsilon} \leq \frac{\lambda^2 Var(f)}{(\varepsilon + (\ex{f}-\mu))^2}.
    \]
    Noting that \(Var(f) \leq \ex{f} = \mu+ (\ex{f}-\mu)\) and denoting \(x = \ex{f}-\mu\), this is at most \(\frac{\lambda^2 (\mu + x)}{(\varepsilon + x)^2} \leq \frac{\lambda^2 \mu }{\varepsilon^2}\). 
\end{proof}
\subsection{Concentration}
\begin{lemma}[{\pref{lem:raising}} Restated]
    Let $X$ be a $d$-\maximal simplicial complex, $k \leq d$, and $f: \X[k] \to [0,1]$ a function satisfying
    \begin{enumerate}
        \item \textbf{Upper Tail}: $\underset{\X[k]}{\Pr}[f-\mathbb{E}[f] > t] \leq f_{up}(t)$
        \item \textbf{Lower Tail}: $\underset{\X[k]}{\Pr}[f-\mathbb{E}[f] < -t] \leq f_{low}(t)$.
    \end{enumerate}
    for some functions $f_{up},f_{low}: \R_+ \to [0,1]$. Then the $d$-lift $U_{k,d}f: \X[d] \to \R$ satisfies:
    \begin{enumerate}
        \item \textbf{Upper Tail}: $\underset{\X[d]}{\Pr}[U_{k,d}f-\mathbb{E}[f] > t] \leq f_{up}(\frac{t}{2})(1-\pi_{low}^{d,k,f}(\frac{t}{2}))^{-1}$
        \item \textbf{Lower Tail}: $\underset{\X[d]}{\Pr}[U_{k,d}f-\mathbb{E}[f] < -t] \leq f_{low}(\frac{t}{2})(1-\pi_{up}^{d,k,f}(\frac{t}{2}))^{-1}$.
    \end{enumerate}
\end{lemma}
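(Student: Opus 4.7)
The plan is to use a simple averaging argument that exploits the law of total probability on the two-step sampling $s \in \X[d]$, then $r \subset s$ uniformly, whose $r$-marginal is $\X[k]$. I will only discuss the upper tail; the lower tail is symmetric, applying the same argument to $1-f$ (or equivalently swapping the roles of $f_{up}/f_{low}$ and $\pi_{up}/\pi_{low}$).

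Let $B \coloneqq \{s \in \X[d] : U_{k,d}f(s) - \mathbb{E}[f] > t\}$ be the set of ``bad'' top-level faces that we wish to bound. The key observation is that any $s \in B$ sees a large fraction of $k$-sub-faces with $f$-value above $\mathbb{E}[f]+t/2$. Indeed, fix $s \in B$ and let $\mu_s \coloneqq \mathbb{E}_{r \subset s}[f(r)] = U_{k,d}f(s)$, so that $\mu_s > \mathbb{E}[f] + t$. By the definition of $\pi_{low}^{d,k,f}$,
\[
\underset{r \subset s}{\Pr}\bigl[f(r) \geq \mu_s - t/2\bigr] \geq 1 - \pi_{low}^{d,k,f}(t/2).
\]
On this event we have $f(r) \geq \mu_s - t/2 > \mathbb{E}[f] + t/2$, so
\[
\underset{r \subset s}{\Pr}\bigl[f(r) - \mathbb{E}[f] > t/2\bigr] \geq 1 - \pi_{low}^{d,k,f}(t/2) \quad \text{for every } s \in B.
\]

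To conclude, decompose the global upper tail at level $k$ via the two-step sampling:
\[
f_{up}(t/2) \geq \underset{\X[k]}{\Pr}[f - \mathbb{E}[f] > t/2] = \underset{s \in \X[d]}{\mathbb{E}}\Bigl[\underset{r \subset s}{\Pr}[f(r) - \mathbb{E}[f] > t/2]\Bigr] \geq \Pr[B]\,\bigl(1 - \pi_{low}^{d,k,f}(t/2)\bigr),
\]
where the first equality uses that the marginal of $r$ under this two-step process is exactly $\X[k]$ (by definition of the induced measure on lower levels), and the last inequality uses our pointwise bound for $s \in B$. Rearranging gives $\Pr[B] \leq f_{up}(t/2)\bigl(1 - \pi_{low}^{d,k,f}(t/2)\bigr)^{-1}$, as desired.

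There is no substantive obstacle: the only mild subtlety is that $\pi_{low}$ is defined with respect to the \emph{local} mean $\mu_s$, not the global mean $\mathbb{E}[f]$, which is exactly why one loses a factor of $2$ in the tail parameter ($t \mapsto t/2$)---the slack $t/2$ absorbs the gap between $\mu_s$ and $\mathbb{E}[f] + t/2$ guaranteed by membership in $B$.
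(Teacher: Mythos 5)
Your proof is correct and takes essentially the same approach as the paper's: decompose $\Pr_{\X[k]}[f - \mathbb{E}[f] > t/2]$ via the two-step sampling $s \in \X[d]$, $r \subset s$, and use $\pi_{low}^{d,k,f}$ to show that on the bad set $B$ the inner probability is at least $1 - \pi_{low}^{d,k,f}(t/2)$. The only difference is presentational: the paper phrases the same estimate as a proof by contradiction (with the slack written as $2t$ so that $t$ appears directly in $f_{up}$ and $\pi_{low}$), whereas your direct rearrangement is a bit cleaner.
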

\begin{proof}
    We give the argument for the upper tail. The argument for the lower tail is analogous. Assume for the sake of contradiction that
    \[
    \underset{\X[d]}{\Pr}\left[U_{k,d}f - \mu > 2t\right] > f_{up}\left(t\right)\left(1-\pi_{low}^{f,k,d}\left(t\right)\right)^{-1}.
    \]
    Since a $k$-face can be drawn by first drawing a $d$-face, then a uniformly random $k$-subface, we can use concentration of the complete complex to derive a contradiction. In particular, let $E_1$ denote the event that $U_{k,d}f - \mu > 2t$. Then:
    \begin{align*}
\underset{\{v_0,\ldots,v_{k}\} \in \X[k]}{\Pr}\left[U_{k,d} f - \mu > t\right] &\geq \Pr[E_1]\Pr_{r \subset s}[f(r)-\mu > t  ~|~s \in E_1]\\
&\geq\Pr[E_1](1-\Pr_{r \subset s}[f(r)-\mathbb{E}[f] < t ~|~s \in E_1])\\
&\geq \Pr[E_1](1-\Pr_{r \subset s}[f(r)- U_{k,d} < -t ~|~s \in E_1])\\
&> f_{up}(t)
    \end{align*}
\end{proof}
\begin{lemma}[{\pref{lem:lowering}} Restated]
    Let $X$ be a $d$-\maximal simplicial complex and $k \leq d$. Assume there exist functions $f_{up}(t,\nu)$ and $f_{low}(t,\nu)$ such that any $\nu$-Lipschitz $f: \X[d] \to \R$ satisfies:
    \begin{enumerate}
        \item \textbf{Upper Tail}: $\Pr[f-\mathbb{E}[f] > t] \leq f_{up}(t,\nu)$
        \item \textbf{Lower Tail}: $\Pr[f-\mathbb{E}[f] < -t] \leq f_{low}(t,\nu)$.
    \end{enumerate}
    Then any function $f':\X[k] \to \R$ with $\nu$-bounded difference satisfies:
        \begin{enumerate}
        \item \textbf{Upper Tail}: $\Pr[f'-\mathbb{E}[f'] > t] \leq f_{up}(\frac{t}{2},\frac{k}{d}\nu) + e^{-\frac{t^2}{4\nu}}$
        \item \textbf{Lower Tail}: $\Pr[f'-\mathbb{E}[f'] < -t] \leq f_{low}(\frac{t}{2},\frac{k}{d}\nu) + e^{-\frac{t^2}{4\nu}}$.
    \end{enumerate}
\end{lemma}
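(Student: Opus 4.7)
The natural plan is to lift $f'$ to a function on $\X[d]$ and invoke the given Lipschitz concentration there, absorbing the sub-sampling error separately. Define $g \colon \X[d] \to \RR$ by $g(s) = U_{k,d}f'(s) = \Ex[r \subset s]{f'(r)}$. Since the sampling $r \sim \pi_k$ factors through first drawing $s \sim \pi_d$ and then a uniform $k$-subface $r \subset s$, we have $\Ex{g} = \Ex{f'}$.

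The first key step is to verify that $g$ is $\tfrac{k}{d}\nu$-Lipschitz. Fix $s, s' \in \X[d]$ which differ in a single coordinate, say $v_i \in s$ is replaced by $v_i' \in s'$. The $k$-subfaces of $s$ that avoid $v_i$ coincide with the $k$-subfaces of $s'$ that avoid $v_i'$, so these contributions cancel exactly. The remaining $k$-subfaces pair bijectively via $r \mapsto r' = (r \setminus v_i) \cup \{v_i'\}$, and since $r \cap r' \in \X[k-1]$ the faces $r, r'$ are adjacent in the level-$k$ down-up walk of $X$. The $\nu$-bounded-difference hypothesis gives $(f'(r)-f'(r'))^2 \leq \nu/k$, so by Jensen
\[
(g(s) - g(s'))^2 \leq \left(\tfrac{k}{d}\right)^2 \Ex[r \subset s,\, v_i \in r]{(f'(r) - f'(r'))^2} \leq \tfrac{k\nu}{d^2}.
\]
Summing over the $d$ coordinates gives the Lipschitz bound $\tfrac{k\nu}{d}$, so the concentration hypothesis applied to $g$ yields $\Pr[|g - \Ex{g}| > t/2] \leq f_{up}(t/2,\tfrac{k}{d}\nu)$ (resp.\ $f_{low}$ for the lower tail).

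The second step is to control, for each fixed $s$, the fluctuation $\Pr_{r \subset s}[|f'(r) - g(s)| > t/2]$. Here $r$ is a uniformly random $k$-subset of the $d$-element set $s$, and the restriction of $f'$ to $\binom{s}{k}$ has bounded differences $\sqrt{\nu/k}$ — any two $k$-subsets of $s$ differing in one element are adjacent in the level-$k$ down-up walk of $X$, so the hypothesis on $f'$ applies. A standard bounded-differences inequality for sampling without replacement (Azuma applied to the Doob martingale, or equivalently the complete-complex Hoeffding bound underlying \pref{claim:complete-complex-multiplicative-sampler}) then delivers a subgaussian tail of the form $2e^{-t^2/(2\nu)}$, which is dominated by $e^{-t^2/(4\nu)}$ whenever the latter is less than $1/2$ (otherwise the asserted bound is trivial).

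The two pieces combine via the union-bound decomposition: the event $f'(r) - \Ex{f'} > t$ forces either $g(s) - \Ex{g} > t/2$ or $f'(r) - g(s) > t/2$, so
\[
\Pr_{r \in \X[k]}[f'(r) - \Ex{f'} > t] \leq \Pr_s[g(s) - \Ex{g} > t/2] + \Ex_s\bigl[\Pr_{r \subset s}[f'(r) - g(s) > t/2]\bigr] \leq f_{up}\!\left(\tfrac{t}{2}, \tfrac{k}{d}\nu\right) + e^{-t^2/(4\nu)}.
\]
The lower tail follows by applying the identical argument to $-f'$. The only real step in the argument is the clean coupling computation for the Lipschitz constant of $g$; everything else is a routine two-source concentration bound and a standard application of sampling-without-replacement Hoeffding within each $d$-face.
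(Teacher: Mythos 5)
Your proof is correct and follows essentially the same route as the paper's. The three ingredients are identical: the union-bound split through the lifted function $g = U_{k,d}f'$, the verification that $g$ inherits $\tfrac{k}{d}\nu$-Lipschitz-ness from the $\nu$-bounded difference of $f'$, and the within-face concentration of $f'|_{\binom{s}{k}}$ via a sampling-without-replacement Hoeffding bound. The one cosmetic difference is in the Lipschitz computation: you fix a pointwise one-coordinate perturbation, exploit the exact bijective pairing $r \mapsto (r\setminus v_i)\cup\{v_i'\}$ among $k$-subfaces, and apply Jensen to the conditional average over the $\binom{d-1}{k-1}$ affected subfaces, whereas the paper applies Cauchy--Schwarz to the full sum over $T\subseteq[d]$ and then counts the $\binom{d-1}{k-1}$ nonzero terms. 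Both yield exactly $\tfrac{k}{d}\nu$, and both crucially use that $s_T$ and $s^{(i)}_T$ are adjacent in the level-$k$ down-up walk so the $\nu/k$ bound applies. One small remark on the tail constant: since your decomposition only needs the one-sided event $\{f'(r) - g(s) > t/2\}$, you can invoke the one-sided McDiarmid/Azuma bound $e^{-t^2/(2\nu)} \leq e^{-t^2/(4\nu)}$ directly and dispense with the "$2e^{-t^2/(2\nu)}$ is dominated whenever $e^{-t^2/(4\nu)} < 1/2$, otherwise the bound is trivial" caveat, which as stated is slightly off (the total bound $f_{up} + e^{-t^2/(4\nu)}$ can be nontrivial even when $e^{-t^2/(4\nu)} > 1/2$); the paper sidesteps this by citing a one-sided bound from \cite{cryan2019modified}.
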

\begin{proof}
    Let $f:\X[k] \to \R$ be a function with $\nu$-bounded difference and $g$ its $d$-lift $g=U_{k,d} f$. Since a face $r \in \X[k]$ can equivalently be sampled by first drawing a face $s \in \X[d]$, then subsampling $r$ uniformly from $s$, we can bound the concentration of $f$ as:
    \begin{align*}
        \Pr[f(r) - \mathbb{E}[f] \geq t] &\leq \Pr_{r \subset s}[g(s) - \mathbb{E}[f] \geq t/2 \lor f(r) - g(s) \geq t/2]\\
        &\leq \Pr_{r \subset s}[g(s) - \mathbb{E}[g] \geq t/2] + \Pr_{r \subset s}\left[f(r) - \underset{r \subset s}{\mathbb{E}}[f(r)] \geq t/2\right]
    \end{align*}
    where we have used the fact that $\mathbb{E}[g]=\mathbb{E}[f]$. Informally, the idea is then to argue the first term is small due to exponential concentration of $\X[d]$ and the fact that the lift $g$ is itself more concentrated than $f$, and the second term is small by subgaussian concentration of the complete complex.

    We first argue that if $f$ has $\nu$-bounded difference, $g$ is $\frac{k}{d}\nu$-Lipschitz. Given $s \in \X[d]$ and $i \in [d]$, let $s^{(i)}$ denote a copy of $s$ with the $i$th coordinate re-sampled. Expanding $\nu_g$ we have: 
    \begin{align*}
    \nu_g &\leq 
    \sum\limits_{i=1}^d \underset{s,s^{(i)}}{\mathbb{E}}\left[\left( g(s)-g(s^{(i)})\right)_+^2\right]\\
    &= \sum\limits_{i=1}^d \underset{s,s^{(i)}}{\mathbb{E}}\left[\left( \underset{t \subset s}{\mathbb{E}}[f] - \underset{t \subset s^{(i)}}{\mathbb{E}}[f]\right)_+^2\right]\\
    &=\frac{1}{\binom{d}{k}^2}\sum\limits_{i=1}^d \underset{s,s^{(i)}}{\mathbb{E}}\left[\left( \sum\limits_{T \subseteq [d]} f(s_T) - f(s^{(i)}_T)\right)_+^2\right]
    \end{align*}
    where we recall $s_T$ is the $k$-face consisting of the elements of $s$ indexed by $T$. Observe that the inner term is non-zero only when $i\in T$, since otherwise $s_T = s^{(i)}_T$. Thus by Cauchy-Schwarz, we have:
    \begin{align*}
     \nu_f &\leq \frac{\binom{d-1}{k-1}}{\binom{d}{k}^2}\sum\limits_{i=1}^d \underset{s,s^{(i)}}{\mathbb{E}}\left[ \sum\limits_{T \subseteq [d]} \left(f(s_T) - f(s^{(i)}_T)\right)^2\right]\\
    &\leq \frac{\binom{d-1}{k-1}}{\binom{d}{k}^2}\sum\limits_{i=1}^d \underset{s,s^{(i)}}{\mathbb{E}}\left[ \sum\limits_{i \in T \subseteq [d]} \frac{\nu}{k}\right]\\
    &=\frac{k}{d}\nu
    \end{align*}  
    where we have used the fact that $s(T)$ and $s'_{(i)}(T)$ are neighboring $k$-faces in the down-up walk on level $k$. We can now bound the first term of our upper tail by assumption as
    \[
    \Pr_{r \subset s}[g(s) - \mathbb{E}[g] \geq t/2] \leq f_{up}\left(\frac{t}{2},\frac{k}{d}\nu\right).
    \]
    For the second term, observe that since $f$ has $\nu$-bounded difference as a function of $\X[k]$, the restriction of $f$ to $\binom{s}{k} \subset \X[k]$ also has $\nu$-bounded difference (since any edge $(r,r')$ in the down-up walk restricted to $s$ is also an edge in the walk on $\X[k]$). Thus using subgaussian concentration for Lipschitz functions on the complete complex (see e.g.\ \cite[Corollary 2]{cryan2019modified}) we also have
    \[
    \Pr_{r \subset s}\left[f(r) - \underset{r \subset s}{\mathbb{E}}[f(r)] \geq t/2\right] \leq e^{-\frac{t^2}{4\nu}}.
    \]
    Combining the two gives the result.
\end{proof}

\section{High Dimensional Expander-Mixing Lemma}\label{app:HD-EML}
In this section, we give the proof of our variant of the high-dimensional expander mixing lemma and its various corollaries.
\begin{theorem}[high dimensional expander Mixing Lemma (\pref{thm:hdeml} Restated)]
Let \((X,T,\rho)\) be a depth $D$ \(\lambda\)-tuple splitting tree with \(k\) leaves. Denote by \(d_i\) the depth of the leaf labeled \(i\). Then for any family of functions \(\{f_i:X[i] \to \RR\}_{i \in [k]}\):
\begin{equation*}
    \abs{\Ex[a \in {\X[k-1]}]{\prod_{i=1}^k f_i(a_i)}- \prod_{i=1}^k \Ex[{a_i \in X[i]}]{f_i(a_i)}} \leq 3^{D} \lambda \prod_{i=1}^k \norm{f_i}_{2^{d_i}}.
\end{equation*}
If \((X,T,\rho)\) is instead a depth $D$ standard $\lambda$-splitting tree, we take \(\{f_i:\X[1] \to \RR\}_{i \in [k]}\) and have:
\begin{equation*}
    \abs{\Ex[a\in {\X[k-1]}, \pi \in S_k]{\prod_{i=1}^k f_i(a_{\pi(i)})}- \prod_{i=1}^k \Ex[a_i \in {\X[1]}]{f_i(a_i)}} \leq 3^{D} \lambda \prod_{i=1}^k \norm{f_i}_{2^{d_i}}.
\end{equation*}
\end{theorem}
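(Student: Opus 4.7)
The plan is to induct on the depth $D$ of the tree. The base case $D=0$ is trivial (then $k=1$, LHS = 0), and for the inductive step the idea is to split the product at the root, apply the bipartite expander mixing lemma \eqref{eq:basic-bipartite-expanders} to the swap walk at the root, and then recursively control the two resulting sub-expectations using the inductive hypothesis on the two subtrees.

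Concretely, let $u$ denote the root of $T$ with $A \coloneqq \rho(\ell_u)$, $B \coloneqq \rho(r_u)$, so that $A \dunion B = [k]$. Define
\[
g(s) \coloneqq \prod_{i \in A} f_i(s_i) \quad \text{for } s \in X[A], \qquad h(t) \coloneqq \prod_{i \in B} f_i(t_i) \quad \text{for } t \in X[B].
\]
By construction of the colored swap walk, one has $\Ex[a]{\prod_{i=1}^k f_i(a_i)} = \iprod{g, S_{A,B}\, h}$. Since $(X[A], X[B], S_{A,B})$ is a $\lambda$-bipartite expander, \eqref{eq:basic-bipartite-expanders} immediately yields
\[
\abs{\iprod{g, S_{A,B}\, h} - \ex{g}\ex{h}} \leq \lambda \normt{g}\normt{h}.
\]
The remaining task is to (i) bound $\normt{g}$ and $\normt{h}$ in terms of the $2^{d_i}$-norms of the $f_i$, and (ii) replace $\ex{g}\ex{h}$ by $\prod_i \ex{f_i}$ at controlled cost.

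For step (ii), the inductive hypothesis applied to the subtree $T_{\ell_u}$ (which has depth at most $D-1$, and in which leaf $i$ sits at depth $d_i-1$) gives
\[
\abs{\ex{g} - \prod_{i \in A}\ex{f_i}} \leq 3^{D-1}\lambda \prod_{i \in A}\norm{f_i}_{2^{d_i - 1}} \leq 3^{D-1}\lambda \prod_{i \in A}\norm{f_i}_{2^{d_i}},
\]
and analogously for $h$. For step (i), the trick is to apply the inductive hypothesis to the squared functions $\{f_i^2\}$ on the subtree, using the identity $\norm{f_i^2}_{2^{d_i-1}} = \norm{f_i}_{2^{d_i}}^2$, which yields
\[
\normt{g}^2 = \Ex[s \in X[A]]{\prod_{i \in A} f_i(s_i)^2} \leq \Bigl(1 + 3^{D-1}\lambda\Bigr)\prod_{i \in A}\norm{f_i}_{2^{d_i}}^2,
\]
and likewise for $\normt{h}$. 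Combining these bounds through the triangle inequality
\[
\Bigabs{\Ex[a]{\varprod f_i(a_i)} - \varprod \ex{f_i}} \leq \abs{\iprod{g, S_{A,B} h} - \ex{g}\ex{h}} + \abs{\ex{g}\ex{h} - \varprod \ex{f_i}},
\]
and carefully bookkeeping (using $|\ex{g}| \leq \normt{g}$ and the same for $h$) should give a total of at most $3^D \lambda \prod_i \norm{f_i}_{2^{d_i}}$, where the factor $3$ per level arises from combining the direct EML contribution with the two separate inductive bounds for $\ex{g}$ and $\ex{h}$.

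The unordered splitting-tree case reduces to the partite case by passing to an arbitrary $\rho'$-partitification (\pref{def:partitification} and \pref{claim:partitification}), noting that partitification preserves the splitting parameter $\lambda$ and that averaging $\prod_i f_i(a_{\pi(i)})$ uniformly over $\pi \in S_k$ is exactly the expectation over the resulting tuple complex. The main technical obstacle I expect is tightening the constants: a naive accumulation of additive errors from the inductive hypothesis could blow up faster than $3^D$, so one must carefully isolate leading-order terms and absorb them into the $\norm{\cdot}_{2^{d_i}}$ bounds (which is precisely what the squaring trick for step (i) is designed to enable). A minor secondary point is verifying that $\mathbb{E}_{s \in X[A]}$ (the marginal induced on $X[A]$ by the $k$-face distribution) coincides with the vertex distribution on $X[A]$ used in the swap walk's definition, which follows from the definition of colored faces and their inherited weights.
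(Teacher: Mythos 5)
Your high-level plan matches the paper exactly: split at the root, apply the bipartite expander mixing lemma to the swap walk between the left and right subtrees, triangle-inequality the remaining $\ex{g}\ex{h} - \prod_i \ex{f_i}$ into two pieces handled by the inductive hypothesis, and then verify that the geometric rate $3^D$ absorbs the accumulation. The partitification reduction to the unordered case is also exactly the paper's.

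The gap is in step (i), where you bound $\normt{g}$ by applying the inductive hypothesis to the squared functions $\{f_i^2\}$. This yields $\normt{g}^2 \leq (1+3^{D-1}\lambda)\prod_{i\in A}\norm{f_i}_{2^{d_i}}^2$, i.e.\ an extra multiplicative factor $\sqrt{1+3^{D-1}\lambda}$, and this factor propagates into both the leading EML term $\lambda\normt{g}\normt{h}$ and the cross-term $|\ex{g}|\cdot|\ex{h}-\prod_{i\in B}\ex{f_i}|$. Tracking the constants, the induction would require
\[
(1 + 3^{D-1}\lambda) + 3^{D-1}\bigl(\sqrt{1 + 3^{D-1}\lambda} + 1\bigr) \;\leq\; 3^{D},
\]
equivalently $\frac{1}{3^{D-1}} + \lambda + \sqrt{1 + 3^{D-1}\lambda} \leq 2$, which fails as soon as $\lambda$ is not a fair bit smaller than $3^{-(D-1)}$. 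So the squaring trick, far from ``isolating leading-order terms,'' is what spoils the constant: it has to pay the $\lambda$-error of the inductive hypothesis inside a square root, and the recursion does not close at rate $3^D$ uniformly over $\lambda$.

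The fix is to not invoke the inductive hypothesis at all when bounding $\normt{g}$, but instead use \emph{iterated Cauchy--Schwarz} directly along the subtree (this is the paper's inequality \eqref{eq:HDEML-CS}). Concretely, splitting $A = A_1 \dunion A_2$ at the first internal node of $T_L$ and writing $g=g_1 g_2$ with $g_j$ depending only on $s_{A_j}$, ordinary Cauchy--Schwarz on the joint distribution gives $\ex{g^2} = \ex{g_1^2 g_2^2} \leq \sqrt{\ex{g_1^4}\ex{g_2^4}} = \norm{g_1}_4^2\norm{g_2}_4^2$, and iterating down to the leaves yields the clean, $\lambda$-free bound $\normt{g} \leq \prod_{i\in A}\norm{f_i}_{2^{d_i}}$. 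With this in hand, the total is at most $\lambda\prod\norm{f_i}_{2^{d_i}} + 2\cdot 3^{D-1}\lambda\prod\norm{f_i}_{2^{d_i}} \leq 3^D\lambda\prod\norm{f_i}_{2^{d_i}}$, and the induction closes for all $\lambda$, not just small ones. Everything else in your write-up is sound.
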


\begin{proof}
We prove the partite case directly, then argue \pref{eq:hdeml-unordered} follows from partitification. It is convenient to introduce the following notation. Let \(\bar{f} = (f_1,f_2,\ldots,f_k)\) such that \(f_i:X[i] \to \RR\). Let \(S \subseteq [k]\). Let \((T,\rho)\) be an ordered tree. Without loss of generality we identify every leaf \(u \in \mathcal{L}\) with its label \(\rho(u)\).
\begin{itemize}
    \item Denote by \(\bar{f}_S\) the sub-tuple containing only functions whose index is in \(S\).
    \item Let \(\pi_{\bar{f}_S}:X[S] \to \RR\) be the product of functions in $\bar{f}_S$, that is \(\pi_{\bar{f}_S}(s) = \prod_{i \in S} f_i(a_i)\) (where \(s = \set{a_i}_{i \in S}\)). We note that when \(S_1,S_2\) are disjoint then \(\pi_{\bar{f}_{S_1}} \cdot \pi_{\bar{f}_{S_2}} = \pi_{\bar{f}_{S_1 \dunion S_2}}\).
    \item Let \(E_{\bar{f}_S} = \prod_{i \in S} \ex{f_i}\).
    \item Let \(T_L\) and \(T_R\) be the sub-trees rooted by the left and right children of the root of \(T\), respectively. Let \(L,R\) be the leaves of \(T_L, T_R\) respectively.
    \item For \(T,T_L,T_R\) and tuple of functions \(\bar{f}\) we denote by 
    \[\mathcal{E}(\bar{f}_{[k]},T) = \prod_{i=1}^k\norm{f_i}_{2^{d_i}}, \; \mathcal{E}(\bar{f}_{L},T_L) = \prod_{i\in L}\norm{f_i}_{2^{d_i-1}}, \; \mathcal{E}(\bar{f}_{R},T_R) = \prod_{i\in R}\norm{f_i}_{2^{d_i-1}}.\]
\end{itemize}

In this notation we need to show: 
\begin{equation*} \label{eq:hdeml-what-we-need-to-prove}
\abs{\ex{\pi_{\bar{f}_{[k]}}} - E_{\bar{f}_{[k]}}} \leq 3^D \lambda \mathcal{E}(\bar{f}_{[k]},T).
\end{equation*}

The proof is by induction on the depth $D$. The base case (depth \(1\)) is trivial. Assume that the theorem holds for any set of functions on any tree of depth \(\leq D-1\). By adding and subtracting $\ex{\pi_{\bar{f}_{L}}} \cdot \ex{ \pi_{\bar{f}_{R}}}$, observe that
\begin{equation} \label{eq:hdeml-triangle-inequality}
    \abs{\ex{\pi_{\bar{f}_{[k]}}} - E_{\bar{f}_{[k]}}} \leq 
     \underbrace{\abs{\ex{\pi_{\bar{f}_{[k]}}} - \ex{\pi_{\bar{f}_{L}}} \cdot \ex{ \pi_{\bar{f}_{R}}} }}_{\text{I}} + 
    \underbrace{\abs{\ex{\pi_{\bar{f}_{L}}} \cdot \ex{ \pi_{\bar{f}_{R}}} - E_{\bar{f}_{[k]}}}}_{\text{II}}
\end{equation}
by the triangle inequality. We bound I and II separately, starting with I. Since the swap walk \(S_{L,R}\) is a \(\lambda\)-bipartite expander,  applying \eqref{eq:basic-bipartite-expanders} to \(\pi_{\bar{f}_L},\pi_{\bar{f}_R}\) we get that the first term in the right-hand side of \eqref{eq:hdeml-triangle-inequality} is 
\begin{equation*}
    \abs{\ex{\pi_{\bar{f}_{[k]}}} - \ex{\pi_{\bar{f}_L}} \ex{\pi_{\bar{f}_R}} } 
    = \abs{\ex{\pi_{\bar{f}_L} \cdot \pi_{\bar{f}_R}} - \ex{\pi_{\bar{f}_L}} \ex{\pi_{\bar{f}_R}} } 
    \leq \lambda \norm{\pi_{\bar{f}_L}}_2\norm{\pi_{\bar{f}_R}}_2 \leq \lambda \prod_{i}\norm{f_i}_{2^{d_i}}
\end{equation*}
where we have used the fact that
\begin{equation}\label{eq:HDEML-CS}
 \norm{\pi_{\bar{f}_L}}_2 \leq \prod_{i \in L} \norm{f_i}_{2^{d_i}}, \quad \text{and}  \quad \norm{\pi_{\bar{f}_R}}_2 \leq \prod_{i \in R} \norm{f_i}_{2^{d_i}}
\end{equation}
by repeatedly applying Cauchy-Schwarz along the splitting trees of L and R respectively.

It remains to bound II. By the triangle inequality, we can factor this term into components dependent on the left and right subtrees:

\begin{align} \label{eq:hdeml-bound-on-rightmost-term}
\abs{\ex{\pi_{\bar{f}_{R}}} \cdot \ex{ \pi_{\bar{f}_{L}}} - E_{\bar{f}_{[k]}}} &\leq \abs{\ex{\pi_{\bar{f}_{L}}} \cdot \ex{ \pi_{\bar{f}_{R}}} - \ex{\pi_{\bar{f}_L}} \cdot E_{\bar{f}_R}} + \abs{\ex{\pi_{\bar{f}_L}} \cdot E_{\bar{f}_R} - E_{\bar{f}_{[k]}}}\\
    \nonumber&= \abs{\ex{\pi_{\bar{f}_{L}}}} \abs{\ex{ \pi_{\bar{f}_{R}}} -  E_{\bar{f}_R}} + \abs{E_{\bar{f}_{R}}} \abs{\ex{ \pi_{\bar{f}_{L}}} -  E_{\bar{f}_L}}.
\end{align}
Since \((X^L,T_L,\rho|_{T_L}), (X^R,T_R,\rho|_{T_R})\) are both \(\lambda\)-tuple splitting trees of depth \(\leq D-1\),\footnote{Recall that \(X^L,X^R\) are the simplicial sub-complexes induced by the faces in \(X[L]\) and \(X[R]\) respectively.} the inductive hypothesis implies
\[
\abs{\ex{ \pi_{\bar{f}_{R}}} -  E_{\bar{f}_R}} \leq 3^{D-1}\lambda \mathcal{E}(\bar{f}_R,T_R), \quad \text{and} \quad \abs{\ex{ \pi_{\bar{f}_{L}}} -  E_{\bar{f}_L}} \leq 3^{D-1}\lambda \mathcal{E}(\bar{f}_L,T_L)
\]
so we can upper bound \eqref{eq:hdeml-bound-on-rightmost-term} by
\begin{align*}\label{eq:hdeml-bound-on-rightmost-term2}
    3^{D-1}\lambda \left(\abs{\ex{\pi_{\bar{f}_{L}}}} \mathcal{E}(\bar{f}_R,T_R) +
    \abs{E_{\bar{f}_R}}\mathcal{E}(\bar{f}_L,T_L)\right).
\end{align*}

Finally, observe that 
\[
\abs{E_{\bar{f}_R}} = \prod\limits_{i \in R}\abs{\mathbb{E}[f_i]} \leq \prod\limits_{i \in R}\norm{f_i}_1 \leq \prod\limits_{i \in R}\norm{f_i}_{2^{d_i}}
\]
by monotonicity of expectation norms and similarly
\[
\abs{\ex{\pi_{\bar{f}_{L}}}} \leq \norm{\pi_{\bar{f}_{L}}}_2 \leq \prod\limits_{i \in L}\norm{f_i}_{2^{d_i}}
\]
by \eqref{eq:HDEML-CS}. Altogether, this gives an upper bound of
\[
\text{I} + \text{II} \leq \lambda \mathcal{E}(\bar{f}_{[k]},T) + 2\cdot 3^{D-1}\lambda \mathcal{E}(\bar{f}_{[k]},T) \leq \lambda3^D\mathcal{E}(\bar{f}_{[k]},T)
\]
as desired.

For the unordered case, take an arbitrary partitification \((X',T,\rho')\) as in \pref{def:partitification}. \((X',T,\rho')\) is a \(\lambda\)-tuple splitting tree by \pref{claim:partitification}. Define \(\tilde{f}_i:X[\rho'(u_i)] \to \RR\) by \(\tilde{f}_i((v,j)) = f_i(v)\). Viewed as random variables, $\tilde{f}_i$ and $f_i$ are equidistributed, so $\mathbb{E}[f_i]= \mathbb{E}[\tilde{f}_i]$ and $\norm{f_i}_{2^{d_i}}=\norm{\tilde{f}_i}_{2^{d_i}}$. Moreover, by construction
    \[
        \Ex[a\in {\X[k-1]}, \pi \in S_k]{\prod_{i=1}^k f_i(a_{\pi(i)})} = \Ex[(a,\pi) \in {\SC[X'][k-1]} ]{\prod_{i=1}^k \tilde{f}_i(a_{\pi(i)},\pi(i))},
    \]
so can write
\begin{align*}
    \abs{\Ex[a\in {\X[k]}, \pi \in S_k]{\prod_{i=1}^k f_i(a_{\pi(i)})}- \prod_{i=1}^k \Ex[a_i \in {\X[1]}]{f_i(a_i)}} &= \abs{\Ex[(a,\pi) \in {\SC[X'][k]}]{\prod_{i=1}^k \tilde{f}_i((a,\pi)_i)}- \prod_{i=1}^k \Ex[{(a,\pi)_i \in \SC[X'][i]}]{\tilde{f}_i(a_i,i)}}\\
    &\leq 3^D\lambda \prod\limits_{i=1}^k \norm{\tilde{f}_i}_{2^{d_i}}\\
    &=3^D\lambda \prod\limits_{i=1}^k \norm{f_i}_{2^{d_i}}
\end{align*}
as desired.
\end{proof}
We now record a few useful corollaries from which hitting set and bias amplification are essentially immediate.
\begin{corollary} \label{cor:hdeml-indicators}
    Let \(X\) be as in \pref{thm:hdeml} and \(A_i \subseteq X[i]\). Then
        \begin{equation}\label{eq:hdeml-indicators}
            \abs{\Prob[a\in {\X[k-1]}]{\bigwedge_{i=1}^k a_i \in A_i}- \prod_{i=1}^k \prob{A_i}} \leq 3^{D} \lambda \prod_{i=1}^k \prob{A_i}^{2^{-d_i}} \leq 3^{D} \lambda.
    \end{equation}
\end{corollary}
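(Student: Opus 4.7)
The plan is to derive \pref{cor:hdeml-indicators} as an immediate specialization of \pref{thm:hdeml} by taking each $f_i$ to be the indicator function of $A_i$. The only facts needed beyond the theorem are two elementary identities for indicators.

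First I would apply \pref{thm:hdeml} with the choice $f_i = \mathbf{1}_{A_i} : X[i] \to \RR$. On the left-hand side of \eqref{eq:hdeml}, the product $\prod_{i=1}^k f_i(a_i)$ equals $1$ precisely when $a_i \in A_i$ for every $i$, so $\Ex{\prod_{i=1}^k f_i(a_i)} = \Pr[\bigwedge_{i=1}^k a_i \in A_i]$. Likewise $\Ex{f_i} = \Pr[A_i]$, giving the subtracted product term.

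Next I would bound the right-hand side. Because $\mathbf{1}_{A_i}^{m} = \mathbf{1}_{A_i}$ for every positive integer $m$, the $L^{2^{d_i}}$-norm simplifies to
\[
\norm{\mathbf{1}_{A_i}}_{2^{d_i}} = \Ex{\mathbf{1}_{A_i}^{2^{d_i}}}^{1/2^{d_i}} = \Pr[A_i]^{2^{-d_i}}.
\]
Plugging this into the bound $3^D \lambda \prod_{i=1}^k \norm{f_i}_{2^{d_i}}$ from \pref{thm:hdeml} yields the first inequality of \eqref{eq:hdeml-indicators}. The second (weaker) inequality is immediate since each $\Pr[A_i] \leq 1$ forces $\Pr[A_i]^{2^{-d_i}} \leq 1$, so the product is at most $1$.

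There is no real obstacle here — the corollary is a one-line consequence of \pref{thm:hdeml} using idempotence of indicators to convert $L^p$-norms into powers of the measure. The only thing to verify (which is automatic) is that the hypotheses of \pref{thm:hdeml} are met for arbitrary $\RR$-valued functions, and indicator functions certainly qualify.
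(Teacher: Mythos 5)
Your proposal is correct and takes exactly the approach the paper intends: the paper states \pref{cor:hdeml-indicators} without proof, treating it as an immediate specialization of \pref{thm:hdeml}, and your argument — substituting $f_i = \mathbf{1}_{A_i}$, using idempotence of indicators so that $\norm{\mathbf{1}_{A_i}}_{2^{d_i}} = \Pr[A_i]^{2^{-d_i}}$, and noting $\Pr[A_i]\leq 1$ for the final bound — fills in precisely the omitted details.
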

\begin{corollary} \label{cor:hdeml-higher-dim-faces}
    Let \((X,T,\rho)\) be a \(\lambda\)-tuple splitting tree. Let \(u_1,u_2,\ldots,u_\ell \in T\) be nodes such that for every \(i \ne j\), \(u_i\) is not an ancestor of \(u_j\), and let \(d_i\) be the depth of \(u_i\) in \(T\). Let \(f_1,f_2,\ldots,f_\ell\) be functions such that \(f_i:\X[\rho(u_i)] \to \RR\). Then 
    \begin{equation}\label{eq:hdeml-higher-dim-faces}
    \abs{\Ex[a\in {\X[k-1]}]{\prod_{i=1}^\ell f_i(a_{\rho(u_i)})}- \prod_{i=1}^\ell \ex{f_i(a_{\rho(u_i)})}} \leq 3^{D} \lambda \prod_{i=1}^\ell \norm{f_i}_{2^{d_i}},
\end{equation}
where \(D = \max \set{d_i}\).
\end{corollary}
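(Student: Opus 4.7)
The plan is to reduce \pref{cor:hdeml-higher-dim-faces} to the standard high-dimensional expander mixing lemma (\pref{thm:hdeml}) via the pruning operation of \pref{def:pruning}. The key conceptual point is that once we prune the splitting tree at the antichain $u_1,\ldots,u_\ell$, each node $u_i$ becomes a genuine leaf whose associated ``part'' is exactly $X[\rho(u_i)]$, so functions on $X[\rho(u_i)]$ behave for the purposes of the mixing lemma precisely like functions on a single color.

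First, I would form a new ordered partite tree $T'$ by sequentially pruning $(X,T,\rho)$ at $u_1,\ldots,u_\ell$. Since no $u_i$ is an ancestor of another, the subtrees rooted at the $u_i$'s are disjoint, so the prunings can be performed in any order and yield the same complex $P=P(X,\{\rho(u_i)\}_i)$. By iterating \pref{claim:tree-pruning}, $(P,T',\rho')$ is itself a $\lambda$-tuple splitting tree of depth at most $D$. Its leaves are exactly the ``super-leaves'' $u_1,\ldots,u_\ell$ together with the set $\mathcal{L}_0$ of leaves of the original tree $T$ that are not descendants of any $u_i$. Crucially, pruning does not alter depths of nodes lying above or outside the pruned subtrees, so each $u_i$ appears in $T'$ at its original depth $d_i$, and each $v\in\mathcal{L}_0$ appears at its original depth $d_v$.

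Second, I would extend the function family to all leaves of $T'$ by setting $\tilde{f}_i=f_i$ at the super-leaf $u_i$ (noting $\tilde{f}_i\colon X[\rho(u_i)]\to\RR$ matches the type required at that leaf of $(P,T',\rho')$) and $\tilde{f}_v\equiv 1$ at each $v\in\mathcal{L}_0$. Applying \pref{thm:hdeml} directly to the splitting tree $(P,T',\rho')$ with this extended family yields
\[
\Abs{\Ex[a'\in P]{\prod_{u \in \mathcal{L}(T')} \tilde{f}_u(a'_{\rho'(u)})} - \prod_{u\in\mathcal{L}(T')} \ex{\tilde{f}_u}} \;\leq\; 3^{D}\lambda\prod_{u\in\mathcal{L}(T')} \norm{\tilde{f}_u}_{2^{d_u}}.
\]

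Third, I would translate this inequality back to $X$. Sampling a top face of $P$ is, by construction, equivalent to sampling $a\in\X[k-1]$ and reading off the tuples $a_{\rho(u_i)}$ at the super-leaves and the singletons $a_v$ at the leaves in $\mathcal{L}_0$; since $\tilde{f}_v\equiv 1$ for $v\in\mathcal{L}_0$, these singletons are integrated against $1$ and contribute nothing. Hence the left-hand expectation reduces to $\Ex[a\in \X[k-1]]{\prod_{i=1}^\ell f_i(a_{\rho(u_i)})}$ and the product of expectations reduces to $\prod_{i=1}^\ell\ex{f_i}$. On the right, $\norm{1}_{2^{d_v}}=1$ for every $v\in\mathcal{L}_0$, so the norm product collapses to $\prod_{i=1}^\ell \norm{f_i}_{2^{d_i}}$, which is exactly \eqref{eq:hdeml-higher-dim-faces}.

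I do not anticipate any serious obstacle; the argument is essentially formal bookkeeping on top of the pruning machinery already developed in \pref{def:pruning}–\pref{claim:tree-pruning} and \pref{thm:hdeml}. The only mild subtleties to verify are that (i) the induced density of $P$ on each super-vertex in color $\rho(u_i)$ agrees with the density on $X[\rho(u_i)]$ inherited from $X$ (immediate from the definition of $P(X,F)$), and (ii) the degenerate cases in which some $u_i$ is already a leaf of $T$ (no pruning is performed at that index, and the argument goes through unchanged) or in which $\mathcal{L}_0=\emptyset$ (no auxiliary constant functions are needed) cause no issue.
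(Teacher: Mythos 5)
Your overall plan — prune the splitting tree at the antichain $u_1,\ldots,u_\ell$ so that these become leaves, pad out the remaining leaves with the constant function $1$, and invoke \pref{thm:hdeml} — is exactly the approach the paper takes. However, there is a genuine gap in your argument at the claim that ``$(P,T',\rho')$ is itself a $\lambda$-tuple splitting tree of depth at most $D$.'' This is false in general. Pruning at the $u_i$'s only removes the subtrees below those nodes; it does nothing to branches of $T$ that do not pass through any $u_i$. Such branches may contain leaves at depth strictly greater than $D = \max\{d_i\}$ (for instance, if one $u_i$ is itself a shallow leaf of $T$ while a disjoint branch of $T$ descends much deeper). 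In that case the pruned tree $T'$ has some depth $D' > D$, and applying \pref{thm:hdeml} directly to $(P,T',\rho')$ yields the leading factor $3^{D'}\lambda$, which is weaker than the claimed $3^{D}\lambda$. You even implicitly acknowledge the problem by noting that each $v \in \mathcal{L}_0$ ``appears at its original depth $d_v$''; nothing bounds those $d_v$ by $D$.

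The fix, which the paper's proof includes, is an additional round of pruning: for any surviving node of depth exceeding $D$, prune at its depth-$D$ ancestor. This is safe because no depth-$D$ ancestor of such a node can be a strict ancestor of any $u_j$ (that would force $d_j > D$), so the $u_i$'s are untouched; and by \pref{claim:tree-pruning} the result is still a $\lambda$-tuple splitting tree, now genuinely of depth $\leq D$, so that \pref{thm:hdeml} gives the factor $3^{D}\lambda$. The rest of your bookkeeping — the densities on the super-vertices, the collapse of the constant leaves on both sides of the inequality, and the degenerate cases — is fine once this step is added.
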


\begin{proof}
    We sequentially prune the tuple-splitting tree using nodes \(u_1,u_2,\ldots,u_\ell\) as in \pref{def:pruning}. If there are nodes of depth greater than \(D\) we also prune their depth \(D\) ancestor. By \pref{claim:tree-pruning}, this results in a pruned \(\lambda\)-tuple splitting tree \((X',T',\rho')\) where by construction \(u_1,u_2,\ldots,u_\ell\) are leaves in \(T'\) corresponding to \(X'[i] = X[\rho(u_i)]\) (we note $X'$ may have other parts not associated with these leaves as well). Setting $f_j=1$ for the latter, applying \pref{thm:hdeml} to \((X',T',\rho')\) gives the desired bound.
    
\end{proof}

\begin{proposition}[Hitting Set (\pref{prop:hitting-set} Restated)]
    Any depth $D$ $\lambda$-(tuple) splitting tree with $k$ leaves is $(3^D\lambda,k)$-hitting.
\end{proposition}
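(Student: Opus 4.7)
The plan is to apply the high dimensional expander-mixing lemma (\pref{thm:hdeml}) or more specifically its corollary for indicator functions (\pref{cor:hdeml-indicators}) directly to the indicator of $A$. I would split into the standard and partite/tuple cases, since they need slightly different treatment.

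For the standard (unordered) case, given $A \subseteq \X[1]$, I would apply the unordered version of \pref{thm:hdeml} (equation \eqref{eq:hdeml-unordered}) taking $f_i = \mathbf{1}_A$ for every $i \in [k]$. Then $\prod_i f_i(a_{\pi(i)}) = \mathbf{1}[\sigma \subseteq A]$ where $\sigma = \{a_1,\ldots,a_k\}$, so the left-hand side of the expectation equals $\Pr_{\sigma \in \X[k]}[\sigma \subseteq A]$. Each $\mathbb{E}[f_i] = \mu(A)$, giving a product of $\mu(A)^k$, and each $\norm{f_i}_{2^{d_i}} = \mu(A)^{2^{-d_i}} \leq 1$. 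The conclusion $\Pr_{\sigma \in \X[k]}[\sigma \subseteq A] \leq \mu(A)^k + 3^D \lambda$ is then immediate.

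For the tuple-splitting (partite) case, set $A_i \coloneqq A \cap X[i]$ and apply \pref{cor:hdeml-indicators} to the family $\{A_i\}_{i \in [k]}$. This gives
\[
\Pr_{\sigma \in \X[k]}[\sigma \subseteq A] = \Pr_{a}\Bigl[\bigwedge_{i=1}^k a_i \in A_i\Bigr] \leq \prod_{i=1}^k \mu(A_i) + 3^D\lambda.
\]
Assuming each part has equal total weight (the homogeneous setup used throughout the section), $\mu(A) = \frac{1}{k}\sum_i \mu(A_i)$, so the AM-GM inequality yields $\prod_i \mu(A_i) \leq \mu(A)^k$, completing the bound.

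There isn't really a single hard step here -- the proof is essentially unpacking the corollary. The only minor subtlety is the AM-GM step in the partite case, which relies on the parts being equally weighted; if the complex is not homogeneous one would instead state the bound in terms of $\prod_i \mu(A_i)$ rather than $\mu(A)^k$, or note that the definition of hitting set naturally extends to a product over parts in the partite setting.
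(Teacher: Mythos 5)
Your proof is correct and, for the tuple (partite) case, matches the paper's argument verbatim: decompose $A$ into partite components $A_i = A \cap X[i]$, apply \pref{cor:hdeml-indicators}, then use AM--GM via $\frac{1}{k}\sum_i \mu(A_i) = \mu(A)$. Your treatment of the unordered case, applying \eqref{eq:hdeml-unordered} directly with all $f_i = \mathbf{1}_A$, is a clean and valid route; the paper itself only proves the tuple case explicitly and lists the unordered version as a corollary ``following from partitification'' without spelling it out, so your version is arguably a bit more self-contained there.

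One small correction: the ``homogeneous setup'' caveat you raise for the AM--GM step is not needed. For any $k$-partite complex, each part automatically has marginal weight exactly $1/k$ in $\X[1]$ because the vertex distribution is induced by drawing a top face and then a uniform vertex from it, and every top face contains exactly one vertex per part. So $\mu(A) = \frac{1}{k}\sum_i \mu(A_i)$ holds unconditionally; no homogeneity hypothesis or reformulation in terms of $\prod_i \mu(A_i)$ is required. This doesn't affect the correctness of your argument---you assumed something true---but the worry you flag isn't actually there.
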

\begin{proof}
    Any subset $A \subset \X[1]$ can be divided into partite components $A = A_1 \amalg \ldots \amalg A_k$ where $A_i \subset X[i]$. \pref{cor:hdeml-indicators} then implies
    \[
    \Pr_{\sigma \in \X[k]}[\sigma \subset A] \leq \prod_{i=1}^k \mu(A_i)+3^D\lambda \leq \mu(A)^k+3^D\lambda
    \]
    where the righthand inequality follows from the fact that $\frac{1}{k}\sum_{i=1}^k \mu(A_i) = \mu(A)$.
\end{proof}

\begin{proposition}[Bias-Amplification (\pref{prop:bias-amp} Restated)]
Let \((X,T,\rho)\) be a $\lambda$-tuple splitting tree for $\lambda < \frac{1}{16}$. For any $0<\varepsilon<\frac{1}{4}$ and family of mean $\varepsilon$ functions \(\{f_i:X[i] \to \{\pm 1\}\}_{i \in [k]}\), $\prod f_i$ is a $\{\pm 1\}$-valued function with bias at most:
\begin{equation}
    \abs{\Ex[a\in {\X[k-1]}]{\prod_{i=1}^k f_i(a_i)}} \leq \varepsilon^k+2\lambda
\end{equation}
\end{proposition}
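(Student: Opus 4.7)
The plan is to induct on the depth $D$ of the ordered binary tree $T$ underlying $(X,T,\rho)$. The crucial observation enabling a bias bound independent of depth is that every partial product $\pi_S := \prod_{i \in S} f_i$ over a subtree's leaf set $S$ is itself $\{\pm 1\}$-valued, so $\norm{\pi_S}_2 = 1$. This circumvents the $3^D \lambda$ blow-up one would incur by applying \pref{thm:hdeml} directly, whose error bound passes through $L^{2^{d}}$-norms; here the $\{\pm 1\}$-assumption lets us apply a bare single-step bipartite expander mixing inequality at each internal node instead.

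In the inductive step I will let $T_L, T_R$ be the subtrees rooted at the children of $T$'s root and $L, R$ their leaf sets, with $|L|, |R| \geq 1$. Since $(X,T,\rho)$ is $\lambda$-tuple splitting, the swap walk $S_{L,R}$ is a $\lambda$-bipartite expander, so by \eqref{eq:basic-bipartite-expanders}
\[
\Bigabs{\ex{\pi_L \pi_R} - \ex{\pi_L}\ex{\pi_R}} \leq \lambda \norm{\pi_L}_2 \norm{\pi_R}_2 = \lambda.
\]
By \pref{claim:tree-pruning}, the pruned subtrees on $L, R$ are again $\lambda$-tuple splitting of strictly smaller depth, so the inductive hypothesis yields $|\ex{\pi_L}| \leq \varepsilon^{|L|} + 2\lambda$ and $|\ex{\pi_R}| \leq \varepsilon^{|R|} + 2\lambda$, so that
\[
\abs{\ex{\pi_{[k]}}} \leq \varepsilon^k + 2\lambda(\varepsilon^{|L|} + \varepsilon^{|R|}) + 4\lambda^2 + \lambda.
\]

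The main technical point is to verify that the residual $2\lambda(\varepsilon^{|L|} + \varepsilon^{|R|}) + 4\lambda^2 + \lambda$ is at most $2\lambda$. This does \emph{not} hold for every choice of $|L|, |R| \geq 1$, which is where the hypotheses $\varepsilon < 1/4$ and $\lambda < 1/16$ (and a short case split) come in. If $\min\set{|L|,|R|} = 1$, I instead use the exact base value $|\ex{f_i}| = \varepsilon$ on the singleton side rather than the inductive bound, cutting the residual down to $\lambda(1 + 2\varepsilon) \leq \tfrac{3}{2}\lambda$. If both $|L|, |R| \geq 2$, then $\varepsilon^{|L|} + \varepsilon^{|R|} \leq 2\varepsilon^2$, so the residual is at most $\lambda(1 + 4\varepsilon^2 + 4\lambda) \leq \tfrac{3}{2}\lambda$ by the hypotheses. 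The base case $k = 1$ is $|\ex{f_1}| = \varepsilon \leq \varepsilon + 2\lambda$, completing the induction. The only delicate step is this closing inequality; everything else is a direct combination of the single-level bipartite expander mixing lemma with the pruning compatibility of tuple splitting trees.
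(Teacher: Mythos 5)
Your proof is correct and follows the same route as the paper: induct on depth, split at the root via the swap walk $S_{L,R}$, use that partial products of $\{\pm1\}$-valued functions have unit $L^2$-norm so the single-step expander-mixing bound contributes exactly $\lambda$, and then combine with the inductive hypothesis on the two subtrees. The one place you do something genuinely extra is the closing arithmetic. The paper's proof asserts $\lambda + (\varepsilon^{|L|}+2\lambda)(\varepsilon^{|R|}+2\lambda) \leq \varepsilon^k + 2\lambda$ ``by our assumptions on $\lambda$ and $\varepsilon$,'' but this chain as written fails when $|L|=|R|=1$: there the residual equals $\lambda\bigl(1+4\varepsilon+4\lambda\bigr)$, and with $\varepsilon$ near $\tfrac14$ and $\lambda$ near $\tfrac1{16}$ one gets $\lambda\cdot\tfrac{9}{4} > 2\lambda$ (e.g.\ $\varepsilon=0.24$, $\lambda=0.06$ gives $0.1896 > 0.1776$). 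Your case split --- using the exact base bias $\varepsilon$ on a singleton side instead of the looser inductive bound $\varepsilon+2\lambda$, and using $\varepsilon^{|L|},\varepsilon^{|R|}\le\varepsilon^2$ when both sides have at least two leaves --- is precisely what is needed to make the residual at most $\tfrac{3}{2}\lambda\le 2\lambda$ in all cases, so your version is in fact the more careful and complete one.
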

\begin{proof}
    The proof is similar to \pref{thm:hdeml}, and we adopt the same notational conventions. The base case $k=1$ is trivial. For the inductive step, we have
    \begin{align*}
        |\mathbb{E}[\pi_{\bar{f}_{[k]}}]| &\leq  \abs{\ex{\pi_{\bar{f}_{[k]}}} - \ex{\pi_{\bar{f}_{L}}} \cdot \ex{ \pi_{\bar{f}_{R}}} } + \ex{\pi_{\bar{f}_{L}}} \cdot \ex{ \pi_{\bar{f}_{R}}}\\
        &\leq \lambda + (\varepsilon^{|L|}+2\lambda)(\varepsilon^{|R|}+2\lambda)\\
        &\leq \varepsilon^k + 2\lambda
    \end{align*}
    by our assumptions on $\lambda$ and $\varepsilon$.
\end{proof}
Finally we list for completeness the unordered analogs of these results, which follow immediately from partitification as in \pref{thm:hdeml}.

\begin{corollary}[Expander-Mixing (\pref{cor:hdeml-indicators})]
    Let \(X\) be a depth $D$ $\lambda$-splitting tree. Then for any $A_1,\ldots,A_k \subset \X[1]$:
        \begin{equation}
            \abs{\Prob[a\in {\X[k]},\pi \in S_k]{\bigwedge_{i=1}^k a_i \in A_i}- \prod_{i=1}^k \prob{A_i}} \leq 3^{D} \lambda \prod_{i=1}^k \prob{A_i}^{2^{-d_i}} \leq 3^{D} \lambda. \qed
    \end{equation}
\end{corollary}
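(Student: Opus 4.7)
The plan is to reduce directly to the partite analog, mirroring the partite-to-unordered reduction used at the end of the proof of \pref{thm:hdeml}. Specifically, I would fix an arbitrary well-ordered labeling $\rho': T \to [k]$ compatible with $\rho$ and pass to the $\rho'$-partitification $(P,T,\rho')$ of $(X,T,\rho)$ as in \pref{def:partitification}. By \pref{claim:partitification}, this is a $\lambda$-tuple splitting tree of the same depth $D$, and by construction the natural projection $(v,i) \mapsto v$ sending a partitified face back to its underlying unordered face is measure-preserving when one first samples a permutation $\pi \in S_k$ uniformly.

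Next, I would lift each subset $A_i \subseteq \X[1]$ to a partite subset $\tilde A_i \subseteq P[i]$ defined by $\tilde A_i = \{(v,i) : v \in A_i\}$. By the marginal correspondence above, $\Pr_{P[i]}[\tilde A_i] = \Pr_{\X[1]}[A_i]$, and
\[
\Prob[a\in {\X[k]},\pi \in S_k]{\bigwedge_{i=1}^k a_{\pi(i)} \in A_i} = \Prob[(a,\pi) \in {\SC[P][k]}]{\bigwedge_{i=1}^k (a_{\pi(i)},i) \in \tilde A_i}.
\]
Applying the already-established partite version \pref{cor:hdeml-indicators} to $(P,T,\rho')$ and the sets $\{\tilde A_i\}_{i \in [k]}$ then yields exactly the claimed bound, since the right-hand sides are identical under the correspondence $\Pr[\tilde A_i] = \Pr[A_i]$.

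There is no real obstacle here; the only thing to check carefully is that the partitification preserves both the depth of the tree and the leaf-depths $d_i$ appearing in the exponents on the right-hand side, which is immediate from \pref{def:partitification} (the tree $T$ and leaf-depths are literally the same, only the labeling changes from size-valued to set-valued). Thus the proof is a two-line reduction once the notational identification between $\X[1]$-sets and their partite lifts is set up.
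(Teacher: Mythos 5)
Your proposal is correct and follows essentially the same route as the paper, which states that the unordered corollaries ``follow immediately from partitification as in \pref{thm:hdeml}.'' The only cosmetic difference is that you re-run the partitification reduction at the level of the corollary, whereas a marginally shorter path is to apply the already-established unordered version of \pref{thm:hdeml} (Equation \pref{eq:hdeml-unordered}) directly with $f_i = \one_{A_i}$, using $\norm{\one_{A_i}}_{2^{d_i}} = \prob{A_i}^{2^{-d_i}}$; the partitification is already baked into that statement. Both routes rest on the same observation---that the partitification preserves the tree structure (hence depth $D$ and leaf-depths $d_i$) and the marginal measures---so this is the same proof, not a different one.
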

\begin{corollary}[Extended HD-EML (\pref{cor:hdeml-higher-dim-faces})]
    Let \((X,T,\rho)\) be a \(\lambda\)-splitting tree, \(u_1,u_2,\ldots,u_\ell \in T\) nodes such that for every \(i \ne j\), \(u_i\) is not an ancestor of \(u_j\), and let \(d_i\) be the depth of \(u_i\) in \(T\). Let \(f_1,f_2,\ldots,f_\ell\) be functions such that \(f_i:\X[\rho(u_i)] \to \RR\). Then 
    \begin{equation}
    \abs{\Ex[a\in {\X[k-1]},\pi \in S_k]{\prod_{i=1}^\ell f_i(a_{\rho(u_i)})}- \prod_{i=1}^\ell \ex{f_i(a_{\rho(u_i)})}} \leq 3^{D} \lambda \prod_{i=1}^\ell \norm{f_i}_{2^{d_i}},
\end{equation}
where \(D = \max \set{d_i}\).
\end{corollary}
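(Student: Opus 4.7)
The plan is to reduce to the tuple (partite) case via partitification, exactly mirroring the reduction used at the end of the proof of Theorem~\ref{thm:hdeml} to derive \eqref{eq:hdeml-unordered} from \eqref{eq:hdeml}. Concretely, fix an arbitrary well-ordered partite labeling $\rho' : T \to \mathbb{P}([k])\setminus\{\emptyset\}$ with $|\rho'(u)| = \rho(u)$ for all $u \in T$, and form the $\rho'$-partitification $(X',T,\rho')$ of $(X,T,\rho)$ as in Definition~\ref{def:partitification}. By Claim~\ref{claim:partitification}, $(X',T,\rho')$ is a $\lambda$-tuple splitting tree of the same depth $D$, and for each of our given nodes $u_i$ we still have that no $u_i$ is an ancestor of $u_j$.

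Next, pull back each $f_i : X(\rho(u_i)) \to \mathbb{R}$ to a function $\tilde f_i : X'[\rho'(u_i)] \to \mathbb{R}$ by $\tilde f_i(s,\pi) = f_i(s)$, where $s$ is the underlying face of $X$. Since the density on $X'$ is defined precisely so that the marginal on each $X'[\rho'(u_i)]$ projects uniformly onto $X(\rho(u_i))$, the random variables $\tilde f_i$ and $f_i$ are equidistributed; in particular $\mathbb{E}[\tilde f_i] = \mathbb{E}[f_i]$ and $\|\tilde f_i\|_{2^{d_i}} = \|f_i\|_{2^{d_i}}$. Moreover, sampling a top face $(a,\pi) \in X'(k-1)$ is equivalent to first sampling $a \in X(k-1)$ and then an independent uniform permutation $\pi \in S_k$, so
\[
\Ex[a \in \X[k-1],\, \pi \in S_k]{\prod_{i=1}^\ell f_i(a_{\rho(u_i)})} \;=\; \Ex[(a,\pi) \in \SC[X'][k-1]]{\prod_{i=1}^\ell \tilde f_i\bigl((a,\pi)_{\rho'(u_i)}\bigr)}.
\]

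Now apply Corollary~\ref{cor:hdeml-higher-dim-faces} (the already-proven partite version) to the tuple splitting tree $(X',T,\rho')$ with the functions $\tilde f_i$ and the nodes $u_1,\ldots,u_\ell$; the hypothesis that no $u_i$ is an ancestor of $u_j$ carries over verbatim, and the depths $d_i$ are unchanged. This yields
\[
\abs{\Ex[(a,\pi)\in \SC[X'][k-1]]{\prod_{i=1}^\ell \tilde f_i((a,\pi)_{\rho'(u_i)})} \;-\; \prod_{i=1}^\ell \mathbb{E}[\tilde f_i]} \leq 3^D \lambda \prod_{i=1}^\ell \|\tilde f_i\|_{2^{d_i}},
\]
and substituting back the equalities of expectations and norms gives exactly the claimed bound. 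There is no real obstacle here beyond bookkeeping: the only thing to verify is that partitification preserves joint marginals (up to the permutation $\pi$), which is essentially the definition, and that it preserves splitting-tree depth and rate, which is Claim~\ref{claim:partitification}.
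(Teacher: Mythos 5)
Your proof is correct and follows exactly the route the paper indicates: the paper says the unordered analogues "follow immediately from partitification as in \pref{thm:hdeml}," and your argument just spells out that reduction (take an arbitrary partitification $(X',T,\rho')$, apply \pref{claim:partitification}, pull back the $f_i$ to $\tilde f_i$ on $X'[\rho'(u_i)]$, observe equidistribution preserves expectations and $2^{d_i}$-norms, and apply the already-proven partite corollary). No gap; this is essentially the paper's own intended proof made explicit.
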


\begin{corollary}[Hitting Set (\pref{prop:hitting-set})]
    Any depth $D$ $\lambda$ splitting tree with $k$ leaves is $(3^D\lambda,k)$-hitting.
\end{corollary}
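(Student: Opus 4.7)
The plan is to apply the unordered expander-mixing lemma (the immediately preceding corollary, which is the unordered analog of \pref{cor:hdeml-indicators}) in a completely straightforward way, taking all the test sets equal. Concretely, given any $A \subseteq \X[1]$, I would set $A_1 = A_2 = \cdots = A_k = A$. The unordered expander-mixing lemma then yields
\[
\left|\Prob[a \in \X[k], \pi \in S_k]{\bigwedge_{i=1}^k a_i \in A} - \prod_{i=1}^k \prob{A}\right| \leq 3^D \lambda.
\]

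The only genuine observation is that the left-hand expectation coincides with the hitting probability $\Prob[\sigma \in \X[k]]{\sigma \subseteq A}$, since the event ``$a_i \in A$ for all $i$'' depends only on the underlying unordered face $\sigma = \{a_1,\ldots,a_k\}$ and not on the permutation $\pi$. Thus the left-hand quantity equals $\Prob[\sigma \in \X[k]]{\sigma \subseteq A}$, and the right-hand quantity is simply $\prob{A}^k$. Rearranging gives
\[
\Prob[\sigma \in \X[k]]{\sigma \subseteq A} \leq \prob{A}^k + 3^D \lambda,
\]
which is exactly the definition of $(3^D \lambda, k)$-hitting.

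Unlike the partite version (\pref{prop:hitting-set}) where one has to first decompose $A$ along the color classes and then invoke the AM--GM inequality $\prod_i \mu(A_i) \leq \mu(A)^k$ after normalizing, in the unordered setting there is no color structure, so no inequality of this flavor is needed --- the product collapses directly to $\prob{A}^k$. There is no real obstacle here; the content of the statement lies entirely in the unordered HD-EML (\pref{cor:hdeml-indicators}, unordered form), and this corollary is just a one-line specialization of it to indicator functions of a single set repeated $k$ times.
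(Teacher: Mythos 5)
Your proof is correct and matches the paper's (very terse) intended approach: the paper simply states that the unordered corollaries "follow immediately from partitification as in" the unordered HD-EML, and your argument — apply that lemma with $A_1=\cdots=A_k=A$, observe the conjunction $\bigwedge_i a_{\pi(i)} \in A$ is permutation-invariant and equals the event $\sigma \subseteq A$, and collapse $\prod_i \Pr[A_i]$ to $\Pr[A]^k$ — is exactly the obvious way to fill in that one line. Your side remark contrasting this with the partite case (where the decomposition $A = \coprod_i A_i$ forces an appeal to AM--GM via $\frac{1}{k}\sum_i \mu(A_i) = \mu(A)$) is also accurate and a useful observation.
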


\begin{corollary}[Bias-Amplification (\pref{prop:bias-amp})]
Let \((X,T,\rho)\) be a $\lambda$ splitting tree for $\lambda < \frac{1}{16}$. For any $0<\varepsilon<\frac{1}{4}$ and family of mean $\varepsilon$ functions \(\{f_i:\X[1] \to \{\pm 1\}\}_{i \in [k]}\), $\prod f_i$ is a $\{\pm 1\}$-valued function with bias at most:
\begin{equation}
    \abs{\Ex[a\in {\X[k-1]},\pi \in S_k]{\prod_{i=1}^k f_i(a_{\pi(i)})}} \leq \varepsilon^k+2\lambda
\end{equation}
\end{corollary}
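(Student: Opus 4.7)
The plan is to reduce the unordered case directly to the tuple (partite) version of bias-amplification (\pref{prop:bias-amp}) already proved in the excerpt, exactly mirroring the partitification argument used at the end of the proof of \pref{thm:hdeml}. This works because the partitification of a standard splitting tree retains the splittability parameter \(\lambda\), and the induced functions on the partitified complex are equidistributed with the original ones.

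First I would fix an arbitrary well-ordered labeling \(\rho'\) compatible with \(\rho\) and form the \(\rho'\)-partitification \((X',T,\rho')\) in the sense of \pref{def:partitification}. By \pref{claim:partitification} this is a \(\lambda\)-tuple splitting tree with the same underlying tree \(T\), and in particular still satisfies \(\lambda < \tfrac{1}{16}\). Then for each \(i \in [k]\) I define \(\tilde{f}_i : X'[\rho'(u_i)] \to \{\pm 1\}\) by \(\tilde{f}_i((v,j)) = f_i(v)\), where \(u_i\) is the leaf labeled with the \(i\)-th part. These lifted functions are still \(\{\pm 1\}\)-valued and, because the marginal of the partitification over any single part is equidistributed with the uniform measure on \(\X[1]\), each \(\tilde{f}_i\) has mean exactly \(\varepsilon\).

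Next I would invoke the tuple version \pref{prop:bias-amp} applied to \((X',T,\rho')\) and the tuple \((\tilde f_1,\dots,\tilde f_k)\) to obtain
\[
\left|\Ex[(a,\pi) \in {\SC[X'][k]}]{\prod_{i=1}^{k} \tilde f_i((a,\pi)_i)}\right| \leq \varepsilon^k + 2\lambda.
\]
Finally, to match this to the unordered expectation in the statement, I would use the defining identity of the partitification (the same identity used in the proof of \pref{thm:hdeml}):
\[
\Ex[a\in {\X[k-1]},\,\pi \in S_k]{\prod_{i=1}^{k} f_i(a_{\pi(i)})} \;=\; \Ex[(a,\pi)\in {\SC[X'][k]}]{\prod_{i=1}^{k} \tilde f_i((a,\pi)_i)},
\]
which completes the proof. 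There is no real obstacle here; the only thing to be careful about is that the partitification genuinely preserves both the mean and the Boolean-valued nature of the functions, but both are immediate from the coordinate-forgetting definition \(\tilde f_i((v,j)) = f_i(v)\). Since \pref{prop:bias-amp} and \pref{claim:partitification} do all the analytic work, the whole proof reduces to this one-line reduction.
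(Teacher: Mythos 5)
Your reduction via partitification is exactly the paper's intended argument: at the end of Appendix~D it states that the unordered analogs ``follow immediately from partitification as in \pref{thm:hdeml},'' and your proof spells out precisely that reduction using \pref{claim:partitification}, the equidistribution of $\tilde f_i$ with $f_i$, and the identity relating the unordered expectation over $(a,\pi)$ to the expectation over $\SC[X'][k]$. The only minor imprecision is calling the marginal on a part ``the uniform measure on $\X[1]$''; it is the induced measure $\pi_1$ on $\X[1]$, which need not be uniform, but this is exactly what is needed for the means to match, so the argument stands.
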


\section{Reverse Hypercontractivity: From Boolean to All Functions} \label{app:general-hc}
We now give a generic reduction from reverse hypercontractivity for general functions to the Boolean case.

\restatetheorem{thm:generalizing}

Before moving directly to the proof, we give a brief overview of the main idea. Starting with arbitrary $f_1,f_2$, our goal is to build discrete approximations of the $f_i$ that remain close to the original functions in expectation and don't take too many unique values. Once this is the case, we can write the functions as a weighted sum over indicators for each value, and apply the boolean result without too much loss. 

The key to the proof really lies in these approximations, which we build by discretizing the functions to powers of 2 and applying a careful iterative zero-ing procedure to components that fall too far from the mean. Naively this approach seems problematic, as $f_i$ is arbitrary and may not have a well-behaved tail to truncate. We handle this by exploiting the fact that no function can have more than a small constant fraction of its mass beyond its expectation times (an appropriate power of) its $(1{\to}(1+\varepsilon))$-norm.

More formally, the proof of \pref{thm:generalizing} is split into two intermediate reductions, corresponding to the `approximation step' and the `indicator step' above. The first of these is a reduction from general reverse hypercontractivity to a slightly richer set of functions that are \textit{balanced} and \textit{discrete}.

\begin{definition}[Nice Functions]
For any $\alpha \in [0,1]$, and $\beta \in [1,\infty)$, we say the function $f:V \to \mathbb{R}_{\geq 0}$ is ($\alpha,\beta$)-nice if it is:
\begin{enumerate}
    \item \textbf{Balanced:} all non-zero outputs of $f$ satisfy
    \[
    \alpha \mathbb{E}[f] \leq f(x) \leq \beta \mathbb{E}[f].
    \]
    \item \textbf{Discrete:} the range of $f$ is entirely contained in $\{0\} \cup \{2^j: j \in \mathbb{Z}\}$.
\end{enumerate}
\end{definition}
To state the reduction, we require a setting of parameters $\alpha$ and $\beta$ based on the $(1{\to}(1+\varepsilon))$-norm of the functions in question due to the tail truncation strategy described above. In particular, for a fixed non-negative function $f_i$, define
\[
\eta_i = \frac{\ex{f_i^{1+\varepsilon}}}{\ex{f_i}^{1+\varepsilon}} =  \norm{f_i}^{1+\varepsilon}_{1{\to}(1+\varepsilon)}.
\]
With this in mind, we state our two intermediate reductions and prove \pref{thm:generalizing}.
\begin{proposition}[From General to Nice]\label{prop:general-to-nice}
    Let $\ell,\varepsilon, \kappa, V$, and $D$ be as in \pref{thm:generalizing}. Suppose that for any two functions \(f_1,f_2:V \to \RR_{\geq 0}\) that are $(\frac{1}{16\eta_i},(25\eta_i)^{\frac{2(1+\varepsilon)}{\varepsilon}})$-nice
    \begin{equation} \label{eq:assumption-on-nice}
        \iprod{f_1^{\ell(1+\varepsilon)}, D f_2^{\ell(1+\varepsilon)}}\geq \kappa \ex{f_1}^{\ell(1+\varepsilon)} \ex{f_2}^{\ell(1+\varepsilon)}.
    \end{equation}
    Then for every two functions \(f_1,f_2:V \to \RR_{\geq 0}\) it holds that
    \begin{equation} \label{eq:nice-generalizing-hc-equation}
    \iprod{f_1^{\ell(1+\varepsilon)}, D f_2^{\ell(1+\varepsilon)}} \geq \left(\frac{1}{5}\right)^{2\ell(1+\varepsilon)} \kappa\ex{f_1}^{\ell(1+\varepsilon)} \ex{f_2}^{\ell(1+\varepsilon)}.   
    \end{equation}
\end{proposition}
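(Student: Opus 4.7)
The plan is to reduce the arbitrary-function inequality to the nice-function assumption by constructing, for each $f_i$, a nice approximation $g_i \leq f_i$ pointwise whose expectation is not much smaller than that of $f_i$. Once we have such $g_i$ with $\mathbb{E}[g_i] \geq \mathbb{E}[f_i]/5$, monotonicity of $D$ immediately yields
\[
\iprod{f_1^{\ell(1+\varepsilon)}, D f_2^{\ell(1+\varepsilon)}} \geq \iprod{g_1^{\ell(1+\varepsilon)}, D g_2^{\ell(1+\varepsilon)}} \geq \kappa\, \ex{g_1}^{\ell(1+\varepsilon)}\ex{g_2}^{\ell(1+\varepsilon)} \geq \left(\tfrac{1}{5}\right)^{2\ell(1+\varepsilon)}\kappa\, \ex{f_1}^{\ell(1+\varepsilon)}\ex{f_2}^{\ell(1+\varepsilon)},
\]
where the middle inequality is the assumed nice-function reverse hypercontractivity.

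The construction is the natural two-sided truncation followed by a dyadic rounding. For each $i$, let $t_\ell^{(i)} = \ex{f_i}/(16\eta_i)$ and $t_u^{(i)} = (25\eta_i)^{2(1+\varepsilon)/\varepsilon}\ex{f_i}$, and set
\[
g_i(x) = \begin{cases} 2^{\lfloor \log_2 f_i(x)\rfloor}, & t_\ell^{(i)} \leq f_i(x) \leq t_u^{(i)}, \\ 0, & \text{otherwise}. \end{cases}
\]
By design $g_i \leq f_i$ and $g_i$ takes values in $\{0\}\cup\{2^j : j \in \mathbb{Z}\}$, giving the discreteness part of niceness for free. For the expectation, the lower truncation costs at most $\ex{f_i \cdot \mathbf{1}_{f_i < t_\ell^{(i)}}} \leq t_\ell^{(i)} = \ex{f_i}/(16\eta_i)$, and a Markov bound on $f_i^{1+\varepsilon}$ controls the upper truncation loss by $\ex{f_i^{1+\varepsilon}}/(t_u^{(i)})^\varepsilon = \eta_i \ex{f_i}/\big((25\eta_i)^{2(1+\varepsilon)}\big)$, which is vanishingly small due to our generous choice of $t_u^{(i)}$. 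Rounding down to the nearest power of $2$ costs at most a further factor of $2$. Combining, $\ex{g_i} \geq \tfrac12\big(1 - \tfrac{1}{16} - o(1)\big)\ex{f_i} \geq \ex{f_i}/5$.

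The hardest step is verifying that $g_i$ is actually $(1/(16\eta_{g_i}),(25\eta_{g_i})^{2(1+\varepsilon)/\varepsilon})$-nice with respect to its \emph{own} $(1+\varepsilon)$-norm parameter $\eta_{g_i}$, rather than $\eta_{f_i}$; this is where the specific exponents in the statement matter. Since $g_i \leq f_i$ pointwise and $\ex{g_i} \geq \ex{f_i}/5$, one gets $\eta_{g_i} \leq 5^{1+\varepsilon}\eta_{f_i}$ immediately from the definition, which helps control the upper balance bound. The potential trouble is that $\eta_{g_i}$ could be \emph{smaller} than $\eta_{f_i}$ when a heavy tail of $f_i$ is truncated away, which would tighten the lower balance requirement $v_{\min}(g_i) \geq \ex{g_i}/(16\eta_{g_i})$. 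This is precisely where the squared exponent $2(1+\varepsilon)/\varepsilon$ on the upper threshold (and the factor $16$ on the lower) provides the slack needed: the ratio $v_{\min}(g_i)/\ex{g_i}$ is at least $1/(32\eta_{f_i})$ by construction, and one shows via Jensen on the support of $g_i$ that the relation $\eta_{g_i} \geq \tfrac{1}{2}\cdot \ex{g_i}/(16\, v_{\min}(g_i))$ always holds for the dyadically rounded, truncated function, making the lower balance bound automatic. The corresponding upper bound follows from $g_i(x) \leq t_u^{(i)} \leq 5\cdot(25\eta_i)^{2(1+\varepsilon)/\varepsilon}\ex{g_i}$, which sits inside $(25\eta_{g_i})^{2(1+\varepsilon)/\varepsilon}\ex{g_i}$ because the doubled exponent absorbs any constant-factor slack between $\eta_i$ and $\eta_{g_i}$.
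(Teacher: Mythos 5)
Your overall strategy matches the paper's: construct $g_i \le f_i$ pointwise with $\ex{g_i} \ge \ex{f_i}/5$ that are nice, then conclude by monotonicity of $D$. The expectation-loss accounting (lower truncation costs at most $\ex{f_i}/(16\eta_i)$, Markov/H\"older controls the upper tail, and rounding down to powers of $2$ costs a factor of $2$) is also essentially the paper's computation. The gap lies in the crucial step that you yourself flag as ``the hardest step,'' and the resolution you sketch does not work.

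The problem: after a one-shot two-sided truncation at thresholds defined by $\eta_{f_i}$, the resulting $g_i$ has its \emph{own} parameter $\eta_{g_i}$, and the niceness requirement you must verify is $v_{\min}(g_i) \ge \ex{g_i}/(16\,\eta_{g_i})$. Removing a heavy upper tail can make $\eta_{g_i}$ drop by an arbitrarily large factor (the tail can dominate $\ex{f_i^{1+\varepsilon}}$ while contributing negligibly to $\ex{f_i}$), so the lower balance threshold $\ex{g_i}/(16\eta_{g_i})$ can overshoot the surviving minimum value. Concretely, take $\varepsilon = 1$ and let $f$ have three values $a = 0.01$, $b = 1$, $T = 10^{12}$ with probabilities roughly $1/2$, $1/2$, $10^{-23}$. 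Then $\ex{f}\approx 0.505$, $\eta_f \approx 41$, so $t_\ell = \ex{f}/(16\eta_f) \approx 8\times 10^{-4}$ and $t_u = (25\eta_f)^4\ex{f} \approx 6\times 10^{11}$. Thus $a$ and $b$ survive and $T$ is zeroed. But then $\eta_g \approx 2$ and $\ex{g}/(16\eta_g) \approx 0.016 > a$, so $g$ is \emph{not} nice. Your claimed inequality ``$\eta_{g}\ge\tfrac12\,\ex{g}/(16\,v_{\min}(g))$'' is not sufficient even if true (it is a factor of $2$ weaker than what niceness requires), and it is not a general fact: for $g$ taking values $\{c,1\}$ with equal probability it fails as $c\to 0$. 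The truncated structure does not rescue it, since $\eta_g$ can be driven down by deleting a single tail atom while $v_{\min}/\ex{g}$ stays roughly where the threshold $\ex{f}/(16\eta_f)$ put it.

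The paper's proof repairs exactly this with an \emph{iterative} balancing procedure. One repeatedly zeroes out the unbalanced inputs of $f^{(j)}$ using intermediate thresholds $[\tfrac{1}{4\eta^{(j)}}\ex{f^{(j)}},\,(4\eta^{(j)})^{2(1+\varepsilon)/\varepsilon}\ex{f^{(j)}}]$ and stops the first time $\eta$ fails to drop by a factor of $4$, i.e.\ $\eta^{(j)}\ge\tfrac14\eta^{(j-1)}$. Because each non-terminal step decreases $\eta$ geometrically, the total expectation loss telescopes to at most a constant fraction, and the stopping rule supplies exactly the relation $\eta^{(k-1)}\le 4\eta^{(k)}$ (together with $\ex{f^{(k-1)}}\le 2\ex{f^{(k)}}$) needed to convert the pseudo-balance at step $k-1$ into genuine $(\tfrac{1}{16\eta^{(k)}},(25\eta^{(k)})^{2(1+\varepsilon)/\varepsilon})$-niceness at step $k$. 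To close your argument you would need either to replace the one-shot truncation with such an iterative scheme, or to prove a genuine pointwise lower bound on $\eta_g$ in terms of $\eta_f$ that I do not believe exists.
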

The `indicator step' then reduces reverse hypercontractivity for nice functions to the Boolean case.
\begin{proposition}[From Nice to Boolean]\label{prop:nice-to-bool}
     Let $\ell,\varepsilon,V$, and $D$ be as in \pref{thm:generalizing}. If for every \(A,B \subseteq V\),
    \begin{equation} \label{eq:bool-assumption-on-sets}
        \langle 1_A, D1_B \rangle \geq \kappa \prob{A}^\ell\prob{B}^{\ell},
    \end{equation}
    then every two $(\frac{1}{16\eta_i},(25\eta_i)^{\frac{2(1+\varepsilon)}{\varepsilon}})$-nice functions \(f_1,f_2:V \to \RR_{\geq 0}\) satisfy
    \begin{equation} \label{eq:bool-to-nice-generalizing-hc-equation}
    \iprod{f_1^{\ell(1+\varepsilon)}, D f_2^{\ell(1+\varepsilon)}} \geq \left(\frac{1}{18 + \frac{12}{\varepsilon}}\right)^{2(\ell-1)}\kappa \ex{f_1}^{\ell(1+\varepsilon)} \ex{f_2}^{\ell(1+\varepsilon)}.   
    \end{equation}
\end{proposition}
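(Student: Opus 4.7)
The plan is to decompose each $f_i^r$ into its level sets and apply the Boolean hypothesis pointwise, reducing the problem to a single-variable estimate on discrete distributions. Since $f_i$ is discrete with values in $\{0\}\cup\{2^j:j\in\mathbb{Z}\}$, I would write $f_i^r=\sum_j 2^{rj}\mathbf{1}_{A_j^{(i)}}$ with $r=\ell(1+\varepsilon)$ and $A_j^{(i)}=\{x:f_i(x)=2^j\}$ the disjoint level sets. Linearity of $D$ together with the hypothesis \pref{eq:bool-assumption-on-sets} then factorizes the inner product as
\[
\langle f_1^r,D f_2^r\rangle \;=\;\sum_{j_1,j_2}2^{r(j_1+j_2)}\langle \mathbf{1}_{A_{j_1}^{(1)}},D\mathbf{1}_{A_{j_2}^{(2)}}\rangle \;\geq\;\kappa\prod_{i=1}^{2}S_i,\qquad S_i:=\sum_j 2^{rj}\Pr[A_j^{(i)}]^{\ell},
\]
so it suffices to show that $S_i\ge\bigl(\tfrac{1}{18+12/\varepsilon}\bigr)^{\ell-1}\mu_i^r$ for each $i$, where $\mu_i=\mathbb{E}[f_i]$.

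To lower bound $S_i$, I would apply H\"older's inequality with conjugate exponents $(\ell,\ell/(\ell-1))$ to the identity $\mu_i=\sum_j 2^j\Pr[A_j^{(i)}]$, factoring the summand as $\bigl(2^{rj/\ell}\Pr[A_j^{(i)}]\bigr)\cdot 2^{-\varepsilon j}$. This yields $\mu_i\le S_i^{1/\ell}\bigl(\sum_{j\in\mathrm{supp}}2^{-\varepsilon\ell j/(\ell-1)}\bigr)^{(\ell-1)/\ell}$, and the inner geometric series is controlled by its ratio $1/(1-2^{-\varepsilon\ell/(\ell-1)})\le 18+12/\varepsilon$, which is precisely the source of the constant in the statement. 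Rearranging the H\"older bound then delivers a per-function estimate of the desired form, up to one final technical point addressed below.

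The main obstacle is that a naive implementation of this H\"older argument introduces a spurious factor of $\alpha_i^{\ell\varepsilon}=(16\eta_i)^{-\ell\varepsilon}$ coming from the leading term of the geometric series at $2^{j_{\min}}\approx\alpha_i\mu_i$, leaving an unwanted $\eta_i$-dependence in the final constant. To eliminate this, I would exploit the specific choice $\beta_i=(25\eta_i)^{2(1+\varepsilon)/\varepsilon}$: combining H\"older with the Markov tail bound $\Pr[f_i\ge t]\le\eta_i\mu_i^{1+\varepsilon}/t^{1+\varepsilon}$ allows one to restrict the H\"older sum to the range of $j$ where the mass of $f_i$ is actually concentrated (essentially $2^j\gtrsim\mu_i$), losing only a constant fraction of $\mu_i$ in the process. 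The polynomial-in-$\eta_i$ scaling of $\beta_i$ is tuned precisely so that this truncation cancels the $\eta_i^{\ell\varepsilon}$ factor, yielding the uniform per-function bound $S_i\ge\bigl(1/(18+12/\varepsilon)\bigr)^{\ell-1}\mu_i^r$. Multiplying the two factors then gives the claimed constant with exponent $2(\ell-1)$.
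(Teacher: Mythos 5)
Your first half — the level-set decomposition $f_i^r=\sum_j 2^{rj}\mathbf{1}_{A_j^{(i)}}$, linearity of $D$, and the Boolean hypothesis factoring the inner product as $\kappa S_1 S_2$ — is exactly the paper's argument. The divergence comes in bounding $S_i$. You apply H\"older to $\mu_i=\sum_j\bigl(2^{(1+\varepsilon)j}\Pr[A_j^{(i)}]\bigr)\cdot 2^{-\varepsilon j}$ with the geometric-series dual $\sum 2^{-\varepsilon\ell j/(\ell-1)}$, while the paper applies the power-mean inequality to $\mathbb{E}[f_i^{1+\varepsilon}]=\sum_j 2^{(1+\varepsilon)j}\Pr[A_j^{(i)}]$ with the \emph{constant} dual, getting $S_i\ge n_i^{1-\ell}\,\mathbb{E}[f_i^{1+\varepsilon}]^\ell=n_i^{1-\ell}\eta_i^\ell\,\mu_i^{\ell(1+\varepsilon)}$ where $n_i$ counts the nonzero level sets. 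In the paper's route both $\alpha_i$ and $\beta_i$ enter through $n_i=O\bigl(\log(\beta_i/\alpha_i)\bigr)=O\bigl((1+\tfrac1\varepsilon)\log\eta_i+\tfrac1\varepsilon\bigr)$, and the whole point is that the factor $\eta_i^\ell$ produced by $\mathbb{E}[f_i^{1+\varepsilon}]=\eta_i\mu_i^{1+\varepsilon}$ overwhelms $n_i^{\ell-1}$: since $\eta_i/n_i$ is bounded below uniformly over $\eta_i\ge 1$, one gets $n_i^{1-\ell}\eta_i^\ell\ge(\eta_i/n_i)^{\ell-1}\ge\bigl(\tfrac{1}{18+12/\varepsilon}\bigr)^{\ell-1}$. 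You never use $\mathbb{E}[f_i^{1+\varepsilon}]$, so you never get this compensating $\eta_i^\ell$, and that is why your raw H\"older bound carries the spurious $(16\eta_i)^{-\varepsilon\ell}$ factor from $2^{j_{\min}}$.

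The gap is in your proposed fix. You correctly diagnose that the trouble comes from $j_{\min}$ being far below $\log\mu_i$, but then claim the repair requires the Markov tail bound and the specific polynomial form of $\beta_i$. Neither is the case. Truncating your H\"older sum to levels $j\ge j_0$ with $2^{j_0}\approx\mu_i/4$ loses at most half the mass for an entirely elementary reason: the level values are geometrically spaced, so $\sum_{j<j_0}2^j\Pr[A_j^{(i)}]<2^{j_0}\le\mu_i/2$ purely because $\sum_{j<j_0}2^j<2^{j_0}$ and each $\Pr[A_j^{(i)}]\le 1$. This has nothing to do with $\beta_i$, the upper-tail, or Markov. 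With that (correct) truncation your approach does close the argument — giving a constant of the form $C(\varepsilon)^{\ell}\bigl(1-2^{-\varepsilon\ell/(\ell-1)}\bigr)^{\ell-1}$, which is of the required $O_{\varepsilon,\ell}(1)$ shape though not numerically the one stated — but the mechanism you describe (tuning $\beta_i$, invoking Markov) is not what makes it work, and as written the step does not go through. Meanwhile $\beta_i$ is genuinely needed in the paper's route, but only to make $n_i$ finite and polylogarithmic in $\eta_i$; your route, done right, never needs $\beta_i$ at all.
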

The proof of \pref{thm:generalizing} is now essentially immediate.
\begin{proof}[Proof of \pref{thm:generalizing}]
    We have by assumption that all $A,B \subseteq V$ satisfy $\langle 1_A, D1_B \rangle \geq \kappa \prob{A}^\ell\prob{B}^{\ell}$,
    so we may apply \pref{prop:nice-to-bool} to get that every two $(\frac{1}{16\eta_i},(25\eta_i)^{\frac{2(1+\varepsilon)}{\varepsilon}})$-nice functions \(f_1,f_2:V \to \RR_{\geq 0}\) satisfy
    \begin{equation*}
    \iprod{f_1^{\ell(1+\varepsilon)}, D f_2^{\ell(1+\varepsilon)}} \geq \left(\frac{1}{18 + \frac{12}{\varepsilon}}\right)^{2(\ell-1)}\kappa \ex{f_1}^{\ell(1+\varepsilon)} \ex{f_2}^{\ell(1+\varepsilon)}.   
    \end{equation*}
    Applying \pref{prop:general-to-nice} then implies \textit{arbitrary} non-negative $g_1,g_2$ satisfy
    \[
    \iprod{g_1^{\ell(1+\varepsilon)}, D g_2^{\ell(1+\varepsilon)}} \geq \left(\frac{1}{5}\right)^{2\ell(1+\varepsilon)}\left(\frac{1}{18 + \frac{12}{\varepsilon}}\right)^{2(\ell-1)}\kappa \ex{g_1}^{\ell(1+\varepsilon)} \ex{g_2}^{\ell(1+\varepsilon)}.
    \]
    Finally, for $f_1,f_2$ arbitrary non-negative functions applying the above to $g_i = f_i^{\frac{1}{\ell(1+\varepsilon)}}$ gives the result.
\end{proof}

We now move to the formal proofs.
\begin{proof}[Proof of \pref{prop:general-to-nice}]
    Given arbitrary \(f_1,f_2: V \to \mathbb{R}_{\geq 0}\) we will construct corresponding $f_1',f_2'$ satisfying
    \begin{enumerate}
        \item $0 \leq f_i' \leq f_i$ \label{eq:item-1}
        \item $\mathbb{E}[f_i'] \geq \frac{1}{5}\mathbb{E}[f]$ \label{item:expectation-bound}
        \item $f_i'$ is $(\frac{1}{16\eta_i},(25\eta_i)^{\frac{2(1+\varepsilon)}{\varepsilon}})$-nice \label{item:regularity}
    \end{enumerate}
Combining the above with monotonicity of $D$ and reverse hypercontractive inequality for nice functions gives
\begin{align*}
    \langle f_1, Df_2 \rangle &\geq \langle f'_1, Df'_2 \rangle& \text{(Item \eqref{eq:item-1} and Monotonicity)}\\
    &\geq \kappa \mathbb{E}[f'_1]^{\ell(1+\varepsilon)}\mathbb{E}[f'_2]^{\ell(1+\varepsilon)} & \text{(Item \eqref{item:regularity} and Equation \eqref{eq:assumption-on-nice})}\\
    &\geq \left(\frac{1}{5}\right)^{2\ell(1+\varepsilon)}\kappa \mathbb{E}[f'_1]^{\ell(1+\varepsilon)}\mathbb{E}[f'_2]^{\ell(1+\varepsilon)} & \text{(Item \eqref{item:expectation-bound})}
\end{align*}
as desired.

We construct $f_i'$ in two separate steps: a simple discretization procedure, and a more involved balancing procedure that iteratively zeroes out unbalanced parts of the function. Starting with the former, given a non-negative $f$ define its \textit{rounding down} \(\tilde{f}\) by setting for $j \in \mathbb{Z}$:
    \[
    \tilde{f}(x)=
    \begin{cases}
        2^j & \text{for } f(x) \in [2^j,2^{j+1})\\
        0 & \text{if } f(x)=0.
    \end{cases}
    \]
Note that, by construction, $\frac{f_i}{2} \leq \tilde{f}_i \leq f_i$.

    We now define an iterative balancing procedure outputting a sequence of functions \(\tilde{f}_i=f_i^{(0)}\geq  f_i^{(1)}\geq \ldots \geq f_i^{(k_i)}=f'_i\) for $k_i \in \mathbb{N}$ some finite stopping time. For these (soon-to-be-defined) functions, define \(\eta_i^{(j)} = \frac{\ex{(f_i^{(j)})^{1+\varepsilon}}}{\ex{(f_i^{(j)})}^{1+\varepsilon}}\) and denote the set of `balanced' inputs of $f^{(j)}_i$ as:
    \[
    B_i^{(j)} = \sett{x}{\frac{1}{4\eta_i^{(j)}} \ex{f_i^{(j)}} \leq f_i^{(j)}(x) \leq \left (4\eta_i^{(j)} \right )^{\frac{2(1+\varepsilon)}{\varepsilon}} \ex{f_i^{(j)}}}.
    \]
    If \(f_i^{(0)}\) is already $(\frac{1}{16\eta_i},(25\eta_i)^{\frac{2(1+\varepsilon)}{\varepsilon}})$-nice, set $k_i=0$ and output $f'_i=\tilde{f}_i$. Otherwise for \(j=1,2,\ldots\) define \(f_i^{(j)}\) by zeroing out all non-balanced inputs:
    \[
    f_i^{(j)} := f_i^{(j-1)} \cdot \one_{B_i^{(j-1)}},
    \]
    and define the stopping time $k_i$ to be the first index $j$ such that \(\eta_i^{(j)} \geq \frac{1}{4}\eta_i^{(j-1)}\). This procedure terminates (i.e.\ $k_i$ is finite) since each \(\eta_i^{(j)} \geq 1\) by convexity. Every step the process doesn't stop decreases \(\eta_i^{(j)}\)  by a factor of \(4\), so there can be a total of $O(\log(\eta_i^{(0)}))$ steps.
    
    It is immediate from definition that $0 \leq f_i' \leq f_i$, so we just need to show \pref{item:expectation-bound} and \pref{item:regularity}. The key is to observe that our balancing process cannot remove too much mass in each step:
    \begin{equation} \label{eq:expectation-is-similar}
        \ex{f_i^{(j-1)}} \geq \ex{f_i^{(j)}} \geq \left( 1-\frac{1}{2\eta_i^{(j-1)}} \right) \ex{f_i^{(j-1)}}.
    \end{equation}    
    We defer the proof, and first show \pref{item:expectation-bound} and \pref{item:regularity} given this assumption.
\paragraph{Proof of \pref{item:expectation-bound}} The proof is essentially immediate by the exponential decay of $\eta_i^{(j)}$. Namely by iterated application of \eqref{eq:expectation-is-similar} we have
        \[
        \frac{\ex{f_i^{(k)}}}{\mathbb{E}[f_i^{(0)}]} \geq \prod_{j=0}^{k-1} \left ( 1- \frac{1}{2\eta_i^{(k-1-j)}} \right ) \geq \prod_{j=0}^{\infty} \left ( 1- \frac{1}{2\cdot4^j} \right ) \geq \frac{2}{5}
        \]
        since by construction \(\eta_i^{(j)}> 4^{k-1-j} \eta_i^{(k-1)} \geq 4^{k-1-j}\). Combining with the fact $\tilde{f}_i=f_i^{(0)} \geq \frac{f_i}{2}$ gives the result.
\paragraph{Proof of \pref{item:regularity}} Recall that for any $j$ we have by construction the following `pseudo'-balance condition:
        \[
        \forall f_{i}^{(j)}(x) \neq 0: \frac{1}{4\eta_i^{(j-1)}} \ex{f_i^{(j-1)}} \leq f_i^{(j)}(x) \leq \left (4\eta_i^{(j-1)} \right )^{\frac{2(1+\varepsilon)}{\varepsilon}} \ex{f_i^{(j-1)}}.
        \]
        When $j=k_i$ is the stopping time, observe that by \pref{eq:expectation-is-similar} and construction respectively we have
        \begin{enumerate}
            \item $\mathbb{E}[f_{i}^{(k_i)}] \leq \mathbb{E}[f_{i}^{(k_i-1)}] \leq 2\mathbb{E}[f_{i}^{(k_i)}]$
            \item \(4\eta_i^{(k_i)}\geq \eta_i^{(k_i-1)}\).
        \end{enumerate}
        Substituting into the above gives
        \[
        \frac{1}{16\eta_i^{(k_i)}} \ex{f_i^{(k_i)}} \leq f_i^{(k_i)}(x) \leq \left (25\eta_i^{(k_i)} \right )^{\frac{2(1+\varepsilon)}{\varepsilon}} \ex{f_i^{(k_i)}}
        \]
        as desired.

\paragraph{Proof of \eqref{eq:expectation-is-similar}} The lefthand inequality is by construction. Toward the righthand, fix \(f=f_i^{(j-1)}\) and $\eta=\eta^{(j-1)}_i$ for notational simplicity and define the set of `terrible' inputs zeroed in the $j$th step as:
    \[
    T_1 \coloneqq \sett{x}{f(x) < \frac{1}{4\eta} \ex{f}} \quad \quad \quad \quad T_2 \coloneqq \sett{x}{f(x) > (4\eta)^\frac{2(1+\varepsilon)}{\varepsilon}\ex{f}}
    \]
    Since \(T_1 \cup T_2 = V \setminus B_i^{(j-1)}\) and \(f_i^{(j)} = \one_{B_i^{(j-1)}} f\), it is enough to show that
    \begin{equation}
        \ex{f \cdot \one_{T_1}}, \ex{f \cdot \one_{T_2}} \leq \frac{1}{4\eta} \ex{f}.
    \end{equation}
    It is obvious that \(\ex{f \cdot \one_{T_1}} \leq \frac{1}{4\eta}\ex{f}\) since $f$ is upper bounded by this value within \(T_1\). To bound the upper tail we use Hölder and Markov:
    \begin{align*}
            \ex{f \cdot \one_{T_2}} &\leq \norm{f}_{1+\varepsilon} \prob{f \geq (4\eta)^{\frac{2(1+\varepsilon)}{\varepsilon}}\ex{f}}^{\frac{\varepsilon}{1+\varepsilon}}& \text{(Hölder's inequality)}\\
            &\leq \norm{f}_{1+\varepsilon} \frac{1}{16\eta^2}& \text{(Markov's inequality)}\\
            &\leq  \left(\frac{1}{4}\eta^{-\frac{\varepsilon}{1+\varepsilon}}\right) \frac{1}{4\eta}\ex{f}& \text{($\norm{f}_{1+\varepsilon}=\eta^{\frac{1}{1+\varepsilon}} \ex{f}$)} \\
            &\leq  \frac{1}{4\eta}\ex{f}& \text{($\eta^{-\frac{\varepsilon}{1+\varepsilon}} \leq 1$)}
    \end{align*}
    as desired.
\end{proof}

Finally we prove reverse hypercontractivity for `nice' functions (\pref{prop:nice-to-bool}).
\begin{proof}[Proof of \pref{prop:nice-to-bool}]
    Since $f_i$ is $(\frac{1}{16\eta_i},(25\eta_i)^{\frac{2(1+\varepsilon)}{\varepsilon}})$-nice, it can attain at most
    \[
    n_i = \left(3+\frac{2}{\varepsilon}\right)\log \eta_i+\frac{10}{\varepsilon}+15
    \] 
    non-zero values. With this in mind, decompose $f_i=\sum\limits_{j \in \mathbb{Z}}2^j\mathbf{1}_{f_i(x)=2^j}$ into level sets and observe that, by linearity of $D$, we can decompose the inner product itself into boolean sub-components and apply our assumed reverse hypercontractive inequality (Equation \pref{eq:bool-to-nice-generalizing-hc-equation}):    
    \begin{align*}
    \iprod{f_1^{(1+\varepsilon)\ell},Df_2^{(1+\varepsilon)\ell}} &= \sum_{i,j \in \mathbb{Z}}2^{(1+\varepsilon)\ell j} 2^{(1+\varepsilon)\ell i} \iprod{\one_{f_1(x)=2^j}, D\one_{f_2(x)=2^i}} \\
        &\geq\kappa \sum_{i,j \in \mathbb{Z}}2^{(1+\varepsilon)\ell j} 2^{(1+\varepsilon)\ell i} \prob{f_1(x)=2^j}^\ell \prob{f_2(x)=2^j}^\ell\\
          &=\kappa\left(\sum_{j\in \mathbb{Z}}(2^{(1+\varepsilon)j} \prob{f_1(x)=2^j})^\ell\right) \cdot \left(\sum_{j \in \mathbb{Z}}(2^{(1+\varepsilon)j} \prob{f_2(x)=2^j})^\ell\right). 
    \end{align*}
    Now since the sums have only \(n_1,n_2\) non-zero terms respectively, applying $(\ell,\frac{\ell}{\ell-1})$-H{\"o}lder's inequality lower bounds this quantity by
    \begin{align*}
        &\kappa(n_1 n_2)^{1-\ell}\left(\sum\limits_{j\in \mathbb{Z}}2^{(1+\varepsilon)j} \prob{f_1(x)=2^j}\right)^\ell \cdot \left(\sum\limits_{j\in \mathbb{Z}}2^{(1+\varepsilon)j}\prob{f_1(x)=2^j}\right)^\ell\\
        =&\kappa n_1^{1-\ell}\ex{f_1^{1+\varepsilon}}^\ell \cdot n_2^{1-\ell}\ex{f_2^{1+\varepsilon}}^\ell \\
        \geq & \kappa \cdot (n_1^{1-\ell} \eta_1^\ell) \cdot (n_2^{1-\ell} \eta_2^\ell) \cdot \ex{f_1}^{\ell(1+\varepsilon)}\ex{f_2}^{\ell(1+\varepsilon)}. 
    \end{align*}
It can be checked directly that
\[
\min_{\eta_i \geq 1} n_i\eta_i \geq \frac{1}{18+\frac{12}{\varepsilon}}
\]
which gives the desired result.
\end{proof}

\section{Splittability}\label{app:split}
In this section we discuss the relations between our notion of complete splittability, the splittability notion of \cite{jeronimo2021near}, and the "$\tau$-sampling" gaurantee of \cite{DoronW2022}, and the inheritance of splittability under projection. We start by defining the latter two notions.
\begin{definition}[JST-Splittability {\cite{jeronimo2021near}}]
    $X \subset [n]^{d}$ is said to be $\lambda$-JST-splittable if for all $1 \leq a \leq t \leq b \leq d$:
    \[
    \lambda_2(S_{[a,t],[t+1,b]}) \leq \lambda
    \]
\end{definition}
In their later work, \cite{DoronW2022} introduced the weaker notion of $\tau$-sampling which suffices for list-decoding.
\begin{definition}
    $X \subset [n]^{d}$ is said to be $\tau$-sampling if for $t \in [d]$, $S \subset [n]$ and $W \subset [n]^t$:
    \[
    \underset{w\sim X}{\text{Cov}}\left(1[w_{t+1} \in S],1[w_{[t]} \in W]\right) \leq \tau.
    \]
\end{definition}
Before arguing relations between these notions, we give the key lemma that underlies all proofs in this section: the expansion of swap walks is inherited under projection.
\begin{lemma}[Swap Walk Inheritence]\label{lem:swap-project}
    Let $X$ be a partite $d$-\maximal complex, and $A$ and $B$ disjoint subsets of $[d]$, and $F$ any subset of $[d]$. Then:
    \[
    \lambda_2(S_{A \cap F,B \cap F}) \leq \lambda_2(S_{A,B})
    \]
\end{lemma}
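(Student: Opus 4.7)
\textbf{Proof proposal for \pref{lem:swap-project}.}

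The plan is to express $S_{A \cap F, B \cap F}$ as a factorization through $S_{A,B}$, flanked by operators that are isometries on the subspace orthogonal to constants, and then read off the eigenvalue bound. Concretely, I would introduce the bipartite down and up operators $D_{A, A \cap F}: \ell_2(X[A]) \to \ell_2(X[A \cap F])$ defined by $D_{A, A \cap F}f(s') = \mathbb{E}_{s \in X[A], s|_{A \cap F} = s'}[f(s)]$ and its adjoint $U_{A \cap F, A}: \ell_2(X[A \cap F]) \to \ell_2(X[A])$ given by $U_{A \cap F, A}g(s) = g(s|_{A \cap F})$, and analogously for $B$. The first step is to check the factorization
\[
S_{A \cap F, B \cap F} \;=\; D_{B, B \cap F} \circ S_{A,B} \circ U_{A \cap F, A}.
\]
This follows from an equivalent sampling observation: an edge $(s', t') \in X[A \cap F] \times X[B \cap F]$ of $S_{A \cap F, B \cap F}$ can be drawn by first sampling $\tau \in X[A \cup B]$, splitting it into $(s,t) \in X[A] \times X[B]$, and then projecting via $s' = s|_{A \cap F}$, $t' = t|_{B \cap F}$. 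This holds because the marginal of $X[A \cup B]$ onto $X[(A \cup B) \cap F] = X[(A \cap F) \cup (B \cap F)]$ is precisely the inherited distribution on $X[(A \cap F) \cup (B \cap F)]$.

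The second step is to verify that $U_{A \cap F, A}$ is an isometry and preserves $\mathbf{1}^\perp$ (and likewise $U_{B \cap F, B} = D_{B, B \cap F}^*$). The isometry property $\|U_{A \cap F, A} g\|_2 = \|g\|_2$ is immediate from marginal consistency: $\mathbb{E}_{s \in X[A]}[g(s|_{A \cap F})^2] = \mathbb{E}_{s' \in X[A \cap F]}[g(s')^2]$. The preservation of $\mathbf{1}^\perp$ is the same computation with the first power instead of the square.

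The final step is to apply the variational definition of $\lambda_2$. For unit vectors $g \in \ell_2(X[A \cap F])$ and $h \in \ell_2(X[B \cap F])$ with $g, h \perp \mathbf{1}$, we move the $D$ across by adjointness:
\[
\langle h, S_{A \cap F, B \cap F}\, g\rangle
\;=\; \langle h, D_{B, B \cap F}\, S_{A,B}\, U_{A \cap F, A}\, g\rangle
\;=\; \langle U_{B \cap F, B}\, h,\; S_{A,B}\, U_{A \cap F, A}\, g\rangle.
\]
Both $U_{A \cap F, A} g$ and $U_{B \cap F, B} h$ are unit vectors perpendicular to $\mathbf{1}$, so the right-hand side is at most $\lambda_2(S_{A,B})$ by definition. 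Taking the supremum over such $g, h$ yields $\lambda_2(S_{A \cap F, B \cap F}) \leq \lambda_2(S_{A,B})$.

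The only real subtlety (and thus the step I would be most careful with) is verifying the factorization: one must check the edge weights, not just the support, agree under the ``sample then project'' coupling, which boils down to the stated consistency of marginals between $X[A \cup B]$ and $X[(A \cup B) \cap F]$. Everything else is bookkeeping.
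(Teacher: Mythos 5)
Your proof is correct and takes essentially the same approach as the paper's: both establish the factorization of $S_{A\cap F, B\cap F}$ through $S_{A,B}$ by lifting via the extension operator $U$ and contracting via the marginalization operator $D$, with the norm/mean preservation of the lift as the key observation. The paper closes the bound concretely via Jensen's inequality (which is exactly the contractivity of $D$), while you close it via adjointness and the isometry of $U$, but these are dual formulations of the same argument.
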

\begin{proof}
    By the variational characterization it is enough to show
    \[
    \max_{\mathbb{E}[f^F]=0} \frac{\norm{S_{A \cap F,B \cap F} f^F}}{\norm{f^F}_2} \leq \lambda_2(S_{A,B})
    \]
    Given any such function $f^F: X[A \cap F] \to \R$ define its extension to $X[A]$ by $f(x) \coloneqq f^F(x_F)$. Here $x_F = \{v \in x: col(v) \in F\}$ is the projection into \(X^F\). and observe that this extension satisfies:
    \begin{enumerate}
        \item $\mathbb{E}_{X[A]}[f]=\mathbb{E}_{X[A\cap F]}[f^F]=0$
        \item $\norm{f}_{2,X[A]} = \norm{f^F}_{2,X[A \cap F]}$.
    \end{enumerate}
    The result then follows from convexity, in particular:
    \begin{align*}
        \norm{S_{A \cap F,B \cap F}f^F}_2&=\underset{y \sim X[A \cap F]}{\mathbb{E}}\left[\left(\underset{x \in \X[d]}{\mathbb{E}}[f^F(x_{B \cap F}) | x_{A \cap F} = y]\right)^2\right]^{\frac{1}{2}}&\\
        &=\underset{y \sim X[A \cap F]}{\mathbb{E}}\left[\left(\underset{x \in \X[d]}{\mathbb{E}}[f(x_{B}) | x_{A \cap F} = y]\right)^2\right]^{\frac{1}{2}}&\\
        &=\underset{y \sim X[A \cap F]}{\mathbb{E}}\left[\left(\underset{y' \sim X_y[A \setminus F]}{\mathbb{E}}\left[\underset{x \in \X[d]}{\mathbb{E}}[f(x_{B} | x_{A \cap F} = y \land x_{A \setminus F} = y']\right]\right)^2\right]^{\frac{1}{2}}&\\
        &\leq \underset{y \sim X[A]}{\mathbb{E}}\left[\left(\underset{x \in \X[d]}{\mathbb{E}}[f(x_{B} | x_A = y]\right)^2\right]^{\frac{1}{2}} &\text{(Jensen's Inequality)}\\
        &=\norm{S_{A,B}f}_2\\
        &\leq \lambda_2(S_{A,B})\norm{f}_2 & \text{($\mathbb{E}[f] = \mathbb{E}[f^F]=0$)}\\
        &= \gamma\norm{f^F}_2
    \end{align*}
\end{proof}
With this in mind, we prove the following relations between complete splittability, JST-splittability, and $\tau$-sampling.
\begin{proposition}
    $X \subset [n]^t$ is $\lambda$-JST-splittable if and only if it is completely $\lambda$-splittable. Moreover any completely $\lambda$-splittable complex is $\lambda$-sampling.
\end{proposition}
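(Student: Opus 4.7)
The plan is to reduce everything to the Swap Walk Inheritance Lemma (\pref{lem:swap-project}) together with the bipartite expander mixing lemma \eqref{eq:basic-bipartite-expanders}. The two equivalences fall out almost immediately, and $\tau$-sampling follows from a standard expander-mixing-lemma computation.

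The ``only if'' direction (JST-splittable $\Rightarrow$ completely splittable) is immediate, since complete splittability is exactly JST-splittability restricted to the instances $a=1$, $b=d$. For the ``if'' direction, fix $1 \leq a \leq t \leq b \leq d$; to bound $\lambda_2(S_{[a,t],[t+1,b]})$, we apply \pref{lem:swap-project} with $A = [1,t]$, $B = [t+1,d]$, and $F = [a,b]$, so that $A \cap F = [a,t]$ and $B \cap F = [t+1,b]$. The inheritance lemma then yields
\[
\lambda_2(S_{[a,t],[t+1,b]}) \;\leq\; \lambda_2(S_{[1,t],[t+1,d]}) \;\leq\; \lambda,
\]
where the second inequality is exactly complete splittability.

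For the ``moreover'' part, fix $t \in [d]$, $S \subseteq [n]$, and $W \subseteq [n]^t$, and set $f = \mathbf{1}_{w_{t+1} \in S}$ viewed as a function on $X[\{t+1\}]$, and $g = \mathbf{1}_{w_{[t]} \in W}$ viewed as a function on $X[[1,t]]$. Sampling $w \sim X$ and splitting into $(w_{[t]}, w_{t+1})$ exactly realizes an edge of the swap walk $S_{[1,t],\{t+1\}}$, so
\[
\Ex[w \sim X]{f(w_{t+1}) g(w_{[t]})} \;=\; \iprod{g, S_{[1,t],\{t+1\}} f}.
\]
The bipartite expander-mixing inequality \eqref{eq:basic-bipartite-expanders} then gives
\[
\Abs{\text{Cov}\bigl(\mathbf{1}[w_{t+1}\in S], \mathbf{1}[w_{[t]}\in W]\bigr)} \;\leq\; \lambda_2(S_{[1,t],\{t+1\}}) \,\norm{g}_2 \norm{f}_2 \;\leq\; \lambda_2(S_{[1,t],\{t+1\}}),
\]
since $f,g$ are indicators and hence have $2$-norm at most $1$. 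Finally, \pref{lem:swap-project} applied with $A=[1,t]$, $B=[t+1,d]$, $F=[1,t+1]$ gives $\lambda_2(S_{[1,t],\{t+1\}}) \leq \lambda_2(S_{[1,t],[t+1,d]}) \leq \lambda$, which is exactly the $\tau$-sampling bound with $\tau = \lambda$.

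There is no real obstacle here: once the Swap Walk Inheritance Lemma is in hand, both equivalences and the $\tau$-sampling implication are one-step consequences. The only mild subtlety is the identification of the empirical covariance with an inner product against the swap walk, which is straightforward once the marginal structure of $X$ is unwound.
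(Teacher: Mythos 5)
Your proposal is correct and follows essentially the same route as the paper's proof: both directions of the equivalence reduce to the Swap Walk Inheritance Lemma (\pref{lem:swap-project}), and the $\tau$-sampling implication follows from rewriting the covariance as an inner product with a colored swap walk and applying the bipartite expander-mixing bound. The only cosmetic difference is that you bound $\lambda_2(S_{[1,t],\{t+1\}})$ directly via \pref{lem:swap-project}, whereas the paper works with the adjoint $S_{t+1,[t]}$ and invokes H\"{o}lder duality to pass between the two; these have the same spectral norm, so the arguments are interchangeable.
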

\begin{proof}
If $X$ is $\lambda$-JST-splittable it is clearly completely $\lambda$-splittable setting $a=1,b=d$. If  $X$ is completely $\lambda$-splittable, then for any $1 \leq a \leq i \leq b \leq d$ we have:
\[
S_{[a,i],[i+1,b]} \leq S_{[1,i],[i+1,d]} \leq \lambda
\]
as desired. Finally, if $X$ is completely $\lambda$-splittable, then for any $S \subset [n]$ and $X \subset [n]^t$ write $1_S \coloneqq \mathbf{1}[w_{t+1} \in S]$ and $1_X \coloneqq \mathbf{1}[w_{[t]} \in X]$ 
\begin{align*}
\text{Cov}_{w\sim X}\left(1_S,1_X\right) &= \underset{y \sim X^{t+1}}{\mathbb{E}}[1_S(y) S_{t+1,[t]}1_X(y)] - \mathbb{E}[1_S]\mathbb[E][1_X]\\ 
&\leq \lambda_2(S_{t+1,[t]}) \norm{1_S}\norm{1_X}\\ 
&\leq \lambda
\end{align*}
where we have used that by H\"{o}lder duality and \pref{lem:swap-project} $\lambda_2(S_{t+1,[t]}) = \lambda_2(S_{[t],t+1}) \leq \lambda_2(S_{[t],t+1}) \leq \lambda$.
\end{proof}

Finally, we show that both complete splittability and standard tuple splittability are inherited under projections. Recall that given a $d$-partite complex $X$ and $F \subseteq [d]$, $X^F$ is the projection of $X$ given by drawing a $d$-face from $X$, and projecting the resulting face onto the parts in $F$. 
\begin{proposition}[\pref{claim:project} Restated]
    Let $X \subset [n]^{d}$ be a homogeneous, completely $\lambda$-splittable complex. Then for any $F \subseteq [d]$ $X^F$ is homogeneous and completely $\lambda$-splittable.
\end{proposition}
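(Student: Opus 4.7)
The plan is to verify both properties (homogeneity and complete splittability) as essentially direct consequences of the fact that $X^F$ is just the marginal distribution of $X$ restricted to the coordinates in $F$, combined with the already-proved swap walk inheritance lemma \pref{lem:swap-project}.

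Homogeneity of $X^F$ is immediate. By construction, the marginal of $X^F$ on any part $f \in F$ is exactly the marginal of $X$ on the $f$-th part, and the latter is uniform over $[n]$ by homogeneity of $X$. So every vertex-part of $X^F$ has the uniform distribution, meaning $X^F$ is homogeneous.

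For complete splittability, write $F = \{f_1 < f_2 < \cdots < f_m\}$ where $m = |F|$, and re-index the parts of $X^F$ as $[1,m]$ in the natural order. Fix any split point $i \in [m]$. I need to show that $\lambda_2(S_{[1,i],[i+1,m]}(X^F)) \leq \lambda$. The key observation is that the swap walk $S_{[1,i],[i+1,m]}$ on $X^F$ and the swap walk $S_{\{f_1,\ldots,f_i\},\{f_{i+1},\ldots,f_m\}}$ on $X$ are isomorphic as weighted bipartite graphs: both sample a top-face of $X$ and project to the corresponding two sets of coordinates, so their bipartite adjacency operators have the same spectrum. Now set $A = [1, f_i]$ and $B = [f_i+1, d]$; then $A \cap F = \{f_1,\ldots,f_i\}$ and $B \cap F = \{f_{i+1},\ldots,f_m\}$. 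Applying \pref{lem:swap-project} inside $X$ gives
\[
\lambda_2\bigl(S_{\{f_1,\ldots,f_i\},\{f_{i+1},\ldots,f_m\}}(X)\bigr) \;=\; \lambda_2(S_{A\cap F, B\cap F}) \;\leq\; \lambda_2(S_{A,B}) \;\leq\; \lambda,
\]
where the last inequality uses complete $\lambda$-splittability of $X$ at the split point $f_i$.

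The main (and only) conceptual step is the identification of the two swap walks on $X$ and $X^F$; everything else is bookkeeping. This identification is immediate from the definition of $X^F$ as the pushforward of the $X$-distribution onto the coordinates in $F$, so no real obstacle arises.
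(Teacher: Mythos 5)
Your proof is correct and follows essentially the same approach as the paper: both establish homogeneity by noting the marginals agree, and both prove splittability by observing that the swap walk $S_{[1,i],[i+1,m]}(X^F)$ coincides with $S_{\{f_1,\ldots,f_i\},\{f_{i+1},\ldots,f_m\}}(X)$ and then applying \pref{lem:swap-project} with $A = [1,f_i]$, $B = [f_i+1,d]$.
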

\begin{proof}
    The proof is essentially immediate from definition and \pref{lem:swap-project}. Recall $X$ is homogeneous if the projection onto each part is uniform on $[n]$. Since projecting onto $X^F$ and then a coordinate $i \in F$ is the same as just projecting onto $i$, homogeneity is inherited. Toward splittability, write $F = \{i_1,\ldots,i_{|F|}\}$ where $i_j < i_{j+1}$. Then for any $i_j$, applying \pref{lem:swap-project} gives
    \begin{align*}
        \lambda_2(S^{X^F}_{\{i_1,\ldots, i_{j}\},\{i_{j+1},\ldots,i_{|F|}}\}) &\leq \lambda_2(S^X_{\{i_1,\ldots, i_{j}\},\{i_{j+1},\ldots,i_{|F|}}\})\\
        &\leq \lambda_2(S_{[i_j],[i_j+1,d]}) \leq \lambda
    \end{align*}
    where we have used the fact that $\lambda_2(S^{X^F}_{\{i_1,\ldots, i_{j}\},\{i_{j+1},\ldots,i_{|F|}\}})=\lambda_2(S^{X}_{\{i_1,\ldots, i_{j}\},\{i_{j+1},\ldots,i_{|F|}\}})$ (indeed they are the same operator).
\end{proof}

Finally, while not strictly necessary for the results in this work, we show for completeness that inheritance of splittability also holds in the setting of general tuple-splitting trees as well. 
\begin{proposition}[Projected Splitting Trees]\label{prop:splitting-project}
    Let X be any $d$-\maximal partite complex with a $\lambda$-tuple splitting tree of depth $D$. For every $F \subseteq [d]$, $X^F$ also has a $\lambda$-tuple splitting tree of depth at most $D$.
\end{proposition}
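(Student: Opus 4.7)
The plan is to construct a tuple splitting tree for $X^F$ by pruning the given tree $(T,\rho)$ of $X$ with respect to $F$. Define $(T',\rho')$ recursively as follows: a leaf $v$ of $T$ with $\rho(v)=\{i\}$ is kept in $T'$ if and only if $i\in F$; for each internal node $u$ of $T$ with recursively pruned left and right subtrees $T'_\ell$ and $T'_r$, if both are non-empty then $u$ becomes an internal node of $T'$ with children $T'_\ell,T'_r$ and label $\rho'(u)=\rho(u)\cap F$, if exactly one is non-empty then $u$ collapses to that surviving subtree, and if neither is non-empty then $u$ is deleted. A direct induction then shows that $(T',\rho')$ is a partite ordered binary tree whose leaves are exactly the singletons of $F$, and whose depth is at most $D$ (only deletions and collapses occur, never insertions).

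The only substantive step is to verify the $\lambda$-splittability of $(X^F,T',\rho')$ at each internal node $u'$ of $T'$. Let $u\in T$ be the unique ancestor node of $T$ that $u'$ was preserved from, and let $\ell_{u'},r_{u'}$ be the children of $u'$ in $T'$. These are descendants (possibly equal) of the original children $\ell_u,r_u$ of $u$ in $T$, so
\[
\rho'(\ell_{u'})\subseteq \rho(\ell_u),\qquad \rho'(r_{u'})\subseteq \rho(r_u),
\]
and both sides are disjoint subsets of $F$. The key observation is that whenever $A,B\subseteq F$ are disjoint, the colored swap walks $S^{X^F}_{A,B}$ and $S^X_{A,B}$ are literally the same operator: in either case an edge is sampled by drawing a top face of $X$ and partitioning its color-$A$ and color-$B$ vertices, with the preliminary projection to $F$ in $X^F$ playing no role once we restrict to colors in $A\cup B\subseteq F$. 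Combining this identity with monotonicity of colored swap walks (\pref{obs:monotonicity-of-swap-walks}) applied on each side, and then the $\lambda$-splittability of the original tree at $u$, yields
\[
\lambda_2\bigl(S^{X^F}_{\rho'(\ell_{u'}),\rho'(r_{u'})}\bigr)
\;=\;\lambda_2\bigl(S^{X}_{\rho'(\ell_{u'}),\rho'(r_{u'})}\bigr)
\;\leq\;\lambda_2\bigl(S^{X}_{\rho(\ell_u),\rho(r_u)}\bigr)
\;\leq\;\lambda,
\]
which is the required bound.

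The main obstacle is essentially notational: keeping track of the correspondence between nodes of $T$ and $T'$ through the collapsing step, and verifying by induction that the result really is a legal partite ordered binary tree. The genuine mathematical content reduces to the operator identity $S^{X^F}_{A,B}=S^X_{A,B}$ for $A,B\subseteq F$, together with the monotonicity observation—both of which are immediate from the definitions—and in particular no analog of the more delicate \pref{lem:swap-project} is needed here.
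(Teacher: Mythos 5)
Your proof is correct and mirrors the paper's argument: both recursively prune the tree by deleting $F$-empty branches and collapsing degree-one internal nodes, then bound the swap walk expansion at each surviving internal node. The one place you depart from the paper is the justification of the expansion bound: you combine the operator identity $S^{X^F}_{A,B}=S^X_{A,B}$ for disjoint $A,B\subseteq F$ with two applications of \pref{obs:monotonicity-of-swap-walks} (once directly, once to the adjoint) to shrink both labels, whereas the paper cites \pref{lem:swap-project} — but \pref{lem:swap-project} is precisely this two-sided monotonicity statement, so the routes are interchangeable, and your observation that the Jensen-based \pref{lem:swap-project} is not strictly necessary here is accurate.
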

The proof of \pref{prop:splitting-project} relies on a simple inductive algorithm for projecting partite orderered binary trees $(T,\rho)$ onto a subset of their coordinates. In particular, given such a tree and a subset $F \subseteq [d]$, we define the projection operation via the following inductive algorithm that modifies the tree $T$ in place starting from a specific node $v \in T$.
\begin{algorithm}[H]
\label{alg:practice}
\SetAlgoLined
\KwResult{Projected partite ordered binary tree $(T_v^F,\rho_v^F)$ rooted at $v$}
 \uIf{$v$ is a leaf}{\textbf{return}}
 \uElseIf{$\rho(\ell_{v}) \cap F = \emptyset$}{
    Delete($T(\ell_v)$)\;
    Merge($v$, $r_v$)\footnote{Merge($v$,$r_v$) deletes the node $v$ and replaces $r_v$ as the respective child of $v$'s parent.}\;
    Project$(T,\rho,F,r_v)$
 } 
 \uElseIf{$\rho(r_{v}) \cap F = \emptyset$}{
    Delete($T(r_v)$)\;
    Merge($v$, $\ell_v$)\;
    Project$(T,\rho,F,\ell_v)$
 }
 \uElse{
    Project$(T,\rho,F,r_v)$\;
    Project$(T,\rho,F,\ell_v)$
 } 
 \caption{Project$(T,\rho,F,v)$}
\end{algorithm}
In other words, \pref{alg:practice} is the result of intersecting $\rho$ with $F$ and deleting the resulting nodes with empty or repeated labelings. We define the $F$-projection of a tree by applying this process at the root.
\begin{definition}[Projected Trees]
    Let $(T,\rho)$ be a $d$-\maximal partite ordered binary tree and $F \subset [d+1]$ any coordinate subset. The $F$-projection of $(T,\rho)$, denoted $(T^F,\rho^F)$ is given by
    \[
    T^F = \text{Project}(T,\rho,F,\text{root}(T)), \quad \quad \rho^F = (\rho \cap F)|_{T^F}.
    \]
\end{definition}
We are now ready to prove \pref{prop:splitting-project}.
\begin{proof}[Proof of \pref{prop:splitting-project}]
    We will show that $(X^F,T^F,\rho^F)$ is a $\lambda$-tuple splitting tree of depth at most $D$. We first argue that $(X^F,T^F,\rho^F)$ is a partite ordered binary tree of depth at most $D$. First, observe that every operation in \pref{alg:practice} maps binary trees to binary trees, and cannot increase depth. In particular, the only modification of the tree structure occurs when a branch is deleted and the relevant root is contracted with its other child, maintaining the invariant that every internal node has one parent and two children, and that the depth is at most $D$. To see that the leaves are in bijection with $F$, observe that all leaves labeled by $F$ survive the projection, all leaves labeled by $[d] \setminus F$ are deleted, and no new leaves are introduced by construction. The first two claims hold since a node $v$ is deleted if and only if either $\rho(v) \cap F = \emptyset$, or the labeling is repeated later down the tree. Finally, to see that the children of any internal node in $T^F$ partition its labeling in $\rho^F$, observe that this property holds trivially for $T$ under the intersected labeling $\rho \cap F$ (albeit the partition may be trivial), and that this invariant is preserved by every operation in \pref{alg:practice}.

   Finally we argue the projected tree inherits the splittability of $X^F$. By \pref{lem:swap-project}, we have that for any internal node $u$:
    \[ \lambda_2(S_{\rho^F(\ell_u),\rho^F(r_u)}) =\lambda_2(S_{\rho(\ell_u) \cap F,\rho(r_u) \cap F}) \leq \lambda_2(S_{\rho(\ell_u) ,\rho(r_u)}) \leq \lambda
    \]
    as desired.
\end{proof}
\end{document}